\newtheorem{theorem}{Theorem}[chapter]
\newtheorem{corollary}{Corollary}[chapter]
\newtheorem{lemma}{Lemma}[chapter]
\newtheorem{proposition}{Proposition}[chapter]
\newtheorem{definition}{Definition}[chapter]
\newtheorem{remark}{Remark}[chapter]
\newcommand\blfootnote[1]{ \begingroup \renewcommand\thefootnote{}\footnote{#1} \addtocounter{footnote}{-1} \endgroup}
\def\Tr{\operatorname{Tr}}
\def\ONB{\operatorname{ONB}}
\def\sq{\operatorname{sq}}
\def\SEP{\operatorname{SEP}}
\def\Ent{\operatorname{Ent}}
\def\PPT{\operatorname{PPT}}
\def\supp{\operatorname{supp}}
\def\LOCC{\operatorname{LOCC}}
\def\T{\operatorname{T}}
\def\>{\rangle}
\def\<{\langle}
\def\({\left(}
\def\){\right)}
\def\[{\left[}
\def\]{\right]}
\def\V{\Vert}
\def\id{\operatorname{id}}
\def\d{\operatorname{d}}
\let\oldemptyset\emptyset
\let\emptyset\varnothing
\newcommand{\bb}[1]{\mathbb{#1}}
\newcommand{\mc}[1]{\mathcal{#1}}
\newcommand{\wt}[1]{\widetilde{#1}}
\newcommand{\tf}[1]{\textbf{#1}}
\newcommand{\tn}[1]{\textnormal{#1}}
\newcommand{\msc}[1]{\mathscr{#1}}
\newcommand{\bm}[1]{\mathbbm{#1}}
\newcommand{\hl}[1]{\overline{#1}}
\newenvironment{proof}[1][Proof]{\noindent\textbf{#1.} }{\ \rule{0.5em}{0.5em}}
\DeclareMathOperator*{\SumInt}{
\mathchoice
  {\ooalign{$\displaystyle\sum$\cr\hidewidth$\displaystyle\int$\hidewidth\cr}}
  {\ooalign{\raisebox{.14\height}{\scalebox{.7}{$\textstyle\sum$}}\cr\hidewidth$\textstyle\int$\hidewidth\cr}}
  {\ooalign{\raisebox{.2\height}{\scalebox{.6}{$\scriptstyle\sum$}}\cr$\scriptstyle\int$\cr}}
  {\ooalign{\raisebox{.2\height}{\scalebox{.6}{$\scriptstyle\sum$}}\cr$\scriptstyle\int$\cr}}
}
\def\blfootnote{\gdef\@thefnmark{}\@footnotetext}
\titleformat{\chapter}[block]
{\normalfont\large\bfseries\singlespace}{Chapter~\thechapter}{14pt}{\large} 
\titlespacing{\chapter}{0pt}{*1}{*.5}
\titlespacing{\section}{0pt}{*.5}{*.5}
\begin{document}
\newcommand{\Title}{BIPARTITE QUANTUM INTERACTIONS:\\ ENTANGLING AND INFORMATION PROCESSING ABILITIES}
\newcommand{\department}{The Department of Physics \& Astronomy}
\newcommand{\Author}{Siddhartha Das}

\frontmatter

\thispagestyle{empty}
\begin{center}
	\begin{singlespace}
	\large
	\textbf{ \Title }
	\normalsize
	\end{singlespace}
	\vfill
	
	A Dissertation \\
	
	\begin{singlespace}
	Submitted to the Graduate Faculty of the \\
	Louisiana State University and \\
	Agricultural and Mechanical College \\
	in partial fulfillment of the \\
	requirements for the degree of \\
	Doctor of Philosophy \\
	\end{singlespace}
	in \\
	\department \\
	\vfill
	
	\begin{singlespace}
	by \\
	\Author \\
	B.S. and M.S., Indian Institute of Science Education and Research Pune, 2014  \\
	December 2018
	\end{singlespace}
\end{center}
\clearpage

\pagestyle{plain}

\chapter*{}
I dedicate this dissertation to my family members -- sisters, Chetna Das and Hema Das, and parents, Ram Narayan Das and Kalpana Das. Their love, support, and sacrifices are immeasurable. They keep me going with a peaceful mind and content heart. In addition, I also dedicate this dissertation to all my teachers, in particular, Ishwari Shrestha and Akhilesh Kumar Das (Batul da). Their teachings and guidance still serve to shed light when I find myself lost.   

\clearpage

\chapter*{Acknowledgments}
\addcontentsline{toc}{chapter}{Acknowledgments}

I feel very fortunate and am deeply grateful to my PhD supervisor Mark M.~Wilde for all his encouragements, continuous support, and guidance which enabled me to happily complete my PhD. I can say that I got formal training in writing research articles from him. I am very thankful to him for his teachings, helping me to learn how to pursue research problems, and pen down ideas. He let me take time to think about ideas and problems to work on, gave freedom to pursue them at my own pace, and provided necessary feedback, guidelines, and help.   

I sincerely thank Jonathan P.~Dowling (Jon) for replying to my email that made me join Quantum Science and Technologies (QST) group at LSU. I am grateful to him for his continuous care, fun times, and important encouragements during my stay here. He has also played an important role in teaching me how to pen down ideas. I am very satisfied being part of QST that is largely due to Mark and Jon, and I am grateful to them for making grad life lively and eventful.    

I am grateful to Thomas Corbitt, A.~Ravi P.~Rau, and Shawn Walker for agreeing to be part of my PhD committee, and their valuable feedback during this period. I am deeply thankful to A.~Ravi P.~Rau and Ivan Agullo for always being available for discussions. I have greatly enjoyed learning from discussion and teaching of Ivan. I am very thankful to Stefan B\"auml, Francesco Buscemi, Jonathan P.~Dowling, Eneet Kaur, Sumeet Khatri, Ludovico Lami, George Siopsis, Qingle Wang, Christian Weedbrook, Mark M.~Wilde, and Andreas Winter for fruitful collaborations that I have thoroughly enjoyed and learned from. 

I acknowledge funding support and grants through Research Assistantship and travel supports by Mark through EDA Assistantship (LSU), Dr.~Charles Coates Memorial Fund (LSU), LINK program by the Board of Regents of the State of Louisiana, and National Science Foundation (NSF) and the Teaching Assistantship by the department. I thank support from Jon, NSF, IMS-NUS Singapore, and QCrypt organizers 2018. I thank the support of Saikat Guha in arranging my summer visit to Raytheon BBN Technologies in 2016, Stanis\l{}aw J.~Szarek for the invitation to a trimester in Institut Henri Poincar\'e, Paris during winter 2017, and Junde Wu for a summer school in Zhejiang University, Hangzhou in 2018. I am thankful to the speakers and participants of all the conferences and workshops that I have attended during the period of my PhD studies for their valuable inputs and stimulating discussions. 

I thank all QST group memebers and visitors, past and present, with whom I have interacted with. I extend my thanks to Physics and Astronomy Graduate Students Organization that got launched largely due to efforts from Kelsie Krafton, Erin Good, Alison Drefuss, and others. I also thank Arnell Nelson, Carol Duran, Claire Bullock, Stephanie Jones, Laurie Rea, Shanan Schatzle, Shemeka Law, and other administrative staffs in the Department of Physics and Astronomy for their help and support. I thank all other graduate students and faculty members in the department for making my time enjoyable here. I am thankful to Sushovit Adhikari, Amey Apte, Sukruti Bansal, Narayan Bhusal, Noah Davis, Kevin Valson Jacob, Vishal Katariya, Bhawna Kumari, Sachin Kumar, Diya Mukherjee, Jonathan P.~Olson, Aashay Patil, Haoyu Qi, Sahil Saini, Sameer Saurabh, Punya Plaban Satpathy, Kunal Sharma, Mihir Sheth, Siddharth Soni, and Chenglong You for their help and support. I thank Vishal again for his feedback on thesis.

I am grateful to Nepalese Student Association, Indian Student Association, International Student Association, and International Cultural Center for providing support and fun events to participate in.

My interest in science was led in Koshi Vidya Mandir (KVM). I got inclined to physics by the teachings of Batul da, which also led me to admire its beauty. Final push to Quantum Mechanics was largely due to lectures by Sourav Pal, late R. R. Simha, and Avinash Khare at IISER Pune. I thank my friends from KVM, Batulda Siksha Sadan, and IISER-P whose support and discussion on varying topics have been encouraging and sometimes stress buster. I feel fortunate to learn and discuss with Anil D.~Gangal during my stay in IISER Pune. I admire his grasp on wide range of topics and also his research methodology. 

Finally, I thank the care and support from my relatives and friends who have helped me. I am grateful to my uncle (mama), Alok Kumar, for his guidance. My maternal grandparents, Sushila Devi (nanaima) and Brahma Narayan Das (nanaji), and late paternal grandparents, Khaiki Devi (dadima) and Tara Prasad Das (dadaji), have always been my pillar of support, and I am deeply thankful to them.

\clearpage


\renewcommand*\contentsname{Table of Contents}
\setcounter{tocdepth}{1}
\begin{singlespace}
\tableofcontents
\end{singlespace}
\clearpage


\mainmatter

\chapter*{Abstract}
\addcontentsline{toc}{chapter}{Abstract}

The aim of this thesis is to advance the theory behind quantum information processing tasks, by deriving fundamental limits on bipartite quantum interactions and dynamics. A bipartite quantum interaction corresponds to an underlying Hamiltonian that governs the physical transformation of a two-body quantum system. Under such an interaction, the physical transformation of a bipartite quantum system may also be considered in the presence of a bath, which may be inaccessible to an observer. The goal is to determine entangling abilities of arbitrary bipartite quantum interactions. Doing so provides fundamental limitations on information processing tasks, including entanglement distillation and secret key generation, over a bipartite quantum process, which may be noisy in general. We also discuss limitations on the entropy change and its rate for dynamics of an open quantum system weakly interacting with the bath. We introduce a measure of non-unitarity to characterize the deviation of a doubly stochastic quantum process from a noiseless evolution. 

Next, we introduce information processing tasks for secure read-out of digital information encoded in read-only memory devices against adversaries of varying capabilities. The task of reading a memory device involves the identification of an interaction process between probe system, which is in known state, and the memory device. Essentially, the information is stored in the choice of channels, which are noisy quantum processes in general and are chosen from a publicly known set. Hence, it becomes pertinent to securely read memory devices against scrutiny of an adversary. In particular, for a secure read-out task called private reading when a reader is under surveillance of a passive eavesdropper, we have determined upper bounds on its performance. We do so by leveraging the fact that private reading of digital information stored in a memory device can be understood as secret key agreement via a specific kind of bipartite quantum interaction.

\clearpage

\chapter{Exploring Informational Aspects of Quantum Interactions}
\section{Motivation}
The beauty of nature is inexpressible. It is of fundamental interest to understand natural phenomena. All physical systems and processes are governed by the laws of nature. In this thesis, we limit our discussion to the domain in which the laws of nature are well described by \textit{quantum} theory. Note that the classical theory emerges from the quantum theory as an approximation; i.e., the laws of classical mechanics can be obtained from the laws of quantum mechanics by making particular choices for a quantum process and the state of a quantum system \cite{Dbook81,SCbook95}. 

A physical system that is described by the laws of quantum mechanics is called a quantum system. The state of a quantum system has its complete physical description. A physical operation is quantum when evolution of a quantum system is feasible under its action. In general, the physical description contained in the state of a quantum system cannot be accounted for classical theory~\footnote{Entanglement, superposition, uncertainty relations, contextuality, and indeterminsim~\cite{GS22a,GS22b,Hei27,BA04,HSD15,DASH13,CBTW17} are some of the interesting features of quantum theory.}~\cite{SCbook95}. While quantum effects are dominating in microscopic regime, these effects are largely unnoticeable (vanishing) in macroscopic regime. 

With the inception of the idea that information is physical~\cite{Ben01,Bek81}, there has been wide interest in studying several physical phenomena and systems from an information-theoretic perspective. Information is associated with a physical respresentation. The information content of a physical system must be finite if the region of space and the energy is finite~\cite{Bek81,Bou02}. From these observations, we can conclude that storage, processing, and transmission of information are all governed by physical laws.

The primary goal of this thesis is to explore informational aspects of bipartite quantum interactions and their capabilities to  generate entanglement, an intriguing quantum phenomenon. A physical transformation of the state of a bipartite quantum system is effected by an underlying Hamiltonian. We are interested in the most general interaction such that a bipartite quantum system may be in contact with a bath, which is inaccessible to an observer. Bipartite quantum interaction refers to an underlying Hamiltonian governing the physical evolution of an open bipartite system. Such an interaction models a non-trivial, elementary interaction in a many-body quantum system. This study is necessary also from the aspect of applications as a bipartite quantum interaction depicts a non-trivial, elementary model of a quantum network involving two parties. 

In this age of technology and intelligence, it is of pertinent interest to also inspect information processing capabilities of quantum processes. In general, the laws of quantum theory allow us to push the abilities of processing and computing information beyond the limitations imposed by the classical information theory\footnote{There can be instances when there is no advantage provided by quantum theory over classical one \cite{Par70,WH82,Die82}.}. This provides us with ample opportunities to devise new information processing, communication, and computation protocols, e.g., quantum key distribution \cite{BB84, RennerThesis}, quantum teleportation \cite{BBC+93,BFK00}, quantum sensing \cite{DRC17}, quantum algorithms for computational speed-up \cite{NC00} etc. 

Broadly speaking, quantum processes are variously referred to as quantum processes, quantum gates, quantum channels, or quantum operations\footnote{It should be clarified that all quantum channels are quantum processes; however, the converse is not true in general. Meaning and subtleties will be clear in Chapters~\ref{ch:review} and \ref{ch:eg}.}.  Evolution of the state of a spin in an Ising model due to spin-spin interaction, a photon transmitted through an optical fiber, an electron interacting with electromagnetic field, or a quantum system decohering due to interaction with surrounding-- all these quantum processes are also describable as quantum channels~\cite{NC00,Riv11,Wbook17}. 

Note that a physical system itself can act as a quantum channel (process). This is because any physical system is capable of transforming the state of a quantum system. For examples, beamsplitter, spontaneous parametric down-conversion, optical fiber, etc.~are physical systems that transform the state of incident photons. When I point a laser beam on a screen during my presentation, the  state of incident photons would be different from that of the reflected ones. In this case, the screen is a quantum channel with photons as input and output system. This process is noisy as some photons may get absorbed by the screen and not all photons will be accessible to the audience who are observers here. As we will see later, these observations are crucial in devising communication protocols. 

Let us now briefly discuss some of the key ideas -- information content, quantum processes, entanglement, and information carriers -- that led to this work. In the end, we provide an outline of this thesis by briefly discussing contents of the following chapters. 

\section{What is information?}
Does a learner gain any knowledge if something obvious is stated to her? The answer is no. Obvious, isn't it? 

Knowledge requires information. Information is related to a meaningful piece of data that can be of use to a learner. A piece of data is meaningful to the learner only if she is observant enough. Information content is understood in reference to an observer; in other words, information is observer-dependent. With gain in information, there is reduction in uncertainty about an associated event. The less favorable the occurrence of an outcome of the event is, the more information is learned upon its occurrence; the more favorable an outcome is, the less information is learned upon its occurrence. As an instance, consider that I attempt to defend my thesis in front of PhD committee members. Let us safely assume that the chance of my graduation is high given the trend. My friends will be less surprised and learn less information when the PhD committee gives a passing mark. Whereas, they will be more surprised and learn more information if I fail. 

Claude E. Shannon was the first to give a qualitative description of what information is and how it can be transmitted amid noise~\cite{Sha48}. The abstract nature of the (classical) information theory introduced by Shannon provided a unified framework to understand seemingly distinct modes of communications and information processing over classical systems. The subject area dealing with information associated to classical systems is called classical information theory. He introduced the notion of a bit as a unit of information, whose physical representation is with a classical system. Roughly speaking, a bit is a binary valued quantity in which values are orthogonal to each other, meaning that they are distinguishable. Occurrence of one value excludes the possible occurrence of the other. Conventionally, a bit is represented by a binary number that can be ``0'' or ``1". 

Consider an event, which is an information source, described by a random variable $X$ with an associated discrete alphabet $\msc{X}$ of finite size, where $x\in\msc{X}$ is called a symbol. Each symbol $x$ corresponds to an outcome of the event. The given random variable $X$ can be represented by a classical register, and suppose that we only allow classical processes to occur. Let $p_X(x)$ be a probability distribution for describing the outcomes of the event. One measure of the surprise of the symbol $x\in\msc{X}$ is 
\begin{equation}\label{eq:surp}
i(x)\coloneqq -\log_2 p_X(x).
\end{equation}
The quantity $i(x)$ is also called surprisal or information content of the symbol $x$. Formula \eqref{eq:surp} is motivated by desirable properties that a quantifier of information content should have (see~\cite{book1991cover,Wbook17}). An observer will be infinitely surprised on an outcome of a symbol with no chance of occurrence, i.e., $p_X(x)=0$. Consider an example of an event that corresponds to tossing of an unbiased coin with two sides-- heads and tails. Upon a toss, if either heads or tails shows up then an observer learns one bit of information. The expected surprisal of an event is called its Shannon entropy:
\begin{equation}
S(X)\coloneqq \mathbb{E}_X\{-\log_2{p_X(X)}\}=-\sum_{x\in\msc{X}}p(x)\log_2 p(x). 
\end{equation}

Now suppose that information carriers are quantum systems. The state of a quantum system has its complete description. Quantum information is information associated to a quantum state. Analogous to the bit of classical information theory, a unit of quantum information is a qubit\footnote{Interested readers may refer to \cite{Wbook17,Wat15,H13book,NC00} for detailed discussions on quantum information.}. A qubit is a two-level quantum system that can be in a superposition state of $\ket{0}$ and $\ket{1}$, where $\{\ket{0},\ket{1}\}$ form an orthonormal basis. A measure of the average information content of a quantum system is its von Neumann entropy~\eqref{eq:vnent} \cite{Neu32}. The mathematical framework of quantum mechanics from an information-theoretic perspective is discussed in the next chapter. 

\section{Quantum interactions and processes}
Any physical process that operates on or transforms the state of a quantum system is called a quantum process. There is an underlying Hamiltonian that gives rise to such a process. In principle, the evolution of a closed quantum system $A'$ is always given by a unitary transformation and underlying Hamiltonian acts just on $A'$. However, it is difficult to isolate a system from its surrounding, which leads to an unavoidable interaction with the bath $E'$ (environment). It is often required to deal with a many-body quantum system, which is a difficult task. A simpler case of a many-body system is a two-body system. We call the Hamiltonian responsible for the interaction between constituent quantum systems in a many-body system as multipartite quantum interaction. Bipartite unitary evolution is the simplest physical transformation considered on a many-body system. 

Suppose that the system $A'$ is uncorrelated to the bath $E'$ before the action of Hamiltonian $\hat{H}_{A'E'}$. In general, $\hat{H}_{A'E'}=\hat{H}_{A'}+\hat{H}_{E'}+\hat{H}_{\tn{int}}$, where $\hat{H}_{A'}$ and $\hat{H}_{E'}$ denote Hamiltonians acting on individual systems $A'$ and $E'$, respectively, and $\hat{H}_{\tn{int}}$ denotes a Hamiltonian giving rise to a non-trivial interaction between $A'$ and $E'$. Even though the evolution of the composite system $A'+E'$ is unitary, as $A'+E'$ is closed, transformation of the state of $A'$ is noisy in general for an observer with no access to the bath. It is possible that after the action of the Hamiltonian, the degrees of freedom of the original system changes; i.e., an observer may have access to fewer or more degrees of freedom than $A'$, even though the partial degrees of freedom of the bath system are inaccessible. The total degrees of freedom of composite system remains the same, since it is closed. Such noisy physical operations are also called \textit{quantum channels} or (noisy) gates. A unitary operation is a particular quantum channel. Quantum channels are often called quantum ``gates" in the discussion of quantum computation. 

A Hamiltonian governing the evolution of an open two-body quantum system is called a bipartite quantum interaction. These interactions give rise to quantum processes that may change correlations between interacting systems if the interaction term present in the Hamiltonian is non-zero. As discussed previously, there is a need to inspect noisy processes involving two-body systems due to the unavoidable interaction between systems of interest and the bath. For an observer who has no access to the degrees of freedom of the bath, evolution process is noisy, i.e., non-unitary in general, and it is called a bipartite quantum channel. It should be noted that the degrees of freedom of the initial and final systems may change after the action of a bipartite channel.

In an information processing task, if pairs of input and output systems, $(A',A)$ and $(B',B)$ belong to two separate observers then a bipartite channel $\mc{N}_{A'B'\to AB}$ is called a bidirectional channel. A bidirectional channel $\mc{N}_{A'B'\to AB}$ provides the simplest form of a non-trivial network setting, as it allows for an interaction between two parties. Note that a point-to-point communication protocol, which is over a channel $\mc{N}_{A'\to B}$, is a special case of a communication protocol over a bidirectional channel $\mc{N}_{A'B'\to AB}$. A bidirectional quantum channel is an elementary, non-trivial example of a quantum network.   

\section{Entanglement}
We can sometimes be more certain about the joint state of a two-body quantum system $AB$ than we can be about any one of its individual parts, $A$ or $B$. These situations occur when a given two-body quantum system is in an entangled state~\cite{S35}. A particular kind of entangled state is a ``maximally entangled" state. As a system $AB$ has two parts, $A$ and $B$, measurements on its isolated parts $A$ and $B$ are physically possible. Such measurements are referred to as local measurements or operations. If $AB$ is maximally entangled and we perform any local measurement of $A$ or $B$, then we gain no information about the preparation of the state; instead we merely generate a random bit. A famous example of a maximally entangled state is the singlet state \cite{EPR35}. Entangled states are known to be a useful resource for different information processing tasks, e.g., quantum teleportation, unconditionally secure key distribution, randomness generation, etc. 

Bipartite quantum interactions can generate correlations between two separated systems in such a way that the physical description cannot be given by local realistic, hidden variable theories. Two quantum systems need to be entangled for them to exhibit such non-local correlations. While non-local correlations between two quantum systems implies that they are entangled, the converse is not true in general. The state of a bipartite quantum system is said to be entangled when it cannot be described as a convex combination of the uncorrelated states, i.e., product states of the constituent systems. While entanglement can be uniquely defined in the case of bipartite systems, it gets complicated for multipartite systems. This is related to the fact that there is no unique way to describe non-local correlations among many-body quantum systems~(see \cite{HSD15} and references therein). 

We need quantum interactions to generate quantum correlations between separated systems. We can use these correlations to harvest information and perform computation or communication tasks that may not be achieved with classical processes and systems. 

\section{Information carriers} 
There are multiple ways to communicate and store a given message. We may use print and digital media for storage and communication of information. We may rely on our brain to store information. Methods of information storage and transfer depend on the need and accessibility of media among the users. There are continuous efforts to increase information processing abilities of memory devices; we want to decrease the physical size of memory devices while increasing their storage capacity. As we are making advancements in information technology, there is great concern for privacy. At times, we need secure methods for processing or storage of information based on the ability of adversarial scrutiny. 

We have been making use of a variety of physical sources for communication and storage of messages. For example, we can use light or sound signals to broadcast important messages to commuters for traffic control. A light signal is better as a traffic signal, as it can be seen by commuters at an appropriate place without getting disturbed by noise of vehicle horns. Sound signals with loud volume are used to alert about the type of emergency vehicles passing through roads amid a crowd. We can also use several distinguishable properties of a physical medium to communicate in different ways. Consider a known and most used medium of communication -- sound. We pronounce words to communicate a message, change pitch or tone in order to indicate the level of urgency, and whisper for privacy against an eavesdropper.  

We can also use quantum processes as information carriers by encoding a message into a sequence of quantum processes. The efficacy of such method would depend on how well can we distinguish between quantum processes that are usable for encoding. As an example, let us consider memory devices. At a fundamental level, we can always understand the storage of messages in memory devices to be in the form of quantum channels. The mechanism of reading of any memory device involves the transmission of a probe system, which is in a known state, and inspecting the transformation of its state after an interaction with the system used for encryption of a message. In principle, any quantum processes can be used for information storage and communication. The choice of quantum processes depends on the kind of quantum systems accessible by a reader. In order to build up secure information processing, we need to come up with a strategy that hides encoded messages in quantum processes against an adversary. While an authorized user can access a hidden message, the probability of an adversary being able to access a hidden message should be negligible. 

Communication of messages over quantum processes, i.e., channels,  is a topic of wide interest \cite{Wbook17,H13book}; quantum states are used to encode a message and then are transmitted over a physical medium, which are modeled as some noisy quantum processes, e.g., bosonic Gaussian channels, depolarizing channels, etc. The primary idea behind such communication protocols depends on the discrimination of quantum states. These protocols allow for quick communication between distant parties. 

The choice of information carrier as quantum states or processes depends on the need and interest of communicating parties. While mathematically there is a correspondence called the Choi--Jamio\l{}kowski isomorphism between quantum processes and states, there are distinct differences from a physical perspective. If we want an information carrier to be robust against measurements and also to be long-lived, then we cannot encode a message in terms of quantum state in general. This is because quantum states are fragile against measurement and may decohere due to unavoidable interaction with surrounding. In such situations, we may instead want robust physical systems that can be described as quantum processes required for the task. Whereas, if we want our information carrier to be securely transmitted between points over a distance, we would encrypt the message in a quantum state and transmit it over a quantum channel. Subtle issues are discussed briefly in latter chapters. 

\section{Overview of thesis}
In this section, we briefly review the main results developed and discussed in this thesis. In Chapter~\ref{ch:review}, we discuss the mathematical formulation of quantum mechanics, definitions of information measures, and important lemmas required to derive the main results discussed in latter chapters.

In Chapter~\ref{ch:bqi}, the main focus is on deriving fundamental limitations on entangling abilities of bipartite quantum interactions \cite{DBW17}. These bounds also provide limitations on the information processing abilities of a  bipartite quantum network. Entangling abilities of bipartite quantum interactions are relevant in a number of different areas of quantum physics, reaching from fundamental areas such as quantum thermodynamics and the theory of quantum measurements to other applications such as quantum computation, quantum key distribution, and other information processing protocols. A particular aspect of the study of bipartite interactions is concerned with entanglement that can be created from such interactions. In this chapter, we discuss two basic building blocks of bipartite quantum protocols, namely, the generation of maximally entangled states and secret key via bipartite quantum interactions. In particular, we provide a non-trivial, efficiently computable upper bound on the positive-partial-transpose-assisted (PPT-assisted) entanglement distillation capacity of bidirectional quantum channel, thus addressing a question that has been open since 2002. In addition, we provide an upper bound on the private capacity of bidirectional quantum channels assisted by local operations and classical communication (LOCC). 


In Chapter~\ref{ch:eg}, we discuss limitations on quantum dynamics based on entropy change \cite{DKSW18}. It is well known in the realm of quantum mechanics and information theory that the entropy is non-decreasing for the class of doubly stochastic physical processes, also called unital processes. However, in general, the entropy does not exhibit monotonic behavior. This has restricted the use of entropy change in characterizing evolution processes. Recently, a lower bound on the entropy change was provided in \cite{BDW16}. We explore the limit that this bound places on the physical evolution of a quantum system and discuss how these limits can be used as witnesses to characterize quantum dynamics. In particular, we derive a lower limit on the rate of entropy change for memoryless quantum dynamics, and we argue that it provides a witness of non-unitality; i.e., violation of the bound would be possible only if the dynamics are non-unital. This limit on the rate of entropy change leads to definitions of several witnesses for testing memory effects in quantum dynamics. Furthermore, from the aforementioned lower bound on entropy change, we obtain a measure of non-unitarity for unital evolutions.

In Chapter~\ref{ch:read}, we discuss a general protocol for quantum reading and discuss bounds on the reading capacities \cite{DW17}. Quantum reading refers to the task of reading out classical information stored in a read-only memory device. In any such protocol, the transmitter and receiver are in the same physical location, and the goal of such a protocol is to use these devices, coupled with a quantum strategy, to read out as much information as possible from a memory device, such as a CD or DVD. As a consequence of the physical setup of quantum reading, the most natural and general definition for quantum reading capacity should allow for an adaptive operation after each call to the channel, and this is how quantum reading capacity is defined in this chapter. We  also derive several bounds on quantum reading capacity, and we introduce an environment-parametrized memory cell, delivering second-order and strong converse bounds for its quantum reading capacity. We calculate the quantum reading capacities for some exemplary memory cells, including a thermal memory cell, a qudit erasure memory cell, and a qudit depolarizing memory cell. We finally discuss an explicit example to illustrate the advantage of using an adaptive strategy in the context of zero-error quantum reading capacity.

In Chapter~\ref{ch:priv-read}, we introduce a protocol for the private reading of memory devices against an eavesdropper \cite{DBW17}. We can use this protocol for secret key agreement between two authorized parties where secret key is encoded in a memory device. The goal is to protect from the leakage of the secret key to an eavesdropper who is spying on the reader. We notice that private reading can be understood as a particular kind of secret-key-agreement protocol that employs a particular kind of bipartite interaction. We make use of this observation to derive upper bounds on private reading capacities of memory devices. 

In Chapter~\ref{ch:sr}, we introduce protocols for the secure retrieval of digital information stored in a memory device under different adversarial situations. We refer to such protocols as secure reading. Information in memory devices is encoded in terms of quantum channels selected from a particular set called a memory cell. We allow the encoder and the intended reader to share secret keys prior to the reading task is carried out. We also consider a toy model in which a message is encoded in unitary gates and show the advantage of an authorized reader who has key against an unauthorized user with no key. For more general secure reading protocol, we consider a passive adversary who has complete access to the environment, and a semi-passive adversary who can access the memory device. To illustrate these protocols, we discuss examples for the secure reading of memory devices encoded with a binary memory cell consisting of amplitude damping channels or depolarizing channels. We also briefly discuss application of a secure reading protocol for a threat level identification scheme, which is inspired by IFF: identification, friend or foe.

\clearpage

\chapter{Quantum Systems, Physical Processes, and Information Measures}\label{ch:review}
In this chapter, we take an information-theoretic approach to review some of the basic concepts of quantum mechanics. We start by introducing standard notations, definitions, and important lemmas that are required for the derivation and discussions of results introduced in latter sections and chapters. We discuss the mathematical representation of the state of a quantum system and physical quantum processes, particular set of states, structure of physical processes obeying certain symmetries, measures to quantify information content in a quantum system, and the notion of bipartite entanglement measures. Finally, in Section~\ref{sec:app-ent-mes}\footnote{Section~\ref{sec:app-ent-mes} is entirely based on an unpublished work done in collaboration with Mark~M.~Wilde.}, we present results on the approximate normalization of two different entanglement measures-- entropy of entanglement\cite{Sre93} and squashed entanglement\cite{CW04}. 

\section{Bounded operators and super-operators}
In this review, the discussion is focused on finite--dimensional Hilbert spaces. See Section~\ref{sec:notation-eg} for discussion on  infinite--dimensional Hilbert spaces.

Let $\msc{B}(\mc{H})$ denote the algebra of bounded operators acting on a Hilbert space $\mc{H}$, with $\bm{1}_{\mc{H}}\in\msc{B}(\mc{H})$ denoting the identity operator and $\id$ denoting the identity super-operator\footnote{A super-operator is a linear map that acts on an operator.}. Let $\dim(\mc{H})$ denote the dimension of Hilbert space $\mc{H}$ \footnote{Note that $\dim(\mc{H})$ is equal to $+\infty$ in the case that $\mc{H}$ is a separable, infinite-dimensional Hilbert space.}. $\msc{B}(\mc{H})$ also denotes the set of all trace-class operators acting on the Hilbert space $\mc{H}$, since $\mc{H}$ is finite-dimensional. 

The Hilbert space of a system $A$ is denoted by $\mc{H}_A$ and the Hilbert space of a composite system consisting of systems $A$ and $B$ is given by the tensor product $\mc{H}_{AB}\coloneqq \mc{H}\otimes\mc{H}_B$. The notation $A^n:= A_1A_2\cdots A_n$ indicates a composite system consisting of $n$ subsystems $A$, each of which is isomorphic to the Hilbert space $\mc{H}_A$; i.e., for all $i\in [n]$, $A_i\simeq A$, where $[n]\coloneqq \{1,2,\ldots, n\}$ for $n\in\bb{N}$. Let us denote the set of all orthonormal bases of the Hilbert space $\mc{H}$ as $\ONB(\mc{H})$.

The subset of $\mc{B}(\mc{H})$ containing all positive semi-definite operators is denoted by $\msc{B}_+(\mc{H})$. We write $C\geq 0$ to indicate that $C\in\msc{B}_+(\mc{H})$, and $C\geq D$ indicates $C-D\in\msc{B}_+(\mc{H})$. 

A super-operator $\mc{M}_{A\to B}$ denotes a linear map $\mc{M}:\msc{B}(\mc{H}_A)\to\msc{B}(\mc{H}_B)$ that maps elements in $\msc{B}(\mc{H}_A)$ to elements in $\msc{B}(\mc{H}_B)$. The adjoint $\mc{M}^\dagger:\msc{B}(\mc{H}_B)\to\msc{B}(\mc{H}_A)$ of a linear map $\mc{M}:\msc{B}(\mc{H}_A)\to\msc{B}(\mc{H}_B)$ is the unique linear map that satisfies
	\begin{equation}
	\<Y_B,\mc{M}(X_A)\>=\<\mc{M}^\dag(Y_B),X_A\>, \qquad
	\forall\ X_A\in\msc{B}(\mc{H}_A), Y_B\in\msc{B}(\mc{H}_B)
	\end{equation}
	where $\<C,D\>=\Tr\{C^\dag D\}$ is the Hilbert-Schmidt inner product. An isometry $U:\mc{H}\to\mc{H}'$ is a
linear map such that $U^{\dag}U=\bm{1}_{\mathcal{H}}$ and $UU^{\dag}=\Pi_{\mc{H}'}$, where $\Pi_{\mc{H}'}$ is a projection onto a subspace of the Hilbert space $\mc{H}'$.

A super-operator $\mc{M}_{A\to B}:\msc{B}(\mc{H}_A)\rightarrow\msc{B}(\mc{H}_B)$ is called positive if it maps elements of $\msc{B}_{+}(\mc{H}_A)$ to elements of $\msc{B}_{+}(\mc{H}_B)$ and completely positive if $\id_R\otimes\mc{M}_{A\to B}$ is positive for a Hilbert space $\mc{H}_R$ of any dimension, where $\id_R$ is the identity super-operator acting on $\msc{B}(\mc{H}_R)$. A positive map $\mc{M}_{A\to B}:\msc{B}_{+}(\mc{H}_A)\to\msc{B}_{+}(\mc{H}_B)$ is called trace non-increasing if $\Tr\{\mc{M}_{A\to B}(\sigma_A)\}\leq \Tr\{\sigma_A\}$ for all $\sigma_A\in\msc{B}_+(\mc{H}_A)$, and it is called trace-preserving if $\Tr\{\mc{M}_{A\to B}(\sigma_A)\}=\Tr\{\sigma_A\}$ for all $\sigma_A\in\msc{B}_+(\mc{H}_A)$. When confusion does not arise, we omit identity operators in expressions involving multiple tensor factors, so that, for example, $\mc{M}_{A\to B}(\rho_{RA})$ is understood to mean $\id_{R}\otimes\mc{M}_{A\to B}(\rho_{RA})$.
	
A linear map $\mc{M}_{A\to B}:\msc{B}(\mc{H}_A)\rightarrow\msc{B}(\mc{H}_B)$ is called sub-unital if $\mc{M}_{A\to B}(\bm{1}_{A})\leq \bm{1}_B$, unital if $\mc{M}_{A\to B}(\bm{1}_{A})= \bm{1}_B$, and super-unital if $\mc{M}_{A\to B}(\bm{1}_{A})>\bm{1}_B$. Note that it is possible for a linear map to be neither unital, sub-unital, nor super-unital. A positive trace-preserving map can be sub-unital only if the dimension of the output Hilbert space is greater than or equal to the dimension of the input Hilbert space. A positive trace-preserving map can be super-unital only if the dimension of the output Hilbert space is less than the dimension of the input Hilbert space. Positive trace-preserving maps between two finite-dimensional Hilbert spaces of the same dimension that are sub-unital are also unital.

\section{Operator-valued functions and norms}
Let $A$ be a self-adjoint operator acting on a Hilbert space $\mc{H}$. The support $\supp(A)$ of $A$ is the span of the eigenvectors of $A$ corresponding to its non-zero eigenvalues, and the kernel of $A$ is the span of the eigenvectors of $A$ corresponding to its zero eigenvalues. There exists a spectral decomposition of $A$:
	\begin{equation}
		A=\sum_k \lambda_k \op{k},
	\end{equation}
	where $\{\lambda_k\}_k$ are the eigenvalues corresponding to an orthonormal basis of eigenvectors $\{\ket{k}\}_k$ of $A$. The projection $\Pi(A)$ onto $\supp(A)$ is then
	\begin{equation}
		\Pi(A)=\sum_{k:\lambda_k\neq 0}\op{k}.
	\end{equation}
	Let $\text{rank}(A)$ denote the rank of $A$. If $A$ is positive definite, i.e., $A>0$, then $\text{rank}(A)=\dim(\mc{H})$, $\Pi(A)=\bm{1}_{\mc{H}}$, and we say that the rank of $A$ is full. If $f$ is a real-valued function with domain $\text{Dom}(f)$, then $f(A)$ is defined as
	\begin{equation}
		f(A)=\sum_{k:\lambda_k\in\text{Dom}(f)}f(\lambda_k)\op{k}.
	\end{equation}
	
The Schatten $p$-norm of an operator $A\in\msc{B}(\mc{H})$ is defined as  
	\begin{equation}
		\Vert A\Vert_p\equiv \left( \Tr\left\{|A|^p\right\}\right)^\frac{1}{p},\label{eq:sch-norm}
	\end{equation}
	where $|A|\equiv \sqrt{A^\dagger A}$ and $p\in[1,\infty)$. If $\{\sigma_i(A)\}_i$ are the singular values of $A$, then 
	\begin{equation}
		\Vert A\Vert_p=\left[\sum_i\sigma_i(A)^p\right]^\frac{1}{p}.
	\end{equation}
	$\norm{A}_{\infty}\coloneqq\lim_{p\to\infty}\norm{A}_p$ is the largest singular value of $A$. Let $\msc{B}_p(\mc{H})$ be the subset of $\msc{B}(\mc{H})$ consisting of operators with finite Schatten $p$-norm. The Schatten $p$-norms are unitarily invariant norms.

	\begin{lemma}[H\"{o}lder's inequality~\cite{Rog88,Hol89,Bha97}]\label{thm:sch-norm}
		For all $A\in\msc{B}_p(\mc{H})$, $B\in\msc{B}_q(\mc{H})$, and $p,q\in[1,\infty]$ such that $\frac{1}{p}+\frac{1}{q}=1$, it holds that
	\begin{equation}
		\vert\<A,B\>\vert=\left\vert \Tr\left\{A^\dagger B\right\}\right\vert\leq \Vert A\Vert_p\Vert B\Vert_q.
	\end{equation}
	\end{lemma}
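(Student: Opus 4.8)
The plan is to reduce the operator inequality to the classical (scalar) H\"older inequality by passing to the singular value decomposition. Assume $A,B\neq 0$, since otherwise both sides vanish. Write a singular value decomposition $A=\sum_k a_k\,|u_k\rangle\!\langle v_k|$ with $a_k=\sigma_k(A)\geq 0$ and $\{u_k\},\{v_k\}$ orthonormal systems; this is a finite sum because $\mc{H}$ is finite-dimensional. Then $\Tr\{A^\dagger B\}=\sum_k a_k\,\langle u_k|B|v_k\rangle$, so $|\Tr\{A^\dagger B\}|\leq\sum_k a_k\,|\langle u_k|B|v_k\rangle|$, and applying the classical H\"older inequality to the nonnegative sequences $(a_k)_k$ and $(|\langle u_k|B|v_k\rangle|)_k$ with exponents $p,q$ gives $|\Tr\{A^\dagger B\}|\leq\Vert A\Vert_p\,\bigl(\sum_k|\langle u_k|B|v_k\rangle|^{q}\bigr)^{1/q}$ (with the usual convention $\bigl(\sum(\cdot)^q\bigr)^{1/q}=\sup(\cdot)$ when $q=\infty$). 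Thus the lemma follows once we establish the compression bound $\sum_k|\langle u_k|B|v_k\rangle|^{q}\leq\sum_k\sigma_k(B)^{q}=\Vert B\Vert_q^q$ for arbitrary orthonormal systems $\{u_k\},\{v_k\}$, together with its $q=\infty$ analogue.

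The case $q=\infty$ is immediate: $|\langle u_k|B|v_k\rangle|\leq\Vert u_k\Vert\,\Vert Bv_k\Vert\leq\Vert B\Vert_\infty$. For $q\in[1,\infty)$, take a polar decomposition $B=W|B|$ with $W$ a partial isometry, split $|B|=|B|^{1/2}|B|^{1/2}$, and apply the Cauchy--Schwarz inequality to the inner product $\bigl\langle\, |B|^{1/2}W^\dagger u_k\, ,\, |B|^{1/2}v_k\,\bigr\rangle$, which equals $\langle u_k|B|v_k\rangle$: this yields $|\langle u_k|B|v_k\rangle|\leq\langle u_k|W|B|W^\dagger|u_k\rangle^{1/2}\,\langle v_k|\,|B|\,|v_k\rangle^{1/2}$. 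Raising to the $q$-th power, summing over $k$, and applying the Cauchy--Schwarz inequality for sequences bounds $\sum_k|\langle u_k|B|v_k\rangle|^{q}$ by $\bigl(\sum_k\langle u_k|W|B|W^\dagger|u_k\rangle^{q}\bigr)^{1/2}\bigl(\sum_k\langle v_k|\,|B|\,|v_k\rangle^{q}\bigr)^{1/2}$. Each factor is then handled by Jensen's inequality for the convex function $t\mapsto t^{q}$ in a single unit-vector state, $\langle\psi|C|\psi\rangle^{q}\leq\langle\psi|C^{q}|\psi\rangle$ for $C\geq 0$: completing $\{u_k\}$ and $\{v_k\}$ to full orthonormal bases gives $\sum_k\langle v_k|\,|B|\,|v_k\rangle^{q}\leq\Tr\{|B|^{q}\}=\Vert B\Vert_q^q$ and $\sum_k\langle u_k|W|B|W^\dagger|u_k\rangle^{q}\leq\Tr\{(W|B|W^\dagger)^{q}\}=\Vert W|B|W^\dagger\Vert_q^q\leq\Vert B\Vert_q^q$, the last inequality because conjugation by the partial isometry $W$ (with $\Vert W\Vert_\infty\leq 1$) does not increase singular values. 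Multiplying the two bounds gives $\sum_k|\langle u_k|B|v_k\rangle|^{q}\leq\Vert B\Vert_q^q$, completing the compression bound and hence the proof.

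The step I expect to be the genuine obstacle is the compression bound --- equivalently, the assertion that the ``off-diagonal'' sequence $(|\langle u_k|B|v_k\rangle|)_k$ is dominated in every $\ell^{q}$ norm by the singular value sequence of $B$ --- since this is the only point where one must go beyond formal algebra and invoke convexity together with the behavior of singular values under compression. If that bookkeeping became awkward, two standard alternatives are available: one could instead finish the singular-value reduction by von Neumann's trace inequality $|\Tr\{A^\dagger B\}|\leq\sum_k\sigma_k(A)\sigma_k(B)$ followed by scalar H\"older (which merely relocates the work to a proof of von Neumann's inequality), or one could prove the whole lemma in one stroke by complex interpolation --- a Riesz--Thorin-type argument via the Hadamard three-lines theorem, interpolating between the Cauchy--Schwarz bound at $p=q=2$ and the elementary bound at $(p,q)=(1,\infty)$. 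The homogeneity observation and the $q=\infty$ case are routine.
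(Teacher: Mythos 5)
The paper cites this lemma from the literature (\cite{Rog88,Hol89,Bha97}) and does not prove it, so there is no in-paper argument to compare against; I will therefore just assess the proposal on its own terms.

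Your proof is correct and self-contained in the finite-dimensional setting the paper works in. The reduction via SVD to the scalar H\"older inequality is sound, and the ``compression bound'' $\sum_k|\langle u_k|B|v_k\rangle|^{q}\leq\|B\|_q^{q}$ is the genuine core, exactly as you flag: your chain
\begin{equation}
|\langle u_k|B|v_k\rangle|\leq\langle u_k|W|B|W^\dagger|u_k\rangle^{1/2}\langle v_k|\,|B|\,|v_k\rangle^{1/2}
\end{equation}
(Cauchy--Schwarz through the split $|B|=|B|^{1/2}|B|^{1/2}$), then Cauchy--Schwarz on sequences, then Jensen $\langle\psi|C|\psi\rangle^{q}\leq\langle\psi|C^{q}|\psi\rangle$, then dropping to the full trace, all check out. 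One small remark: you invoke the general principle that conjugation by a partial isometry does not increase singular values to get $\|W|B|W^\dagger\|_q\leq\|B\|_q$; in fact, since $W$ comes from the polar decomposition of $B$ it is a partial isometry whose initial space contains $\supp|B|$, so $W|B|W^\dagger$ and $|B|$ share the same nonzero eigenvalues and the inequality is actually an equality --- worth noting since it avoids appealing to the more general monotonicity fact. The alternative route you sketch (von Neumann's trace inequality $|\Tr\{A^\dagger B\}|\leq\sum_k\sigma_k(A)\sigma_k(B)$ followed by scalar H\"older) is the one used in Bhatia's book, the paper's cited reference, and is somewhat shorter once von Neumann's inequality is taken as known; your route has the advantage of being entirely elementary, using only Cauchy--Schwarz and one-variable Jensen. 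Both are standard and valid.
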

	
The following lemma can be found in \cite[Corollary~5.2]{W12notes}.
	\begin{lemma}\label{thm:log-con}
		Let $\mc{M}:\mc{B}_+(\mc{H}_A)\to\mc{B}_+(\mc{H}_B)$ be a linear, positive, and sub-unital map. Then, for all $\sigma_A\in\mc{B}_{+}(\mc{H}_A)$ it holds that
		\begin{equation}
			\mc{M}_{A\to B}(\log(\sigma_A))\leq\log(\mc{M}_{A\to B}(\sigma_A)). 
		\end{equation}
	\end{lemma}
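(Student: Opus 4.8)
The plan is to establish this operator inequality through the integral representation of the logarithm, which turns it into a one-parameter family of resolvent inequalities governed by operator convexity. I may assume throughout that $\sigma_A>0$: the logarithm $\log\sigma_A$ is only defined on the support of $\sigma_A$, and the general case follows by a routine limiting argument; moreover $\sigma_A\geq\lambda\bm{1}_A$ for some $\lambda>0$ forces $\mc{M}_{A\to B}(\sigma_A)\geq\lambda\,\mc{M}_{A\to B}(\bm{1}_A)$ by positivity of $\mc{M}_{A\to B}$, which is what is needed for $\log(\mc{M}_{A\to B}(\sigma_A))$ to be meaningful on the relevant subspace.

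The first step is the identity $\log x=\int_0^\infty\(\frac{1}{1+s}-\frac{1}{x+s}\)\d s$, valid for $x>0$, which lifts to $\log\sigma_A=\int_0^\infty\(\frac{1}{1+s}\bm{1}_A-(\sigma_A+s\bm{1}_A)^{-1}\)\d s$ with the integral convergent in operator norm, and similarly for $\log(\mc{M}_{A\to B}(\sigma_A))$ on $\mc{H}_B$. Since $\mc{M}_{A\to B}$ is a fixed linear map on a finite-dimensional space, it commutes with the integral, so $\mc{M}_{A\to B}(\log\sigma_A)=\int_0^\infty\(\frac{1}{1+s}\mc{M}_{A\to B}(\bm{1}_A)-\mc{M}_{A\to B}\((\sigma_A+s\bm{1}_A)^{-1}\)\)\d s$. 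Subtracting, the inequality $\mc{M}_{A\to B}(\log\sigma_A)\leq\log(\mc{M}_{A\to B}(\sigma_A))$ reduces to showing that, for every $s>0$, the operator $\frac{1}{1+s}\(\bm{1}_B-\mc{M}_{A\to B}(\bm{1}_A)\)+\mc{M}_{A\to B}\((\sigma_A+s\bm{1}_A)^{-1}\)-\(\mc{M}_{A\to B}(\sigma_A)+s\bm{1}_B\)^{-1}$ is positive semi-definite. Its first term is positive semi-definite precisely because $\mc{M}_{A\to B}$ is sub-unital.

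For the remaining two terms I would use the operator convexity of $t\mapsto(t+s)^{-1}$ on $(0,\infty)$: by the operator Jensen (Choi) inequality, $\mc{M}_{A\to B}\((\sigma_A+s\bm{1}_A)^{-1}\)\geq\(\mc{M}_{A\to B}(\sigma_A+s\bm{1}_A)\)^{-1}=\(\mc{M}_{A\to B}(\sigma_A)+s\,\mc{M}_{A\to B}(\bm{1}_A)\)^{-1}$, and then $\mc{M}_{A\to B}(\bm{1}_A)\leq\bm{1}_B$ together with operator antimonotonicity of $X\mapsto X^{-1}$ on positive-definite operators weakens the right-hand side to $\(\mc{M}_{A\to B}(\sigma_A)+s\bm{1}_B\)^{-1}$; integrating the resulting nonnegative integrand over $s\in(0,\infty)$ completes the proof. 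I expect the main obstacle to be exactly this resolvent comparison $\mc{M}_{A\to B}\((\sigma_A+s\bm{1}_A)^{-1}\)\geq\(\mc{M}_{A\to B}(\sigma_A+s\bm{1}_A)\)^{-1}$: the usual form of Choi's inequality is stated for unital positive maps, so some care is needed to carry it through under the sub-unital hypothesis (and to note that $\mc{M}_{A\to B}(\sigma_A+s\bm{1}_A)$ is invertible on the support of $\mc{M}_{A\to B}(\bm{1}_A)$, where all the operators here live). The complementary, routine matters are the reduction to $\sigma_A>0$ and the justification that $\mc{M}_{A\to B}$ may be brought inside the integral defining the logarithm.
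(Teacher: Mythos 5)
Your route via the integral representation of the logarithm works cleanly when $\mc{M}$ is \emph{unital}: the term $\frac{1}{1+s}(\bm{1}_B-\mc{M}(\bm{1}_A))$ vanishes, and the Choi--Davis operator Jensen inequality applied to the operator convex function $t\mapsto(t+s)^{-1}$ is exactly the resolvent comparison you need. However, the point you flag as requiring ``some care'' under the sub-unital hypothesis is not a technicality that can be patched: the resolvent comparison $\mc{M}\bigl((\sigma_A+s\bm{1}_A)^{-1}\bigr)\geq\bigl(\mc{M}(\sigma_A+s\bm{1}_A)\bigr)^{-1}$ is simply false for strictly sub-unital maps. Take $\mc{M}=c\,\id$ with $0<c<1$ on a single Hilbert space, which is positive and sub-unital; then $\mc{M}(X^{-1})=cX^{-1}$ while $(\mc{M}(X))^{-1}=c^{-1}X^{-1}$, so your comparison would require $c\geq c^{-1}$. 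Nor does the first-term credit rescue the integrand: with $\sigma_A=\bm{1}$ it collapses to $\bigl(\tfrac{1}{1+s}-\tfrac{1}{c+s}\bigr)\bm{1}$, strictly negative for every $s>0$, with integral $(\log c)\bm{1}<0$.

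In fact the lemma as stated is false in the sub-unital generality claimed: the same $\mc{M}=c\,\id$ with $\sigma_A=\bm{1}$ gives $\mc{M}(\log\bm{1})=0$ but $\log(\mc{M}(\bm{1}))=(\log c)\bm{1}<0$, violating the asserted inequality. The hypothesis that actually drives the conclusion is unitality, which is what the operator Jensen inequality in the cited source provides, and which is all that is used in the key downstream application (the entropy-change bound, where the lemma is applied to $\mc{N}^\dag$ for a positive trace-preserving $\mc{N}$, so $\mc{N}^\dag$ is unital). The paper gives no independent proof of this lemma, citing a reference, so there is no internal argument for the sub-unital case to compare against. Your proof is a correct and efficient argument for the unital case; your instinct that sub-unitality is the crux is exactly right, but it is not an obstacle that can be overcome, because the statement itself fails there.
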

	
\subsection{Derivatives of operator-valued functions}\label{app-derivative}

	Here we recall \cite[Theorem~V.3.3]{Bha97}.

	If $f$ is a continuously differentiable function on an open neighbourhood of the spectrum of some self-adjoint operator $A$, then its derivative $Df(A)$ at $A$ is a linear superoperator and its action on an operator $H$ is given by
	\begin{equation}
		Df(A)(H)=\sum_{\lambda,\eta}f^{[1]}(\lambda,\eta)P_A(\lambda)HP_A(\eta),
	\end{equation}
	where $A=\sum_\lambda \lambda P_A(\lambda)$ is the spectral decomposition of $A$ and $f^{[1]}$ is the first divided difference function. 
	
	If $t\mapsto A(t)\in\msc{B}_+(\mc{H})$ is a continuously differentiable function on an open interval in $\mathbb{R}$, with derivative $A'\coloneqq\frac{\d A}{\d t}$, then
	\begin{equation}
		 f'(A(t))\coloneqq \frac{\d}{\d t} f(A(t))= Df(A)(A'(t))=\sum_{\lambda,\eta} f^{[1]}(\lambda,\eta)P_{A(t)}(\lambda)A'(t)P_{A(t)}(\eta).\label{eq:deriv-superop}
	\end{equation}
	In particular, \eqref{eq:deriv-superop} implies the following:
	\begin{align}
		\frac{\d}{\d t}\Tr\{f(A(t))\} & =\Tr\{f'(A(t))A'(t)\},\label{eq:app-trace-der-1}\\
		\Tr\left\{B(t) f'(A(t))\right\} & =\Tr\{B(t)f'(A(t))A'(t)\},\label{eq:app-trace-der-2}
	\end{align}
	where $B(t)$ is assumed to commute with $A(t)$.	

\section{Quantum states and channels}
The state of a quantum system $A$ is represented by a density operator $\rho_A$, which is a positive semi-definite operator with unit trace. Let $\msc{D}(\mc{H}_A)$ denote the set of density operators, i.e., all elements $\rho_A\in \msc{B}_+(\mc{H}_A)$ such that $\Tr\{\rho_A\}=1$. The density operator of a composite system $AB$ is defined as $\rho_{AB}\in \msc{D}(\mc{H}_{AB})$, and the partial trace over $A$ gives the reduced density operator for the system $B$, i.e., $\Tr_A\{\rho_{AB}\}=\rho_B$ such that $\rho_B\in \msc{D}(\mc{H}_B)$. A pure state $\psi_A$ of a system $A$ is a rank-one density operator in $\msc{D}(\mc{H}_A)$, and we write it as $\psi_A=\op{\psi}_A$
for a unit vector $|\psi\>_A\in\mc{H}_A$. A purification of a density operator $\rho_A$ is a pure state $\psi^\rho_{AE}$
such that $\Tr_E\{\psi^\rho_{AE}\}=\rho_A$, where $E$ is called the purifying system. The maximally mixed state is denoted by
$\pi_A := \bm{1}_A / |A| \in\mc{D}\(\mc{H}_A\)$.

It is known that there exists a Schmidt decomposition for any bipartite quantum system in a pure state. It means that any pure state $\psi_{AB}\in\msc{D}(\mc{H}_{AB})$ can be expressed as
\begin{equation}
|\psi\>_{AB}=\sum_{i=0}^{d-1}\sqrt{p_i}|i\>_{A}|i\>_{B},
\end{equation}
such that $\{|i\>_{A}\}_i\in\ONB(\mc{H}_A)$, $\{|i\>_{B}\}_i\in\ONB(\mc{H}_{B})$, $\sum_{i=0}^{d-1}p_i=1$, and for all $i:\ 0\leq p_i\leq 1$, where $d=\min\{|A|,|B|\}$.
	
 Let $U^{\hat{H}}_{A'E'\to AE}$ be a unitary associated to a Hamiltonian $\hat{H}$, which governs the underlying interaction between an input subsystem $A'$ and a bath $E'$, to produce an output subsystem $A$ for the observer and $E$ for the bath. In general, the individual input systems $A'$ and $E'$ and the output systems $A$ and $E$ can have different dimensions. At an initial time, in the absence of an interaction Hamiltonian $\hat{H}$, the bath is in a fixed state $\tau_{E'}$ and the system $A'$ has no correlation with the bath; i.e., the state of the composite system $A'E'$ is of the form $\omega_{A'}\otimes \tau_{E'}$, where $\omega_{A'E'}$ is the joint state of the systems $A'$ and $E'$. Under the action of the interaction Hamiltonian $\hat{H}$, the state of the composite system transforms as
	\begin{equation}\label{eq2:bi-u}
		\rho_{AE}=U^{\hat{H}}(\omega_{A'}\otimes \tau_{E'})(U^{\hat{H}})^\dag.
	\end{equation}
	In the above interaction process, since the system $E$ in \eqref{eq2:bi-u} is inaccessible, the evolution of the system of interest is noisy in general. The noisy evolution of the system $A'$ under the action of the interaction Hamiltonian $\hat{H}$ is represented by a completely positive, trace-preserving (CPTP) map \cite{Sti55}, called a quantum channel:
	\begin{equation}\label{eq:unipartite-int}
		\mc{M}_{A'\to A}(\omega_{A'})=\Tr_{E}\{U^{\hat{H}}(\omega_{A'}\otimes \tau_{E'})(U^{\hat{H}})^\dag\},
	\end{equation}
	where system $E$ represents inaccessible degrees of freedom. In particular, when the Hamiltonian $\hat{H}$ is such that there is no interaction between the system $A'$ and the bath $E'$, and $A'\simeq A$, then $\mc{M}$ corresponds to a unitary evolution, i.e., $\mc{M}(\cdot)=\mc{U}^{\hat{H}}(\cdot)\coloneqq U^{\hat{H}}_{A'\to A}(\cdot)(U^{\hat{H}}_{A'\to A})^\dag$. The weakly complementary channel $\widehat{\mc{M}}_{A'\to E}$ is given by
	\begin{equation}
		\widehat{\mc{M}}_{A'\to E}(\omega_{A'})=\Tr_{B}\{U^{\hat{H}}(\omega_{A'}\otimes \tau_{E'})(U^{\hat{H}})^\dag\}.
	\end{equation}
If we suppose that the state $\tau_{E'}$ of a bath system $E'$ is pure, then $\widehat{\mc{M}}_{A'\to E}$ is called the complementary channel of $\mc{M}_{A'\to E}$.  

A completely positive, trace non-increasing map is called a quantum sub-operation.

A CPTP map $\mc{N}_{A'B'\to AB}:\msc{B}_+(\mc{H}_{A'}\otimes\mc{H}_{B'})\to \msc{B}_+(\mc{H}_{A}\otimes\mc{H}_{B})$ is called a bipartite channel. A bipartite channel $\mc{N}_{A'B'\to AB}$ is also called bidirectional channel in the setting of communication protocols when  pairs $(A',A)$ and $(B',B)$ of quantum systems are held by two spatially separated parties.  

A memory cell $\{\mc{M}^x\}_{x\in\msc{X}}$ is defined to be a set of quantum channels $\mc{M}^x$, for all $x\in\msc{X}$, where $\msc{X}$ is an alphabet, and $\mc{M}^{x}:\msc{B}_+(\mc{H}_{A'})\to\msc{B}_+(\mc{H}_{A})$ for all $x\in\msc{X}$.

A quantum instrument is a collection $\{\mathcal{M}^{x}_{A'\to A}\}_{x\in\msc{X}}$ of quantum sub-operations, such that the sum map $\sum_{x}\mathcal{M}^{x}$ is a quantum channel. The action of a quantum instrument on an input operator $\rho_{A'}$ can be described in terms of the following quantum channel:
	\begin{equation}\label{eq:q-inst}
		\rho_{A'} \mapsto \sum_{x\in\msc{X}}\mathcal{M}_{A'\to A}^{x}(\rho_A)\otimes\op{x}_X,
	\end{equation}
	where $\{|x\rangle_X\}_x\in\ONB(\mc{H}_X)$ and $X$ denotes a (classical) register that stores the classical output of the instrument.

The Choi--Jamio\l{}kowski  isomorphism represents a well known duality between channels and states. Let $\mc{M}_{A'\to A}$ be a quantum channel, and let $\left|\Upsilon\right>_{R:A'}$ denote the following maximally entangled vector:
\begin{equation}
|\Upsilon\>_{R:A'}\coloneqq \sum_{i}|i\>_{R}|i\>_{A'},
\end{equation}
where $|R|=|A'|$, and $\{|i\>_R\}_i\in\ONB(\mc{H}_{R})$ and $\{|i\>_{A'}\}_i\in\ONB(\mc{H}_{A'})$ are fixed orthonormal bases, and $R:A'$ denotes a bipartite cut. Let us extend this notation to multiple parties with a given bipartite cut as
\begin{equation}
|\Upsilon\>_{R_AR_B:A'B'}\coloneqq |\Upsilon\>_{R_A':A'}\otimes |\Upsilon\>_{R_B:B'}.
\end{equation}
A maximally entangled state $\Phi_{RA'}$ is defined for a bipartite system $R:A'$ as
\begin{equation}
\Phi_{RA'}=\frac{1}{|A'|}\op{\Upsilon}_{RA'}.
\end{equation}
 The Choi operator for a channel $\mc{M}_{A'\to A}$ is defined as
\begin{equation}
J^\mc{M}_{RA}=(\id_{R}\otimes\mc{M}_{A'\to A})\(|\Upsilon\>\<\Upsilon|_{RA'}\),
\end{equation}
where $\id_R$ denotes the identity map on $R$. For $A'\simeq A$, the following identity holds
\begin{equation}\label{eq:choi-sim}
\<\Upsilon|(\rho_{R_AA'}\otimes J^\mc{M}_{RA})|\Upsilon\>_{A':R}=\mc{M}_{A'\to A}(\rho_{R_AA'}),
\end{equation}
where $A'\simeq A$. The above identity can be understood in terms of a post-selected variant \cite{HM04} of the quantum teleportation protocol \cite{BBC+93}. Another identity that holds is
\begin{equation}
\<\Upsilon| (Q_{R_A R}\otimes \bm{1}_A)
|\Upsilon\>_{R:A}=\Tr_R\{Q_{R_AR}\},
\end{equation}
for an operator $Q_{R_AR}\in \msc{B}(\mc{H}_{R_A}\otimes\mc{H}_R)$.

\subsection{Separable and PPT: states and channels}
For a fixed basis $\{|i\>_B\}_{i\in\msc{I}}\in\ONB(\mc{H}_B)$, the partial transpose $\T_B$ on the composite system $AB$ is the following map:
\begin{equation}
\(\id_A\otimes \T_B\)(Q_{AB})=\sum_{i,j\in\msc{I}}\(\bm{1}_A\otimes |i\>\<j|_B\) Q_{AB}\( \bm{1}_A\otimes |i\>\<j|_B\),\, \label{eq:PT-1}
\end{equation}
where $Q_{AB}\in\msc{B}(\mc{H}_{A}\otimes\mc{H}_{B})$. Note that the partial transpose is self-adjoint, i.e., $\T_B=\T^\dag_B$ and is also involutory:
\begin{equation}
\T_B\circ\T_B=\bm{1}_B.
\end{equation} 
The following identity also holds:
\begin{equation}
\T_{R}(\op{\Upsilon}_{RA})=\T_{A}(\op{\Upsilon}_{RA}).
\label{eq:PT-last}
\end{equation} 

Let $\SEP(A\!:\!B)$ denote the set of all separable states $\sigma_{AB}\in\msc{D}(\mc{H}_A\otimes\mc{H}_B)$, which are states that can be written as
\begin{equation}
\sigma_{AB}=\sum_{x\in\msc{X}}p_X(x)\omega^x_A\otimes\tau^x_B,
\end{equation}
where $p_X(x)$ denotes a probability distribution corresponding to a random variable $X$ associated with an alphabet $\msc{X}$, $\omega^x_A \in \msc{D}(\mc{H}_A)$, and $\tau^x_B\in\msc{D}(\mc{H}_B)$ for all $x\in\msc{X}$. This set
is closed under the action of the partial transpose maps $\T_A$ and $\T_B$ \cite{HHH96,Per96}. Generalizing the set of separable states, we can define the set $\PPT (A\!:\!B)$ of all bipartite states $\rho_{AB}\in\msc{D}(\mc{H}_A\otimes\mc{H}_B)$ that remain positive after the action of the partial transpose $\T_B$. A state $\rho_{AB}\in\PPT(A\!:\!B)$ is also called a PPT (positive under partial transpose) state. If a state is not PPT then it is called NPT (non-positive under partial transpose). We can define an even more general set of positive semi-definite operators \cite{AdMVW02} as follows:
\begin{equation}
\PPT'(A\!:\!B)\coloneqq \{\sigma_{AB}:\ \sigma_{AB}\geq 0\land \norm{\T_B(\sigma_{AB})}_1\leq 1\}. 
\end{equation} 
We then have the containments $\SEP\subsetneq \PPT\subsetneq \PPT' $. 

\begin{lemma}[\kern-0.35em\cite{Rai99}]\label{thm:rai99}
For any $\sigma_{AB}\in\PPT'(A:B)$, the following inequality holds 
\begin{equation}
\Tr\{\Phi_{AB}\sigma_{AB}\}\leq \frac{1}{M},
\end{equation}
where $\Phi_{AB}$ is a maximally entangled state of Schmidt rank $M$, i.e., $|A|=|B|=M$. 
\end{lemma}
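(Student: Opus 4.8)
The plan is to reduce the inequality to a single application of H\"older's inequality (Lemma~\ref{thm:sch-norm}) after transporting both operators through the partial transpose $\T_B$. The only facts about $\T_B$ that I need are already assembled above: its defining formula \eqref{eq:PT-1}, from which $\Tr\{\T_B(X)Y\}=\Tr\{X\,\T_B(Y)\}$ for all $X,Y\in\msc{B}(\mc{H}_A\otimes\mc{H}_B)$ follows by a direct expansion, and the fact that $\T_B$ is involutory, $\T_B\circ\T_B=\bm{1}_B$. Together these give the ``transpose trick'' $\Tr\{XY\}=\Tr\{\T_B(X)\,\T_B(Y)\}$.

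First I would rewrite the quantity of interest. Using $\Phi_{AB}=\frac{1}{M}\op{\Upsilon}_{AB}$ and the transpose trick,
\begin{equation}
\Tr\{\Phi_{AB}\sigma_{AB}\}=\Tr\{\T_B(\Phi_{AB})\,\T_B(\sigma_{AB})\}=\frac{1}{M}\Tr\{\T_B(\op{\Upsilon}_{AB})\,\T_B(\sigma_{AB})\}.
\end{equation}
Next I would evaluate $\T_B(\op{\Upsilon}_{AB})$ in the fixed orthonormal bases defining $|\Upsilon\>_{AB}$: a one-line computation from \eqref{eq:PT-1} shows that $\T_B(\op{\Upsilon}_{AB})$ is exactly the swap (flip) operator $F_{AB}\coloneqq\sum_{i,j}|i\>\<j|_A\otimes|j\>\<i|_B$. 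Since $F_{AB}$ is unitary and self-adjoint, every singular value of $F_{AB}$ equals one, so $\norm{F_{AB}}_\infty=1$.

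Then I would apply H\"older's inequality with the conjugate exponents $(p,q)=(\infty,1)$, noting $F_{AB}^\dag=F_{AB}$:
\begin{equation}
\left|\Tr\{F_{AB}\,\T_B(\sigma_{AB})\}\right|\leq\norm{F_{AB}}_\infty\,\norm{\T_B(\sigma_{AB})}_1\leq 1\cdot 1=1,
\end{equation}
where the last inequality is precisely the defining constraint $\norm{\T_B(\sigma_{AB})}_1\leq 1$ of $\PPT'(A:B)$. Finally, because $\Phi_{AB}\geq 0$ and $\sigma_{AB}\geq 0$, the number $\Tr\{\Phi_{AB}\sigma_{AB}\}$ is real and nonnegative, hence equal to its own absolute value; combining the two displays then yields $\Tr\{\Phi_{AB}\sigma_{AB}\}\leq\frac{1}{M}$. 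I do not expect a genuine obstacle here: the only step deserving slight care is verifying $\T_B(\op{\Upsilon}_{AB})=F_{AB}$ together with $\norm{F_{AB}}_\infty=1$, and the remainder is routine manipulation of tools already in place.
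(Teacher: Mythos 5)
Your proof is correct, and it is essentially the standard argument due to Rains. The paper does not reprove this lemma—it simply cites \cite{Rai99}—so there is no in-text proof to compare against; but the route you take (transpose the maximally entangled state to obtain the swap operator, observe $\|F_{AB}\|_\infty=1$, then apply H\"older with the $\PPT'$ constraint $\|\T_B(\sigma_{AB})\|_1\le 1$) is exactly the argument used there and in subsequent expositions.
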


A bipartite quantum channel $\mc{P}_{A'B'\to AB}$ is a PPT-preserving channel if the map $\T_{B}\circ\mc{P}_{A'B'\to AB}\circ\T_{B'}$ is a quantum channel \cite{Rai99,Rai01}. A bipartite quantum channel $\mc{P}_{A'B'\to AB}$ is PPT-preserving if and only if its Choi state is a PPT state \cite{Rai01}, i.e., $\frac{J^{\mc{P}}_{R_AR_B:AB}}{ |R_A R_B|}\in \PPT(R_A A\!:\!BR_B)$, where
\begin{equation}
\frac{J^{\mc{P}}_{R_AR_B:AB}}{ |R_A R_B|} =  \mc{P}_{A'B'\to AB}(\Phi_{R_AA'}\otimes\Phi_{B'R_B}).
\end{equation}

A bipartite quantum channel $\mc{S}_{A'B'\to AB}$ is a separable channel if and only if its Choi state is a separable state \cite{CDKL01}, i.e, $\frac{J^{\mc{S}}_{R_AR_B:AB}}{ |R_A R_B|}\in \SEP(R_A A\!:\!BR_B)$, where
\begin{equation}
\frac{J^{\mc{S}}_{R_AR_B:AB}}{ |R_A R_B|} =  \mc{S}_{A'B'\to AB}(\Phi_{R_AA'}\otimes\Phi_{B'R_B}).
\end{equation}

A 1W-LOCC (one-way local operations and classical communication) channel is a separable super-operator
\begin{equation}
\mathcal{L}^{\to}_{A'B'\rightarrow AB}=\sum_{y\in\msc{Y}}\mathcal{E}
_{A'\rightarrow A}^{y}\otimes\mathcal{F}_{B'\rightarrow B}
^{y},\label{eq:LOCC-channel}
\end{equation}
where $\msc{Y}$ is an alphabet, $\{\mathcal{E}_{A\rightarrow
A^{\prime}}^{y}\}_{y\in\msc{Y}}$ is a set of CP maps such that the sum map $\sum
_{y\in\msc{Y}}\mathcal{E}_{A\rightarrow A^{\prime}}^{y}$ is trace preserving, while
$\{\mathcal{F}_{B\rightarrow B^{\prime}}^{y}\}_{y\in\msc{Y}}$ is a set of quantum channels.  Whereas, an LOCC (local communication and classical operations) channels $\mathcal{L}_{AB\rightarrow A^{\prime}B^{\prime}}$ takes the
form in \eqref{eq:LOCC-channel} such that $\{\mathcal{E}_{A\rightarrow A^{\prime}}^{y}\}_{y\in\msc{Y}}$ and $\{\mathcal{F}
_{B\rightarrow B^{\prime}}^{y}\}_{y\in\msc{Y}}$ are sets of completely positive (CP)
maps such that $\mathcal{L}_{AB\rightarrow A^{\prime}B^{\prime}}$ is trace
preserving. Thus, the LOCC channels are also separable super-operators, but the converse is not true. Note that any 1W-LOCC channel is also an LOCC channel and all LOCC channels are PPT-preserving. 

\section{Channels with symmetry}\label{sec:symmetry}
Consider a finite group $\msc{G}$ of size $|G|$. For every $g\in \msc{G}$, let $g\to U_A(g)$ and $g\to V_B(g)$ be projective unitary representations of $g$ acting on the input space $\mc{H}_A$ and the output space $\mc{H}_B$ of a quantum channel $\mc{M}_{A\to B}$, respectively. A quantum channel $\mc{M}_{A\to B}$ is covariant with respect to these representations if the following relation is satisfied \cite{Hol02,H13book}:
\begin{equation}\label{eq:cov-condition}
\forall \rho_A\in\msc{D}(\mc{H}_A) \ \text{and} \ \forall g\in \msc{G}, \ \mc{M}_{A\to B}\!\(U_A(g)\rho_A U_A^\dagger(g)\)=V_B(g)\mc{M}_{A\to B}(\rho_A)V_B^\dagger(g).
\end{equation}

\begin{definition}[Covariant channel \cite{H13book}]\label{def:covariant}
A quantum channel is covariant if it is covariant with respect to a group $\msc{G}$ which has a representation $U(g)$, for all $g\in \msc{G}$, on $\mc{H}_A$ that is a unitary one-design; i.e., the map  $\frac{1}{|G|}\sum_{g\in \msc{G}}U(g)(\cdot)U^\dagger(g)$ always outputs the maximally mixed state for all input states. 
\end{definition}

For an isometric channel $\mc{U}^\mc{M}_{A\to BE}$ extending the covariant channel $\mc{M}_{A\to B}$ defined above, there exists a unitary representation $W_E(g)$ acting on the environment Hilbert space $\mc{H}_E$ \cite{H13book}, such that
for all $g\in \msc{G}$,
\begin{equation}\label{eq:iso-covariant}
\mc{U}^\mc{M}_{A\to BE}\!\({U_A(g)\rho_AU^\dagger_A(g)}\)=\(V_B(g)\otimes W_E(g)\)\(\mc{U}^\mc{M}_{A\to BE}\(\rho_A\)\)\(V^\dagger_B(g)\otimes W^\dagger_E(g)\).
\end{equation}
We can restate this as the following lemma:
\begin{lemma}[\kern-0.35em\cite{H13book}]\label{thm:cov-hol}
Suppose that a channel $\mathcal{M}_{A\rightarrow B}$ is covariant with respect to a group $\msc{G}$. 
For an isometric extension $U_{A\rightarrow BE}^{\mathcal{M}}$ of
$\mathcal{M}_{A\rightarrow B}$, there is a set of unitaries $\{W_{E}^{g}\}_{g\in \msc{G}}$ such
that the following covariance holds for all $g \in \msc{G}$:
\begin{equation}
U_{A\rightarrow BE}^{\mathcal{M}}U_{A}^{g}=\left(  V_{B}^{g}\otimes W_{E}
^{g}\right)  U_{A\rightarrow BE}^{\mathcal{M}}.
\end{equation}
\end{lemma}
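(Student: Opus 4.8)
The plan is to exploit the essential uniqueness of isometric (Stinespring) dilations: any two isometries extending one and the same quantum channel, with environment spaces of the same dimension, are intertwined by a unitary that acts on the environment alone. Fixing an arbitrary $g\in\msc{G}$, I would manufacture two isometric extensions of a single channel and then read off the desired $W_{E}^{g}$ as the intertwiner.

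Concretely, I would first form the two maps $U_{1}\coloneqq \(V_{B}^{g}\otimes\bm{1}_{E}\)U_{A\rightarrow BE}^{\mathcal{M}}$ and $U_{2}\coloneqq U_{A\rightarrow BE}^{\mathcal{M}}U_{A}^{g}$, both sending $\mc{H}_{A}$ into $\mc{H}_{B}\otimes\mc{H}_{E}$; each is a composition of an isometry with unitaries, hence itself an isometry. Next I would verify that $U_{1}$ and $U_{2}$ dilate the same channel. Tracing out $E$ from $U_{2}(\cdot)U_{2}^{\dagger}$ gives $\mathcal{M}_{A\rightarrow B}\(U_{A}^{g}(\cdot)(U_{A}^{g})^{\dagger}\)$, which by the covariance hypothesis \eqref{eq:cov-condition} equals $V_{B}^{g}\,\mathcal{M}_{A\rightarrow B}(\cdot)\,(V_{B}^{g})^{\dagger}$; tracing out $E$ from $U_{1}(\cdot)U_{1}^{\dagger}$ yields the same expression, since $V_{B}^{g}$ acts on $B$ only and therefore commutes with $\Tr_{E}$.

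With $U_{1}$ and $U_{2}$ established as isometric extensions of the identical channel $\rho\mapsto V_{B}^{g}\,\mathcal{M}_{A\rightarrow B}(\rho)\,(V_{B}^{g})^{\dagger}$ onto the \emph{same} environment space $\mc{H}_{E}$, I would invoke the standard essential-uniqueness theorem for Stinespring dilations (see, e.g., \cite{Wbook17}) to obtain a partial isometry $W$ on $\mc{H}_{E}$ with $\(\bm{1}_{B}\otimes W\)U_{1}=U_{2}$, i.e.\ $U_{A\rightarrow BE}^{\mathcal{M}}U_{A}^{g}=\(V_{B}^{g}\otimes W\)U_{A\rightarrow BE}^{\mathcal{M}}$; setting $W_{E}^{g}\coloneqq W$ and letting $g$ range over $\msc{G}$ then produces the asserted family of unitaries. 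I expect the only delicate point to be promoting $W$ from an a priori partial isometry to a genuine unitary: this is possible precisely because the two dilations live on literally the same space $\mc{H}_{E}$, so the initial and final subspaces of $W$ have equal-dimensional orthogonal complements, and extending $W$ by an arbitrary unitary between those complements leaves the intertwining identity intact, since every vector in the range of $U_{1}$ only ever probes the initial subspace of $W$. If one additionally wanted the sharper conclusion \eqref{eq:iso-covariant}, that $g\mapsto W_{E}^{g}$ may be chosen to be an honest unitary representation, that would call for the usual extra cocycle/averaging argument; but for the lemma as stated only the existence of each $W_{E}^{g}$ is needed, and the remainder is a routine unwinding of definitions.
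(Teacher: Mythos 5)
Your proof is correct, and it is really the paper's argument lifted one level of abstraction: the paper unwinds the uniqueness theorem by hand in terms of Kraus operators, whereas you invoke it wholesale in its Stinespring form. Concretely, the paper picks a Kraus decomposition $\mathcal{M}(\rho)=\sum_j L^j\rho L^{j\dagger}$, observes that $\{L^j\}_j$ and $\{V_B^{g\dagger}L^j U_A^g\}_j$ are two equal-cardinality Kraus families for the same channel, deduces a unitary matrix $w^g_{jk}$ relating them, and then builds $W_E^g$ explicitly from $w^g_{jk}$ while working with the canonical isometric extension $\sum_j L^j\otimes|j\rangle_E$ attached to that Kraus family. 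You instead set $U_1=(V_B^g\otimes\bm{1}_E)U^{\mathcal{M}}_{A\to BE}$ and $U_2=U^{\mathcal{M}}_{A\to BE}U_A^g$, verify both dilate the channel $\rho\mapsto V_B^g\mathcal{M}(\rho)V_B^{g\dagger}$, and appeal to essential uniqueness of dilations. These are the same mathematical fact in two costumes, since equivalence of Kraus families and equivalence of Stinespring isometries are interchangeable; your framing has the modest advantage of working directly with whatever isometric extension is handed to you, rather than the one canonically built from a chosen Kraus family, so it matches the lemma's statement more literally. One small correction: the ``delicate point'' you flag, promoting the intertwiner from a partial isometry to a unitary, is not actually an issue here. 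The standard uniqueness theorem produces an isometry $W:\mathcal{H}_{E_1}\to\mathcal{H}_{E_2}$ whenever $\dim E_1\le\dim E_2$, and in your setting $\mathcal{H}_{E_1}=\mathcal{H}_{E_2}=\mathcal{H}_E$; an isometry from a finite-dimensional space to itself satisfies $W^\dagger W=\bm{1}_E$, so equal dimensions force $WW^\dagger=\bm{1}_E$ as well, making $W$ unitary outright. No extension across orthogonal complements is needed, and your argument is tighter than you give it credit for.
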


\begin{proof}
For convenience, we discuss a proof of this lemma presented in \cite[Appendix A]{DBW17}.

Given is a group $\msc{G}$  and a quantum channel $\mathcal{M}_{A\rightarrow B}$ that is covariant in the
following sense:
\begin{equation}
\mathcal{M}_{A\rightarrow B}(U_{A}^{g}\rho_{A}U_{A}^{g\dag})=V_{B}%
^{g}\mathcal{M}_{A\rightarrow B}(\rho_{A})V_{B}^{g\dag},\label{eq:cov-sym}
\end{equation}
for a set of unitaries $\{U_{A}^{g}\}_{g\in \msc{G}}$ and $\{ V_{B}^{g} \}_{g \in \msc{G}}$.

Let a Kraus representation of $\mathcal{M}_{A\rightarrow B}$ be given as%
\begin{equation}
\mathcal{M}_{A\rightarrow B}(\rho_{A})=\sum_{j}L^{j}\rho_{A}L^{j\dag}.
\end{equation}
We can rewrite \eqref{eq:cov-sym} as%
\begin{equation}
V_{B}^{g\dag}\mathcal{M}_{A\rightarrow B}(U_{A}^{g}\rho_{A}U_{A}^{g\dag}%
)V_{B}^{g}=\mathcal{M}_{A\rightarrow B}(\rho_{A}),
\end{equation}
which means that for all $g$, the following equality holds%
\begin{equation}
\sum_{j}L^{j}\rho_{A}L^{j\dag}=\sum_{j}V_{B}^{g\dag}L^{j}U_{A}^{g}\rho
_{A}\left(  V_{B}^{g\dag}L^{j}U_{A}^{g}\right)  ^{\dag}.
\end{equation}
Thus, the channel has two different Kraus representations $\{L^{j}\}_{j}$ and
$\{V_{B}^{g\dag}L^{j}U_{A}^{g}\}_{j}$, and these are necessarily related by a
unitary with matrix elements $w_{jk}^{g}$ \cite{Wbook17,Wat15}:
\begin{equation}
V_{B}^{g\dag}L^{j}U_{A}^{g}=\sum_{k}w_{jk}^{g}L^{k}.
\end{equation}
A canonical isometric extension $U_{A\rightarrow BE}^{\mathcal{M}}$ of
$\mathcal{M}_{A\rightarrow B}$ is given as%
\begin{equation}
U_{A\rightarrow BE}^{\mathcal{M}}=\sum_{j}L^{j}\otimes|j\rangle_{E},
\end{equation}
where $\{|j\rangle_{E}\}_j$ is an orthonormal basis.
Defining $W_{E}^{g}$ as the following unitary%
\begin{equation}
W_{E}^{g}|k\rangle_{E}=\sum_{j}w_{jk}^{g}|j\rangle_{E},
\end{equation}
where the states $|k\rangle_{E}$ are chosen from $\{|j\rangle_{E}\}_j$,
consider that%
\begin{align}
U_{A\rightarrow BE}^{\mathcal{M}}U_{A}^{g}  & =\sum_{j}L^{j}U_{A}^{g}%
\otimes|j\rangle_{E} =\sum_{j}V_{B}^{g}V_{B}^{g\dag}L^{j}U_{A}^{g}\otimes|j\rangle_{E} =\sum_{j}V_{B}^{g}\left[  \sum_{k}w_{jk}^{g}L^{k}\right]  \otimes
|j\rangle_{E}\nonumber\\
& =V_{B}^{g}\sum_{k}L^{k}\otimes\sum_{j}w_{jk}^{g}|j\rangle_{E}
 =V_{B}^{g}\sum_{k}L^{k}\otimes W_{E}^{g}|k\rangle_{E}
=\left(  V_{B}^{g}\otimes W_{E}^{g}\right)  U_{A\rightarrow BE}%
^{\mathcal{M}}.
\end{align}
This concludes the proof.
\end{proof}

\begin{definition}[Teleportation-simulable \cite{BDSW96,HHH99}]\label{def:tel-sim}
A channel $\mc{M}_{A\to B}$ is teleportation-simulable with an associated resource state if for all $\rho_{A}\in\msc{D}\(\mc{H}_{A}\)$ there exists a resource state $\omega_{R_AB}\in\msc{D}\(\mc{H}_{R_AB}\)$ such that 
\begin{equation}
\mc{M}_{A\to B}\(\rho_A\)=\mc{L}_{R_AA B\to B}\(\rho_{A}\otimes\omega_{R_AB}\),
\label{eq:TP-simul}
\end{equation}
where $\mc{L}_{R_AAB\to B}$ is an LOCC channel acting on $R_AA\!:\!B$.
A particular example of an LOCC channel could be  a generalized teleportation protocol \cite{Wer01}.
\end{definition}

One can find the defining equation \eqref{eq:TP-simul} explicitly stated as \cite[Eq.~(11)]{HHH99}.
 All covariant channels, as given in  Definition~\ref{def:covariant}, are teleportation-simulable with respect to a resource state $\mathcal{M}_{A\to B}(\Phi_{R_AA})$ \cite{CDP09}.

\begin{definition}[PPT-simulable \cite{KW17}]
A channel $\mc{M}_{A\to B}$ is PPT-simulable with an associated resource state if for all $\rho_{A}\in\msc{D}\(\mc{H}_{A}\)$ there exists a resource state $\omega_{R_AB}\in\msc{D}\(\mc{H}_{R_AB}\)$ such that 
\begin{equation}
\mc{M}_{A\to B}\(\rho_A\)=\mc{P}_{R_AA B\to B}\(\rho_{A}\otimes\omega_{R_AB}\),
\end{equation}
where $\mc{P}_{L_AAB\to B}$ is a PPT-preserving channel acting on $R_AA\!:\!B$, where the transposition map is with respect to the system $B$. 
\end{definition}

\begin{definition}[Jointly covariant memory cell \cite{DW17}]\label{def:cov-cell}
A set $\overline{\mc{M}}_{\msc{X}}=\{\mc{M}^x_{A\to B}\}_{x\in\msc{X}}$ of quantum channels is  jointly covariant if there exists a group $\msc{G}$ such that for all $x\in\msc{X}$, the channel $\mc{M}^x$ is a covariant channel with respect to the group $\msc{G}$ (cf., Definition~\ref{def:covariant}).
\end{definition}

\begin{remark}[\kern-0.35em\cite{DW17}]
Any jointly covariant memory cell $\overline{\mc{M}}_{\msc{X}}=\{\mc{M}^x_{A\to B}\}_{x\in\msc{X}}$ is jointly teleportation-simulable with respect to the set $\{\mc{M}^x_{A\to B}(\Phi_{R_AA})\}_{x\in\msc{X}}$ of resource states.
\end{remark}
	
\section{Entropies and information}
The von Neumann entropy of a density operator $\rho_A$ is defined as \cite{Neu32}
\begin{equation}\label{eq:vnent}
S(A)_\rho:= S(\rho_A)= -\Tr\{\rho_A\log_2\rho_A\}.
\end{equation}
The conditional quantum entropy $S(A\vert B)_\rho$ of a density operator $\rho_{AB}$ of a composite system $AB$ is defined as
\begin{equation}
S(A\vert B)_\rho \coloneqq S(AB)_\rho-S(B)_\rho.
\end{equation}
The coherent information $I(A\> B)_{\rho}$ of a density operator $\rho_{AB}$ is defined as \cite{SN96}
\begin{equation}\label{eq:coh-info}
I(A\rangle B)_{\rho} \coloneqq - S(A\vert B)_\rho = S(B)_{\rho}-S(AB)_{\rho}.
\end{equation}
The quantum relative entropy of two quantum states is a measure of their distinguishability. For $\rho\in\msc{D}(\mc{H})$ and $\sigma\in\msc{B}_+(\mc{H})$, it is defined as~\cite{Ume62} 
\begin{equation}\label{eq:rel-ent-rev}
D(\rho\V \sigma):= \left\{ 
\begin{tabular}{c c}
$\Tr\{\rho[\log_2\rho-\log_2\sigma]\}$, & $\supp(\rho)\subseteq\supp(\sigma)$\\
$+\infty$, &  otherwise.
\end{tabular} 
\right.
\end{equation}
The quantum relative entropy is non-increasing under the action of positive trace-preserving maps \cite{MR15}, which is the statement that $D(\rho\Vert\sigma)\geq D(\mc{M}(\rho)\Vert\mc{M}{(\sigma)})$ for any two density operators $\rho$ and $\sigma$ and a positive trace-preserving map $\mc{M}$ (this inequality applies to quantum channels as well \cite{Lin75}, since every completely positive map is also a positive map by definition).

The quantum mutual information $I(A;B)_\rho$ is a measure of correlation between quantum systems $A$ and $B$ in the state $\rho_{AB}$. It is defined as
\begin{align}
I(A;B)_\rho &:=\inf_{\sigma_A \in\msc{D}(\mathcal{H}_A)}D(\rho_{AB}\Vert\rho_A\otimes\sigma_B)=S(A)_\rho+S(B)_\rho-S(AB)_\rho.
\end{align}
The conditional quantum mutual information $I(A;B\vert C)_\rho$ of a tripartite density operator $\rho_{ABC}$ is defined as
\begin{align}
I(A;B\vert C)_\rho &:=S(A\vert C)_\rho+S(B\vert C)_\rho-S(AB\vert C)_\rho.
\end{align}
It is known that quantum entropy, quantum mutual information, and conditional quantum  mutual information are all non-negative quantities (see \cite{LR73,LR73b}). 

The following AFW inequality gives uniform continuity bounds for conditional entropy:
\begin{lemma}[\kern-0.35em\cite{AF04,Win16}]\label{thm:AFW}
Let $\rho_{AB},\sigma_{AB}\in\msc{D}(\mc{H}_{AB})$. Suppose that $\frac{1}{2}\left\Vert \rho_{AB}-\sigma_{AB}\right\Vert_1\leq\varepsilon$, where $\varepsilon\in\[0,1\]$. Then
\begin{equation}
\left\vert S(A|B)_\rho-S(A|B)_\sigma\right\vert \leq 2\varepsilon\log_2\dim(\mc{H}_A)+(1+\varepsilon)h_2\(\frac{\varepsilon}{1+\varepsilon}\),
\end{equation}
where $h_2(\varepsilon)$ denotes binary entropy function:
\begin{equation}\label{eq:g-2}
h_2(\varepsilon)\coloneqq -\varepsilon\log_2\varepsilon-(1-\varepsilon)\log_2(1-\varepsilon).
\end{equation} 

If system $B$ is a classical register $X$ such that $\rho_{XA}$ and $\sigma_{XA}$ are classical-quantum (cq) states of the following form:
\begin{equation}
\rho_{XA}=\sum_{x\in\mc{X}}p_X(x)|x\>\<x|_X\otimes\rho^x_A,\quad \sigma_{XA}=\sum_{x\in\mc{X}}q_X(x)|x\>\<x|_X\otimes\sigma^x_A,
\end{equation}
where $\{|x\>_X\}_{x\in\mc{X}}\in\ONB(\mc{H}_{X})$ and $\forall x\in\mc{X}:\ \rho^x_A,\sigma^x_A\in\mc{D}(\mc{H}_A)$, then
\begin{align}
\left\vert S(X|A)_\rho-S(X|A)_\sigma\right\vert &\leq \varepsilon\log_2\dim(\mc{H}_X)+g(\varepsilon),\\
\left\vert S(A|X)_\rho-S(A|X)_\sigma\right\vert &\leq \varepsilon\log_2\dim(\mc{H}_A)+g(\varepsilon).
\end{align}
\end{lemma}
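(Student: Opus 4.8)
The plan is to follow the Alicki--Fannes--Winter argument. First I would reduce to the case where $\varepsilon$ equals the trace distance exactly: setting $\varepsilon':=\tfrac12\Vert\rho_{AB}-\sigma_{AB}\Vert_1$ (the case $\varepsilon'=0$ being trivial), I will prove the bound with $\varepsilon'$ in place of $\varepsilon$ and then invoke monotonicity of the right-hand side in its argument on $[0,1]$, which holds since $(1+t)h_2(t/(1+t))=(1+t)\log_2(1+t)-t\log_2 t$ has derivative $\log_2\frac{1+t}{t}>0$. The heart of the construction is an auxiliary state: take the Jordan--Hahn decomposition $\rho_{AB}-\sigma_{AB}=P-Q$ with $P,Q\ge 0$ of orthogonal support, so that $\Tr P=\Tr Q=\varepsilon'$ because both states are normalized; define the states $\Delta^{+}_{AB}:=P/\varepsilon'$ and $\Delta^{-}_{AB}:=Q/\varepsilon'$, observe the identity $\rho_{AB}+\varepsilon'\Delta^{-}_{AB}=\sigma_{AB}+\varepsilon'\Delta^{+}_{AB}$, and let $\omega_{AB}$ be the normalization of this common operator (a state). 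With a classical flag $X$ and $p:=\varepsilon'/(1+\varepsilon')$, I introduce the two classical--quantum states
\[
\omega_{ABX}:=\tfrac{1}{1+\varepsilon'}\big(\rho_{AB}\otimes\op{0}_X+\varepsilon'\,\Delta^{-}_{AB}\otimes\op{1}_X\big),\qquad \omega'_{ABX}:=\tfrac{1}{1+\varepsilon'}\big(\sigma_{AB}\otimes\op{0}_X+\varepsilon'\,\Delta^{+}_{AB}\otimes\op{1}_X\big),
\]
which by the identity have the same $AB$-marginal $\omega_{AB}$ and the same flag distribution $(1-p,p)$.

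Next I would extract the inequality from three standard facts. Since $X$ is classical, $S(A|BX)_\omega=(1-p)S(A|B)_\rho+p\,S(A|B)_{\Delta^{-}}$ and similarly for $\omega'$; strong subadditivity (equivalently, non-negativity of conditional mutual information, available from the excerpt) gives $S(A|BX)_\omega\le S(A|B)_\omega$; and $S(A|B)_{\omega'}-S(A|BX)_{\omega'}=I(A;X|B)_{\omega'}\le S(X)_{\omega'}=h_2(p)$, the last inequality because $X$ is classical (so $I(A;X|B)=S(X|B)-S(X|AB)\le S(X)$). Combining these with $S(A|B)_\omega=S(A|B)_{\omega'}$ yields $S(A|BX)_\omega\le S(A|BX)_{\omega'}+h_2(p)$; plugging in the convex decompositions, multiplying through by $1+\varepsilon'$, and using $\vert S(A|B)_{\Delta^{+}}-S(A|B)_{\Delta^{-}}\vert\le 2\log_2\dim(\mc{H}_A)$ gives $S(A|B)_\rho-S(A|B)_\sigma\le 2\varepsilon'\log_2\dim(\mc{H}_A)+(1+\varepsilon')h_2(\tfrac{\varepsilon'}{1+\varepsilon'})$. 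Swapping $\rho$ and $\sigma$ (which merely swaps $\Delta^{+}\leftrightarrow\Delta^{-}$ and leaves $\omega_{AB}$ unchanged) gives the same bound for the reverse difference, hence the absolute value, and then the monotonicity step upgrades $\varepsilon'$ to $\varepsilon$.

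For the classical--quantum refinements I would rerun the identical construction with $AB$ replaced by $XA$. The key observation is that if $\rho_{XA},\sigma_{XA}$ are block-diagonal in the computational basis of $X$, then so are $P,Q,\Delta^{\pm},\omega$; hence the only conditional entropies of the classical register given the quantum system that enter, namely $S(X|A)_{\Delta^{\pm}}$, lie in $[0,\log_2\dim(\mc{H}_X)]$ rather than $[-\log_2\dim(\mc{H}_X),\log_2\dim(\mc{H}_X)]$, so their difference is bounded by $\log_2\dim(\mc{H}_X)$ and the leading coefficient improves from $2\varepsilon$ to $\varepsilon$, giving $\vert S(X|A)_\rho-S(X|A)_\sigma\vert\le\varepsilon\log_2\dim(\mc{H}_X)+(1+\varepsilon)h_2(\tfrac{\varepsilon}{1+\varepsilon})$. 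The bound for $\vert S(A|X)_\rho-S(A|X)_\sigma\vert$ is obtained the same way, now using $S(A|X)_{\Delta^{\pm}}\in[0,\log_2\dim(\mc{H}_A)]$; here $(1+\varepsilon)h_2(\varepsilon/(1+\varepsilon))$ is the quantity denoted $g(\varepsilon)$ in the statement.

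I expect the only genuinely non-routine step to be the auxiliary construction itself --- recognizing that the positive and negative parts of $\rho-\sigma$ can be glued, through the identity $\rho+\varepsilon'\Delta^{-}=\sigma+\varepsilon'\Delta^{+}$, into two classical--quantum extensions of a single state $\omega_{AB}$ whose flag carries only $h_2(\varepsilon'/(1+\varepsilon'))$ bits. Once that is in place the remaining work is bookkeeping with strong subadditivity and dimension bounds on conditional entropy, and the classical--quantum sharpenings reduce to the remark that one-sided (non-negative) conditional entropies halve the dimension factor; a small point that still needs care is the monotonicity of the explicit function of $\varepsilon$ on $[0,1]$ used in the reduction to the equality case.
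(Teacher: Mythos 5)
The paper cites this inequality to \cite{AF04,Win16} without giving a proof, so there is no "paper proof" to compare against; your reconstruction is a correct rendition of Winter's coupling argument, including the Jordan--Hahn gluing identity $\rho+\varepsilon'\Delta^{-}=\sigma+\varepsilon'\Delta^{+}$, the use of strong subadditivity and $I(A;X|B)\le S(X)=h_2(p)$ on the flagged extensions, the final dimension bound on the $\Delta^{\pm}$ conditional entropies, and the one-sidedness observation that halves the dimension coefficient in the classical-register cases. The monotonicity check that upgrades $\varepsilon'$ to $\varepsilon$ is also correct, so the argument is sound as written.
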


\section{Generalized divergences}
A quantity is called a generalized divergence \cite{PV10,SW12} if it satisfies the following monotonicity (data-processing) inequality for all density operators $\rho\in\msc{D}(\mc{H}')$ and $\sigma\in\msc{D}(\mc{H}')$ and quantum channels $\mc{M}:\msc{B}_{+}(\mc{H}')\to \msc{B}_{+}(\mc{H})$:
\begin{equation}\label{eq:gen-div-mono}
\mathbf{D}(\rho\Vert \sigma)\geq \mathbf{D}(\mathcal{M}(\rho)\Vert \mc{M}(\sigma)).
\end{equation}
As a direct consequence of the above inequality, any generalized divergence satisfies the following two properties for an isometry $U$ and a state~$\tau$ \cite{WWY14}:
\begin{align}
\mathbf{D}(\rho\Vert \sigma) & = \mathbf{D}(U\rho U^\dag\Vert U \sigma U^\dag),\label{eq:gen-div-unitary}\\
\mathbf{D}(\rho\Vert \sigma) & = \mathbf{D}(\rho \otimes \tau \Vert \sigma \otimes \tau).\label{eq:gen-div-prod}
\end{align}
One can define a generalized mutual information for a quantum state $\rho_{AB}$ as
\begin{equation}
I_{\mathbf{D}}(A;B)_\rho :=\inf_{\sigma_B\in\msc{D}(\mc{H}_B)}\mathbf{D}(\rho_{AB}\Vert \rho_A\otimes\sigma_B).
\end{equation}

The sandwiched R\'enyi relative entropy  \cite{MDSFT13, WWY14} is denoted as $\wt{D}_\alpha(\rho\Vert\sigma)$ and defined for
$\rho\in\msc{D}(\mc{H})$, $\sigma\in\msc{B}_+(\mc{H})$, and  $\forall \alpha\in (0,1)\cup(1,\infty)$ as
\begin{equation}\label{eq:def_sre}
\wt{D}_\alpha(\rho\Vert \sigma):= \frac{1}{\alpha-1}\log_2 \Tr\left\{\left(\sigma^{\frac{1-\alpha}{2\alpha}}\rho\sigma^{\frac{1-\alpha}{2\alpha}}\right)^\alpha \right\},
\end{equation}
but it is set to $+\infty$ for $\alpha\in(1,\infty)$ if $\supp(\rho)\nsubseteq \supp(\sigma)$. The sandwiched R\'enyi relative entropy obeys the following ``monotonicity in $\alpha$'' inequality \cite{MDSFT13}:
\begin{equation}\label{eq:mono_sre}
\wt{D}_\alpha(\rho\Vert \sigma)\leq \wt{D}_\beta(\rho\Vert \sigma) \text{ if }  \alpha\leq \beta, \text{ for } \alpha,\beta\in(0,1)\cup(1,\infty).
\end{equation}
The following lemma states that the sandwiched R\'enyi relative entropy $\wt{D}_\alpha(\rho\Vert \sigma)$ is a particular generalized divergence for certain values of $\alpha$. 
\begin{lemma}[\kern-0.35em\cite{FL13,Bei13}]
Let $\mc{M}_{A'\to A}$ be a quantum channel and let $\rho_{A'}\in\msc{D}(\mc{H}_{A'})$ and $\sigma_{A'}\in \msc{B}_+(\mc{H}_{A'})$. Then,
\begin{equation}
\wt{D}_\alpha(\rho\Vert \sigma)\geq \wt{D}_\alpha(\mc{M}(\rho)\Vert \mc{M}(\sigma)), \ \forall \alpha\in \[1/2,1\)\cup (1,\infty).
\end{equation}
\end{lemma}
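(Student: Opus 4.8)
I would establish data processing by the standard reduction to a convexity statement for the underlying trace functional, following Frank--Lieb, with Beigi's complex-interpolation argument as the alternative route. Write $Q_\alpha(\rho\V\sigma):=\Tr\{(\sigma^{\frac{1-\alpha}{2\alpha}}\rho\,\sigma^{\frac{1-\alpha}{2\alpha}})^\alpha\}$, so that $\wt D_\alpha(\rho\V\sigma)=\frac{1}{\alpha-1}\log_2 Q_\alpha(\rho\V\sigma)$, with powers of $\sigma$ taken on $\supp(\sigma)$ as in \eqref{eq:def_sre}. Two invariances follow directly from \eqref{eq:def_sre}, and so do not require \eqref{eq:gen-div-unitary}--\eqref{eq:gen-div-prod} (which presume the quantity is already a generalized divergence): (i) for an isometry $V$ one has $f(VXV^\dagger)=Vf(X)V^\dagger$ on the relevant supports, and $V^\dagger V=\bm 1$ collapses the middle factors, giving $\wt D_\alpha(V\rho V^\dagger\V V\sigma V^\dagger)=\wt D_\alpha(\rho\V\sigma)$; and (ii) $\tau^{\frac{1-\alpha}{2\alpha}}\tau\,\tau^{\frac{1-\alpha}{2\alpha}}=\tau^{1/\alpha}$ for a state $\tau$, which yields $\wt D_\alpha(\rho\otimes\tau\V\sigma\otimes\tau)=\wt D_\alpha(\rho\V\sigma)$. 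Writing $\mc M_{A'\to A}$ in Stinespring form $\mc M_{A'\to A}(\cdot)=\Tr_E\{V(\cdot)V^\dagger\}$ with $V:\mc H_{A'}\to\mc H_A\otimes\mc H_E$ an isometry, invariance (i) reduces the lemma to monotonicity under the partial trace, $\wt D_\alpha(\rho_{AE}\V\sigma_{AE})\geq\wt D_\alpha(\rho_A\V\sigma_A)$ for all $\rho_{AE}\in\msc D(\mc H_{AE})$ and $\sigma_{AE}\in\msc B_+(\mc H_{AE})$.

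\textbf{Step 2 (partial-trace monotonicity from convexity/concavity of $Q_\alpha$).} There is a family of $|E|^2$ unitaries $\{W^j_E\}_j$ (e.g.\ the discrete Weyl operators) with $\frac{1}{|E|^2}\sum_j(\bm 1_A\otimes W^j_E)\,X_{AE}\,(\bm 1_A\otimes W^j_E)^\dagger=(\Tr_E X_{AE})\otimes\pi_E$ for every $X_{AE}$. Applying this to $\rho_{AE}$ and $\sigma_{AE}$, using the analogues of (i) and (ii) for $Q_\alpha$: if $Q_\alpha$ is jointly convex (the case $\alpha>1$), then $Q_\alpha(\rho_A\V\sigma_A)=Q_\alpha(\rho_A\otimes\pi_E\V\sigma_A\otimes\pi_E)\leq\frac{1}{|E|^2}\sum_j Q_\alpha(W^j\rho_{AE}W^{j\dagger}\V W^j\sigma_{AE}W^{j\dagger})=Q_\alpha(\rho_{AE}\V\sigma_{AE})$, and since $\frac{1}{\alpha-1}>0$ with $\log_2$ increasing, this gives $\wt D_\alpha(\rho_{AE}\V\sigma_{AE})\geq\wt D_\alpha(\rho_A\V\sigma_A)$. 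For $1/2\leq\alpha<1$ the identical computation with $Q_\alpha$ \emph{jointly concave} reverses the inequality on $Q_\alpha$; but now $\frac{1}{\alpha-1}<0$, so the conclusion for $\wt D_\alpha$ is unchanged.

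\textbf{Step 3 (the crux: convexity/concavity of $Q_\alpha$).} This is the analytic heart of the argument. Since $Q_\alpha$ is jointly homogeneous of degree one on pairs of positive operators, its convexity is equivalent to subadditivity and its concavity to superadditivity, so the problem is amenable to the machinery of operator convex/concave functions. For $1\leq\alpha\leq 2$, joint convexity of $Q_\alpha$ follows from Lieb's concavity theorem (equivalently Ando's convexity theorem); for $1/2\leq\alpha<1$, joint concavity follows from Epstein's theorem on the concavity of $A\mapsto\Tr\{(B^\dagger A^p B)^{1/p}\}$ for $0<p\leq 1$. The genuinely delicate range is $\alpha>2$, which the direct Lieb-type estimate does not reach: here one either invokes Frank--Lieb's additional reduction pulling the large-$\alpha$ case back to $\alpha\in(1,2]$, or one abandons convexity in favour of Beigi's route --- analytically continuing in the parameter $\alpha$, realizing the quantity of interest as the norm of a holomorphic family of operators between Schatten spaces, and applying the Hadamard three-line theorem (Stein interpolation) to transfer contractivity from the endpoint cases to all $\alpha>1$. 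I expect Step 3 --- and specifically obtaining the estimate uniformly for \emph{every} $\alpha>1$ rather than only $\alpha\in(1,2]$ --- to be the main obstacle; Steps 1 and 2 are essentially bookkeeping.
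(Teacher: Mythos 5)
The paper does not prove this lemma; it is stated with citations to Frank--Lieb \cite{FL13} and Beigi \cite{Bei13}, so there is no paper proof to compare against. Your sketch is an accurate reconstruction of the Frank--Lieb route, and the structure is right: Stinespring plus isometric invariance to reduce to the partial trace, the Weyl-twirl plus tensor-stability identity to reduce partial-trace monotonicity to joint convexity (for $\alpha>1$) or joint concavity (for $\alpha\in[1/2,1)$) of $Q_\alpha$, and the observation that the sign of $\frac{1}{\alpha-1}$ makes both cases yield the same direction for $\wt D_\alpha$. You are also correct that isometric invariance and tensor stability should be derived directly from \eqref{eq:def_sre} rather than from \eqref{eq:gen-div-unitary}--\eqref{eq:gen-div-prod}, since the latter would be circular here; and you correctly locate the analytic difficulty in Step 3, with $\alpha>2$ being the range that elementary Lieb-theorem arguments (as in Müller-Lennert et al.\ and Wilde--Winter--Yang for $\alpha\in(1,2]$) do not reach, so that one needs either Frank--Lieb's variational argument (which actually covers all of $\alpha\geq 1/2$ uniformly rather than reducing $\alpha>2$ back to $(1,2]$) or Beigi's Riesz--Thorin/Stein interpolation on Schatten spaces.

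One minor imprecision worth flagging: attributing the $1/2\leq\alpha<1$ case directly to Epstein's theorem is not quite right, since Epstein gives concavity of $A\mapsto\Tr\{(B^\dagger A^p B)^{1/p}\}$ in $A$ alone, whereas Step 2 needs \emph{joint} concavity of $Q_\alpha$ in $(\rho,\sigma)$. Frank--Lieb obtain the joint statement via a variational (SDP-duality-type) representation of the trace power that reduces to Lieb's concavity and a Golden--Thompson-type inequality, and their argument is genuinely joint, not a one-variable corollary of Epstein. The rest of your sketch is sound.
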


In the limit $\alpha\to 1$, the sandwiched R\'enyi relative entropy $\wt{D}_\alpha(\rho\Vert \sigma)$ converges to the quantum relative entropy \cite{MDSFT13,WWY14}:
\begin{equation}\label{eq:mono_renyi}
\lim_{\alpha\to 1}\wt{D}_\alpha(\rho\Vert\sigma):= D_1(\rho\Vert\sigma)=D(\rho\Vert\sigma).
\end{equation}
In the limit $\alpha\to \infty$, the sandwiched R\'enyi relative entropy $\wt{D}_\alpha(\rho\Vert\sigma)$ converges to the max-relative entropy \cite{MDSFT13}, which is defined as \cite{D09,Dat09}
\begin{equation}\label{eq:max-rel}
D_{\max}(\rho\V\sigma)=\inf\{\lambda:\ \rho \leq 2^\lambda\sigma\},
\end{equation}
and if $\supp(\rho)\nsubseteq\supp(\sigma)$ then $D_{\max}(\rho\Vert\sigma)=\infty$. 

The sandwiched  R\'enyi mutual information $\wt{I}_\alpha(R;B)_\rho$ is defined as \cite{Bei13,GW13}
\begin{equation}
\wt{I}_\alpha(R;B)_\rho:=\min_{\sigma_B}\wt{D}_\alpha(\rho_{RB}\V\rho_R\otimes\sigma_B).
\end{equation}

Another generalized divergence is the $\varepsilon$-hypothesis-testing divergence \cite{BD10,WR12},  defined as
\begin{equation}
D^\varepsilon_h\!\(\rho\Vert\sigma\):=-\log_2\inf_{\Lambda}\{\Tr\{\Lambda\sigma\}:\ 0\leq\Lambda\leq \bm{1} \wedge\Tr\{\Lambda\rho\}\geq 1-\varepsilon\},
\end{equation}
for $\varepsilon\in[0,1]$, $\rho\in\msc{D}(\mc{H})$, and $\sigma\in\msc{B}_+(\mc{H})$.

The following lemma follows directly from the statement of \cite[Theorem III.1]{CM17}.
\begin{lemma}[\kern-0.35em\cite{CM17}]\label{thm:data-tri-ineq}
Let $\rho_A\in\msc{D}(\mc{H}_A)$, and positive semidefinite operators $\sigma\in\msc{B}_+(\mc{H}_B),\sigma'\in\msc{B}_+(\mc{H}_A)$, the following inequality holds for any positive trace-preserving map $\mc{M}_{A\to B}$
\begin{equation}
D_{\max}(\mc{M}(\rho)\Vert\sigma)\leq D_{\max}(\rho\Vert \sigma')+D_{\max}(\mc{M}(\sigma')\Vert \sigma).
\end{equation}
\end{lemma}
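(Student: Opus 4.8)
The plan is to argue directly from the operational definition \eqref{eq:max-rel} of the max-relative entropy, exploiting only that a positive linear map preserves the positive semi-definite order on operators; complete positivity and (essentially) trace preservation will not really be needed. First I would dispose of the trivial case: if $D_{\max}(\rho\Vert\sigma')=+\infty$ or $D_{\max}(\mc{M}(\sigma')\Vert\sigma)=+\infty$, the right-hand side is $+\infty$ and there is nothing to prove. So assume both are finite and set $\lambda_1\coloneqq D_{\max}(\rho\Vert\sigma')$ and $\lambda_2\coloneqq D_{\max}(\mc{M}(\sigma')\Vert\sigma)$; these are well defined because $\sigma'\in\msc{B}_+(\mc{H}_A)$ and positivity of $\mc{M}$ give $\mc{M}(\sigma')\in\msc{B}_+(\mc{H}_B)$. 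For a positive operator $\sigma$, the set $\{\lambda:\ \rho\leq 2^\lambda\sigma\}$ in \eqref{eq:max-rel} is closed and upward closed whenever nonempty (the map $\lambda\mapsto 2^\lambda\sigma-\rho$ is continuous and $\msc{B}_+$ is closed), so the infimum is attained; hence $\rho\leq 2^{\lambda_1}\sigma'$ and $\mc{M}(\sigma')\leq 2^{\lambda_2}\sigma$ as genuine operator inequalities.

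The core of the argument is then two short steps. Step one: apply $\mc{M}$ to $\rho\leq 2^{\lambda_1}\sigma'$. Writing it as $2^{\lambda_1}\sigma'-\rho\geq 0$ and using linearity and positivity of $\mc{M}$ yields $2^{\lambda_1}\mc{M}(\sigma')-\mc{M}(\rho)\geq 0$, i.e.\ $\mc{M}(\rho)\leq 2^{\lambda_1}\mc{M}(\sigma')$. Step two: scale $\mc{M}(\sigma')\leq 2^{\lambda_2}\sigma$ by the nonnegative number $2^{\lambda_1}$ and chain, obtaining $\mc{M}(\rho)\leq 2^{\lambda_1}\mc{M}(\sigma')\leq 2^{\lambda_1}2^{\lambda_2}\sigma=2^{\lambda_1+\lambda_2}\sigma$. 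By \eqref{eq:max-rel} this exhibits $\lambda_1+\lambda_2$ as a feasible point for $D_{\max}(\mc{M}(\rho)\Vert\sigma)$, whence $D_{\max}(\mc{M}(\rho)\Vert\sigma)\leq\lambda_1+\lambda_2=D_{\max}(\rho\Vert\sigma')+D_{\max}(\mc{M}(\sigma')\Vert\sigma)$.

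There is no genuinely hard step; the only delicate points are bookkeeping ones. First, \emph{positivity} of $\mc{M}$, rather than complete positivity, is exactly what is used: a positive linear map sends the PSD cone to itself and hence respects $A\leq B$. Second, one should either justify the attainment of the two infima, as above, or instead run the whole argument with $\lambda_i+\varepsilon$ in place of $\lambda_i$ and let $\varepsilon\downarrow 0$, which sidesteps the point entirely. Third, one should check the support/finiteness bookkeeping so that no ill-defined $\infty-\infty$ or $-\infty$ expression appears; here trace preservation is convenient, since $\Tr\{\mc{M}(\sigma')\}=\Tr\{\sigma'\}$ excludes the degenerate situation $\mc{M}(\sigma')=0\neq\sigma'$. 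Everything else is immediate from \eqref{eq:max-rel}, which is why the statement follows directly from Theorem III.1 of \cite{CM17}.
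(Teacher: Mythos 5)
Your proof is correct, and it is essentially the standard argument behind Theorem III.1 of Christandl and M\"uller-Hermes: realize both $D_{\max}$ quantities on the right as achieved operator inequalities, push the first one through $\mc{M}$ using positivity, and chain. The paper itself does not supply a proof here — it simply imports the statement from \cite{CM17} — so there is no in-paper argument to compare against, but your derivation matches the source. Your side remarks are accurate and worth keeping in mind: only positivity (not complete positivity) of $\mc{M}$ is used, the attainment-of-infimum point can be replaced by an $\varepsilon\downarrow 0$ argument if one prefers not to invoke closedness of the PSD cone, and trace preservation is only needed for tidy bookkeeping in degenerate cases (and in fact even $\mc{M}(\sigma')=0$ does not break the chain, since it forces $\mc{M}(\rho)=0$ once $\rho\leq 2^{\lambda_1}\sigma'$).
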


Some other examples of generalized divergences are the trace distance and the fidelity. The trace distance between two density operators $\rho,\sigma\in\msc{D}(\mc{H})$ is equal to $\Vert \rho-\sigma\Vert_1$, where $\Vert T\Vert_1=\Tr\{\sqrt{T^\dag T}\}$. The fidelity of $\tau,\sigma\in\mc{B}_+(\mc{H})$, which is defined as $F(\tau,\sigma)=\norm{\sqrt{\tau}\sqrt{\sigma}}_1^2$~\cite{Uhl76}, is also a generalized divergence.

\begin{lemma}[Uhlmann's theorem~\cite{Uhl76}]\label{thm:ut}
The following two expressions for fidelity between two states $\rho_A$ and $\sigma_A$ are equal:
\begin{equation}
F(\rho_A,\sigma_A)=\max_{U}\vert\<\varphi^\rho|_{RA}U_{R}\otimes\mathbbm{1}_A|\varphi^\sigma_{RA}\>\vert^2=\norm{\sqrt{\rho_A}\sqrt{\sigma_A}}_1^2,
\end{equation}
where $U_R$ is a unitary operator and $\varphi^{\omega}_{RA}$ denotes purification of any $\omega_A\in\mc{D}(\mc{H}_{RA})$. 
\end{lemma}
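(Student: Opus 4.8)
The plan is to evaluate the variational expression $\max_{U}|\<\varphi^\rho|_{RA}(U_R\otimes\bm{1}_A)|\varphi^\sigma\>_{RA}|$ directly and show it equals $\norm{\sqrt{\rho_A}\sqrt{\sigma_A}}_1$; squaring then gives the claim, since $F$ was \emph{defined} just above as $\norm{\sqrt{\rho_A}\sqrt{\sigma_A}}_1^2$. First I would normalize the choice of purifications. It suffices to take the reference system with $|R|=|A|$: enlarging $R$ cannot increase the maximum, because any purification on a larger space is an isometric dilation of one on $R'\simeq A$, and absorbing that isometry only replaces the relevant unitary degree of freedom by a contraction, which (by H\"older's inequality, Lemma~\ref{thm:sch-norm}) cannot help. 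With $|R|=|A|$, I would use the standard fact that all purifications of a fixed state are related by a unitary on the purifying system to write $|\varphi^\rho\>_{RA}=(W_R\otimes\sqrt{\rho_A})|\Upsilon\>_{RA}$ and $|\varphi^\sigma\>_{RA}=(V_R\otimes\sqrt{\sigma_A})|\Upsilon\>_{RA}$ for unitaries $W_R,V_R$, where $|\Upsilon\>_{RA}=\sum_i|i\>_R|i\>_A$; one checks directly that $\Tr_R$ of the corresponding rank-one operators returns $\rho_A$ and $\sigma_A$, as required.

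Next I would compute the overlap using the transpose identity $(X_R\otimes\bm{1}_A)|\Upsilon\>_{RA}=(\bm{1}_R\otimes X_A^{\T})|\Upsilon\>_{RA}$ (cf.\ \eqref{eq:PT-last}) together with $\<\Upsilon|(\bm{1}_R\otimes Y_A)|\Upsilon\>_{RA}=\Tr\{Y_A\}$, which yields
\begin{equation}
\<\varphi^\rho|_{RA}(U_R\otimes\bm{1}_A)|\varphi^\sigma\>_{RA}=\Tr\{Z_A\,\sqrt{\rho_A}\sqrt{\sigma_A}\},\qquad Z_A\coloneqq\(W_R^\dag\,U_R\,V_R\)^{\T}.
\end{equation}
As $U_R$ ranges over all unitaries on $\mc{H}_R$, the operator $Z_A$ ranges over all unitaries on $\mc{H}_A$. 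H\"older's inequality (Lemma~\ref{thm:sch-norm}) gives $|\Tr\{Z_A\sqrt{\rho_A}\sqrt{\sigma_A}\}|\leq\norm{Z_A}_\infty\norm{\sqrt{\rho_A}\sqrt{\sigma_A}}_1=\norm{\sqrt{\rho_A}\sqrt{\sigma_A}}_1$, the ``$\leq$'' direction. For achievability I would take the polar decomposition $\sqrt{\rho_A}\sqrt{\sigma_A}=U_0\,|\sqrt{\rho_A}\sqrt{\sigma_A}|$ with $U_0$ unitary and set $Z_A=U_0^\dag$, so that $\Tr\{Z_A\sqrt{\rho_A}\sqrt{\sigma_A}\}=\Tr\{|\sqrt{\rho_A}\sqrt{\sigma_A}|\}=\norm{\sqrt{\rho_A}\sqrt{\sigma_A}}_1$; this $Z_A$ corresponds to a legitimate unitary $U_R$, and since changing the initial purifications only modifies $W_R,V_R$ (hence $Z_A$), the value of the maximum does not depend on those choices. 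Taking absolute value and squaring gives $\max_U|\<\varphi^\rho|_{RA}(U_R\otimes\bm{1}_A)|\varphi^\sigma\>_{RA}|^2=\norm{\sqrt{\rho_A}\sqrt{\sigma_A}}_1^2=F(\rho_A,\sigma_A)$.

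The main obstacle is the bookkeeping around the purifying system: one must justify that restricting to $|R|=|A|$ loses nothing, and that the unitary freedom on $R$ maps onto the \emph{full} unitary group on $\mc{H}_A$ inside the trace, so that the polar-decomposition optimum is genuinely attained rather than merely serving as an upper bound. Everything else is routine: the transpose trick and the identity $\<\Upsilon|(\bm{1}_R\otimes Y_A)|\Upsilon\>_{RA}=\Tr\{Y_A\}$ are elementary, and the variational formula $\max\{|\Tr\{ZM\}|:\norm{Z}_\infty\leq1\}=\norm{M}_1$ is H\"older in one direction and polar decomposition in the other.
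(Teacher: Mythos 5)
The paper states Lemma~\ref{thm:ut} as a citation to Uhlmann's original work and provides no proof of its own, so there is nothing to compare your argument against in the source; what you have written is the standard textbook derivation and it is correct. The two key moves --- (i) the canonical parametrization of purifications $|\varphi^\rho\>_{RA}=(W_R\otimes\sqrt{\rho_A})|\Upsilon\>_{RA}$ together with the transpose trick $(X_R\otimes\bm{1}_A)|\Upsilon\>=(\bm{1}_R\otimes X^{\T}_A)|\Upsilon\>$ to reduce the overlap to $\Tr\{Z_A\sqrt{\rho_A}\sqrt{\sigma_A}\}$, and (ii) the H\"older/polar-decomposition pair to evaluate $\max\{|\Tr\{ZM\}|:\norm{Z}_\infty\le1\}=\norm{M}_1$ --- are exactly right.

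Two small points worth tightening. First, the polar decomposition $\sqrt{\rho_A}\sqrt{\sigma_A}=U_0|\sqrt{\rho_A}\sqrt{\sigma_A}|$ with $U_0$ a genuine unitary (rather than a partial isometry) exists because we are in finite dimensions, which is the paper's standing assumption; in infinite dimensions one would need to argue the supremum is still attained or pass to a limit. Second, your argument that taking $|R|=|A|$ is without loss of generality is correct in spirit but a little compressed: if one insists on a larger $R$, each purification is $(V_{R'\to R}\otimes\bm{1}_A)|\psi\>_{R'A}$ for an isometry $V$ and a purification on $R'\simeq A$, and then $V'^\dag U_R V$ is a contraction on $\mc{H}_{R'}$; H\"older shows contractions give no more than unitaries, and since $\mathrm{rank}(V)=\mathrm{rank}(V')=|A|$ any unitary on $R'$ is realizable as $V'^\dag U_R V$ by extending $V'WV^\dag$ to a unitary on $R$, so the maximum is unchanged. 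With that bookkeeping filled in, the proof is complete and does exactly what is needed given that the paper defines $F(\tau,\sigma)=\norm{\sqrt{\tau}\sqrt{\sigma}}_1^2$ just above the lemma.
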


The following well known lemma establishes relations between fidelity and trace distance.
\begin{lemma}[\kern-0.35em\cite{FV99}]\label{thm:f-t}
For any two density operators $\rho,\sigma\in\msc{D}(\mc{H})$, the following bounds hold
\begin{equation}
1-\sqrt{F(\rho,\sigma)}\leq \frac{1}{2}\norm{\rho-\sigma}_1\leq \sqrt{1-F(\rho,\sigma)}. 
\end{equation}
\end{lemma}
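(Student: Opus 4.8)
The plan is to prove the two inequalities separately: purifications and contractivity of the trace distance give the upper bound, while a square-root / Hilbert--Schmidt estimate gives the lower bound. For the upper bound $\tfrac{1}{2}\Vert\rho-\sigma\Vert_1\le\sqrt{1-F(\rho,\sigma)}$, I would lift both states to purifications. By Uhlmann's theorem (Lemma~\ref{thm:ut}) there exist purifications $\varphi^\rho_{RA},\varphi^\sigma_{RA}$ of $\rho$ and $\sigma$ with $|\langle\varphi^\rho|\varphi^\sigma\rangle|^2=F(\rho,\sigma)$. Since the partial trace over $R$ is a quantum channel and the trace distance is a generalized divergence, hence non-increasing under channels by \eqref{eq:gen-div-mono}, we obtain $\tfrac{1}{2}\Vert\rho-\sigma\Vert_1\le\tfrac{1}{2}\Vert\,|\varphi^\rho\rangle\langle\varphi^\rho|-|\varphi^\sigma\rangle\langle\varphi^\sigma|\,\Vert_1$. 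It then remains to record the elementary fact that, for unit vectors $u$ and $v$, $\tfrac{1}{2}\Vert\,|u\rangle\langle u|-|v\rangle\langle v|\,\Vert_1=\sqrt{1-|\langle u|v\rangle|^2}$: the operator $|u\rangle\langle u|-|v\rangle\langle v|$ is supported on the at most two-dimensional subspace $\operatorname{span}\{u,v\}$, is traceless there, and satisfies $\Tr\{(|u\rangle\langle u|-|v\rangle\langle v|)^2\}=2(1-|\langle u|v\rangle|^2)$, so its nonzero eigenvalues are $\pm\sqrt{1-|\langle u|v\rangle|^2}$. Applying this with $u=\varphi^\rho$, $v=\varphi^\sigma$ yields the bound.

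For the lower bound $1-\sqrt{F(\rho,\sigma)}\le\tfrac{1}{2}\Vert\rho-\sigma\Vert_1$, I would pass through the Hilbert--Schmidt distance of the square roots. First, $\Vert\sqrt{\rho}-\sqrt{\sigma}\Vert_2^2=\Tr\{\rho\}+\Tr\{\sigma\}-2\Tr\{\sqrt{\rho}\sqrt{\sigma}\}=2-2\Tr\{\sqrt{\rho}\sqrt{\sigma}\}\ge 2-2\Vert\sqrt{\rho}\sqrt{\sigma}\Vert_1=2\bigl(1-\sqrt{F(\rho,\sigma)}\bigr)$, using cyclicity of the trace, $|\Tr\{X\}|\le\Vert X\Vert_1$, and the definition $F=\Vert\sqrt{\rho}\sqrt{\sigma}\Vert_1^2$. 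Second, I claim the operator inequality $\Vert\rho-\sigma\Vert_1\ge\Vert\sqrt{\rho}-\sqrt{\sigma}\Vert_2^2$. Set $C:=\sqrt{\rho}-\sqrt{\sigma}$ (self-adjoint) and $D:=\sqrt{\rho}+\sqrt{\sigma}\ge 0$, so that $CD+DC=2(\rho-\sigma)$. Testing $\tfrac{1}{2}(CD+DC)$ against the sign operator $\operatorname{sgn}(C):=\Pi_+-\Pi_-$ (with $\Pi_\pm$ the spectral projections of $C$ onto its positive and negative parts, so that $\operatorname{sgn}(C)C=C\operatorname{sgn}(C)=|C|$ and $\Vert\operatorname{sgn}(C)\Vert_\infty\le 1$) gives $\Vert\rho-\sigma\Vert_1\ge\tfrac{1}{2}\Tr\{\operatorname{sgn}(C)(CD+DC)\}=\Tr\{|C|D\}$. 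Finally, $-D\le C\le D$ (both bounds are equivalent to positivity of $\sqrt{\rho}$ and $\sqrt{\sigma}$), hence $|C|\le D$ and therefore $\Tr\{|C|D\}\ge\Tr\{|C|^2\}=\Vert C\Vert_2^2$. Chaining the two displays gives $\tfrac{1}{2}\Vert\rho-\sigma\Vert_1\ge 1-\sqrt{F(\rho,\sigma)}$. An alternative route for this bound is to invoke a measurement whose outcome distributions $p,q$ satisfy $\sum_x\sqrt{p_xq_x}=\sqrt{F(\rho,\sigma)}$, combine it with the classical estimate $1-\sum_x\sqrt{p_xq_x}\le\tfrac{1}{2}\sum_x|p_x-q_x|$, and use monotonicity of the trace distance under the measurement channel; this is shorter but relies on an external fact about the fidelity.

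I expect the main obstacle to be the lower bound, and within it the operator inequality $\Vert\rho-\sigma\Vert_1\ge\Vert\sqrt{\rho}-\sqrt{\sigma}\Vert_2^2$. The two ingredients that make it work, namely the identity $CD+DC=2(\rho-\sigma)$ and the order relation $|C|\le D$, are easy once spotted, but one must handle the sign operator carefully when $C$ has a kernel (choosing the spectral projections so that $\operatorname{sgn}(C)C=|C|$ exactly), and one must check that the bound $\Vert M\Vert_1\ge\Tr\{WM\}$ is being applied to an admissible contraction $W=\operatorname{sgn}(C)$ and that $\Tr\{|C|D\}\ge 0$. The upper bound, by contrast, is essentially bookkeeping once Uhlmann's theorem (Lemma~\ref{thm:ut}) and the two-dimensional pure-state computation are in hand.
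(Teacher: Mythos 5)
The paper states this lemma with a citation to~\cite{FV99} and supplies no proof of its own, so there is no "paper route" to compare yours against; I will simply assess the argument on its merits.

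Your upper bound is correct and standard: Uhlmann purifications, monotonicity of the trace distance under the partial trace, and the elementary two-dimensional computation $\tfrac{1}{2}\bigl\Vert\,|u\rangle\!\langle u|-|v\rangle\!\langle v|\,\bigr\Vert_1=\sqrt{1-|\langle u|v\rangle|^2}$.

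The lower bound has a genuine gap at exactly the spot you flagged as the delicate one. You invoke the implication "$-D\le C\le D$ hence $|C|\le D$," with $C$ self-adjoint and $D\ge 0$. That implication is \emph{false} for non-commuting operators: operator absolute value is not monotone in the way this needs. A concrete counterexample of the relevant form: take $C=\operatorname{diag}(1,-1)$, so $|C|=\bm{1}$, and $D=\bigl(\begin{smallmatrix}2 & 3/2\\ 3/2 & 2\end{smallmatrix}\bigr)$. Then $D\pm C$ both have trace $4$ and determinant $3/4$, so $D\pm C\ge 0$, yet $D-\bm{1}=\bigl(\begin{smallmatrix}1 & 3/2\\ 3/2 & 1\end{smallmatrix}\bigr)$ has determinant $-5/4<0$, so $|C|\not\le D$. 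Moreover $\sqrt{\rho}\coloneqq\tfrac{1}{2}(D+C)$ and $\sqrt{\sigma}\coloneqq\tfrac{1}{2}(D-C)$ are both positive semi-definite, so this is exactly an instance of the configuration your proof handles; as written, your chain of reasoning breaks down on it.

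Fortunately, the quantity you actually need, $\Tr\{|C|D\}\ge\Tr\{|C|^2\}$, does hold and can be established without the false operator inequality. Decompose $C=C_+-C_-$ with $C_\pm\ge 0$ and $C_+C_-=0$, so $|C|=C_++C_-$. From $D-C=2\sqrt{\sigma}\ge 0$ and $C_+\ge 0$ you get $\Tr\{C_+(D-C)\}\ge 0$, i.e., $\Tr\{C_+D\}\ge\Tr\{C_+C\}=\Tr\{C_+^2\}$. Symmetrically, from $D+C=2\sqrt{\rho}\ge 0$ and $C_-\ge 0$ you get $\Tr\{C_-(D+C)\}\ge 0$, i.e., $\Tr\{C_-D\}\ge -\Tr\{C_-C\}=\Tr\{C_-^2\}$. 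Adding the two gives $\Tr\{|C|D\}\ge\Tr\{C_+^2\}+\Tr\{C_-^2\}=\Tr\{|C|^2\}=\Vert C\Vert_2^2$, which is the Powers--St{\o}rmer estimate you wanted. With that replacement, the rest of your argument (testing $\rho-\sigma=\tfrac{1}{2}(CD+DC)$ against $\operatorname{sgn}(C)$ via H\"older, then chaining with $\Vert\sqrt{\rho}-\sqrt{\sigma}\Vert_2^2\ge 2(1-\sqrt{F})$) goes through cleanly, and your alternative route via a fidelity-attaining measurement is also valid.
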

Another well known lemma that establishes relation between the relative entropy and trace distance is as follows.
\begin{lemma}[Pinsker's inequality~\cite{OP93}]
For any two density operators $\rho,\sigma\in\msc{D}(\mc{H})$, following bounds hold
\begin{equation}
D(\rho\Vert\sigma)\geq \frac{1}{2\ln 2}\norm{\rho-\sigma}_1^2,
\end{equation}
where $\ln$ denotes natural logarithm. 
\end{lemma}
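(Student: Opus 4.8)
The plan is to reduce the quantum statement to its classical binary counterpart via the data-processing inequality for the quantum relative entropy (already available in the excerpt), and then to settle the classical binary case by a short one-variable calculus argument. At the outset, note that if $\supp(\rho)\not\subseteq\supp(\sigma)$ then $D(\rho\V\sigma)=+\infty$ by \eqref{eq:rel-ent-rev} and the bound is trivial, so I would assume $\supp(\rho)\subseteq\supp(\sigma)$ from here on.

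First I would invoke the Helstrom characterization of the trace distance: letting $\Pi$ be the projection onto the support of the positive part of $\rho-\sigma$, one has $\Tr\{\Pi(\rho-\sigma)\}=\tfrac12\norm{\rho-\sigma}_1$. Consider the measure-and-prepare channel $\mc{M}(\omega)\coloneqq\Tr\{\Pi\omega\}\op{0}+\Tr\{(\bm{1}-\Pi)\omega\}\op{1}$, which is CPTP and hence in particular positive and trace preserving. With $p_0\coloneqq\Tr\{\Pi\rho\}$, $q_0\coloneqq\Tr\{\Pi\sigma\}$, $p\coloneqq(p_0,1-p_0)$, $q\coloneqq(q_0,1-q_0)$, data processing gives
\begin{equation}
D(\rho\V\sigma)\geq D(\mc{M}(\rho)\V\mc{M}(\sigma))= p_0\log_2\tfrac{p_0}{q_0}+(1-p_0)\log_2\tfrac{1-p_0}{1-q_0}=:D(p\V q),
\end{equation}
while $|p_0-q_0|=\tfrac12\norm{\rho-\sigma}_1$, so $\norm{p-q}_1=2|p_0-q_0|=\norm{\rho-\sigma}_1$. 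It therefore suffices to prove the binary classical inequality $D(p\V q)\geq\frac{1}{2\ln2}\norm{p-q}_1^2$.

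For that last step I would argue by calculus. The support assumption lets us take $q_0\in(0,1)$, and the case $p_0=q_0$ is trivial; fix $q_0$ and define (in nats)
\begin{equation}
\phi(p_0)\coloneqq p_0\ln\tfrac{p_0}{q_0}+(1-p_0)\ln\tfrac{1-p_0}{1-q_0}-2(p_0-q_0)^2,
\end{equation}
so that, using $\norm{p-q}_1^2=4(p_0-q_0)^2$, the claim becomes $\phi\geq0$ after dividing by $\ln2$. One checks $\phi(q_0)=0$, $\phi'(q_0)=0$, and
\begin{equation}
\phi''(p_0)=\frac{1}{p_0}+\frac{1}{1-p_0}-4=\frac{1}{p_0(1-p_0)}-4\geq0,
\end{equation}
since $p_0(1-p_0)\leq\tfrac14$. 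Hence $\phi$ is convex with global minimum value $0$ at $p_0=q_0$, so $\phi\geq0$ on $[0,1]$, which combined with the reduction above proves the lemma.

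I do not expect a serious obstacle here: the reduction is routine once data processing and the Helstrom identity are in hand, and the classical estimate is elementary. The only points requiring a little care are the degenerate endpoints ($q_0\in\{0,1\}$, where the support convention forces $D=+\infty$) and confirming that the two-outcome measurement simultaneously realizes the trace distance and defines a legitimate (CPTP) classical channel to which the data-processing inequality applies.
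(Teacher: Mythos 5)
The paper does not actually prove this lemma; it states it with a citation to~\cite{OP93} and no argument. Your proof is correct and is the standard route in the literature: use the Helstrom projection $\Pi$ (onto the positive part of $\rho-\sigma$) to define a two-outcome measurement channel that simultaneously achieves the trace distance and is CPTP, apply the data-processing inequality for the quantum relative entropy to reduce to the binary classical case, and then settle binary Pinsker by showing the deficit function $\phi(p_0)=p_0\ln(p_0/q_0)+(1-p_0)\ln((1-p_0)/(1-q_0))-2(p_0-q_0)^2$ is convex with a double zero at $p_0=q_0$. I checked the computations: $\phi(q_0)=0$, $\phi'(q_0)=0$, and $\phi''(p_0)=1/(p_0(1-p_0))-4\geq 0$ by the AM--GM bound $p_0(1-p_0)\leq 1/4$, so $\phi\geq 0$ on $(0,1)$ and extends to $[0,1]$ by continuity. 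Your handling of the degenerate cases is also sound: if $\supp(\rho)\not\subseteq\supp(\sigma)$ the left side is $+\infty$; and under the support inclusion, $q_0\in\{0,1\}$ forces $\rho=\sigma$, which is the trivial case $p_0=q_0$ you set aside. The conversion between natural-log and base-$2$ forms is also consistent with the paper's statement, where the $1/(2\ln 2)$ prefactor accounts for $D$ being defined with $\log_2$. No gap.
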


\section{Private states and privacy test}\label{sec:rev-priv-states}

Private states \cite{HHHO05,HHHO09} are an essential notion in any discussion of secret key distillation in quantum information, and we review their basics here.

A tripartite key state $\gamma_{K_AK_BE}$ contains $\log_2 |K|$ bits of secret key, shared between systems $K_A$ and $K_B$ and protected from an eavesdropper possessing system $E$, if there exists a state $\sigma_E\in\msc{D}(\mc{H}_E)$ and a projective measurement channel $\mc{M}(\cdot)=\sum_{i}\op{i}(\cdot)\op{i}$, where $\{|i\>\}_i\in\ONB$, such that
\begin{equation}
\(\mc{M}_{K_A}\otimes\mc{M}_{K_B}\)(\gamma_{K_AK_BE})=\frac{1}{|K|}\sum_{i=0}^{K-1}\op{i}_{K_A}\otimes\op{i}_{K_B}\otimes\sigma_E.
\end{equation}
The systems $K_A$ and $K_B$ are maximally classically correlated, and the key value is uniformly random and independent of the system $E$. 

A bipartite private state $\gamma_{S_AK_AK_BS_B}$ containing $\log_2 |K|$ bits of secret key has the following form:
\begin{equation}
\gamma_{S_AK_AK_BS_B}\coloneqq U^t_{S_AK_AK_BS_B}(\Phi_{K_AK_B}\otimes\theta_{R_AR_B})(U^t_{S_AK_AK_BS_B})^\dag,
\end{equation}
where $\Phi_{K_AK_B}$ is a maximally entangled state of Schmidt rank $|K|$, $U^t_{S_AK_AK_BS_B}$ is a \textquotedblleft twisting\textquotedblright ~unitary of the form 
\begin{equation}
U^t_{S_AK_AK_BS_B}\coloneqq\sum_{i,j=0}^{K-1}\ket{i}\!\bra{i}_{K_A}\otimes\ket{j}\!\bra{j}_{K_B}\otimes U^{ij}_{S_AS_B},
\end{equation}
with each $U^{ij}_{S_AS_B}$ a unitary, and $\theta_{S_AS_B}$ is a state. The systems $S_A,S_B$ are called \textquotedblleft ~shield\textquotedblright systems because they, along with the twisting unitary, can help to protect the key in systems $K_A$ and $K_B$ from any party possessing a purification of $\gamma_{S_AK_AK_BS_B}$.

Bipartite private states and tripartite key states are equivalent \cite{HHHO05,HHHO09}. That is, for $\gamma_{S_AK_AK_BS_B}$ a bipartite private state and $\gamma_{S_AK_AK_BS_BE}\in\mc{H}_{S_A}\otimes\mc{H}_{K_A}\otimes\mc{H}_{K_B}\otimes\mc{H}_{S_B}\otimes\mc{E}$ some purification of it, $\gamma_{K_AK_BE}$ is a tripartite key state. Conversely, for any tripartite key state $\gamma_{K_AK_BE}$ and any purification $\gamma_{S_AK_AK_BS_BE}$ of it, $\gamma_{S_AK_AK_BS_B}$ is a bipartite private state. 

A state $\rho_{K_AK_BE}$ is an $\varepsilon$-approximate tripartite key state if there exists a tripartite key state $\gamma_{K_AK_BE}$ such that 
\begin{equation}
F(\rho_{K_AK_BE},\gamma_{K_AK_BE})\geq 1-\varepsilon,
\end{equation}
where $\varepsilon\in[0,1]$. Similarly, a state $\rho_{S_AK_AK_BS_B}$ is an $\varepsilon$-approximate bipartite private state if there exists a bipartite private state $\gamma_{S_AK_AK_BS_B}$ such that 
\begin{equation}
F(\rho_{S_AK_AK_BS_BE},\gamma_{S_AK_AK_BS_BE})\geq 1-\varepsilon.
\end{equation}

If $\rho_{S_AK_AK_BS_B}$ is an $\varepsilon$-approximate bipartite key state with $K$ key values, then Alice and Bob hold an $\varepsilon$-approximate tripartite key state with $|K|$ key values, and the converse is true as well \cite{HHHO05,HHHO09}.

A privacy test corresponding to $\gamma_{S_AK_AK_BS_B}$ (a $\gamma$-privacy test) is defined as the following dichotomic measurement \cite{WTB16}:
\begin{equation}
\{\Pi^\gamma_{S_AK_AK_BS_B}, \bm{1}_{S_AK_AK_BS_B}-\Pi^\gamma_{S_AK_AK_BS_B}\},
\end{equation}
where
\begin{equation}
\Pi^\gamma_{S_AK_AK_BS_B}\coloneqq U^t_{S_AK_AK_BS_B}(\Phi_{K_AK_B}\otimes \bm{1}_{S_AS_B})(U^t_{S_AK_AK_BS_B})^\dag,
\end{equation}
$\bm{1}_{S_AS_B}\in\msc{B}_+(\mc{H}_{S_AS_B})$ is the identity operator, and $U^t_{S_AK_AK_BS_B}$ is the twisting unitary discussed earlier. Let $\varepsilon\in[0,1]$ and $\rho_{S_AK_AK_BS_B}$ be an $\varepsilon$-approximate bipartite private state. The probability for $\rho_{S_AK_AK_BS_B}$ to pass the $\gamma$-privacy test is never smaller than $1-\varepsilon$ \cite{WTB16}:
\begin{equation}
\Tr\{\Pi^\gamma_{S_AK_AK_BS_B}\rho_{S_AK_AK_BS_B}\}\geq 1-\varepsilon. 
\end{equation}
For a state $\sigma_{S_AK_AK_BS_B}\in\SEP(S_AK_A\!:\!K_BS_B)$, the probability of passing any $\gamma$-privacy test is never greater than $\frac{1}{|K|}$ \cite{HHHO09}:
\begin{equation}
\Tr\{\Pi^\gamma_{S_AK_AK_BS_B}\sigma_{S_AK_AK_BS_B}\}\leq \frac{1}{|K|},
\end{equation}
where $|K|$ is the number of values that the secret key can take (i.e., $|K|=\dim(\mc{H}_{K_A})=\dim(\mc{H}_{K_B})$). These two inequalities are foundational for some of the converse bounds established in this thesis, as was the case in \cite{HHHO09,WTB16}.

\section{Entanglement measures}
Let $\Ent(A;B)_\rho$ denote an entanglement measure \cite{HHHH09} that is evaluated for a bipartite state~$\rho_{AB}$. 
The basic property of an entanglement measure is that it should be an LOCC monotone \cite{HHHH09}, i.e., non-increasing under the action of an LOCC channel. 

Entanglement distillation from a bipartite state $\rho_{AB}$ is the task of distilling a maximally entangled state $\Phi_{AB}$ of Schmidt rank $|M|$ from (asymptotically) large number of independent and identically distributed copies of $\rho_{AB}$, i.e., $\rho_{AB}^{\otimes n}$ for $n\to \infty$ via standard LOCC distillation protocols \cite{HH99,BCJ+99}. A state $\rho_{AB}$ is entanglement distillable if $\Tr\{\Phi_{AB}\rho_{AB}\}\geq \frac{1}{|M|}$ \cite{HH99,BCJ+99}. 

There are different entanglement measures based on characteristic properties of entangled states. These properties are associated to the ability of how useful these entangled states are for specific information processing tasks, such as entanglement and secret-key distillation. It is known that all entangled states are useful for distilling secret key. However, there exists class of entangled states called bound entangled states that are not entanglement distillable. 

Given such an entanglement measure $\Ent(A;B)_\rho$, one can define the entanglement $\Ent(\mc{M})$ of a channel $\mc{M}_{A\to B}$ in terms of it by optimizing over all pure, bipartite states that can be input to the channel:
\begin{equation}\label{eq:ent-mes-channel}
\Ent(\mc{M})=\sup_{\psi_{RA}} \Ent(R;B)_\omega,
\end{equation}
where $\omega_{RB}=\mc{M}_{A\to B}(\psi_{RA})$. Due to the properties of an entanglement measure and the well known Schmidt decomposition theorem, it suffices to optimize over pure states $\psi_{RA}$ such that $R\simeq A$ (i.e., one does not achieve a higher value of
$\Ent(\mc{M})$
 by optimizing over mixed states with unbounded reference system $R$). In an information-theoretic setting, the entanglement $\Ent(\mc{M})$ of a channel~$\mc{M}$ characterizes the amount of entanglement that a sender $A$ and receiver $B$ can generate by using the channel if they do not share entanglement prior to its use.

Alternatively, one can consider the amortized entanglement $\Ent_A(\mc{M})$, also called the entangling power, of a channel $\mc{M}_{A\to B}$ as the following optimization~\cite{BHLS03,KW17} (see also \cite{LHL03,CM17,DDMW17,RKB+17}):
\begin{equation}\label{eq:ent-arm}
\Ent_A(\mc{M})\coloneqq \sup_{\rho_{R_AAR_B}} \left[\Ent(R_A;BR_B)_{\tau}-\Ent(R_AA;R_B)_{\rho}\right],
\end{equation}
where $\tau_{R_ABR_B}=\mc{M}_{A\to B}(\rho_{R_AAR_B})$ and $\rho_{R_AAR_B}\in\msc{D}(\mc{H}_{R_AAR_B})$. The supremum is with respect to all states $\rho_{R_AAR_B}$ and the systems $R_A,R_B$ are finite-dimensional but could be arbitrarily large. Thus, in general, $\Ent_A(\mc{M})$ need not be computable. The amortized entanglement quantifies the net amount of entanglement that can be generated by using the channel $\mc{M}_{A\to B}$, if the sender and the receiver are allowed to begin with some initial entanglement in the form of the state $\rho_{R_AAR_B}$. That is, $\Ent(R_AA;R_B)_\rho$ quantifies the entanglement of the initial state $\rho_{R_AAR_B}$, and $\Ent(R_A;BR_B)_{\tau}$ quantifies the entanglement of the final state produced after the action of the channel. 

Recently, it was shown in \cite{KW17}, connected to related developments in \cite{LHL03,BHLS03,CM17,DDMW17,DW17}, that the amortized entanglement of a point-to-point channel $\mc{M}_{A\to B}$ serves as an upper bound on the entanglement of the final state, say $\omega_{AB}$, generated at the end of an LOCC- or PPT-assisted quantum communication protocol that uses $\mc{M}_{A\to B}$ $n$ times:
\begin{equation}
\Ent(A;B)_{\omega}\leq n\Ent_{A}(\mc{M}).
\end{equation}
Thus, the physical question of determining meaningful upper bounds on the LOCC- or PPT-assisted capacities of point-to-point channel $\mc{M}$ is equivalent to the mathematical question of whether amortization can enhance the entanglement of a given channel, i.e., whether the following equality holds for a given entanglement measure $\Ent$:
\begin{equation}
\Ent_A(\mc{M})\stackrel{?}{=} \Ent(\mc{M}). 
\end{equation}  

The Rains relative entropy of a state $\rho_{AB}$ is defined as \cite{Rai01, AdMVW02}
\begin{equation}\label{eq:rains-inf-state}
R(A;B)_\rho\coloneqq \min_{\sigma_{AB}\in \PPT'(A:B)} D (\rho_{AB}\Vert \sigma_{AB}),
\end{equation}
and it is monotone non-increasing under the action of a PPT-preserving quantum channel $\mc{P}_{A'B'\to AB}$, i.e.,
\begin{equation}
R(A';B')_\rho\geq R(A;B)_\omega,
\end{equation}
where $\omega_{AB}=\mc{P}_{A'B'\to AB}(\rho_{A'B'})$. The sandwiched Rains relative entropy of a state $\rho_{AB}$ is defined as follows \cite{TWW17}:  
\begin{equation}\label{eq:alpha-rains-inf-state}
\widetilde{R}_{\alpha}(A;B)_\rho\coloneqq \min_{\sigma_{AB}\in \PPT'(A:B)} \widetilde{D}_{\alpha} (\rho_{AB}\Vert \sigma_{AB}).
\end{equation}
The max-Rains relative entropy of a state $\rho_{AB}$ is defined as \cite{WD16b}
\begin{equation}
R_{\max}(A;B)_\rho\coloneqq \min_{\sigma_{AB}\in \PPT'(A:B)} D_{\max} (\rho_{AB}\Vert \sigma_{AB}).
\end{equation}
The max-Rains information of a quantum channel $\mc{M}_{A\to B}$ is defined as \cite{WFD17}
\begin{equation}
R_{\max}(\mc{M})\coloneqq \max_{\phi_{RA}}R_{\max} (R;B)_\omega,
\label{eq:max-Rains-channel}
\end{equation}
where $\omega_{RB}=\mc{M}_{A\to B}(\phi_{RA})$ and $\phi_{RA}$ is a pure state, with $\dim(\mc{H}_R)=\dim(\mc{H}_A)$. The amortized max-Rains information of a channel $\mc{M}_{A\to B}$, denoted as $R_{\max,A}(\mc{M})$, is defined by replacing $\Ent$ in \eqref{eq:ent-arm} with the max-Rains relative entropy $R_{\max}$ \cite{BW17}. It was shown in \cite{BW17} that amortization does not enhance the max-Rains information of an arbitrary point-to-point channel, i.e.,
\begin{equation}
R_{\max,A}(\mc{M})=R_{\max}(\mc{M}).
\end{equation}  

Recently, in \cite[Eq.~(8)]{WD16a} (see also \cite{WFD17}), the max-Rains relative entropy of a state $\rho_{AB}$ was expressed as 
\begin{equation}\label{eq:rains-w}
R_{\max}(A;B)_{\rho}=\log_2 W(A;B)_{\rho}, 
\end{equation}
where $W(A;B)_{\rho}$ is the solution to the following semi-definite program:
\begin{align}
\textnormal{minimize}\ &\ \Tr\{C_{AB}+D_{AB}\}\nonumber\\
\textnormal{subject to}\ &\ C_{AB}, D_{AB}\geq 0,\nonumber\\
   &\ \T_{B} (C_{AB}-D_{AB})\geq \rho_{AB}. \label{eq:rains-state-sdp}
\end{align}
Similarly, in \cite[Eq.~(21)]{WFD17}, the max-Rains information of a quantum channel $\mc{M}_{A\to B}$ was expressed as 
\begin{equation}\label{eq:rains-omega}
R_{\max}(\mc{M})=\log \Gamma (\mc{M}),
\end{equation}
where $\Gamma(\mc{M})$ is the solution to the following semi-definite program (SDP):
\begin{align}
\textnormal{minimize}\ &\ \norm{\Tr_B\{V_{RB}+Y_{RB}\}}_{\infty}\nonumber\\
\textnormal{subject to}\ &\ Y_{RB}, V_{RB}\geq 0,\nonumber\\
& \ \T_B(V_{RB}-Y_{RB})\geq J^\mc{M}_{RB}.\label{eq:rains-channel-sdp}
\end{align}

The sandwiched relative entropy of entanglement of a bipartite state $\rho_{AB}$ is defined as \cite{WTB16} 
\begin{equation}\label{eq:rel-ent-state}
\widetilde{E}_{\alpha}(A;B)_{\rho}\coloneqq\min_{\sigma_{AB}\in\SEP(A:B)}\widetilde{D}_{\alpha}(\rho_{AB}\Vert\sigma_{AB}).
\end{equation} 
In the limit $\alpha\to 1$, $\widetilde{E}_{\alpha}(A;B)_{\rho}$ converges to the relative entropy of entanglement \cite{VP98}, i.e.,
\begin{equation}\label{eq:rel-ent-state-1}
\lim_{\alpha\to 1}\widetilde{E}_{\alpha}(A;B)_{\rho}=E(A;B)_{\rho}\coloneqq \min_{\sigma_{AB}\in\SEP(A:B)}D(\rho_{AB}\Vert\sigma_{AB}).
\end{equation} The max-relative entropy of entanglement \cite{D09,Dat09} is defined for a bipartite state $\rho_{AB}$ as
\begin{equation}
E_{\max}(A;B)_{\rho}\coloneqq \min_{\sigma_{AB}\in\SEP(A:B)}D_{\max}(\rho_{AB}\Vert\sigma_{AB}).
\end{equation}  
The max-relative entropy of entanglement $E_{\max}(\mc{M})$ of a channel $\mc{M}_{A\to B}$ is defined as in \eqref{eq:ent-mes-channel}, by replacing $\Ent$ with $E_{\max}$ \cite{CM17}. It was shown in \cite{CM17} that amortization does not increase max-relative entropy of entanglement of a channel $\mc{M}_{A\to B}$, i.e.,
\begin{equation}
E_{\max,A}(\mc{M})=E_{\max}(\mc{M}).
\end{equation}

The squashed entanglement of a state $\rho_{AB}\in\msc{D}(\mc{H}_{AB})$ is defined as \cite{CW04} (see also \cite{Tuc99,Tuc02}):
\begin{equation}
E_{\sq}(A;B)_\rho=\frac{1}{2}\inf_{\omega_{ABE}}\left\{I(A;B|E)_\omega: \Tr_E\{\omega_{ABE}\}=\rho_{AB}\wedge \omega_{ABE}\in\mc{D}\(\mc{H}_{ABE}\)\right\}.
\end{equation}
In general, the system $E$ is finite-dimensional, but can be arbitrarily large. We can directly infer from the above definition that $E_{\sq}(B;A)_\rho=E_{\sq}(A;B)_\rho$ for any $\rho_{AB}\in\msc{D}(\mc{H}_{AB})$. We can similarly define the squashed entanglement $E_{\sq}(\mc{M})$ of a channel $\mc{M}_{A\to B}$ \cite{TGW14}, and it is known that amortization does not increase the squashed entanglement of a channel \cite{TGW14}:
\begin{equation}
E_{\sq,A}(\mc{M}) = E_{\sq}(\mc{M}).
\end{equation}

\subsection{Approximate normalization of entanglement measures}\label{sec:app-ent-mes}
Now we briefly discuss normalization properties of some entanglement measures, namely, entropy of entanglement~\cite{Sre93} and squashed entanglement~\cite{CW04}\footnote{This section is based on an unpublished work with Mark M.~Wilde.}. 

\subsubsection{Squashed Entanglement}
We know that squashed entanglement obeys the normalization property; i.e., it is equal to $\log_2d$ for a maximally entangled state $\Phi_{AB}$ of Schmidt rank $d$~\cite{CW04}. Due to the continuity of squashed entanglement~\cite{AF04}, we even know that if the state $\rho_{AB}$ is approximately close to a maximally entangled state $\Phi_{AB}$, then the squashed entanglement is near to $\log_2d$ (see also \cite[Remark 11]{CW04}). In particular, 
\begin{equation}
\frac{1}{2}\left\Vert \rho_{AB}-\Phi_{AB}\right\Vert _{1}\leq\varepsilon
\end{equation}
implies that ~\cite{AF04,Win16,Shi16}
\begin{equation}\label{eq:sqd1}
\left\vert E_{\text{sq}}\left(  A;B\right)  _{\rho}-E_{\text{sq}}\left(
A;B\right)  _{\Phi}\right\vert \leq \sqrt{2\varepsilon}\log_2d + (1+\sqrt{2\varepsilon})h_2\(\frac{\sqrt{2\varepsilon}}{1+\sqrt{2\varepsilon}}\),
\end{equation}
where $d$ is Schmidt rank of $\Phi_{AB}$ and $h_2(\cdot)$ is defined in \eqref{eq:g-2}. From \eqref{eq:sqd1}, we get
\begin{equation}
E_{\text{sq}}\left(  A;B\right)_{\rho}\geq (1-\sqrt{2\varepsilon})\log_2 d - (1+\sqrt{2\varepsilon})h_2\(\frac{\sqrt{2\varepsilon}}{1+\sqrt{2\varepsilon}}\).
\end{equation}

Our statement here is about the converse situation. We show that \begin{equation}
E_{\text{sq}}\left(  A;B\right)  _{\rho}\geq\log_2\left\vert A\right\vert
\left(  1-\varepsilon\right) \end{equation}
implies that the state $\rho_{AB}$ is near to a maximally entangled state. More precisely, we prove the following proposition. 
\begin{proposition} \label{prop:squashed}
Suppose that $\rho_{AB}\in\msc{D}(\mc{H}_{AB})$ and that 
\begin{equation}
E_{\text{sq}}\left(  A;B\right)  _{\rho}\geq\log_2\left\vert A\right\vert
\left(  1-\varepsilon\right),
\end{equation}
for some $\varepsilon\in(0,1)$. Then $\rho_{AB}$ is close to $\Phi_{AB}$ up to some local unitary $U_{B}$:
\begin{align}
\frac{1}{2}\left\Vert \rho_{AB}-U_{B}\Phi_{AB}U^\dagger_{B}\right\Vert_1\leq (2\sqrt{\varepsilon \ln|A|})^{1/2}.
\end{align}
\end{proposition}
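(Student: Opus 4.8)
The plan is to extract from the hypothesis two facts --- that $\rho_{AB}$ is nearly pure, and that its marginal $\rho_A$ is nearly maximally mixed --- and then to combine them via the elementary geometric observation that a bipartite pure state with a nearly maximally mixed marginal is nearly maximally entangled. I will work in the natural setting $\dim\mc{H}_A=\dim\mc{H}_B=|A|$; some dimension restriction is genuinely needed, since for instance $\rho_{AB}=\Phi_{AA_1}\otimes\pi_{B'}$ with $B=A_1B'$ and $\dim\mc{H}_{B'}$ large has $E_{\text{sq}}(A;B)_\rho=\log_2|A|$ yet lies far from every pure state.

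First I would record two upper bounds on squashed entanglement, both obtained by inserting special extensions into its definition. Choosing the trivial extension gives $E_{\text{sq}}(A;B)_\rho\le\tfrac12 I(A;B)_\rho$. Choosing instead a purification $\psi_{ABE}$ of $\rho_{AB}$ and using subadditivity in the form $S(A|E)_\psi\le S(A)_\psi$ gives $I(A;B)_\rho=S(A)_\rho-S(A|B)_\rho\le 2S(A)_\rho$, hence $E_{\text{sq}}(A;B)_\rho\le S(A)_\rho$. Feeding in $E_{\text{sq}}(A;B)_\rho\ge(1-\varepsilon)\log_2|A|$: the second bound yields $S(A)_\rho\ge(1-\varepsilon)\log_2|A|$, i.e.\ $D(\rho_A\Vert\pi_A)=\log_2|A|-S(A)_\rho\le\varepsilon\log_2|A|$; the first yields $I(A;B)_\rho\ge 2(1-\varepsilon)\log_2|A|$, and since $S(A)_\rho,S(B)_\rho\le\log_2|A|$ this forces $S(AB)_\rho\le S(A)_\rho+S(B)_\rho-I(A;B)_\rho\le 2\varepsilon\log_2|A|$.

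Next I would convert these into trace distances. Pinsker's inequality applied to $D(\rho_A\Vert\pi_A)\le\varepsilon\log_2|A|$ gives $\tfrac12\norm{\rho_A-\pi_A}_1\le\sqrt{\varepsilon\ln|A|/2}$. For purity, $S(AB)_\rho\ge-\log_2\lambda_{\max}(\rho_{AB})$ gives $\lambda_{\max}(\rho_{AB})\ge 2^{-S(AB)_\rho}\ge|A|^{-2\varepsilon}$, so with $\ket{\psi_0}$ the leading eigenvector of $\rho_{AB}$, $\tfrac12\norm{\rho_{AB}-\op{\psi_0}}_1=1-\lambda_{\max}(\rho_{AB})\le 1-|A|^{-2\varepsilon}\le 2\varepsilon\ln|A|$. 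Schmidt-decompose $\ket{\psi_0}=\sum_i\sqrt{\mu_i}\,\ket{e_i}_A\ket{f_i}_B$ and set $\ket{\Phi'}:=|A|^{-1/2}\sum_i\ket{e_i}_A\ket{f_i}_B$; by the transpose trick $\op{\Phi'}=U_B\Phi_{AB}U_B^\dagger$ for a suitable unitary $U_B$, so it suffices to bound $\tfrac12\norm{\rho_{AB}-\op{\Phi'}}_1$. Since $(\psi_0)_A-\pi_A$ is diagonal in $\{\ket{e_i}\}$, $\norm{(\psi_0)_A-\pi_A}_1$ equals the $\ell_1$-distance of the Schmidt vector $\mu$ from the uniform vector (call it $u$), and it is at most $\norm{(\psi_0)_A-\rho_A}_1+\norm{\rho_A-\pi_A}_1$, with both terms now controlled. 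Because $F((\psi_0)_A,\pi_A)=F(\op{\psi_0},\op{\Phi'})=\big(\sum_i\sqrt{\mu_i/|A|}\big)^2$, Lemma~\ref{thm:f-t} applied to the (commuting) states $(\psi_0)_A$ and $\pi_A$ gives $1-F(\op{\psi_0},\op{\Phi'})\le\norm{\mu-u}_1$; combining this with $F(\rho_{AB},\op{\Phi'})\ge\lambda_{\max}(\rho_{AB})\,F(\op{\psi_0},\op{\Phi'})$ and the purity estimate gives $1-F(\rho_{AB},\op{\Phi'})\le 2\sqrt{\varepsilon\ln|A|}$ after absorbing lower-order $O(\varepsilon\ln|A|)$ contributions into the leading term. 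A final use of Lemma~\ref{thm:f-t}, $\tfrac12\norm{\rho_{AB}-\op{\Phi'}}_1\le\sqrt{1-F(\rho_{AB},\op{\Phi'})}$, then produces the asserted bound $(2\sqrt{\varepsilon\ln|A|})^{1/2}$.

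The main obstacle is bookkeeping the constants in the last step, not any conceptual hurdle. Pinsker and the purity estimate generate error terms of several different orders in $\sqrt{\varepsilon\ln|A|}$, so one must apply Lemma~\ref{thm:f-t} (twice) and Pinsker in just the right order, use the sharper bound $E_{\text{sq}}(A;B)_\rho\le S(A)_\rho$ for the marginal, and restrict to the range where $(2\sqrt{\varepsilon\ln|A|})^{1/2}<1$ (outside of which the claim is vacuous since $\tfrac12\norm{\cdot}_1\le 1$ always), in order to land precisely on the exponent $\tfrac14$ and the constant $\sqrt2$. A secondary point to verify is the transpose-trick identification: it is exactly what guarantees that every maximally entangled state on $\mc{H}_A\otimes\mc{H}_B$ has the form $U_B\Phi_{AB}U_B^\dagger$, so that a single local unitary on $B$ --- rather than a pair of local unitaries on $A$ and $B$ --- suffices in the statement.
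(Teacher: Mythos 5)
Your route is a genuine alternative, so a brief comparison is warranted before pointing to the gap. The paper purifies $\rho_{AB}$ to $\psi_{ABE}$ and shows that the hypothesis forces $D(\psi_{AE}\Vert\pi_A\otimes\psi_E)\le 2\varepsilon\log_2|A|$ (equivalently, $I(A\>B)_\rho\ge(1-2\varepsilon)\log_2|A|$, after dropping the nonnegative $D(\rho_A\Vert\pi_A)$); a \emph{single} application of Pinsker, Lemma~\ref{thm:f-t}, and Uhlmann together with monotonicity of fidelity under $\Tr_E$ then yields $F(\rho_{AB},U_B\Phi_{AB}U_B^\dagger)\ge 1-2\sqrt{\varepsilon\ln|A|}$, and one more Fuchs--van de Graaf step finishes. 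You instead split the hypothesis into a bound on $S(A)_\rho$ (using $E_{\text{sq}}\le S(A)$, which is sharper by a factor of two on the marginal than what the paper's chain would provide) and a bound on $S(AB)_\rho$, run two independent relative-entropy-to-trace-distance conversions, and reassemble via the leading eigenvector's Schmidt decomposition plus the transpose trick. Your construction is explicit, correctly flags that a dimension hypothesis $\dim\mc{H}_A=\dim\mc{H}_B$ is needed, and avoids the Uhlmann step --- whose implicit requirement that $\mc{H}_B$ be large enough to purify $\pi_A\otimes\psi_E$ the paper does not spell out.

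The gap is in the final absorption, which does not close as claimed. Chasing your estimates gives
\begin{align}
1-F\!\left(\rho_{AB},\op{\Phi'}\right)&\le (1-\lambda_{\max})+\bigl(1-F(\op{\psi_0},\op{\Phi'})\bigr)\nonumber\\
&\le 2\varepsilon\ln|A|+\norm{(\psi_0)_A-\rho_A}_1+\norm{\rho_A-\pi_A}_1\ \le\ 6\,\varepsilon\ln|A|+\sqrt{2\varepsilon\ln|A|}.\nonumber
\end{align}
Setting $t\coloneqq\sqrt{\varepsilon\ln|A|}$, matching the needed $1-F\le 2\sqrt{\varepsilon\ln|A|}$ requires $\sqrt{2}\,t+6t^2\le 2t$, i.e.\ $t\le(2-\sqrt 2)/6\approx 0.098$, whereas the regime in which the conclusion has content is $(2t)^{1/2}<1$, i.e.\ $t<1/2$. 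On $0.098<t<1/2$ the proposition is nontrivial but your argument does not establish it, so restricting to the non-vacuous range does not save you. This is not a matter of reordering the steps: your two separate conversions each inject an additive $O(\varepsilon\ln|A|)$ error --- one from the purity estimate $1-\lambda_{\max}$, one from transferring $\rho_A$ to $(\psi_0)_A$ --- with no analogue in the paper's one-shot Pinsker, and these are not uniformly dominated by $2\sqrt{\varepsilon\ln|A|}$. (As $\varepsilon\ln|A|\to 0$ your bound is in fact sharper, with constant $2^{1/4}$ rather than $2^{1/2}$ in front of $(\varepsilon\ln|A|)^{1/4}$, but that is a different and, for moderate $\varepsilon\ln|A|$, weaker statement than the one to be proved.)
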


\begin{proof}
Let us consider
\begin{equation}
E_{\text{sq}}\left(  A;B\right)  _{\rho}\geq\log_2\left\vert A\right\vert\left(  1-\varepsilon\right).
\end{equation}
Then 
\begin{equation}
\frac{1}{2}I\left(  A;B\right)  _{\rho}\geq E_{\text{sq}}\left(  A;B\right)  _{\rho} \geq \log_2\left\vert A\right\vert \left(
1-\varepsilon\right). 
\end{equation}
Let $\psi_{ABE}$ be a purification of $\rho_{AB}$. Then
\begin{align}
2\log_2\left\vert A\right\vert \left(  1-\varepsilon\right)    & \leq
I(A;B)_{\rho}\\
& =S(A)_{\rho}-S(A|B)_{\rho}\\
& =S(A)_{\rho}+S(A|E)_{\psi}.\label{eq:sqmi1}
\end{align}
From \eqref{eq:sqmi1}, we get 
\begin{align}
2\varepsilon\log_2\left\vert A\right\vert & \geq -S(A|E)+\log_2\left\vert A\right\vert+
\log_2\left\vert A\right\vert -S(A)\\
&\geq D\left( \psi_{AE} \Vert \pi_A\otimes \psi_E\right)\\
& \geq \frac{1}{2\ln 2} \left\Vert \psi_{AE}-\pi_A\otimes \psi_E\right\Vert_1^2,
\end{align}
where $\pi_A=\frac{\mathbbm{I}}{|A|}$ is the maximally mixed state. We have
\begin{align}
\left\Vert \psi_{AE}-\pi_A\otimes \psi_E\right\Vert_1\leq 2\sqrt{\varepsilon \ln|A|},
\end{align}
which implies (see Lemma~\ref{thm:f-t})
\begin{equation}
F(\psi_{AE},\pi_A\otimes \psi_E)\geq 1-2\sqrt{\varepsilon \ln|A|}.
\end{equation}
Invoking Ulhmann's theorem and then monotonicity of the fidelity under partial trace,
we can conclude that there exists some local unitary operator $U_{B}$ such that
\begin{align}
F(\rho_{AB},U_{AB}\Phi_{AB}U^\dagger_{AB})\geq 1-2\sqrt{\varepsilon \ln|A|}.\label{eq:fid-u-phi}
\end{align}
Using \eqref{eq:fid-u-phi} in Lemma~\ref{thm:f-t}, we get
\begin{align}
\norm{\rho_{AB}-U_{AB}\Phi_{B}U^\dagger_{B}}_1\leq 2\sqrt{2\sqrt{\varepsilon \ln|A|}}.
\end{align}
This completes the proof. 
\end{proof}

\subsubsection{Entropy of Entanglement}
\begin{proposition}\label{prop:eeprop}
Suppose that $\psi_{AB}$ is a pure state and that
\begin{equation}\label{eq:assump-EoE}
S(A)_{\psi}\geq\left(1-\varepsilon\right)\log_2|A|,
\end{equation}
for some $\varepsilon\in(0,1)$. Then there exists a unitary operator
$U_{B}$ such that
\begin{equation}
\frac{1}{2}\norm{U_{B}\psi_{AB}U_{B}^{\dag}-\Phi_{AB}}_{1}\leq (2\varepsilon\ln\left\vert A\right\vert)^{1/4}.
\end{equation}
\end{proposition}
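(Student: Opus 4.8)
The plan is to use that, for a pure state $\psi_{AB}$, near-maximal marginal entropy $S(A)_\psi$ forces the reduced state $\rho_A \coloneqq \Tr_B\{\psi_{AB}\}$ to be close to the maximally mixed state $\pi_A$, and then to promote this closeness on $A$ to closeness of the global pure state $\psi_{AB}$ to the maximally entangled state $\Phi_{AB}$ via Uhlmann's theorem. The argument runs in close parallel to the proof of Proposition~\ref{prop:squashed}.

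First I would restate the hypothesis \eqref{eq:assump-EoE} in terms of relative entropy. Since $D(\rho_A\Vert\pi_A) = \log_2|A| - S(A)_\psi$, the assumption gives $D(\rho_A\Vert\pi_A) \leq \varepsilon\log_2|A|$. Pinsker's inequality then yields $\Vert\rho_A-\pi_A\Vert_1^2 \leq 2\ln 2\cdot\varepsilon\log_2|A| = 2\varepsilon\ln|A|$, and hence $\tfrac12\Vert\rho_A-\pi_A\Vert_1 \leq \tfrac12\sqrt{2\varepsilon\ln|A|}$. Converting this to a fidelity statement using the lower bound in Lemma~\ref{thm:f-t}, namely $1-\sqrt{F(\rho_A,\pi_A)}\leq\tfrac12\Vert\rho_A-\pi_A\Vert_1$, gives $F(\rho_A,\pi_A)\geq(1-\tfrac12\sqrt{2\varepsilon\ln|A|})^2\geq 1-\sqrt{2\varepsilon\ln|A|}$.

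Next, $\psi_{AB}$ is a purification of $\rho_A$ and $\Phi_{AB}$ is a purification of $\pi_A$, both with purifying system $B$, so Uhlmann's theorem (Lemma~\ref{thm:ut}) provides a unitary $U_B$ on system $B$ such that $F(U_B\psi_{AB}U_B^\dagger,\Phi_{AB}) = F(\rho_A,\pi_A) \geq 1-\sqrt{2\varepsilon\ln|A|}$. A final application of the upper bound in Lemma~\ref{thm:f-t}, $\tfrac12\Vert\sigma-\tau\Vert_1\leq\sqrt{1-F(\sigma,\tau)}$, then gives $\tfrac12\Vert U_B\psi_{AB}U_B^\dagger-\Phi_{AB}\Vert_1 \leq \sqrt{\sqrt{2\varepsilon\ln|A|}} = (2\varepsilon\ln|A|)^{1/4}$, which is the claim. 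The one point requiring care is the Uhlmann step: the operator relating the two purifications is a unitary on $B$ precisely when $|B| = |A|$, so that $\Phi_{AB}$ (maximally entangled of Schmidt rank $|A|$) is genuinely a purification of $\pi_A$ with purifying system of the same dimension as $B$; if one only had $|B|\geq|A|$ the relating operator would be an isometry and the statement would need to be adjusted. Apart from this, the proof is the same routine chain of Pinsker and Fuchs--van de Graaf estimates as for Proposition~\ref{prop:squashed}, so I do not anticipate any further difficulty.
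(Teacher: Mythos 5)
Your proof is correct and follows the same chain as the paper's: rewrite the entropy hypothesis as $D(\psi_A\Vert\pi_A)\leq\varepsilon\log_2|A|$, apply Pinsker's inequality, convert to fidelity, invoke Uhlmann to get a unitary $U_B$, and finish with Fuchs--van de Graaf. The paper's version is more terse (it omits the intermediate fidelity estimates), but the underlying argument is identical; your observation about the Uhlmann step requiring $|B|$ compatible with $|A|$ is a reasonable extra note that the paper leaves implicit.
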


\begin{proof}
Consider that our inequality is the same as
\begin{equation}
S\left(A\right)  _{\psi}-\left(1-\varepsilon\right)  \log_2\left\vert
A\right\vert \geq 0.
\end{equation}
We find that
\begin{align}
S\left(A\right)_{\psi}-\left(1-\varepsilon\right) \log_2\left\vert
A\right\vert  & =S\left(A\right)_{\psi}-\log_2\left\vert A\right\vert
+\varepsilon\log_2\left\vert A\right\vert \\
& =-D\left(\psi_{A}\Vert\pi_{A}\right)+\varepsilon\log_2\left\vert A\right\vert
\end{align}
By assuming \eqref{eq:assump-EoE}, we find that
\begin{equation}
\varepsilon\log_2\left\vert A\right\vert   \geq D\left(  \psi_{A}\Vert\pi_{A}\right) 
 \geq \frac{1}{2\ln2}\left\Vert \psi_{A}-\pi_{A}\right\Vert _{1}^{2}.
\end{equation}
By an application of Uhlmann's theorem and Lemma~\ref{thm:f-t}, we recover the statement of the theorem.
\end{proof}

\clearpage

\chapter{Fundamental Limits on Entangling Abilities of Bipartite Quantum Interactions}\label{ch:bqi}
A bipartite quantum interaction is an underlying Hamiltonian that governs the physical evolution of an open bipartite quantum system\blfootnote{Most  of this chapter is based on \cite{DBW17}, a joint work with Stefan B\"auml and Mark M.~Wilde.}. In general, any two-body quantum system of interest can be in contact with a bath, and part of the composite system may be inaccessible to observers possessing these systems. As contact with the surrounding (bath) is unavoidable, the study of bipartite quantum interactions is pertinent. Depending on the kind of bipartite interaction and the input states, entanglement can be created, destroyed, or changed between two quantum systems \cite{PV07,HHHH09}. As entanglement is one of the fundamental and intriguing quantum phenomena~\cite{EPR35,S35}, determining the entangling abilities of bipartite quantum interactions is important. These bounds imply fundamental limitations on information processing capabilities over a bipartite quantum network. Non-trivial bounds on the entangling abilities can also serve as the benchmarks for the efficiency testing of bipartite quantum gates in Noisy Intermediate-Scale Quantum (NISQ) processors \cite{Pre18} (cf.~\cite{KDWW18}). 

It is known from quantum mechanics that a closed system evolves according to a unitary transformation \cite{Dbook81,SCbook95}. Let $U^{\hat{H}}_{A'B'E'\to ABE}$ be a unitary associated to an underlying Hamiltonian $\hat{H}$, which governs the physical evolution of the input subsystems $A'$ and $B'$ in the presence of a bath $E'$, to produce output subsystems $A$ and $B$ for the observers and $E$ for the bath.
In general, the individual input systems $A'$, $B'$, and $E'$ and the respective output systems $A$, $B$, and $E$ can have different dimensions. Initially, in the absence of an interaction Hamiltonian $\hat{H}$, the bath is taken to be in a pure state and the systems of interest have no correlation with the bath; i.e., the state of the composite system $A'B'E'$ is of the form $\omega_{A'B'}\otimes \tau_{E'}$, for some fixed state $\tau_{E'}$ of the bath. Under the action of the Hamiltonian $\hat{H}$, the state of the composite system transforms as
\begin{equation}\label{eq:bi-u}
\rho_{ABE}=U^{\hat{H}}(\omega_{A'B'}\otimes \tau_{E'})(U^{\hat{H}})^\dag.
\end{equation}
Since the system $E$ in \eqref{eq:bi-u} is inaccessible, the evolution of the systems of interest is noisy in general. The noisy evolution of the bipartite system $A'B'$ under the action of the interaction Hamiltonian $\hat{H}$ is represented by a completely positive, trace-preserving (CPTP) map \cite{Sti55}, called a bipartite quantum channel:
\begin{equation}\label{eq:bipartite-int}
\mc{N}^{\hat{H}}_{A'B'\to AB}(\omega_{A'B'})=\Tr_{E}\{U^{\hat{H}}(\omega_{A'B'}\otimes \tau_{E'})(U^{\hat{H}})^\dag\},
\end{equation}
where system $E$ represents inaccessible degrees of freedom. In particular, when the Hamiltonian $\hat{H}$ is such that there is no interaction between the composite system $A'B'$ and the bath $E'$, and $A'B'\simeq AB$, then $\mc{N}^{\hat{H}}$ corresponds to a bipartite unitary, i.e., $\mc{N}^{\hat{H}}(\cdot)=U^{\hat{H}}_{A'B'\to AB}(\cdot)(U^{\hat{H}}_{A'B'\to AB})^\dag$. 

In the setting of an information processing task, when two spatially separated observers have access to different pair of quantum systems, $(A',A)$ or $(B',B)$, then a bipartite channel $\mc{N}_{A'B'\to AB}$ is also called \textit{bidirectional channel}. 

In this chapter, we focus on two different information-processing tasks relevant for bipartite quantum interactions, the first being entanglement distillation~\cite{BBPS96,BBP+97,Rai99} and the second secret key agreement~\cite{D05,DW05,HHHO05,HHHO09}. Entanglement distillation is the task of generating a maximally entangled state, such as the singlet state, when two separated quantum systems undergo a bipartite interaction. Whereas, secret key agreement is the task of extracting maximal classical correlation between two separated systems, such that it is independent of the state of the bath system, which an eavesdropper could possess. 

In an information-theoretic setting, a bipartite interaction between classical systems was first considered in \cite{Sha61} in the context of communication; therein, a bipartite interaction was called a two-way communication channel. In the quantum domain, bipartite unitaries have been widely considered in the context of their entangling ability, applications for interactive communication tasks, and the simulation of bipartite Hamiltonians in distributed quantum computation \cite{BDEJ95,ZZF00,EJPP00,BRV00,NC00,CLP01,CDKL01,BHLS03,CLV04,JMZL17,DSW17} (see also Section~\ref{sec:rev-control-channels-1}).  These unitaries form the simplest model of non-trivial interactions in many-body quantum systems and have been used as a model of scrambling in the context of quantum chaotic systems~\cite{SS08b,HQRY16,DHW16}, as well as for the internal dynamics of a black hole~\cite{HP07} in the context of the information-loss paradox~\cite{Haw76}. More generally, \cite{CLL06} developed the model of a bipartite interaction or two-way quantum communication channel.  Bounds on the rate of entanglement generation in open quantum systems undergoing time evolution have also been discussed for particular classes of quantum dynamics~\cite{Bra07,DKSW18}.  

The maximum rate at which a particular task can be accomplished by allowing the use of a bipartite interaction a large number of times, is equal to the capacity of the interaction for the task. The entanglement generating capacity quantifies the maximum rate of entanglement that can be generated  from a bipartite interaction. Various capacities of a general bipartite unitary evolution were formalized in \cite{BHLS03}. Later, various capacities of a general two-way channel were discussed in \cite{CLL06}. The entanglement generating capacities or entangling power of bipartite unitaries for different communication protocols have been widely discussed in the  literature~\cite{ZZF00,LHL03,BHLS03,LSW09,WSM17,CY16}. Also, prior to the work of \cite{DBW17}, it was an open question to find a non-trivial, computationally efficient upper bound on the entanglement generating capacity of a bipartite quantum interaction. 

In this chapter, we determine bounds on the capacities of bipartite interactions for entanglement generation and secret key agreement. The organization of this chapter is as follows. In Section~\ref{sec:ent-dist}, we derive a strong converse upper bound on the rate at which entanglement can be distilled from a bipartite quantum interaction. This bound is given by an information quantity introduced in \cite[Section~3.1]{DBW17}, called the bidirectional max-Rains information $R^{2\to 2}_{\max}({\mc{N}})$ of a bidirectional channel $\mc{N}$. The bidirectional max-Rains information is the solution to a semi-definite program and is thus efficiently computable. In Section~\ref{sec:priv-key}, we derive a strong converse upper bound on the rate at which a secret key can be distilled from a bipartite quantum interaction. This bound is given by a related information quantity introduced in \cite[Section~4.1]{DBW17}, called the bidirectional max-relative entropy of entanglement $E^{2\to 2}_{\max}(\mc{N})$ of a bidirectional channel $\mc{N}$. In Section~\ref{sec:ent-mes-sim}, we derive upper bounds on the entanglement generation and secret key agreement capacities of bidirectional PPT- and teleportation-simulable channels, respectively. Our upper bounds on the capacities of such channels depend only on the entanglement of the resource states with which these bidirectional channels can be simulated.

\section{Bipartite interactions and controlled unitaries}\label{sec:rev-control-channels-1}
Let us consider a bipartite quantum interaction between systems $X'$ and $B'$, generated by a Hamiltonian $\hat{H}_{X'B'E'}$, where $E'$ is a bath system. Suppose that the Hamiltonian is time independent, having the following form:
\begin{equation}\label{eq:c-ham-1}
\hat{H}_{X'B'E'}\coloneqq \sum_{x\in\mc{X}}\ket{x}\!\bra{x}_{X'}\otimes \hat{H}^x_{B'E'},
\end{equation}
where $\{\vert x \rangle\}_{x\in\mc{X}}\in\ONB(\mc{H}_{X'})$ and $\hat{H}^x_{B'E'}$ is a Hamiltonian for the composite system $B'E'$. Then, the evolution of the composite system $X'B'E'$  is given by the following controlled unitary:
\begin{equation}
U_{\hat{H}}(t)\coloneqq\sum_{x\in\mc{X}}\ket{x}\!\bra{x}_{X'}\otimes \exp\!\left(-\frac{\iota}{\hslash}\hat{H}^x_{B'E'}t\right),
\end{equation}
where $t$ denotes time. Suppose that the systems $B'$ and $E'$ are not correlated before the action of Hamiltonian $\hat{H}^x_{B'E'}$ for each $x\in\mc{X}$. Then, the evolution of the system $B'$ under the interaction $\hat{H}^x_{B'E'}$ is given by a quantum channel $\mc{M}^x_{B'\to B}$ for all $x$.

For some distributed quantum computing and information processing tasks where the controlling system $X$ and input system $B'$ are jointly accessible, the following bidirectional channel is relevant:
\begin{equation}\label{eq:bi-ch-mc}
\mc{N}_{X'B'\to XB}(\cdot)\coloneqq \sum_{x\in\mc{X}}\ket{x}\!\bra{x}_X\otimes\mc{M}^x_{B'\to B}\(\bra{x}(\cdot)\ket{x}_{X'}\).
\end{equation}
In the above, $X'$ is a controlling system that determines which evolution from the set $\{\mc{M}^x\}_{x\in\mc{X}}$ takes place on input system $B'$. 
In particular, when $X'$ and $B'$ are spatially separated and the input state for the system $X'B'$ are considered to be in a product state, the noisy evolution for such constrained interaction is given by the following bidirectional channel:
\begin{equation}\label{eq:bi-ch-mcc}
\mc{N}_{X'B'\to XB}(\sigma_{X'}\otimes\rho_{B'})\coloneqq \sum_{x\in\mc{X}}\bra{x}\sigma_{X'}\ket{x}_{X'}\ket{x}\!\bra{x}_X\otimes\mc{M}^x_{B'\to B}(\rho_{B'}).
\end{equation}

\section{Entanglement distillation from bipartite quantum interactions}\label{sec:ent-dist}	
In this section, we define the bidirectional max-Rains information $R^{2\to 2}_{\max}(\mc{N})$ of a bidirectional channel $\mc{N}$ and show that it is not enhanced by amortization.  We also prove that $R^{2\to 2}_{\max}(\mc{N})$ is an upper bound on the amount of entanglement that can be distilled from a bidirectional channel $\mc{N}$. We do so by adapting to the bidirectional setting, the result  from \cite{KW17} and recent techniques developed in  \cite{CM17,RKB+17,BW17} for point-to-point quantum communication protocols.  

\begin{figure}
		\centering
		\includegraphics[scale=0.639]{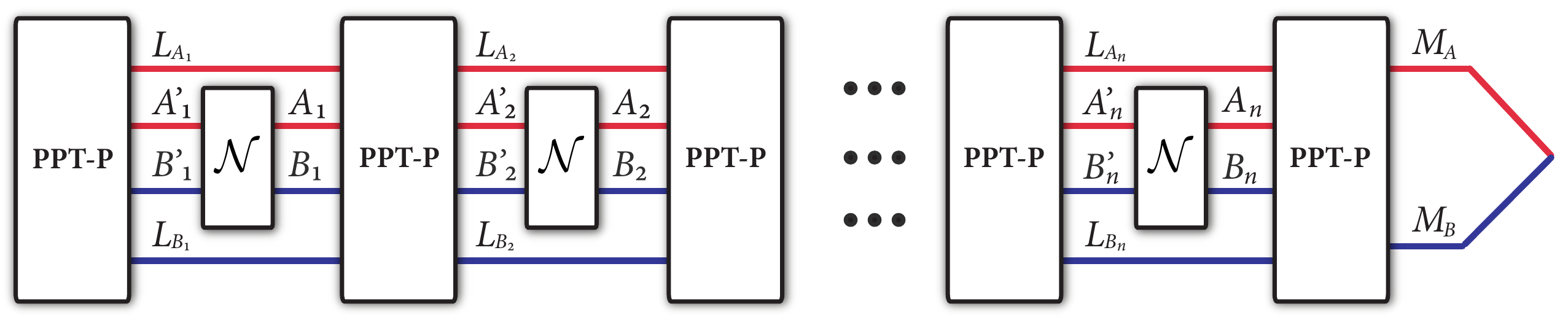}
		\caption{A protocol for PPT-assisted bidirectional quantum communication that uses a bidirectional quantum channel $\mc{N}$ $n$ times. Every channel use is interleaved by a PPT-preserving (PPT-P) channel. The goal of such a protocol is to produce an approximate maximally entangled state in the systems $M_A$ and $M_B$, where Alice possesses system $M_A$ and Bob system $M_B$.}\label{fig:bi-q-com}
	\end{figure}
\subsection{Bidirectional max-Rains information} 
The following definition generalizes the  max-Rains information from \eqref{eq:max-Rains-channel}, \eqref{eq:rains-omega}, and \eqref{eq:rains-channel-sdp} to the bidirectional setting:

\begin{definition}[Bidirectional max-Rains information]\label{def:bi-max-rains}
The bidirectional max-Rains information of a bidirectional quantum channel $\mc{N}_{A'B'\to AB}$ is defined as
\begin{equation}
R_{\max}^{2\to 2}(\mc{N})\coloneqq \log \Gamma^{2\to2} (\mc{N}), \label{eq:bi-max-Rains-info}
\end{equation}
where $\Gamma^{2\to2}(\mc{N})$ is the solution to the following semi-definite program:
\begin{align}
\textnormal{minimize}\ &\ \norm{\Tr_{AB}\{V_{L_A ABL_B}+Y_{L_A ABL_B}\}}_\infty\nonumber\\
\textnormal{subject to}\ &\ V_{L_A ABL_B},Y_{L_A ABL_B}\geq 0,\nonumber\\
&\ \T_{BL_B}(V_{L_A ABL_B}-Y_{L_A ABL_B})\geq J^\mc{N}_{L_AABL_B},
\label{eq:bi-rains-channel-sdp}
\end{align} 
where $J^\mc{N}_{L_AABL_B}$ denotes the Choi operator of the bidirectional channel $\mc{N}$, such that $L_A\simeq A'$, and $L_B\simeq B'$.
\end{definition}

\begin{remark}
By employing the Lagrange multiplier method,
the bidirectional max-Rains information of a bidirectional channel $\mc{N}_{A'B'\to AB}$ can also be expressed as
\begin{equation}
R^{2\to 2}_{\max}(\mc{N})=\log \Gamma^{2\to2}(\mc{N}),
\end{equation}
where $\Gamma^{2\to2}(\mc{N})$ is solution to the following semi-definite program (SDP):
\begin{align}
\textnormal{maximize}\ &\ \Tr\{J^{\mc{N}}_{L_AABL_B}X_{L_A ABL_B}\}\nonumber\\
\textnormal{subject to}\ &\ X_{L_A ABL_B},\rho_{L_AL_B}\geq 0,\nonumber\\
&\ \Tr\{\rho_{L_A L_B}\}=1,\ -\rho_{L_A L_B}\otimes \bm{1}_{AB}\leq \T_{BL_B}( X_{L_A ABL_B})\leq \rho_{L_A L_B}\otimes \bm{1}_{AB},\label{eq:bi-rains-channel-sdp-primal}
\end{align}
such that $L_A\simeq A'$, and $L_B\simeq B'$. 
Strong duality holds by employing Slater's condition \cite{Wat15} (see also \cite{WD16a}). Thus, as indicated above, the optimal values of the primal and dual semi-definite programs, i.e., \eqref{eq:bi-rains-channel-sdp-primal} and \eqref{eq:bi-rains-channel-sdp}, respectively, are equal.
\end{remark}

The following proposition constitutes one of our main technical results, and an immediate corollary of it is that amortization does not enhance the bidirectional max-Rains information of a bidirectional quantum channel. 

\begin{proposition}[Amortization ineq.~for  bidirectional max-Rains info.]\label{prop:rains-tri-ineq}
Let $\rho_{L_A A'B'L_B}$ be an arbitrary state and let $\mc{N}_{A'B'\to AB}$ be a bidirectional channel. Then 
\begin{equation}
R_{\max}(L_A A; B L_B)_{\omega}\leq R_{\max}(L_A A';B' L_B)_{\rho}+ R^{2\to 2}_{\max}(\mc{N}),
\end{equation}
where $\omega_{L_A AB L_B}=\mc{N}_{A'B'\to AB}(\rho_{L_A A'B'L_B})$ and $R^{2\to 2}_{\max}(\mc{N})$ is the bidirectional max-Rains information of $\mc{N}_{A'B'\to AB}$. 
\end{proposition}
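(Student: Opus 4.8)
The plan is to argue directly with the semidefinite programs, adapting to the bidirectional setting the technique used for the point-to-point max-Rains information. Recall from \eqref{eq:rains-w}--\eqref{eq:rains-state-sdp} that $R_{\max}(L_AA';B'L_B)_\rho=\log W(L_AA';B'L_B)_\rho$, where $W(L_AA';B'L_B)_\rho$ is the minimum of $\Tr\{C+D\}$ over positive semi-definite $C,D$ on $\mc{H}_{L_AA'B'L_B}$ with $\T_{B'L_B}(C-D)\geq\rho_{L_AA'B'L_B}$, and from Definition~\ref{def:bi-max-rains} that $R^{2\to 2}_{\max}(\mc{N})=\log\Gamma^{2\to 2}(\mc{N})$, where $\Gamma^{2\to 2}(\mc{N})$ is the minimum of $\norm{\Tr_{AB}\{V+Y\}}_\infty$ over positive semi-definite $V,Y$ on $\mc{H}_{\wt{L}_AAB\wt{L}_B}$ with $\T_{B\wt{L}_B}(V-Y)\geq J^{\mc{N}}_{\wt{L}_AAB\wt{L}_B}$; here I would write the Choi-reference systems as $\wt{L}_A\simeq A'$ and $\wt{L}_B\simeq B'$ to keep them distinct from the systems $L_A,L_B$ carried by $\rho$. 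Since $R_{\max}(L_AA;BL_B)_\omega=\log W(L_AA;BL_B)_\omega$ is again a minimization, it suffices to exhibit a single feasible pair $(E,F)$ for that program with $\Tr\{E+F\}\leq W(L_AA';B'L_B)_\rho\cdot\Gamma^{2\to 2}(\mc{N})$, and then take logarithms.

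First I would take optimal solutions $(C,D)$ and $(V,Y)$ for the two programs above and set
\begin{align}
E_{L_AABL_B} & \coloneqq\<\Upsilon|\(C\otimes V+D\otimes Y\)|\Upsilon\>, \\
F_{L_AABL_B} & \coloneqq\<\Upsilon|\(C\otimes Y+D\otimes V\)|\Upsilon\>,
\end{align}
where in each term the contractions against $\<\Upsilon|$ and $|\Upsilon\>$ are taken between $A'$ and $\wt{L}_A$ and between $B'$ and $\wt{L}_B$, so that the outputs live on $\mc{H}_{L_AABL_B}$. Positivity $E,F\geq 0$ should be immediate, since each of $C\otimes V,D\otimes Y,C\otimes Y,D\otimes V$ is positive semi-definite and the $|\Upsilon\>$-contraction has the form $Z\mapsto K^\dag ZK$ with $K=|\Upsilon\>_{A':\wt{L}_A}\otimes|\Upsilon\>_{B':\wt{L}_B}\otimes\bm{1}$, hence preserves positivity.

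The two substantive checks are the objective value and feasibility. For the objective I would use the identity $E+F=\<\Upsilon|\((C+D)\otimes(V+Y)\)|\Upsilon\>$, pull $\Tr_{AB}$ through the contraction (the systems $A,B$ being spectators of it), bound $\Tr_{AB}\{V+Y\}\leq\Gamma^{2\to 2}(\mc{N})\,\bm{1}_{\wt{L}_A\wt{L}_B}$, and then apply $\<\Upsilon|(Q_{R_AR}\otimes\bm{1}_A)|\Upsilon\>_{R:A}=\Tr_R\{Q_{R_AR}\}$ twice (once for each contracted pair) to conclude $\Tr\{E+F\}\leq\Gamma^{2\to 2}(\mc{N})\,\Tr\{C+D\}$. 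For feasibility I would use the companion identity $E-F=\<\Upsilon|\((C-D)\otimes(V-Y)\)|\Upsilon\>$, apply $\T_{BL_B}$, move $\T_{L_B}$ onto the first tensor factor and $\T_B$ onto the second, and then invoke the transpose trick $\T_{B'}(\op{\Upsilon}_{B'\wt{L}_B})=\T_{\wt{L}_B}(\op{\Upsilon}_{B'\wt{L}_B})$ (cf.~\eqref{eq:PT-last}) on the $B'$--$\wt{L}_B$ contraction to rewrite $\T_{BL_B}(E-F)=\<\Upsilon|\(\T_{B'L_B}(C-D)\otimes\T_{B\wt{L}_B}(V-Y)\)|\Upsilon\>$. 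Feasibility of the two input programs then gives $\T_{B'L_B}(C-D)\geq\rho_{L_AA'B'L_B}\geq 0$ and $\T_{B\wt{L}_B}(V-Y)\geq J^{\mc{N}}_{\wt{L}_AAB\wt{L}_B}\geq 0$, and using $P\otimes Q\geq P'\otimes Q'$ whenever $P\geq P'\geq 0$ and $Q\geq Q'\geq 0$, together with order-preservation of the contraction, I would conclude $\T_{BL_B}(E-F)\geq\<\Upsilon|\(\rho_{L_AA'B'L_B}\otimes J^{\mc{N}}_{\wt{L}_AAB\wt{L}_B}\)|\Upsilon\>=\mc{N}_{A'B'\to AB}(\rho_{L_AA'B'L_B})=\omega_{L_AABL_B}$, the last two equalities being the post-selected-teleportation/Choi identity \eqref{eq:choi-sim}.

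The main obstacle I anticipate is precisely the transpose bookkeeping in the feasibility step: one must track carefully which systems each partial transpose acts on, keep the Choi references $\wt{L}_A,\wt{L}_B$ separate from the $L_A,L_B$ belonging to $\rho$, and make sure the transpose trick is applied to exactly the contracted pair $B'$--$\wt{L}_B$ so that the transposed SDP constraints line up with the $\T_{BL_B}$ required by the program for $W(L_AA;BL_B)_\omega$. Positivity, the product bound on the trace, and the passage to logarithms should all be routine once \eqref{eq:choi-sim}, \eqref{eq:PT-last}, and the SDP forms are in place. I would then record the immediate consequence that amortization does not enhance the bidirectional max-Rains information of a bidirectional channel.
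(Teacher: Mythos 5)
Your proposal is correct and follows essentially the same route as the paper's proof: you construct the identical feasible pair $(E,F)$ for $W(L_AA;BL_B)_\omega$ by contracting the optimal SDP operators against $|\Upsilon\>$, verify feasibility using the transpose-self-duality of $\op{\Upsilon}$ and the Choi/post-selected-teleportation identity, and bound the objective via the constraint on $\Tr_{AB}\{V+Y\}$. The only cosmetic difference is that the paper closes the objective bound with H\"older's inequality after ricocheting the contraction into a trace pairing, whereas you invoke the equivalent operator inequality $\Tr_{AB}\{V+Y\}\leq\Gamma^{2\to2}(\mc{N})\,\bm{1}$ directly.
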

\begin{proof}
We adapt the proof steps of \cite[Proposition 1]{BW17} to the bidirectional setting. By removing logarithms and applying \eqref{eq:rains-w} and \eqref{eq:bi-max-Rains-info}, the desired inequality is equivalent to the following
\begin{equation}\label{eq:w-omega-ineq}
W(L_A A; B L_B)_{\omega}\leq W(L_A A';B' L_B)_{\rho}\cdot \Gamma^{2\to2}(\mc{N}),
\end{equation}
and so we aim to prove this one. Exploiting the identity in \eqref{eq:rains-state-sdp}, we find that 
\begin{equation}
W(L_A A'; B' L_B)_{\rho}=\min \Tr\{C_{L_AA'B'L_B}+D_{L_AA'B'L_B}\},
\end{equation}
subject to the constraints 
\begin{align}
C_{L_AA'B'L_B},D_{L_AA'B'L_B} &\geq 0,\\
\T_{B'L_{B}}(C_{L_AA'B'L_B}-D_{L_AA'B'L_B})&\geq \rho_{L_AA'B'L_B},
\end{align}
while the definition in \eqref{eq:bi-rains-channel-sdp} gives that 
\begin{equation}
\Gamma^{2\to2}(\mc{N})=\min \norm{\Tr_{AB}\{V_{R_A ABR_B}+Y_{R_A ABR_B}\}}_\infty,
\end{equation}
subject to the constraints 
\begin{align}
V_{R_AABR_B},Y_{R_AABR_B} &\geq 0,\\
 \T_{BR_B}(V_{R_A ABR_B}-Y_{R_A ABR_B})&\geq J^\mc{N}_{R_AABR_B}.\label{eq:choi-bi-b}
\end{align}
The identity in \eqref{eq:rains-state-sdp} implies that the left-hand side of \eqref{eq:w-omega-ineq} is equal to 
\begin{equation}
W(L_AA;BL_B)_\omega=\min \Tr\{E_{L_AABL_B}+F_{L_AABL_B}\},
\end{equation}
subject to the constraints
\begin{align}
E_{L_AABL_B},F_{L_AABL_B}&\geq 0,\label{eq:rains-sdp-ef}\\
\mc{N}_{A'B'\to AB}(\rho_{L_AA'B'L_B})&\leq \T_{BL_B}(E_{L_AABL_B}-F_{L_AABL_B})\label{eq:rains-sdp-channel-ef}.
\end{align}

Once we have these SDP formulations, we can now show that the inequality in \eqref{eq:w-omega-ineq} holds by making appropriate choices for $E_{L_AABL_B}, F_{L_AABL_B}$.  Let $C_{L_AA'B'L_B}$ and $ D_{L_AA'B'L_B}$ be optimal for $W(L_AA';B'L_B)_\rho$, and let $V_{L_AABL_B}$ and $Y_{L_AABL_B}$ be optimal for $\Gamma^{2\to2}(\mc{N})$. Let $\ket{\Upsilon}_{R_AR_B:A'B'}$ be the maximally entangled vector. Choose
\begin{align}
E_{L_AABL_B}&=\bra{\Upsilon}_{R_AR_B:A'B'}C_{L_AA'B'L_B}\otimes V_{R_AABR_B}+D_{L_AA'B'L_B}\otimes Y_{L_AABL_B}\ket{\Upsilon}_{R_AR_B:A'B'}\\
F_{L_AABL_B}&=\bra{\Upsilon}_{R_AR_B:A'B'}C_{L_AA'B'L_B}\otimes Y_{R_AABR_B}+D_{L_AA'B'L_B}\otimes V_{R_AABR_B}\ket{\Upsilon}_{R_AR_B:A'B'}.
\end{align}
The above choices can be thought of as bidirectional generalizations of those made in the proof of
\cite[Proposition 1]{BW17} (see also
\cite[Proposition 6]{WFD17}), and they can be understood roughly via \eqref{eq:choi-sim} as a post-selected teleportation of the optimal operators of $W(L_AA';B'L_B)_\rho$ through the optimal operators of $\Gamma^{2\to2}(\mc{N})$, with the optimal operators of $W(L_AA';B'L_B)_{\rho}$ being in correspondence with the Choi operator $J^\mc{N}_{R_AABR_B}$ through \eqref{eq:choi-bi-b}. Then, we have, $E_{L_AABL_B},F_{L_AABL_B}\geq 0$, because
\begin{equation}
C_{L_AA'B'L_B}, D_{L_AA'B'L_B}, Y_{R_AABR_B}, V_{R_AABR_B}\geq 0.
\end{equation}
Also, consider that
\begin{align}
&E_{L_AABL_B}-F_{L_AABL_B}\nonumber\\
&\quad = \bra{\Upsilon}_{R_AR_B:A'B'}(C_{L_AA'B'L_B}-D_{L_AA'B'L_B})\otimes (V_{R_AABR_B}- Y_{R_AABR_B})\ket{\Upsilon}_{R_AR_B:A'B'}\\
&\quad = \Tr_{R_AA'B'R_B}\{\ket{\Upsilon}\!\bra{\Upsilon}_{R_AR_B:A'B'}(C_{L_AA'B'L_B}-D_{L_AA'B'L_B})\otimes (V_{R_AABR_B}- Y_{R_AABR_B})\}.
\end{align}
 Then, using the abbreviations $E'\coloneqq~E_{L_AABL_B}$, $ F'\coloneqq~F_{L_AABL_B}$, $C'\coloneqq~C_{L_AA'B'L_B}$,  $D'\coloneqq~D_{L_AA'B'L_B}$, $V'\coloneqq~V_{R_AABR_B}$,  and $Y'\coloneqq~Y_{R_AABR_B}$, we have
\begin{align}
\T_{BL_B}(E'-F')
& = \T_{BL_B}\!\left[\Tr_{R_AA'B'R_B}\{\ket{\Upsilon}\!\bra{\Upsilon}_{R_AR_B:A'B'}(C'-D')\otimes (V'- Y')\}\right]\\
& = \T_{BL_B}\!\left[\Tr_{R_AA'B'R_B}\{\ket{\Upsilon}\!\bra{\Upsilon}_{R_AR_B:A'B'}(C'-D')\otimes(\T_{R_B}\circ\T_{R_B}) (V'- Y')\}\right]\\
& = \T_{BL_B}\!\left[\Tr_{R_AA'B'R_B}\{\T_{R_B}\ket{\Upsilon}\!\bra{\Upsilon}_{R_AR_B:A'B'} (C'-D')\otimes \T_{R_B} (V'- Y')\}\right]\\
& = \T_{BL_B}\!\left[\Tr_{R_AA'B'R_B}\{\ket{\Upsilon}\!\bra{\Upsilon}_{R_AR_B:A'B'} \T_{B'}(C'-D')\otimes \T_{R_B} (V'- Y')\}\right]\\
&=\Tr_{R_AA'B'R_B}\{\ket{\Upsilon}\!\bra{\Upsilon}_{R_AR_B:A'B'} \T_{B'L_B}(C'-D')\otimes \T_{BR_B} (V'- Y')\}\\
& \geq \bra{\Upsilon}_{R_AR_B:AB}\rho_{L_AA'B'L_B}\otimes J^\mc{N}_{R_AABR_B}\ket{\Upsilon}_{R_AR_B:AB}\\
& =\mc{N}_{A'B'\to AB}(\rho_{L_AA'B'L_B}).
\end{align}
In the above, we employed properties of the partial transpose reviewed in \eqref{eq:PT-1}--\eqref{eq:PT-last}.
Now, consider that
\begin{align}
&\Tr\{E_{L_AABL_B}+F_{L_AABL_B}\} \nonumber\\
& = \Tr\{\bra{\Upsilon}_{R_AR_B:A'B'}(C_{L_AA'B'L_B}+D_{L_AA'B'L_B})\otimes (V_{R_AABR_B}+ Y_{R_AABR_B})\ket{\Upsilon}_{R_AR_B:A'B'}\}\\
& = \Tr\{(C_{L_AA'B'L_B}+D_{L_AA'B'L_B}) T_{A'B'}(V_{A'ABB'}+ Y_{A'ABB'})\}\\
& = \Tr\{(C_{L_AA'B'L_B}+D_{L_AA'B'L_B}) T_{A'B'}(\Tr_{AB}\{V_{A'ABB'}+ Y_{A'ABB'})\}\}\\
& \leq \Tr\{(C_{L_AA'B'L_B}+D_{L_AA'B'L_B})\}\norm{ T_{A'B'}(\Tr_{AB}\{V_{A'ABB'}+ Y_{A'ABB'})\}}_\infty\\
& = \Tr\{(C_{L_AA'B'L_B}+D_{L_AA'B'L_B})\}\norm{ \Tr_{AB}\{V_{A'ABB'}+ Y_{A'ABB'}\}}_\infty\\
&=W(L_AA';B'L_B)_\rho\cdot \Gamma^{2\to2}(\mc{N}).
\end{align}
The inequality is a consequence of H\"{o}lder's inequality \cite{Bha97}. The final equality follows because the spectrum of a positive semi-definite operator is invariant under the action of a full transpose (note, in this case, $\T_{A'B'}$ is the full transpose as it acts on reduced positive semi-definite operators $V_{A'B'}$ and $Y_{A'B'}$).

Therefore, we can infer that our choices of $E_{L_AABL_B}, F_{L_AABL_B}$ are feasible for $W(L_AA;BL_B)_\omega$. Since $W(L_AA;BL_B)_\omega$ involves a minimization over all  $E_{L_AABL_B}, F_{L_AABL_B}$ satisfying \eqref{eq:rains-sdp-ef} and \eqref{eq:rains-sdp-channel-ef}, this concludes our proof of \eqref{eq:w-omega-ineq}.
\end{proof}
\bigskip

An immediate corollary of Proposition~\ref{prop:rains-tri-ineq} is the following:
\begin{corollary}\label{cor:rains-tri-ineq}
Amortization does not enhance the bidirectional max-Rains information of a bidirectional quantum channel $\mc{N}_{A'B'\to AB}$; i.e., the following inequality holds
\begin{equation}
R^{2\to 2}_{\max,A}(\mc{N})\leq R^{2\to 2}_{\max}(\mc{N}),
\end{equation}
where $R^{2\to 2}_{\max, A} (\mc{N})$ is a measure of the entangling power  of a bidirectional channel $\mc{N}$, i.e.,
\begin{equation}\label{eq:ent-locc-a}
R^{2\to 2}_{\max, A} (\mc{N})\coloneqq \sup_{\rho_{L_AA'B'L_B}\in\msc{D}(\mc{H}_{L_AA'B'L_B})} \left[R_{\max}(L_AA;BL_B)_{\sigma}-R_{\max}(L_AA';B'L_B)_{\rho}\right],
\end{equation}
and $\sigma_{L_AABL_B}\coloneqq \mc{N}_{A'B'\to AB}(\rho_{L_AA'B'L_B})$, where $L_A$ and $L_B$ can be arbitrarily large.  
\end{corollary}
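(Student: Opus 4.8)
The plan is to obtain this as an immediate ``single-letterization'' of Proposition~\ref{prop:rains-tri-ineq}, which already contains all of the work. First I would fix an arbitrary state $\rho_{L_AA'B'L_B}\in\msc{D}(\mc{H}_{L_AA'B'L_B})$ (with $L_A,L_B$ of arbitrary finite dimension) and write $\sigma_{L_AABL_B}\coloneqq\mc{N}_{A'B'\to AB}(\rho_{L_AA'B'L_B})$, exactly as in \eqref{eq:ent-locc-a}. Proposition~\ref{prop:rains-tri-ineq} then applies verbatim and yields
\begin{equation}
R_{\max}(L_AA;BL_B)_\sigma\leq R_{\max}(L_AA';B'L_B)_\rho+R^{2\to2}_{\max}(\mc{N}).
\end{equation}

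Next I would rearrange this to
\begin{equation}
R_{\max}(L_AA;BL_B)_\sigma-R_{\max}(L_AA';B'L_B)_\rho\leq R^{2\to2}_{\max}(\mc{N}),
\end{equation}
and observe that the right-hand side is a fixed quantity, independent of the chosen input state $\rho_{L_AA'B'L_B}$. Since the inequality holds for every such state---including those with arbitrarily large $L_A$ and $L_B$---taking the supremum over all $\rho_{L_AA'B'L_B}$ on the left-hand side preserves the bound, and that supremum is by definition $R^{2\to2}_{\max,A}(\mc{N})$ via \eqref{eq:ent-locc-a}. Hence $R^{2\to2}_{\max,A}(\mc{N})\leq R^{2\to2}_{\max}(\mc{N})$, which is the claim.

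I do not anticipate any genuine obstacle in this step: all of the technical content---constructing feasible operators $E_{L_AABL_B},F_{L_AABL_B}$ for the SDP computing $W(L_AA;BL_B)_\sigma$ out of the optimal operators of $W(L_AA';B'L_B)_\rho$ and $\Gamma^{2\to2}(\mc{N})$ via a post-selected-teleportation identity, verifying the partial-transpose constraint in \eqref{eq:rains-sdp-channel-ef}, and bounding the objective by H\"older's inequality---has already been discharged in the proof of Proposition~\ref{prop:rains-tri-ineq}. The only remaining move is the elementary observation above that a state-independent upper bound on a difference of entanglement quantities survives a supremum over states. (The reverse inequality, which would upgrade this to an equality, is not needed for the applications to entanglement-distillation capacities, so I would leave it aside here.)
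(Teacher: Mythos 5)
Your proposal is correct and follows exactly the same route as the paper: apply Proposition~\ref{prop:rains-tri-ineq} to an arbitrary input state, rearrange, and take the supremum over states. No further comment is needed.
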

\begin{proof}
The inequality $R^{2\to 2}_{\max,A}(\mc{N})\leq R^{2\to 2}_{\max}(\mc{N})$ is an immediate consequence of Proposition~\ref{prop:rains-tri-ineq}. Let $\rho_{L_AA'B'L_B}$ denote an arbitrary input state. Then from Proposition~\ref{prop:rains-tri-ineq} 
\begin{equation} \label{eq:r-lower-ineq}
R_{\max}(L_AA;BL_B)_\omega-R_{\max}(L_AA';B'L_B)_\rho\leq R^{2\to 2}_{\max}(\mc{N}),
\end{equation}
where $\omega_{L_AABL_B}=\mc{N}_{A'B'\to AB}(\rho_{L_AA'B'L_B})$. As the inequality holds for any state $\rho_{L_AA'B'L_B}$, we conclude that $R^{2\to 2}_{\max,A}(\mc{N})\leq R^{2\to 2}_{\max}(\mc{N})$.
\end{proof}

See Appendix~\ref{app:rmax} for some examples where the bidirectional max-Rains information of some channels are numerically evaluated. 

\subsection{Application to entanglement generation}\label{sec:ent-dist-protocol}
In this section, we discuss the implication of Proposition~\ref{prop:rains-tri-ineq} for PPT-assisted entanglement generation from a bidirectional channel\footnote{It is an open question whether or not NPT (non-positive under partial transpose) bound entangled states exist. However, it is known that all bipartite quantum states that are non-positive under partial transpose are distillable via some PPT-preserving channels\cite{EVWW01}. Therefore, in the standard case, the free operations allowed for the task of entanglement distillation are LOCC channels.}. Suppose that two parties Alice and Bob are connected by a bipartite quantum interaction. Suppose that the systems that Alice and Bob hold are $A'$ and $B'$, respectively. The bipartite quantum interaction between them is represented by a bidirectional quantum channel $\mc{N}_{A'B'\to AB}$, where output systems $A$ and $B$ are in possession of Alice and Bob, respectively. This kind of protocol was considered in \cite{BHLS03} when there is LOCC assistance.  
 
\subsubsection{Protocol for PPT-assisted entanglement generation}
We now discuss PPT-assisted entanglement generation protocols that make use of a bidirectional quantum channel. We do so by generalizing the point-to-point communication protocol discussed in \cite{KW17} to the bidirectional setting.  

In a PPT-assisted bidirectional protocol, as depicted in Figure~\ref{fig:bi-q-com}, Alice and Bob are spatially separated and they are allowed to undergo a bipartite quantum interaction $\mc{N}_{A'B'\to AB}$, where for a fixed basis $\{|i\>_B|j\>_{L_B}\}_{i,j}$, the partial transposition $T_{BL_B}$ is considered on systems associated to Bob. Alice holds systems labeled by $A', A$ whereas Bob holds $B',B$. They begin by performing a PPT-preserving channel $\mc{P}^{(1)}_{\oldemptyset\to L_{A_1}A_1'B_1'L_{B_1}}$, which leads to a PPT state $\rho^{(1)}_{L_{A_1}A_1'B_1'L_{B_1}}$, where $L_{A_1},L_{B_1}$ are finite-dimensional systems of arbitrary size and $A_1',B_1'$ are input systems to the first channel use. Alice and Bob send systems $A_1'$ and $B_1'$, respectively, through the first channel use, which yields the output state $\sigma^{(1)}_{L_{A_1}A_1B_1L_{B_1}}\coloneqq\mc{N}_{A_1'B_1'\to A_1B_1}(\rho^{(1)}_{L_{A_1}A_1'B_1'L_{B_1}})$. Alice and Bob then perform the PPT-preserving channel $\mc{P}^{(2)}_{L_{A_1}A_1B_1L_{B_1}\to L_{A_2}A_2'B_2'L_{B_2}}$, which leads to the state $\rho^{(2)}_{L_{A_2}A_2'B_2'L_{B_2}}\coloneqq \mc{P}^{(2)}_{L_{A_1}A_1B_1L_{B_1}\to L_{A_2}A_2'B_2'L_{B_2}}(\sigma^{(1)}_{L_{A_1}A_1B_1L_{B_1}})$. Both parties then send systems $A_2',B_2'$ through the second channel use $\mc{N}_{A_2'B_2'\to A_2B_2}$, which yields the state $\sigma^{(2)}_{L_{A_2}A_2B_2L_{B_2}}\coloneqq \mc{N}_{A_2'B_2'\to A_2B_2}(\rho^{(2)}_{L_{A_2}A_2'B_2'L_{B_2}})$. They iterate this process such that the protocol makes use of the channel $n$ times. In general, we have the following states for the $i$th use, for $i\in\{2,3,\ldots,n\}$:
\begin{align}
\rho^{(i)}_{L_{A_i}A_i'B_i'L_{B_i}} &\coloneqq \mc{P}^{(i)}_{L_{A_{i-1}}A_{i-1}B_{i-1}L_{B_{i-1}}\to L_{A_i}A_i'B_i'L_{B_i}}(\sigma^{(i-1)}_{L_{A_{i-1}}A_{i-1}B_{i-1}L_{B_{i-1}}}),\\
\sigma^{(i)}_{L_{A_i}A_iB_iL_{B_i}} &\coloneqq \mc{N}_{A_i'B_i'\to A_iB_i}(\rho^{(i)}_{L_{A_i}A_i'B_i'L_{B_i}}),
\end{align}
where $\mc{P}^{(i)}_{L_{A_{i-1}}A_{i-1}B_{i-1}L_{B_{i-1}}\to L_{A_i}A_i'B_i'L_{B_i}}$ is a PPT-preserving channel, with  the partial transposition acting on systems $B_{i-1},L_{B_{i-1}}$ associated to Bob. In the final step of the protocol, a PPT-preserving channel $\mc{P}^{(n+1)}_{L_{A_{n}}A_{n}B_{n}L_{B_{n}}\to M_AM_B}$ is applied, that generates the final state:
\begin{equation}
\omega_{M_AM_B}\coloneqq \mc{P}^{(n+1)}_{L_{A_{n}}A_{n}B_{n}L_{B_{n}}\to M_AM_B} (\sigma^{(n)}_{L_{A_n}A_n'B_n'L_{B_n}}),
\end{equation}
where $M_A$ and $M_B$ are held by Alice and Bob, respectively. 

The goal of the protocol is for Alice and Bob to distill entanglement in the end; i.e., the final state $\omega_{M_AM_B}$ should be close to a maximally entangled state $\Phi_{M_AM_B}$. For a fixed $n,\ |M|\in\mathbb{N},\ \varepsilon\in[0,1]$, the original protocol is an $(n,Q,\varepsilon)$ protocol if the channel is used $n$ times as discussed above, $|M_A|=|M_B|=|M|$, $Q\coloneqq \frac{1}{n}\log_2|M|$, and if 
\begin{align}
F(\omega_{M_AM_B},\Phi_{M_AM_B})&=\bra{\Phi}_{M_AM_B}\omega_{M_AM_B}\ket{\Phi}_{AB}\\
& \geq 1-\varepsilon,
\end{align}
where $\Phi_{M_AM_B}$ is the maximally entangled state. A rate $Q$ is said to be achievable for PPT-assisted entanglement generation if for all $\varepsilon\in(0,1]$, $\delta>0$, and sufficiently large $n$, there exists an $(n,Q-\delta,\varepsilon)$ protocol. The PPT-assisted entanglement generation capacity of a bidirectional channel $\mc{N}$, denoted as $Q^{2\to 2}_{\PPT}(\mc{N})$, is equal to the supremum of all achievable rates. Whereas, a rate $Q$ is a strong converse rate for PPT-assisted entanglement generation if for all $\varepsilon\in[0,1)$, $\delta>0$, and sufficiently large $n$, there does not exist an $(n,Q+\delta,\varepsilon)$ protocol. The strong converse PPT-assisted entanglement generation $\widetilde{Q}^{2\to 2}_{\PPT}(\mc{N})$ is equal to the infimum of all strong converse rates. A bidirectional channel $\mc{N}$ is said to obey the strong converse property for PPT-assisted entanglement generation if $Q^{2\to 2}_{\PPT}(\mc{N})=\widetilde{Q}^{2\to 2}_{\PPT}(\mc{N})$. 

Note that every LOCC channel is a PPT-preserving channel. Given this, the well-known fact that teleportation \cite{BBC+93} is an LOCC channel, and PPT-preserving channels are allowed for free in the above protocol, there is no difference between an $(n,Q,\varepsilon)$ entanglement generation protocol and an
$(n,Q,\varepsilon)$ quantum communication protocol. Thus, all of the capacities for entanglement generation are equal to those for quantum communication.

Also, we can consider the whole development discussed above for LOCC-assisted bidirectional quantum communication instead of more general PPT-assisted bidirectional quantum communication. All the notions discussed above follow when we restrict the class of assisting PPT-preserving channels allowed to be LOCC channels. It follows that the LOCC-assisted bidirectional quantum capacity $Q^{2\to 2}_{\LOCC}(\mc{N})$ and the strong converse LOCC-assisted quantum capacity $\widetilde{Q}^{2\to 2}_{\LOCC}(\mc{N})$ are bounded from above as
\begin{align}
Q^{2\to 2}_{\LOCC}(\mc{N})&\leq Q^{2\to 2}_{\PPT}(\mc{N}),\\
\widetilde{Q}^{2\to 2}_{\LOCC}(\mc{N})& \leq \widetilde{Q}^{2\to 2}_{\PPT}(\mc{N}).
\end{align} 
Also, the capacities of bidirectional quantum communication protocols without any assistance are always less than or equal to the LOCC-assisted bidirectional quantum capacities.

 The following lemma will be useful in deriving upper bounds on the bidirectional quantum capacities in the forthcoming sections, and it represents a generalization of the amortization idea to the bidirectional setting (see \cite{BHLS03} in this context).
 
\begin{lemma}\label{thm:ent-ppt-single-letter}
Let $\Ent_{\PPT}(A;B)_{\rho}$ be a bipartite entanglement measure for an arbitrary bipartite state $\rho_{AB}$. Suppose that $\Ent_{\PPT}(A;B)_{\rho}$ vanishes for all $\rho_{AB}\in \PPT(A\!:\!B)$ and is monotone non-increasing under PPT-preserving channels. Consider an $(n,M,\varepsilon)$ protocol for PPT-assisted entanglement generation over a bidirectional quantum channel $\mc{N}_{A'B'\to AB}$, as described in Section~\ref{sec:ent-dist-protocol}. Then, the following bound holds:
\begin{equation}
\Ent_{\PPT}(M_A;M_B)_\omega\leq n \Ent_{\PPT, A} (\mc{N}),
\end{equation}
where $\Ent_{\PPT, A} (\mc{N})$ is the amortized entanglement of a bidirectional channel $\mc{N}$, i.e.,
\begin{equation}\label{eq:ent-ppt-a}
\Ent_{\PPT, A} (\mc{N})\coloneqq \sup_{\rho_{L_AA'B'L_B}\in\msc{D}(\mc{H}_{L_AA'B'L_B})} \left[\Ent_{\PPT}(L_AA;BL_B)_{\sigma}-\Ent_{\PPT}(L_AA';B'L_B)_{\rho}\right],
\end{equation}
such that $\sigma_{L_AABL_B}\coloneqq \mc{N}_{A'B'\to AB}(\rho_{L_AA'B'L_B})$.
\end{lemma}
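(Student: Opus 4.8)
The plan is to run the standard telescoping (amortization) argument, adapting the point-to-point reasoning of \cite{KW17} to the bidirectional protocol described in Section~\ref{sec:ent-dist-protocol}. First I would use monotonicity of $\Ent_{\PPT}$ under the final PPT-preserving channel $\mc{P}^{(n+1)}_{L_{A_{n}}A_{n}B_{n}L_{B_{n}}\to M_AM_B}$ to obtain $\Ent_{\PPT}(M_A;M_B)_\omega\leq \Ent_{\PPT}(L_{A_n}A_n;B_nL_{B_n})_{\sigma^{(n)}}$. Since $\rho^{(1)}_{L_{A_1}A_1'B_1'L_{B_1}}$ arises from a PPT-preserving channel applied to a trivial system, it is a PPT state across the cut $L_{A_1}A_1'\!:\!B_1'L_{B_1}$, so $\Ent_{\PPT}(L_{A_1}A_1';B_1'L_{B_1})_{\rho^{(1)}}=0$ by the vanishing hypothesis; subtracting this null term for free sets up the telescope.

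Next I would verify the identity
\begin{align}
&\Ent_{\PPT}(L_{A_n}A_n;B_nL_{B_n})_{\sigma^{(n)}}-\Ent_{\PPT}(L_{A_1}A_1';B_1'L_{B_1})_{\rho^{(1)}}\nonumber\\
&\quad =\sum_{i=1}^n\left[\Ent_{\PPT}(L_{A_i}A_i;B_iL_{B_i})_{\sigma^{(i)}}-\Ent_{\PPT}(L_{A_i}A_i';B_i'L_{B_i})_{\rho^{(i)}}\right]\nonumber\\
&\qquad +\sum_{i=2}^n\left[\Ent_{\PPT}(L_{A_i}A_i';B_i'L_{B_i})_{\rho^{(i)}}-\Ent_{\PPT}(L_{A_{i-1}}A_{i-1};B_{i-1}L_{B_{i-1}})_{\sigma^{(i-1)}}\right],
\end{align}
which is a purely algebraic cancellation of consecutive $\sigma^{(i)}$ and $\rho^{(i)}$ terms. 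Each summand of the second sum is nonpositive: for $i\in\{2,\dots,n\}$ the state $\rho^{(i)}$ is obtained from $\sigma^{(i-1)}$ by the PPT-preserving channel $\mc{P}^{(i)}$, so monotonicity gives $\Ent_{\PPT}(L_{A_i}A_i';B_i'L_{B_i})_{\rho^{(i)}}\leq \Ent_{\PPT}(L_{A_{i-1}}A_{i-1};B_{i-1}L_{B_{i-1}})_{\sigma^{(i-1)}}$. Hence the second sum can be dropped, leaving an upper bound by the first sum alone.

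Finally, since $\sigma^{(i)}_{L_{A_i}A_iB_iL_{B_i}}=\mc{N}_{A_i'B_i'\to A_iB_i}(\rho^{(i)}_{L_{A_i}A_i'B_i'L_{B_i}})$ with the reference systems $L_{A_i},L_{B_i}$ allowed to be arbitrarily large, each summand $\Ent_{\PPT}(L_{A_i}A_i;B_iL_{B_i})_{\sigma^{(i)}}-\Ent_{\PPT}(L_{A_i}A_i';B_i'L_{B_i})_{\rho^{(i)}}$ has exactly the form over which the supremum in the definition \eqref{eq:ent-ppt-a} of $\Ent_{\PPT,A}(\mc{N})$ is taken, so each is bounded above by $\Ent_{\PPT,A}(\mc{N})$. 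Summing over $i=1,\dots,n$ then yields $\Ent_{\PPT}(M_A;M_B)_\omega\leq n\,\Ent_{\PPT,A}(\mc{N})$.

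I do not expect a deep obstacle; the only care required is bookkeeping — confirming that the telescope collapses to the stated difference of endpoint terms and that the partial transpose defining each assisting PPT-preserving channel is taken on the systems held by Bob, as arranged in Section~\ref{sec:ent-dist-protocol}, so that every $\rho^{(i)},\sigma^{(i)}$ sits on the correct bipartite cut and monotonicity applies with the right orientation. Trace preservation of $\mc{N}$ and of the assisting channels keeps every intermediate state normalized, so no subnormalization issue arises.
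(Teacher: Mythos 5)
Your proposal is correct and follows essentially the same route as the paper: monotonicity under the final PPT-preserving channel, the vanishing hypothesis applied to the initial PPT state $\rho^{(1)}$, a telescoping rearrangement, and monotonicity of $\Ent_{\PPT}$ under each intermediate PPT-preserving channel $\mc{P}^{(i)}$ to discard the nonpositive difference terms. The only cosmetic difference is that you write the telescoping identity explicitly as an equality before dropping the second sum, whereas the paper inserts the add-and-subtract step inline; the two presentations are algebraically identical.
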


\begin{proof}
From the discussion above, as $\Ent_{\PPT}$ is monotonically non-increasing under the action of PPT-preserving channels, we get that
\begin{align}
\Ent_{\PPT}(M_A;M_B)_\omega &\leq \Ent_{\PPT}(L_{A_n}A_n;B_nL_{B_n})_{\sigma^{(n)}}\\
&= \Ent_{\PPT}(L_{A_n}A_n;B_nL_{B_n})_{\sigma^{(n)}}-\Ent_{\PPT}(L_{A_1}A'_1;B'_1L_{B_1})_{\rho^{(1)}}\\
&=\Ent_{\PPT}(L_{A_n}A_n;B_nL_{B_n})_{\sigma^{(n)}}\nonumber\\
&\quad\quad+\left[\sum_{i=2}^n \Ent_{\PPT}(L_{A_i}A'_i;B'_iL_{B_i})_{\rho^{(i)}}-\Ent_{\PPT}(L_{A_i}A'_i;B'_iL_{B_i})_{\rho^{(i)}}\right]\nonumber\\
&\quad\quad\quad - \Ent_{\PPT}(L_{A_1}A'_1;B'_1L_{B_1})_{\rho^{(1)}}\\
&\leq \sum_{i=1}^n\left[ \Ent_{\PPT}(L_{A_i}A_i;B_iL_{B_i})_{\sigma^{(i)}}-\Ent_{\PPT}(L_{A_i}A'_i;B'_iL_{B_i})_{\rho^{(i)}}\right]\\
&\leq n\Ent_{\PPT,A}(\mc{N}).
\end{align}
The first equality follows because $\rho^{(1)}_{L_{A_1}A_1'B_1'L_{B_1}}$ is a PPT state with vanishing $\Ent_{\PPT}$. The second equality follows trivially because we add and subtract the same terms. The second inequality follows because  $\Ent_{\PPT}(L_{A_i}A'_i;B'_iL_{B_i})_{\rho^{(i)}}\leq  \Ent_{\PPT}(L_{A_{i-1}}A_{i-1};B_{i-1}L_{B_{i-1}})_{\sigma^{(i-1)}}$ for all $i\in\{2,3,\ldots,n\}$, due to monotonicity of $\Ent_{\PPT}$ with respect to PPT-preserving channels. The final inequality follows by applying the definition in \eqref{eq:ent-ppt-a} to each summand. 
\end{proof}

\subsubsection{Strong converse rate for PPT-assisted entanglement generation}

We now establish the following upper bound on the entanglement generation rate $Q$ (qubits per channel use) of any $(n,Q,\varepsilon)$ PPT-assisted protocol:
\begin{theorem}\label{thm:rains-ent-dist-strong-converse}
For a fixed $n,\ |M|\in\mathbb{N},\ \varepsilon\in(0,1)$, the following bound holds for an $(n,Q,\varepsilon)$ protocol for PPT-assisted entanglement generation over a bidirectional quantum channel $\mc{N}$:
\begin{equation}\label{eq:rains-ent-dist-strong-converse}
Q\leq R^{2\to 2}_{\max}(\mc{N})+\frac{1}{n}\log_2\!\(\frac{1}{1-\varepsilon}\)
\end{equation}
such that $Q=\frac{1}{n}\log_2|M|$. 
\end{theorem}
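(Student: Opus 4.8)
The plan is to assemble three ingredients already prepared in this chapter: the amortization collapse for the bidirectional max-Rains information (Proposition~\ref{prop:rains-tri-ineq} and Corollary~\ref{cor:rains-tri-ineq}), the ``amortized entanglement bounds the distilled entanglement'' principle of Lemma~\ref{thm:ent-ppt-single-letter}, and a meta-converse-type lower bound on $R_{\max}(M_A;M_B)_\omega$ in terms of $|M|$ and $\varepsilon$ obtained from Lemma~\ref{thm:rai99}.

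First I would check that the max-Rains relative entropy $R_{\max}(A;B)_\rho$ is an admissible choice of $\Ent_{\PPT}$ in Lemma~\ref{thm:ent-ppt-single-letter}: it is non-negative, it vanishes on $\PPT(A\!:\!B)$ (since $\PPT\subseteq\PPT'$, we have $R_{\max}(A;B)_\rho\leq D_{\max}(\rho_{AB}\Vert\rho_{AB})=0$ for $\rho_{AB}\in\PPT(A\!:\!B)$), and it is monotone non-increasing under PPT-preserving channels (data processing of $D_{\max}$ together with the fact that a PPT-preserving channel maps $\PPT'$ into $\PPT'$). Applying Lemma~\ref{thm:ent-ppt-single-letter} to the given $(n,Q,\varepsilon)$ protocol and then Corollary~\ref{cor:rains-tri-ineq} would give
\begin{equation*}
R_{\max}(M_A;M_B)_\omega \ \leq\ n\, R^{2\to 2}_{\max,A}(\mc{N}) \ \leq\ n\, R^{2\to 2}_{\max}(\mc{N}).
\end{equation*}

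Second I would lower bound the left-hand side. Fix any $\sigma_{M_AM_B}\in\PPT'(M_A\!:\!M_B)$; by definition of the max-relative entropy, $\omega_{M_AM_B}\leq 2^{D_{\max}(\omega_{M_AM_B}\Vert\sigma_{M_AM_B})}\sigma_{M_AM_B}$. Taking the expectation of both sides in the maximally entangled state $\Phi_{M_AM_B}$, using the success criterion $\Tr\{\Phi_{M_AM_B}\omega_{M_AM_B}\}=F(\omega_{M_AM_B},\Phi_{M_AM_B})\geq 1-\varepsilon$ on the left and Lemma~\ref{thm:rai99}, i.e.\ $\Tr\{\Phi_{M_AM_B}\sigma_{M_AM_B}\}\leq 1/|M|$, on the right, yields $1-\varepsilon\leq 2^{D_{\max}(\omega_{M_AM_B}\Vert\sigma_{M_AM_B})}/|M|$, hence $D_{\max}(\omega_{M_AM_B}\Vert\sigma_{M_AM_B})\geq \log_2|M|-\log_2(1/(1-\varepsilon))$. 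Minimizing over $\sigma_{M_AM_B}\in\PPT'(M_A\!:\!M_B)$ gives $R_{\max}(M_A;M_B)_\omega\geq\log_2|M|-\log_2(1/(1-\varepsilon))$. Chaining this with the previous display and dividing by $n$ produces $Q=\tfrac1n\log_2|M|\leq R^{2\to 2}_{\max}(\mc{N})+\tfrac1n\log_2(1/(1-\varepsilon))$, which is the claim.

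I do not expect a serious obstacle: the technical heavy lifting sits in Proposition~\ref{prop:rains-tri-ineq} (the post-selected-teleportation SDP manipulation) and in Lemma~\ref{thm:ent-ppt-single-letter} (the telescoping amortization argument), both already available. The points needing care are essentially bookkeeping ones --- confirming that $R_{\max}$ genuinely satisfies the two structural hypotheses of Lemma~\ref{thm:ent-ppt-single-letter}, and making sure the protocol's acceptance condition is stated with respect to $\Phi_{M_AM_B}$ of Schmidt rank exactly $|M|=|M_A|=|M_B|$ so that Lemma~\ref{thm:rai99} applies with the correct value of $M$. If one wanted a bound phrased via the $\varepsilon$-hypothesis-testing Rains information instead, the last paragraph's elementary $D_{\max}$ estimate could be replaced accordingly, but the direct argument above already yields exactly the stated strong-converse bound.
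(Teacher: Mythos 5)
Your proposal is correct and matches the paper's argument essentially step for step: Lemma~\ref{thm:ent-ppt-single-letter} specialized to $\Ent_{\PPT}=R_{\max}$ together with Proposition~\ref{prop:rains-tri-ineq}/Corollary~\ref{cor:rains-tri-ineq} gives $R_{\max}(M_A;M_B)_\omega\leq nR^{2\to2}_{\max}(\mc{N})$, and the entanglement-test bound via Lemma~\ref{thm:rai99} gives $R_{\max}(M_A;M_B)_\omega\geq\log_2[(1-\varepsilon)|M|]$. The only cosmetic difference is that you unroll the $D_{\max}$ data-processing step as a direct operator-inequality computation and you explicitly verify the two structural hypotheses on $R_{\max}$ that the paper leaves tacit, both of which are fine.
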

\begin{proof}
From earlier discussion, we have that
\begin{equation}
\Tr\{\Phi_{M_AM_B}\omega_{M_AM_B}\}\geq 1-\varepsilon,
\end{equation}
while \cite[Lemma 2]{Rai99} implies that 
\begin{equation}
\forall \sigma_{M_AM_B}\in\PPT'(M_A:M_B),\ \Tr\{\Phi_{M_AM_B}\sigma_{M_AM_B}\}\leq \frac{1}{|M|}.
\end{equation}
Under an \textquotedblleft entanglement test\textquotedblright, which is a measurement with POVM $\{\Phi_{M_AM_B},\bm{1}_{M_AM_B}-\Phi_{M_AM_B}\}$, and applying the data processing inequality for the max-relative entropy, we find that 
\begin{equation}
R_{\max}(M_A;M_B)_\omega\geq \log_2[(1-\varepsilon)|M|]. \label{eq:rains-test-bound}
\end{equation}
Applying Lemma~\ref{thm:ent-ppt-single-letter} and Proposition~\ref{prop:rains-tri-ineq}, we get that
\begin{equation}
R_{\max}(M_A;M_B)_\omega \leq nR^{2\to 2}_{\max}(\mc{N}).\label{eq:rains-single-letter-proof}
\end{equation}
Combining \eqref{eq:rains-test-bound} and \eqref{eq:rains-single-letter-proof}, we get the desired inequality \eqref{eq:rains-ent-dist-strong-converse}. 
\end{proof}

\begin{remark}
The bound in \eqref{eq:rains-ent-dist-strong-converse} can also be rewritten as
\begin{equation}
1-\varepsilon \leq 2^{-n[Q-R^{2\to 2}_{\max}(\mc{N})]}.
\end{equation}
Thus, if the bidirectional communication rate $Q$ is strictly larger than the bidirectional max-Rains information $\mc{R}^{2\to 2}_{\max}(\mc{N})$, then the fidelity of the transmission $(1-\varepsilon)$ decays exponentially fast to zero in the number $n$ of channel uses. 
\end{remark}

An immediate corollary of the above remark is the following strong converse statement:
\begin{corollary}
The strong converse PPT-assisted bidirectional quantum capacity of a bidirectional channel $\mc{N}$ is bounded from above by its bidirectional max-Rains information:
\begin{equation}
\widetilde{Q}^{2\to 2}_{\PPT}(\mc{N})\leq R^{2\to 2}_{\max}(\mc{N}).
\end{equation}
\end{corollary}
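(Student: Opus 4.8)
The plan is to obtain this corollary as an immediate consequence of Theorem~\ref{thm:rains-ent-dist-strong-converse} together with the definition of $\widetilde{Q}^{2\to 2}_{\PPT}(\mc{N})$ as the infimum of all strong converse rates for PPT-assisted entanglement generation. The only thing that needs to be checked is that $R^{2\to 2}_{\max}(\mc{N})$ is itself a strong converse rate; the desired bound then follows by taking the infimum.

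To verify this, I would fix $\varepsilon\in[0,1)$ and $\delta>0$ and argue by contradiction. Suppose that for arbitrarily large $n$ there existed an $(n, R^{2\to 2}_{\max}(\mc{N})+\delta, \varepsilon)$ protocol. Substituting $Q = R^{2\to 2}_{\max}(\mc{N})+\delta$ into the bound of Theorem~\ref{thm:rains-ent-dist-strong-converse} gives
\[
R^{2\to 2}_{\max}(\mc{N}) + \delta \;\leq\; R^{2\to 2}_{\max}(\mc{N}) + \frac{1}{n}\log_2\!\(\frac{1}{1-\varepsilon}\),
\]
i.e., $\delta \leq \frac{1}{n}\log_2(1/(1-\varepsilon))$. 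Since $\varepsilon$ is a fixed number strictly less than $1$, the quantity $\log_2(1/(1-\varepsilon))$ is a fixed finite constant, so this inequality fails for every $n$ exceeding $\tfrac{1}{\delta}\log_2(1/(1-\varepsilon))$ — a contradiction. Hence for all sufficiently large $n$ no such protocol exists, which is precisely the statement that $R^{2\to 2}_{\max}(\mc{N})$ is a strong converse rate. Taking the infimum over all strong converse rates then yields $\widetilde{Q}^{2\to 2}_{\PPT}(\mc{N}) \leq R^{2\to 2}_{\max}(\mc{N})$. (Equivalently, one can read this off the exponential-decay reformulation $1-\varepsilon \leq 2^{-n[Q-R^{2\to 2}_{\max}(\mc{N})]}$ stated in the preceding remark: any rate above $R^{2\to 2}_{\max}(\mc{N})$ forces the fidelity to zero.)

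I do not anticipate a genuine obstacle here, since all the substantive work is already carried by Proposition~\ref{prop:rains-tri-ineq}, Lemma~\ref{thm:ent-ppt-single-letter}, and Theorem~\ref{thm:rains-ent-dist-strong-converse}, and this corollary merely unwinds a definition. The one point deserving a little care is the quantifier structure in the definition of a strong converse rate (``for all $\varepsilon\in[0,1)$ and $\delta>0$, and for all sufficiently large $n$, no $(n,Q+\delta,\varepsilon)$ protocol exists''), so that the fixed-$\varepsilon$, fixed-$\delta$ contradiction argument above is applied in the correct order.
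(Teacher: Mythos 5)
Your proof is correct and takes essentially the same approach as the paper, which presents the corollary as an immediate consequence of the exponential-decay reformulation $1-\varepsilon \leq 2^{-n[Q-R^{2\to 2}_{\max}(\mc{N})]}$ of Theorem~\ref{thm:rains-ent-dist-strong-converse}. You have simply spelled out the contradiction argument that the paper leaves implicit.
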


\section{Secret key distillation from bipartite quantum interactions}\label{sec:priv-key}

In this section, we define the bidirectional max-relative entropy of entanglement $E^{2\to 2}_{\max}(\mc{N})$. The main goal of this section is to derive an upper bound on the rate at which secret key can be distilled from a bipartite quantum interaction. In deriving this bound, we consider private communication protocols over bidirectional quantum channels, and we make use of recent techniques developed in quantum information theory for point-to-point private communication protocols \cite{HHHO09,WTB16,CM17,KW17}. 

\subsection{Bidirectional generalized divergence of entanglement}

We define divergence based measures to quantify the ability of distilling secret key from a bipartite quantum channel. 

\begin{definition}\label{def:bi-max-sept}
The generalized divergence of entanglement from a bidirectional channel $\mc{N}_{A'B'\to AB}$ is defined as
\begin{equation}\label{eq:bi-max-sept-opt}
\tf{E}_{D}^{2\to 2}(\mc{N})=\sup_{\rho\in\SEP(L_AA':B'L_B)}\tf{E}(L_A A; B L_B)_{\omega},
\end{equation} 
where $\tf{E}(L_A A; B L_B)_{\omega}$ is a generalized divergence of entanglement of the state 
\begin{equation}
\omega_{L_A A B L_B}\coloneqq \mc{N}_{A'B'\to AB}(\rho_{L_AA'B'L_B}),
\end{equation} 
with $L_A$ and $L_B$ being arbitrarily large, 
\begin{align}\label{eq:b-max-sept-d}
\tf{E}(\hat{A}; \hat{B})_{\tau} \coloneqq \inf_{\sigma_{\hat{A}\hat{B}}\in\SEP(\hat{A}:\hat{B})}\tf{D}(\tau_{\hat{A}\hat{B}}\Vert\sigma_{\hat{A}\hat{B}}).
\end{align}
\end{definition}

The following definition generalizes a channel's max-relative entropy of entanglement from \cite{CM17} to the bidirectional setting, which we get after substituting generalized divergence $\tf{D}$ in \eqref{eq:b-max-sept-d} with the max-relative entropy $D_{\max}$:

\begin{definition}[Bidirectional max-relative entropy of entanglement]\label{def:bi-max-rel}
The bidirectional max-relative entropy of entanglement of a bidirectional channel $\mc{N}_{A'B'\to AB}$ is defined as
\begin{equation}\label{eq:bi-max-rel-opt}
E_{\max}^{2\to 2}(\mc{N})=\max_{\psi_{L_AA'} \otimes \varphi_{B'L_B}}E_{\max}(L_A A; B L_B)_{\omega},
\end{equation} 
where $\omega_{L_A A B L_B}\coloneqq \mc{N}_{A'B'\to AB}(\psi_{L_AA'} \otimes \varphi_{B'L_B})$, and $\psi_{L_AA'}\otimes\varphi_{B'L_B}\in\SEP(L_AA'\!:\!B'L_B)$ is a pure tensor-product state such that $L_A\simeq A'$, and $L_B\simeq B'$.
\end{definition}

\begin{remark}\label{rem:simplify-E-2-to-2-max}
Note that we could define $E_{\max}^{2\to 2}(\mc{N})$ to have an optimization over separable input states 
$\rho_{L_AA'B'L_B}\in \SEP(L_AA'\!:\!B'L_B)$ with finite-dimensional, but arbitrarily large auxiliary systems $L_A$ and $L_B$. However, the quasi-convexity of the max-relative entropy of entanglement \cite{D09,Dat09} and the Schmidt decomposition theorem guarantee that it suffices to restrict the optimization to be as stated in Definition~\ref{def:bi-max-rel}. 
\end{remark}

Analogous to definition of the bidirectional max-relative entropy of entanglement aforementioned, definition of the bidirectional relative entropy of entanglement $E_{D}^{2\to 2}(\mc{N})$ of an arbitrary bidirectional channel $\mc{N}$ is obtained by substituting generalized divergence in \eqref{eq:b-max-sept-d} with the relative entropy.

\begin{remark}\label{rem:bqi-sep-zero}
The bidirectional max-relative entropy of entanglement and the bidirectional relative entropy of entanglement of a bidirectional channel $\mc{N}_{A'B'\to AB}$ are both zero if and only if $\mc{N}_{A'B'\to AB}$ is a separable channel. 
\end{remark}

\begin{proposition}[Amortization ineq.~for bidirectional max-relative entropy]\label{prop:emax-tri-ineq}
Let $\rho_{L_AA'B'L_B}$ be an arbitrary state and let $\mc{N}_{A'B'\to AB}$ be a bidirectional channel. Then
\begin{equation}
E_{\max}(L_AA;BL_B)_\omega\leq  E_{\max}(L_AA';B'L_B)_\rho+E^{2\to 2}_{\max}(\mc{N}),
\end{equation}
where $\omega_{L_AABL_B}=\mc{N}_{A'B'\to AB}(\rho_{L_AA'B'L_B})$ and $E^{2\to 2}_{\max}(\mc{N})$ is the bidirectional max-relative entropy of entanglement of $\mc{N}_{A'B'\to AB}$.
\end{proposition}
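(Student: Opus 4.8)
The plan is to run the same argument that \cite{CM17} uses to show that amortization does not enhance the max-relative entropy of entanglement of a point-to-point channel, adapted to the bidirectional setting; this is somewhat cleaner than the proof of Proposition~\ref{prop:rains-tri-ineq}, since here the data-processed triangle inequality for the max-relative entropy (Lemma~\ref{thm:data-tri-ineq}) can be invoked as a black box. First I would choose an optimal separable state for the right-most term: working in finite dimensions, $\SEP(L_AA'\!:\!B'L_B)$ is compact, so there exists $\sigma'_{L_AA'B'L_B}\in\SEP(L_AA'\!:\!B'L_B)$ with $E_{\max}(L_AA';B'L_B)_\rho = D_{\max}(\rho_{L_AA'B'L_B}\Vert\sigma'_{L_AA'B'L_B})$. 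If the supports are not nested this quantity is $+\infty$ and the claimed inequality is trivial, so we may assume it is finite.

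Next I would control the image of this optimizer under the interaction. Since $\sigma'_{L_AA'B'L_B}$ is separable across $L_AA'\!:\!B'L_B$, Remark~\ref{rem:simplify-E-2-to-2-max} -- which relies on quasi-convexity of the max-relative entropy of entanglement together with the Schmidt decomposition theorem -- guarantees that $E_{\max}$ evaluated on the output $\mc{N}_{A'B'\to AB}(\sigma'_{L_AA'B'L_B})$ across the cut $L_AA\!:\!BL_B$ is at most $E^{2\to 2}_{\max}(\mc{N})$. Consequently there is a separable state $\tau_{L_AABL_B}\in\SEP(L_AA\!:\!BL_B)$ with $D_{\max}(\mc{N}_{A'B'\to AB}(\sigma'_{L_AA'B'L_B})\Vert\tau_{L_AABL_B})\leq E^{2\to 2}_{\max}(\mc{N})$. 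I would then apply Lemma~\ref{thm:data-tri-ineq} to the positive trace-preserving map $\mc{N}_{A'B'\to AB}$ (acting as the identity on $L_A$ and $L_B$), the state $\rho_{L_AA'B'L_B}$, and the positive semi-definite operators $\sigma'_{L_AA'B'L_B}$ and $\tau_{L_AABL_B}$, obtaining $D_{\max}(\omega_{L_AABL_B}\Vert\tau_{L_AABL_B})\leq D_{\max}(\rho\Vert\sigma') + D_{\max}(\mc{N}(\sigma')\Vert\tau)$, where $\omega_{L_AABL_B}=\mc{N}_{A'B'\to AB}(\rho_{L_AA'B'L_B})$. Finally, since $\tau_{L_AABL_B}$ is a separable state across $L_AA\!:\!BL_B$, it is a feasible point in the minimization defining $E_{\max}(L_AA;BL_B)_\omega$, so the left-hand side of the last inequality is at least $E_{\max}(L_AA;BL_B)_\omega$; chaining the three bounds yields $E_{\max}(L_AA;BL_B)_\omega\leq E_{\max}(L_AA';B'L_B)_\rho + E^{2\to 2}_{\max}(\mc{N})$, as desired.

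The individual steps are each short, and the substantive inputs are either external (Lemma~\ref{thm:data-tri-ineq}, imported from \cite{CM17}) or already recorded (Remark~\ref{rem:simplify-E-2-to-2-max}). The step that requires the most care is the second: Definition~\ref{def:bi-max-rel} only optimizes the output $E_{\max}$ over pure product inputs $\psi_{L_AA'}\otimes\varphi_{B'L_B}$ with $L_A\simeq A'$ and $L_B\simeq B'$, whereas $\sigma'$ is a generic separable state with possibly large auxiliary systems, so one must argue that the former still dominates the output entanglement of the latter. This is exactly the content of Remark~\ref{rem:simplify-E-2-to-2-max}; in a fully detailed write-up I would either cite it directly or reprove it here by expressing $\sigma'$ as a convex combination of pure product states, using linearity of the channel, and invoking quasi-convexity of $D_{\max}$ over separable ansatz states together with the Schmidt decomposition theorem to reduce each pure term to one with $L_A\simeq A'$ and $L_B\simeq B'$. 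As an immediate consequence, exactly as in Corollary~\ref{cor:rains-tri-ineq}, amortization does not enhance the bidirectional max-relative entropy of entanglement of a bidirectional channel.
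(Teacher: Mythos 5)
Your proof is correct and follows essentially the same route as the paper's: both proofs hinge on Lemma~\ref{thm:data-tri-ineq} (the data-processed triangle inequality from \cite{CM17}) applied to $\mc{N}$, $\rho$, and a pair of separable witness operators, together with Remark~\ref{rem:simplify-E-2-to-2-max} to control the channel's output $E_{\max}$ on a separable input. The only difference is cosmetic: you fix optimal (or near-optimal) separable states at the outset, whereas the paper carries arbitrary separable $\sigma'$ and $\sigma$ through the argument and takes infima at the end.
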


\begin{proof}
Let us consider states $\sigma_{L_AA'B'L_B}'\in\SEP(L_AA'\!:\!B'L_B)$ and $\sigma_{L_AABL_B}\in\SEP(L_AA\!:\!BL_B)$, where $L_A$ and $L_B$ are finite-dimensional, but  arbitrarily large. With respect to the bipartite cut $L_AA:BL_B$, the following inequality holds
\begin{align}
E_{\max}(L_AA;BL_B)_\omega\leq D_{\max}(\mc{N}_{A'B'\to AB}(\rho_{L_AA'B'L_B})\Vert \sigma_{L_AABL_B}).
 \end{align}
Applying the data-processed triangle inequality \cite[Theorem III.1]{CM17}, we find that
\begin{align}
&D_{\max}(\mc{N}_{A'B'\to AB}(\rho_{L_AA'B'L_B})\Vert \sigma_{L_AABL_B})\nonumber\\
&\quad\quad\quad \leq D_{\max}(\rho_{L_AA'B'L_B}\Vert \sigma_{L_AA'B'L_B}')+D_{\max}(\mc{N}_{A'B'\to AB}(\sigma_{L_AA'B'L_B}')\Vert \sigma_{L_AABL_B}).
\end{align}
Since $\sigma_{L_AA'B'L_B}'$ and $\sigma_{L_AABL_B}$ are arbitrary separable states, we arrive at
\begin{equation}
E_{\max}(L_AA;BL_B)_\omega\leq  E_{\max}(L_AA';B'L_B)_\rho + E_{\max}(\mc{N}_{A'B'\to AB}(\sigma_{L_AA'B'L_B}')),
\end{equation}
where $\omega_{L_AABL_B}=\mc{N}_{A'B'\to AB}(\rho_{L_AA'B'L_B})$. This implies the desired inequality after applying the observation in Remark~\ref{rem:simplify-E-2-to-2-max}, given that 
$\sigma_{L_AA'B'L_B}' \in \SEP(L_AA'\!:\!B'L_B)$.
\end{proof}
\bigskip

An immediate consequence of Proposition~\ref{prop:emax-tri-ineq} is the following corollary:
\begin{corollary}
\label{cor:amort-max-rel-ent}
Amortization does not enhance the bidirectional max-relative entropy of entanglement of a bidirectional quantum channel $\mc{N}_{A'B'\to AB}$; and the following equality holds:
\begin{equation}
E^{2\to 2}_{\max,A}(\mc{N})=E^{2\to 2}_{\max}(\mc{N}),
\end{equation}  
where $E^{2\to 2}_{\max, A} (\mc{N})$ is the amortized entanglement of a bidirectional channel $\mc{N}$, i.e.,
\begin{equation}\label{eq:ent-locc-b}
E^{2\to 2}_{\max, A} (\mc{N})\coloneqq \sup_{\rho_{L_AA'B'L_B}\in\msc{D}(\mc{H}_{L_AA'B'L_B})} \left[E_{\max}(L_AA;BL_B)_{\sigma}-E_{\max}(L_AA';B'L_B)_{\rho}\right],
\end{equation}
and $\sigma_{L_AABL_B}\coloneqq \mc{N}_{A'B'\to AB}(\rho_{L_AA'B'L_B})$ where $L_A$ and $L_B$ can be arbitrary large.
\end{corollary}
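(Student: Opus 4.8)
The plan is to establish the two inequalities $E^{2\to 2}_{\max,A}(\mc{N}) \geq E^{2\to 2}_{\max}(\mc{N})$ and $E^{2\to 2}_{\max,A}(\mc{N}) \leq E^{2\to 2}_{\max}(\mc{N})$ separately, since together they give the asserted equality. For the ``$\geq$'' direction, I would restrict the supremum defining the amortized quantity in \eqref{eq:ent-locc-b} to pure tensor-product input states $\rho_{L_AA'B'L_B} = \psi_{L_AA'}\otimes\varphi_{B'L_B}$ with $L_A\simeq A'$ and $L_B\simeq B'$. Any such state is a product state across the cut $L_AA'\!:\!B'L_B$, hence separable, so its max-relative entropy of entanglement vanishes: $E_{\max}(L_AA';B'L_B)_\rho = 0$. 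The bracketed difference in \eqref{eq:ent-locc-b} then collapses to $E_{\max}(L_AA;BL_B)_\sigma$ with $\sigma_{L_AABL_B} = \mc{N}_{A'B'\to AB}(\psi_{L_AA'}\otimes\varphi_{B'L_B})$, and taking the supremum over all such product states recovers precisely $E^{2\to 2}_{\max}(\mc{N})$ as given in Definition~\ref{def:bi-max-rel}. Since the supremum over all states $\rho_{L_AA'B'L_B}$ in \eqref{eq:ent-locc-b} is at least as large as the restricted one, we obtain $E^{2\to 2}_{\max,A}(\mc{N}) \geq E^{2\to 2}_{\max}(\mc{N})$.

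For the ``$\leq$'' direction, I would invoke Proposition~\ref{prop:emax-tri-ineq} directly: for an arbitrary state $\rho_{L_AA'B'L_B}$ and $\omega_{L_AABL_B} = \mc{N}_{A'B'\to AB}(\rho_{L_AA'B'L_B})$, it gives $E_{\max}(L_AA;BL_B)_\omega - E_{\max}(L_AA';B'L_B)_\rho \leq E^{2\to 2}_{\max}(\mc{N})$. As this holds for every input state, taking the supremum over all $\rho_{L_AA'B'L_B}\in\msc{D}(\mc{H}_{L_AA'B'L_B})$ yields $E^{2\to 2}_{\max,A}(\mc{N}) \leq E^{2\to 2}_{\max}(\mc{N})$. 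Combining the two bounds establishes the equality.

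There is essentially no obstacle: the entire substance of the statement is carried by Proposition~\ref{prop:emax-tri-ineq}, whose proof already used the data-processed triangle inequality for $D_{\max}$ from \cite{CM17} together with the input-state reduction of Remark~\ref{rem:simplify-E-2-to-2-max}. The only point deserving a moment's care is the ``$\geq$'' direction, where one must check that the product-state-only optimization appearing in Definition~\ref{def:bi-max-rel} is genuinely subsumed by the general supremum in \eqref{eq:ent-locc-b}; this is immediate because product states are separable and therefore contribute a vanishing $E_{\max}(L_AA';B'L_B)_\rho$ term. This mirrors the proof of Corollary~\ref{cor:rains-tri-ineq} for the bidirectional max-Rains information.
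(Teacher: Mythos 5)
Your proof is correct and follows the same route as the paper's: the ``$\leq$'' direction is exactly the invocation of Proposition~\ref{prop:emax-tri-ineq}, and the ``$\geq$'' direction (which the paper simply asserts ``always holds'') is filled in via the observation that product input states contribute zero to the subtracted $E_{\max}$ term, matching Definition~\ref{def:bi-max-rel}. No gaps.
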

\begin{proof}
The inequality $E^{2\to 2}_{\max,A}(\mc{N})\geq E^{2\to 2}_{\max}(\mc{N})$ always holds. The other inequality $E^{2\to 2}_{\max,A}(\mc{N})\leq E^{2\to 2}_{\max}(\mc{N})$ is an immediate consequence of Proposition~\ref{prop:emax-tri-ineq} (the argument is similar to that given in the proof of Corollary~\ref{cor:rains-tri-ineq}).
\end{proof}

\subsection{Application to secret key generation}
\subsubsection{Protocol for LOCC-assisted secret key generation}\label{sec:secret-dist-protocol}

We first discuss an LOCC-assisted secret key generation protocol that employs a bidirectional quantum channel.

In an LOCC-assisted secret key generation protocol, Alice and Bob are spatially separated and they are allowed to make use of a bipartite quantum interaction $\mc{N}_{A'B'\to AB}$, where the bipartite cut is considered between systems associated to Alice and Bob, $L_AA\!:\!L_BB$. Let $\mc{U}^\mc{N}_{A'B'\to ABE}$ be an isometric channel extending $\mc{N}_{A'B'\to AB}$:
\begin{equation}
\mc{U}^\mc{N}_{A'B'\to ABE}(\cdot)=U^{\mc{N}}_{A'B'\to ABE}(\cdot)\(U^{\mc{N}}_{A'B'\to ABE}\)^\dag,
\end{equation}
where $U^{\mc{N}}_{A'B'\to ABE}$ is an isometric extension of $\mc{N}_{A'B'\to AB}$. Let us assume that the eavesdropper Eve has access to the system $E$, also referred to as the environment, as well as a coherent copy of the classical communication exchanged between Alice and Bob. One could also consider a weaker assumption, in which the eavesdropper has access to only part of $E=E'E''$.

Alice and Bob begin by performing an LOCC channel $\mc{L}^{(1)}_{\oldemptyset\to L_{A_1}A_1'B_1'L_{B_1}}$, which leads to a state $\rho^{(1)}_{L_{A_1}A_1'B_1'L_{B_1}}\in\SEP(L_{A_1}A_1'\!:\!B_1'L_{B_1})$, where $L_{A_1},L_{B_1}$ are finite-dimensional systems of arbitrary size and $A_1',B_1'$ are input systems to the first channel use. Alice and Bob send systems $A_1'$ and $B_1'$, respectively, through the first channel use, that  outputs the state $\sigma^{(1)}_{L_{A_1}A_1B_1L_{B_1}}\coloneqq\mc{N}_{A_1'B_1'\to A_1B_1}(\rho^{(1)}_{L_{A_1}A_1'B_1'L_{B_1}})$. They then perform the LOCC channel $\mc{L}^{(2)}_{L_{A_1}A_1B_1L_{B_1}\to L_{A_2}A_2'B_2'L_{B_2}}$, which leads to the state $\rho^{(2)}_{L_{A_2}A_2'B_2'L_{B_2}}\coloneqq \mc{L}^{(2)}_{L_{A_1}A_1B_1L_{B_1}\to L_{A_2}A_2'B_2'L_{B_2}}(\sigma^{(1)}_{L_{A_1}A_1B_1L_{B_1}})$. Both parties then send systems $A_2',B_2'$ through the second channel use $\mc{N}_{A_2'B_2'\to A_2B_2}$, which yields the state $\sigma^{(2)}_{L_{A_2}A_2B_2L_{B_2}}\coloneqq \mc{N}_{A_2'B_2'\to A_2B_2}(\rho^{(2)}_{L_{A_2}A_2'B_2'L_{B_2}})$. They iterate the process such that the protocol uses the channel $n$ times. In general, we have the following states for the $i$th channel use, for $i\in[n]$:
\begin{align}
\rho^{(i)}_{L_{A_i}A_i'B_i'L_{B_i}} &\coloneqq \mc{L}^{(i)}_{L_{A_{i-1}}A_{i-1}B_{i-1}L_{B_{i-1}}\to L_{A_i}A_i'B_i'L_{B_i}}(\sigma^{(i-1)}_{L_{A_{i-1}}A_{i-1}B_{i-1}L_{B_{i-1}}}),\\
\sigma^{(i)}_{L_{A_i}A_iB_iL_{B_i}} &\coloneqq \mc{N}_{A_i'B_i'\to A_iB_i}(\rho^{(i)}_{L_{A_i}A_i'B_i'L_{B_i}}),
\end{align}
where $\mc{L}^{(i)}_{L_{A_{i-1}}A_{i-1}B_{i-1}L_{B_{i-1}}\to L_{A_i}A_i'B_i'L_{B_i}}$ is an LOCC channel corresponding to the bipartite cut $L_{A_{i-1}}A_{i-1}:B_{i-1}L_{B_{i-1}}$. In the final step of the protocol, an LOCC channel $\mc{L}^{(n+1)}_{L_{A_{n}}A_{n}B_{n}L_{B_{n}}\to K_AK_B}$ is applied, which generates the final state:
\begin{equation}
\omega_{K_AK_B}\coloneqq \mc{L}^{(n+1)}_{L_{A_{n}}A_{n}'B_{n}'L_{B_{n}}\to K_AK_B} (\sigma^{(n)}_{L_{A_n}A_n'B_n'L_{B_n}}),
\end{equation}
where the key systems $K_A$ and $K_B$ are held by Alice and Bob, respectively.

The goal of the protocol is for Alice and Bob to distill a secret key state, such that the systems $K_A$ and $K_B$ are maximally classical correlated and in tensor product with all of the systems that Eve possesses (see Section~\ref{sec:rev-priv-states} for a review of tripartite secret key states). 

\subsubsection{Purifying an LOCC-assisted secret key agreement protocol}\label{sec:priv-dist-protocol} 

As observed in \cite{HHHO05,HHHO09} and reviewed in Section~\ref{sec:rev-priv-states}, any protocol of the above form can be purified in the following sense. 

The initial state $\rho^{(1)}_{L_{A_1}A_1'B_1'L_{B_1}}\in\SEP(L_{A_1}A_1'\!:\!B_1'L_{B_1})$ is of the following form:
\begin{equation}
\rho^{(1)}_{L_{A_1}A_1'B_1'L_{B_1}}\coloneqq \sum_{y_1}p_{Y_1}(y_1)\tau^{y_1}_{L_{A_1}A_1'}\otimes\varsigma^{y_1}_{L_{B_1}B_1'}.
\end{equation}
The classical random variable $Y_1$ corresponds to a message exchanged between Alice and Bob to establish this state. It can be purified in the following way:
\begin{equation}
\vert \psi^{(1)}\rangle _{Y_1S_{A_1}L_{A_1}A_1'B_1'L_{B_1}S_{B_1}}\coloneqq \sum_{y_1}\sqrt{p_{Y_1}(y_1)}\ket{y_1}_{Y_1}\otimes\ket{\tau^{y_1}}_{S_{A_1}L_{A_1}A_1'}\otimes\ket{\varsigma^{y_1}}_{S_{B_1}L_{B_1}B_1'},
\end{equation}
where $S_{A_1}$ and $S_{B_1}$ are local ``shield" systems that in principle could be held by Alice and Bob, respectively, $\ket{\tau^{y_1}}_{S_{A_1}L_{A_1}A_1'}$ and $\ket{\varsigma^{y_1}}_{S_{B_1}L_{B_1}B_1'}$ purify $\tau^{y_1}_{L_{A_1}A_1'}$ and $\varsigma^{y_1}_{L_{B_1}B_1'}$, respectively, and Eve possesses system $Y_1$, which contains a coherent classical copy of the classical data exchanged between Alice and Bob. Each LOCC channel $\mc{L}^{(i)}_{L_{A_{i-1}}A_{i-1}B_{i-1}L_{B_{i-1}}\to L_{A_{i}}A_{i}'B_{i}'L_{B_{i}}}$ can be written in the following form \cite{Wat15}, for all $i\in \{2,3,\ldots,n\}$:
\begin{equation}\label{eq:locc}
\mc{L}^{(i)}_{L_{A_{i-1}}A_{i-1}B_{i-1}L_{B_{i-1}}\to L_{A_{i}}A_{i}'B_{i}'L_{B_{i}}}\coloneqq \sum_{y_i}\mc{E}^{y_i}_{L_{A_{i-1}}A_{i-1}\to L_{A_{i}}A_{i}'}\otimes\mc{F}^{y_i}_{B_{i-1}L_{B_{i-1}}\to B_{i}'L_{B_{i}}},
\end{equation}
where $\{\mc{E}^{y_i}_{L_{A_{i-1}}A_{i-1}\to L_{A_{i}}A_{i}'}\}_{y_i}$ and $\{\mc{F}^{y_i}_{B_{i-1}L_{B_{i-1}}\to B_{i}'L_{B_{i}}}\}_{y_i}$ are collections of completely positive, trace non-increasing maps such that the map in~\eqref{eq:locc} is trace preserving. Such an LOCC channel can be purified to an isometry in the following way:
\begin{equation}\label{eq:iso-locc}
U^{\mc{L}^{(i)}}_{L_{A_{i-1}}A_{i-1}B_{i-1}L_{B_{i-1}}\to Y_iS_{A_i}L_{A_{i}}A_{i}'B_{i}'L_{B_{i}}S_{B_i}}\coloneqq \sum_{y_i}\ket{y_i}_{Y_i}\otimes U^{\mc{E}^{y_i}}_{L_{A_{i-1}}A_{i-1}\to S_{A_i}L_{A_{i}}A_{i}'}\otimes U^{\mc{F}^{y_i}}_{B_{i-1}L_{B_{i-1}}\to B_{i}'L_{B_{i}}S_{B_i}},
\end{equation}
where $\{U^{\mc{E}^{y_i}}_{L_{A_{i-1}}A_{i-1}\to S_{A_i}L_{A_{i}}A_{i}'}\}_{y_i}$ and $\{U^{\mc{F}^{y_i}}_{B_{i-1}L_{B_{i-1}}\to B_{i}'L_{B_{i}}S_{B_i}}\}_{y_i}$ are collections of linear operators (each of which is a contraction, i.e.,  $\norm{U^{\mc{E}^{y_i}}_{L_{A_{i-1}}A_{i-1}\to S_{A_i}L_{A_{i}}A_{i}'}}_{\infty},\norm{U^{\mc{F}^{y_i}}_{B_{i-1}L_{B_{i-1}}\to B_{i}'L_{B_{i}}S_{B_i}}}_{\infty}\leq 1$ for all $y_i$) such that the linear operator $U^{\mc{L}^{(i)}}$ in \eqref{eq:iso-locc} is an isometry, the system $Y_i$ being held by Eve. The final LOCC channel can be written similarly as 
\begin{equation}
\mc{L}^{(n+1)}_{L_{A_{n}}A_{n}'B_{n}'L_{B_{n}}\to K_AK_B}\coloneqq \sum_{y_{n+1}}\mc{E}^{y_{n+1}}_{L_{A_{n}}A_{n}\to K_{A}}\otimes\mc{F}^{y_{n+1}}_{B_{n}L_{B_{n}}\to K_B},
\end{equation} 
and it can be purified to an isometry similarly as
\begin{equation}
 U^{\mc{L}^{(n+1)}}_{L_{A_{n}}A_{n}B_{n}L_{B_{n}}\to Y_{n+1} S_{A_{n+1}} K_A K_B S_{B_{n+1}}}\coloneqq \sum_{y_{n+1}}\ket{y_{n+1}}_{Y_{n+1}}\otimes U^{\mc{E}^{y_{n+1}}}_{L_{A_{n}}A_{n}\to S_{A_{n+1}}K_A}\otimes U^{\mc{F}^{y_{n+1}}}_{K_B S_{B_{n+1}}}.
\end{equation}
Furthermore, each channel use $\mc{N}_{A_i'B_i'\to A_iB_i}$, for all $i\in\{1,2,\ldots,n\}$, is purified by an isometry $U^\mc{N}_{A_i'B_i'\to A_iB_iE_i}$, such that Eve possesses the environment system $E_i$. 

At the end of the purified protocol, Alice possesses the key system $K_A$ and the shield systems $S_A\coloneqq S_{A_1}S_{A_2}\cdots S_{A_{n+1}}$, Bob possesses the key system $K_B$ and the shield systems $S_B\coloneqq S_{B_1}S_{B_2}\cdots S_{B_{n+1}}$, and Eve possesses the environment systems $E^n\coloneqq E_1E_2\cdots E_n$ as well as the coherent copies $Y^{n+1}\coloneqq Y_1Y_2\cdots Y_{n+1}$ of the classical data exchanged between Alice and Bob. The state at the end of the protocol is a pure state $\omega_{Y^{n+1}S_AK_AK_BS_BE^n}$.

For a fixed $n,\ |K|\in\mathbb{N},\ \varepsilon\in[0,1]$, the original protocol is an $(n,K,\varepsilon)$ protocol if the channel is used $n$ times as discussed above, $|K_A|=|K_B|=|K|$, and if 
\begin{align}
F(\omega_{S_AK_AK_BS_B},\gamma_{S_AK_AK_BS_B}) \geq 1-\varepsilon,
\end{align}
where $\gamma_{S_AK_AK_BS_B}$ is a bipartite private state. A rate $P$ is said to be achievable for LOCC-assisted secret key agreement if for all $\varepsilon\in(0,1]$, $\delta>0$, and sufficiently large $n$, there exists an $(n,P-\delta,\varepsilon)$ protocol. The LOCC-assisted secret-key-agreement capacity of a bidirectional channel $\mc{N}$, denoted as $P^{2\to 2}_{\LOCC}(\mc{N})$, is equal to the supremum of all achievable rates. Whereas, a rate $R$ is a strong converse rate for LOCC-assisted secret key agreement if for all $\varepsilon\in[0,1)$, $\delta>0$, and sufficiently large $n$, there does not exist an $(n,R+\delta,\varepsilon)$ protocol. The strong converse LOCC-assisted secret-key-agreement capacity $\widetilde{P}^{2\to 2}_{\LOCC}(\mc{N})$ is equal to the infimum of all strong converse rates. A bidirectional channel $\mc{N}$ is said to obey the strong converse property for LOCC-assisted secret key agreement  if $P^{2\to 2}_{\LOCC}(\mc{N})=\widetilde{P}^{2\to 2}_{\LOCC}(\mc{N})$. 

Note that the identity channel corresponding to no assistance is an LOCC channel. Therefore, we can also consider the whole development discussed above for bidirectional private communication without any assistance or feedback instead of LOCC-assisted communication. All the notions discussed above follow when we exempt the employment of any non-trivial LOCC-assistance. It follows that, unassisted bidirectional private capacity $P^{2\to 2}_{\textnormal{n-a}}(\mc{N})$ and the strong converse unassisted bidirectional private capacity $\widetilde{P}_{\textnormal{n-a}}^{2\to 2}(\mc{N})$ are bounded from above as
\begin{align}
P^{2\to 2}_{\textnormal{n-a}}(\mc{N})&\leq P^{2\to 2}_{\LOCC}(\mc{N}),\\
\widetilde{P}_{\textnormal{n-a}}^{2\to 2}(\mc{N})& \leq \widetilde{P}^{2\to 2}_{\LOCC}(\mc{N}).
\end{align} 

The following lemma will be useful in deriving upper bounds on the bidirectional secret-key-agreement capacity of a bidirectional channel. Its proof is very similar to the proof of Lemma~\ref{thm:ent-ppt-single-letter}, and so we omit it.
\begin{lemma}\label{thm:ent-locc-single-letter}
Let $\Ent_{\LOCC}(A;B)_{\rho}$ be a bipartite entanglement measure for an arbitrary bipartite state $\rho_{AB}$. Suppose that $\Ent_{\LOCC}(A;B)_{\rho}$ vanishes for all $\rho_{AB}\in \SEP(A\!:\!B)$ and is monotone non-increasing under LOCC channels. Consider an $(n,K,\varepsilon)$ protocol for LOCC-assisted secret key agreement over a bidirectional quantum channel $\mc{N}_{A'B'\to AB}$ as described in Section~\ref{sec:priv-dist-protocol}. Then the following bound holds:
\begin{equation}
\Ent_{\LOCC}(S_A K_A;K_B S_B)_{\omega}\leq n \Ent_{\LOCC, A} (\mc{N}),
\end{equation}
where $\Ent_{\LOCC, A} (\mc{N})$ is the amortized entanglement of a bidirectional channel $\mc{N}$, i.e.,
\begin{equation}\label{eq:ent-locc-c}
\Ent_{\LOCC, A} (\mc{N})\coloneqq \sup_{\rho_{L_AA'B'L_B}\in\msc{D}(\mc{H}_{L_AA'B'L_B})} \left[\Ent_{\LOCC}(L_AA;BL_B)_{\sigma}-\Ent_{\LOCC}(L_AA';B'L_B)_{\rho}\right],
\end{equation}
and $\sigma_{L_AABL_B}\coloneqq \mc{N}_{A'B'\to AB}(\rho_{L_AA'B'L_B})$.
\end{lemma}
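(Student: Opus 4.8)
The statement to prove is Lemma~\ref{thm:ent-locc-single-letter}, which is the LOCC-assisted analogue of Lemma~\ref{thm:ent-ppt-single-letter}. Since the excerpt explicitly says ``Its proof is very similar to the proof of Lemma~\ref{thm:ent-ppt-single-letter}, and so we omit it,'' the natural plan is to mirror that earlier argument step by step, replacing PPT-preserving channels with LOCC channels and the PPT condition with separability.

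\textbf{Overall approach.} The plan is to use a telescoping (chain) argument on the entanglement measure $\Ent_{\LOCC}$ evaluated along the sequence of states generated in the purified LOCC-assisted secret key agreement protocol of Section~\ref{sec:priv-dist-protocol}. First I would invoke monotonicity of $\Ent_{\LOCC}$ under the final LOCC channel $\mc{L}^{(n+1)}$ to bound $\Ent_{\LOCC}(S_AK_A;K_BS_B)_\omega$ by $\Ent_{\LOCC}(L_{A_n}A_n;B_nL_{B_n})_{\sigma^{(n)}}$, taking the bipartite cut between Alice's systems $L_{A_n}A_n$ and Bob's systems $B_nL_{B_n}$. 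Then, because the initial state $\rho^{(1)}_{L_{A_1}A_1'B_1'L_{B_1}}$ is separable across $L_{A_1}A_1'\!:\!B_1'L_{B_1}$ and $\Ent_{\LOCC}$ vanishes on separable states, I can subtract the term $\Ent_{\LOCC}(L_{A_1}A_1';B_1'L_{B_1})_{\rho^{(1)}}=0$ for free.

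\textbf{Key steps.} Next I would add and subtract, for each $i\in\{2,\dots,n\}$, the quantity $\Ent_{\LOCC}(L_{A_i}A_i';B_i'L_{B_i})_{\rho^{(i)}}$, reorganizing the resulting expression into a telescoping sum. Using monotonicity of $\Ent_{\LOCC}$ under the LOCC channels $\mc{L}^{(i)}$ (which act across the cut $L_{A_{i-1}}A_{i-1}:B_{i-1}L_{B_{i-1}}$), we have $\Ent_{\LOCC}(L_{A_i}A_i';B_i'L_{B_i})_{\rho^{(i)}}\leq \Ent_{\LOCC}(L_{A_{i-1}}A_{i-1};B_{i-1}L_{B_{i-1}})_{\sigma^{(i-1)}}$ for each $i$. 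Substituting these bounds collapses the sum so that it is dominated by $\sum_{i=1}^n\left[\Ent_{\LOCC}(L_{A_i}A_i;B_iL_{B_i})_{\sigma^{(i)}}-\Ent_{\LOCC}(L_{A_i}A_i';B_i'L_{B_i})_{\rho^{(i)}}\right]$. Since $\sigma^{(i)}_{L_{A_i}A_iB_iL_{B_i}}=\mc{N}_{A_i'B_i'\to A_iB_i}(\rho^{(i)}_{L_{A_i}A_i'B_i'L_{B_i}})$, each summand is by definition at most $\Ent_{\LOCC,A}(\mc{N})$ as given in \eqref{eq:ent-locc-c}, yielding the claimed bound $\Ent_{\LOCC}(S_AK_A;K_BS_B)_\omega\leq n\,\Ent_{\LOCC,A}(\mc{N})$.

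\textbf{Main obstacle.} The argument is essentially bookkeeping, so there is no deep obstacle; the one point requiring care is making sure the bipartite cuts are consistently tracked — in particular that the final LOCC channel $\mc{L}^{(n+1)}_{L_{A_n}A_nB_nL_{B_n}\to K_AK_B}$ respects the Alice/Bob partition so that monotonicity applies with $S_AK_A$ on Alice's side and $K_BS_B$ on Bob's side (here one uses that the purifying shield systems $S_{A_i},S_{B_i}$ are distributed to Alice and Bob respectively, as set up in Section~\ref{sec:priv-dist-protocol}), and that each intermediate $\mc{L}^{(i)}$ has the product-of-CP-maps form \eqref{eq:locc} so it is genuinely LOCC across the relevant cut. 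One also needs the hypothesis that $\Ent_{\LOCC}$ vanishes on $\SEP$ precisely to kill the $i=1$ boundary term, exactly as the PPT-vanishing hypothesis was used in Lemma~\ref{thm:ent-ppt-single-letter}. With those conventions fixed, the telescoping goes through verbatim.
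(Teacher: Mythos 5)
Your proposal is correct and follows exactly the route the paper intends: the paper omits this proof precisely because it is "very similar to the proof of Lemma~\ref{thm:ent-ppt-single-letter}", and your telescoping argument (monotonicity under the final LOCC channel, vanishing of $\Ent_{\LOCC}$ on the separable initial state, add-and-subtract to telescope, monotonicity under the intermediate LOCC channels, and then bounding each summand by the amortized quantity) reproduces that proof with PPT replaced by LOCC and $\PPT$ replaced by $\SEP$. Your remark about tracking the bipartite cuts through the purification — in particular that the purified LOCC channel $\mc{L}^{(n+1)}$ with shields retained locally remains LOCC across the $S_AK_A\!:\!K_BS_B$ cut — is the one genuinely nontrivial bookkeeping point, and you have identified it correctly.
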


\subsubsection{Strong converse rate for LOCC-assisted secret key agreement}
We now prove the following upper bound on the bidirectional secret key agreement rate $P=\frac{1}{n}\log_2 |K|$ (secret bits per channel use) of any $(n,P,\varepsilon)$ LOCC-assisted secret-key-agreement protocol over a bidirectional channel $\mc{N}$:

\begin{theorem}\label{thm:emax-ent-dist-strong-converse}
For a fixed $n,\ |K|\in\mathbb{N},\ \varepsilon\in(0,1)$, the following bound holds for an $(n,P,\varepsilon)$ protocol for LOCC-assisted secret key agreement over a bidirectional quantum channel $\mc{N}$:
\begin{equation}\label{eq:emax-ent-dist-strong-converse}
\frac{1}{n}\log_2K\leq E^{2\to 2}_{\max}(\mc{N})+
\frac{1}{n}\log_2\!\(\frac{1}{1-\varepsilon}\),
\end{equation}
such that $P=\frac{1}{n}\log_2|K|$.
\end{theorem}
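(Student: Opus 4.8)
The plan is to follow the same template as the proof of Theorem~\ref{thm:rains-ent-dist-strong-converse}, replacing the max-Rains quantities by the bidirectional max-relative entropy of entanglement and the PPT-test by a privacy test. First I would recall from Section~\ref{sec:priv-dist-protocol} that the purified $(n,P,\varepsilon)$ protocol ends in a pure state $\omega_{Y^{n+1}S_AK_AK_BS_BE^n}$ whose reduced state $\omega_{S_AK_AK_BS_B}$ satisfies $F(\omega_{S_AK_AK_BS_B},\gamma_{S_AK_AK_BS_B})\geq 1-\varepsilon$ for some bipartite private state $\gamma_{S_AK_AK_BS_B}$ with $|K|$ key values. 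Hence $\omega_{S_AK_AK_BS_B}$ is an $\varepsilon$-approximate bipartite private state, and by the properties of the $\gamma$-privacy test reviewed in Section~\ref{sec:rev-priv-states} we have $\Tr\{\Pi^\gamma\,\omega_{S_AK_AK_BS_B}\}\geq 1-\varepsilon$, while $\Tr\{\Pi^\gamma\,\sigma_{S_AK_AK_BS_B}\}\leq \frac{1}{|K|}$ for every $\sigma_{S_AK_AK_BS_B}\in\SEP(S_AK_A\!:\!K_BS_B)$.

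Next I would apply the data-processing inequality for $D_{\max}$ under the dichotomic privacy-test measurement channel, exactly as in \eqref{eq:rains-test-bound}: for any separable $\sigma_{S_AK_AK_BS_B}$,
\begin{equation}
D_{\max}(\omega_{S_AK_AK_BS_B}\Vert \sigma_{S_AK_AK_BS_B})\geq D_{\max}\big((1-\varepsilon)\Vert \tfrac{1}{|K|}\big)\geq \log_2[(1-\varepsilon)|K|],
\end{equation}
where on the right I use the classical max-relative entropy of the two Bernoulli distributions induced by the test outcomes. Minimizing the left side over separable states gives
\begin{equation}
E_{\max}(S_AK_A;K_BS_B)_{\omega}\geq \log_2[(1-\varepsilon)|K|]=\log_2|K|+\log_2(1-\varepsilon).
\end{equation}
For the upper bound, I would invoke Lemma~\ref{thm:ent-locc-single-letter} with $\Ent_{\LOCC}=E_{\max}$ — this is legitimate because $E_{\max}(A;B)_\rho$ vanishes on $\SEP(A\!:\!B)$ and is monotone non-increasing under LOCC channels (LOCC channels are a subset of PPT-preserving channels, under which $E_{\max}$ of a state is monotone; alternatively one uses data processing of $D_{\max}$ directly) — to obtain $E_{\max}(S_AK_A;K_BS_B)_{\omega}\leq n\,E^{2\to2}_{\max,A}(\mc{N})$. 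Then Corollary~\ref{cor:amort-max-rel-ent} (equivalently Proposition~\ref{prop:emax-tri-ineq}) gives $E^{2\to2}_{\max,A}(\mc{N})=E^{2\to2}_{\max}(\mc{N})$, so $E_{\max}(S_AK_A;K_BS_B)_{\omega}\leq n\,E^{2\to2}_{\max}(\mc{N})$.

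Combining the two bounds yields $\log_2|K|+\log_2(1-\varepsilon)\leq n\,E^{2\to2}_{\max}(\mc{N})$, which rearranges to \eqref{eq:emax-ent-dist-strong-converse} after dividing by $n$. The main obstacle — and the only nontrivial point — is verifying that the amortized-entanglement bookkeeping of Lemma~\ref{thm:ent-locc-single-letter} applies verbatim to the \emph{purified} protocol, i.e.\ that the systems appearing in the telescoping sum (the shield systems $S_{A_i},S_{B_i}$ and the classical registers $Y_i$ held by Eve) are consistently grouped so that each LOCC step is monotone non-increasing for $E_{\max}$ across the cut $S_AK_A\!:\!K_BS_B$, and that the first state $\rho^{(1)}$ is separable so its $E_{\max}$ vanishes. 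This is exactly the content of the remark that Lemma~\ref{thm:ent-locc-single-letter}'s proof mirrors that of Lemma~\ref{thm:ent-ppt-single-letter}; once that is granted, the rest is a routine chaining of data-processing inequalities, and the strong-converse form (exponential decay of $1-\varepsilon$ when $P>E^{2\to2}_{\max}(\mc{N})$) together with the capacity corollary $\widetilde{P}^{2\to2}_{\LOCC}(\mc{N})\leq E^{2\to2}_{\max}(\mc{N})$ follows immediately as in the PPT case.
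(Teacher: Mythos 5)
Your proposal is correct and follows essentially the same route as the paper's proof: both reduce to the fidelity guarantee from the purified protocol, pass through the $\gamma$-privacy test via data processing of $D_{\max}$ to obtain $\log_2|K|\leq E_{\max}(S_AK_A;K_BS_B)_\omega+\log_2\tfrac{1}{1-\varepsilon}$, and then apply Lemma~\ref{thm:ent-locc-single-letter} together with Corollary~\ref{cor:amort-max-rel-ent}. The only difference is that you spell out the Bernoulli $D_{\max}$ computation explicitly, whereas the paper defers this step to the cited discussion in Christandl--M\"uller-Hermes.
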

\begin{proof}
From Section~\ref{sec:priv-dist-protocol}, the following inequality holds for an $(n,|K|,\varepsilon)$ protocol:
\begin{equation}
F(\omega_{S_AK_AK_BS_B},\gamma_{S_AK_AK_BS_B})\geq 1-\varepsilon,
\end{equation}
for some bipartite private state $\gamma_{S_AK_AK_BS_B}$ with key dimension $|K|$. From Section~\ref{sec:rev-priv-states}, $\omega_{S_AK_AK_BS_B}$ passes a $\gamma$-privacy test with probability  at least $1-\varepsilon$, whereas any $\tau_{S_AK_AK_BS_B}\in\SEP(S_AK_A:K_BS_B)$ does not pass with probability greater than $\frac{1}{|K|}$ \cite{HHHO09}. Making use of the discussion in \cite[Sections III \& IV]{CM17} (i.e., from the monotonicity of the max-relative entropy of entanglement under the $\gamma$-privacy test), it can be concluded that
\begin{equation}\label{eq:emax-test-bound}
\log_2 |K| \leq
E_{\max}(S_AK_A;K_BS_B)_{\omega}+\log_2\!\(\frac{1}{1-\varepsilon}\).
\end{equation}
Applying Lemma~\ref{thm:ent-locc-single-letter} and Corollary~\ref{cor:amort-max-rel-ent}, we get that
\begin{equation}
E_{\max}(S_AK_A;K_BS_B)_\omega \leq nE^{2\to 2}_{\max}(\mc{N}).\label{eq:emax-single-letter-proof}
\end{equation}
Combining \eqref{eq:emax-test-bound} and \eqref{eq:emax-single-letter-proof}, we get the desired inequality in \eqref{eq:emax-ent-dist-strong-converse}. 
\end{proof}

\begin{remark}
Note that Theorem~\ref{thm:emax-ent-dist-strong-converse} applies in the case that the bidirectional channel $\mc{N}_{A'B'\to AB}$ is an infinite-dimensional bipartite channel, taking input density operators acting on a separable Hilbert space to output density operators acting on a separable Hilbert space. We arrive at this conclusion because the max-relative entropy is well defined for infinite-dimensional states.
\end{remark}

\begin{remark}
The bound in \eqref{eq:emax-ent-dist-strong-converse} can also be rewritten as
\begin{equation}
1-\varepsilon \leq 2^{-n[P-E^{2\to 2}_{\max}(\mc{N})]}.
\end{equation}
Thus, if the bidirectional secret-key-agreement rate $P$ is strictly larger than the bidirectional max-relative entropy of entanglement $\mc{E}^{2\to 2}_{\max}(\mc{N})$, then the reliability and security  of the transmission $(1-\varepsilon)$ decays exponentially fast to zero in the number $n$ of channel uses. 
\end{remark}

An immediate corollary of the above remark is the following strong converse statement:
\begin{corollary}
The strong converse LOCC-assisted bidirectional secret-key-agreement capacity of a bidirectional channel $\mc{N}$ is bounded from above by its bidirectional max-relative entropy of entanglement:
\begin{equation}
\widetilde{P}^{2\to 2}_{\LOCC}(\mc{N})\leq E^{2\to 2}_{\max}(\mc{N}).
\end{equation}
\end{corollary}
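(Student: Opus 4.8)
The plan is to derive this directly from Theorem~\ref{thm:emax-ent-dist-strong-converse}, which already carries all of the analytic content; the corollary is essentially an unpacking of the definition of a strong converse rate. First I would recall the reformulation stated in the remark just above: for every $(n,P,\varepsilon)$ protocol for LOCC-assisted secret key agreement over $\mc{N}$, Theorem~\ref{thm:emax-ent-dist-strong-converse} is equivalent to
\begin{equation}
1-\varepsilon \leq 2^{-n[P-E^{2\to 2}_{\max}(\mc{N})]}.
\end{equation}

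Next I would fix an arbitrary rate $P > E^{2\to 2}_{\max}(\mc{N})$ and show that it is a strong converse rate. Let $\delta>0$ and $\varepsilon\in(0,1)$ be arbitrary. If an $(n,P+\delta,\varepsilon)$ protocol existed, then the displayed inequality (with $P$ replaced by $P+\delta$) would force $1-\varepsilon\leq 2^{-n[P+\delta-E^{2\to 2}_{\max}(\mc{N})]}$. Since the exponent $P+\delta-E^{2\to 2}_{\max}(\mc{N})$ is strictly positive, the right-hand side tends to $0$ as $n\to\infty$, while $1-\varepsilon>0$ is fixed; hence for all sufficiently large $n$ no such protocol can exist. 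By the definition of a strong converse rate, $P$ is therefore a strong converse rate for LOCC-assisted secret key agreement over $\mc{N}$.

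Finally, since $\widetilde{P}^{2\to 2}_{\LOCC}(\mc{N})$ is defined as the infimum over all strong converse rates, and every $P > E^{2\to 2}_{\max}(\mc{N})$ is such a rate, taking the infimum over all such $P$ yields $\widetilde{P}^{2\to 2}_{\LOCC}(\mc{N})\leq E^{2\to 2}_{\max}(\mc{N})$, as claimed.

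I do not anticipate a genuine obstacle here. The only minor point to check is the edge case of the error parameter: Theorem~\ref{thm:emax-ent-dist-strong-converse} was stated for $\varepsilon\in(0,1)$, whereas the strong-converse definition quantifies over $\varepsilon\in[0,1)$; this is harmless, since an $(n,P+\delta,0)$ protocol is in particular an $(n,P+\delta,\varepsilon)$ protocol for any $\varepsilon>0$, so one may restrict attention to $\varepsilon\in(0,1)$ when deriving the contradiction. All of the substantive work — the amortization inequality of Proposition~\ref{prop:emax-tri-ineq}, the single-letterization through Lemma~\ref{thm:ent-locc-single-letter} together with Corollary~\ref{cor:amort-max-rel-ent}, and the privacy-test argument bounding $\log_2|K|$ in terms of $E_{\max}(S_AK_A;K_BS_B)_\omega$ — has already been carried out in establishing Theorem~\ref{thm:emax-ent-dist-strong-converse}, so nothing new is required beyond the elementary limiting argument above.
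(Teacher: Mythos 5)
Your proposal is correct and follows precisely the route the paper intends: the paper itself labels the statement ``an immediate corollary of the above remark,'' and your argument simply unpacks what the remark's exponential-decay bound $1-\varepsilon \leq 2^{-n[P-E^{2\to 2}_{\max}(\mc{N})]}$ forces under the definitions of a strong converse rate and of $\widetilde{P}^{2\to 2}_{\LOCC}(\mc{N})$ as the infimum of such rates. The handling of the $\varepsilon=0$ edge case (by noting that an $(n,P+\delta,0)$ protocol is also an $(n,P+\delta,\varepsilon)$ protocol for any $\varepsilon>0$) is a sensible and correct detail that the paper leaves implicit.
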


\section{Entangling abilities of symmetric interactions}\label{sec:ent-mes-sim}
Interactions obeying particular symmetries have played an important role in several quantum information processing tasks in the context of quantum communication protocols \cite{BDSW96,HHH99,Hol02}, quantum computing and quantum metrology \cite{DP05,JWD+08,DM14}, and resource theories \cite{Fri15,BG15}, etc. 

In this section, we define bidirectional PPT- and teleportation-simulable channels by adapting the definitions of point-to-point PPT- and LOCC-simulable channels \cite{BDSW96,HHH99,KW17} to the bidirectional setting. Then, we derive upper bounds on the entanglement and secret-key-agreement capacities for communication protocols that employ bidirectional PPT- and teleportation-simulable channels, respectively. These bounds are generally tighter than those given in the previous section, because they exploit the symmetry inherent in bidirectional PPT- and teleportation-simulable channels.

\begin{definition}[Bidirectional PPT-simulable]\label{def:bi-ppt-sim}
A bidirectional channel $\mc{N}_{A'B'\to AB}$ is PPT-simulable
with an associated resource state 
$\theta_{\hat{S}_A\hat{S}_B}\in\msc{D}\(\mc{H}_{\hat{S}_A\hat{S}_B}\)$
if for all input states $\rho_{A'B'}\in\msc{D}\(\mc{H}_{A'B'}\)$ the following equality holds
\begin{equation}
\mc{N}_{A'B'\to AB}\(\rho_{A'B'}\)=\mc{P}_{\hat{S}_AA'B'\hat{S}_B\to AB}\(\rho_{A'B'}\otimes\theta_{\hat{S}_A\hat{S}_B}\),
\end{equation}
with $\mc{P}_{\hat{S}_AA'B'\hat{S}_B\to AB}$ being a PPT-preserving channel acting on $\hat{S}_AA'\!:\!B'\hat{S}_B$, where the partial transposition acts on the composite system $B'\hat{S}_B$.
\end{definition}    

The following definition was given in 
\cite{STM11} for the special case of bipartite unitary channels:
\begin{definition}[Bidirectional teleportation-simulable]\label{def:bi-tel-sim}
A bidirectional channel $\mc{N}_{A'B'\to AB}$ is teleportation-simulable
with associated resource state 
$\theta_{\hat{S}_A\hat{S}_B}\in\msc{D}\(\mc{H}_{\hat{S}_A\hat{S}_B}\)$
if for all input states $\rho_{A'B'}\in\msc{D}\(\mc{H}_{A'B'}\)$ the following equality holds
\begin{equation}
\mc{N}_{A'B'\to AB}\(\rho_{A'B'}\)=\mc{L}_{\hat{S}_AA'B'\hat{S}_B\to AB}\(\rho_{A'B'}\otimes\theta_{\hat{S}_A\hat{S}_B}\),
\end{equation}
where $\mc{L}_{\hat{S}_AA'B'\hat{S}_B\to AB}$ is an LOCC channel acting on $\hat{S}_AA':B'\hat{S}_B$.
\end{definition} 

Let $\msc{G}$ and $\msc{H}$ be finite groups of sizes $|G|$ and $|H|$, respectively. For $g\in \msc{G}$ and $h\in \msc{H}$, let
$g\rightarrow U_{A^{\prime}}(g)$ and $h\rightarrow V_{B^{\prime}}(h)$ be
unitary representations. Also, let $(g,h)\rightarrow W_{A}(g,h)$ and
$(g,h)\rightarrow T_{B}(g,h)$ be unitary representations. A bidirectional
quantum channel $\mathcal{N}_{A^{\prime}B^{\prime}\rightarrow AB}$ is
\textit{bicovariant} with respect to these representations if the following relation
holds for all input density operators $\rho_{A^{\prime}B^{\prime}}$ and group
elements $g\in \msc{G}$ and $h\in \msc{H}$:
\begin{equation}
\mathcal{N}_{A^{\prime}B^{\prime}\rightarrow AB}((\mathcal{U}_{A^{\prime}%
}(g)\otimes\mathcal{V}_{B^{\prime}}(h))(\rho_{A^{\prime}B^{\prime}%
}))=(\mathcal{W}_{A}(g,h)\otimes\mathcal{T}_{B}(g,h))(\mathcal{N}_{A^{\prime
}B^{\prime}\rightarrow AB}(\rho_{A^{\prime}B^{\prime}})),
\end{equation}
where $\mathcal{U}(g)(\cdot)\coloneqq U(g)(\cdot)\left(
U(g)\right)^{\dag}$, $\mathcal{V}(h)(\cdot)\coloneqq V(h)(\cdot)\left(
V(h)\right)^{\dag}$, $\mathcal{T}(g,h)(\cdot)\coloneqq T(g,h)(\cdot)\left(
T(g,h)\right)^{\dag}$, and $\mathcal{W}(g,h)(\cdot)\coloneqq W(g,h)(\cdot)\left(
W(g,h)\right)^{\dag}$ are unitary channels associated with respective unitary operators.

\begin{definition}[Bicovariant channel]\label{def:bicov}
A bidirectional channel is called bicovariant if it is bicovariant with
respect to groups that have representations as unitary one-designs, i.e., for all $\rho_{A'}\in\msc{D}(\mc{H}_{A'})$ and $\rho_{B'}\in\msc{D}(\mc{H}_{B'})$,
\begin{equation}
\frac{1}{\left\vert G\right\vert }\sum_{g}\mathcal{U}_{A^{\prime}}%
(g)(\rho_{A^{\prime}})=\pi_{A^{\prime}}\ \tn{and}\ \frac{1}{\left\vert H\right\vert
}\sum_{h}\mathcal{V}_{B^{\prime}}(h)(\rho_{B^{\prime}})=\pi_{B^{\prime}},
\end{equation}
where $\pi_{A'}$ and $\pi_{B'}$ are maximally mixed states.  
\end{definition}

An example of a bidirectional channel that is bicovariant is the
controlled-NOT (CNOT) gate \cite{BDEJ95}, for which we have the following covariances \cite{G99,GC99}:
\begin{align}
\text{CNOT}(X\otimes \bm{1})  & =(X\otimes X)\text{CNOT},\\
\text{CNOT}(Z\otimes \bm{1})  & =(Z\otimes \bm{1})\text{CNOT},\\
\text{CNOT}(Y\otimes \bm{1})  & =(Y\otimes X)\text{CNOT},\\
\text{CNOT}(\bm{1}\otimes X)  & =(\bm{1}\otimes X)\text{CNOT},\\
\text{CNOT}(\bm{1}\otimes Z)  & =(Z\otimes Z)\text{CNOT},\\
\text{CNOT}(\bm{1}\otimes Y)  & =(Z\otimes Y)\text{CNOT},
\end{align}
where $\{\bm{1},X,Y,Z\}$ is the Pauli group with the identity element $\bm{1}$. A more general example of a bicovariant channel is one that applies a CNOT with some probability and,  with the complementary probability, replaces the input with the maximally mixed state.

In \cite{GC99}, the prominent idea of gate teleportation was developed, wherein one can generate the Choi state for the CNOT gate by sending in shares of maximally entangled states and then  simulate the CNOT gate's action on any input state by using teleportation through the Choi state (see also \cite{NC97} for earlier related developments). This idea generalized the notion of teleportation simulation of channels \cite{BDSW96,HHH99} from the single-sender single-receiver setting to the bidirectional setting. After these developments, \cite{CDKL01,DBB08} generalized the idea of gate teleportation to bipartite quantum channels that are not necessarily unitary channels. 

The following result slightly generalizes the developments in \cite{GC99,CDKL01,DBB08}:
\begin{proposition}\label{prop:bicov}
If a bidirectional channel $\mathcal{N}_{A^{\prime}B^{\prime}\rightarrow AB}$
is bicovariant, Definition~\ref{def:bicov}, then it is teleportation-simulable with a resource state
$\theta_{L_{A}ABL_{B}}=\mathcal{N}_{A^{\prime}B^{\prime}\rightarrow AB}%
(\Phi_{L_{A}A^{\prime}}\otimes\Phi_{B^{\prime}L_{B}})$.
\end{proposition}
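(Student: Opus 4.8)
The plan is to reduce the bicovariant case to the known gate-teleportation construction by averaging over the two symmetry groups, in close analogy with the single-party argument that covariant channels are teleportation-simulable (cited after Definition~\ref{def:tel-sim}). First I would set up the candidate simulation: given an arbitrary input $\rho_{A'B'}$, Alice and Bob locally attach their halves of the maximally entangled states $\Phi_{L_AA'}$ and $\Phi_{B'L_B}$ actually, the resource state is $\theta_{L_AABL_B}=\mathcal{N}_{A'B'\to AB}(\Phi_{L_AA'}\otimes\Phi_{B'L_B})$, so Alice holds $L_AA$, Bob holds $BL_B$, and the new input $\rho_{A'B'}$ is held with $A'$ on Alice's side and $B'$ on Bob's side. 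They then perform the standard teleportation measurements: Alice teleports her input system $A'$ into the $L_A$ slot of the resource state, obtaining a classical outcome $g$ (an element that we identify with a group element in $\msc{G}$ via the structure of the generalized teleportation protocol, cf.~\cite{Wer01}), and Bob teleports $B'$ into the $L_B$ slot, obtaining an outcome $h\in\msc{H}$. Because the teleportation is only correct up to a Pauli-type correction, after the measurements the state of $AB$ is $(\mathcal{W}_A(g,h)^{\dag}\otimes\mathcal{T}_B(g,h)^{\dag})$ applied to something — more precisely, one uses the bicovariance relation to push the input-side correction unitaries $\mathcal{U}_{A'}(g)$, $\mathcal{V}_{B'}(h)$ through the channel $\mathcal{N}$, converting them into output-side unitaries $\mathcal{W}_A(g,h)$, $\mathcal{T}_B(g,h)$, which Alice and Bob can then undo by LOCC since $g$ and $h$ are known classically.

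The key steps, in order: (i) Recall that for the maximally entangled state and the one-design property from Definition~\ref{def:bicov}, the teleportation identity $\<\Upsilon|(\rho_{R_AA'}\otimes J^{\mathcal{M}}_{RA})|\Upsilon\>_{A':R}=\mathcal{M}_{A'\to A}(\rho_{R_AA'})$ of \eqref{eq:choi-sim} expresses the post-selected (outcome $g=e$) branch; the one-design condition guarantees that summing over all $|G|\cdot|H|$ measurement outcomes with the appropriate corrections reconstructs $\mathcal{N}_{A'B'\to AB}(\rho_{A'B'})$ exactly. (ii) Write out the action on a generic branch: outcome $(g,h)$ leaves the parties with $\mathcal{N}_{A'B'\to AB}\big((\mathcal{U}_{A'}(g)\otimes\mathcal{V}_{B'}(h))(\rho_{A'B'})\big)$ up to normalization, where I am using that teleporting through $\Phi$ with outcome $(g,h)$ applies the representation unitaries $U_{A'}(g)$, $V_{B'}(h)$ to the input that gets "sent through" the resource channel. (iii) Apply the bicovariance relation to rewrite this as $(\mathcal{W}_A(g,h)\otimes\mathcal{T}_B(g,h))\big(\mathcal{N}_{A'B'\to AB}(\rho_{A'B'})\big)$. (iv) Alice and Bob, knowing $(g,h)$ from the classical teleportation data, apply $\mathcal{W}_A(g,h)^{\dag}$ and $\mathcal{T}_B(g,h)^{\dag}$ locally; every branch now yields the same state $\mathcal{N}_{A'B'\to AB}(\rho_{A'B'})$, so the mixture over outcomes does too. (v) Observe that the whole procedure — attach resource state, local measurements, broadcast classical outcomes, local corrections — is an LOCC channel $\mathcal{L}_{\hat S_A A'B'\hat S_B\to AB}$ acting across the cut $\hat S_A A':B'\hat S_B$ with $\hat S_A=L_A$, $\hat S_B=L_B$, which is exactly the form required by Definition~\ref{def:bi-tel-sim}.

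The main obstacle I anticipate is bookkeeping the correspondence between teleportation measurement outcomes and group elements so that the correction unitaries line up correctly: the generalized teleportation protocol of \cite{Wer01} produces corrections indexed by the group, and one needs the unitary representations $U_{A'}(g)$, $V_{B'}(h)$ appearing in the bicovariance hypothesis to be precisely (or be absorbable into) those teleportation corrections. For the clean statement one really wants the groups $\msc{G},\msc{H}$ to be the Heisenberg--Weyl groups (or the relevant generalized-Pauli groups) underlying standard teleportation, or at least one-designs for which a teleportation-like protocol exists with exactly those corrections; the proposition as phrased ("bicovariant with respect to groups that have representations as unitary one-designs") is slightly loose on this point, and I would either invoke the standard generalized teleportation protocol whose corrections form such a one-design, or cite \cite{GC99,CDKL01,DBB08} for the precise gate-teleportation construction and note that the argument here is the obvious common generalization. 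Everything else — the averaging, the pushing of unitaries through $\mathcal{N}$ via bicovariance, and the identification of the simulation as LOCC — is routine once that correspondence is pinned down.
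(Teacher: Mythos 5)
Your plan is the paper's plan: attach the resource state $\theta_{L_AABL_B}=\mc{N}(\Phi_{L_AA'}\otimes\Phi_{B'L_B})$, locally teleport the actual inputs $A'$ and $B'$ into the $L_A$ and $L_B$ slots, push the teleportation corrections through $\mc{N}$ using bicovariance, and have the receivers undo the resulting $\mc{W}_A(g,h)\otimes\mc{T}_B(g,h)$ corrections. Where you stop short is exactly the point you flag yourself at the end --- the ``correspondence between teleportation measurement outcomes and group elements.'' Your two proposed fixes (restrict to Heisenberg--Weyl, or defer to gate-teleportation references) are more cautious than necessary and leave the construction incomplete.

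The paper's resolution is direct and works for \emph{any} projective unitary one-design, with no Pauli structure required: define the POVM $\{E^g_{A''L_A}\}_{g\in\msc{G}}$ by
\begin{equation}
E^g_{A''L_A} \coloneqq \frac{|A'|^2}{|G|}\, U^g_{A''}\,\Phi_{A''L_A}\,(U^g_{A''})^\dag,
\end{equation}
and analogously $\{F^h_{B''L_B}\}_{h\in\msc{H}}$. The one-design identity $\frac{1}{|G|}\sum_g \mc{U}^g(\Phi_{A''L_A})=\pi_{A''}\otimes\pi_{L_A}$ together with $|A'|^2\le|G|$ (which is forced by the one-design condition) gives $\sum_g E^g_{A''L_A}=\bm{1}_{A''L_A}$, so this \emph{is} a POVM, and by construction its outcome set is literally $\msc{G}$. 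There is no leftover bookkeeping: the measurement outcomes are already the group labels appearing in the bicovariance hypothesis. From there, the ricochet identity $\bra{\Phi}_{A'A}M_{A'}=\bra{\Phi}_{A'A}M_A^{\ast}$ converts a unitary on the teleported register into one on the resource slot, so outcome $(g,h)$ leaves the residual state $\mc{N}\big((U^g_{A'}\otimes V^h_{B'})^\dag\rho_{A'B'}(U^g_{A'}\otimes V^h_{B'})\big)$, which by bicovariance equals $(W^{g,h}_A\otimes T^{g,h}_B)^\dag\,\mc{N}(\rho_{A'B'})\,(W^{g,h}_A\otimes T^{g,h}_B)$. The receivers apply $W^{g,h}_A\otimes T^{g,h}_B$ and every branch yields $\mc{N}(\rho_{A'B'})$, completing the LOCC simulation. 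That explicit POVM construction and the ricochet computation are the content of the proof that your proposal defers; everything else you wrote matches the paper's argument.
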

\begin{proof}
Let $\mathcal{N}_{A^{\prime}B^{\prime}\rightarrow AB}$ be a bidirectional
quantum channel. Given $\msc{G}$ and $\msc{H}$ are groups with unitary representations
$g\rightarrow\mathcal{U}_{A^{\prime}}(g)$ and $h\rightarrow V_{B^{\prime}}(h)$
and $(g,h)\rightarrow W_{A}(g,h)$ and $(g,h)\rightarrow T_{B}(g,h)$,
such that%
\begin{equation}
\frac{1}{\left\vert G\right\vert }\sum_{g}\mathcal{U}_{A^{\prime}
}(g)(X_{A^{\prime}})  =\operatorname{Tr}\{X_{A^{\prime}}\}\pi_{A^{\prime}
},\label{eq:one-design-cond}
\end{equation}
\begin{equation}
\frac{1}{\left\vert H\right\vert }\sum_{h}\mathcal{V}_{B^{\prime}
}(h)(Y_{B^{\prime}})  =\operatorname{Tr}\{Y_{B^{\prime}}\}\pi_{B^{\prime}
},
\end{equation}
\begin{equation}
\mathcal{N}_{A^{\prime}B^{\prime}\rightarrow AB}((\mathcal{U}_{A^{\prime}
}(g)\otimes\mathcal{V}_{B^{\prime}}(h))(\rho_{A^{\prime}B^{\prime}})) 
=(\mathcal{W}_{A}(g,h)\otimes\mathcal{T}_{B}(g,h))(\mathcal{N}_{A^{\prime
}B^{\prime}\rightarrow AB}(\rho_{A^{\prime}B^{\prime}})),
\end{equation}
where $X_{A^{\prime}}\in\mathcal{B}(\mathcal{H}_{A^{\prime}})$, $Y_{B^{\prime
}}\in\mathcal{B}(\mathcal{H}_{B^{\prime}})$, and $\pi$ denotes the maximally
mixed state. Consider that
\begin{equation}
\frac{1}{\left\vert G\right\vert }\sum_{g}\mathcal{U}_{A^{\prime\prime}%
}(g)(\Phi_{A^{\prime\prime}A^{\prime}})=\pi_{A^{\prime\prime}}\otimes
\pi_{A^{\prime}},\label{eq:max-ent-cov-action}%
\end{equation}
where $\Phi$ denotes a maximally entangled state and $A^{\prime\prime}$ is a
system isomorphic to $A^{\prime}$. Similarly,
\begin{equation}
\frac{1}{\left\vert H\right\vert }\sum_{h}\mathcal{V}_{B^{\prime\prime}%
}(h)(\Phi_{B^{\prime\prime}B^{\prime}})=\pi_{B^{\prime\prime}}\otimes
\pi_{B^{\prime}}.
\end{equation}
Note that in order for $\{U_{A^{\prime}}^{g}\}$ to satisfy
\eqref{eq:one-design-cond}, it is necessary that $\left\vert A^{\prime
}\right\vert ^{2}\leq\left\vert G\right\vert $ \cite{AMTW00}. Similarly, it is
necessary that $\left\vert B^{\prime}\right\vert ^{2}\leq\left\vert
H\right\vert $. Consider the POVM $\{E_{A^{\prime\prime}L_{A}}^{g}\}_{g}$,
with each element $E_{A^{\prime\prime}L_{A}}^{g}$ defined as%
\begin{equation}
E_{A^{\prime\prime}L_{A}}^{g}:=\frac{\left\vert A^{\prime}\right\vert
^{2}}{\left\vert G\right\vert }U_{A^{\prime\prime}}^{g}\Phi_{A^{\prime\prime
}L_{A}}\left(  U_{A^{\prime\prime}}^{g}\right)  ^{\dag}.
\end{equation}
It follows from the fact that $\left\vert A^{\prime}\right\vert ^{2}%
\leq\left\vert G\right\vert $ and \eqref{eq:max-ent-cov-action}\ that
$\{E_{A^{\prime\prime}L_{A}}^{g}\}_{g}$ is a valid POVM. Similarly,  let us fine
the POVM\ $\{F_{B^{\prime\prime}L_{B}}^{h}\}_{h}$ as%
\begin{equation}
F_{B^{\prime\prime}L_{B}}^{h}:=\frac{\left\vert B^{\prime}\right\vert
^{2}}{\left\vert H\right\vert }V_{B^{\prime\prime}}^{h}\Phi_{B^{\prime\prime
}L_{B}}\left(  V_{B^{\prime\prime}}^{h}\right)  ^{\dag}.
\end{equation}

The simulation of the channel $\mathcal{N}_{A^{\prime}B^{\prime}\rightarrow
AB}$ via teleportation begins with a state $\rho_{A^{\prime\prime}%
B^{\prime\prime}}$ and a shared resource $\theta_{L_{A}ABL_{B}}=\mathcal{N}%
_{A^{\prime}B^{\prime}\rightarrow AB}(\Phi_{L_{A}A^{\prime}}\otimes
\Phi_{B^{\prime}L_{B}})$. The desired outcome is for the receivers to receive
the state $\mathcal{N}_{A^{\prime}B^{\prime}\rightarrow AB}(\rho_{A^{\prime
}B^{\prime}})$ and for the protocol to work independently of the input state
$\rho_{A^{\prime}B^{\prime}}$. The first step is for the senders to locally
perform the measurement $\{E_{A^{\prime\prime}L_{A}}^{g}\otimes F_{B^{\prime
\prime}L_{B}}^{h}\}_{g,h}$ and then send the outcomes $g$\ and $h$ to the
receivers. Based on the outcomes $g$ and $h$, the receivers then perform
$W_{A}^{g,h}$ and $T_{B}^{g,h}$. The following analysis demonstrates that this
protocol works, by simplifying the form of the post-measurement state:%
\begin{align}
&  \left\vert G\right\vert \left\vert H\right\vert \operatorname{Tr}%
_{A^{\prime\prime}L_{A}B^{\prime\prime}L_{B}}\{(E_{A^{\prime\prime}L_{A}}%
^{g}\otimes F_{B^{\prime\prime}L_{B}}^{h})(\rho_{A^{\prime\prime}%
B^{\prime\prime}}\otimes\theta_{L_{A}ABL_{B}})\}\nonumber\\
&  =\left\vert A^{\prime}\right\vert ^{2}\left\vert B^{\prime}\right\vert
^{2}\operatorname{Tr}_{A^{\prime\prime}L_{A}B^{\prime\prime}L_{B}%
}\{[U_{A^{\prime\prime}}^{g}\Phi_{A^{\prime\prime}L_{A}}\left(  U_{A^{\prime
\prime}}^{g}\right)  ^{\dag}\otimes V_{B^{\prime\prime}}^{h}\Phi
_{B^{\prime\prime}L_{B}}\left(  V_{B^{\prime\prime}}^{h}\right)  ^{\dag}%
](\rho_{A^{\prime\prime}B^{\prime\prime}}\otimes\theta_{L_{A}ABL_{B}})\}\\
&  =\left\vert A^{\prime}\right\vert ^{2}\left\vert B^{\prime}\right\vert
^{2}\langle\Phi|_{A^{\prime\prime}L_{A}}\otimes\langle\Phi|_{B^{\prime\prime
}L_{B}}\left(  U_{A^{\prime\prime}}^{g}\otimes V_{B^{\prime\prime}}%
^{h}\right)  ^{\dag}(\rho_{A^{\prime\prime}B^{\prime\prime}}\otimes
\theta_{L_{A}ABL_{B}})(U_{A^{\prime\prime}}^{g}\otimes V_{B^{\prime\prime}%
}^{h})|\Phi\rangle_{A^{\prime\prime}L_{A}}\otimes|\Phi\rangle_{B^{\prime
\prime}L_{B}}\\
&  =\left\vert A^{\prime}\right\vert ^{2}\left\vert B^{\prime}\right\vert
^{2}\langle\Phi|_{A^{\prime\prime}L_{A}}\otimes\langle\Phi|_{B^{\prime\prime
}L_{B}}\left[  \left(  U_{A^{\prime\prime}}^{g}\otimes V_{B^{\prime\prime}%
}^{h}\right)  ^{\dag}\rho_{A^{\prime\prime}B^{\prime\prime}}(U_{A^{\prime
\prime}}^{g}\otimes V_{B^{\prime\prime}}^{h})\right]  \nonumber\\
&  \qquad\qquad\otimes\mathcal{N}_{A^{\prime}B^{\prime}\rightarrow AB}%
(\Phi_{L_{A}A^{\prime}}\otimes\Phi_{B^{\prime}L_{B}}))|\Phi\rangle
_{A^{\prime\prime}L_{A}}\otimes|\Phi\rangle_{B^{\prime\prime}L_{B}}\\
&  =\left\vert A^{\prime}\right\vert ^{2}\left\vert B^{\prime}\right\vert
^{2}\langle\Phi|_{A^{\prime\prime}L_{A}}\otimes\langle\Phi|_{B^{\prime\prime
}L_{B}}\left[  \left(  U_{L_{A}}^{g}\otimes V_{L_{B}}^{h}\right)  ^{\dag}%
\rho_{L_{A}L_{B}}(U_{L_{A}}^{g}\otimes V_{L_{B}}^{h})\right]  ^{\ast
}\nonumber\\
&  \qquad\qquad\mathcal{N}_{A^{\prime}B^{\prime}\rightarrow AB}(\Phi
_{L_{A}A^{\prime}}\otimes\Phi_{B^{\prime}L_{B}}))|\Phi\rangle_{A^{\prime
\prime}L_{A}}\otimes|\Phi\rangle_{B^{\prime\prime}L_{B}}%
.\label{eq:cov-tp-simul-block-1}%
\end{align}
The first three equalities follow by substitution and some rewriting. The fourth equality follows from the fact that
\begin{equation}
\langle\Phi|_{A^{\prime}A}M_{A^{\prime}}=\langle\Phi|_{A^{\prime}A}M_{A}%
^{\ast}\label{eq:ricochet-prop}%
\end{equation}
for any operator $M$ and where $\ast$ denotes the complex conjugate, taken
with respect to the basis in which $|\Phi\rangle_{A^{\prime}A}$ is defined.
Continuing, we have that%
\begin{align}
\eqref{eq:cov-tp-simul-block-1} &  =\left\vert A^{\prime}\right\vert
\left\vert B^{\prime}\right\vert \operatorname{Tr}_{L_{A}L_{B}}\left\{
\left[  \left(  U_{L_{A}}^{g}\otimes V_{L_{B}}^{h}\right)  ^{\dag}\rho
_{L_{A}L_{B}}(U_{L_{A}}^{g}\otimes V_{L_{B}}^{h})\right]  ^{\ast}%
\mathcal{N}_{A^{\prime}B^{\prime}\rightarrow AB}(\Phi_{L_{A}A^{\prime}}%
\otimes\Phi_{B^{\prime}L_{B}}))\right\}  \\
&  =\left\vert A^{\prime}\right\vert \left\vert B^{\prime}\right\vert
\operatorname{Tr}_{L_{A}L_{B}}\left\{  \mathcal{N}_{A^{\prime}B^{\prime
}\rightarrow AB}\left(  \left[  \left(  U_{A^{\prime}}^{g}\otimes
V_{B^{\prime}}^{h}\right)  ^{\dag}\rho_{A^{\prime}B^{\prime}}(U_{A^{\prime}%
}^{g}\otimes V_{B^{\prime}}^{h})\right]  ^{\dag}\left(  \Phi_{L_{A}A^{\prime}%
}\otimes\Phi_{B^{\prime}L_{B}}\right)  \right)  \right\}  \\
&  =\mathcal{N}_{A^{\prime}B^{\prime}\rightarrow AB}\left(  \left[  \left(
U_{A^{\prime}}^{g}\otimes V_{B^{\prime}}^{h}\right)  ^{\dag}\rho_{A^{\prime
}B^{\prime}}(U_{A^{\prime}}^{g}\otimes V_{B^{\prime}}^{h})\right]  ^{\dag
}\right)  \\
&  =\mathcal{N}_{A^{\prime}B^{\prime}\rightarrow AB}\left(  \left(
U_{A^{\prime}}^{g}\otimes V_{B^{\prime}}^{h}\right)  ^{\dag}\rho_{A^{\prime
}B^{\prime}}(U_{A^{\prime}}^{g}\otimes V_{B^{\prime}}^{h})\right)  \\
&  =\left(  W_{A}^{g,h}\otimes T_{B}^{g,h}\right)  ^{\dag}\mathcal{N}%
_{A^{\prime}B^{\prime}\rightarrow AB}\left(  \rho_{A^{\prime}B^{\prime}%
}\right)  (W_{A}^{g,h}\otimes T_{B}^{g,h})
\end{align}
The first equality follows because $\left\vert A\right\vert \langle
\Phi|_{A^{\prime}A}\left(  \bm{1}_{A^{\prime}}\otimes M_{AB}\right)  |\Phi
\rangle_{A^{\prime}A}=\operatorname{Tr}_{A}\{M_{AB}\}$ for any operator
$M_{AB}$. The second equality follows by applying the conjugate transpose of
\eqref{eq:ricochet-prop}. The final equality follows from the covariance
property of the channel.

Thus, if the receivers finally perform the unitaries $W_{A}^{g,h}\otimes
T_{B}^{g,h}$ upon receiving $g$ and $h$ via a classical channel from the
senders, then the output of the protocol is $\mathcal{N}_{A^{\prime}B^{\prime
}\rightarrow AB}\left(  \rho_{A^{\prime}B^{\prime}}\right)  $, so that this
protocol simulates the action of the channel $\mathcal{N}$ on the state $\rho$.
\end{proof} 
\bigskip

We now establish an upper bound on the  entanglement generation rate of any $(n,M,\varepsilon)$ PPT-assisted protocol that employs a bidirectional PPT-simulable channel.

\begin{theorem}\label{thm:rains-ent-dist-strong-converse-ppt}
For a fixed $n,\ |M|\in\mathbb{N},\ \varepsilon\in(0,1)$, the following strong converse bound holds for an $(n,Q,\varepsilon)$ protocol for PPT-assisted entanglement generation  over a bidirectional PPT-simulable quantum channel $\mc{N}$ with an associated resource state $\theta_{\hat{S}_A\hat{S}_B}$, Definition~\ref{def:bi-ppt-sim}, 
\begin{equation}\label{eq:rains-ent-dist-strong-converse-ppt}
\forall \alpha>1,\quad
Q \leq \widetilde{R}_{\alpha} (\hat{S}_A;\hat{S}_B)_{\theta} 
+\frac{\alpha}{n(\alpha-1)}\log_2 \!\(
\frac{1}{1-\varepsilon}\)
\end{equation} 
such that $Q=\frac{1}{n} \log_2 |M|$, 
where $\widetilde{R}_{\alpha}(\hat{S}_A;\hat{S}_B)_{\theta}$ is the sandwiched Rains information \eqref{eq:alpha-rains-inf-state} of the state $\theta_{\hat{S}_A\hat{S}_B}$.
\end{theorem}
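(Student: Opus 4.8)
The plan is to exploit the PPT-simulability of $\mc{N}$ in order to collapse the entire $n$-round PPT-assisted protocol of Section~\ref{sec:ent-dist-protocol} into a single PPT-preserving channel applied to $n$ independent copies of the resource state $\theta_{\hat{S}_A\hat{S}_B}$, and then to combine monotonicity of the sandwiched Rains information under PPT-preserving channels, its sub-additivity on tensor powers, and a Rains-type ``entanglement test'' estimate to read off the bound.

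First I would carry out the structural reduction. By Definition~\ref{def:bi-ppt-sim}, the $i$-th use of $\mc{N}_{A_i'B_i'\to A_iB_i}$ is equivalent to appending a fresh copy $\theta_{\hat{S}_{A_i}\hat{S}_{B_i}}$ and applying a PPT-preserving channel $\mc{P}_{\hat{S}_{A_i}A_i'B_i'\hat{S}_{B_i}\to A_iB_i}$ whose transposition cut $\hat{S}_{A_i}A_i'\!:\!B_i'\hat{S}_{B_i}$ separates Alice's systems from Bob's. Every interleaving channel $\mc{P}^{(j)}$ in the protocol is PPT-preserving across the same Alice:Bob cut; since compositions and tensor products of PPT-preserving channels are again PPT-preserving, and since the $n$ resource states may be prepared at the outset (appending a copy of $\theta$ commutes past any operation not acting on its systems), the final state can be rewritten as
\begin{equation}
\omega_{M_AM_B}=\overline{\mc{P}}_{\hat{S}^n_A\hat{S}^n_B\to M_AM_B}\!\(\theta_{\hat{S}_{A_1}\hat{S}_{B_1}}\otimes\cdots\otimes\theta_{\hat{S}_{A_n}\hat{S}_{B_n}}\),
\end{equation}
where $\hat{S}^n_A\coloneqq\hat{S}_{A_1}\cdots\hat{S}_{A_n}$, $\hat{S}^n_B\coloneqq\hat{S}_{B_1}\cdots\hat{S}_{B_n}$, and $\overline{\mc{P}}$ is a PPT-preserving channel with respect to the cut $\hat{S}^n_A\!:\!\hat{S}^n_B$ (and $M_A\!:\!M_B$ on the output). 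This is the bidirectional analogue of the reduction used in \cite{KW17} for point-to-point channels, and I expect the register and cut bookkeeping here --- verifying precisely that the interleaved PPT-preserving channels and the PPT-preserving simulations of $\mc{N}$ compose into a single PPT-preserving channel acting on $\theta^{\otimes n}$ --- to be the main technical obstacle.

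Next I would invoke two properties of the sandwiched Rains information $\wt{R}_{\alpha}$ for $\alpha>1$. First, it is monotone non-increasing under PPT-preserving channels: if $\mc{P}_{AB\to A'B'}$ is PPT-preserving and $\sigma_{AB}\in\PPT'(A\!:\!B)$, then $\mc{P}(\sigma_{AB})\geq 0$ and $\norm{\T_{B'}(\mc{P}(\sigma_{AB}))}_1=\norm{(\T_{B'}\circ\mc{P}\circ\T_B)(\T_B(\sigma_{AB}))}_1\leq\norm{\T_B(\sigma_{AB})}_1\leq 1$, since $\T_{B'}\circ\mc{P}\circ\T_B$ is a quantum channel and quantum channels are contractions in trace norm; hence $\mc{P}(\sigma_{AB})\in\PPT'(A'\!:\!B')$, and the claim follows from the data-processing inequality for $\wt{D}_{\alpha}$ recalled earlier. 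Second, it is sub-additive on tensor powers: taking $\sigma^{\ast}\in\PPT'(\hat{S}_A\!:\!\hat{S}_B)$ optimal for $\theta_{\hat{S}_A\hat{S}_B}$, one has $(\sigma^{\ast})^{\otimes n}\in\PPT'(\hat{S}^n_A\!:\!\hat{S}^n_B)$ because $\norm{\T_{\hat{S}^n_B}((\sigma^{\ast})^{\otimes n})}_1=\norm{\T_{\hat{S}_B}(\sigma^{\ast})}_1^n\leq 1$, and additivity of $\wt{D}_{\alpha}$ on tensor products gives $\wt{R}_{\alpha}(\hat{S}^n_A;\hat{S}^n_B)_{\theta^{\otimes n}}\leq n\,\wt{R}_{\alpha}(\hat{S}_A;\hat{S}_B)_{\theta}$. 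Chaining these with the reduction yields $\wt{R}_{\alpha}(M_A;M_B)_{\omega}\leq n\,\wt{R}_{\alpha}(\hat{S}_A;\hat{S}_B)_{\theta}$.

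Finally I would lower-bound $\wt{R}_{\alpha}(M_A;M_B)_{\omega}$ via an entanglement test. The $(n,Q,\varepsilon)$ protocol gives $\Tr\{\Phi_{M_AM_B}\omega_{M_AM_B}\}\geq 1-\varepsilon$, while Lemma~\ref{thm:rai99} gives $\Tr\{\Phi_{M_AM_B}\sigma_{M_AM_B}\}\leq 1/|M|$ for every $\sigma_{M_AM_B}\in\PPT'(M_A\!:\!M_B)$. Applying the data-processing inequality for $\wt{D}_{\alpha}$ under the two-outcome measurement $\{\Phi_{M_AM_B},\bm{1}_{M_AM_B}-\Phi_{M_AM_B}\}$ and then dropping the non-negative second term of the resulting binary R\'enyi divergence (legitimate since $\alpha>1$) gives, for every such $\sigma_{M_AM_B}$,
\begin{equation}
\wt{D}_{\alpha}(\omega_{M_AM_B}\Vert\sigma_{M_AM_B})\geq\frac{1}{\alpha-1}\log_2\!\((1-\varepsilon)^{\alpha}|M|^{\alpha-1}\)=\log_2|M|-\frac{\alpha}{\alpha-1}\log_2\!\(\frac{1}{1-\varepsilon}\).
\end{equation}
Minimizing the left-hand side over $\sigma_{M_AM_B}\in\PPT'(M_A\!:\!M_B)$ yields $\wt{R}_{\alpha}(M_A;M_B)_{\omega}\geq\log_2|M|-\frac{\alpha}{\alpha-1}\log_2(1/(1-\varepsilon))$. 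Combining with $\wt{R}_{\alpha}(M_A;M_B)_{\omega}\leq n\,\wt{R}_{\alpha}(\hat{S}_A;\hat{S}_B)_{\theta}$ and dividing by $n$ (recall $Q=\frac{1}{n}\log_2|M|$) produces exactly \eqref{eq:rains-ent-dist-strong-converse-ppt}.
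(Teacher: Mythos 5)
Your proposal is correct, and it reaches the same intermediate bound $\wt{R}_{\alpha}(M_A;M_B)_{\omega}\leq n\,\wt{R}_{\alpha}(\hat{S}_A;\hat{S}_B)_{\theta}$ as the paper, but by a genuinely different route. The paper does \emph{not} collapse the protocol into a single global PPT-preserving channel acting on $\theta^{\otimes n}$; instead it invokes the amortization Lemma~\ref{thm:ent-ppt-single-letter}, which telescopes the entanglement measure across the $n$ rounds and reduces the problem to bounding the single-use amortization gap $\wt{R}_{\alpha}(L_AA;BL_B)_{\mc{N}(\rho)}-\wt{R}_{\alpha}(L_AA';B'L_B)_{\rho}$, which it then controls via the PPT-simulation of one channel use together with the monotonicity and sub-additivity of $\wt{R}_{\alpha}$ — exactly the same two properties you invoke, but applied round-by-round rather than globally. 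Your ``PPT stretching'' reduction to $\omega_{M_AM_B}=\overline{\mc{P}}(\theta^{\otimes n})$ is the more classical argument (in the lineage of teleportation stretching), and once the commuting-past-$\theta$ bookkeeping is done — which does go through, since compositions and tensor products of PPT-preserving channels across a fixed Alice--Bob cut remain PPT-preserving and the first preparation channel $\mc{P}^{(1)}$ itself outputs a PPT state — a single application of PPT-monotonicity and sub-additivity finishes the job. What each buys: your approach is more direct and self-contained for simulable channels, but requires the global collapse; the paper's amortization route avoids any protocol collapse, reuses a lemma that is also needed for the non-simulable case (Theorem~\ref{thm:rains-ent-dist-strong-converse}), and would still give \emph{some} bound even without simulability. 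Your final entanglement-test step, with the exact computation $\wt{D}_{\alpha}(\omega\Vert\sigma)\geq\log_2|M|-\frac{\alpha}{\alpha-1}\log_2\frac{1}{1-\varepsilon}$ obtained by dropping the non-negative second term of the binary R\'enyi divergence, is identical to the paper's.
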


\begin{proof}
The first few steps are similar to those in the proof of Theorem~\ref{thm:rains-ent-dist-strong-converse}. From Section~\ref{sec:ent-dist-protocol}, we have that
\begin{equation}
\Tr\{\Phi_{M_AM_B}\omega_{M_AM_B}\}\geq 1-\varepsilon,
\end{equation}
while \cite[Lemma 2]{Rai99} implies that 
\begin{equation}
\forall \sigma_{M_AM_B}\in\PPT'(M_A\!:\!M_B),\ \Tr\{\Phi_{M_AM_B}\sigma_{M_AM_B}\}\leq \frac{1}{|M|}.
\end{equation}
Under an \textquotedblleft entanglement test\textquotedblright, which is a measurement with POVM $\{\Phi_{M_AM_B},\bm{1}_{M_AM_B}-\Phi_{M_AM_B}\}$, and applying the data processing inequality for the sandwiched R\'enyi relative entropy, we find that, for all $\alpha>1$, 
\begin{equation}
\log_2 |M|\leq \widetilde{R}_{\alpha}(M_A;M_B)_{\omega}+\frac{\alpha}{\alpha-1}\log_2\!\(\frac{1}{1-\varepsilon}\). \label{eq:rains-test-bound-ppt}
\end{equation}
The sandwiched Rains relative entropy is monotonically non-increasing under the action of PPT-preserving channels and vanishing for a PPT state. Applying Lemma~\ref{thm:ent-ppt-single-letter}, we find that  
\begin{align}
\widetilde{R}_{\alpha}(M_A;M_B)_\omega \leq n \sup_{\rho_{L_AA'B'L_B}} \left[ \widetilde{R}_{\alpha}(L_{A}A;BL_{B})_{\mc{N}(\rho)}-\widetilde{R}_{\alpha}(L_{A}A';B'L_{B})_{\rho}\right].\label{eq:summand-ppt}
\end{align}

As stated in Definition~\ref{def:bi-ppt-sim}, a PPT-simulable bidirectional channel $\mc{N}_{A'B'\to AB}$ with an associated resource state
$\theta_{\hat{S}_A\hat{S}_B}$
is such that, for any input state $\rho'_{A'B'}$, 
\begin{equation}
\mc{N}_{A'B'\to AB}\(\rho'_{A'B'}\)=\mc{P}_{\hat{S}_AA'B'\hat{S}_B\to AB}
\(\rho'_{A'B'}\otimes\theta_{\hat{S}_A\hat{S}_B}\).
\end{equation}
Then, for any input state $\omega'_{L_AA'B'L_B}$, 
\begin{align}
&\widetilde{R}_{\alpha}(L_AA;BL_B)_{\mc{P}(\omega'\otimes\theta)}-\widetilde{R}_{\alpha}(L_AA';B'L_B)_{\omega'}\nonumber\\ 
&\leq \widetilde{R}_{\alpha}(L_A\hat{S}_AA';B'\hat{S}_BL_B)_{\omega'\otimes\theta}-\widetilde{R}_{\alpha}(L_AA';B'L_B)_{\omega'}\\
&\leq \widetilde{R}_{\alpha}(L_AA';B'L_B)_{\omega'}+\widetilde{R}_{\alpha}(\hat{S}_A;\hat{S}_B)_\theta-\widetilde{R}_{\alpha}(L_AA';B'L_B)_{\omega'}\\
&=\widetilde{R}_{\alpha}(\hat{S}_A;\hat{S}_B)_\theta. \label{eq:reduce-bound-ppt}
\end{align}
The first inequality follows from monotonicity of $\widetilde{R}_{\alpha}$ with respect to PPT-preserving channels. The second inequality follows because $\widetilde{R}_{\alpha}$ is sub-additive with respect to tensor-product states.

Applying the bound in \eqref{eq:reduce-bound-ppt} to \eqref{eq:summand-ppt}, we find that
\begin{equation}
\widetilde{R}_{\alpha}(M_A;M_B)_\omega\leq n\widetilde{R}_{\alpha}(\hat{S}_A;\hat{S}_B)_\theta.\label{eq:rains-single-letter-proof-ppt}
\end{equation}
 Combining \eqref{eq:rains-test-bound-ppt} and \eqref{eq:rains-single-letter-proof-ppt}, we get the desired inequality in \eqref{eq:rains-ent-dist-strong-converse-ppt}. 
\end{proof}

\bigskip
Now we establish an upper bound on the  secret key rate of an $(n,|K|,\varepsilon)$ secret-key-agreement protocol that employs a bidirectional teleportation-simulable channel.  

\begin{theorem}\label{thm:rel-ent-dist-strong-converse-tel}
For a fixed $n,\ |K|\in\mathbb{N},\ \varepsilon\in(0,1)$, the following strong converse bound holds for an $(n,P,\varepsilon)$ protocol for secret key agreement  over a bidirectional teleportation-simulable quantum channel $\mc{N}$ with an associated resource state $\theta_{\hat{S}_A\hat{S}_B}$:
\begin{equation}\label{eq:rel-ent-dist-strong-converse-tel}
\forall \alpha>1,\quad
P
\leq \widetilde{E}_{\alpha} (\hat{S}_A;\hat{S}_B)_{\theta}
+\frac{\alpha}{n(\alpha-1)}\log_2\!\( \frac{1}{1-\varepsilon}\)
\end{equation} 
such that $P=\frac{1}{n} \log_2 |K|$,
where $\widetilde{E}_{\alpha}(\hat{S}_A;\hat{S}_B)_{\theta}$ is the sandwiched relative entropy of entanglement \eqref{eq:rel-ent-state} of the state $\theta_{\hat{S}\hat{S}_B}$.
\end{theorem}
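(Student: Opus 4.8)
The plan is to mirror the proof of Theorem~\ref{thm:rains-ent-dist-strong-converse-ppt}, replacing the ``entanglement test'' by a $\gamma$-privacy test and the sandwiched Rains information by the sandwiched relative entropy of entanglement. First I would recall from Section~\ref{sec:priv-dist-protocol} that, for an $(n,P,\varepsilon)$ LOCC-assisted secret-key-agreement protocol, the final state $\omega_{S_AK_AK_BS_B}$ satisfies $F(\omega_{S_AK_AK_BS_B},\gamma_{S_AK_AK_BS_B})\geq 1-\varepsilon$ for some bipartite private state $\gamma$ with key dimension $|K|$. Applying the $\gamma$-privacy test $\{\Pi^\gamma,\bm{1}-\Pi^\gamma\}$ from Section~\ref{sec:rev-priv-states}, we have $\Tr\{\Pi^\gamma\omega\}\geq 1-\varepsilon$ while any $\sigma_{S_AK_A:K_BS_B}\in\SEP$ passes with probability at most $1/|K|$. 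Then the data-processing inequality for the sandwiched R\'enyi relative entropy (valid for $\alpha>1$), together with the minimization over separable $\sigma$, yields
\begin{equation}
\log_2|K|\leq \widetilde{E}_{\alpha}(S_AK_A;K_BS_B)_\omega+\frac{\alpha}{\alpha-1}\log_2\!\(\frac{1}{1-\varepsilon}\),
\end{equation}
which is the privacy-test analogue of \eqref{eq:rains-test-bound-ppt}; this step parallels the bound \eqref{eq:emax-test-bound} used for $E_{\max}$ but now tracks the R\'enyi parameter.

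Next I would invoke the amortization lemma. Since the sandwiched relative entropy of entanglement $\widetilde{E}_{\alpha}$ vanishes on $\SEP(A\!:\!B)$ and is monotone non-increasing under LOCC channels (it is defined by minimizing a generalized divergence over separable states, and LOCC channels map separable to separable), Lemma~\ref{thm:ent-locc-single-letter} applies with $\Ent_{\LOCC}=\widetilde{E}_{\alpha}$, giving
\begin{equation}
\widetilde{E}_{\alpha}(S_AK_A;K_BS_B)_\omega\leq n\sup_{\rho_{L_AA'B'L_B}}\left[\widetilde{E}_{\alpha}(L_AA;BL_B)_{\mc{N}(\rho)}-\widetilde{E}_{\alpha}(L_AA';B'L_B)_\rho\right],
\end{equation}
the analogue of \eqref{eq:summand-ppt}.

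The core step is then to bound each summand using teleportation-simulability. If $\mc{N}_{A'B'\to AB}(\rho'_{A'B'})=\mc{L}_{\hat{S}_AA'B'\hat{S}_B\to AB}(\rho'_{A'B'}\otimes\theta_{\hat{S}_A\hat{S}_B})$ for an LOCC channel $\mc{L}$ acting on the cut $\hat{S}_AA'\!:\!B'\hat{S}_B$, then for any input $\omega'_{L_AA'B'L_B}$,
\begin{align}
&\widetilde{E}_{\alpha}(L_AA;BL_B)_{\mc{L}(\omega'\otimes\theta)}-\widetilde{E}_{\alpha}(L_AA';B'L_B)_{\omega'}\nonumber\\
&\quad\leq \widetilde{E}_{\alpha}(L_A\hat{S}_AA';B'\hat{S}_BL_B)_{\omega'\otimes\theta}-\widetilde{E}_{\alpha}(L_AA';B'L_B)_{\omega'}\\
&\quad\leq \widetilde{E}_{\alpha}(L_AA';B'L_B)_{\omega'}+\widetilde{E}_{\alpha}(\hat{S}_A;\hat{S}_B)_\theta-\widetilde{E}_{\alpha}(L_AA';B'L_B)_{\omega'}=\widetilde{E}_{\alpha}(\hat{S}_A;\hat{S}_B)_\theta,
\end{align}
using monotonicity of $\widetilde{E}_{\alpha}$ under the LOCC channel $\mc{L}$ (with respect to the correct bipartite cut $L_A\hat{S}_AA'\!:\!B'\hat{S}_BL_B$) for the first inequality, and subadditivity of $\widetilde{E}_{\alpha}$ on tensor-product states for the second. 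This gives $\widetilde{E}_{\alpha}(S_AK_A;K_BS_B)_\omega\leq n\widetilde{E}_{\alpha}(\hat{S}_A;\hat{S}_B)_\theta$, and combining with the privacy-test bound yields \eqref{eq:rel-ent-dist-strong-converse-tel}. The main obstacle I anticipate is justifying the two structural properties of $\widetilde{E}_{\alpha}$ in the bidirectional setting — namely that subadditivity on product states holds (which requires a short argument taking product separable approximants $\sigma\otimes\sigma'$) and that the relevant LOCC channel $\mc{L}_{\hat{S}_AA'B'\hat{S}_B\to AB}$ really is LOCC with respect to the bipartition grouping the $\hat{S}_A$ shield with Alice's systems — but both are routine given the definitions already in the excerpt, and the privacy-test inequality itself is the one genuinely new ingredient relative to the PPT-assisted entanglement case.
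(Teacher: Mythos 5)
Your proposal is correct and follows essentially the same route as the paper's proof: the $\gamma$-privacy test yields $\log_2|K|\leq \widetilde{E}_{\alpha}(S_AK_A;K_BS_B)_\omega+\frac{\alpha}{\alpha-1}\log_2\!\(\frac{1}{1-\varepsilon}\)$, Lemma~\ref{thm:ent-locc-single-letter} gives the amortized single-letter bound, and teleportation-simulability plus monotonicity and subadditivity of $\widetilde{E}_\alpha$ reduces the amortized quantity to $\widetilde{E}_\alpha(\hat{S}_A;\hat{S}_B)_\theta$. The only cosmetic difference is that the paper presents the teleportation-simulability reduction before the privacy-test step and defers the details of the privacy-test inequality to a citation of \cite[Section~5.2]{WTB16}.
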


\begin{proof}
As stated in Definition~\ref{def:bi-ppt-sim}, a bidirectional teleportation-simulable  channel $\mc{N}_{A'B'\to AB}$ is such that, for any input state $\rho'_{A'B'}$,
\begin{equation}
\mc{N}_{A'B'\to AB}\(\rho'_{A'B'}\)=\mc{L}_{\hat{S}_AA'B'\hat{S}_B\to AB}\(\rho'_{A'B'}\otimes\theta_{\hat{S}_A\hat{S}_B}\).
\end{equation}
Then, for any input state $\omega'_{L_AA'B'L_B}$, 
\begin{align}
&\widetilde{E}_{\alpha}(L_AA;BL_B)_{\mc{L}(\omega'\otimes\theta)}-\widetilde{E}_{\alpha}(L_AA';B'L_B)_{\omega'}\nonumber\\ 
&\leq \widetilde{E}_{\alpha}(L_A\hat{S}_AA';B'\hat{S}_BL_B)_{\omega'\otimes\theta}-\widetilde{E}_{\alpha}(L_AA';B'L_B)_{\omega'}\\
&\leq \widetilde{E}_{\alpha}(L_AA';B'L_B)_{\omega'}+\widetilde{E}_{\alpha}(\hat{S}_A;\hat{S}_B)_\theta-\widetilde{E}_{\alpha}(L_AA';B'L_B)_{\omega'}\\
&=\widetilde{E}_{\alpha}(\hat{S}_A;\hat{S}_B)_\theta. \label{eq:reduce-bound-tel}
\end{align}
The first inequality follows from monotonicity of $\widetilde{E}_{\alpha}$ with respect to LOCC channels. The second inequality follows because $\widetilde{E}_{\alpha}$ is sub-additive.

From Section~\ref{sec:priv-dist-protocol}, the following inequality holds for an $(n,P,\varepsilon)$ protocol:
\begin{equation}
F(\omega_{S_AK_AK_BS_B},\gamma_{S_AK_AK_BS_B})\geq 1-\varepsilon,
\end{equation}
for some bipartite private state $\gamma_{S_AK_AK_BS_B}$ with key dimension $|K|$. From Section~\ref{sec:rev-priv-states}, $\omega_{S_AK_AK_BS_B}$ passes a $\gamma$-privacy test with probability at least $1-\varepsilon$, whereas any $\tau_{S_AK_AK_BS_B}\in\SEP(S_AK_A:K_BS_B)$ does not pass with probability greater than $\frac{1}{|K|}$ \cite{HHHO09}. Making use of the results in \cite[Section 5.2]{WTB16}, we conclude that
\begin{align}\label{eq:emax-test-bound-tel}
\log_2 |K| \leq \widetilde{E}_{\alpha}(S_AK_A;K_BS_B)_{\omega}+\frac{\alpha}{\alpha-1}\log_2\!\(\frac{1}{1-\varepsilon}\).
\end{align}
Now we can follow steps similar to those in the proof of Theorem~\ref{thm:rains-ent-dist-strong-converse-ppt} in order to arrive at \eqref{eq:rel-ent-dist-strong-converse-tel}. 
\end{proof}

\bigskip 

We can also establish the following weak converse bounds, by combining the above approach with that in \cite[Section~3.5]{KW17}:

\begin{remark}
The following weak converse bound holds for an $(n,Q,\varepsilon)$ PPT-assisted bidirectional quantum communication protocol (Section~\ref{sec:ent-dist-protocol}) that employs a bidirectional PPT-simulable quantum channel $\mc{N}$ with an associated resource state $\theta_{\hat{S}_A\hat{S}_B}$ 
\begin{equation}\label{eq:rel-ent-dist-strong-converse-ppt-2}
(1-\varepsilon)Q \leq R(\hat{S}_A;\hat{S}_B)_{\theta}+\frac{1}{n}h_2(\varepsilon),
\end{equation} 
where $R(\hat{S}_A;\hat{S}_B)_{\theta}$ is defined in \eqref{eq:rains-inf-state} and $h_2(\varepsilon)\coloneqq -\varepsilon\log_2\varepsilon-(1-\varepsilon)\log_2(1-\varepsilon)$. 
\end{remark}

\begin{remark}\label{rem:tel-sim-priv-dist}
The following weak converse bound holds for an $(n,P,\varepsilon)$ LOCC-assisted bidirectional secret key agreement protocol (see Section~\ref{sec:priv-dist-protocol}) that employs a bidirectional teleportation-simulable quantum channel $\mc{N}_{A'B'\to AB}$ with an associated resource state $\theta_{\hat{S}_A\hat{S}_B}$
\begin{equation}\label{eq:rel-ent-dist-strong-converse-tel-2}
(1-\varepsilon)P \leq E(\hat{S}_A;\hat{S}_B)_{\theta}+\frac{1}{n}h_2(\varepsilon),
\end{equation} 
where $E(\hat{S}_A;\hat{S}_B)_{\theta}$ is defined in \eqref{eq:rel-ent-state-1}. 
\end{remark}

Since every LOCC channel $\mc{L}_{\hat{S}_AA'B'\hat{S}_B\to AB}$ acting with respect to the bipartite cut $\hat{S}_AA':B'\hat{S}_B$ is also a PPT-preserving channel with the partial transposition action on $B'\hat{S}_B$, it follows that bidirectional teleportation-simulable channels are also bidirectional PPT-simulable channels. Based on Proposition~\ref{prop:bicov}, Theorem~\ref{thm:rains-ent-dist-strong-converse-ppt}, Theorem~\ref{thm:rel-ent-dist-strong-converse-tel}, and the limits $n \to \infty$ and then $\alpha\to 1$ (in this order),\footnote{One could also set $\alpha = 1+1/\sqrt{n}$ and then take the limit $n \to \infty$.} we can then  conclude the following strong converse bounds:
\begin{corollary}
\label{cor:str-conv-TP-simul}
If a bidirectional quantum channel $\mc{N}$ is bicovariant (Definition~\ref{def:bicov}), then 
\begin{align}
\widetilde{Q}^{2\to2}_{\PPT}(\mc{N})& \leq R(L_A A;BL_B)_{\theta},\\
\widetilde{P}^{2\to2}_{\LOCC}(\mc{N})&\leq E(L_AA;BL_B)_{\theta},
\end{align}
where $\theta_{L_{A}ABL_{B}}=\mathcal{N}_{A^{\prime}B^{\prime}\rightarrow AB}
(\Phi_{L_{A}A^{\prime}}\otimes\Phi_{B^{\prime}L_{B}})$, and $\widetilde{Q}^{2\to2}_{\PPT}(\mc{N})$ and $\widetilde{P}^{2\to2}_{\LOCC}(\mc{N})$ denote the strong converse PPT-assisted bidirectional quantum capacity and strong converse LOCC-assisted bidirectional secret-key-agreement capacity, respectively, of a bidirectional channel $\mc{N}$. 
\end{corollary}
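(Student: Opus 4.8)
The plan is to combine Proposition~\ref{prop:bicov} with the two strong converse bounds of Theorems~\ref{thm:rains-ent-dist-strong-converse-ppt} and~\ref{thm:rel-ent-dist-strong-converse-tel}, and then to eliminate the $\alpha$-dependence by a limiting argument. First I would invoke Proposition~\ref{prop:bicov} to conclude that a bicovariant bidirectional channel $\mc{N}_{A'B'\to AB}$ is teleportation-simulable with resource state $\theta_{L_AABL_B}=\mc{N}_{A'B'\to AB}(\Phi_{L_AA'}\otimes\Phi_{B'L_B})$, the relevant bipartite cut being $L_AA\!:\!BL_B$. Since every LOCC channel acting across $\hat{S}_AA'\!:\!B'\hat{S}_B$ is also a PPT-preserving channel with the transposition taken on $B'\hat{S}_B$ (as noted just above the statement of the corollary), teleportation-simulability implies PPT-simulability with the \emph{same} resource state $\theta$. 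Hence $\mc{N}$ satisfies the hypotheses of both Theorem~\ref{thm:rains-ent-dist-strong-converse-ppt} and Theorem~\ref{thm:rel-ent-dist-strong-converse-tel}, with $\theta_{\hat{S}_A\hat{S}_B}$ identified with $\theta_{L_AABL_B}$ via $\hat{S}_A\equiv L_AA$ and $\hat{S}_B\equiv BL_B$.

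For the PPT-assisted quantum capacity I would apply Theorem~\ref{thm:rains-ent-dist-strong-converse-ppt}: for every $(n,Q,\varepsilon)$ PPT-assisted entanglement generation protocol and every $\alpha>1$ one has $Q\leq \widetilde{R}_{\alpha}(L_AA;BL_B)_{\theta}+\frac{\alpha}{n(\alpha-1)}\log_2\!\left(\frac{1}{1-\varepsilon}\right)$. Rearranging gives $1-\varepsilon\leq 2^{-\frac{n(\alpha-1)}{\alpha}[Q-\widetilde{R}_{\alpha}(L_AA;BL_B)_{\theta}]}$, so for fixed $\alpha>1$ any rate strictly above $\widetilde{R}_{\alpha}(L_AA;BL_B)_{\theta}$ forces $\varepsilon\to1$ as $n\to\infty$ and is therefore a strong converse rate; this shows $\widetilde{Q}^{2\to2}_{\PPT}(\mc{N})\leq\widetilde{R}_{\alpha}(L_AA;BL_B)_{\theta}$ for every $\alpha>1$. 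By the monotonicity of the sandwiched R\'enyi relative entropy in $\alpha$ (see~\eqref{eq:mono_sre}), the quantity $\widetilde{R}_{\alpha}(L_AA;BL_B)_{\theta}$ is non-decreasing in $\alpha$, so $\inf_{\alpha>1}\widetilde{R}_{\alpha}(L_AA;BL_B)_{\theta}=\lim_{\alpha\to1^{+}}\widetilde{R}_{\alpha}(L_AA;BL_B)_{\theta}=R(L_AA;BL_B)_{\theta}$, the final equality being the convergence of the sandwiched Rains information to the Rains information as $\alpha\to1$ \cite{TWW17} (equivalently, one may set $\alpha=1+1/\sqrt{n}$ and let $n\to\infty$, as in the footnote). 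This yields $\widetilde{Q}^{2\to2}_{\PPT}(\mc{N})\leq R(L_AA;BL_B)_{\theta}$.

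The secret-key bound follows identically, now using Theorem~\ref{thm:rel-ent-dist-strong-converse-tel}: for every $(n,P,\varepsilon)$ LOCC-assisted secret-key-agreement protocol and every $\alpha>1$ one has $P\leq\widetilde{E}_{\alpha}(L_AA;BL_B)_{\theta}+\frac{\alpha}{n(\alpha-1)}\log_2\!\left(\frac{1}{1-\varepsilon}\right)$; the same rearrangement and $n\to\infty$ argument gives $\widetilde{P}^{2\to2}_{\LOCC}(\mc{N})\leq\widetilde{E}_{\alpha}(L_AA;BL_B)_{\theta}$ for every $\alpha>1$, and letting $\alpha\to1^{+}$, together with $\lim_{\alpha\to1}\widetilde{E}_{\alpha}(L_AA;BL_B)_{\theta}=E(L_AA;BL_B)_{\theta}$ from~\eqref{eq:rel-ent-state-1}, yields $\widetilde{P}^{2\to2}_{\LOCC}(\mc{N})\leq E(L_AA;BL_B)_{\theta}$.

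The one step that needs genuine care — and hence what I expect to be the main obstacle — is the interchange of the minimization over $\PPT'$ (resp.\ $\SEP$) states with the limit $\alpha\to1^{+}$, i.e.\ the facts $\widetilde{R}_{\alpha}\to R$ and $\widetilde{E}_{\alpha}\to E$. These rest on the convergence $\widetilde{D}_{\alpha}\to D$ together with compactness of the feasible sets in finite dimension (and are recorded for $\widetilde{E}_\alpha$ in~\eqref{eq:rel-ent-state-1} and available for $\widetilde{R}_\alpha$ in \cite{TWW17}); once they are in hand, everything else is routine substitution into the already-established theorems and the exponential-decay rearrangement above.
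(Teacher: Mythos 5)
Your proposal is correct and follows the same route the paper sketches in the paragraph preceding the corollary: Proposition~\ref{prop:bicov} to obtain teleportation-simulability (hence PPT-simulability) with the Choi state $\theta$ as resource, then Theorems~\ref{thm:rains-ent-dist-strong-converse-ppt} and~\ref{thm:rel-ent-dist-strong-converse-tel}, then the limits $n\to\infty$ and $\alpha\to1^{+}$ in that order. You have supplied the details that the paper leaves implicit, including the exponential-decay rearrangement establishing that $\widetilde{R}_{\alpha}$ (resp.\ $\widetilde{E}_{\alpha}$) is a strong converse rate for each fixed $\alpha>1$, and the monotonicity-plus-convergence argument justifying $\inf_{\alpha>1}\widetilde{R}_{\alpha}=R$ and $\inf_{\alpha>1}\widetilde{E}_{\alpha}=E$.
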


\section{Conclusion}\label{sec:con-bqi}
In this chapter, we mainly focused on two different information processing tasks: entanglement distillation and secret key distillation using bipartite quantum interactions or bidirectional channels. We determined several bounds on the entanglement and secret-key-agreement capacities of bipartite quantum interactions. In deriving these bounds, we described communication protocols in the bidirectional setting, related to those discussed in \cite{BHLS03} and which generalize related point-to-point communication protocols. We defined an entanglement measure called the bidirectional max-Rains information of a bidirectional channel and showed that it is a strong converse upper bound on the PPT-assisted quantum capacity of the given bidirectional channel. We also defined a related entanglement measure called the bidirectional max-relative entropy of entanglement and showed that it is a strong converse bound on the LOCC-assisted secret-key-agreement capacity of a given bidirectional channel. When the bidirectional channels are either teleportation- or PPT-simulable, the upper bounds on the bidirectional quantum and bidirectional secret-key-agreement capacities depend only on the entanglement of an underlying resource state. If a bidirectional channel is bicovariant, then the underlying resource state can be taken to be the Choi state of the bidirectional channel.

\clearpage

\chapter{Fundamental Limits on Quantum Dynamics Based on Entropy Change}\label{ch:eg}
Entropy is a fundamental quantity that is of wide interest in physics and information theory \cite{Neu32,Sha48,DGM62,Bek73}\blfootnote{Most of this chapter is reproduced from [Siddhartha Das, Sumeet Khatri, George Siopsis, and Mark M.~Wilde. \textit{Journal of Mathematical Physics}, 59(1):012205, (2018)], with the permission of AIP Publishing.}. Many natural phenomena are described according to laws based on entropy, like the second law of thermodynamics \cite{Car1824,Men60,Att12}, entropic uncertainty relations in quantum mechanics and information theory \cite{Hir57,Bec75,BM75,MU88,CBTW17}, and area laws in black holes and condensed matter physics \cite{BCH73,BKLS86,Sre93,ECP10}. 
	
	No quantum system can be perfectly isolated from its environment. The interaction of a system with its environment generates correlations between the system and the environment. In realistic situations, instead of isolated systems, we must deal with open quantum systems, that is, systems whose environment is not under the control of the observer. The interaction between the system and the environment can cause a loss of information as a result of decoherence, dissipation, or decay phenomena \cite{Car09,Riv11,Wei12}. The rate of entropy change quantifies the flow of information between the system and its environment.
	
	In this chapter, we focus on the von Neumann entropy, which is defined for a system in the state $\rho$ as $S(\rho)\coloneqq-\Tr\{\rho\log\rho\}$\footnote{In this chapter, we  particularly use natural logarithm in the definition of the entropy and the relative entropy.}, and from here onwards we refer to it as the entropy. The entropy is monotonically non-decreasing under doubly-stochastic, also called unital, physical evolutions \cite{AU78,AU82}. This has restricted the use of entropy change in the characterization of quantum dynamics only to unital dynamics \cite{Str85,BP02book,CK12,MSW16}. Recently, \cite{BDW16} gave a lower bound on the entropy change for any positive trace-preserving map. Lower bounds on the entropy change have also been discussed in \cite{Str85,MSW16,AW16,ZHHH01} for certain classes of time evolution. Natural questions that arise are as follows: what are the limits placed by the bound\footnote{Specifically, we consider the bound in \cite[Theorem~1]{BDW16} as it holds for arbitrary evolution of both finite- and infinite-dimensional systems.} on the entropy change on the dynamics of a system, and can it be used to characterize evolution processes? 
	
	We delve into these questions, at first, by inspecting another pertinent question: at what rate does the entropy of a quantum system change? Although the answer is known for Markovian one-parameter semigroup dynamics of a finite-dimensional system with full-rank states \cite{Spo78}, the answer in full generality has not yet been given. In \cite{KS14}, the result of \cite{Spo78} was extended to infinite-dimensional systems with full-rank states undergoing Markovian one-parameter semigroup dynamics (cf., \cite{DPR17}). We now prove that the formula derived in \cite{Spo78} holds not only for finite-dimensional quantum systems undergoing Markovian one-parameter semigroup dynamics, but also for arbitrary dynamics of both finite- and infinite-dimensional systems with states of arbitrary rank. We then derive a lower bound on the rate of entropy change for any memoryless quantum evolution, also called a quantum Markov process. This lower bound is a witness of non-unitality in quantum Markov processes. Interestingly, this lower bound also helps us to derive witnesses for the presence of memory effects, i.e., non-Markovianity, in quantum dynamics. We compare one of our witnesses to the well-known Breuer-Laine-Piilo (BLP) measure \cite{BLP09} of non-Markovianity for two common examples. As it turns out, in one of the examples, our witness detects non-Markovianity even when the BLP measure does not, while for the other example our measure agrees with the BLP measure. We also provide bounds on the entropy change of a system. These bounds are witnesses of how non-unitary an evolution process is. We use one of these witnesses to propose a measure of non-unitarity for unital evolutions and discuss some of its properties.

The organization of the chapter is as follows. In Section~\ref{sec:notation-eg}, we introduce some definitions and facts for continuous variable systems that are not covered in Chapter~\ref{ch:review}. In Section~\ref{sec-ent_change_rate}, we discuss the explicit form (Theorem~\ref{thm:rate-oqs}) for the rate of entropy change of a system in any state undergoing arbitrary time evolution. In Section~\ref{sec:qmp}, we briefly review quantum Markov processes. We state Theorem~\ref{thm:q-Markov-ent-change-rate}, which provides a lower limit on the rate of entropy change for quantum Markov processes. We show that this lower limit provides a witness of non-unitality. We also discuss the implications of the lower limit on the rate of entropy change in the context of bosonic Gaussian dynamics (Section~\ref{sec-Gaussian}). In Section~\ref{sec:qnmp}, based on the necessary conditions for the Markovianity of quantum processes as stated in Theorem~\ref{thm:q-Markov-ent-change-rate}, we define some witnesses of non-Markovianity and also a couple of measures of non-Markovianity based on these witnesses. We apply these witnesses to two common examples of non-Markovian dynamics (Section~\ref{sec-decohere} and Section~\ref{sec-GADC}) and illustrate that they can detect non-Markovianity. In Section~\ref{sec-GADC}, we consider an example of a non-unital quantum non-Markov process whose non-Markovianity goes undetected by the BLP measure while it is detected by our witness. In Section~\ref{sec-entropy_change}, we derive an upper bound on entropy change for unital evolutions. We also show the monotonic behavior of the entropy for a wider class of operations than previously known. In Section~\ref{sec-non_unitarity}, we define a measure of non-unitarity for any unital evolution. We also discuss properties of the measure of non-unitarity. 

\section{Preliminaries}\label{sec:notation-eg}
In this section, we add few more standard notations,  definitions, and facts to the discussion in Chapter~\ref{ch:review} because of subtleties that come when dealing with continuous variable systems, which are associated to separable, infinite-dimensional Hilbert spaces.

The dimension $\dim(\mc{H})$ of the Hilbert space $\mc{H}$ is equal to $+\infty$ in the case that $\mc{H}$ is a separable, infinite-dimensional Hilbert space. The subset of $\msc{B}(\mc{H})$ containing all trace-class operators is denoted by $\msc{B}_1(\mc{H})$. Let $\mc{B}_1^{+}(\mc{H})\coloneqq\mc{B}_+(\mc{H})\cap\mc{B}_1(\mc{H})$.

The adjoint $\mc{M}^\dagger:\msc{B}(\mc{H}_B)\to\msc{B}(\mc{H}_A)$ of a linear map $\mc{M}:\msc{B}_1(\mc{H}_A)\to\msc{B}_1(\mc{H}_B)$ is the unique linear map that satisfies
	\begin{equation}\label{eq-adjoint-eg}
	\forall\ X_A\in\msc{B}_1(\mc{H}_A), Y_B\in\msc{B}(\mc{H}_B):\ \	\<Y_B,\mc{N}(X_A)\>=\<\mc{N}^\dag(Y_B),X_A\>,
	\end{equation}
	where $\<C,D\>=\Tr\{C^\dag D\}$ is the Hilbert-Schmidt inner product. 

The von Neumann entropy of a state $\rho_A$ of a quantum system $A$ is defined as 
	\begin{equation}\label{eq-entropy-eg}
		S(A)_\rho:= S(\rho_A)= -\Tr\{\rho_A\log\rho_A\},
	\end{equation}
	where $\log$ denotes the natural logarithm. In general, the state of an infinite-dimensional quantum system need not have finite entropy \cite{BV13}. For any finite-dimensional system $A$, the entropy is upper-bounded by $\log |A|$. 
	
	The quantum relative entropy of any two density operators $\rho,\sigma\in\msc{D}(\mc{H})$ is defined as \cite{Ume62,Fal70,Lin73} 
	\begin{equation}\label{eq-rel_ent_alt}
		D(\rho\V \sigma)=
		\sum_{i,j}\left|\< \phi_i\vert \psi_j\>\right|^2\left[p(i)\log\(\frac{p(i)}{q(j)}\)\right], 
	\end{equation}
	where $\rho=\sum_i p(i)\ket{\phi_i}\bra{\phi_i}$ and $\sigma=\sum_j q(j)\ket{\psi_j}\bra{\psi_j}$ are spectral decompositions of $\rho$ and $\sigma$, respectively, with both $\{\ket{\phi_i}\}_i,\{\ket{\psi_j}\}_j\in\ONB(\mc{H})$ (cf.~\eqref{eq:rel-ent-rev}). From the above definition, it is clear that $D(\rho \V \sigma) = +\infty$ if $\supp(\rho)\not\subseteq\supp(\sigma)$. 
	
	For any two positive semi-definite operators $\rho, \sigma\in\msc{B}_1^+(\mc{H})$, $D(\rho\Vert\sigma)\geq 0$ if $\Tr\{\rho\}\geq\Tr\{\sigma\}$, $D(\rho\Vert\sigma)=0$ if and only if $\rho=\sigma$, and $D(\rho\V\sigma)<0$ if $\rho<\sigma$. The quantum relative entropy is non-increasing under the action of positive trace-preserving maps \cite{MR15}, that is, $D(\rho\V\sigma)\geq D(\mc{N}(\rho)\V\mc{N}{(\sigma)})$ for any two density operators $\rho, \sigma\in\msc{D}(\mc{H})$ and positive trace-preserving map $\mc{N}:\msc{B}^1_+(\mc{H})\to \msc{B}^1_+(\mc{H}')$.
	
	We now define entropy change, which is the main focus of this chapter. 
	
	\begin{definition}[Entropy change]\label{def:ent-change}
		Let $\mc{N}:\msc{B}_1^+(\mc{H})\to\msc{B}_1^+(\mc{H}^\prime)$ be a positive trace-non-increasing map. The entropy change $\Delta S(\rho,\mc{N})$ of a system in the state $\rho\in\msc{D}(\mc{H})$ under the action of $\mc{N}$ is defined as 
		\begin{equation}
			\Delta S(\rho,\mc{N}):=S\(\mc{N}(\rho)\)-S\(\rho\)
		\end{equation}
		whenever $S(\rho)$ and $S(\mc{N}(\rho))$ are finite.
	\end{definition}
	It should be noted that $\mc{N}(\rho)$ is a sub-normalized state, i.e., $\Tr\{\mc{N}(\rho)\}\leq 1$, if $\mc{N}$ is a positive trace-non-increasing map. 
	
	It is well known that the entropy change $\Delta S(\rho,\mc{N})$ of $\rho$ is non-negative, i.e., the entropy is non-decreasing, under the action of a positive, sub-unital, and trace-preserving map $\mc{N}$ \cite{AU78,AU82} (see also \cite[Section~III]{BDW16}, \cite[Theorem~4.2.2]{Nie02}). Recently, a refined statement of this result was made in \cite{BDW16}, which is the following: 

	\begin{lemma}[Lower bound on entropy change]\label{thm:ent-lower-bound}
		Let $\mc{N}:\msc{B}_1^+(\mc{H})\to\msc{B}_1^+(\mc{H}^\prime)$ be a positive, trace-preserving map. Then, for all $\rho\in\mc{D}(\mc{H})$,
		\begin{equation}\label{eq-ent_lower_bound}
			\Delta S(\rho,\mc{N})\geq D\!\(\rho\left\Vert\mc{N}^\dag\circ\mc{N}\(\rho\)\)\right. .
		\end{equation}
	\end{lemma}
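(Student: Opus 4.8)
The plan is to rewrite the asserted inequality, via the definition \eqref{eq-rel_ent_alt} of the quantum relative entropy and the defining property \eqref{eq-adjoint-eg} of the adjoint, as an operator trace inequality that is then an immediate consequence of the operator concavity of the logarithm recorded in Lemma~\ref{thm:log-con}. Throughout I would abbreviate $\omega\coloneqq\mc{N}(\rho)$ and $\xi\coloneqq\mc{N}^\dag\circ\mc{N}(\rho)=\mc{N}^\dag(\omega)$. The first observation to record is that, since $\mc{N}$ is positive and trace-preserving, the adjoint $\mc{N}^\dag$ is positive and unital, hence in particular sub-unital: positivity of $\mc{N}^\dag$ follows from positivity of $\mc{N}$ through \eqref{eq-adjoint-eg} (for $Y\geq 0$ one has $\<\psi\vert\mc{N}^\dag(Y)\vert\psi\>=\<\mc{N}(\op{\psi}),Y\>\geq 0$), while $\Tr\{\mc{N}(X)\}=\Tr\{X\}$ for all $X$ gives $\mc{N}^\dag(\bm{1})=\bm{1}$. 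Thus Lemma~\ref{thm:log-con} is applicable with $\mc{M}=\mc{N}^\dag$.

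Next I would dispose of the support condition needed for the right-hand side to be meaningful. For any unit vector $\ket{\psi}\in\supp(\rho)$ there is a constant $c>0$ with $\rho\geq c\op{\psi}$, so $\omega=\mc{N}(\rho)\geq c\,\mc{N}(\op{\psi})\geq 0$ and hence $\supp(\mc{N}(\op{\psi}))\subseteq\supp(\omega)$; since $\Tr\{\mc{N}(\op{\psi})\}=1$, this forces $\<\psi\vert\xi\vert\psi\>=\Tr\{\mc{N}(\op{\psi})\,\omega\}>0$. Therefore $\supp(\rho)\subseteq\supp(\xi)$, so that $\log\xi$ is well defined on $\supp(\rho)$, and under the standing assumption that $S(\rho)$ and $S(\omega)$ are finite the quantity $D(\rho\Vert\xi)$ is finite.

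The reduction itself is then short. Using $D(\rho\Vert\xi)=-S(\rho)-\Tr\{\rho\log\xi\}$, the claimed bound $S(\omega)-S(\rho)\geq D(\rho\Vert\xi)$ is equivalent to
\begin{equation}
\Tr\{\rho\log\xi\}\geq -S(\omega)=\Tr\{\omega\log\omega\}.
\end{equation}
Invoking \eqref{eq-adjoint-eg} with the self-adjoint operator $\log\omega$ in place of $Y_B$,
\begin{equation}
\Tr\{\omega\log\omega\}=\Tr\{\mc{N}(\rho)\log\omega\}=\Tr\{\rho\,\mc{N}^\dag(\log\omega)\},
\end{equation}
and Lemma~\ref{thm:log-con} applied to $\mc{N}^\dag$ and $\omega$ yields the operator inequality $\mc{N}^\dag(\log\omega)\leq\log\big(\mc{N}^\dag(\omega)\big)=\log\xi$. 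Since $\rho\geq 0$, multiplying by $\rho$ and taking the trace preserves this ordering, giving $\Tr\{\rho\,\mc{N}^\dag(\log\omega)\}\leq\Tr\{\rho\log\xi\}$, which is exactly the displayed equivalent form.

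I expect the only genuinely delicate point to be the infinite-dimensional bookkeeping: one must check that $\xi$ is a bounded operator (automatic, since $\mc{N}^\dag$ maps bounded operators to bounded operators and $\omega$ is bounded), that the traces appearing above are finite (they are squeezed between $-S(\omega)$ and a finite upper bound coming from boundedness of $\log\xi$ from above), and that the cyclicity and adjoint manipulations are legitimate for products of trace-class and bounded operators; here one also relies on the fact, used above, that Lemma~\ref{thm:log-con} remains valid for separable infinite-dimensional Hilbert spaces. With those checks in place the argument goes through in both the finite- and infinite-dimensional settings.
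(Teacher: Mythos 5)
Your proposal is correct and takes essentially the same approach as the paper: rewrite $\Delta S(\rho,\mc{N})-D(\rho\Vert\mc{N}^\dag\circ\mc{N}(\rho))$ using the adjoint identity $\Tr\{\mc{N}(\rho)\log\mc{N}(\rho)\}=\Tr\{\rho\,\mc{N}^\dag(\log\mc{N}(\rho))\}$, note that $\mc{N}^\dag$ is positive and sub-unital, and apply Lemma~\ref{thm:log-con} to get $\mc{N}^\dag(\log\mc{N}(\rho))\leq\log(\mc{N}^\dag\circ\mc{N}(\rho))$, then trace against $\rho\geq 0$. (One small caveat: your auxiliary claim that every $\ket{\psi}\in\supp(\rho)$ admits $c>0$ with $\rho\geq c\op{\psi}$ holds in finite dimensions but not in general infinite-dimensional settings, where it characterizes the range of $\rho^{1/2}$ rather than the support; this does not affect the core argument, and the paper's own proof does not address the support issue at all.)
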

	
	\begin{proof}
		Using the definition \eqref{eq-adjoint-eg} of the adjoint, we obtain
		\begin{align}
			\Delta S(\rho,\mc{N})=S(\mc{N}(\rho))-S(\rho)&=\Tr\{\rho\log\rho\}-\Tr\left\{\mc{N}(\rho)\log\mc{N}(\rho)\right\}
			\end{align}
	Then
			\begin{align}
		\Delta S(\rho,\mc{N})&=\Tr\{\rho\log\rho\}-\Tr\left\{\rho\mc{N}^\dag\(\log\mc{N}(\rho)\)\right\} \nonumber\\
		&\geq \Tr\{\rho\log\rho\}-\Tr\left\{\rho \log\[\mc{N}^\dag\circ\mc{N}(\rho)\]\right\}\nonumber\\
			&=D\(\rho\left\Vert\mc{N}^\dag\circ\mc{N}\(\rho\)\)\right. .
		\end{align}
		The inequality follows from Lemma \ref{thm:log-con} applied to $\mathcal{N}^\dagger$, which is positive and sub-unital since $\mathcal{N}$ is positive and trace non-increasing. 
	\end{proof}

Lemma~\ref{thm:ent-lower-bound} gives a tight lower bound on the entropy change. As an example of a map saturating the inequality \eqref{eq-ent_lower_bound}, let us take the partial trace $\mc{N}_{AB\to B}=\Tr_A$, which is a quantum channel that corresponds to discarding system $A$ from the composite system $AB$. Its adjoint is $\mc{N}^\dagger(\rho_B)=\mathbbm{1}_A\otimes \rho_B$. Then, one notices that $S\(\mc{N}\(\rho_{AB}\)\)-S\(\rho_{AB}\)=S(\rho_B)-S(\rho_{AB})=D\(\rho_{AB}\Vert \mathbbm{1}_A\otimes\rho_B\)=D\(\rho_{AB}\left\Vert\mc{N}^\dag\circ\mc{N}\(\rho_{AB}\)\)\right.$. 

\section{Quantum dynamics and the rate of entropy change}\label{sec-ent_change_rate}

	In general, physical systems are dynamical and undergo evolution processes with time. An evolution process for an isolated and closed system is unitary. However, no quantum system can remain isolated from its environment. There is always an interaction between a system and its environment. The joint evolution of the system and environment is considered to be a unitary operation whereas the local evolution of the system alone can be non-unitary. This non-unitarity causes a flow of information between the system and the environment, which can change the entropy of the system. 

	For any dynamical system with associated Hilbert space $\mc{H}$, the state of the system depends on time. The time evolution of the state $\rho_t$ of the system at an instant $t$ is determined by $\frac{\d\rho_t}{\d t}$ when it is well defined\footnote{By this, we mean that each matrix element of $\rho_t$ is  differentiable with respect to $t$.}. The state $\rho_T$ at some later time $t=T$ is determined by the initial state $\rho_0$, the evolution process, and the time duration of the evolution. Since the time evolution is a physical process, the following condition holds for all $t$:
	\begin{equation}
		\Tr\left\{\dot{\rho}_t\right\}=0,
	\end{equation}
	where $\dot{\rho}_t\coloneqq\frac{\d\rho_t}{\d t}$.
	
	It is known from \cite{Spo78,Ber09} that for any finite-dimensional system the following formula for the rate of entropy change holds for any state $\rho_t$ whose kernel remains the same at all times and whose support $\Pi_t$ is differentiable:
	\begin{equation}\label{eq:incorrect-rate}
		\frac{\d}{\d t}S(\rho_t)=-\Tr\left\{\dot{\rho}_t\log\rho_t\right\}.
	\end{equation}
	The above formula has also been applied to infinite-dimensional systems for Gaussian states evolving under a quantum diffusion semigroup \cite{DPR17,KS14} whose kernels do not change in time.
	
	Here, we derive the formula \eqref{eq:incorrect-rate} for states $\rho_t$ having fewer restrictions, which generalizes the statements from \cite{Spo78,Ber09}. In particular, we show that the formula \eqref{eq:incorrect-rate} can be applied to quantum dynamical processes in which the kernel of the state changes with time, which can happen because the state has time-dependent support.

	\begin{theorem}\label{thm:rate-oqs}
		For any quantum dynamical process with $\dim(\mc{H})<+\infty$, the rate of entropy change is given by
		\begin{equation}\label{eq-pi_ent_change_rate}
			\frac{\d }{\d t}S(\rho_t)=-\Tr\left\{\dot{\rho}_t\log\rho_t\right\},
		\end{equation}
		whenever $\dot{\rho}_t$ is well defined. The above formula also holds when $\dim(\mc{H})=+\infty$ if $\dot{\rho}_t\log\rho_t$ is trace-class and the sum of the time derivative of the eigenvalues of $\rho_t$ is uniformly convergent\footnote{Uniform convergence is defined as stated in \cite[Definition~7.7]{Rud76}.} on some neighborhood of $t$, however small. 
	\end{theorem}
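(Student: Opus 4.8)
The plan is to write $S(\rho_t)=\Tr\{\eta(\rho_t)\}$ with $\eta(x)\coloneqq-x\log x$ (and $\eta(0)\coloneqq0$), differentiate in $t$, and show that the only surviving term is $-\Tr\{\dot\rho_t\log\rho_t\}$, every other contribution collapsing because $\Tr\{\dot\rho_t\}=0$. I would first dispose of the full-rank finite-dimensional case, where $\rho_t>0$. Then the spectrum of $\rho_t$ lies in $(0,\infty)$, $\eta$ is $C^\infty$ on an open neighbourhood of it, and the operator-derivative machinery recalled in Section~\ref{app-derivative} applies verbatim: by \eqref{eq:app-trace-der-1} with $f=\eta$, and since $\eta'(x)=-\log x-1$ so that $\eta'(\rho_t)=-\log\rho_t-\bm{1}$, one gets $\frac{\d}{\d t}S(\rho_t)=\Tr\{\eta'(\rho_t)\dot\rho_t\}=-\Tr\{\dot\rho_t\log\rho_t\}-\Tr\{\dot\rho_t\}=-\Tr\{\dot\rho_t\log\rho_t\}$. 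Equivalently, applying the Leibniz rule to $\Tr\{\rho_t\log\rho_t\}$ and using \eqref{eq:app-trace-der-2} with $B(t)=A(t)=\rho_t$ and $f=\log$, the ``correction'' term $\Tr\{\rho_t\tfrac{\d}{\d t}\log\rho_t\}=\Tr\{\rho_t\rho_t^{-1}\dot\rho_t\}=\Tr\{\dot\rho_t\}$ vanishes.

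The crux is the general finite-dimensional case, in which $\rho_t$ may be rank-deficient and $\ker\rho_t$ may change with $t$: now $\eta$ fails to be $C^1$ at $0$ and individual eigenvalue branches of $\rho_t$ need not be differentiable at crossings, so the operator-derivative formulas cannot be invoked directly. Two elementary facts rescue the computation. First, in $\Tr\{\eta'(\rho_t)\dot\rho_t\}$ only the diagonal matrix elements $\langle k|\dot\rho_t|k\rangle$ of $\dot\rho_t$ in an eigenbasis $\{\ket{k}\}$ of $\rho_t$ enter, and differentiating $\lambda_k=\langle k|\rho_t|k\rangle$ together with $\tfrac{\d}{\d t}\langle k|k\rangle=0$ shows these equal $\dot\lambda_k$. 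Second, if $\lambda_k(t)=0$ then, since $\rho_s\geq0$ for all $s$, the map $s\mapsto\langle k|\rho_s|k\rangle$ is a nonnegative differentiable function with a zero at $s=t$, hence has a critical point there, so $\dot\lambda_k(t)=0$ (indeed such eigenvalues vanish to first order); thus the eigenvalue-zero terms contribute nothing even though $\log$ blows up. Writing the entropy as the finite sum $S(\rho_t)=\sum_k\eta(\lambda_k(t))$ and differentiating term by term then yields $\frac{\d}{\d t}S(\rho_t)=\sum_{k:\lambda_k(t)\neq0}(-\log\lambda_k(t)-1)\dot\lambda_k(t)=-\Tr\{\dot\rho_t\log\rho_t\}-\sum_k\dot\lambda_k(t)=-\Tr\{\dot\rho_t\log\rho_t\}$, using the convention (inherited from the definition of $f(A)$ in Section~\ref{app-derivative}) that $\log\rho_t$ is supported on $\supp(\rho_t)$, together with $\sum_k\dot\lambda_k(t)=\tfrac{\d}{\d t}\Tr\{\rho_t\}=0$. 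An alternative route regularises $\rho_t\mapsto(1-\varepsilon)\rho_t+\varepsilon\pi$, applies the full-rank case, and sends $\varepsilon\to0^+$ using continuity of the von Neumann entropy; the identity $Q\dot\rho_tQ=0$ for $Q$ the projection onto $\ker\rho_t$ (obtained from the second fact by polarisation) is exactly what cancels the otherwise-divergent $\log\varepsilon$ contribution from the kernel.

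For $\dim(\mc{H})=+\infty$, the entropy is the infinite series $S(\rho_t)=\sum_n\eta(\lambda_n(t))$. Choosing a locally differentiable labelling of the eigenvalues near $t$, term-by-term differentiation is licensed by the standard theorem on differentiating series \cite[Theorem~7.17]{Rud76}, provided the series of derivatives $\sum_n\tfrac{\d}{\d t}\eta(\lambda_n(t))=-\sum_n\dot\lambda_n(t)\log\lambda_n(t)-\sum_n\dot\lambda_n(t)$ converges suitably near $t$ — and the two hypotheses of the theorem are precisely what supply this. Trace-classness of $\dot\rho_t\log\rho_t$ makes $\sum_n\dot\lambda_n(t)\log\lambda_n(t)$ absolutely convergent with sum $\Tr\{\dot\rho_t\log\rho_t\}$ (the eigenbasis of $\rho_t$ diagonalises $\log\rho_t$, and $\langle n|\dot\rho_t|n\rangle=\dot\lambda_n$ as before), while uniform convergence of $\sum_n\dot\lambda_n(t')$ on a neighbourhood of $t$ gives the uniformity needed to differentiate under the sum and, applied to the constant function $\Tr\{\rho_{t'}\}\equiv1$, forces $\sum_n\dot\lambda_n(t)=0$. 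Combining these gives $\frac{\d}{\d t}S(\rho_t)=-\Tr\{\dot\rho_t\log\rho_t\}$.

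I expect the main obstacle to be the second and third cases rather than the full-rank one: controlling the behaviour near the kernel, where $\log$ is unbounded and where eigenvalue branches may be nonsmooth. In finite dimensions the saving grace is that only the diagonal elements $\dot\lambda_k$ of $\dot\rho_t$ appear, and that these vanish wherever the corresponding eigenvalue does, by positivity of $\rho_s$; in infinite dimensions, the trace-class and uniform-convergence hypotheses are exactly the conditions making the interchange of $\frac{\d}{\d t}$ with the trace (equivalently, with the eigenvalue sum) legitimate. A subsidiary point to handle with care is the non-differentiability of individual eigenvalue branches at crossings, which is harmless because only the derivative of the symmetric combination $S$ is needed, not of the branches themselves.
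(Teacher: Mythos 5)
Your argument is essentially correct and arrives at the same result, but by a noticeably more direct route than the paper's. The paper introduces the auxiliary two-variable function $s(t,h)=\Tr\{\rho_t^{1+h}\}=\sum_{\lambda(t)}\lambda(t)^{1+h}$, computes both partial derivatives, and invokes equality of mixed partials (Clairaut) to exchange $\partial_t$ and $\partial_h$ before evaluating at $h=0$; the $h$-parameter plays the role of a regularizer for the non-smooth behavior of $\eta(x)=-x\log x$ at $x=0$. You instead differentiate the eigenvalue sum $S(\rho_t)=\sum_k\eta(\lambda_k(t))$ term by term, dispatching the $\lambda_k=0$ terms by positivity. Stripped of packaging, the two proofs lean on exactly the same facts: that $\dot\lambda_k(t)=0$ whenever $\lambda_k(t)=0$ (nonnegativity of the eigenvalue branch), that $\sum_k\dot\lambda_k(t)=\tfrac{\d}{\d t}\Tr\{\rho_t\}=0$ (trace preservation, which kills the $-1$ in $\eta'(x)=-\log x-1$), and Rudin's Theorem 7.17 under the stated uniform-convergence hypothesis for the infinite-dimensional case. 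Your version also correctly identifies the Hellmann–Feynman identity $\langle k|\dot\rho_t|k\rangle=\dot\lambda_k$ as the reason only the diagonal of $\dot\rho_t$ enters. The two proofs thus buy the same content; yours is shorter and more transparent, while the paper's $s(t,h)$ device makes the interchange of limits a bit more explicit.

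One subtle point is worth flagging, but it is shared by both arguments rather than a defect peculiar to yours. Knowing $\lambda_k(t)=0$ and $\dot\lambda_k(t)=0$ gives only $\lambda_k(t+\varepsilon)=o(\varepsilon)$, which does \emph{not} on its own imply $\tfrac{\d}{\d t}\eta(\lambda_k(t))=0$: a differentiable nonnegative branch such as $\lambda_k(t+\varepsilon)\sim|\varepsilon|/\log(1/|\varepsilon|)$ has $\dot\lambda_k(t)=0$ yet $\eta(\lambda_k(t+\varepsilon))/\varepsilon\to\pm1$ as $\varepsilon\to 0^{\pm}$, so the term-by-term derivative at a vanishing eigenvalue may fail to exist. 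The paper's proof carries the same implicit assumption inside its unverified claim that $\partial_h\partial_t s(t,h)$ is continuous at $h=0$, which is what authorizes Clairaut. This is therefore not a gap you introduced; both arguments tacitly require a mild additional regularity of eigenvalue branches (e.g.\ $\lambda_k(t+\varepsilon)=O(|\varepsilon|^{1+\delta})$ whenever $\lambda_k(t)=0$, which holds for analytic or $C^2$ dynamics). Your regularization alternative $(1-\varepsilon)\rho_t+\varepsilon\pi$ would also need a uniform-in-$t$ statement, not mere pointwise continuity of $S$, before the $\varepsilon\to 0^{+}$ limit commutes with $\d/\d t$; as sketched it is a plausibility argument rather than a complete proof, so the direct eigenvalue-sum route is the one to rely on.
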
	
	\begin{proof}
		Let $\text{Spec}(\rho_t)$ be the set of all eigenvalues of $\rho_t\in\msc{D}(\mc{H})$, including those in its kernel. Let
		\begin{equation}\label{eq-rho_t_spec}
			\rho_t=\sum_{\lambda(t)\in\text{Spec}(\rho_t)}\lambda(t)P_{\lambda}(t)
		\end{equation}
		be a spectral decomposition of $\rho_t$, where the sum of the projections $P_{\lambda}(t)$ corresponding to $\lambda(t)$ is 
		\begin{equation}
			\sum_{\lambda(t)\in\text{Spec}(\rho_t)}P_{\lambda}(t)=\mathbbm{1}_{\mc{H}}.
		\end{equation}
		The following assumptions suffice to arrive at the statement of the theorem when $\dim(\mc{H})=+\infty$. We assume that $\dot{\rho}_t$ is well defined. We further assume that $\sum_{\lambda(t)\in\text{Spec}(\rho_t)}\dot{\lambda}(t)$ is uniformly convergent on some neighborhood of $t$, and $\dot{\rho}_t\log\rho_t$ is trace-class. Note that when $\dim(\mc{H})<+\infty$, $\sum_{\lambda(t)\in\text{Spec}(\rho_t)}\dot{\lambda}(t)$ and $\dot{\rho}_t\log\rho_t$ are always uniformly convergent and trace-class, respectively.

		Now, let us define the function $s:[0,\infty)\times (-1,\infty)\rightarrow (0,\infty)$ by
		\begin{equation}\label{eq-def_s}
			s(t,h)\coloneqq \Tr\{\rho_t^{1+h}\} = \sum_{\lambda(t)\in\text{Spec}(\rho_t)}\lambda(t)^{1+h}.
		\end{equation}
		
		Noting that $\frac{\d}{\d x}a^x=a^x\log a$ for all $a>0$ and $x\in\mathbb{R}$, we get
		\begin{align}
			\frac{\d}{\d h}\rho_{t}^{h+1}=\rho_{t}^{h+1}\log\rho_{t}.
		\end{align}
		Applying \eqref{eq:app-trace-der-1} in  Section~\ref{app-derivative}, we find that
		\begin{align}
			\frac{\d}{\d t}s(t,h)&=\frac{\d}{\d t}\Tr\{\rho_{t}^{h+1}\}=\left(h+1\right)\Tr\{\rho_{t}^{h}\dot{\rho}_{t}\},\label{eq-st_1}\\
			\frac{\d}{\d h}s(t,h)&=\frac{\d}{\d h}\Tr\{\rho_t^{h+1}\}=\Tr\{\rho_{t}^{h+1}\log\rho_t\}.
		\end{align}
		Then the entropy is
		\begin{equation}
			S(\rho_t)=-\left. \frac{\d}{\d h}s(t,h)\right|_{h=0} = - \Tr\{\rho_t \log  \rho_t\} = - \sum_{\lambda(t)\in\text{Spec}(\rho_t)} \lambda(t) \log\lambda(t),
		\end{equation}
		where by definition $0\log 0=0$.
		
		Note that $\rho_t^h$ is an infinitely differentiable function of $h$, i.e., a smooth function of $h$, and a differentiable function of $t$ for all $t,h$. Also, the trace is a continuous function. Since $\frac{\d}{\d h}\frac{\d}{\d t}s(t,h)$ exists and is continuous for all $(t,h)\in[0,\infty)\times (-1,\infty)$, the following exchange of derivatives holds for all $(t,h)\in(0,\infty)\times (-1,\infty)$:
		\begin{equation}
			\frac{\d}{\d h}\left[  \frac{\d}{\d t}s(t,h)\right]=\frac{\d}{\d t}\left[  \frac{\d}{\d h}s(t,h)\right].
		\end{equation}
		This implies that
		\begin{equation}
			\left.\frac{\d}{\d h}\left[  \frac{\d}{\d t}s(t,h)\right]\right|_{h=0}=\frac{\d}{\d t}\left[ \left.\frac{\d}{\d h}s(t,h)\right|_{h=0}  \right]
		\end{equation}
		From \eqref{eq-st_1}, we notice that $\frac{\d}{\d t}s(t,h)$ is a smooth function of $h$. Therefore, the Taylor series expansion of this function in the neighborhood of $h=0$ is
		\begin{align}
			\frac{\d}{\d t}s(t,h)&=\left.\frac{\d}{\d t}s(t,h)\right|_{h=0}+\left.\frac{\d}{\d h}\left[\frac{\d}{\d t}s(t,h)\right]\right|_{h=0} h+O(h^2).
		\end{align}
		
		From \eqref{eq-def_s}, we find:
		\begin{align}
			\left.\frac{\d}{\d t}s(t,h)\right|_{h=0}&=\left.\frac{\d}{\d t}\left[\sum_{\lambda(t)\in\text{Spec}(\rho_t)}\lambda(t)^{1+h}\right]\right|_{h=0}=\sum_{\lambda(t)\in\text{Spec}(\rho_t)}\left.\frac{\d}{\d t}\left[\lambda(t)^{1+h}\right]\right|_{h=0}\\
			&=\sum_{\lambda(t)\in\text{Spec}(\rho_t)}\left[ (1+h)\lambda(t)^h\dot{\lambda}(t)\right]_{h=0}\\
			&=\sum_{\lambda(t)\neq 0}\dot{\lambda}(t)\label{eq-st_2}.
		\end{align}
		The second equality follows from \cite[Theorem~7.17]{Rud76} due to the uniform convergence of $\sum_{\lambda(t)\in\text{Spec}(\rho_t)}\dot{\lambda}(t)$ on some neighborhood of $t$. To obtain the last equality, we use the following fact: since $\lambda(t)\geq 0$ for all $t$ and $\lambda(t)$ is differentiable, if $\lambda(t^*)=0$ for some time $t=t^*\in (0,\infty)$, then $\dot{\lambda}(t^*)=0$. From \eqref{eq-st_1} and \eqref{eq-st_2}, we obtain
		\begin{equation}\label{eq-st_3}
			\Tr\{\Pi_t\dot{\rho}_t\}=\sum_{\lambda(t)\neq 0}\dot{\lambda}(t)=\frac{\d}{\d t}\sum_{\lambda(t)\neq 0}\lambda(t)=\frac{\d}{\d t}\Tr\{\rho_t\}=0,
		\end{equation}
		where $\Pi_t$ is the projection onto the support of $\rho_t$. The second equality holds because $\dot{\lambda}(t^*)=0$ whenever $\lambda(t^*)=0$ for all $\lambda(t^*)\in\text{Spec}(\rho_{t^*})$ and all $t^*\in (0,\infty)$.
		
		Employing \eqref{eq:app-trace-der-2}, we find that
		\begin{align}
			\frac{\d}{\d h}\left[  \frac{\d}{\d t}s(t,h)\right]
			& =\frac{\d}{\d h}\left[  \left(  h+1\right)  \operatorname{Tr}\{\rho_{t}^{h}%
			\dot{\rho}_{t}\}\right]  \\
			& =\operatorname{Tr}\{\rho_{t}^{h}\dot{\rho}_{t}\}+\left(  h+1\right)
			\operatorname{Tr}\left\{\left[  \rho_{t}^{h}\log\rho_{t}\right]  \dot{\rho}_{t}\right\}.
		\end{align}
		Therefore,
		\begin{align}
			-\frac{\d}{\d t}S(\rho_{t})&=\frac{\d}{\d t}\left[  \left. \frac{\d}{\d h}s(t,h)\right|_{h=0}\right]\\
			&=\left.\frac{\d}{\d h}\left[ \frac{\d}{\d t}s(t,h)\right]\right|_{h=0}\\
			&=\operatorname{Tr}\{\Pi_t\dot{\rho}_{t}\}+
			\operatorname{Tr}\left\{ \dot{\rho}_{t} \Pi_t\log\rho_{t} \right\}\\
			&=\Tr\{\dot{\rho}_t\log\rho_t\},
		\end{align}
		where to obtain the last equality we used \eqref{eq-st_3} and the fact that $\log\rho_t$ is defined on $\supp(\rho_t)$. This concludes the proof.
	\end{proof}
	
\bigskip
	
	As an immediate application of Theorem \ref{thm:rate-oqs}, consider a closed system consisting of a system of interest $A$ and a bath (environment) system $E$ in a pure state $\psi_{AE}$, for which the time evolution is given by a bipartite unitary $U_{AE}$, a special case of bipartite quantum interactions (Chapter~\ref{ch:bqi}). Under unitary evolution, the entropy of the composite system $AE$ does not change. Also, for a pure state, the entropy of the composite system is zero, and $S(\rho_{A})=S(\rho_{E})$, where $\rho_{A}$ and $\rho_E$ are the reduced states of the systems $A$ and $E$, respectively. Now, it is often of interest to determine the amount of entanglement in the reduced state $\rho_A$ of the system $A$. Several measures of entanglement have been proposed\cite{PV10}, among which the entanglement of formation\cite{BDSW96,Woo01}, the distillable entanglement\cite{BDSW96,BBP+97}, and the relative entropy of entanglement\cite{VP98,VPRK97} all reduce to the entropy $S(\rho_A)$ of the system $A$ in the case of a closed bipartite system \cite{ON02}. Thus, in this case, the rate of entropy change of the system $A$ is equal to the rate of entanglement change (with respect to the aforementioned entanglement measures) caused by unitary time evolution of the pure state of the composite system, and Theorem \ref{thm:rate-oqs} provides a general expression for this rate of entanglement change.
	
In Appendix \ref{app-rate_ent_change}, we discuss how \eqref{eq-pi_ent_change_rate} generalizes the development in \cite{Spo78,Ber09}. We consider examples of dynamical processes in which the support and/or the rank of the state change with time, but the formula \eqref{eq-pi_ent_change_rate} is still applicable according to the above theorem.

\section{Open quantum system and Markovian dynamics}\label{sec:qmp}

	The dynamics of an open quantum system can be categorized into two broad classes, quantum Markov processes and quantum non-Markov processes, based on whether the evolution process exhibits memoryless behavior or has memory effects.
	
	Here, we consider the dynamics of an open quantum system in the time interval $I=[t_1,t_2)\subset\mathbb{R}$ for $t_1<t_2$. We assume that the state $\rho_t\in\msc{D}(\mc{H})$ of the system at time $t\in I$ satisfies the following differential master equation:
	\begin{equation}\label{eq-master_equation}
		\dot{\rho}_t=\mc{L}_t(\rho_t)\quad\forall~t\in I,
	\end{equation}
	where $\mathcal{L}_t$ is called the generator \cite{Kos72}, or Liouvillian, of the dynamics and can in general be time-dependent \cite{Ali07}. A state $\rho_{\text{eq}}$ is called a fixed point, or invariant state of the dynamics, if $\dot{\rho}_{\text{eq}}=0$, or
	\begin{equation}\label{eq-fixed_point}
		\mc{L}_t(\rho_{\text{eq}})=0\quad\forall~t\in I.
	\end{equation}
	
	In general, the evolution of systems governed by the master equation \eqref{eq-master_equation} is given by the two-parameter family $\{\mc{M}_{t,s}\}_{t,s\in I}$ of maps $\mc{M}_{t,s}:\msc{B}(\mc{H})\rightarrow\msc{B}(\mc{H})$ defined by \cite{Riv11}
	\begin{equation}\label{eq-channel_TO_exp}
		\mc{M}_{t,s}=\mc{T}\exp\left[\int_s^t\mc{L}_{\tau}~\d\tau\right]~~\forall ~s,t\in I,~s\leq t,\quad \mc{M}_{t,t}=\id~~\forall ~t\in I,
	\end{equation}
	where $\mc{T}$ is the time-ordering operator, so that the state $\rho_t$ of the system at time $t$ is obtained from the state of the system at time $s\leq t$ as $\rho_t=\mc{M}_{t,s}(\rho_s)$. The maps $\{\mc{M}_{t,s}\}_{t\geq s}$ satisfy the following composition law:
	\begin{align}
	\forall ~s\leq r\leq t:\ \	\mc{M}_{t,s}&=\mc{M}_{t,r}\circ\mc{M}_{r,s},\label{eq-semi_group_time_dep}\\
	\forall ~t\in I:\ \	\mc{M}_{t,t}&=\id,
	\end{align}
	and in terms of these maps the generator $\mc{L}_t$ is given by
	\begin{equation}\label{eq-generator_time_dep_2}
		\mc{L}_t=\lim_{\varepsilon\to 0^+}\frac{\mc{M}_{t+\varepsilon,t}-\id}{\varepsilon}.
	\end{equation}
	For the maps $\{\mc{M}_{t,s}\}_{t\geq s}$ to represent physical evolution, they must be trace-preserving. This implies that for all $\rho\in\msc{D}(\mc{H})$ the generator $\mc{L}_t$ has to satisfy
	\begin{equation}\label{eq-trace_preserving}
		\Tr\left[\mc{L}_t(\rho)\right]=0\quad\forall ~t\in I.
	\end{equation}
	
	When the intermediate maps $\mc{M}_{t,r}$ and $\mc{M}_{r,s}$ are positive and trace-preserving for all $s\leq r\leq t$, the condition \eqref{eq-semi_group_time_dep} is called P-divisibility. If the intermediate maps $\mc{M}_{t,r}$ and $\mc{M}_{r,s}$ are CPTP (i.e., quantum channels) for all $s\leq r\leq t$, the condition \eqref{eq-semi_group_time_dep} is called CP-divisibility \cite{WC08,RHP14}. Based on the notion of CP-divisibility, we consider the following definition of a quantum Markov process, which was introduced in \cite{RHP10}.
	
	\begin{definition}[Quantum Markov process]\label{def:qmp}
		The dynamics of a system in a time interval $I$ are called a quantum Markov process when they are governed by \eqref{eq-master_equation} and they are CP-divisible (i.e., the intermediate maps in \eqref{eq-semi_group_time_dep} are CPTP).
	\end{definition}
	
	An important fact is that the dynamics governed by the master equation \eqref{eq-master_equation} are CP-divisible (hence Markovian) if and only if the generator $\mc{L}_t$ of the dynamics has the Lindblad form 
	\begin{equation}\label{eq-generator_time_dep}
		\mc{L}_t(\rho)=-\iota[H(t),\rho]+\sum_i\gamma_i(t)\left[A_i(t)\rho A_i^\dagger(t)-\frac{1}{2}\left\{A_i^\dagger(t) A_i(t),\rho\right\}\right],
	\end{equation}
	with $H(t)$ a self-adjoint operator and $\gamma_i(t)\geq 0$ for all $i$ and for all $t\in I$. The operators $A_i(t)$ are called Lindblad operators. In the time-independent case, this result was independently obtained by Gorini \textit{et al.} \cite{GKS76} for finite-dimensional systems and by Lindblad \cite{Lin76} for infinite-dimensional systems. For a proof of this result in the time-dependent scenario, see \cite{Riv11,CK12}. In finite dimensions, necessary and sufficient conditions for $\mc{L}_t$ to be written in Lindblad form have been given in \cite{WEC08}. It should be noted that in general, for some physical processes, $\gamma_i(t)$ can be temporarily negative for some $i$ and the overall evolution still CPTP \cite{LPB10,HCLA14}.
	
	Given the generator $\mc{L}_t$ of the dynamics \eqref{eq-master_equation} and the corresponding positive trace-preserving maps $\{\mc{M}_{s,t}\}_{s,t\in I}$, it holds that the adjoint maps $\{\mc{M}_{s,t}^\dagger\}_{s,t\in I}$ are positive and unital. Furthermore, the adjoint maps $\{\mc{M}_{s,t}^\dagger\}_{s,t\in I}$ are generated by $\mc{L}_t^\dagger$, where $\mc{L}_t^\dagger$ is the adjoint of $\mc{L}_t$. The Lindblad form \eqref{eq-generator_time_dep} of the generator $\mc{L}_t^\dagger$ is
	\begin{equation}\label{eq-generator_adjoint}
		\mc{L}_t^\dagger(X)=\iota[H(t),X]+\sum_i \gamma_i(t)\left(A_i^\dagger(t) X A_i(t)-\frac{1}{2}\left\{X,A_i^\dagger(t) A_i(t)\right\}\right)\quad \forall X\in\msc{B}(\mc{H}).
	\end{equation}
		
	Now, let us consider the rate of entropy change $\frac{\d }{\d t}S(\rho_t)$ of a system in state $\rho_t$ at time $t$ evolving under dynamics with Liouvillian $\mc{L}_t$. Theorem~\ref{thm:rate-oqs} implies the following equality:
	\begin{align}
		\frac{\d}{\d t}S(\rho_t)=-\Tr\left\{\mc{L}_t(\rho_t)\log\rho_t\right\}\quad\forall~t\in I.
	\end{align}

	We now derive a limitation on the rate of entropy change of quantum Markov processes using the lower bound in Lemma \ref{thm:ent-lower-bound} on entropy change.

	\begin{theorem}[Lower limit on the rate of entropy change]\label{thm:q-Markov-ent-change-rate}
		The rate of entropy change of any quantum Markov process (Definition \ref{def:qmp}) is lower bounded as
		\begin{equation}\label{eq:qmp-rate-lim}
		\frac{\d}{\d t}S(\rho_t)\geq -\lim_{\varepsilon\to 0^+}\frac{\d}{\d\varepsilon}\Tr\left\{\Pi_t\(\(\mc{M}_{t+\varepsilon,t}\)^\dagger\circ\mc{M}_{t+\varepsilon,t}(\rho_t)\)\right\}= -\Tr\left\{\Pi_t\mc{L}_t^\dag(\rho_t)\right\},
		\end{equation}
		where $\Pi_t$ is the projection onto the support of the state $\rho_t$ of a system.
In general,	\eqref{eq:qmp-rate-lim} also holds for dynamics that obey \eqref{eq-master_equation} and are P-divisible. 		
	\end{theorem}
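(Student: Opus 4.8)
The plan is to apply the entropy-change lower bound of Lemma~\ref{thm:ent-lower-bound} to the short-time map $\mc{M}_{t+\varepsilon,t}$, to insert a sharpened Klein-type inequality for the relative entropy, and then to let $\varepsilon\to 0^+$ using the form of the generator together with the expression for $\frac{\d}{\d t}S(\rho_t)$ from Theorem~\ref{thm:rate-oqs}. Fix $t\in I$ and, for small $\varepsilon>0$, set $\sigma_\varepsilon\coloneqq(\mc{M}_{t+\varepsilon,t})^\dagger\circ\mc{M}_{t+\varepsilon,t}(\rho_t)$, so that $\sigma_0=\rho_t$ because $\mc{M}_{t,t}=\id$. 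For a quantum Markov (CP-divisible) process the map $\mc{M}_{t+\varepsilon,t}$ is CPTP, and for a merely P-divisible process it is positive and trace preserving; in either case Lemma~\ref{thm:ent-lower-bound} yields $S(\rho_{t+\varepsilon})-S(\rho_t)\geq D(\rho_t\Vert\sigma_\varepsilon)$. This also forces $D(\rho_t\Vert\sigma_\varepsilon)<\infty$, i.e.\ $\supp\rho_t\subseteq\supp\sigma_\varepsilon$, since the left-hand side is finite.

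The next step is to prove the auxiliary inequality $D(\rho\Vert\sigma)\geq\Tr\{\rho\}-\Tr\{\Pi(\rho)\sigma\}$, valid for any state $\rho$ with support projection $\Pi(\rho)$ and any positive semi-definite $\sigma$. Letting $\mc{P}(X)=\sum_k P_k X P_k$ be the pinching channel for the spectral projections $\{P_k\}$ of $\rho$, monotonicity (data processing) of the quantum relative entropy under $\mc{P}$ gives $D(\rho\Vert\sigma)\geq D(\rho\Vert\mc{P}(\sigma))$ because $\mc{P}(\rho)=\rho$; now $\rho$ and $\mc{P}(\sigma)$ commute, so the scalar bound $\log x\geq 1-1/x$ applied to their joint spectrum gives $D(\rho\Vert\mc{P}(\sigma))\geq\Tr\{\rho\}-\Tr\{\Pi(\rho)\mc{P}(\sigma)\}$, and $\Tr\{\Pi(\rho)\mc{P}(\sigma)\}=\Tr\{\mc{P}(\Pi(\rho))\sigma\}=\Tr\{\Pi(\rho)\sigma\}$ since $\mc{P}$ is self-adjoint and $\mc{P}(\Pi(\rho))=\Pi(\rho)$. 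Applying this with $\rho=\rho_t$ and $\sigma=\sigma_\varepsilon$ turns the bound from the previous paragraph into $S(\rho_{t+\varepsilon})-S(\rho_t)\geq 1-\Tr\{\Pi_t\sigma_\varepsilon\}$.

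Finally I would divide by $\varepsilon>0$ and let $\varepsilon\to 0^+$. Assuming $\dot{\rho}_t$ is well defined, Theorem~\ref{thm:rate-oqs} guarantees that the left-hand side converges to $\frac{\d}{\d t}S(\rho_t)$. For the right-hand side, the definition of the generator in \eqref{eq-generator_time_dep_2} gives $\mc{M}_{t+\varepsilon,t}=\id+\varepsilon\mc{L}_t+o(\varepsilon)$ and hence $(\mc{M}_{t+\varepsilon,t})^\dagger=\id+\varepsilon\mc{L}_t^\dagger+o(\varepsilon)$, so that $\sigma_\varepsilon=\rho_t+\varepsilon\(\mc{L}_t(\rho_t)+\mc{L}_t^\dagger(\rho_t)\)+o(\varepsilon)$ and therefore $1-\Tr\{\Pi_t\sigma_\varepsilon\}=-\varepsilon\[\Tr\{\Pi_t\mc{L}_t(\rho_t)\}+\Tr\{\Pi_t\mc{L}_t^\dagger(\rho_t)\}\]+o(\varepsilon)$. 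Since $\mc{L}_t(\rho_t)=\dot{\rho}_t$, the term $\Tr\{\Pi_t\mc{L}_t(\rho_t)\}=\Tr\{\Pi_t\dot{\rho}_t\}$ vanishes by \eqref{eq-st_3} (the identity established inside the proof of Theorem~\ref{thm:rate-oqs}), so $\frac{1}{\varepsilon}\(1-\Tr\{\Pi_t\sigma_\varepsilon\}\)\to-\Tr\{\Pi_t\mc{L}_t^\dagger(\rho_t)\}$, which coincides with $-\lim_{\varepsilon\to 0^+}\frac{\d}{\d\varepsilon}\Tr\{\Pi_t\sigma_\varepsilon\}$ by the same expansion. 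Passing to the limit in the inequality then gives the claim, and the P-divisible case needs no modification since only positivity and trace preservation of $\mc{M}_{t+\varepsilon,t}$ were used.

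I expect the crux to be extracting the \emph{projected} quantity $\Tr\{\Pi_t\mc{L}_t^\dagger(\rho_t)\}$ rather than the weaker $\Tr\{\mc{L}_t^\dagger(\rho_t)\}$ that the cruder estimate $D(\rho\Vert\sigma)\geq\Tr\{\rho\}-\Tr\{\sigma\}$ would produce: this is exactly what forces the pinching refinement of Klein's inequality, and it also makes it necessary to carry the support projection $\Pi_t$ through the whole argument (recall $\rho_t$ need not be full rank, so $\log\rho_t$ and $\log\sigma_\varepsilon$ are only controlled on supports), together with the nontrivial fact $\Tr\{\Pi_t\dot{\rho}_t\}=0$ borrowed from Theorem~\ref{thm:rate-oqs}.
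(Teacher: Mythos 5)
Your argument is correct, and it arrives at exactly the paper's conclusion via a somewhat different middle step. Both proofs open identically: apply Lemma~\ref{thm:ent-lower-bound} to the short-time CPTP (resp.~positive trace-preserving) map $\mc{M}_{t+\varepsilon,t}$ to get $S(\rho_{t+\varepsilon})-S(\rho_t)\geq D\bigl(\rho_t\big\Vert(\mc{M}_{t+\varepsilon,t})^\dagger\circ\mc{M}_{t+\varepsilon,t}(\rho_t)\bigr)$, then divide by $\varepsilon$ and send $\varepsilon\to 0^+$. The divergence is in how the resulting limit is evaluated. The paper expands $D(\rho_t\Vert\sigma_\varepsilon)=-S(\rho_t)-\Tr\{\rho_t\log\sigma_\varepsilon\}$ and differentiates the operator logarithm along the path $\varepsilon\mapsto\sigma_\varepsilon$ using the divided-difference machinery of Section~\ref{app-derivative}, obtaining $-\lim_{\varepsilon\to 0^+}\frac{\d}{\d\varepsilon}\Tr\{\Pi_t\sigma_\varepsilon\}$ essentially because the derivative of $\log$ at the degenerate point $\sigma_0=\rho_t$ pulls out the Moore--Penrose inverse $\rho_t^{+}$, and $\rho_t\rho_t^{+}=\Pi_t$; the projection appears from the domain restriction of $\log$. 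You instead insert a pinching-sharpened Klein inequality $D(\rho\Vert\sigma)\geq\Tr\{\rho\}-\Tr\{\Pi(\rho)\sigma\}$ \emph{before} taking the limit, which reduces the limiting step to a first-order Taylor expansion of $\sigma_\varepsilon=\rho_t+\varepsilon\bigl(\mc{L}_t+\mc{L}_t^\dagger\bigr)(\rho_t)+o(\varepsilon)$. What you buy is the avoidance of the operator-function derivative formalism entirely (and, in particular, of the bookkeeping near the zero eigenvalues of $\rho_t$, which your route handles automatically through the pinching); the cost is the extra auxiliary lemma, which however is elementary (data processing under a pinching plus the scalar $\log x\geq 1-1/x$). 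Your diagnosis of why the projection $\Pi_t$ is unavoidable is also right and matches the role the paper's Remark~\ref{rem:q-Markov-ent-change-rate} (the $\rho_t>0$ case where $\Pi_t=\mathbbm{1}$) plays: the unprojected bound $-\Tr\{\mc{L}_t^\dagger(\rho_t)\}$ is weaker whenever $\rho_t$ is rank-deficient. Finally, your appeal to \eqref{eq-st_3} for $\Tr\{\Pi_t\dot\rho_t\}=0$ is the same device used at the end of the paper's argument, and your remark that only positivity and trace preservation of $\mc{M}_{t+\varepsilon,t}$ are needed correctly justifies the P-divisible extension.
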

	\begin{proof}
		First, since the system is governed by \eqref{eq-master_equation}, so $\rho_{t+\varepsilon}=\mc{M}_{t+\varepsilon,t}(\rho_t)$ for any $\varepsilon>0$. Also, since $\mc{M}_{t+\varepsilon,t}$ is CPTP (hence positive and trace-preserving), we can apply Lemma \ref{thm:ent-lower-bound} to get the following inequality
		\begin{equation}
			S(\mc{M}_{t+\varepsilon,t}(\rho_t))-S(\rho_t)\geq D\(\rho_t\left\Vert (\mc{M}_{t+\varepsilon,t})^\dagger\circ\mc{M}_{t+\varepsilon,t}(\rho_t)\)\right.	
		\end{equation}
		Therefore, by definition of the derivative, we obtain
		\begin{align}
			\frac{\d}{\d t}S(\rho_t)&=\lim_{\varepsilon\to 0^+}\frac{S(\rho_{t+\varepsilon})-S(\rho_t)}{\varepsilon}\\
			&\geq \lim_{\varepsilon\to 0^+}\frac{1}{\varepsilon}D\(\rho_t\left\Vert\(\mc{M}_{t+\varepsilon,t}\)^\dagger\circ\mc{M}_{t+\varepsilon,t}(\rho_t)\)\right.\\
			&=\lim_{\varepsilon\to 0^+}\frac{-S(\rho_t)-\Tr\left\{\rho_t\log\left[\(\mc{M}_{t+\varepsilon,t}\)^\dagger\circ\mc{M}_{t+\varepsilon,t}(\rho_t)\right]\right\}}{\varepsilon}\label{eq-lb_pf_1}\\
			&=-\lim_{\varepsilon\to 0^+}\frac{\d}{\d\varepsilon}\Tr\left\{\rho_t\log\left[\(\mc{M}_{t+\varepsilon,t}\)^\dagger\circ\mc{M}_{t+\varepsilon,t}(\rho_t)\right]\right\}\label{eq-lb_pf_2}\\
			&=-\lim_{\varepsilon\to 0^+}\Tr\left\{\rho_t\frac{\d\left(\log\left[\(\mc{M}_{t+\varepsilon,t}\)^\dagger\circ\mc{M}_{t+\varepsilon,t}(\rho_t)\right]\right)}{\d\varepsilon}\right\}\\
			&=-\lim_{\varepsilon\to 0^+}\frac{\d}{\d\varepsilon}\Tr\left\{\Pi_t \(\mc{M}_{t+\varepsilon,t}\)^\dagger\circ\mc{M}_{t+\varepsilon,t}(\rho_t)\right\}\label{eq-lb_pf_3},
		\end{align}
		where we used the definition of the derivative to get \eqref{eq-lb_pf_2} from \eqref{eq-lb_pf_1}. From Section~\ref{app-derivative}, and noting that $\lim_{\varepsilon\to 0}\(\mc{M}_{t+\varepsilon,t}\)^\dagger\circ\mc{M}_{t+\varepsilon,t}(\rho_t)=\rho_t$, we arrive at \eqref{eq-lb_pf_3}. Then, using the definition of the adjoint and the master equation \eqref{eq-master_equation}, we get
		\begin{align}
			&-\lim_{\varepsilon\to 0^+}\frac{\d}{\d\varepsilon}\Tr\left\{\Pi_t\(\mc{M}_{t+\varepsilon,t}\)^\dagger\circ\mc{M}_{t+\varepsilon,t}(\rho_t)\right\}\nonumber\\
			&\hspace{2.5cm}=-\lim_{\varepsilon\to 0^+}\frac{\d}{\d\varepsilon}\Tr\left\{\mc{M}_{t+\varepsilon,t}(\Pi_t)\mc{M}_{t+\varepsilon,t}(\rho_t)\right\}\\
			&\hspace{2.5cm}=-\lim_{\varepsilon\to 0^+}\Tr\left\{\frac{\d}{\d\varepsilon}\left(\mc{M}_{t+\varepsilon,t}(\Pi_t)\mc{M}_{t+\varepsilon,t}(\rho_t)\right)\right\}\\
			&\hspace{2.5cm}=-\lim_{\varepsilon\to 0^+}\Tr\left\{\left(\frac{\d}{\d\varepsilon}\mc{M}_{t+\varepsilon,t}(\Pi_t)\right)\mc{M}_{t+\varepsilon,t}(\rho_t)+\mc{M}_{t+\varepsilon,t}(\Pi_t)\left(\frac{\d}{\d\varepsilon}\mc{M}_{t+\varepsilon,t}(\rho_t)\right)\right\}.
		\end{align}
		Employing \eqref{eq-generator_time_dep_2} and the fact that $\mc{M}_{t,t}=\id$ for all $t\in I$, we get
		\begin{align}
			\mc{L}_t &=\lim_{\varepsilon\to 0^+}\frac{\mc{M}_{t+\varepsilon,t}-\id}{\varepsilon}= \lim_{\varepsilon\to 0^+}\frac{\d}{\d\varepsilon}\mc{M}_{t+\varepsilon,t}.
		\end{align}
		Therefore,
		\begin{align}
			-\lim_{\varepsilon\to 0^+}\frac{\d}{\d\varepsilon}\Tr\left\{\Pi_t\(\mc{M}_{t+\varepsilon,t}\)^\dagger\circ\mc{M}_{t+\varepsilon,t}(\rho_t)\right\}&=-\Tr\left\{\mc{L}_t(\Pi_t)\rho_t+\Pi_t\mathcal{L}_t(\rho_t)\right\}\\
			&=-\Tr\left\{\Pi_t\mc{L}^\dag_t(\rho_t)\right\}, \label{eq:qmp-final-step}
		\end{align}
		where we used the fact \eqref{eq-st_3} that $\Tr\{\Pi_t\mc{L}_t(\rho_t)\}=\Tr\left\{\Pi_t\dot{\rho}_t\right\}=0$.
	\end{proof}

	\bigskip
	
	Quantum dynamics obeying \eqref{eq-master_equation} are unital in a time interval $I$ if $\mc{L}_t(\mathbbm{1})=0$ for all $t\in I$, which implies that 
	$\Tr\{\Pi_t\mc{L}^\dag_t(\rho_t)\}=0$ for any initial state $\rho_0$ and for all $t\in I$. The deviation of $\Tr\{\Pi_t\mc{L}_t^\dagger(\rho_t)\}$ from zero is therefore a \textit{witness of non-unitality} at time $t$.   

	\begin{remark}\label{rem:q-Markov-ent-change-rate}
		When $\rho_t>0$, the rate of entropy change of any quantum Markov process is lower bounded as
		\begin{equation}
			\frac{\d}{\d t}S(\rho_t)\geq -\lim_{\varepsilon\to 0}\frac{\d}{\d\varepsilon}\Tr\left\{\(\mc{M}_{t+\varepsilon,t}\)^\dagger\circ\mc{M}_{t+\varepsilon,t}(\rho_t)\right\}= -\Tr\left\{\mc{L}_t^\dag (\rho_t)\right\}.
		\end{equation}
	\end{remark}
	
	Given a quantum Markov process and a state described by a density operator $\rho_t>0$ that is not a fixed (invariant) state of the dynamics, we can make the following statements for $t\in I$ and for all $\varepsilon>0$ such that $[t,t+\varepsilon)\subset I$:
	\begin{itemize}
	\item[(i)] If $\mc{M}_{t+\varepsilon,t}$ is strictly sub-unital, i.e., $\mc{M}_{t+\varepsilon,t}\(\mathbbm{1}\)< \mathbbm{1}$, then its adjoint is trace non-increasing, which means that $\Tr\{\mc{L}^\dag_t(\rho_t)\}<0$. This implies that the rate of entropy change is strictly positive for strictly sub-unital Markovian dynamics. 
	\item[(ii)] If $\mc{M}_{t+\varepsilon,t}$ is unital, i.e., $\mc{M}_{t+\varepsilon,t}\(\mathbbm{1}\)= \mathbbm{1}$, then its adjoint is trace-preserving, which means that $\Tr\{\mc{L}^\dag_t(\rho_t)\}=0$. This implies that the rate of entropy change is non-negative for unital Markovian dynamics. 
	\item[(iii)] If $\mc{M}_{t+\varepsilon,t}$ is strictly super-unital, i.e., $\mc{M}_{t+\varepsilon,\varepsilon}\(\mathbbm{1}\)> \mathbbm{1}$, then its adjoint is trace-increasing, which means that $\Tr\{\mc{L}^\dag_t(\rho_t)\}>0$. This implies that it is possible for the rate of entropy change to be negative for strictly super-unital Markovian dynamics. 
	\end{itemize}
	
	Using the Lindblad form of $\mc{L}_t^\dagger$ in \eqref{eq-generator_adjoint}, we find that 
	\begin{equation}\label{eq-qmp_L_dag}
		\Tr\{\mc{L}_t^\dagger(\rho_t)\}=\sum_i \gamma_i(t)\left<\left[A_i(t),A_i^\dagger(t)\right]\right>_{\rho_t}
	\end{equation}
	where $\<A\>_{\rho}=\Tr\{A\rho\}$. Using this expression, the lower bound on the rate of entropy change for quantum Markov processes when the state $\rho_t>0$ is
	\begin{equation}\label{eq:qmp-rate-lim_full}
		\frac{\d}{\d t}S(\rho_t)\geq \sum_i\gamma_i(t)\left<\left[A_i^\dagger(t),A_i(t)\right]\right>_{\rho_t}.
	\end{equation}
	The inequality \eqref{eq:qmp-rate-lim_full} was first derived in \cite{BN88} and recently discussed in \cite{OCA17}.
		
\bigskip		
		
	When the generator $\mc{L}_t\equiv\mc{L}$ is time-independent and $I=[0,\infty)$, it holds that the time evolution from time $s\in I$ to time $t\in I$ is determined merely by the time difference $t-s$, that is, $\mc{M}_{t,s}=\mc{M}_{t-s,0}$ for all $t\geq s$. The evolution of the system is then determined by a one-parameter semi-group. Let $\mc{M}_{t}\coloneqq \mc{M}_{t,0}$ for all $t\geq 0$. 
	
	\begin{remark}
		If the dynamics of a system are unital and can be represented by a one-parameter semi-group $\{\mc{M}_t\}_{t\geq 0}$ of quantum channels such that the generator $\mc{L}$ is self-adjoint, then for $\rho_0>0$,
		\begin{equation}\label{eq:eg-rel-t1}
			-\Tr\{\rho_0\log\rho_{2t}\}\leq S(\rho_t)\leq -\Tr\{\rho_{2t}\log\rho_0\}.
		\end{equation}
		This follows from Lemma \ref{thm:log-con}, \eqref{eq-adjoint-eg}, and the fact that $\mc{M}_t^\dagger=\mc{M}_t$. In particular,
		\begin{align}
			S(\rho_t) = S(\mc{M}_t(\rho_0))=-\Tr\{\mc{M}_t(\rho_0)\log\mc{M}_t(\rho_0)\} &\leq -\Tr\{\mc{M}_t(\rho_0)\mc{M}_t(\log\rho_0)\}\\
			&=-\Tr\{\mc{M}_t^\dagger\circ\mc{M}_t(\rho_0)\log\rho_0\}\\
			&=-\Tr\{\rho_{2t}\log\rho_0\}.
		\end{align}
		Similarly,
		\begin{align}
			S(\rho_t) = S(\mc{M}_t(\rho_0))=-\Tr\{\mc{M}_t(\rho_0)\log\mc{M}_t(\rho_0)\}
			& = -\Tr\{\rho_0\mc{M}^\dag_t(\log\mc{M}_t(\rho_0))\} \\
			&\geq -\Tr\{\rho_0\log(\mc{M}^\dag_t\circ\mc{M}_t(\rho_0))\}\\
			&=-\Tr\{\rho_0\log\rho_{2t}\}.
		\end{align}
	\end{remark}
	
	\begin{remark}
		If the dynamics of a system are unital and can be represented by a one-parameter semi-group $\{\mc{M}_t\}_{t\geq 0}$ of quantum channels such that the generator $\mc{L}$ is self-adjoint, then the entropy change is lower bounded as
		\begin{equation}\label{eq-mul}
			S(\rho_t)-S(\rho_0)\geq D(\rho_0\left\Vert \rho_{2t})\right. .
		\end{equation}
		This follows using Lemma \ref{thm:ent-lower-bound}. Under certain assumptions, when the dynamics of a system are described by Davies maps \cite{Dav74}, the same lower bound \eqref{eq-mul} holds for the entropy change \cite{AW16}.
	\end{remark}
	
	From the above remark, notice that the entropy change in a time interval $[0,t]$ is lower bounded by the relative entropy between the initial state $\rho_0$ and the evolved state $\rho_{2t}$ after time $2t$. In the context of information theory, the relative entropy has an operational meaning as the optimal type-II error exponent (in the asymptotic limit) in asymmetric quantum hypothesis testing \cite{HP91,ON00}. The entropy change in the time interval $[0,t]$ is thus an upper bound on the optimal type-II error exponent, where $\rho_0$ is the null hypothesis and $\rho_{2t}$ is the alternate hypothesis. 
	
\begin{remark}
Consider evolution of an open bipartite quantum system $AB$ given by two-parameter family $\{\mc{M}_{t,s}\}_{t,s\in I}$ of maps $\mc{M}_{t,s}:\msc{B}(\mc{H})\rightarrow\msc{B}(\mc{H})$, where $\mc{H}=\mc{H}_{A}\otimes\mc{H}_{B}$, as defined in \eqref{eq-channel_TO_exp}. Furthermore, assume that the dynamics are CP-divisible, meaning that the intermediate maps $\mc{M}_{t,r}$ and $\mc{M}_{r,s}$ are CPTP for all $s\leq r\leq t$. In other words, we are assuming that the dynamics of the given bipartite system is a quantum Markov process. Note that entangling abilities of such bipartite interactions are limited by the bounds derived in Chapters~\ref{ch:bqi} (see also \cite{BCM14} in the context of an open quantum system). 
\end{remark}
 
\subsection{Bosonic Gaussian dynamics}\label{sec-Gaussian}
	
	Here we consider Gaussian dynamics that can be represented by the one-parameter family $\{\mc{G}_t\}_{t\geq 0}$ of phase-insensitive bosonic Gaussian channels $\mc{G}_t$ (cf.\cite{HHW09}). It is known that all phase-insensitive gauge-covariant single-mode bosonic Gaussian channels form a one-parameter semi-group \cite{GHLM10}. The Liouvillian for such Gaussian dynamics is time-independent and has the following form:
	\begin{equation}\label{eq-Gaussian_generator}
		\mc{L}=\gamma_+\mc{L}_++\gamma_-\mc{L}_-,
	\end{equation}
	where
	\begin{align}
		\mc{L}_+(\rho)&=\hat{a}^\dagger\rho\hat{a}-\frac{1}{2}\left\{\hat{a}\hat{a}^\dagger,\rho\right\},\\
		\mc{L}_-(\rho)&=\hat{a}\rho\hat{a}^\dagger-\frac{1}{2}\left\{\hat{a}^\dagger\hat{a},\rho\right\},
	\end{align}
	$\hat{a}$ is the field-mode annihilation operator of the system, and the following commutation relation holds for bosonic systems:
	\begin{equation}
		\left[\hat{a},\hat{a}^\dagger\right]=\mathbbm{1}.
	\end{equation}
	The state $\rho_t$ of the system at time $t$ is
	\begin{equation}
		\rho_t=\mc{G}_t(\rho_0)=e^{t\mc{L}}(\rho_0).
	\end{equation}
	The thermal state $\rho_{\text{th}}(N)$ with mean photon number $N$ is defined as
	\begin{equation}
		\rho_{\text{th}}(N)\coloneqq\frac{1}{N+1}\sum_{n=0}^{\infty}\left(\frac{N}{N+1}\right)^n\op{n},
	\end{equation}
	where $N\geq 0$ and $\{\ket{n}\}_{n\geq 0}$ is the orthonormal, photonic number-state basis. Using \eqref{eq-qmp_L_dag}, we get 
	\begin{align}
		-\Tr\{\mc{L}^\dagger(\rho_t)\}&=-\gamma_+\left<\left[\hat{a}^\dagger,\hat{a}\right]\right>_{\rho_t}-\gamma_-\left<\left[\hat{a},\hat{a}^\dagger\right]\right>_{\rho_t}\\
		&=\gamma_+-\gamma_-~.
	\end{align}
	Therefore, by Remark \ref{rem:q-Markov-ent-change-rate}, if $\rho_t>0$, then
	\begin{align}
		\frac{\d S(\mc{G}_t(\rho_0))}{\d t}\geq\gamma_+-\gamma_-~.
	\end{align}
	The lower bound $\gamma_+-\gamma_-$ is a witness of non-unitality. It is positive for strictly sub-unital, zero for unital, and negative for strictly super-unital dynamics. For example, when the dynamics are represented by a family $\{\mc{A}_t\}_{t\geq 0}$ of noisy amplifier channels $\mc{A}_t$ with thermal noise $\rho_{\text{th}}(N)$, then $\gamma_+=N+1$ and $\gamma_-=N$, which implies that the dynamics are strictly sub-unital. When the dynamics are represented by a family $\{\mc{B}_t\}_{t\geq 0}$ of lossy channels $\mc{B}_t$ (i.e., beamsplitters) with thermal noise $\rho_{\text{th}}(N)$, then $\gamma_+=N$, $\gamma_-=N+1$, which implies that the dynamics are strictly super-unital. When the dynamics are represented by a family $\{\mc{C}_t\}_{t\geq 0}$ of additive Gaussian noise channels $\mc{C}_t$, then $\gamma_+=\gamma_-$, which implies that the dynamics are unital. 
	
\section{Quantum non-Markovian processes}\label{sec:qnmp}

Dynamics of a quantum system that are not a quantum Markov process as stated in Definition~\ref{def:qmp} are called a quantum non-Markov process. Among these two classes of quantum dynamics, non-Markov processes are not well understood and have attracted increased focus over the past decade. Some examples of applications of quantum Markov processes are in the fields of quantum optics, semiconductors in condensed matter physics, the quantum mechanical description of Brownian motion, whereas some examples where quantum non-Markov processes have been applied are in the study of a damped harmonic oscillator, or a damped driven two-level atom \cite{Car09,Wei12,Riv11}.

	There can be several tests derived from the properties of quantum Markov processes, the satisfaction of which gives witnesses of non-Markovianity. Based on Theorem \ref{thm:q-Markov-ent-change-rate}, we mention here a few tests that will always fail for a quantum Markov process. Passing of these tests guarantees that the dynamics are non-Markovian. 

	An immediate consequence of Theorem \ref{thm:q-Markov-ent-change-rate} is that only a quantum non-Markov process can pass any of the following tests: 
	\begin{itemize}
		\item[(a)] 
			\begin{equation}\label{eq-nmw_sm}
				\frac{\d}{\d t}S(\rho_t) + \lim_{\varepsilon\to 0^+}\frac{\d}{\d\varepsilon}\Tr\left\{\Pi_t\(\(\mc{M}_{t+\varepsilon,t}\)^\dagger\circ\mc{M}_{t+\varepsilon,t}(\rho_t)\)\right\}< 0.
			\end{equation}
		\item[(b)] 
			\begin{equation}\label{eq-nmw_sl}
				\frac{\d}{\d t}S(\rho_t)+\Tr\left\{\Pi_t\mc{L}_t^\dag(\rho_t)\right\}<0.
			\end{equation}
		\item[(c)] 
			\begin{equation}
				\lim_{\varepsilon\to 0^+}\frac{\d}{\d\varepsilon}\Tr\left\{\Pi_t\(\(\mc{M}_{t+\varepsilon,t}\)^\dagger\circ\mc{M}_{t+\varepsilon,t}(\rho_t)\)\right\}\neq \Tr\left\{\Pi_t\mc{L}_t^\dag(\rho_t)\right\}. 
			\end{equation}
			\end{itemize}

If the dynamics of the system satisfy any of the above tests, then the process is non-Markovian. Based on the description of the dynamics and the state of the system, one can choose which test to apply. In the case of unital dynamics, \eqref{eq-nmw_sm} and \eqref{eq-nmw_sl} reduce to $\frac{\d}{\d t}S(\rho_t)<0$. The observation that the negativity of the rate of entropy change is a witness of non-Markovianity for random unitary processes, which are a particular kind of unital processes, was made in \cite{CM14}.

	Based on the above witnesses of non-Markovianity, we can introduce different measures of non-Markovianity for physical processes. Here, we define two measures of non-Markovianity that are based on the channel and generator representation of the dynamics of the system:

	\begin{enumerate}
		\item 
			\begin{equation}\label{eq:dual-measure-non-markov}
				\complement_{\textnormal{M}}(\mc{L}):=\max_{\rho_0}\SumInt_{\substack{t:\\\frac{\d S(\rho_t)}{\d t}+\Tr\left\{\Pi_t\mc{L}_t^\dag (\rho_t)\right\}< 0}} \left\vert\frac{\d S(\rho_t)}{\d t}+\Tr\left\{\Pi_t\mc{L}_t^\dag (\rho_t)\right\}
\right\vert.
			\end{equation}
		\item 
			\begin{equation}\label{eq:cor-measure-non-markov}
				\complement_{\textnormal{M}}(\mc{M}):=\max_{\rho_0}\SumInt_{t: f(t)<0} \left\vert f(t)
\right\vert,
			\end{equation}
			where 
			\begin{equation}\label{eq-nm_ft}
				f(t)\coloneqq \frac{\d}{\d t}S(\rho_t)+\lim_{\varepsilon\to 0^+}\frac{\d}{\d\varepsilon}\Tr\left\{\Pi_t\(\(\mc{M}_{t+\varepsilon,t}\)^\dagger\circ\mc{M}_{t+\varepsilon,t}(\rho_t)\)\right\}.
			\end{equation}
	\end{enumerate}
	In the case of unital dynamics, the above measures are equal. It should be noted that the above measures of non-Markovianity are not faithful. This is due to the fact that the statements in Theorem \ref{thm:q-Markov-ent-change-rate} do not provide sufficient conditions for the evolution to be a quantum Markov process. In other words, if the measure $\complement_{\textnormal{M}}$ \eqref{eq:dual-measure-non-markov} is non-zero, then the dynamics are non-Markovian, but if it is equal to zero, then that does not in general imply that the dynamics are Markovian.

\subsection{Examples}
	In this section, we consider two common examples of quantum non-Markov processes: pure decoherence of a qubit system (Section \ref{sec-decohere}) and a generalized amplitude damping channel (Section \ref{sec-GADC}). In order to characterize quantum dynamics, several witnesses of non-Markovianity and measures of non-Markovianity based on these witnesses have been proposed \cite{WEC08,BLP09,RHP10,LWS10,LPP11,LFS12,ZSMWN13,LLW13,LPP13,CM14,HCLA14,HSK15,PGD+16}. Many of these measures are based on the fact that certain quantities are monotone under Markovian dynamics, such as the trace distance between states \cite{BLP09}, entanglement measures \cite{RHP10,LPP11,LFS12}, Fisher information and Bures distance \cite{LWS10,ZSMWN13,LLW13}, and the volume of states \cite{LPP13}. Among these measures, the one proposed in \cite{RHP10}  based on the Choi representation of dynamics is both necessary and sufficient. The measure proposed in \cite{HCLA14} is also necessary and sufficient and is based on the values of the decay rates $\gamma_i(t)$ appearing in the Lindblad form \eqref{eq-generator_time_dep} of the Liouvillian of the dynamics.
	
	Here, we compare our measures of non-Markovianity with the widely-considered Breuer-Laine-Piilo (BLP) measure of non-Markovianity \cite{BLP09}. This is a measure of non-Markovianity defined using the trace distance and is based on the fact that the trace distance is monotonically non-increasing under quantum channels. Breuer et al.~\cite{BLP09} in 2009 defined Markovianity using CP-divisibility. BLP measure uses the trace distance and exploits the fact that it is monotonically non-increasing under quantum channels. Violation of this monotonicity is thus an indication of non-Markovianity. Specifically, for a given set $\{\mc{M}_{t,s}\}_{s,t\geq 0}$ of completely positive and trace-preserving maps, their measure is
			\begin{equation}\label{eq-Breuer_measure}
				N=\max_{\rho_1(0),\rho_2(0)}\int_{\sigma>0}\sigma(t,\rho_1(0),\rho_2(0))~\d t,
			\end{equation}
			where $\sigma(t,\rho_1(0),\rho_2(0))=\frac{\d}{\d t}\frac{1}{2}\norm{\rho_1(t)-\rho_2(t)}_1$ and $\rho_1(t)=\mc{M}_{t,0}\rho_1(0)$, $\rho_2(t)=\mc{M}_{t,0}\rho_2(0)$.
			
Our measure agrees with the BLP measure in the case of pure decoherence of a qubit. In the case of the generalized amplitude damping channel, our witness is able to detect non-Markovianity even when the BLP measure does not.

\subsubsection{Pure decoherence of a qubit system}\label{sec-decohere}

	Consider a two-level system with ground state $\ket{-}$ and excited state $\ket{+}$. This qubit system is allowed to interact with a bosonic environment that is a reservoir of field modes. The time evolution of the qubit system is given by
	\begin{equation}
		\frac{\d\rho_t}{\d t}=-\iota[H(t),\rho_t]+\gamma(t)\left[\sigma_{-}\rho_t\sigma_+-\frac{1}{2}\{\sigma_+\sigma_-,\rho_t\}\right],
	\end{equation}
	where $\sigma_+=\ket{+}\bra{-}$, $\sigma_-=\ket{-}\bra{+}$ and $t\geq 0$. If $H(t)=0$, then the system undergoes pure decoherence and the Liouvillian reduces to
	\begin{equation}
		\mc{L}_t(\rho_t)=\frac{\gamma(t)}{2}\(\sigma_z\rho_t\sigma_z-\rho_t\),
	\end{equation}
	where $\sigma_z=[\sigma_+,\sigma_-]$. The decoherence rate is given by $\gamma(t)$, and it can be determined by the spectral density of the reservoir \cite{BLP09}.
	
	We can verify that $\Tr\{\Pi_t\mc{L}_t^\dagger(\rho_t)\}=0$ for all $t\geq 0$ and any initial state $\rho_0$. This implies that the dynamics are unital for all $t\geq 0$. In this case, for $t>0$, our witness \eqref{eq-nmw_sl} reduces to $\frac{\d}{\d t}S(\rho_t)<0$. For qubit systems undergoing the given unital evolution, it holds that $\rho_t>0$ for all $t>0$, and thus for $t>0$ our measures \eqref{eq:dual-measure-non-markov} and \eqref{eq:cor-measure-non-markov} are equal and reduce to the measure in \cite[Eq.~(15)]{HSK15}, which was based on the fact that the rate of entropy change is non-negative for unital quantum channels. As stated therein, these measures of non-Markovianity are positive and agree with those obtained by the BLP measure \cite[Eq.~(11)]{BLP09}.

\subsubsection{Generalized amplitude damping channel}\label{sec-GADC}

	In this example, we consider non-unital dynamics that can be represented as a family of generalized amplitude damping channels $\{\mc{M}_t\}_{t\geq 0}$ on a two-level system \cite{LLW13}. These channels have Kraus operators \cite{Fuj04}
	\begin{equation}\label{eq-GADC}
		\begin{aligned}
		M_t^1&=\sqrt{p_t}\begin{pmatrix} 1&0\\0&\sqrt{\eta_t}\end{pmatrix}\\
		M_t^2&=\sqrt{p_t}\begin{pmatrix} 0&\sqrt{1-\eta_t}\\0&0\end{pmatrix}\\
		M_t^3&=\sqrt{1-p_t}\begin{pmatrix} \sqrt{\eta_t}&0\\0&1\end{pmatrix}\\
		M_t^4&=\sqrt{1-p_t}\begin{pmatrix} 0&0\\\sqrt{1-\eta_t}&0\end{pmatrix},
		\end{aligned}
	\end{equation}
	where $p_t=\cos^2(\omega t)$, $\omega\in\mathbb{R}$, and $\eta_t=e^{-t}$. Then, for all $t\geq 0$, $\mc{M}_t(\rho)=\sum_{i=1}^4 M_t^i\rho (M_t^i)^\dagger$. $\mc{M}_t$ is unital if and only if $p_t=\frac{1}{2}$ or $\eta_t=1$. When $\eta_t=1$, $\mc{M}_t=\id$ for all $\omega$.
	
	It was shown in\cite{LLW13} that the BLP measure \cite{BLP09} does not capture the non-Markovianity of the dynamics given by \eqref{eq-GADC}.
	
	Let the initial state $\rho_0$ be maximally mixed, that is, $\rho_0=\frac{1}{2}\mathbbm{1}$. The evolution of this state under $\mc{M}_t$ is then
	\begin{equation}
		\rho_t\coloneqq\mc{M}_t(\rho_0)=\frac{1}{2}\begin{pmatrix} 1+W_t&0\\0&1-W_t\end{pmatrix},
	\end{equation}
	where $W_t=(2p_t-1)(1-\eta_t)=\cos(2\omega t)(1-e^{-t})$. Note that $\rho_t>0$ for all $t\geq 0$. The evolution of these states for an $\varepsilon >0 $ time interval is
	\begin{equation}
		\rho_{t+\varepsilon}=\mc{M}_\varepsilon(\rho_t)=\frac{1}{2}\begin{pmatrix} 1+W_\varepsilon+\eta_\varepsilon W_t&0\\0&1-W_\varepsilon-\eta_\varepsilon W_t\end{pmatrix}
	\end{equation}
	To check whether or not the given dynamics are non-Markovian, we apply the test in \eqref{eq-nmw_sm}. First, we evaluate
	\begin{equation}
		\mc{M}_\varepsilon^\dagger\circ\mc{M}_\varepsilon(\rho_t)=\frac{1}{2}\begin{pmatrix} a_t&0\\0&b_t\end{pmatrix},
	\end{equation}
	where
	\begin{align}
		a_t&\coloneqq p_t(1+W_\varepsilon+\eta_\varepsilon W_t)+(1-p_t)\eta_t(1+W_\varepsilon+\eta_\varepsilon W_t)+(1-p_t)(1-\eta_t)(1-W_\varepsilon+\eta_\varepsilon W_t)\\
		b_t&\coloneqq p_t\eta_t(1-W_\varepsilon+\eta_\varepsilon W_t)+p_t(1-\eta_t)(1+W_\varepsilon+\eta_\varepsilon W_t)+(1-p_t)(1-W_\varepsilon+\eta_\varepsilon W_t).
	\end{align}
	Then,
	\begin{equation}
		\lim_{\varepsilon\to 0^+}\frac{\d}{\d\varepsilon}\Tr\left\{\mc{M}_\varepsilon^\dagger\circ\mc{M}_\varepsilon(\rho_t)\right\}=W_t.
	\end{equation}
	We note that the deviation of $W_t$ from zero is a witness of non-unitality. For a unital process, for any initial state $\rho_0$ and for all time $t$, we have $\lim_{\varepsilon\to 0^+}\frac{\d}{\d\varepsilon}\Tr\left\{\Pi_t\mc{M}_\varepsilon^\dagger\circ\mc{M}_\varepsilon(\rho_t)\right\}=0$. For a non-unital process, there will exist some initial state such that for some time $t$, $\lim_{\varepsilon\to 0^+}\frac{\d}{\d\varepsilon}\Tr\left\{\Pi_t\mc{M}_\varepsilon^\dagger\circ\mc{M}_\varepsilon(\rho_t)\right\}\neq 0$.
	Next, we evaluate the entropy of the state $\rho_t$ to be
	\begin{equation}
		S(\rho_t)=-\frac{1}{2}\left[(1+W_t)\log\(\frac{1+W_t}{2}\)+(1-W_t)\log\(\frac{1-W_t}{2}\)\right].
	\end{equation}
	This implies that the rate of entropy change is:
	\begin{align}
		\frac{\d S(\rho_t)}{\d t}&=-\frac{1}{2}\frac{\d W_t}{\d t}\log\left[\frac{1+W_t}{2}\right]+\frac{1}{2}\frac{\d W_t}{\d t}\log\left[\frac{1-W_t}{2}\right]\\
		&=\frac{1}{2}\frac{\d W_t}{\d t}\log\left[\frac{1-W_t}{1+W_t}\right],
	\end{align}
	where
	\begin{equation}
		\frac{\d W_t}{\d t}=-2\omega\sin(2\omega t)(1-e^{-t})+\cos(2\omega t)e^{-t}.
	\end{equation}
	Therefore, the test in \eqref{eq-nmw_sm} reduces to
	\begin{equation}\label{eq:f-non-markov}
		f(t)= -\frac{1}{2}\frac{\d W_t}{\d t}\log\left[\frac{1+W_t}{2}\right]+\frac{1}{2}\frac{\d W_t}{\d t}\log\left[\frac{1-W_t}{2}\right]+W_t<0,
	\end{equation}
	where $f$ is defined in \eqref{eq-nm_ft}. For values of $\omega$ such that the dynamics are non-unital, we find that $f$ can be negative in several time intervals; for example, see Fig.~\ref{fig:nu-nm} for the case $\omega=5$.
	
	\begin{figure}
		\centering
		\includegraphics[scale=0.6]{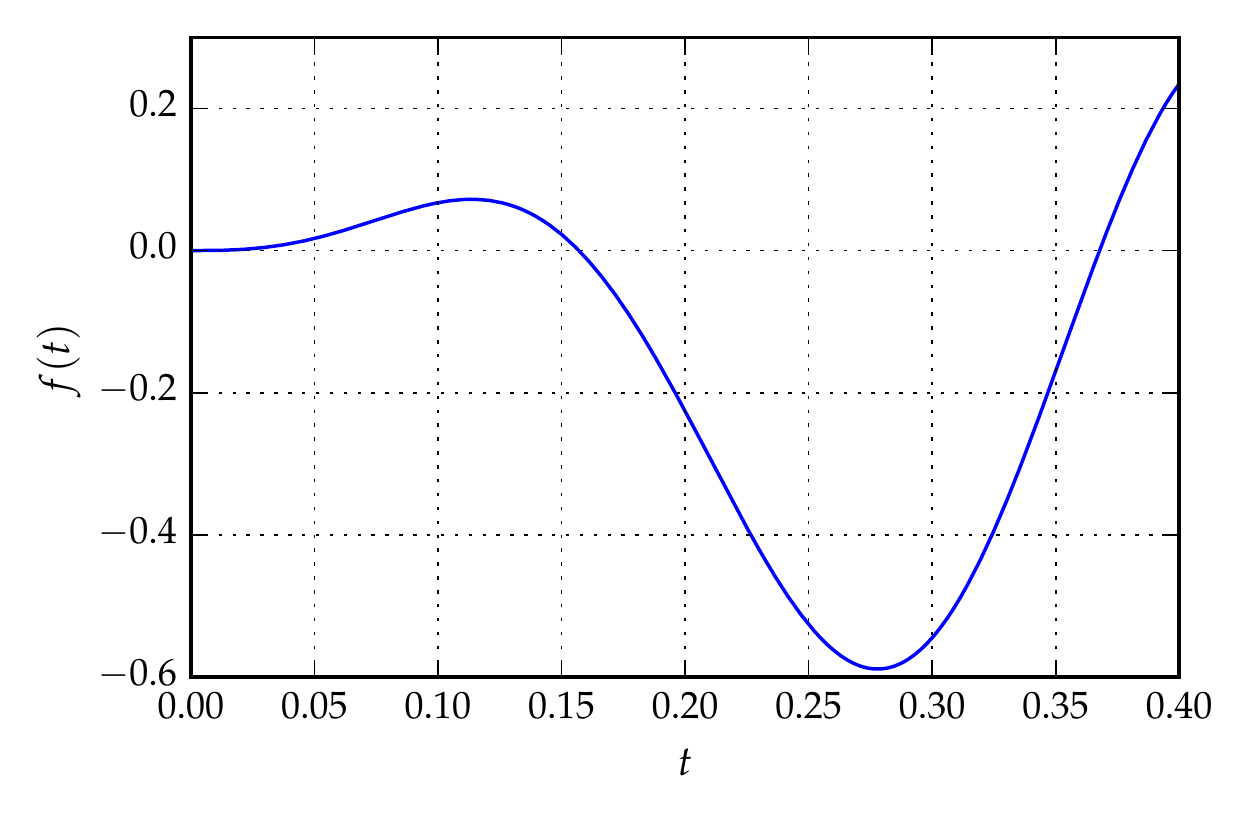}
		\caption{Negative values of $f$, as given in \eqref{eq:f-non-markov}, indicate non-Markovianity for $\omega=5$.}\label{fig:nu-nm}
	\end{figure}
	
\section{Bounds on entropy change}\label{sec-entropy_change}

	In this section, we derive bounds on how much the entropy of a system can change as a function of the initial state of the system and the evolution it undergoes.
		
	\begin{lemma}\label{thm:ent-lower-bound-2}
		Let $\mc{M}:\msc{B}_1^+(\mc{H})\to\msc{B}_1^+(\mc{H}^\prime)$ be a positive, trace-non-increasing map. Then, for all $\rho\in\msc{D}(\mc{H})$ such that $\mc{M}(\rho)>0$,
		\begin{equation}\label{eq-ent_lower_bound-2}
			\Delta S(\rho,\mc{M})\geq D\(\rho\left\Vert\mc{M}^\dag\circ\mc{M}\(\rho\)\)\right. .
		\end{equation}
	\end{lemma}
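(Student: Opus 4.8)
The plan is to repeat, almost line for line, the argument already used for Lemma~\ref{thm:ent-lower-bound}, the only genuinely new ingredient being the standing hypothesis $\mc{M}(\rho)>0$, which compensates for the fact that $\mc{M}$ is now merely trace-non-increasing (so that $\mc{M}(\rho)$ is a sub-normalized state and $\mc{M}^\dagger$ is only sub-unital, not unital). Concretely, I would start from Definition~\ref{def:ent-change} and \eqref{eq-entropy-eg} to write
\begin{equation}
\Delta S(\rho,\mc{M})=S(\mc{M}(\rho))-S(\rho)=\Tr\{\rho\log\rho\}-\Tr\{\mc{M}(\rho)\log\mc{M}(\rho)\},
\end{equation}
which is meaningful because $\rho$ is a state and, by hypothesis, $S(\rho)$ and $S(\mc{M}(\rho))$ are finite. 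The assumption $\mc{M}(\rho)>0$ is used already here: it guarantees that $\log\mc{M}(\rho)$ is a bona fide bounded self-adjoint operator on all of $\mc{H}'$, so that no support-restriction convention for $\log$ is needed in what follows.

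Next I would move the second trace through the adjoint by \eqref{eq-adjoint-eg}, writing $\Tr\{\mc{M}(\rho)\log\mc{M}(\rho)\}=\Tr\{\rho\,\mc{M}^\dagger(\log\mc{M}(\rho))\}$, and then apply Lemma~\ref{thm:log-con}. Since $\mc{M}$ is positive and trace-non-increasing, its adjoint $\mc{M}^\dagger$ is positive and sub-unital, so Lemma~\ref{thm:log-con} with $\sigma=\mc{M}(\rho)$ yields the operator inequality $\mc{M}^\dagger(\log\mc{M}(\rho))\leq\log(\mc{M}^\dagger\circ\mc{M}(\rho))$. Pairing this inequality against $\rho\geq 0$ and substituting back gives
\begin{align}
\Delta S(\rho,\mc{M})&=\Tr\{\rho\log\rho\}-\Tr\{\rho\,\mc{M}^\dagger(\log\mc{M}(\rho))\}\nonumber\\
&\geq\Tr\{\rho\log\rho\}-\Tr\{\rho\log(\mc{M}^\dagger\circ\mc{M}(\rho))\}=D(\rho\V\mc{M}^\dagger\circ\mc{M}(\rho)),
\end{align}
which is the asserted bound.

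The step I expect to require the most care is the last equality, i.e.\ recognizing $\Tr\{\rho[\log\rho-\log(\mc{M}^\dagger\circ\mc{M}(\rho))]\}$ as the relative entropy $D(\rho\V\mc{M}^\dagger\circ\mc{M}(\rho))$ of \eqref{eq:rel-ent-rev}; this identification is valid only when $\supp(\rho)\subseteq\supp(\mc{M}^\dagger\circ\mc{M}(\rho))$. In the trace-preserving situation of Lemma~\ref{thm:ent-lower-bound} this containment is automatic: unitality of $\mc{M}^\dagger$ forces $\Tr\{\rho\,\mc{M}^\dagger(\Pi(\mc{M}(\rho)))\}=\Tr\{\mc{M}(\rho)\}=1=\Tr\{\rho\}$ while $\mc{M}^\dagger(\Pi(\mc{M}(\rho)))\leq\mathbbm{1}$, which pins $\supp(\rho)$ inside the eigenvalue-one eigenspace of $\mc{M}^\dagger(\Pi(\mc{M}(\rho)))$. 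For a strictly trace-non-increasing $\mc{M}$ that normalization is lost, so I would either add the containment $\supp(\rho)\subseteq\supp(\mc{M}^\dagger\circ\mc{M}(\rho))$ to the hypotheses (it holds, for instance, whenever $\supp(\rho)\subseteq\supp(\mc{M}^\dagger(\mathbbm{1}_{\mc{H}'}))$, since then $\mc{M}^\dagger\circ\mc{M}(\rho)\geq\lambda_{\min}(\mc{M}(\rho))\,\mc{M}^\dagger(\mathbbm{1}_{\mc{H}'})$ with $\lambda_{\min}(\mc{M}(\rho))>0$), or else read $D$ through its defining trace formula; the remainder of the argument is unaffected. The only other point to keep in mind is that, in infinite dimensions, finiteness of $S(\mc{M}(\rho))$ is not implied by $\mc{M}(\rho)>0$ and must be assumed as in Definition~\ref{def:ent-change}.
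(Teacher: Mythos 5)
Your argument reproduces the paper's own proof step for step: expand $\Delta S$, push the trace through $\mc{M}^\dagger$, invoke Lemma~\ref{thm:log-con} for the positive sub-unital map $\mc{M}^\dagger$, and identify the result with $D(\rho\Vert\mc{M}^\dagger\circ\mc{M}(\rho))$. The concern you raise about the last equality is substantive, and the paper's own proof has exactly the same lacuna: it equates $\Tr\{\rho\log\rho\}-\Tr\{\rho\log[\mc{M}^\dagger\circ\mc{M}(\rho)]\}$ with $D(\rho\Vert\mc{M}^\dagger\circ\mc{M}(\rho))$ without verifying $\supp(\rho)\subseteq\supp(\mc{M}^\dagger\circ\mc{M}(\rho))$.

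That containment does not follow from $\mc{M}(\rho)>0$ alone, and without it the lemma as stated actually fails. Take $\mc{H}=\mathbb{C}^2$, $\mc{H}'=\mathbb{C}$, $\mc{M}(X)=\langle 0|X|0\rangle$, and $\rho=\tfrac12\mathbbm{1}_{\mc{H}}$. Then $\mc{M}$ is positive and trace-non-increasing, $\mc{M}(\rho)=\tfrac12>0$, yet $\mc{M}^\dagger\circ\mc{M}(\rho)=\tfrac12\op{0}$ is rank one while $\rho$ is full rank, so $D(\rho\Vert\mc{M}^\dagger\circ\mc{M}(\rho))=+\infty$ under \eqref{eq-rel_ent_alt}, whereas $\Delta S(\rho,\mc{M})=\tfrac12\log 2-\log 2<0$. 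The chain of trace identities does yield the genuine lower bound $\Delta S(\rho,\mc{M})\geq\Tr\{\rho[\log\rho-\log(\mc{M}^\dagger\circ\mc{M}(\rho))]\}=-\tfrac12\log 2$ (saturated here), but identifying that expression with $D$ needs exactly the extra hypothesis you propose (equivalently $\supp(\rho)\subseteq\supp(\mc{M}^\dagger(\mathbbm{1}_{\mc{H}'}))$, which the above example violates) or a reinterpretation of $D$ as the bare trace formula. Your explanation of why the containment is automatic in the trace-preserving Lemma~\ref{thm:ent-lower-bound}, via unitality of $\mc{M}^\dagger$, is also correct; the paper elides it there too, but there it is harmless.
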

	\begin{proof}
		Using the definition \eqref{eq-adjoint-eg} of the adjoint, one obtains
		\begin{align}
			\Delta S(\rho,\mc{M})=S(\mc{M}(\rho))-S(\rho)&=\Tr\{\rho\log\rho\}-\Tr\left\{\mc{M}(\rho)\log\mc{M}(\rho)\right\}\nonumber\\
			&=\Tr\{\rho\log\rho\}-\Tr\left\{\rho\mc{M}^\dag\(\log\mc{M}(\rho)\)\right\}\nonumber\\
			&\geq \Tr\{\rho\log\rho\}-\Tr\left\{\rho \log\[\mc{M}^\dag\circ\mc{M}(\rho)\]\right\}\nonumber\\
			&=D\(\rho\left\Vert\mc{M}^\dag\circ\mc{M}\(\rho\)\)\right. .
		\end{align}
		The inequality follows from Lemma~\ref{thm:log-con} applied to $\mathcal{M}^\dagger$, which is positive and sub-unital since $\mathcal{M}$ is positive and trace non-increasing. 
	\end{proof}
	
	\bigskip
	
	Note that for a quantum channel $\mc{M}$, $\Delta S(\rho,\mc{M})=0$ for all $\rho$ if and only if $\rho=\mc{M}^\dag\circ\mc{M}(\rho)$, which is true if and only if $\mc{M}$ is a unitary operation \cite[Theorem~2.1]{NS06},\cite[Theorem~3.4.1]{Riv11}. We use this fact to provide a measure of non-unitarity in Section \ref{sec-non_unitarity}.
	
	As an application of the lower bound in Lemma~\ref{thm:ent-lower-bound}, let us suppose that a quantum channel $\mc{E}_{A\to B}$ can be simulated as follows
	\begin{equation}\label{eq:env-par}
\forall~\rho_A\in\msc{D}(\mc{H}_A):\		\mc{E}_{A\to B}(\rho_A)=\mc{F}_{AC\to B}(\rho_A\otimes\theta_C),
	\end{equation}
	for a fixed (interaction) channel $\mc{F}_{AC\to B}$ and a fixed ancillary state $\theta_C$. By applying Lemma~\ref{thm:ent-lower-bound} to $\mc{F}$ and the state $\rho_A\otimes\theta_C$, we obtain
	\begin{align}
	\Delta S(\rho_A,\mc{E})&=S\(\mc{F}(\rho_A\otimes\theta_C)\)-S\(\rho_A\) \\
	&\geq S\(\rho_A\otimes\theta_C\)-S\(\rho_A\) +D\(\rho_A\otimes\theta_C\left\Vert \mc{F}^\dagger\circ\mc{F}(\rho_A\otimes\theta_C)\)\right. \\
	&=S\(\theta_C\)+D\(\rho_A\otimes\theta_C\left\Vert \mc{F}^\dagger\circ\mc{F}(\rho_A\otimes\theta_C)\)\right. .
	\end{align}
	Equality holds, i.e., $\Delta S(\rho,\mc{E})=S\(\theta_C\)$, if and only if the interaction channel $\mc{F}$ is a unitary interaction. If $\mc{F}$ is a sub-unital channel, then $\Delta S(\rho,\mc{E})\geq S\(\theta_C\)$ because the relative entropy term is non-negative. This result is of relevance in the context of quantum channels obeying certain symmetries (see Section~\ref{sec:symmetry}).

	\begin{lemma}\label{thm:ent-upper-bound}
		Let $\mc{M}:\msc{B}_+(\mc{H})\to\msc{B}_+(\mc{H}^\prime)$ be a sub-unital channel. Then, for all $\rho\in\msc{D}(\mc{H})$ such that $\rho>0$,
	\begin{equation}\label{eq:ent-upper-bound}
		\Delta S(\rho,\mc{M})\leq \Tr\left\{\[\rho-\mc{M}^\dag\circ\mc{M}(\rho)\]\log\rho\right\}.
	\end{equation}
	This also holds for any positive sub-unital map satisfying the above conditions. 
	\end{lemma}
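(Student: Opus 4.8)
The plan is to run the computation of $\Delta S(\rho,\mc{M})$ exactly as in Lemma~\ref{thm:ent-lower-bound-2}, but to reverse the direction of the inequality by applying Lemma~\ref{thm:log-con} to $\mc{M}$ itself rather than to $\mc{M}^\dagger$. First I would write, using the definition \eqref{eq-adjoint-eg} of the adjoint,
\begin{align}
\Delta S(\rho,\mc{M}) & = \Tr\{\rho\log\rho\}-\Tr\{\mc{M}(\rho)\log\mc{M}(\rho)\}\nonumber\\
& = \Tr\{\rho\log\rho\}-\Tr\{\rho\,\mc{M}^\dagger(\log\mc{M}(\rho))\}.\nonumber
\end{align}
The key step is to bound $\mc{M}^\dagger(\log\mc{M}(\rho))$ from below. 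Since $\mc{M}$ is a sub-unital channel (hence positive, trace-preserving, and $\mc{M}(\bm{1})\leq\bm{1}$), Lemma~\ref{thm:log-con} applies directly to $\mc{M}$, giving $\mc{M}(\log\rho)\leq\log\mc{M}(\rho)$ for $\rho>0$; taking adjoints does not immediately help, so instead I would use this in the form that, since $\mc{M}^\dagger$ is positive, applying $\mc{M}^\dagger$ to both sides of $\mc{M}(\log\rho)\leq\log(\mc{M}(\rho))$ preserves the order: $\mc{M}^\dagger\circ\mc{M}(\log\rho)\leq\mc{M}^\dagger(\log\mc{M}(\rho))$. Hence $-\Tr\{\rho\,\mc{M}^\dagger(\log\mc{M}(\rho))\}\leq -\Tr\{\rho\,\mc{M}^\dagger\circ\mc{M}(\log\rho)\} = -\Tr\{\mc{M}^\dagger\circ\mc{M}(\rho)\log\rho\}$, where the last equality is again \eqref{eq-adjoint-eg}. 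Combining,
\begin{equation}
\Delta S(\rho,\mc{M})\leq\Tr\{\rho\log\rho\}-\Tr\{\mc{M}^\dagger\circ\mc{M}(\rho)\log\rho\}=\Tr\{[\rho-\mc{M}^\dagger\circ\mc{M}(\rho)]\log\rho\},\nonumber
\end{equation}
which is \eqref{eq:ent-upper-bound}.

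The main technical point to be careful about is the applicability of Lemma~\ref{thm:log-con}: it requires $\mc{M}$ to be linear, positive, and sub-unital, and it is the operator-concavity/operator-Jensen statement $\mc{M}(\log\sigma)\leq\log(\mc{M}(\sigma))$ for $\sigma\in\mc{B}_+(\mc{H})$. The hypothesis $\rho>0$ is exactly what makes $\log\rho$ well defined as a bounded self-adjoint operator, so the trace expressions are finite and the chain of inequalities is legitimate. One must also confirm that $\mc{M}^\dagger$ is positive — this is automatic since the adjoint of a positive map is positive — which is what lets us apply $\mc{M}^\dagger$ monotonically to the operator inequality $\mc{M}(\log\rho)\leq\log\mc{M}(\rho)$.

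For the final sentence of the statement (the extension to any positive sub-unital map, not necessarily completely positive), I would simply note that nothing in the argument used complete positivity: Lemma~\ref{thm:log-con} is stated for linear positive sub-unital maps, and positivity of the adjoint holds for any positive map, so the identical proof goes through verbatim under the stated weaker hypotheses. I do not expect a serious obstacle here; the only thing to double-check is that "sub-unital channel" in the lemma statement and "positive sub-unital map satisfying the above conditions" are both covered by the single application of Lemma~\ref{thm:log-con}, which they are.
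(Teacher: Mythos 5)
Your proof is correct and uses the same central ingredient as the paper: Lemma~\ref{thm:log-con} applied to $\mc{M}$, combined with the adjoint identity. The paper's route is slightly shorter — it bounds $\Tr\{\mc{M}(\rho)\log\mc{M}(\rho)\}\geq\Tr\{\mc{M}(\rho)\,\mc{M}(\log\rho)\}$ directly by pairing the operator inequality $\log\mc{M}(\rho)\geq\mc{M}(\log\rho)$ against $\mc{M}(\rho)\geq 0$ and then passes $\mc{M}$ through the adjoint in a single step, whereas you push the adjoint through first and then need the additional (correct) observation that $\mc{M}^\dagger$ is positive, hence order-preserving, before pairing against $\rho\geq 0$; both chains are valid and rest on the same idea.
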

	\begin{proof}
		By applying Lemma \ref{thm:log-con} to $\mc{M}$, we get
		\begin{align}
			\Delta S(\rho,\mc{M})&=\Tr\{\rho\log\rho\}-\Tr\left\{\mc{M}(\rho)\log\mc{M}(\rho)\right\}\nonumber\\
		&\leq \Tr\{\rho\log\rho\}-\Tr\left\{\mc{M}\(\rho\) \mc{M}\(\log\rho\)\right\}\nonumber\\
			&=\Tr\left\{\[\rho-\mc{M}^\dag\circ\mc{M}(\rho)\]\log\rho\right\}.
		\end{align}
		This concludes the proof.
	\end{proof}

	\bigskip

	By applying H\"{o}lder's inequality (Lemma~\ref{thm:sch-norm}) to this upper bound, we obtain the following.
	\begin{corollary}\label{thm:sub-u-ent-change}
		Let $\mc{M}:\msc{B}_+(\mc{H})\to\msc{B}_+(\mc{H}^\prime)$ be a sub-unital channel. Then, for all $\rho\in\msc{D}(\mc{H})$ such that $\rho>0$,
		\begin{equation}
			\Delta S(\rho,\mc{M})\leq \left\Vert\rho-\mc{M}^\dag\circ\mc{M}(\rho)\right\Vert_1\left\Vert\log\rho\right\Vert_\infty .
		\end{equation}
	\end{corollary}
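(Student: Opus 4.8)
The plan is to derive Corollary~\ref{thm:sub-u-ent-change} directly from Lemma~\ref{thm:ent-upper-bound} by estimating the trace expression on the right-hand side of \eqref{eq:ent-upper-bound}. Starting from the bound
\[
\Delta S(\rho,\mc{M})\leq \Tr\left\{\[\rho-\mc{M}^\dag\circ\mc{M}(\rho)\]\log\rho\right\},
\]
which holds for a sub-unital channel $\mc{M}$ and $\rho>0$, I would set $A\coloneqq \rho-\mc{M}^\dag\circ\mc{M}(\rho)$ and $B\coloneqq\log\rho$. Both are self-adjoint bounded operators: $A$ is trace-class because $\rho$ is a state and $\mc{M}^\dag\circ\mc{M}$ maps it to another trace-class positive operator, and $B=\log\rho$ is bounded since $\rho>0$ has spectrum bounded away from $0$ in finite dimensions (and, in the infinite-dimensional case one should note the corollary is being applied where $\norm{\log\rho}_\infty$ is finite). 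Then $\Tr\{A\log\rho\}=\Tr\{A^\dag B\}=\<A,B\>$ in the Hilbert--Schmidt pairing notation of the excerpt.

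The key step is to apply H\"older's inequality, Lemma~\ref{thm:sch-norm}, with the conjugate exponents $p=1$ and $q=\infty$ (which satisfy $\frac1p+\frac1q=1$ in the convention $\frac1\infty=0$). This gives
\[
\left\vert\<A,B\>\right\vert \leq \norm{A}_1\,\norm{B}_\infty = \left\Vert\rho-\mc{M}^\dag\circ\mc{M}(\rho)\right\Vert_1\left\Vert\log\rho\right\Vert_\infty.
\]
Since $\Tr\{A\log\rho\}\leq \left\vert\Tr\{A\log\rho\}\right\vert$, combining this with the inequality from Lemma~\ref{thm:ent-upper-bound} yields
\[
\Delta S(\rho,\mc{M})\leq \left\Vert\rho-\mc{M}^\dag\circ\mc{M}(\rho)\right\Vert_1\left\Vert\log\rho\right\Vert_\infty,
\]
which is exactly the claimed bound. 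I would conclude the proof there.

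This argument is essentially a one-line application of a stated lemma followed by a stated norm inequality, so there is no serious obstacle. The only point requiring a modicum of care is the regularity/finiteness of the quantities involved: one should check that $\rho-\mc{M}^\dag\circ\mc{M}(\rho)$ indeed has finite trace norm and that $\log\rho$ has finite operator norm, so that the H\"older bound is not vacuously $\infty$. In the finite-dimensional setting both are automatic from $\rho>0$; in the infinite-dimensional setting the hypothesis $\rho>0$ together with the implicit assumption (inherited from Lemma~\ref{thm:ent-upper-bound} and Definition~\ref{def:ent-change}) that the relevant entropies and traces are well defined suffices. I do not expect to need any calculation beyond identifying the operators with the roles of $A$ and $B$ in H\"older's inequality.
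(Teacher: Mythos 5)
Your proposal is correct and matches the paper's own (unstated but clearly indicated) argument exactly: the paper derives this corollary by applying H\"older's inequality (Lemma~\ref{thm:sch-norm}) to the upper bound of Lemma~\ref{thm:ent-upper-bound}, which is precisely what you do. The care you note about finiteness of $\norm{\rho-\mc{M}^\dag\circ\mc{M}(\rho)}_1$ and $\norm{\log\rho}_\infty$ is appropriate and consistent with the paper's hypothesis $\rho>0$.
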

	
	\bigskip

	Now, assume $\mc{M}$ to be a sub-unital quantum sub-operation, then as a consequence of Lemma \ref{thm:ent-lower-bound} and Corollary \ref{thm:sub-u-ent-change}, we have, for all states $\rho>0$ such that $\mc{M}(\rho)>0$ and the entropies $S(\rho)$ and $S(\mc{M}(\rho))$ are finite,
	\begin{equation}\label{eq-ent_change_channel}
		D\(\rho\left\Vert\mc{M}^\dag\circ\mc{M}\(\rho\)\) \leq S(\mc{M}(\rho))-S(\rho)\leq \left\Vert\rho-\mc{M}^\dag\circ\mc{M}(\rho)\right\Vert_1\left\Vert\log\rho\right\Vert_\infty \right..
	\end{equation}
	It is interesting to note that \eqref{eq-ent_change_channel} implies
	\begin{equation}
		\norm{\rho-\mc{M}^\dag\circ\mc{M}(\rho)}_1 \geq \frac{1}{\norm{\log\rho}_\infty} D\(\rho\left\Vert\mc{M}^\dag\circ\mc{M}\(\rho\)\)\right.
	\end{equation}
	for a sub-unital quantum sub-operation $\mc{M}$ and a state $\rho>0$ such that $\mc{M}(\rho)>0$. This inequality has the reverse form of Pinsker's inequality \cite{HOT81}, which in this case is
	\begin{equation}\label{eq-pinsker}
		D\(\rho\left\Vert\mc{M}^\dag\circ\mc{M}\(\rho\)\)\right.\geq \frac{1}{2} \norm{\rho-\mc{M}^\dag\circ\mc{M}(\rho)}_1^2.
    \end{equation}
    In general, the relationship between relative entropy and different distance measures, including trace distance, has been studied in \cite{BR96,AE05,AE11}.
    
    \section{Measure of non-unitarity}\label{sec-non_unitarity}
	
	In this section, we introduce a measure of non-unitarity for any unital quantum channel that is inspired by the discussion at the end of Section \ref{sec-entropy_change}. A measure of unitarity for channels $\mc{M}:\msc{D}(\mc{H}_A)\to\msc{D}(\mc{H}_A)$, where $\mc{H}_A$ is finite-dimensional, was defined in \cite{WGH15}. A related measure for non-isometricity for sub-unital channels was introduced in \cite{BDW16}. A measure of non-unitarity for a unital channel is a quantity that gives the distinguishability between a given unital channel with respect to any unitary operation. It quantifies the deviation of a given unital evolution from a unitary evolution. These measures are relevant in the context of cryptographic applications \cite{PLSW04,Aub09} and randomized benchmarking \cite{WGH15}. 
	
	\begin{figure}
		\centering
		\includegraphics[scale=0.6]{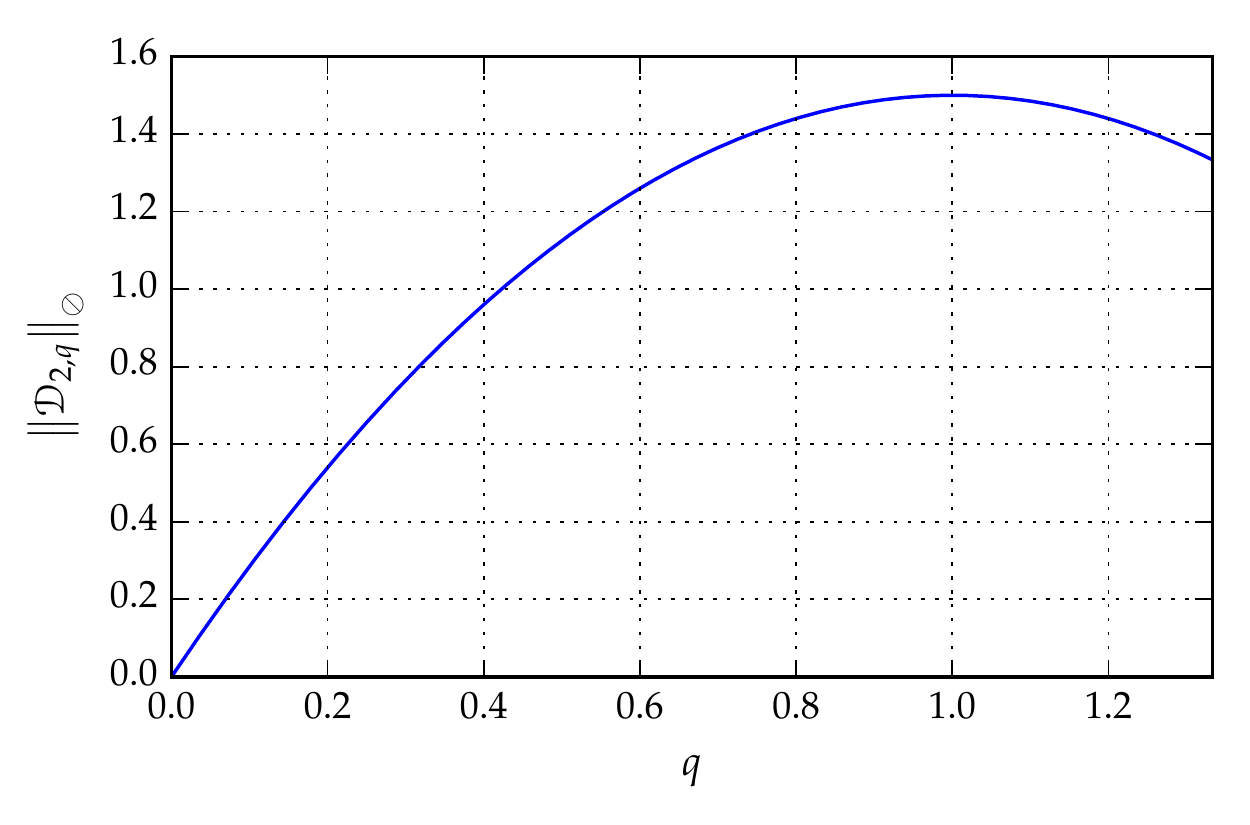}
		\caption{The measure $\norm{\mc{D}_{2,q}}_{\oslash}$ of non-unitarity for the qubit depolarizing channel $\mc{D}_{2,q}$ as a function of the parameter $q\in\left[0,\frac{4}{3}\right]$.}\label{fig:non_unitary}
	\end{figure}

	It is known that any unitary evolution is reversible. The adjoint of a unitary operator is also a unitary operator, and a unitary operator and its adjoint are the inverse of each other. These are the distinct properties of any unitary operation. Let $U_{A\to B}$ denote a unitary operator, where $\dim(\mc{H}_A)=\dim(\mc{H}_B)$. Then a necessary condition for the unitarity of $U_{A\to B}$ is that $\(U_{A\to B}\)^\dag U_{A\to B}=\mathbbm{1}_{A}$. The unitary evolution $\mc{U}_{A\to B}$ of a quantum state $\rho_A$ is given by
	\begin{equation}\label{eq-unitary_channel}
		\mc{U}_{A\to B}(\rho_A)= U_{A\to B}(\rho_A) \(U_{A\to B}\)^\dag.
	\end{equation}
	From the reversibility property of a unitary evolution, it holds that $\(\mc{U}_{A\to B}\)^\dag\circ\mc{U}_{A\to B}=\id_A$. It is clear that $\(\mc{U}_{A\to B}\)^\dag$ is also a unitary evolution, and $\(\mc{U}_{A\to B}\)^\dag$ and $\mc{U}_{A\to B}$ are the inverse of each other. 

	Contingent upon the above observation, it is to be noted that a measure of non-unitarity for a unital channel $\mc{M}_{A\to B}$ should quantify the deviation of $\(\mc{M}_{A\to B}\)^\dag\circ \mc{M}_{A\to B}$ from $\id_A$ and is desired to be a non-negative quantity. We make use of the trace distance, which gives a distinguishability measure between two positive semi-definite operators and appears in the upper bound\footnote{Notice that the lower bound on the entropy change can also be used to arrive at the measure in terms of trace distance by employing Pinsker's inequality \eqref{eq-pinsker}.} on entropy change for a unital channel (Section \ref{sec-entropy_change}), to define a measure of non-unitarity for a unital channel called the diamond norm of non-unitarity.    

	\begin{definition}[Diamond norm of non-unitarity]
		The diamond norm of non-unitarity of a unital channel $\mc{M}_{A\to B}$ is a measure that quantifies the deviation of a given unital evolution from a unitary evolution and is defined as
		\begin{equation}\label{eq-od}
			\Vert\mc{M}\Vert_{\oslash}=\norm{\id-\mc{M}^\dagger\circ\mc{M}}_{\diamond},
		\end{equation}
		where the diamond norm $\norm{\cdot}_{\diamond}$ \cite{Kit97} of a Hermiticity-preserving map $\mc{M}$ is defined as
		\begin{equation}\label{eq-diamond_norm}
			\norm{\mc{M}}_{\diamond}=\max_{\rho_{RA}\in\msc{D}(\mathcal{H}_{RA})}\norm{(\id\otimes \mc{M})(\rho_{RA})}_1.
		\end{equation}
		In other words,
		\begin{equation}\label{eq-ot}
			\norm{\mc{M}}_{\oslash}=\max_{\rho_{RA}\in\msc{D}(\mathcal{H}_{RA})}\norm{(\id\otimes(\id-\mc{M}^\dagger\circ\mc{M}))(\rho_{RA})}_1.
		\end{equation}
	\end{definition}

	The diamond norm of non-unitarity of any unital channel $\mc{M}$ has the following properties:
	\begin{enumerate}
		\item $\Vert\mc{M}\Vert_{\oslash}\geq 0$.\label{it-ol}
		\item $\Vert\mc{M}\Vert_{\oslash}=0$ if and only if $\mc{M}^\dag\circ\mc{M}=\id$, i.e., the unital channel $\mc{M}$ is unitary. \label{it-ou}
		\item In \eqref{eq-ot}, it suffices to take $\rho_{RA}$ to be rank one and to let $\dim(\mc{H}_{R})=\dim(\mc{H}_A)$. \label{it-or}
		\item $\norm{\mc{M}}_{\oslash}\leq 2$.\label{it-o2}
	\end{enumerate}
	
	Noticing that $\mc{M}^\dagger\circ\mc{M}:\msc{D}(\mc{H}_A)\to\msc{D}(\mc{H}_A)$ is a quantum channel, properties \ref{it-ol}, \ref{it-or}, and \ref{it-o2} are direct consequences of the properties of the diamond norm \cite{Wat15}. For property \ref{it-or}, the reference system $R$ has to be comparable with the channel input system $A$, following from the Schmidt decomposition. So $\mc{H}_R$ should be countably infinite if $\mc{H}_A$ is.  Property \ref{it-ou} follows from \cite[Theorem~2.1]{NS06},\cite[Theorem~3.4.1]{Riv11}.  

	The diamond norm has an operational interpretation in terms of channel discrimination \cite{Wbook17,Wat15} (see also \cite{Hel69,Hel76book} for state discrimination). Specifically, the optimal success probability $p_{\text{succ}}(\mc{M}_1,\mc{M}_2)$ of distinguishing between two channels $\mc{M}_1$ and $\mc{M}_2$ is 
	\begin{equation}\label{eq-channel_discrimination}
		p_{\text{succ}}(\mc{M}_1,\mc{M}_2)\coloneqq\frac{1}{2}\left(1+\frac{1}{2}\norm{\mc{M}_1-\mc{M}_2}_{\diamond}\right).
	\end{equation}
	It follows that the optimal success probability of distinguishing between the identity channel and $\mc{M}^\dagger\circ\mc{M}$ is
	\begin{align}
		p_{\text{succ}}(\id,\mc{M}^\dagger\circ\mc{M})&=\frac{1}{2}\left(1+\frac{1}{2}\norm{\id-\mc{M}^\dagger\circ\mc{M}}_{\diamond}\right)\\
		&=\frac{1}{2}\left(1+\frac{1}{2}\Vert\mc{M}\Vert_{\oslash}\right).
	\end{align}
	
	\begin{proposition}\label{prop-unitarity}
		Let $\mc{M}:\msc{D}(\mc{H})\to\msc{D}(\mc{H})$ be a unital channel. If there exists a unitary operator $U\in\msc{B}(\mc{H})$ such that
		\begin{equation}
			\norm{\mc{M}-\mc{U}}_{\diamond}\leq \delta,
		\end{equation}
		where $\mc{U}:\msc{D}(\mc{H})\to\msc{D}(\mc{H})$ is the unitary evolution \eqref{eq-unitary_channel} associated with $U$, then $\norm{\mc{M}}_{\oslash}\leq \sqrt{2\delta}+\delta$.
	\end{proposition}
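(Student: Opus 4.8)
The plan is to show that $\mathcal{M}^\dagger \circ \mathcal{M}$ is close to the identity channel in diamond norm, and then feed that into the definition $\|\mathcal{M}\|_\oslash = \|\id - \mathcal{M}^\dagger\circ\mathcal{M}\|_\diamond$. The natural route is a triangle-inequality decomposition: write
\begin{equation}
\id - \mathcal{M}^\dagger\circ\mathcal{M} = (\id - \mathcal{U}^\dagger\circ\mathcal{M}) + (\mathcal{U}^\dagger\circ\mathcal{M} - \mathcal{M}^\dagger\circ\mathcal{M}),
\end{equation}
using $\mathcal{U}^\dagger\circ\mathcal{U} = \id$ for the unitary channel $\mathcal{U}$. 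The second term factors as $(\mathcal{U}^\dagger - \mathcal{M}^\dagger)\circ\mathcal{M}$; since $\mathcal{M}$ is a channel, it is diamond-norm non-increasing under composition, so this term is bounded by $\|\mathcal{U}^\dagger - \mathcal{M}^\dagger\|_\diamond = \|\mathcal{U} - \mathcal{M}\|_\diamond \le \delta$ (the diamond norm is invariant under taking adjoints of Hermiticity-preserving maps, or one can argue directly via $\|\mathcal{N}^\dagger\|_\diamond = \|\mathcal{N}\|_\diamond$ for the relevant maps). So the burden is the first term, $\|\id - \mathcal{U}^\dagger\circ\mathcal{M}\|_\diamond = \|\mathcal{U}^\dagger\circ(\mathcal{U} - \mathcal{M})\|_\diamond \le \|\mathcal{U}-\mathcal{M}\|_\diamond \le \delta$ as well. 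That would give $\|\mathcal{M}\|_\oslash \le 2\delta$, which is actually stronger than the claimed $\sqrt{2\delta}+\delta$ for small $\delta$ but weaker for $\delta$ near $1$; so a purely linear argument is not quite matching the stated bound, which suggests the intended proof instead routes through an entropy/relative-entropy inequality (consistent with the sentence preceding the proposition) and a Pinsker-type step that produces the square root.

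So the better plan, matching the $\sqrt{2\delta}$ form, is: for an arbitrary input $\rho_{RA}$ (rank one, $R\simeq A$, by property 3 of the diamond norm of non-unitarity), let $\omega_{RA} = (\id\otimes\mathcal{M})(\rho_{RA})$. Because $\mathcal{M}$ is unital (hence $\mathcal{M}^\dagger$ is trace-preserving) and $\id\otimes\mathcal{M}$ is a channel, $\mathcal{M}^\dagger\circ\mathcal{M}$ is a channel, and I want to control $\|\rho_{RA} - (\id\otimes\mathcal{M}^\dagger\circ\mathcal{M})(\rho_{RA})\|_1$. The idea is to relate this to an entropy change: by Lemma~\ref{thm:ent-lower-bound} applied to the channel $\id\otimes\mathcal{M}$ and state $\rho_{RA}$,
\begin{equation}
\Delta S(\rho_{RA}, \id\otimes\mathcal{M}) \ge D\big(\rho_{RA}\,\big\|\,(\id\otimes\mathcal{M}^\dagger\circ\mathcal{M})(\rho_{RA})\big),
\end{equation}
and then Pinsker's inequality (in the reverse direction as used, or the standard direction Lemma-style $D(\cdot\|\cdot)\ge \frac{1}{2\ln 2}\|\cdot-\cdot\|_1^2$, noting the natural-log convention of this chapter) converts the relative entropy into $\tfrac12\|\rho_{RA} - (\id\otimes\mathcal{M}^\dagger\circ\mathcal{M})(\rho_{RA})\|_1^2$. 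The remaining task is to bound $\Delta S(\rho_{RA},\id\otimes\mathcal{M})$ by $\delta$ using the hypothesis $\|\mathcal{M}-\mathcal{U}\|_\diamond\le\delta$: since $\id\otimes\mathcal{U}$ preserves entropy, $\Delta S(\rho_{RA},\id\otimes\mathcal{M}) = S((\id\otimes\mathcal{M})(\rho_{RA})) - S((\id\otimes\mathcal{U})(\rho_{RA}))$, and the two output states are $\tfrac{\delta}{2}$-close in trace distance, so by continuity of entropy (AFW, Lemma~\ref{thm:AFW}) this difference is $O(\delta\log\dim)$ — which unfortunately carries a dimension factor and does not cleanly give $\delta$.

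The main obstacle, therefore, is getting a dimension-free bound $\Delta S \le \delta$ (or exactly the constant needed) rather than the $\delta\log d$ that naive continuity gives. I expect the resolution is to avoid entropy continuity entirely and instead bound $\|\rho_{RA}-(\id\otimes\mathcal{M}^\dagger\circ\mathcal{M})(\rho_{RA})\|_1$ directly via the triangle inequality as sketched first, obtaining $\le 2\delta$, and then observe that $\|\mathcal{M}\|_\oslash \le \min(2, 2\delta)$ while a sharper bookkeeping of the two triangle terms — keeping one of them as a fidelity/square-root estimate via Lemma~\ref{thm:f-t} (relating trace distance $\delta/2$ between the outputs of $\mathcal{M}$ and $\mathcal{U}$ to a fidelity $\ge 1-\delta/2$, then back to a trace-distance bound on $\id - \mathcal{M}^\dagger\circ\mathcal{M}$ with a $\sqrt{\cdot}$) — produces the stated $\sqrt{2\delta}+\delta$. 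Concretely, I would split $\|\id-\mathcal{M}^\dagger\circ\mathcal{M}\|_\diamond \le \|\id - \mathcal{M}^\dagger\circ\mathcal{U}\|_\diamond + \|\mathcal{M}^\dagger\circ\mathcal{U} - \mathcal{M}^\dagger\circ\mathcal{M}\|_\diamond$, bound the second by $\delta$ (composition with the channel $\mathcal{M}^\dagger$ followed by $\|\mathcal{U}-\mathcal{M}\|_\diamond\le\delta$), and bound the first — which equals $\|\mathcal{M}^\dagger\circ(\mathcal{M}-\mathcal{U})\|_\diamond$ after using $\id = \mathcal{U}^\dagger\circ\mathcal{U}$ and... — here the $\sqrt{2\delta}$ enters because $\mathcal{M}^\dagger$ is only trace-preserving (not a channel in the contractive-on-all-inputs sense needed), so one passes through a purification and Uhlmann's theorem (Lemma~\ref{thm:ut}) applied to the Stinespace dilation of $\mathcal{M}$, at which point the fidelity-to-trace-distance conversion $\tfrac12\|\cdot\|_1 \le \sqrt{1-F}$ with $F \ge 1-\delta/2$ yields $\sqrt{\delta/2}\cdot\text{(const)}$. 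I would finalize by assembling these two pieces into $\|\mathcal{M}\|_\oslash \le \sqrt{2\delta} + \delta$.
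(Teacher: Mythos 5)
Your final decomposition
\begin{equation*}
\norm{\id-\mc{M}^\dagger\circ\mc{M}}_\diamond \leq \norm{\id-\mc{M}^\dagger\circ\mc{U}}_\diamond + \norm{\mc{M}^\dagger\circ(\mc{U}-\mc{M})}_\diamond
\end{equation*}
is exactly the one the paper uses, and your bound of $\delta$ on the second term via sub-multiplicativity and $\norm{\mc{M}^\dagger}_\diamond=1$ is also the paper's argument; you also correctly anticipate that Fuchs--van de Graaf is what produces the $\sqrt{\cdot}$. But your treatment of the first term has genuine problems. First, $\id - \mc{M}^\dagger\circ\mc{U}$ does \emph{not} equal $\mc{M}^\dagger\circ(\mc{M}-\mc{U})$: inserting $\id = \mc{U}^\dagger\circ\mc{U}$ gives $(\mc{U}^\dagger - \mc{M}^\dagger)\circ\mc{U}$, which is a different map, and you cannot commute $\mc{M}^\dagger$ out to the left. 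Second, your stated reason for reaching for Uhlmann --- ``$\mc{M}^\dagger$ is only trace-preserving (not a channel)'' --- is backwards: since $\mc{M}$ is unital \emph{and} trace-preserving, $\mc{M}^\dagger$ is itself a CPTP channel, which is precisely what the sub-multiplicativity step for the second term relies on. Third, and most importantly, the mechanism by which the paper extracts information about $\mc{M}^\dagger\circ\mc{U}$ from the hypothesis (which only controls $\mc{M}$ and $\mc{U}$, not their adjoints) is not Uhlmann's theorem at all. The paper's route is: (i) by left unitary invariance, $\norm{\mc{U}-\mc{M}}_\diamond\leq\delta$ gives $\norm{\id-\mc{U}^\dagger\circ\mc{M}}_\diamond\leq\delta$; (ii) for any pure $\ket{\psi}_{RA}$, this implies $\bra{\psi}(\id_R\otimes(\mc{U}^\dagger\circ\mc{M})_A)(\op{\psi})\ket{\psi}\geq 1-\delta/2$ (e.g.\ via the channel-discrimination bound, or directly by H\"older and tracelessness of $\psi-\omega$); (iii) the crucial pivot: by the Hilbert--Schmidt definition of the adjoint, $\bra{\psi}(\id\otimes\Phi)(\op{\psi})\ket{\psi}=\bra{\psi}(\id\otimes\Phi^\dagger)(\op{\psi})\ket{\psi}$ for \emph{any} map $\Phi$, so the same lower bound holds with $\mc{M}^\dagger\circ\mc{U}$ in place of $\mc{U}^\dagger\circ\mc{M}$; (iv) now Fuchs--van de Graaf gives $\norm{\id-\mc{M}^\dagger\circ\mc{U}}_\diamond\leq 2\sqrt{\delta/2}=\sqrt{2\delta}$. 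Step (iii) --- trading $\mc{U}^\dagger\circ\mc{M}$ for its adjoint at the level of the scalar expectation value, where the adjoint moves for free --- is the missing idea in your proposal, and it is what sidesteps the (false in general) identity $\norm{\Phi^\dagger}_\diamond=\norm{\Phi}_\diamond$ that your first, abandoned $2\delta$ argument implicitly relied on.
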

	\begin{proof}
		The following relations hold
		\begin{align}
			\norm{\id-\mc{M}^\dagger\circ\mc{M}}_{\diamond}&=\norm{\id-\mc{M}^\dagger\circ\mc{U}+\mc{M}^\dagger\circ\mc{U}-\mc{M}^\dagger\circ\mc{M}}_{\diamond}\\
			&\leq\norm{\id-\mc{M}^\dagger\circ\mc{U}}_{\diamond}+\norm{\mc{M}^\dagger\circ(\mc{U}-\mc{M})}_{\diamond}\\
			&\leq\norm{\id-\mc{M}^\dagger\circ\mc{U}}_{\diamond}+\delta.\label{eq-oslash_bound}
		\end{align}
		To obtain these inequalities, we have used the following properties of the diamond norm \cite{Wat15}:
		\begin{enumerate}
			\item Triangle inequality: $\norm{\mc{M}_1+\mc{M}_2}_{\diamond}\leq\norm{\mc{M}_1}_{\diamond}+\norm{\mc{M}_2}_{\diamond}$.
			\item Sub-multiplicativity: $\norm{\mc{M}_1\circ\mc{M}_2}_\diamond\leq\norm{\mc{M}_1}_{\diamond}\norm{\mc{M}_2}_{\diamond}$.
			\item For all channels $\mc{M}$, $\norm{\mc{M}}_{\diamond}=1$.
		\end{enumerate}
		In particular, to use the third fact, we notice that $\mc{M}^\dagger$ is a channel since $\mc{M}$ is unital. We have also made use of an assumption that $\norm{\mc{U}-\mc{M}}_{\diamond}\leq\delta$.
		
		Now, from the assumption $\norm{\mc{U}-\mc{M}}_{\diamond}\leq\delta$, it follows by unitary invariance of the diamond norm that
		\begin{equation}
			\norm{\id-\mc{U}^\dagger\circ\mc{M}}_{\diamond}\leq\delta.
		\end{equation}
		By the operational interpretation of the diamond distance, this means that the success probability of distinguishing the channel $\mc{U}^\dagger\circ\mc{M}$ from the identity channel, using any scheme whatsoever, cannot exceed $p_{\text{succ}}(\id,\mc{U}^\dagger\circ\mc{M})$ as defined in \eqref{eq-channel_discrimination}. In other words, the success probability cannot exceed $\frac{1}{2}\left(1+\frac{1}{2}\delta\right)$. One such scheme is to send in a bipartite state $\ket{\psi}_{RA}$ on a reference system $R$ and the system $A$ on which the channel acts and perform the measurement defined by the positive operator-valued measure (POVM) $\{\op{\psi}_{RA},\mathbbm{1}_{RA}-\op{\psi}_{RA}\}$. If the outcome of the measurement is $\op{\psi}_{RA}$, then one guesses that the channel is the identity channel, and if the outcome of the measurement is $\mathbbm{1}_{RA}-\op{\psi}_{RA}$ then one guesses that the channel is $\mc{U}^\dagger\circ\mc{M}$. The success probability of this scheme is
		\begin{align}
			& \frac{1}{2}\left[  \Tr\{\op{\psi}_{RA}
\id_{RA}(\op{\psi}_{RA})\}+\Tr\{\left(\mathbbm{1}_{RA}-\op{\psi}_{RA}\right)\left[\id_{R}\otimes(\mathcal{U}^{\dag}\circ\mathcal{M})_{A}\right](\op{\psi}_{RA})\}\right]  \nonumber\\
		&	\qquad =\frac{1}{2}\left[  2-\bra{\psi}_{RA}\left[\id_{R}\otimes(\mathcal{U}^{\dag}\circ\mathcal{M})_{A}\right](\op{\psi}_{RA})\ket{\psi}_{RA}\right]  .
		\end{align}
		By employing the above, we get
		\begin{align}
			\frac{1}{2}\left[2-\bra{\psi}_{RA}\left[\id_R\otimes(\mc{U}^\dagger\circ\mc{M})_A\right](\op{\psi}_{RA})\ket{\psi}_{RA}\right]&\leq\frac{1}{2}\left(1+\frac{1}{2}\delta\right)\\
			\Leftrightarrow \bra{\psi}_{RA}\left[\id_R\otimes(\mc{U}^\dagger\circ\mc{M})_A\right](\op{\psi}_{RA})\ket{\psi}_{RA}&\geq 1-\frac{1}{2}\delta.
		\end{align}
		By employing the definition of the channel adjoint, we find that
		\begin{align}
			&\bra{\psi}_{RA}\left[\id_R\otimes(\mc{U}^\dagger\circ\mc{M})_A\right](\op{\psi}_{RA})\ket{\psi}_{RA} \nonumber \\
			&\qquad =\bra{\psi}_{RA}\left[\id_R\otimes (\mc{M}^\dagger\circ\mc{U})_A\right](\op{\psi}_{RA})\ket{\psi}_{RA}\geq 1-\frac{1}{2}\delta.
		\end{align}
		This holds for all input states, so we can conclude that the following inequality holds:
		\begin{equation}\label{eq-ent_fid}
			\min_{\psi_{RA}}\bra{\psi}_{RA}\left[\id_R\otimes(\mc{M}^\dagger\circ\mc{U})_A\right](\op{\psi}_{RA})\ket{\psi}_{RA}\geq 1-\frac{1}{2}\delta.
		\end{equation}
		Now, by the definition \eqref{eq-diamond_norm} of the diamond norm, and the fact that it suffices to take the maximization in the definition of the diamond norm over only pure states, we have
		\begin{equation}
			\norm{\id-\mc{M}^\dagger\circ\mc{U}}_{\diamond}=\max_{\psi_{RA}}\norm{\left[\id_R\otimes(\id-\mc{M}^\dagger\circ\mc{U})_A\right](\ket{\psi}\bra{\psi}_{RA})}_1.
		\end{equation}
		By the Fuchs-van de Graaf inequality \cite{FV99}, we obtain
		\begin{align}
			&\norm{\left[\id_R\otimes(\id-\mc{M}^\dagger\circ\mc{U})_A\right](\ket{\psi}\bra{\psi}_{RA})}_1\\
			&\qquad =\norm{\op{\psi}_{RA}-\left[\id_R\otimes(\mc{M}^\dagger\circ\mc{U})_A\right](\op{\psi}_{RA})}_1\\
			&\qquad \leq 2\sqrt{1-\bra{\psi}_{RA}\left[\id_R\otimes (\mc{M}^\dagger\circ\mc{U})_A\right](\op{\psi}_{RA})\ket{\psi}_{RA}}.
		\end{align}
		It follows that
		\begin{equation}
			\norm{\id-\mc{M}^\dagger\circ\mc{U}}_{\diamond}\leq 2\sqrt{1-\min_{\psi_{RA}}\bra{\psi}_{RA}\left[\id_R\otimes(\mc{M}^\dagger\circ\mc{U})_A\right](\op{\psi}_{RA})\ket{\psi}_{RA}}.
		\end{equation}
		Using \eqref{eq-ent_fid}, we therefore obtain
		\begin{equation}
			\norm{\id-\mc{M}^\dagger\circ\mc{U}}_{\diamond}\leq 2\sqrt{\frac{1}{2}\delta}=\sqrt{2\delta}.
		\end{equation}
		Finally, from \eqref{eq-oslash_bound} we arrive at
		\begin{equation}
			\norm{\id-\mc{M}^\dagger\circ\mc{M}}_{\diamond}\leq \sqrt{2\delta}+\delta,
		\end{equation}
		as required.
	\end{proof}
	
	\bigskip

	We can also make qualitative argument for the converse statement to suggest that if the diamond norm of non-unitarity $\norm{M}_{\oslash}$ of a unital channel $\mc{M}$ is less than $\delta$, then $\mc{M}$ is close to some unitary evolution (channel) for small $\delta$ \footnote{A concrete proof of the converse statement of Proposition~\ref{prop-unitarity} has been derived in an unpublished work with Sumeet Khatri, Mark M.~Wilde, and Elton Y.~Zhu.}.  Consider that $\norm{\id-\mathcal{M}^\dagger\circ\mathcal{M}}_{\diamond}\leq\delta$. Then, using tools of channel discrimination in the same way as in the proof of Proposition~\ref{prop-unitarity}, we obtain
\begin{equation}
	\forall~\ket{\psi}_{RA}:~\Tr\left[\op{\psi}_{RA}(\id_R\otimes\mathcal{M}^\dagger\circ\mathcal{M})(\op{\psi}_{RA})\right]\geq 1-\frac{1}{2}\delta,
\end{equation}
which implies that
\begin{equation}\label{eq:non-u-ineq}
	\Tr\left[(\id_R\otimes\mathcal{M})(\op{\psi}_{RA})^2\right]\geq 1-\frac{1}{2}\delta.
\end{equation}
We know that $\Tr\{\rho^2\}=1$ for a density operator $\rho$ if and only if it is a pure state, and any deviation of $\Tr\{\rho^2\}$ from unit shows how mixed the state is. Hence, the above inequality~\eqref{eq:non-u-ineq} implies that less noise is introduced in the system $A$ if the unital channel is close to some unitary process.   
\bigskip

	We now quantify the non-unitarity of the qudit depolarizing channel $\mc{D}_{d,q}$ defined as \cite{BSST99}
	\begin{equation}
		\mc{D}_{d,q}(\rho)= (1-q)\rho+q\frac{1}{d}\mathbbm{1} \quad\forall \rho\in\msc{D}(\mc{H}_A),
	\end{equation}
	where $\dim(\mc{H}_A)=d$ and $q\in\left[0,\frac{d^2}{d^2-1}\right]$. The input state $\rho$ remains invariant with probability $1-\(1-\frac{1}{d^2}\)q$ under the action of $\mc{D}_{d,q}$.  
	
	\begin{proposition}
		For the depolarizing channel $\mc{D}_{d,q}$, the diamond norm of non-unitarity is 
		\begin{equation}
			\norm{\mc{D}_{d,q}}_{\oslash}=2q(2-q)\left(1-\frac{1}{d^2}\right).
		\end{equation}
	\end{proposition}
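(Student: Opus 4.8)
The plan is to exploit the fact that $\mc{D}_{d,q}^\dagger\circ\mc{D}_{d,q}$ is again a depolarizing channel, so that the computation of $\norm{\mc{D}_{d,q}}_{\oslash}$ reduces to the computation of a single diamond norm, namely that of the difference between the identity channel and the completely depolarizing channel $\mc{C}(\rho)\coloneqq\Tr\{\rho\}\tfrac{1}{d}\mathbbm{1}$. First I would record two elementary algebraic facts. Writing $\mc{D}_{d,q}=(1-q)\id+q\mc{C}$, a direct computation with the Hilbert--Schmidt inner product shows that $\mc{D}_{d,q}$ is self-adjoint, i.e.\ $\mc{D}_{d,q}^\dagger=\mc{D}_{d,q}$. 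Then, using $\mc{C}\circ\mc{C}=\mc{C}$ and $\mc{C}\circ\id=\id\circ\mc{C}=\mc{C}$, one obtains
\begin{equation}
\mc{D}_{d,q}^\dagger\circ\mc{D}_{d,q}=\mc{D}_{d,q}\circ\mc{D}_{d,q}=(1-q)^2\,\id+q(2-q)\,\mc{C},
\end{equation}
which is the depolarizing channel with parameter $q(2-q)$, and therefore $\id-\mc{D}_{d,q}^\dagger\circ\mc{D}_{d,q}=q(2-q)\,(\id-\mc{C})$. Since $q\in[0,\tfrac{d^2}{d^2-1}]\subseteq[0,2]$, the scalar $q(2-q)$ is non-negative, so by positive homogeneity of the diamond norm,
\begin{equation}
\norm{\mc{D}_{d,q}}_{\oslash}=\norm{\id-\mc{D}_{d,q}^\dagger\circ\mc{D}_{d,q}}_{\diamond}=q(2-q)\,\norm{\id-\mc{C}}_{\diamond}.
\end{equation}

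Next I would establish $\norm{\id-\mc{C}}_{\diamond}=2\big(1-\tfrac{1}{d^2}\big)$. For the lower bound I take the maximally entangled state $\Phi_{RA}$ with $R\simeq A$ as the input in the definition \eqref{eq-diamond_norm}: since $(\id_R\otimes\mc{C}_A)(\Phi_{RA})=\pi_R\otimes\pi_A=\tfrac{1}{d^2}\mathbbm{1}_{RA}$, the Hermitian operator $\Phi_{RA}-\tfrac{1}{d^2}\mathbbm{1}_{RA}$ has one eigenvalue equal to $1-\tfrac{1}{d^2}$ and $d^2-1$ eigenvalues equal to $-\tfrac{1}{d^2}$, so its trace norm is $2\big(1-\tfrac{1}{d^2}\big)$, giving $\norm{\id-\mc{C}}_{\diamond}\geq 2\big(1-\tfrac{1}{d^2}\big)$. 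For the matching upper bound I would write $\mc{C}$ as the uniform mixture $\mc{C}=\tfrac{1}{d^2}\sum_{k=0}^{d^2-1}\mc{W}_k$ of the conjugations $\mc{W}_k(\cdot)=W_k(\cdot)W_k^\dagger$ by the $d^2$ Heisenberg--Weyl (generalized Pauli) unitaries, with $W_0=\mathbbm{1}$. Separating the $k=0$ term yields $\id-\mc{C}=\big(1-\tfrac{1}{d^2}\big)\id-\tfrac{1}{d^2}\sum_{k=1}^{d^2-1}\mc{W}_k$, and then the triangle inequality together with unitary invariance of the trace norm (so that $\norm{(\id_R\otimes\mc{W}_k)(\rho_{RA})}_1=\norm{\rho_{RA}}_1=1$ for every density operator $\rho_{RA}$) gives
\begin{equation}
\norm{\id-\mc{C}}_{\diamond}\leq\Big(1-\tfrac{1}{d^2}\Big)+\frac{d^2-1}{d^2}=2\Big(1-\tfrac{1}{d^2}\Big).
\end{equation}
Alternatively, one may observe that $\id$ and $\mc{C}$ are both covariant channels in the sense of Definition~\ref{def:covariant} with an irreducibly acting unitary one-design, and invoke the standard fact that the diamond-norm distance of such channels is attained on the maximally entangled input.

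Combining the three displays, $\norm{\mc{D}_{d,q}}_{\oslash}=q(2-q)\cdot 2\big(1-\tfrac{1}{d^2}\big)=2q(2-q)\big(1-\tfrac{1}{d^2}\big)$, as claimed. The only step requiring genuine care is the upper bound $\norm{\id-\mc{C}}_{\diamond}\leq 2\big(1-\tfrac{1}{d^2}\big)$: applying the triangle inequality to the coarse splitting $\mc{C}=\tfrac{1}{d^2}\id+(1-\tfrac{1}{d^2})\mc{C}$ is too lossy, and one genuinely needs the finer Heisenberg--Weyl decomposition (or the covariance argument) to obtain a tight bound. Everything else is routine manipulation of the depolarizing channel and its adjoint.
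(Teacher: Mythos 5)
Your proof is correct, and it reaches the key identity $\norm{\mc{D}_{d,q}}_{\oslash}=q(2-q)\norm{\id-\mc{C}}_{\diamond}$ by the same opening algebraic manipulation as the paper (self-adjointness of $\mc{D}_{d,q}$ and the semigroup law $\mc{D}_{d,q}^2=\mc{D}_{d,q(2-q)}$). Where you diverge is in how $\norm{\id-\mc{C}}_{\diamond}=2(1-1/d^2)$ is established. The paper argues the maximally entangled state is the optimizer in the diamond norm by observing that $\id$ and $\mc{D}_{d,q}$ are jointly teleportation-simulable (being jointly covariant with respect to the Heisenberg--Weyl group), so that the diamond distance is upper bounded by the trace distance of the Choi states, and then evaluates $\norm{\Phi_{A'A}-\pi\otimes\pi}_1$ by expanding $\pi\otimes\pi=\tfrac{1}{d^2}\sum_x\sigma^x_A\Phi_{A'A}\sigma^x_A$ in the Bell basis. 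You instead give a matching sandwich: a lower bound by feeding in $\Phi_{RA}$ and simply counting the eigenvalues of $\Phi_{RA}-\tfrac{1}{d^2}\mathbbm{1}$, and an upper bound via the triangle inequality applied to the Weyl-twirl decomposition $\mc{C}=\tfrac{1}{d^2}\sum_k\mc{W}_k$ after peeling off the $k=0$ term. Both routes invoke the Heisenberg--Weyl group, but for different steps: you use it only for the upper bound, and you dispense entirely with the teleportation-simulation machinery. Your version is more elementary and self-contained; the paper's version is deliberately framed to illustrate a technique (simulation of covariant channels via Choi states) that extends to more general channel pairs, which is why the paper flags it as an ``alternative proof argument that holds for more general classes of channels.'' Your closing remark about the covariance alternative essentially acknowledges the paper's route, and your caution that the coarse splitting $\mc{C}=\tfrac{1}{d^2}\id+(1-\tfrac{1}{d^2})\mc{C}$ is too lossy is apt.
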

	
	\begin{proof}
		The result follows directly from \cite[Section~V.A]{MGE12}, but here we provide an alternative proof argument that holds for more general classes of channels.
		
		The depolarizing channel is self-adjoint, that is, $\mc{D}_{d,q}^{\dagger}=\mc{D}_{d,q}$ for all $q$, which means that $\mc{D}_{d,q}^\dagger\circ\mc{D}_{d,q}=\mc{D}_{d,q}^2=\mc{D}_{d,2q-q^2}$. 
	Therefore,
	\begin{equation}\label{eq:norm-d-q}
		\norm{\mc{D}_{d,q}}_{\oslash}=\norm{\id-\mc{D}_{d,q}^2}_{\diamond}=\left|2q-q^2\right|\max_{\psi_{A'A}}\norm{\psi_{A'A}-\psi_{A'}\otimes\frac{\mathbbm{1}}{d}}_1,
	\end{equation}
	where $\psi_{A'A}=\op{\psi}_{A'A}$ is a pure state and $\dim(\mc{H}_{A'})=\dim(\mc{H}_A)=d$. 

	The identity channel and the depolarizing channel are jointly teleportation-simulable \cite[Definition~6]{DW17} with respect to resource states, which in this case are the respective Choi states (because these channels are also jointly covariant (Definition~\ref{def:cov-cell}, also see \cite[Definitions~7 \& 12]{DW17}). It is known that the trace distance is monotonically non-increasing under the action of a quantum channel. Therefore, we can conclude from the form \cite[Eq.~(3.2)]{DW17} of the action of jointly teleportation-simulable channels that the diamond norm between any two jointly teleportation-simulable channels is upper bounded by the trace distance between the associated resource states. 
	
	Since $\dim(\mc{H}_A)$ is finite, the maximally entangled state $\ket{\Phi}_{A'A}\coloneqq\frac{1}{\sqrt{d}}\sum_{i=1}^d\ket{i}\ket{i}$, where $\{\ket{i}\}_{i=1}^d\in\ONB(\mc{H}_A)$, is an optimal state in \eqref{eq:norm-d-q}. It is known that
	\begin{equation}
		\frac{\mathbbm{1}}{d}\otimes\frac{\mathbbm{1}}{d}=\frac{1}{d^2}\sum_{x=0}^{d^2-1}\sigma_A^x\Phi_{A'A}\sigma_A^x,
	\end{equation}
	where $\{\sigma_A^x\Phi_{A'A}\sigma_A^x\}_{x=0}^{d^2-1}\in\ONB(\mc{H}_{A'A})$ and $\{\sigma^x\}_{x=0}^{d^2-1}$ forms the Heisenberg-Weyl group (see Appendix~\ref{app:qudit}). We denote the identity element in $\{\sigma^x\}_{x=0}^{d^2-1}$ by $\sigma^0$. Using this, we get
	\begin{align}\label{eq-Dq}
		\norm{\mc{D}_{d,q}}_{\oslash}&=(2q-q^2)\norm{\Phi_{A'A}-\frac{\mathbbm{1}}{d}\otimes\frac{\mathbbm{1}}{d}}_1\\
		&=(2q-q^2)\norm{\(1-\frac{1}{d^2}\)\Phi_{A'A}-\frac{1}{d^2}\sum_{x=1}^{d^2-1}\sigma_A^x\Phi_{A'A}\sigma_A^x}_1\\
		&=(2q-q^2)\left[\(1-\frac{1}{d^2}\)+\frac{d^2-1}{d^2}\right]\\
		&=2q(2-q)\(1-\frac{1}{d^2}\).
	\end{align}
	Hence, we can conclude that $\norm{\mc{D}_{d,q}}_{\oslash}=2q(2-q)\(1-\frac{1}{d^2}\)$. See Fig.~\ref{fig:non_unitary} for a plot of $\norm{\mc{D}_{2,q}}_{\oslash}$ as a function of $q$.
	\end{proof}

	\section{Conclusion}\label{sec:conclusion-eg}
	In this chapter, we discussed the rate of entropy change of a system undergoing time evolution for arbitrary states and proved that the formula derived in \cite{Spo78} holds for both finite- and infinite-dimensional systems undergoing arbitrary dynamics with states of arbitrary rank. We derived a lower limit on the rate of entropy change for anarbitrary quantum Markov process. We discussed the implications of this lower limit in the context of bosonic Gaussian dynamics. From this lower limit, we also obtained several witnesses of non-Markovianity, which we used in two common examples of non-Markovian dynamics. Interestingly, our witness turned out to be useful in detecting non-Markovianity for given non-unital process, which could not be detected using BLP measure. We generalized the class of operations for which the entropy exhibits monotonic behavior. We also defined a measure of non-unitarity based on bounds on the entropy change, discussed its properties, and evaluated it for the depolarizing channel.

\clearpage

\chapter{Reading of Memory Devices: General Protocol and Bounds}\label{ch:read}
One of the primary goals of quantum information theory is to identify limitations on information processing when constrained by the laws of quantum mechanics\blfootnote{This chapter is entirely based on \cite{DW17}, a joint work with Mark M.~Wilde.}. In general, quantum  information theory uses tools that are universally applicable to the processing of arbitrary quantum systems, which include quantum optical systems, superconducting systems, trapped ions, etc.~\cite{NC00}. The abstract approach to quantum information allows us to explore how to use the principles of quantum mechanics  for communication or computation tasks, some of which would not be possible without quantum mechanics. 

In \cite{BRV00}, a communication protocol was introduced in which a classical message is encoded in a set of unitary operations, and later on, one can read out the information stored in the unitary operations by calling them. Over a decade after \cite{BRV00} was published, this communication model was generalized and studied under the name \textquotedblleft quantum reading\textquotedblright\ in \cite{Pir11}, and it was  applied to the setting of an optical read-only memory device. An optical read-only memory device is one of the prototypical examples of quantum reading, and for this reason, quantum reading had been mainly considered in the context of optical realizations like CD-ROMs and DVDs \cite{Pir11,PLG+11,GDN+11,GW12,WGTL12}. In this case, classical bits are encoded in the reflectivity and phase of memory cells, which can be modeled as a collection of pure-loss bosonic channels. More generally and abstractly, a memory cell is a collection of quantum channels, from which an encoder can select to form codewords for the encoding of a classical message. Each quantum channel in a codeword, representing one part of the stored information, is read only once. In subsequent works \cite{PLG+11,LP16}, the model was extended to a memory cell consisting of arbitrary quantum channels. In a quantum reading strategy, one exploits entangled states and collective measurements to help read out a classical message  stored in a read-only memory device. In many cases, one can achieve performance better than what can be achieved when using a classical strategy \cite{Pir11}. 

Some early developments in quantum reading \cite{Pir11} were based on a direct application of developments in quantum channel discrimination \cite{Kit97,DPP01,Aci01,WY06,TEGGLMPS08,DFY09,CMW14,HHLW10,DGLL16}. 
However, the past few years have seen some progress in quantum reading: there have been developments in defining protocols for quantum reading (including limited definitions of reading capacity and zero-error reading capacity), giving upper bounds on the rates for classical information readout, achievable rates for memory cells consisting of a particular class of bosonic channels, and details of a quantum measurement that can achieve non-trivial rates for memory cells consisting of a certain class of bosonic channels \cite{Pir11,PLG+11,GDN+11,GW12,WGTL12,LP16}. The information-theoretic study of quantum reading is based on considerations coming from quantum Shannon theory, and the most abstract and general way to define the encoding of a classical message in a quantum reading protocol is as mentioned above, a sequence of quantum channels chosen from a given memory cell.

Hitherto, all prior works on quantum reading considered decoding protocols of the following form: A reader possessing a transmitter system entangled with an idler system sends the transmitter system through the coded sequence of quantum channels. Finally, the reader decodes the message by performing a collective measurement on the joint state of the output system and the idler system.  

However, the above approach neglects an important consideration: \textit{in a quantum reading protocol, the transmitter and receiver are in the same physical location}. We can thus refer to both devices as a single device called a transceiver. As a consequence of this physical setup, the most  general  and natural definition for quantum reading capacity should allow for the transceiver to perform an adaptive operation after each call to the memory, and this is how  quantum reading capacity was defined in \cite{DW17}.

In general, an adaptive strategy can have a significant advantage over a non-adaptive strategy in the context of quantum channel discrimination \cite{HHLW10}. Furthermore, a quantum channel discrimination protocol employing a non-adaptive strategy is a special case of one that uses an adaptive strategy. Since quantum reading bears close connections to quantum channel discrimination, we suspect that adaptive operations could help to increase  quantum reading capacity in some cases, and this is one contribution of \cite{DW17}.

 It is to be noted that the physical setup of quantum reading is rather different from that considered in a typical communication problem, in which the sender and receiver are in different physical locations. In this latter case, allowing for adaptive operations represents a different physical model and  is thus considered as a different kind of capacity, typically called a feedback-assisted capacity. However, as advocated above, the physical setup of quantum reading necessitates that there should be no such distinction between capacities: the quantum reading capacity should be defined as it is here, in such a way as to allow for adaptive operations.

Another point of concern with prior work on quantum reading is as follows: so far, all bounds on the quantum reading rate have been derived in the usual setting of quantum Shannon theory, in which the number of uses of the channels tends to infinity (also called the i.i.d.~setting, where i.i.d.~stands for ``independent and identically distributed''). However, it is important for practical purposes to determine rates for quantum reading in the non-asymptotic scenario, i.e., for a finite number of quantum channel uses and a given error probability for decoding. The information-theoretic analysis in the non-asymptotic case is motivated by the fact that in practical scenarios, we have only finite resources at our disposal \cite{RennerThesis,DR09,Tbook15}.

The main focus of this chapter is to address some of the concerns mentioned above by giving the most general and natural definition for a quantum reading protocol and quantum reading capacity. We also establish bounds on the rates of quantum reading for wider classes of memory cells in both the asymptotic and non-asymptotic cases. First, we define a quantum reading protocol and quantum reading capacity in the most general setting possible by allowing for adaptive strategies. We give weak-converse, single-letter bounds on the rates of quantum reading protocols that employ either adaptive or non-adaptive strategies for arbitrary memory cells. We also introduce a particular class of memory cell, which we call an
\textit{environment-parametrized}  (see Section~\ref{sec:channel-symmetry-qr} for definitions), for which stronger statements can be made for the rates and capacities in the non-asymptotic situation of a finite number of uses of the channels. It should be noted that a particular kind of environment-parametrized memory cell consists of a collection of channels that are jointly teleportation simulable. Many channels of interest obey these symmetries: some examples are erasure,  dephasing, thermal, noisy amplifier, and Pauli channels \cite{BDSW96,DP05,JWD+08,Mul12,PLOB15,WTB16,TW16}. Here we determine strong converse and  second-order bounds on the quantum reading capacities of environment-parametrized memory cells. Based on an example from \cite[Section 3]{HHLW10}, we show in  Section~\ref{sec:zero-error} that there exists a memory cell for which its zero-error reading capacity with adaptive operations is at least $\tfrac12$, but its zero-error reading capacity without adaptive operations is equal to zero. This example emphasizes how reading capacity should be defined in such a way as to allow for adaptive operations, as stressed in this chapter.

The organization of this chapter is as follows. In Section~\ref{sec:channel-symmetry-qr}, we briefly review the set up of quantum reading protocol as a particular instance of communication protocol over bipartite quantum interaction. We then introduce two of the aforementioned classes of memory cells. In Section~\ref{sec:q-r-p}, we define a quantum reading protocol and quantum reading capacity in the most general and natural way. Section~\ref{sec:converse bounds} contains main results, which were briefly summarized in the previous paragraph. In Section~\ref{sec:example}, we calculate quantum reading capacities for a thermal memory cell and for a class of jointly covariant memory cells, including a qudit erasure memory cell and a qudit depolarizing memory cell. In Section~\ref{sec:zero-error}, we provide an example to illustrate the advantage of adaptive operations over non-adaptive operations in the context of zero-error quantum reading capacity.  In the final section of the chapter, we conclude and shed some light on possible future work.

\section{Memory cells with symmetry}\label{sec:channel-symmetry-qr}
In this section, we first review controlled channels as a particular instance of bipartite quantum interactions. This observation leads to the realization that a (quantum) reading protocol is a particular instance of information processing or communication task that uses bipartite quantum interactions of specific forms. Next, we define a broad class of memory cell called environment-parametrized memory cell that is a set of quantum channels obeying certain symmetries. 

Throughout this chapter, let $\msc{X}$ denote an alphabet of size $|X|$, where $|X|$ is finite.
\subsection{Bipartite interaction and quantum reading} 
Consider a bipartite quantum interaction between systems $X'$ and $B'$, generated by a Hamiltonian $\hat{H}_{X'B'E'}$, where $E'$ is a bath system, as given by \eqref{eq:c-ham-1}.

For some distributed quantum computing and information processing tasks where the controlling system (register) $X$ and input system $B'$ are jointly accessible, the following bidirectional channel is relevant:
\begin{equation}\label{eq:bi-ch-mc-1}
\mc{B}_{X'B'\to XB}(\cdot)\coloneqq \sum_{x\in\msc{X}}\op{x}_X\otimes\mc{N}^x_{B'\to B}\(\bra{x}(\cdot)\ket{x}_{X'}\).
\end{equation}
In the above, $X'$ is a controlling system that determines which evolution from the set $\{\mc{N}^x\}_{x\in\msc{X}}$ takes place on input system $B'$. 
In particular, when $X'$ and $B'$ are spatially separated and the input states for the system $X'B'$ are considered to be in product state, the noisy evolution for such constrained interactions is given by the following bidirectional channel:
\begin{equation}\label{eq:bi-ch-mc-2}
\mc{B}_{X'B'\to XB}(\sigma_{X'}\otimes\rho_{B'})\coloneqq \sum_{x\in\msc{X}}\bra{x}\sigma_{X'}\ket{x}_{X'}\op{x}_X\otimes\mc{N}^x_{B'\to B}(\rho_{B'}).
\end{equation}

This kind of bipartite interaction is in one-to-one correspondence with the notion of a memory cell from the context of quantum reading \cite{BRV00,Pir11}. There, a memory cell is a collection $\{\mc{N}^x_{B'\to B}\}_{x\in\msc{X}}$ of quantum channels. One party chooses which channel is applied to another party's input system $B'$ by selecting a classical letter $x\in\msc{X}$. Clearly, the description in
\eqref{eq:bi-ch-mc-1} is a fully quantum description of this process, and thus one notices that quantum reading can be understood as the use of a particular kind of bipartite interaction. 

\subsection{Environment-parametrized memory cells}
A collection of channels $\{\mc{E}^x_{B'\to B}\}_{x\in\msc{X}}$ is called environment-parametrized if  
there exists a set of ancillary states $\{\theta^x_E\}_{x\in\msc{X}}$, and
$\mc{F}_{B'E\to B}$ a quantum channel, called an interaction channel, such that $\mc{E}^x_{B'\to B}$ can be realized as follows:
\begin{equation}
\mc{E}^x_{B'\to B}(X_{B'}):=\mc{F}_{B'E\to B}(X_{B'}\otimes\theta^x_E),
\end{equation}
for all $X_{B'}\in\msc{B}_1(\mc{H}_{B'})$. This notion is related to the notion of programmable channels, used in the context of quantum computation \cite{DP05} (see Section~\ref{sec:symmetry}). 

\begin{remark}
We notice from Definition~\ref{def:tel-sim} that a teleportation-simulable channel is a particular kind of environment-parametrized channel in which a resource state
$\omega_{RB}$ is the ancillary state and LOCC channel
$\mc{L}_{R{B'}B\to B}$ is the interaction channel.  
\end{remark}

We now define a broad class of sets of quantum channels that we call environment-parametrized memory cells, and we discuss two classes of sets of quantum channels that are particular kinds of environment-parametrized memory cells. 

\begin{definition}[Environment-parametrized memory cell]\label{def:env-cell}
A set $\msc{E}_{\msc{X}}=\{\mc{N}_{{B'}\to B}^x\}_{x\in\msc{X}}$ of quantum channels is an  environment-parametrized memory cell if there exists a set $\{\theta^x_{E}\}_{x\in\msc{X}}$ of ancillary states and a fixed interaction channel
$\mc{F}_{{B'}E\to B}$
such that for all input states $\rho_{B'}$ and
$\forall x\in\msc{X}$
\begin{equation}\label{eq:env-cell}
\mc{N}^x_{{B'}\to B}(\rho_{B'}) =\mc{F}_{{B'}E\to B}(\rho_{B'}\otimes\theta^x_E).
\end{equation}
\end{definition}

\begin{definition}[Jointly teleportation-simulable memory cell]\label{def:tel-cell}
A set $\msc{T}_{\msc{X}}=\{\mc{T}^x_{{B'}\to B}\}_{x\in \msc{X}}$ of quantum channels is a jointly teleportation-simulable memory cell if there exists a set $\{\omega^x_{RB}\}_{x\in\msc{X}}$ of resource states and an LOCC channel $\mc{L}_{{B'} RB\to B}$ such that, for all input states $\rho_{{B'}}$ and
$\forall x\in\msc{X}$
\begin{equation}\label{eq:tel-cell}
 \mc{T}^x_{{B'}\to B}(\rho_{B'})=\mc{L}_{{B'}RB\to B}(\rho_{B'}\otimes\omega^x_{RB}),
\end{equation}  
 where the LOCC channel input is with respect to the bipartition $RB'\!:\!B$. 
\end{definition}

\begin{definition}[Jointly covariant memory cell]\label{def:j-cov-cell}
A set $\msc{T}_{\msc{X}}=\{\mc{T}^x_{{B'}\to B}\}_{x\in\msc{X}}$ of quantum channels is  jointly covariant if there exists a group $\msc{G}$ such that for all $x\in\msc{X}$, the channel $\mc{T}^x$ is a covariant channel with respect to the group $\msc{G}$ (Definition~\ref{def:covariant}).
\end{definition}

\begin{proposition}
Any jointly covariant memory cell $\msc{T}_{\msc{X}}=\{\mc{T}^x_{{B'}\to B}\}_{x\in\msc{X}}$ is jointly teleportation-simulable with respect to a set $\{\mc{T}^x_{{B'}\to B}(\Phi_{R{B'}})\}_{x\in\msc{X}}$ of resource states.
\end{proposition}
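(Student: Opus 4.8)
The plan is to reduce the statement to the known fact that a \emph{single} covariant channel is teleportation-simulable via its Choi state, and then to observe that the simulating protocol is built entirely from the group data shared by every channel in the cell. By Definition~\ref{def:j-cov-cell}, there is one group $\msc{G}$ such that each $\mc{T}^x_{B'\to B}$ is covariant with respect to $\msc{G}$ in the sense of Definition~\ref{def:covariant}; write $\{U_{B'}(g)\}_{g\in\msc{G}}$ for the (unitary one-design) representation on $\mc{H}_{B'}$ and $\{V_B(g)\}_{g\in\msc{G}}$ for the representation on $\mc{H}_B$, so that $\mc{T}^x_{B'\to B}(U_{B'}(g)\rho U_{B'}(g)^\dagger)=V_B(g)\,\mc{T}^x_{B'\to B}(\rho)\,V_B(g)^\dagger$ for all $g$ and all $x$ (that these representations may be taken common to the cell is exactly what the ``jointly covariant'' hypothesis provides). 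By the result of \cite{CDP09} (see the discussion following Definition~\ref{def:tel-sim}), each $\mc{T}^x$ is then teleportation-simulable with resource state $\mc{T}^x_{B'\to B}(\Phi_{RB'})$, and what remains is to check that one and the same LOCC channel $\mc{L}_{B'RB\to B}$ works for every $x$.

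First I would write the simulating LOCC channel explicitly, following the generalized teleportation protocol of \cite{Wer01} as it appears in the proof of Proposition~\ref{prop:bicov}, specialized to the case in which only the $B'\to B$ side is present. Holding the input system $B'$ together with the reference system $R$ of the resource state, Alice performs the measurement with POVM $\{E^g_{B'R}\}_{g\in\msc{G}}$, where $E^g_{B'R}\coloneqq\frac{|B'|^2}{|G|}(U_{B'}(g)\otimes\bm{1}_R)\,\Phi_{B'R}\,(U_{B'}(g)^\dagger\otimes\bm{1}_R)$; this is a valid POVM because $|B'|^2\leq|G|$ (necessary for $U_{B'}$ to be a one-design, cf.~\eqref{eq:one-design-cond}) together with the one-design property. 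She communicates the outcome $g$ to Bob, who then applies to his share $B$ of the resource a corrective unitary determined by $g$, namely $V_B(g)$. Every ingredient of this protocol --- the POVM, the classical communication, and the corrections --- is constructed solely from $\msc{G}$ and its two representations, hence is manifestly independent of $x$; denote the resulting LOCC channel by $\mc{L}_{B'RB\to B}$.

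It then remains to verify that $\mc{L}_{B'RB\to B}\bigl(\rho_{B'}\otimes\mc{T}^x_{B'\to B}(\Phi_{RB'})\bigr)=\mc{T}^x_{B'\to B}(\rho_{B'})$ for each fixed $x$. This is the computation already carried out, in its two-party bidirectional form, in the proof of Proposition~\ref{prop:bicov}: one invokes a post-selected-teleportation identity of the type in \eqref{eq:choi-sim}, uses the ``ricochet'' property $\langle\Phi|_{B'R}(M_{B'}\otimes\bm{1}_R)=\langle\Phi|_{B'R}(\bm{1}_{B'}\otimes M_R^{\ast})$, collapses the average over outcomes with the one-design identity $\frac{1}{|G|}\sum_g U_{B'}(g)(\cdot)U_{B'}(g)^\dagger=\Tr\{\cdot\}\pi_{B'}$, and finally applies the covariance relation \eqref{eq:cov-condition} to pull $V_B(g)$ through $\mc{T}^x$ so that it cancels the residual $V_B(g)^\dagger$. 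Since this holds for every $x$ with the single LOCC channel $\mc{L}_{B'RB\to B}$, the cell $\{\mc{T}^x_{B'\to B}\}_{x\in\msc{X}}$ is jointly teleportation-simulable with the resource-state set $\{\mc{T}^x_{B'\to B}(\Phi_{RB'})\}_{x\in\msc{X}}$, as required by Definition~\ref{def:tel-cell}. The only genuinely delicate point is the bookkeeping showing that the output representation $V_B$ may be chosen common to all $\mc{T}^x$ (so that Bob's correction does not depend on the unknown letter $x$); this is guaranteed by joint covariance, and once it is secured the remainder is a routine specialization of the argument in the proof of Proposition~\ref{prop:bicov}.
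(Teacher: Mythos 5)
Your proposal is correct and follows essentially the same route as the paper: the paper's proof is a brief pointer to the fixed-POVM teleportation construction in~\cite[Appendix A]{WTB16}, and you simply unfold that construction explicitly by specializing the bidirectional argument of Proposition~\ref{prop:bicov} to the single-sided $B'\to B$ case, with the same POVM $\{E^g_{B'R}\}_{g\in\msc{G}}$ and corrective unitaries $V_B(g)$. You also correctly flag the one delicate point — that the output representation $V_B$ must be taken common to all $x$, which is what ``jointly covariant'' is understood to guarantee — which the paper leaves implicit.
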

\begin{proof}
For a jointly covariant memory cell with respect to a group $\msc{G}$, all the channels $\mc{T}^x_{{B'}\to B}$ are jointly teleportation-simulable with respect to the resource states $\mc{T}^x_{{B'}\to B}(\Phi_{R{B'}})$, which are respective Choi states, by using a fixed POVM $\{E^g_{B'' R}\}_{g\in \msc{G}}$, similar to that defined in \cite[Equation (A.4), Appendix A]{WTB16}. See \cite[Appendix A]{WTB16} for an explicit proof.
\end{proof}

\begin{remark}\label{rem:env-tel}
Any jointly teleportation-simulable memory cell is environment-parametrized, an observation that is a direct consequence of definitions. This implies that all jointly covariant memory cells are also environment-parametrized. 
\end{remark}

\section{Quantum reading protocols and quantum reading capacity}
\label{sec:q-r-p}
In a quantum reading protocol, we consider an encoder and a reader (transceiver). An encoder is one who encodes a message onto a physical memory device that is delivered to Bob, a receiver, whose task it is to read the message. Bob is also referred to as the reader. The quantum reading task comprises the estimation of a message encoded in the form of a sequence of quantum channels chosen from a given set $\{\mc{N}^x_{{B'}\to B}\}_{x\in\msc{X}}$ of quantum channels called a memory cell, where $\msc{X}$ is a finite alphabet. In the most general setting considered here, the reader can use an adaptive strategy for quantum reading.

Both the encoder and the reader agree upon a memory cell $\msc{S}_{\msc{X}}=\{\mathcal{N}^x_{B'\to B}\}_{x\in\msc{X}}$ before executing the reading protocol. Consider a classical message set $\msc{M}=\{1,2,\ldots,|M|\}$, and let $M$ be an associated system denoting a classical register for the message. The encoder encodes a message $m\in\msc{M}$ using a sequence
$x^n(m)=(x_1(m),x_2(m),\ldots, x_n(m))$
of length $n$, where $x_i(m)\in\msc{X}$ for all $ i\in\{1,2,\ldots,n\}$. Each sequence identifies with a corresponding codeword formed from quantum channels chosen from the memory cell $\msc{S}_{\msc{X}}$:
\begin{equation}
\left(\mathcal{N}^{x_1(m)}_{B'_1\to B_1}, \mathcal{N}^{x_2(m)}_{B'_2\to B_2},\ldots,\mathcal{N}^{x_n(m)}_{B'_n\to B_n}\right).
\end{equation}
Each quantum channel in a codeword, each of which represents one part of the stored information, is only read once.

\begin{figure}[ptb]
\includegraphics[width=\linewidth]{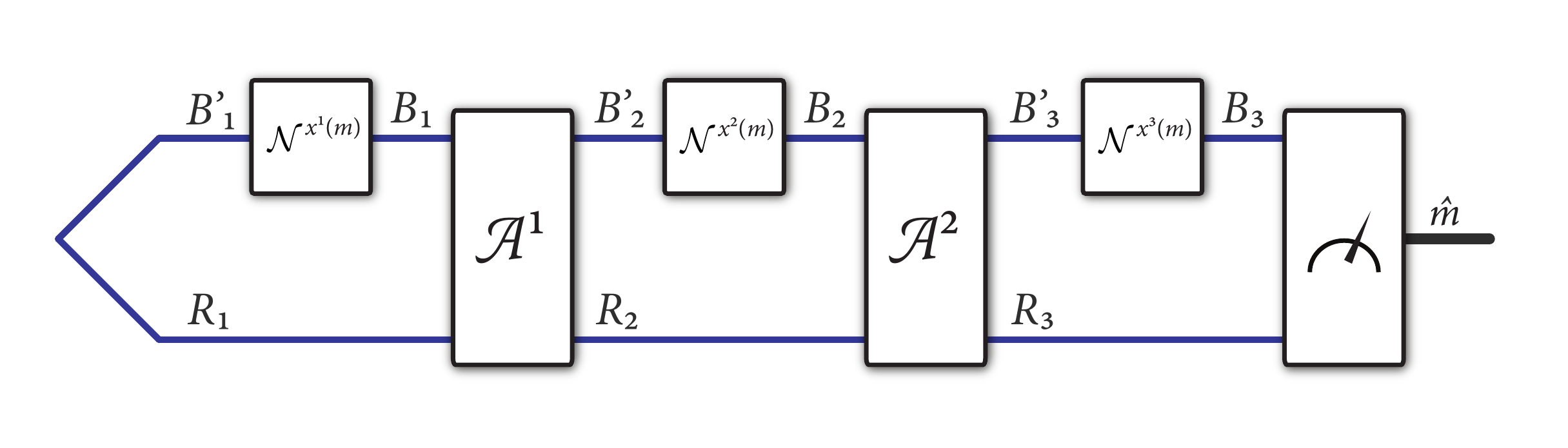}
\caption{The figure depicts a quantum reading protocol that calls a memory cell three times to decode the message $m$ as $\hat{m}$. See the discussion in Section~\ref{sec:q-r-p} for a detailed description of a quantum reading protocol.}
\label{fig:reading-protocol}%
\end{figure}

An adaptive decoding strategy $\mathcal{J}_{\msc{S}_\msc{X}}$ makes $n$ calls to the memory cell $\msc{S}_{\msc{X}}$. It is specified in terms of a transmitter state $\rho_{R_1B'_1}$, a set of adaptive, interleaved channels $\{\mc{A}^i_{R_iB_i\to R_{i+1}B'_{i+1}}\}_{i=1}^{n-1}$, and a final quantum measurement $\{\Lambda^{\hat{m}}_{R_nB_n}\}_{\hat{m}\in\msc{M}}$ that outputs an estimate $\hat{m}$ of the message $m$. The strategy begins with Bob preparing the input state $\rho_{R_1B'_1}$ and sending the $B'_1$ system into the channel $\mc{N}^{x_1(m)}_{B'_1\to B_1}$. The channel outputs the system $B_1$, which is available to Bob. She adjoins the system $B_1$ to the system $R_1$ and applies the channel $\mc{A}^1_{R_1B_1\to R_2B'_2}$. The channel $\mc{A}^i_{R_iB_i\to R_{i+1}B'_{i+1}}$ is called adaptive because it can take an action conditioned on the information in the system $B_i$, which itself might contain partial information about the message $m$. Then, he sends the system $B'_2$ into the second use of the channel $\mc{N}^{x_2(m)}_{B'_2\to B_2}$, which outputs a system $B_2$. The process of successively using the channels interleaved by the adaptive channels  continues $n-2$ more times, which results in the final output systems $R_n$ and $B_n$ with Bob. Next, he performs a measurement $\{\Lambda^{\hat{m}}_{R_nB_n}\}_{\hat{m}\in\msc{M}}$ on the output state $\rho_{R_n B_n}$, and the measurement outputs an estimate $\hat{m}$ of the original message~$m$. See Figure~\ref{fig:reading-protocol} for a depiction of a quantum reading protocol.

It is apparent that a non-adaptive strategy is a special case of an adaptive strategy in which the reader does not perform any adaptive channels and instead uses $\rho_{RB^{'n}}$ as the transmitter state with each $B'_i$ system passing through the corresponding channel $\mc{N}^{x_i(m)}_{B'_i\to B_i}$ and $R$ being an idler system. The final step in such a non-adaptive strategy is to perform a decoding measurement on the  joint system $RB^n$. 

As we argued previously, it is natural to consider the use of an adaptive strategy for a quantum reading protocol because the channel input and output systems are in the same physical location. In a quantum reading protocol, the reader assumes the role of both the transmitter and receiver. 

\begin{definition}[Quantum reading protocol]\label{def:QR}
An $(n,R,\varepsilon)$ quantum reading protocol for a memory cell $\msc{S}_{\msc{X}}$ is defined by an encoding map $\mc{E}_{\tn{enc}}:\msc{M}\to \msc{X}^{\times n}$ and an adaptive strategy $\mathcal{J}_{\msc{S}_\msc{X}}$ with measurement $\{\Lambda_{R_n B_n}^{\hat{m}}\}_{\hat{m}\in\msc{M}}$. The protocol is such that the average success probability is at least $1-\varepsilon$, where $\varepsilon\in(0,1)$:
\begin{multline}
1- \varepsilon \leq 1 - p_{\operatorname{err}} \coloneqq\\
\frac{1}{|M|} 
\sum_{m}\Tr\left\{\Lambda^{(m)}_{R_nB_n}\(\mc{N}^{x_n(m)}_{B'_n\to B_n}\circ\mc{A}^{{n-1}}_{R_{n-1}B_{n-1}\to R_nB'_n}\circ\cdots\circ\mc{A}^{1}_{R_1B_1\to R_2B'_2}\circ\mc{N}^{x_1(m)}_{B'_1\to B_1}\)\(\rho_{R_1{B'}_1}\)\right\}.
\end{multline} 
The rate $R$ of a given $(n,R,\varepsilon)$ quantum reading protocol is equal to the number of bits read per channel use:
\begin{equation}
R\coloneqq \frac{1}{n}\log_2|M|.
\end{equation}
\end{definition}

To arrive at a definition of quantum reading capacity, we demand that there exists a sequence of reading protocols, indexed by $n$, for which the error probability $p_{\operatorname{err}}\to 0$ as $n\to \infty$ at a fixed rate~$R$.

\begin{definition}[Achievable rate]
A rate $R$ is called achievable if $\forall \varepsilon\in (0,1)$, $\delta>0$, and sufficiently large $n$, there exists an $(n,R-\delta,\varepsilon)$ code. 
\end{definition}

\begin{definition}[Quantum reading capacity]\label{def:capacity}
The quantum reading capacity $\mc{C}(\msc{S}_{\msc{X}})$ of a memory cell $\msc{S}_{\msc{X}}$ is defined as the supremum of all achievable rates $R$. 
\end{definition}

\section{Fundamental limits on quantum reading capacities}\label{sec:converse bounds}
In this section, we establish second-order  and strong converse bounds for any environment-parametrized memory cell. We also establish general weak converse (upper) bounds on various reading capacities. 

\subsection{Converse bounds for environment-parametrized memory cells}
In this section, we derive upper bounds on the performance of quantum reading of environment-parametrized memory cells. 

\begin{figure}
\begin{center}
\includegraphics[
width=\linewidth
]{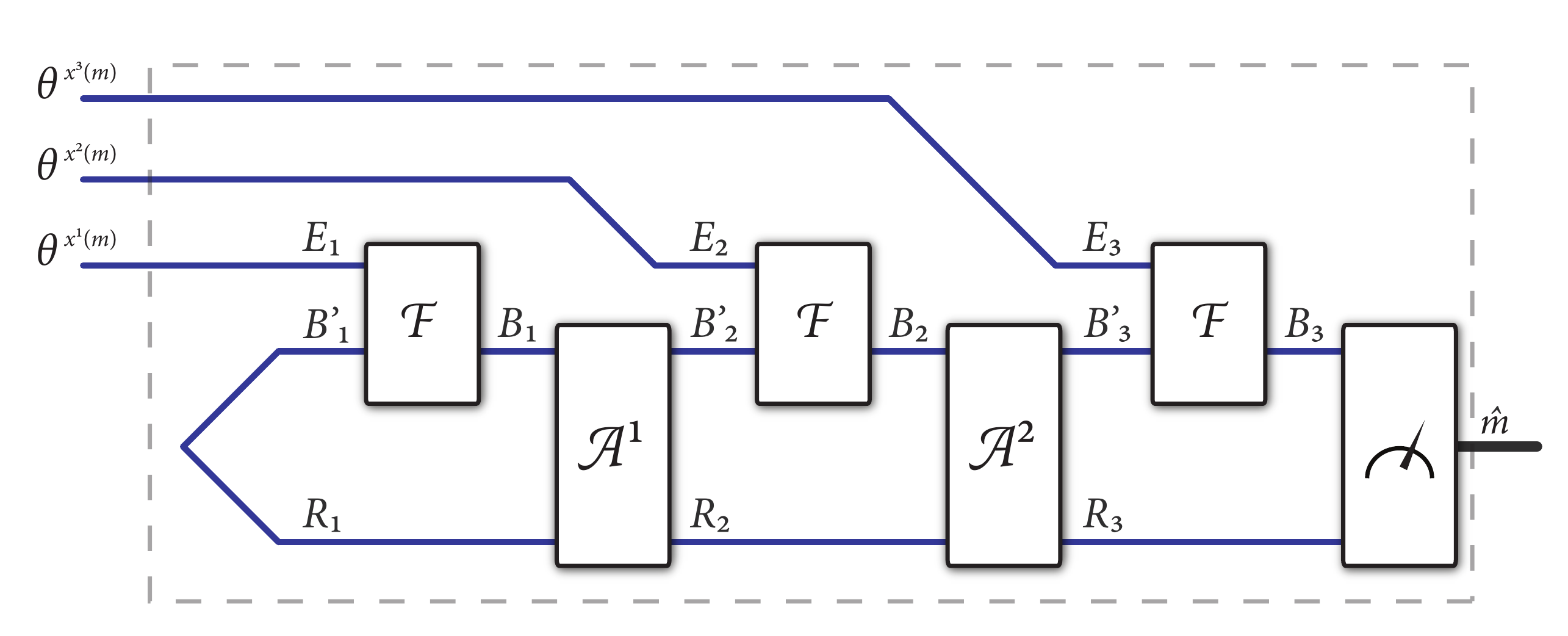}
\end{center}
\caption{The figure depicts how a quantum reading protocol of an environment-parametrized memory cell can be rewritten as a protocol that tries to decode the message $m$ from the ancillary states $\theta^{x^n(m)}_{E^n}$. All of the operations inside the dashed lines can be understood as a measurement on the states $\theta^{x^n(m)}_{E^n}$.}
\label{fig:env-param-reading-protocol}
\end{figure}

To begin with, let us consider an $(n,R,\varepsilon)$ quantum reading protocol of an environment-parametrized memory cell $\msc{E}_\msc{X}=\{\mc{N}^x\}_{x\in\msc{X}}$ (Definition~\ref{def:env-cell}). The structure of reading protocols involving adaptive channels simplifies immensely for  memory cells that are teleportation-simulable and more generally environment-parametrized. This is a direct consequence of the symmetry obeyed by the channels in the cell. For such memory cells, a quantum reading protocol can be simulated by one in which every channel use is replaced by the encoder preparing the ancillary state $\theta^{x_i(m)}_{E}$ from \eqref{eq:env-cell} and then interacting the channel input with the interaction channel $\mc{F}_{B'E\to B}$. Critically, each interaction channel
$\mc{F}_{B'E\to B}$ is independent of the message $m\in\msc{M}$. Let 
\begin{equation}
\theta^{x^n(m)}_{E^n}:= \bigotimes_{i=1}^n\theta^{x_i(m)}_{E}
\end{equation}
denote the ancillary state needed for the simulation of all $n$ of the channel uses in the quantum reading protocol. This leads to the translation of a general quantum reading protocol to one in which all of the rounds of adaptive channels can be delayed until the very end of the protocol, such that the resulting protocol is a non-adaptive quantum reading protocol.

The following proposition, holding for any environment-parametrized memory cell, is a direct consequence of observations made in \cite[Section V]{BDSW96}, \cite{HHH99}, \cite[Theorem 14 \& Remark 11]{Mul12}, and \cite{DM14}. We thus omit a detailed proof, but Figure~\ref{fig:env-param-reading-protocol} clarifies the main idea: any quantum reading protocol of an environment-parametrized memory cell can be rewritten as in Figure~\ref{fig:env-param-reading-protocol}. Inspecting the figure, one can notice that the protocol can be understood as a non-adaptive decoding of the ancillary states  $\theta^{x^n(m)}_{E^n}$,
with the decoding measurement constrained to contain the interaction channel $\mc{F}_{B'E\to B}$ interleaved between arbitrary adaptive channels. Thus, Proposition~\ref{thm:sim-red} establishes that an adaptive strategy used for decoding an environment-parametrized memory cell can be reduced to a particular non-adaptive decoding of the ancillary states $\theta^{x^n(m)}_{E^n}$. 

\begin{proposition}[Adaptive-to-non-adaptive reduction]\label{thm:sim-red}
Let $\msc{E}_\msc{X}=\{\mc{N}^x_{B'\to B}\}_{x\in\msc{X}}$ be an environment-parametrized memory cell with an associated set of ancillary states $\{\theta^x_{E}\}_{x\in\mc{X}}$ and a fixed interaction channel $\mc{F}_{B'E\to B}$, as given in Definition~\ref{def:env-cell}. Then any quantum reading protocol as stated in Definition~\ref{def:QR}, which uses an adaptive strategy $\mathcal{J}_{\msc{E}_\msc{X}}$, can be simulated as a non-adaptive quantum reading protocol, in the following sense:
\begin{multline}
\Tr\left\{\Lambda^{\hat{m}}_{E_nB_n}\(\mc{N}^{x_n(m)}_{B'_n\to B_n}\circ\mc{A}^{{n-1}}_{E_{n-1}B_{n-1}\to E_nB'_n}\circ\cdots\circ\mc{A}^{1}_{E_1B_1\to E_2B'_2}\circ\mc{N}^{x_1(m)}_{B'_1\to B_1}\)\(\rho_{E_1B'_1}\)\right\}\\
=\Tr\left\{\Gamma^{\hat{m}}_{E^n}\(\bigotimes_{i=1}^n\theta^{x_i(m)}_{E}\)\right\}, \label{eq:tel-sim-red}
\end{multline}
for some POVM $\{\Gamma^{\hat{m}}_{E^n}\}_{\hat{m}\in \msc{M}}$ that depends on $\mathcal{J}_{\msc{E}_\msc{X}}$.
\end{proposition}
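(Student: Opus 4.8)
The plan is to unfold the adaptive reading protocol from the left, replacing each channel use $\mc{N}^{x_i(m)}_{B'_i\to B_i}$ by its environment-parametrized realization $\mc{N}^{x_i(m)}_{B'_i\to B_i}(\cdot) = \mc{F}_{B'_iE_i\to B_i}(\cdot\otimes\theta^{x_i(m)}_{E_i})$ from Definition~\ref{def:env-cell}, and then to commute all of the message-independent operations (the preparation $\rho_{E_1B'_1}$, the interaction channels $\mc{F}_{B'_iE_i\to B_i}$, the adaptive channels $\mc{A}^j$, and the final POVM $\{\Lambda^{\hat m}\}$) to one side so that the only message-dependent objects left are the tensor-product ancillary states $\bigotimes_{i=1}^n\theta^{x_i(m)}_{E_i}$. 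Everything that is message-independent then gets absorbed into a single POVM $\{\Gamma^{\hat m}_{E^n}\}_{\hat m\in\msc{M}}$ acting on $E^n := E_1E_2\cdots E_n$, which is what the statement claims.

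\textbf{Key steps, in order.} First, I would fix a message $m$ and write out the left-hand side of \eqref{eq:tel-sim-red} explicitly, then substitute the realization of the first channel: $\mc{N}^{x_1(m)}_{B'_1\to B_1}(\rho_{E_1B'_1}) = \mc{F}_{B'_1E_1\to B_1}(\rho_{E_1B'_1}\otimes\theta^{x_1(m)}_{E_1})$, taking care with the placement of tensor factors (the idler register now plays the role of an accumulating environment register, which is why I relabel $R_i\to E_i$ as the proposition does). Second, I would iterate: after applying $\mc{A}^1$ and then substituting the realization of $\mc{N}^{x_2(m)}$, the state $\theta^{x_2(m)}_{E_2}$ is adjoined and $\mc{F}_{B'_2E_2\to B_2}$ acts; crucially $\mc{A}^1$ and $\mc{F}_{B'_2E_2\to B_2}$ act on disjoint systems at the moment $\theta^{x_2(m)}_{E_2}$ is introduced, and since $\theta^{x_2(m)}_{E_2}$ is in a fresh tensor factor, the order in which we adjoin it relative to the earlier message-independent operations is immaterial. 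Third, by induction on the number of channel uses, all $n$ ancillary states can be adjoined at the very beginning: the composite state $\bigotimes_{i=1}^n\theta^{x_i(m)}_{E_i}$ is fed in, and the entire remaining sequence $\mc{F}_{B'_nE_n\to B_n}\circ\mc{A}^{n-1}\circ\cdots\circ\mc{A}^1\circ\mc{F}_{B'_1E_1\to B_1}$ applied to $\rho_{E_1B'_1}\otimes(\cdot)$ is message-independent. Fourth, I would define $\Gamma^{\hat m}_{E^n}$ as the adjoint (Heisenberg-picture) image of $\Lambda^{\hat m}_{E_nB_n}$ under this fixed message-independent channel composed with the preparation of $\rho_{E_1B'_1}$; concretely, if $\mc{C}_{E^n\to E_nB_n}$ denotes that composite channel with $\rho_{E_1B'_1}$ hard-wired in, then $\Gamma^{\hat m}_{E^n} := \mc{C}^\dagger_{E^n\to E_nB_n}(\Lambda^{\hat m}_{E_nB_n})$. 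One then checks $\{\Gamma^{\hat m}_{E^n}\}_{\hat m}$ is a valid POVM: positivity is preserved because $\mc{C}^\dagger$ is positive, and $\sum_{\hat m}\Gamma^{\hat m}_{E^n} = \mc{C}^\dagger(\bm{1}) = \bm{1}$ because $\mc{C}$ is trace-preserving hence $\mc{C}^\dagger$ is unital. Finally, the identity $\Tr\{\Lambda^{\hat m}\,\mc{C}(\theta^{x^n(m)}_{E^n})\} = \Tr\{\mc{C}^\dagger(\Lambda^{\hat m})\,\theta^{x^n(m)}_{E^n}\} = \Tr\{\Gamma^{\hat m}_{E^n}\,\theta^{x^n(m)}_{E^n}\}$ is exactly \eqref{eq:tel-sim-red}, by definition of the adjoint map \eqref{eq-adjoint-eg}.

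\textbf{Main obstacle.} The routine calculations are genuinely routine here, so the only real subtlety — and the thing a careful write-up must get exactly right — is bookkeeping of the tensor factors and the claim that adjoining the fresh ancilla $\theta^{x_i(m)}_{E_i}$ commutes past the already-applied message-independent operations. This is true simply because tensoring in a new, independent system on a new Hilbert space factor commutes with any operation on the pre-existing systems, but stating it cleanly requires being disciplined about which register each map acts on at each stage; this is presumably why the authors elected to omit the detailed proof and instead refer to Figure~\ref{fig:env-param-reading-protocol} together with the analogous reductions in \cite[Section~V]{BDSW96}, \cite{HHH99}, \cite{Mul12}, and \cite{DM14}. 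I would therefore present the argument at the level of ``unfold, commute, absorb, dualize,'' invoking those references for the fully spelled-out induction rather than grinding through all $n$ layers.
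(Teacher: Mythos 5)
Your proposal is correct and follows exactly the route the paper itself sketches (and declines to spell out, pointing instead to Figure~\ref{fig:env-param-reading-protocol} and the analogous reductions in the cited references): substitute the environment-parametrized realization of each $\mc{N}^{x_i(m)}$, pull all the freshly adjoined ancillary states $\theta^{x_i(m)}_{E_i}$ to the front since tensoring in a new factor commutes with every earlier message-independent operation, bundle the preparation, the $\mc{F}$'s, the $\mc{A}^i$'s, and the final POVM into one fixed CPTP map $\mc{C}_{E^n\to R_nB_n}$, and set $\Gamma^{\hat m}_{E^n}=\mc{C}^\dagger(\Lambda^{\hat m})$, with positivity and completeness following from $\mc{C}^\dagger$ being CP and unital. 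The only cosmetic issue is the overloading of the label $E$ for both the idler registers (which Definition~\ref{def:QR} calls $R_i$ and the proposition apparently relabels) and the per-use ancillary system of Definition~\ref{def:env-cell}; in a careful write-up you should keep these names distinct, but this does not affect the correctness of your argument.
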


Using the observation in Proposition~\ref{thm:sim-red}, we now show how to arrive at upper bounds on the performance of any reading protocol that uses an environment-parametrized memory cell.

Our proof strategy is to employ a generalized divergence to make a comparison between the states involved in the actual reading protocol and one in which the memory cell is fixed as $\hat{\msc{E}}\coloneqq\{\mc{P}_{B'\to B}\}$, containing only a single channel with environment state $\hat{\theta}_{E}$ and interaction channel $\mc{F}_{B'E\to B}$. The latter reading protocol contains no information about the message $m$.
Observe that the augmented memory cell $\{\msc{E}_\msc{X},\hat{\msc{E}}\}$ is environment-parametrized.

One of the main steps that we use in our proof is as follows. Consider the following states:\begin{align}
\sigma_{M\hat{M}} & =\sum_{m\in\msc{M},\hat{m}\in\msc{M}}\frac{1}{|M|}\op{m}_M\otimes p_{\hat{M}|M}\(\hat{m}|m\)\op{\hat{m}}_{\hat{M}},\label{eq:out-useful}\\
\tau_{M\hat{M}} & =\sum_{m\in\msc{M}}
\frac{1}{|M|}
\op{m}_M\otimes \hat{\tau}_{\hat{M}},\label{eq:out-useless}
\end{align}
where $p_{\hat{M}|M}(\hat{m}|m)$ is a distribution that results after the final decoding step of an $(n,R,\varepsilon)$ quantum reading protocol, while  $\hat{\tau}_{\hat{M}}$
is a fixed state. By applying
the comparator test $\{\Pi_{M\hat{M}},\bm{1}_{M\hat{M}}-\Pi_{M\hat{M}}\}$, defined by
\begin{equation}
\Pi_{M\hat{M}} := \sum_{m}\op{m}_M\otimes\op{m}_{\hat{M}},\label{eq:comparator-test}
\end{equation}
and using definitions, we arrive at the following inequalities that hold for an arbitrary
$(n,R,\varepsilon)$ quantum reading protocol:
\begin{equation}
\Tr\{\Pi_{M\hat{M}}\sigma_{M\hat{M}}\}\geq 1-\varepsilon,\quad \Tr\{\Pi_{M\hat{M}}\tau_{M\hat{M}}\}= \frac{1}{|M|}.
\end{equation}
Then by applying the definition of the $\varepsilon$-hypothesis-testing divergence, we arrive at the following bound, which is a critical first step to establish second-order and strong converse bounds:
\begin{equation}\label{eq:eps-div-M}
D^\varepsilon_h\!\(\sigma_{M\hat{M}}\Vert\tau_{M\hat{M}}\)\geq \log_2|M|.
\end{equation}

In the converse proof that follows, the main idea for arriving at an upper bound on performance is to make a comparison between the case in which the message $m$ is encoded in a sequence of quantum channels and the case in which it is not.

\subsubsection{Second-order asymptotics and strong converse}
In this section, we derive second-order asymptotics and strong converse bounds for environment-parametrized memory cells. We begin by deriving a relation between the quantum reading rate and the hypothesis testing divergence.  

\begin{lemma}\label{thm:hyp-test-bound}
The following bound holds for an $(n,R,\varepsilon)$ reading protocol that uses an environment-parametrized memory cell (Definition~\ref{def:env-cell}):
\begin{multline}
\log_2 |M| = nR \leq \\
\sup_{p_{X^n}}\inf_{\hat{\theta}} D_h^{\varepsilon}\!\(\sum_{x^n\in\msc{X}^n}p_{X^n}(x^n)\op{x^n}_{X^n}\otimes \theta^{x^n}_{E^n}\left\Vert \sum_{x^n\in\msc{X}^n}p_{X^n}(x^n)\op{x^n}_{X^n}\otimes\hat{\theta}_{E}^{\otimes n}\)\right. .
\end{multline}
\end{lemma}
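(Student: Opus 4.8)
The plan is to combine the reduction to a non-adaptive protocol (Proposition~\ref{thm:sim-red}) with the hypothesis-testing bound~\eqref{eq:eps-div-M} and the data-processing inequality for the $\varepsilon$-hypothesis-testing divergence. First I would invoke Proposition~\ref{thm:sim-red} to replace the given adaptive $(n,R,\varepsilon)$ reading protocol of the environment-parametrized memory cell by a non-adaptive one: after this reduction, decoding the message amounts to applying some POVM $\{\Gamma^{\hat m}_{E^n}\}_{\hat m\in\msc{M}}$ to the ancillary states $\theta^{x^n(m)}_{E^n}=\bigotimes_{i=1}^n\theta^{x_i(m)}_{E}$. Introducing the uniform message register $M$ and its estimate $\hat M$, I would form the classical--quantum ``useful'' state $\rho_{M E^n}\coloneqq\frac{1}{|M|}\sum_{m}\op{m}_M\otimes\theta^{x^n(m)}_{E^n}$ and, as the ``useless'' comparison state, $\rho_{M}\otimes\hat\theta_E^{\otimes n}$ for an arbitrary fixed state $\hat\theta_E$ (this is the fixed, message-independent memory cell $\hat{\msc{E}}$ mentioned in the text; the augmented cell $\{\msc{E}_\msc{X},\hat{\msc{E}}\}$ is still environment-parametrized, so the reduction applies uniformly).

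Next I would apply the decoding channel — which is a quantum channel from $E^n$ to $\hat M$, composed of the interaction channels $\mc{F}_{B'E\to B}$, the adaptive channels, and the final measurement, all of which are message-independent — to both states. This produces exactly $\sigma_{M\hat M}$ in~\eqref{eq:out-useful} (with $p_{\hat M|M}(\hat m|m)=\Tr\{\Gamma^{\hat m}_{E^n}\theta^{x^n(m)}_{E^n}\}$) from $\rho_{ME^n}$, and $\tau_{M\hat M}$ in~\eqref{eq:out-useless} (with $\hat\tau_{\hat M}$ the image of $\hat\theta_E^{\otimes n}$ under the decoder) from $\rho_M\otimes\hat\theta_E^{\otimes n}$. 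By the data-processing inequality for $D_h^\varepsilon$ (which holds because it is a generalized divergence, cf.~the discussion around \eqref{eq:gen-div-mono}) together with~\eqref{eq:eps-div-M}, we get
\begin{equation}
\log_2|M| \leq D_h^\varepsilon(\sigma_{M\hat M}\Vert\tau_{M\hat M}) \leq D_h^\varepsilon\!\(\rho_{ME^n}\,\big\Vert\,\rho_M\otimes\hat\theta_E^{\otimes n}\).
\end{equation}
Since $\hat\theta_E$ was arbitrary, I can take the infimum over $\hat\theta_E$ on the right-hand side. Finally, because the encoding $\mc{E}_{\tn{enc}}:\msc{M}\to\msc{X}^{\times n}$ induces some distribution $p_{X^n}$ on codewords (a uniform message distribution pushed through $\mc{E}_{\tn{enc}}$), the state $\rho_{ME^n}$ is a coarse-graining/relabeling of $\sum_{x^n}p_{X^n}(x^n)\op{x^n}_{X^n}\otimes\theta^{x^n}_{E^n}$; applying data processing once more (merging messages mapped to the same codeword, which only decreases the divergence and does not change $\rho_M\otimes\hat\theta_E^{\otimes n}$ after the corresponding relabeling) and then taking a supremum over all input distributions $p_{X^n}$ yields the claimed bound.

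I do not anticipate a serious obstacle here, since all the heavy lifting is done by Proposition~\ref{thm:sim-red} and the generalized-divergence data-processing inequality, both available from the excerpt. The one point requiring a little care is the bookkeeping in the last step: making precise that passing from the message register $M$ to the classical codeword register $X^n$ is a legitimate application of data processing (the channel $M\to X^n$ being $m\mapsto x^n(m)$), so that the supremum over $p_{X^n}$ genuinely dominates the expression coming from any particular encoder. I would state this explicitly but briefly. It is also worth remarking that the order of $\sup_{p_{X^n}}$ and $\inf_{\hat\theta}$ is as written because the infimum over $\hat\theta_E$ can be performed for each fixed protocol (hence for each fixed $p_{X^n}$) before taking the supremum; this matches the structure $\sup_{p_{X^n}}\inf_{\hat\theta}$ in the statement and is the direction needed for a converse bound.
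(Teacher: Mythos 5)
Your proposal follows the same route as the paper: reduce to a non-adaptive protocol via Proposition~\ref{thm:sim-red}, form the ``useful'' and ``useless'' classical--quantum states on $M E^n$, apply \eqref{eq:eps-div-M} together with data-processing for $D_h^\varepsilon$ through the decoding measurement, and then optimize. The setup and first several steps are correct and match the paper's argument (you skip the paper's intermediate pass through binary success-probability distributions and a general divergence $\mathbf{D}$ before specializing, but \eqref{eq:eps-div-M} already carries the needed content, so this is harmless).

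The one place the written argument breaks down is the final relabeling from the message register $M$ to the codeword register $X^n$. You justify it by applying data processing to the channel that ``merges messages mapped to the same codeword,'' i.e.\ $m\mapsto x^n(m)$, and observe that this ``only decreases the divergence.'' But if the divergence only decreases under that channel, the resulting chain is
\begin{equation*}
\log_2 |M| \leq D_h^\varepsilon\!\left(\rho_{ME^n}\,\middle\Vert\,\rho_M\otimes\hat{\theta}_E^{\otimes n}\right)\geq D_h^\varepsilon\!\left(\rho_{X^nE^n}\,\middle\Vert\,\rho_{X^n}\otimes\hat{\theta}_E^{\otimes n}\right),
\end{equation*}
from which no bound on $\log_2|M|$ in terms of the $X^n$-labeled quantity follows. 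The data-processing inequality needs to be applied in the other direction. Concretely, the stochastic map $X^n\to M$ that assigns to each codeword $x^n$ the uniform distribution over its preimage $\{m: x^n(m)=x^n\}$ sends $\sum_{x^n}p_{X^n}(x^n)\op{x^n}_{X^n}\otimes\theta^{x^n}_{E^n}$ to $\rho_{ME^n}$ and sends $\rho_{X^n}\otimes\hat{\theta}_E^{\otimes n}$ to $\rho_M\otimes\hat{\theta}_E^{\otimes n}$; data processing through \emph{this} channel gives $D_h^\varepsilon(\rho_{ME^n}\Vert\rho_M\otimes\hat{\theta}_E^{\otimes n}) \leq D_h^\varepsilon(\rho_{X^nE^n}\Vert\rho_{X^n}\otimes\hat{\theta}_E^{\otimes n})$, which is the orientation you actually need. (Since both the merging map and this recovery map exist and interchange the pair of states, the divergence is in fact unchanged; and in the usual case of an injective encoder the map is just a bijective relabeling, hence an isometry.) The paper glosses over this relabeling as well, but your parenthetical justification, taken literally, produces the inequality with the wrong sign. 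With that corrected, the remaining bookkeeping — infimum over $\hat{\theta}$ for the fixed protocol, then supremum over $p_{X^n}$ — is sound and coincides with the paper's.
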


\begin{proof}
Proof begins by applying the observation from Proposition~\ref{thm:sim-red}, which allows reduction of any adaptive protocol to a non-adaptive one. If the encoder chooses the message $m$ uniformly at random and places it in a system $M$, the output state in \eqref{eq:out-useful} after Bob's decoding measurement in the actual protocol is
\begin{equation}
\sigma_{M\hat{M}} = \sum_{m,\hat{m}}\frac{1}{|M|}\op{m}_M\otimes \Tr\!\left\{\Gamma_{E^n}^{\hat{m}}\theta^{x^n(m)}_{E^n}\right\}\op{\hat{m}}_{\hat{M}},
\end{equation}
where
\begin{equation}
\theta^{x^n(m)}_{E^n} \equiv 
\bigotimes_{i=1}^n\theta^{x_i(m)}_{E}.
\end{equation}
The success probability $p_{\operatorname{succ}}\coloneqq 1-p_{\operatorname{err}}$ is defined as
\begin{equation}
p_{\operatorname{succ}}\coloneqq \frac{1}{|M|}\sum_{m\in\msc{M}}\Tr\!\left\{\Gamma_{E^n}^{m}\theta^{x^n(m)}_{E^n}\right\}.
\end{equation} 
The output state in \eqref{eq:out-useless} after Bob's decoding measurement in a reading protocol that uses the memory cell $\hat{\msc{E}}$ is 
\begin{equation}
\tau_{M\hat{M}} = \sum_m\frac{1}{|M|}\op{m}_M\otimes\sum_{\hat{m}}\Tr\!\left\{\Gamma_{E^n}^{\hat{m}}\hat{\theta}^{\otimes n}_{E}\right\}\op{\hat{m}}_{\hat{M}}.
\end{equation}
Then a generalized divergence can be bounded as follows:
\begin{align}
& \mathbf{D}\!\(
\{p_{\operatorname{succ}}, 
1-p_{\operatorname{succ}}\} \middle \Vert 
\left\{ 1/|M|, 1-1/|M|\right\}
\) \nonumber\\
&\leq \mathbf{D}\!\(\sigma_{M\hat{M}} \Vert \tau_{M\hat{M}}\) \nonumber\\
& \leq  \mathbf{D}\!\(\sum_{m}\frac{1}{M}\op{m}_M\otimes \theta^{x^n(m)}_{E^n} \left\Vert \sum_m\frac{1}{M}\op{m}_M\otimes\hat{\theta}^{\otimes n}_{E}\)\right.
\end{align}
The first inequality follows from applying the comparator test in \eqref{eq:comparator-test} to
$\sigma_{M\hat{M}}$ and $\tau_{M\hat{M}}$.
The second inequality follows from the data-processing inequality in \eqref{eq:gen-div-mono} as the final measurement is a quantum channel.
Since the above bound holds for all $\hat{\theta}_{E}$, it can be concluded that
\begin{multline}
\mathbf{D}\!\(
\{p_{\operatorname{succ}}, 
1-p_{\operatorname{succ}}\} \middle \Vert 
\left\{ 1/|M|, 1-1/|M|\right\}
\) \leq \\
\inf_{\hat{\theta}}\mathbf{D}\!\(\sum_{m}\frac{1}{|M|}\op{m}_M\otimes \theta^{x^n(m)}_{E^n} \left\Vert \sum_m\frac{1}{|M|}\op{m}_M\otimes\hat{\theta}^{\otimes n}_{E}\)\right.
\end{multline}
Now optimizing over all input distributions, we arrive at the following general bound:
\begin{multline}
\mathbf{D}\!\(
\{p_{\operatorname{succ}}, 
1-p_{\operatorname{succ}}\} \middle \Vert 
\left\{ 1/|M|, 1-1/|M|\right\}
\)\leq \\
\sup_{p_{X^n}}\inf_{\hat{\theta}} \mathbf{D}\!\(\sum_{x^n\in\msc{X}^n}p_{X^n}(x^n)\op{x^n}_{X^n}\otimes \theta^{x^n}_{E^n}\left\Vert \sum_{x^n\in\msc{X}^n}p_{X^n}(x^n)\op{x^n}_{X^n}\otimes\hat{\theta}_{E}^{\otimes n}\)\right.\label{eq:gen-div-bound-JP-cells} ,
\end{multline}
where $x^n:= x_1x_2\cdots x_n$ and $\theta_{E^n}^{x^n}=\bigotimes_{i=1}^n\theta_{E}^{x_i}$.
Observe that the lower bound contains the relevant performance parameters such as success probability and number of messages, while the upper bound is an information quantity, depending exclusively on the memory cell $\msc{E}_\msc{X}$. 

Substituting the hypothesis testing divergence in the above and applying \eqref{eq:eps-div-M}, we obtain the following bound for an $(n,R,\varepsilon)$ reading protocol that uses an environment-parametrized memory cell:
\begin{multline}
\log_2 |M| = nR \leq \\
\sup_{p_{X^n}}\inf_{\hat{\theta}} D_h^{\varepsilon}\!\(\sum_{x^n\in\msc{X}^n}p_{X^n}(x^n)\op{x^n}_{X^n}\otimes \theta^{x^n}_{E^n}\left\Vert \sum_{x^n\in\msc{X}^n}p_{X^n}(x^n)\op{x^n}_{X^n}\otimes\hat{\theta}_{E}^{\otimes n}\)\right.
\end{multline}
This concludes our proof.
\end{proof}

A direct consequence of Lemma~\ref{thm:hyp-test-bound} and \cite[Theorem 4]{TT15} is the following proposition:
\begin{proposition}\label{thm:env-cell-sec-order}
For an $(n,R,\varepsilon)$ quantum reading protocol for an environment-parametrized memory cell $\msc{E}_{\msc{X}}=\{\mc{N}^x\}_{x\in\msc{X}}$ (Definition~\ref{def:env-cell}), the following inequality holds 
\begin{equation}
R\leq  \max_{p_X}I(X;E)_\theta +\sqrt{\frac{V_\varepsilon(\mc{E}_{\msc{X}})}{n}}\Phi^{-1}(\varepsilon)+O\!\(\frac{\log n}{n}\),
\end{equation}
where 
\begin{equation}\label{eq:cq-theta}
\theta_{XE}=\sum_{x\in\msc{X}}p_X(x)\op{x}_X\otimes\theta^x_E,
\end{equation}
$\Phi^{-1}(\varepsilon)$ is the inverse of the cumulative distribution function\footnote{The cumulative distribution function corresponding to the standard normal random variable is defined as
\begin{equation}
\Phi(a)\coloneqq \int_{-\infty}^a\frac{1}{\sqrt{2\pi}}\exp\!\(-\frac{1}{2}x^2\)\ dx.
\end{equation}
Its inverse is also useful for us and is defined as $\Phi^{-1}(a):=\sup\left\{a\in\mathbb{R}|\Phi(a)\leq \varepsilon\right\}$, which reduces to the usual inverse for $\varepsilon\in(0,1)$.}, and
\begin{equation}
V_\varepsilon(\mc{E}_{\msc{X}})= \left\{ 
\begin{tabular}{c c}
$\min_{p_X\in P({\msc{E}_{\msc{X}}})} V(\theta_{XE}\V\theta_X\otimes\theta_{E} )$,\ \ \ \ \  & $\varepsilon \in  (0,1/2]$\\
$\max_{p_X\in P({\msc{E}_{\msc{X}}})} V(\theta_{XE}\V\theta_X\otimes\theta_{E} )$,\ \ \ \ \  & $\varepsilon \in (1/2,1)$
\end{tabular}
\right\},
\end{equation}
where $V(\rho\Vert\sigma)$ denotes the variance between $\rho,\sigma\in\msc{D}(\mc{H})$ and $P({\mc{E}_{\msc{X}}})$ denotes a set $\{p_X\}$ of probability distributions that achieve the maximum in 
$\max_{p_X}I(X;E)_\theta$.
\end{proposition}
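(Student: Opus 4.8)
The plan is to derive the claimed second-order expansion directly from Lemma~\ref{thm:hyp-test-bound} by invoking the known second-order analysis of the hypothesis testing divergence for product-form states, due to Tomamichel and Tan \cite{TT15}. Recall that Lemma~\ref{thm:hyp-test-bound} already gives us
\begin{equation}
nR \leq \sup_{p_{X^n}}\inf_{\hat{\theta}} D_h^{\varepsilon}\!\(\sum_{x^n}p_{X^n}(x^n)\op{x^n}_{X^n}\otimes \theta^{x^n}_{E^n}\left\Vert \sum_{x^n}p_{X^n}(x^n)\op{x^n}_{X^n}\otimes\hat{\theta}_{E}^{\otimes n}\)\right.,
\end{equation}
and the key point is that each $\theta^{x^n}_{E^n}=\bigotimes_{i=1}^n\theta^{x_i}_E$ is a tensor product. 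So the first step is to recognize the right-hand side as (an upper bound on) a quantity of the form treated in the ``composite i.i.d.'' or cq-channel second-order literature, where the relevant classical-quantum channel is $x\mapsto \theta^x_E$ and the first-order term is $\max_{p_X} I(X;E)_\theta$ with $\theta_{XE}$ as in \eqref{eq:cq-theta}.

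First I would restrict the optimization over $p_{X^n}$ in Lemma~\ref{thm:hyp-test-bound} to i.i.d.~input distributions $p_X^{\times n}$; this only decreases the right-hand side relative to the full supremum, but in fact for the converse direction we want the \emph{upper} bound, so more care is needed: I would instead argue (following the standard meta-converse / channel-coding converse argument, cf.~\cite{TT15}) that an arbitrary $p_{X^n}$ can be handled, and that the asymptotically optimal choice reduces to the i.i.d.~one up to the $O((\log n)/n)$ terms. Concretely, the plan is: (i) fix the simulating state $\hat\theta_E$ to be $\theta_E = \sum_x p_X(x)\theta_E^x$ for the relevant maximizing $p_X$ (this is a valid choice since the infimum over $\hat\theta$ only helps the bound); (ii) apply the second-order expansion of $D_h^\varepsilon$ of an $n$-fold tensor power — here the state is $\theta_{XE}^{\otimes n}$ against $(\theta_X\otimes\theta_E)^{\otimes n}$ when inputs are i.i.d. — which by \cite[Theorem~4 or equivalent]{TT15} gives $n D(\theta_{XE}\Vert\theta_X\otimes\theta_E) + \sqrt{n V}\,\Phi^{-1}(\varepsilon) + O(\log n)$; (iii) identify $D(\theta_{XE}\Vert\theta_X\otimes\theta_E) = I(X;E)_\theta$ and optimize over $p_X$, which produces the $\max_{p_X} I(X;E)_\theta$ first-order term and the $V_\varepsilon(\mc{E}_\msc{X})$ variance term with the stated case distinction between $\varepsilon \le 1/2$ and $\varepsilon > 1/2$ (the case split arises because $\Phi^{-1}(\varepsilon)$ changes sign, so to get a valid upper bound one picks the min or max of the variance accordingly).

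The main obstacle I anticipate is the interchange of the $\sup_{p_{X^n}}$ over \emph{arbitrary} (non-i.i.d., possibly correlated) input distributions with the asymptotic expansion: one must show that correlated or non-product $p_{X^n}$ cannot beat the i.i.d.~choice beyond $O((\log n)/n)$. This is exactly the content of the converse part of the second-order coding theorem for classical-quantum channels, and the cleanest route is to cite \cite{TT15} (or the Datta--Tomamichel / Tomamichel--Tan machinery) rather than reprove it — the statement of Proposition~\ref{thm:env-cell-sec-order} explicitly says it is ``a direct consequence of Lemma~\ref{thm:hyp-test-bound} and \cite[Theorem 4]{TT15}.'' So the actual work reduces to: matching notation (our $\theta^x_E$ is their channel output, our $E$ system plays the role of their output system, our $M$ is the message), verifying that the hypothesis-testing lower bound \eqref{eq:eps-div-M} plugs in correctly, and confirming the variance term $V_\varepsilon(\mc{E}_\msc{X})$ is defined with the correct optimization direction in each $\varepsilon$-regime. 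A secondary, more routine obstacle is checking that $\theta_{XE}^{\otimes n}$ and $(\theta_X\otimes\theta_E)^{\otimes n}$ have the support containment and finiteness properties needed for the expansion to apply, which holds here since all systems are finite-dimensional and $\theta_X\otimes\theta_E$ has full support on $\supp(\theta_X)\otimes\mc{H}_E$ whenever $p_X$ has full support on its alphabet — and one can always perturb to ensure this, absorbing the perturbation into the $O((\log n)/n)$ term.
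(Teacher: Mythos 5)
Your proposal is correct and takes essentially the same route as the paper: the paper itself gives no detailed argument, simply noting that the proposition is ``a direct consequence of Lemma~\ref{thm:hyp-test-bound} and \cite[Theorem 4]{TT15}.'' Your outline fills in the same two ingredients — the hypothesis-testing bound from Lemma~\ref{thm:hyp-test-bound} followed by the Tomamichel--Tan second-order expansion — and correctly identifies the real technical content as residing in the cited theorem (in particular the treatment of arbitrary non-i.i.d.\ $p_{X^n}$ and the $\varepsilon \lessgtr 1/2$ case split for the variance term), so there is nothing substantive to add.
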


\begin{proposition}\label{thm:env-cell-strong}
The success probability $p_{\operatorname{succ}}$ of any $(n,R,\varepsilon)$ quantum reading protocol for an environment-parametrized memory cell $\msc{E}_\msc{X}$ (Definition~\ref{def:env-cell}) is bounded
from above as 
\begin{equation}
p_{\operatorname{succ}}\leq 2^{-n\sup_{\alpha>1}\(1-\frac{1}{\alpha}\) \(R-\wt{I}_\alpha(\msc{E}_\msc{X})\)},
\end{equation}
where
\begin{equation}
\wt{I}_\alpha(\msc{E}_\msc{X})=\max_{p_X}\wt{I}_\alpha(X;E)_\theta,
\end{equation}
for $\theta_{XE}$ as defined in \eqref{eq:cq-theta}.
\end{proposition}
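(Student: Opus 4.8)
The plan is to combine the adaptive-to-non-adaptive reduction in Proposition~\ref{thm:sim-red} with a sandwiched R\'enyi relative entropy bound, mirroring the structure of the proof of Lemma~\ref{thm:hyp-test-bound} but with the hypothesis-testing divergence replaced by $\wt{D}_\alpha$ for $\alpha>1$. First I would invoke Proposition~\ref{thm:sim-red} to replace the adaptive reading protocol with a non-adaptive decoding of the ancillary states $\theta^{x^n(m)}_{E^n}=\bigotimes_{i=1}^n\theta^{x_i(m)}_E$ via some POVM $\{\Gamma^{\hat m}_{E^n}\}$. Then, exactly as in \eqref{eq:out-useful}--\eqref{eq:out-useless}, I would form the classical-classical states $\sigma_{M\hat M}$ (the true output after decoding) and $\tau_{M\hat M}$ (the output of the ``message-less'' protocol using the fixed augmented cell $\hat{\msc E}$ with environment state $\hat\theta_E$), and apply the comparator test $\{\Pi_{M\hat M},\bm{1}-\Pi_{M\hat M}\}$ from \eqref{eq:comparator-test}.

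\textbf{Key steps.} The core chain of inequalities runs: for any $\alpha>1$,
\begin{align}
\wt{D}_\alpha\!\(\{p_{\operatorname{succ}},1-p_{\operatorname{succ}}\}\,\middle\|\,\{1/|M|,1-1/|M|\}\)
&\leq \wt{D}_\alpha\!\(\sigma_{M\hat M}\,\|\,\tau_{M\hat M}\) \nonumber\\
&\leq \wt{D}_\alpha\!\(\sum_m\tfrac{1}{|M|}\op{m}_M\otimes\theta^{x^n(m)}_{E^n}\,\middle\|\,\sum_m\tfrac{1}{|M|}\op{m}_M\otimes\hat\theta_E^{\otimes n}\),
\end{align}
where the first inequality is data processing under the comparator test and the second is data processing under the decoding measurement channel $\Gamma$ (both valid since $\alpha>1\geq 1/2$, by the lemma of \cite{FL13,Bei13} cited in the excerpt). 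Taking the infimum over $\hat\theta_E$ and then the supremum over input distributions $p_{X^n}$, the right side is at most $n\,\wt{I}_\alpha(\msc E_\msc X)$; this uses additivity/subadditivity of $\wt{I}_\alpha$ over tensor products of the cell together with the fact that optimizing a product ancillary ensemble reduces to a single-letter quantity (here one should note the input register $X^n$ is classical, so the relevant optimization tensorizes, giving $\max_{p_{X^n}}\inf_{\hat\theta}\wt{D}_\alpha(\cdots)\leq n\max_{p_X}\wt{I}_\alpha(X;E)_\theta$). For the left side, I would compute the two-point sandwiched R\'enyi divergence explicitly: $\wt{D}_\alpha(\{s,1-s\}\|\{t,1-t\})=\tfrac{1}{\alpha-1}\log_2(s^\alpha t^{1-\alpha}+(1-s)^\alpha(1-t)^{1-\alpha})\geq \tfrac{1}{\alpha-1}\log_2(p_{\operatorname{succ}}^\alpha |M|^{-(1-\alpha)})$ with $s=p_{\operatorname{succ}}$, $t=1/|M|$, discarding the nonnegative second term. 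Rearranging $\tfrac{1}{\alpha-1}\log_2(p_{\operatorname{succ}}^\alpha |M|^{\alpha-1})\leq n\wt{I}_\alpha(\msc E_\msc X)$ and using $\log_2|M|=nR$ yields $p_{\operatorname{succ}}\leq 2^{-n(1-1/\alpha)(R-\wt{I}_\alpha(\msc E_\msc X))}$; optimizing over $\alpha>1$ gives the claimed bound.

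\textbf{Main obstacle.} The routine parts are the data-processing applications and the explicit two-point divergence computation. The step requiring the most care is the single-letterization: showing $\sup_{p_{X^n}}\inf_{\hat\theta_{E^n}}\wt{D}_\alpha(\cdots)\leq n\sup_{p_X}\inf_{\hat\theta_E}\wt{D}_\alpha(\theta_{XE}\|\theta_X\otimes\hat\theta_E)$. The cleanest route is to restrict the infimum in the $n$-letter quantity to product states $\hat\theta_E^{\otimes n}$ (legitimate, since it only weakens the bound in the right direction for a converse), and then use that for a classical--quantum state of product form the sandwiched R\'enyi mutual information is additive, together with the observation that $\wt{I}_\alpha$ of the $n$-fold memory cell with product input equals $n$ times the single-shot $\wt{I}_\alpha(\msc E_\msc X)$ at the optimal single-letter $p_X$. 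I would also double-check that quasi-convexity or the relevant structural property of $\wt{D}_\alpha$ lets the $\sup$--$\inf$ pass through correctly; if a subtlety arises there, one can instead fix the optimal product input and a product $\hat\theta_E$ achieving the single-letter optimum and simply evaluate, which suffices for the upper bound. Finally I would remark that taking $\alpha\to 1$ recovers a weak-converse statement consistent with Proposition~\ref{thm:env-cell-sec-order}, and taking $n\to\infty$ shows the strong converse property: $R>\wt{I}_1(\msc E_\msc X)=\max_{p_X}I(X;E)_\theta$ forces $p_{\operatorname{succ}}\to 0$ exponentially fast.
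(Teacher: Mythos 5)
Your proposal is correct and takes essentially the same approach as the paper: the paper's proof simply cites the generalized-divergence bound \eqref{eq:gen-div-bound-JP-cells} (which was established in the proof of Lemma~\ref{thm:hyp-test-bound} by precisely the chain you rederive—adaptive-to-non-adaptive reduction, comparator test, data processing) and then invokes the main result of \cite{WWY14} for the $\wt{D}_\alpha$-specific single-letterization and rearrangement. Your explicit two-point calculation and the rearrangement to $p_{\operatorname{succ}}\leq 2^{-n(1-1/\alpha)(R-\wt{I}_\alpha(\msc{E}_\msc{X}))}$ are also correct. The one technical nuance the paper flags that you should make explicit: the additivity of the classical--quantum sandwiched R\'enyi mutual information that powers your single-letterization step is proved in \cite{WWY14} only for $\alpha\in(1,2]$; the claim for the full range $\alpha\in(1,\infty)$, as stated in the proposition, relies on the extension argument of \cite{DW15}.
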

\begin{proof}
A proof follows by combining the bound in \eqref{eq:gen-div-bound-JP-cells} with the main result of \cite{WWY14} (see also
\cite{DW15}
for arguments about extending the range of $\alpha$ from $(1,2]$ to $(1,\infty)$). 
\end{proof}

\begin{theorem}\label{thm:env-cell-qrc}
The quantum reading capacity of any environment-parametrized memory cell $\msc{E}_{\msc{X}}=\{\mc{N}^x_{B'\to B}\}_{x\in\msc{X}}$ (Definition~\ref{def:env-cell}) is bounded from above as
\begin{equation}
\mc{C}(\msc{E}_{\msc{X}})\leq\max_{p_X}I(X;E)_\theta,
\end{equation} 
where  $\theta_{XE}$ is defined in \eqref{eq:cq-theta}.
\end{theorem}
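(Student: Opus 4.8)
The plan is to derive Theorem~\ref{thm:env-cell-qrc} as an asymptotic corollary of the one-shot hypothesis-testing bound in Lemma~\ref{thm:hyp-test-bound}. Recall that for an $(n,R,\varepsilon)$ quantum reading protocol using an environment-parametrized memory cell, Lemma~\ref{thm:hyp-test-bound} gives
\begin{equation}
nR\leq \sup_{p_{X^n}}\inf_{\hat{\theta}} D_h^{\varepsilon}\!\(\sum_{x^n}p_{X^n}(x^n)\op{x^n}_{X^n}\otimes \theta^{x^n}_{E^n}\left\Vert \sum_{x^n}p_{X^n}(x^n)\op{x^n}_{X^n}\otimes\hat{\theta}_{E}^{\otimes n}\)\right. .
\end{equation}
First I would restrict the infimum over $\hat{\theta}_E$ to a suboptimal choice in order to get a simpler upper bound. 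A natural choice is to fix a distribution $p_X$ that attains $\max_{p_X} I(X;E)_\theta$ and set $\hat{\theta}_E=\theta_E=\sum_x p_X(x)\theta^x_E$; more generally one keeps the infimum but uses that for product input distributions $p_{X^n}=p_X^{\otimes n}$ the relevant classical-quantum state is $\theta_{XE}^{\otimes n}$ (with $\theta_{XE}$ as in \eqref{eq:cq-theta}), so that the reference state factorizes as $(\theta_X\otimes\theta_E)^{\otimes n}$.

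The key analytic input is the quantum Stein's lemma (the asymptotic equipartition property for the hypothesis-testing relative entropy): for any $\varepsilon\in(0,1)$,
\begin{equation}
\lim_{n\to\infty}\frac{1}{n}D_h^{\varepsilon}\!\(\rho^{\otimes n}\V\sigma^{\otimes n}\)=D(\rho\V\sigma).
\end{equation}
Applying this with $\rho=\theta_{XE}$ and $\sigma=\theta_X\otimes\theta_E$ yields
\begin{equation}
\limsup_{n\to\infty}\frac{1}{n}D_h^{\varepsilon}\!\(\theta_{XE}^{\otimes n}\V(\theta_X\otimes\theta_E)^{\otimes n}\)= D(\theta_{XE}\V\theta_X\otimes\theta_E)=I(X;E)_\theta,
\end{equation}
using the identity $\inf_{\sigma_E} D(\theta_{XE}\V\theta_X\otimes\sigma_E)=I(X;E)_\theta$ to justify that the infimum over $\hat\theta_E$ does not help beyond $\theta_E$. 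Since the supremum over $p_{X^n}$ in Lemma~\ref{thm:hyp-test-bound} could in principle use correlated (non-product) input distributions, I would handle this either by invoking the additivity of the mutual information of the classical-quantum channel $x\mapsto\theta^x_E$ — so that the $n$-letter optimization collapses to $n$ times the single-letter quantity — or, more carefully, by using a finite-blocklength upper bound on $D_h^\varepsilon$ (e.g. the one behind Proposition~\ref{thm:env-cell-sec-order}) which already contains $\max_{p_X} I(X;E)_\theta$ as its leading term. Taking $n\to\infty$ then gives, for any achievable rate $R$ and any $\varepsilon$, the bound $R-\delta\leq \max_{p_X} I(X;E)_\theta$, and letting $\delta\to 0$ finishes the proof. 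In fact, Proposition~\ref{thm:env-cell-sec-order} already implies this: dividing its statement by $1$ and taking $n\to\infty$ with $\varepsilon$ fixed kills the $O(1/\sqrt n)$ and $O(\log n/n)$ corrections, leaving $R\leq\max_{p_X} I(X;E)_\theta$; so the cleanest route is simply to cite Proposition~\ref{thm:env-cell-sec-order}.

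The main obstacle I anticipate is the passage from the multi-letter optimization $\sup_{p_{X^n}}$ to the single-letter quantity $\max_{p_X}I(X;E)_\theta$: one must argue that correlated input distributions across the $n$ channel uses give no asymptotic advantage. This is where additivity of the Holevo-type information of the cq-channel $\{\theta^x_E\}_x$ (equivalently, of the relative entropy of the relevant cq-states) is used, together with the subadditivity/additivity structure already packaged into the second-order expansion; provided one is content to invoke Proposition~\ref{thm:env-cell-sec-order} as a black box, this difficulty is already resolved there and the remaining argument is just taking the limit $n\to\infty$ and then $\delta\to 0$.
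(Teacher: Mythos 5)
Your proposal is correct and ultimately lands on exactly the paper's argument: Theorem~\ref{thm:env-cell-qrc} is obtained by taking the limit $n\to\infty$ in Proposition~\ref{thm:env-cell-sec-order} (the paper also notes an alternative route via Proposition~\ref{thm:env-cell-strong} with $\alpha\to 1$). The additivity/single-letterization worry you raise is legitimate but is already resolved inside Proposition~\ref{thm:env-cell-sec-order} through the second-order expansion of \cite{TT15}, so the remaining detour through Stein's lemma and the $\hat{\theta}_E$ optimization, while sound, is unnecessary once you decide to cite that proposition.
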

\begin{proof}
The statement follows from Proposition~\ref{thm:env-cell-sec-order}, by taking the limit $n\to \infty$. Alternatively, 
the statement can also be concluded from Definition~\ref{def:capacity} and Proposition~\ref{thm:env-cell-strong}, by taking the limit $\alpha \to 1$.
\end{proof}

\bigskip
Direct consequences of the above theorems and Remark~\ref{rem:env-tel} are the following corollaries: 
\begin{corollary}\label{thm:rate-2nd}
For any $(n,R,\varepsilon)$ quantum reading protocol and jointly teleportation-simulable memory cell $\msc{T}_{\msc{X}}$ (Definition~\ref{def:tel-sim}) with
associated  resource states $\{\omega^x_{RB}\}_{x\in\msc{X}}$, the reading rate $R$ is bounded from above as
\begin{equation}
R\leq\max_{p_X}I(X;RB)_{\omega}+\sqrt{\frac{V_\varepsilon(\msc{T}_{\msc{X}})}{n}}\Phi^{-1}(\varepsilon)+\mc{O}\(\frac{\log n}{n}\),
\end{equation}
where
\begin{equation}
\omega_{XRB}\coloneqq \sum_{x\in\msc{X}}p_X(x)\op{m}_X\otimes\omega^x_{RB}\label{eq:cq-resource-state}
\end{equation}
and
\begin{equation}
V_\varepsilon(\msc{T}_{\msc{X}})\coloneqq \left\{ 
\begin{tabular}{c c}
$\min_{p_X\in P(\msc{T}_{\msc{X}})} V\(\omega_{XRB}\V\omega_X\otimes\omega_{RB} \)$,\ \ \ \ \  & $\varepsilon \in  (0,1/2]$\\
$\max_{p_X\in P(\msc{T}_{\msc{X}})} V\(\omega_{XRB}\V\omega_X\otimes\omega_{RB} \)$,\ \ \ \ \  & $\varepsilon \in (1/2,1)$
\end{tabular}
\right\}.
\end{equation}
In the above, $P(\msc{T}_\msc{X})$ denotes a set $\{p_X\}$ of probability distributions that are optimal for $\max_{p_X}I(X;RB)_{\omega}$.
\end{corollary}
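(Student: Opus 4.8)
\textbf{Proof proposal for Corollary~\ref{thm:rate-2nd}.} The plan is to show that this corollary follows directly from Proposition~\ref{thm:env-cell-sec-order} by combining it with Remark~\ref{rem:env-tel}. The key observation is that a jointly teleportation-simulable memory cell is a particular instance of an environment-parametrized memory cell: by Definition~\ref{def:tel-cell}, for each $x \in \msc{X}$ we have $\mc{T}^x_{B'\to B}(\rho_{B'}) = \mc{L}_{B'RB\to B}(\rho_{B'}\otimes\omega^x_{RB})$, so we may identify the composite system $RB$ with the environment system $E$ in Definition~\ref{def:env-cell}, the resource states $\{\omega^x_{RB}\}_{x\in\msc{X}}$ with the ancillary states $\{\theta^x_E\}_{x\in\msc{X}}$, and the LOCC channel $\mc{L}_{B'RB\to B}$ with the interaction channel $\mc{F}_{B'E\to B}$. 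This is exactly the content of Remark~\ref{rem:env-tel}.

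Having made this identification, I would invoke Proposition~\ref{thm:env-cell-sec-order} with $E \equiv RB$ and $\theta^x_E \equiv \omega^x_{RB}$. The classical-quantum state $\theta_{XE}$ in \eqref{eq:cq-theta} becomes precisely $\omega_{XRB}$ as defined in \eqref{eq:cq-resource-state}, namely $\omega_{XRB} = \sum_{x\in\msc{X}}p_X(x)\op{x}_X\otimes\omega^x_{RB}$. Under this substitution, the mutual information term $\max_{p_X} I(X;E)_\theta$ becomes $\max_{p_X} I(X;RB)_\omega$, the variance quantity $V(\theta_{XE}\Vert\theta_X\otimes\theta_E)$ becomes $V(\omega_{XRB}\Vert\omega_X\otimes\omega_{RB})$, and the set $P(\msc{E}_\msc{X})$ of capacity-achieving distributions becomes $P(\msc{T}_\msc{X})$, the set of distributions optimal for $\max_{p_X}I(X;RB)_\omega$. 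The definition of $V_\varepsilon(\msc{T}_\msc{X})$ then matches the definition of $V_\varepsilon(\mc{E}_\msc{X})$ term-for-term. The bound from Proposition~\ref{thm:env-cell-sec-order},
\begin{equation}
R\leq  \max_{p_X}I(X;E)_\theta +\sqrt{\frac{V_\varepsilon(\mc{E}_{\msc{X}})}{n}}\Phi^{-1}(\varepsilon)+O\!\(\frac{\log n}{n}\),
\end{equation}
translates immediately into the claimed inequality.

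There is essentially no genuine obstacle here; this is a bookkeeping argument that re-expresses an already-established bound in a notation adapted to the teleportation-simulable setting. The one point that warrants a sentence of care is confirming that the identification $E \equiv RB$ is legitimate, i.e., that the reference system $R$ in the resource state together with the channel-output register $B$ can jointly play the role of a single ``environment'' system $E$; this is unproblematic since Definition~\ref{def:env-cell} places no structural restriction on $E$ beyond its being a quantum system, and Remark~\ref{rem:env-tel} already records this reduction. I would therefore present the proof as: ``The statement follows by combining Proposition~\ref{thm:env-cell-sec-order} with Remark~\ref{rem:env-tel}, upon identifying the environment system $E$ with $RB$ and the ancillary states $\theta^x_E$ with the resource states $\omega^x_{RB}$.''
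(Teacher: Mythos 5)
Your proposal is correct and matches the paper's own reasoning: the paper introduces this corollary with ``Direct consequences of the above theorems and Remark~\ref{rem:env-tel} are the following corollaries,'' i.e., it follows immediately from Proposition~\ref{thm:env-cell-sec-order} once the jointly teleportation-simulable cell is viewed as environment-parametrized with $E \equiv RB$ and $\theta^x_E \equiv \omega^x_{RB}$. Your extra sentence checking that the identification $E \equiv RB$ is legitimate is the right point to pause on, and your conclusion is the same.
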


\begin{corollary}
The quantum reading capacity of any jointly teleportation-simulable memory cell $\msc{T}_{\msc{X}}=\{\mc{T}^x_{B'\to B}\}_{x\in\msc{X}}$ associated with a set $\{\omega^x_{RB}\}$ of resource states is bounded from above as
\begin{equation}
\mc{C}(\msc{T}_{\msc{X}})\leq\max_{p_X}I(X;RB)_\omega,
\end{equation} 
where
\begin{equation}\label{eq:omega-XRB}
\omega_{XRB}=\sum_{x\in\mc{X}}p_X(x)\op{x}_X\otimes\omega^x_{RB}.
\end{equation}
\end{corollary}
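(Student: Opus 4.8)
The plan is to derive the corollary as a direct specialization of Theorem~\ref{thm:env-cell-qrc}, using the fact (Remark~\ref{rem:env-tel}) that every jointly teleportation-simulable memory cell is environment-parametrized. First I would recall from Definition~\ref{def:tel-cell} that a jointly teleportation-simulable memory cell $\msc{T}_{\msc{X}}=\{\mc{T}^x_{B'\to B}\}_{x\in\msc{X}}$ satisfies $\mc{T}^x_{B'\to B}(\rho_{B'})=\mc{L}_{B'RB\to B}(\rho_{B'}\otimes\omega^x_{RB})$ for a fixed LOCC channel $\mc{L}_{B'RB\to B}$ and resource states $\{\omega^x_{RB}\}_{x\in\msc{X}}$. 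Comparing with Definition~\ref{def:env-cell}, this is precisely the environment-parametrized form with the composite system $RB$ playing the role of the environment $E$, the resource states $\omega^x_{RB}$ playing the role of the ancillary states $\theta^x_E$, and the LOCC channel $\mc{L}_{B'RB\to B}$ playing the role of the interaction channel $\mc{F}_{B'E\to B}$ (an LOCC channel is in particular a quantum channel).

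Given this identification, Theorem~\ref{thm:env-cell-qrc} immediately yields $\mc{C}(\msc{T}_{\msc{X}})\leq\max_{p_X}I(X;RB)_\omega$, where the classical-quantum state is obtained by substituting $E\mapsto RB$ and $\theta^x_E\mapsto\omega^x_{RB}$ into \eqref{eq:cq-theta}, giving exactly $\omega_{XRB}=\sum_{x\in\msc{X}}p_X(x)\op{x}_X\otimes\omega^x_{RB}$ as in \eqref{eq:omega-XRB}. So the proof is essentially a one-line invocation: apply Remark~\ref{rem:env-tel} to see that $\msc{T}_{\msc{X}}$ falls under the hypotheses of Theorem~\ref{thm:env-cell-qrc}, then read off the bound with the environment system instantiated as $RB$.

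There is no real obstacle here; the only thing to be careful about is bookkeeping of which system is which, i.e.\ making sure the mutual information $I(X;E)_\theta$ in Theorem~\ref{thm:env-cell-qrc} really does become $I(X;RB)_\omega$ under the substitution, and that the optimization over input distributions $p_X$ is unaffected by this relabeling. One could alternatively bypass Theorem~\ref{thm:env-cell-qrc} and instead obtain the statement as the $n\to\infty$ limit of Corollary~\ref{thm:rate-2nd} (which is itself the teleportation-simulable specialization of Proposition~\ref{thm:env-cell-sec-order}), since the second- and higher-order terms there vanish in the limit; but the cleanest route is the direct appeal to Theorem~\ref{thm:env-cell-qrc} together with Remark~\ref{rem:env-tel}.

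\begin{proof}
By Remark~\ref{rem:env-tel}, any jointly teleportation-simulable memory cell is environment-parametrized. Indeed, comparing Definition~\ref{def:tel-cell} with Definition~\ref{def:env-cell}, the memory cell $\msc{T}_{\msc{X}}=\{\mc{T}^x_{B'\to B}\}_{x\in\msc{X}}$ with $\mc{T}^x_{B'\to B}(\rho_{B'})=\mc{L}_{B'RB\to B}(\rho_{B'}\otimes\omega^x_{RB})$ is environment-parametrized with environment system $E\equiv RB$, ancillary states $\theta^x_E\equiv\omega^x_{RB}$, and interaction channel $\mc{F}_{B'E\to B}\equiv\mc{L}_{B'RB\to B}$. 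Applying Theorem~\ref{thm:env-cell-qrc} with this identification, we obtain
\begin{equation}
\mc{C}(\msc{T}_{\msc{X}})\leq\max_{p_X}I(X;RB)_\omega,
\end{equation}
where, by substituting $E\mapsto RB$ and $\theta^x_E\mapsto\omega^x_{RB}$ into \eqref{eq:cq-theta},
\begin{equation}
\omega_{XRB}=\sum_{x\in\mc{X}}p_X(x)\op{x}_X\otimes\omega^x_{RB}.
\end{equation}
This completes the proof.
\end{proof}
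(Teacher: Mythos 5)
Your proof is correct and follows exactly the paper's intended route: the paper itself presents this corollary (alongside Corollary~\ref{thm:rate-2nd}) as a "direct consequence of the above theorems and Remark~\ref{rem:env-tel}," which is precisely the chain you spell out — teleportation-simulable implies environment-parametrized with $E\equiv RB$, $\theta^x_E\equiv\omega^x_{RB}$, $\mc{F}_{B'E\to B}\equiv\mc{L}_{B'RB\to B}$, then invoke Theorem~\ref{thm:env-cell-qrc}. Nothing is missing.
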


The capacity bounds given above are tight for a wide variety of channels, as clarified in the following remark:

\begin{remark}\label{rem:cov-cell-achievability}
The quantum reading capacity is achieved for a jointly teleportation-simulable memory cell 
$\msc{T}_{\msc{X}}=\{\mc{T}^x_{B'\to B}\}_{x\in\msc{X}}$
when, for all $x\in\msc{X}$, $\omega^x_{RB}$ is equal to the Choi state of the channel $\mc{T}^x_{B'\to B}$. More finely, the upper bound in Corollary~\ref{thm:rate-2nd} is achieved in such a case by invoking \cite[Theorem 4]{TT15}.
\end{remark}

\subsection{Weak converse bound for a non-adaptive reading protocol}

In this section, we establish a general weak converse when the strategy employed is non-adaptive.
Consider a state $\rho_{MRB'^{n}}$ of the form
\begin{equation}
\rho_{MRB'^n}=\frac{1}{|M|}\sum_{m_{\msc{M}}}\op{m}_M\otimes\rho_{RB'^n}.
\end{equation}
Suppose that $\rho_{RB'^n}$ is purified by the pure state
$\psi_{RSB'^n}$.
Bob passes the transmitter state $\rho_{RB'^n}$ through a codeword sequence $\mathcal{N}^{x^n(m)}_{B'^n\to B^n}\coloneqq \bigotimes_{i=1}^n\mc{N}^{x_i(m)}_{B'_i\to B_i}$, where the choice $m$ depends on the classical value~$m\in\msc{M}$ in the register $M$. Let $\mathcal{U}^{\mc{N}^{x^n(m)}}_{B'^n\to B^n E^n}\coloneqq \bigotimes_{i=1}^n\mathcal{U}^{\mc{N}^{x_i(m)}}_{B'_i\to B_i E_i} $, where $\mathcal{U}^{\mc{N}^{x_i(m)}}_{B'_i\to B_i E_i}$ denotes an isometric quantum channel extending ${\mc{N}}^{x_i(m)}_{B'_i\to B_i}$, for all $i\in [n]$. After the isometric channel acts, the overall state is as follows:
\begin{equation}
\sigma _{MRSB^nE^n}=
\frac{1}{|M|}\sum_{m}\op{m}_M\otimes
\mathcal{U}^{\mc{N}^{x^n(m)}}_{B'^n\to B^n E^n}\left(\psi_{RSB'^n}\right).
\end{equation}
Let $\sigma^\prime_{M\hat{M}}=\mc{D}_{RB^n\to \hat{M}}\(\sigma_{MRB^n}\)$ be the output state at the end of protocol after the decoding measurement $\mc{D}$ is performed by Bob. 
Let $\overline{\Phi}_{M\hat{M}}$ denote the maximally classically correlated state:
\begin{equation}\label{eq:cor-state}
\overline{\Phi}_{M\hat{M}}:=\frac{1}{|M|}\sum_{m\in\msc{M}}\op{m}_M\otimes\op{m}_{\hat{M}}.
\end{equation}

\begin{proposition}
\label{thm:weak-converse-non-adaptive}
The non-adaptive reading capacity of any quantum memory cell $\msc{S}_{\msc{X}}=\{\mc{N}^x\}_{\mc{X}}$ is upper bounded as
\begin{equation}
\mc{C}_{\textnormal{non-adaptive}}(\msc{S}_{\msc{X}})\leq \sup_{p_X,\phi_{RB'}}I(XR;B)_\tau,
\end{equation}
where
\begin{align}
\tau_{XRB}&=\sum_{x}p_X(x)\op{x}_X\otimes\mc{N}^x_{B'\to B}(\phi_{R B'}),
\end{align}
and it suffices for $\phi_{R B'}$ to be a pure state such that $\dim(\mc{H}_R) = \dim(\mc{H}_B')$. 
\end{proposition}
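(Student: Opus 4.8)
The plan is to run a standard weak-converse argument for a non-adaptive reading protocol, using the maximally classically correlated state $\overline{\Phi}_{M\hat M}$ as the benchmark against which the protocol output $\sigma'_{M\hat M}$ is compared, and then bounding the resulting mutual information by a single-letterized quantity via superadditivity of quantum mutual information. First I would recall that for an $(n,R,\varepsilon)$ protocol we have $\frac{1}{2}\Vert \sigma'_{M\hat M} - \overline{\Phi}_{M\hat M}\Vert_1 \le \varepsilon$ by the success-probability condition, so that by the AFW inequality (Lemma~\ref{thm:AFW}, in its cq form with $M$ classical) the conditional entropies are close:
\begin{equation}
\left| S(M|\hat M)_{\sigma'} - S(M|\hat M)_{\overline{\Phi}} \right| \le \varepsilon \log_2 |M| + g(\varepsilon),
\end{equation}
and since $S(M|\hat M)_{\overline\Phi} = 0$, this gives $I(M;\hat M)_{\sigma'} \ge (1-\varepsilon)\log_2|M| - g(\varepsilon)$.

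Next I would apply the data-processing inequality along the chain $M \to RB^n \to \hat M$: since $\sigma'_{M\hat M}$ is obtained from $\sigma_{MRB^n}$ by Bob's decoding measurement $\mc{D}_{RB^n\to\hat M}$, we get $I(M;\hat M)_{\sigma'} \le I(M;RB^n)_{\sigma}$. Now I would expand $I(M;RB^n)_\sigma = I(MR;B^n)_\sigma - I(R;B^n)_\sigma + \ldots$ — more cleanly, since the reference system $R$ (together with $S$) is in tensor product with $M$ at the input and only the codeword channels act, I would use $I(M;RB^n)_\sigma \le I(MR;B^n)_\sigma$, because $R$ is correlated with $M$ only through $B^n$; more precisely $I(M;RB^n) = I(M;R) + I(M;B^n|R) = I(M;B^n|R) \le I(MR;B^n)$, using $I(M;R)=0$. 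Then, defining the cq-type state $\tau_{XRB}$ as in the statement with $X$ a uniform random variable over the codeword letters and $\phi_{RB'}$ the (reduced) transmitter state, I would invoke superadditivity of quantum mutual information for product channels: writing $\mc{N}^{x^n(m)} = \bigotimes_i \mc{N}^{x_i(m)}$, one has $I(MR;B^n)_\sigma \le \sum_{i=1}^n I(X_iR_i;B_i)$ for suitable single-letter states, which by the concavity/optimization structure is at most $n \sup_{p_X,\phi_{RB'}} I(XR;B)_\tau$. Restricting $\phi_{RB'}$ to pure states with $\dim\mc{H}_R=\dim\mc{H}_{B'}$ is justified by purification and the Schmidt decomposition, exactly as in the discussion around \eqref{eq:ent-mes-channel}. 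Combining, $(1-\varepsilon)\log_2|M| - g(\varepsilon) \le n \sup_{p_X,\phi_{RB'}} I(XR;B)_\tau$, i.e. $(1-\varepsilon) R \le \sup_{p_X,\phi_{RB'}} I(XR;B)_\tau + \frac{1}{n}g(\varepsilon)$, and taking $n\to\infty$ then $\varepsilon\to 0$ (as in the achievable-rate definition) yields the claimed bound.

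The main obstacle I anticipate is the single-letterization step: making precise the claim that $I(MR;B^n)_\sigma$, where the message register $M$ controls which product codeword is applied, is bounded by $n$ times a single-letter mutual information of the form $I(XR;B)_\tau$. The subtlety is that different codewords $x^n(m)$ use different channel sequences, so one must treat the letter at position $i$ as a random variable $X_i$ (with distribution induced by the codebook and the uniform prior on $m$), introduce these as classical registers, use the chain rule together with the fact that the channels act independently across positions, and then absorb the position-dependent input states into a supremum over a single input state $\phi_{RB'}$ and a single input distribution $p_X$; concavity of $I(XR;B)_\tau$ in the relevant arguments (or simply passing to the best position) closes this. The remaining steps — AFW continuity, data processing, purification/Schmidt reduction of $R$ — are routine and appear in essentially this form in \cite{KW17} and the point-to-point literature, so I would cite those rather than reprove them.
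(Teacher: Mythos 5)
Your opening steps (AFW continuity, data processing, and $I(M;RB^n)_\sigma \le I(MR;B^n)_\sigma$ via $I(M;R)=0$) are fine, as is the reduction of $\phi_{RB'}$ to a pure state with bounded reference dimension via the Schmidt decomposition. The gap is the single-letterization, exactly where you flag it, and your proposed fix does not close it. The chain rule gives $I(MR;B^n)_\sigma \le \sum_{i=1}^n I(MRB^{i-1};B_i)_\sigma$, but the state on $RB^{i-1}B'_i$ \emph{conditional on $M=m$} has already been processed by the first $i-1$ channels $\mc{N}^{x_1(m)},\dots,\mc{N}^{x_{i-1}(m)}$ and therefore depends on $m$. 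The target quantity $I(XR;B)_\tau$ requires a single $m$-independent input state $\phi_{RB'}$ fed into $\mc{N}^x$ regardless of $x$; ``absorbing the position-dependent input states into a supremum over a single $\phi_{RB'}$'' is not available because the per-position reference states genuinely differ across messages, not merely across positions. No off-the-shelf additivity or superadditivity of $I(\mc{N}^{\otimes n})$ applies, since the channel at each position is itself controlled by the classical register $M$.

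The paper closes this gap with a purification-plus-duality argument that your outline lacks. Purify $\rho_{RB'^n}$ to $\psi_{RSB'^n}$ and use isometric extensions $\mc{U}^{\mc{N}^{x_i(m)}}_{B'_i\to B_iE_i}$, so that conditional on $M=m$ the global state on $RSB^nE^n$ is pure. Then $I(M;B^n|RS)_\sigma = S(B^n|RS)_\sigma - S(B^n|RSM)_\sigma = S(B^n|RS)_\sigma + S(B^n|E^nM)_\sigma$ by duality of conditional entropy; both terms single-letterize by subadditivity to $\sum_i\bigl[S(B_i|RS)_\sigma + S(B_i|E_iM)_\sigma\bigr]$; and a \emph{second} application of duality per position converts $S(B_i|E_iM)_\sigma$ into $-S(B_i|RSB'_{[n]\setminus\{i\}}M)$ evaluated on the state in which only the $i$-th channel has acted, whose reference state is the fixed pure input $\psi_{RSB'^n}$ and hence independent of $m$. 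Only then is each summand $I(MB'_{[n]\setminus\{i\}}RS;B_i)$ honestly of the form $I(X\wt{R};B)_\tau$, so that averaging over positions via an auxiliary register $Q$ and taking the supremum completes the bound. Without the purification and the two duality steps, you never land on an $m$-independent reference state, and the single-letterization does not go through.
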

\begin{proof}
For any $(n,R,\varepsilon)$ quantum reading protocol using a non-adaptive strategy, one has
\begin{equation}
\frac{1}{2}\left\Vert \overline{\Phi}_{M\hat{M}}-\sigma^\prime_{M\hat{M}}\right\Vert_1\leq \varepsilon. 
\end{equation}
Then consider the following chain of inequalities:
\begin{align}
\log_2|M|&=I(M;\hat{M})_{\overline{\Phi}}\\
& \leq I(M;\hat{M})_{\sigma^\prime}+f(n,\varepsilon)\\
& \leq I(M;RSB^n)_\sigma + f(n,\varepsilon)\\
& = I(M;RS)_\sigma +I(M;B^n|RS)_\sigma + f(n,\varepsilon)\\
&=I(M;B^n|RS)_\sigma + f(n,\varepsilon)\\
&=S(B^n|RS)_\sigma-S(B^n|RSM)_\sigma + f(n,\varepsilon)\\
&=S(B^n|RS)_\sigma+S(B^n|E^nM)_\sigma + f(n,\varepsilon) \label{eq:converse-1st-block-last-step}
\end{align}
The first inequality follows from the uniform continuity of conditional entropy \cite{AF04,Win16}, where $f(n,\varepsilon)$ is a function of $n$ and the error probability $\varepsilon$ such that $\lim_{\varepsilon\to 0}\lim_{n\to \infty}\frac{f(n,\varepsilon)}{n}=0$. The second inequality follows from data processing. The second equality follows from the chain rule for the mutual information. The third equality follows because the reduced state of systems $M$ and $RS$ is a product state. The fifth equality follows from the duality of the conditional entropy. 
Continuing, it follows that
\begin{align}
\eqref{eq:converse-1st-block-last-step} &\leq \sum_{i=1}^n \left[S(B_i|RS)_\sigma +S(B_i|E_iM)_\sigma\right]+ f(n,\varepsilon)\\
&= \sum_{i=1}^n\left[ S(B_i|RS)_\sigma -S(B_i|RSB'_{[n]\setminus \{i\}}M)_\sigma \right]+ f(n,\varepsilon)\\
&=\sum_{i=1}^nI(MB'_{[n]\setminus \{i\}};B_i|RS)_\sigma + f(n,\varepsilon)\\
&\leq \sum_{i=1}^nI(MB'_{[n]\setminus \{i\}}RS;B_i)_\sigma+ f(n,\varepsilon)\\
&= nI(MR^\prime;B|Q)_\sigma+ f(n,\varepsilon)\\
&\leq n\sup_{p_X,\phi_{\wt{R} B'}}I(X\wt{R};B)_\tau+ f(n,\varepsilon).
\end{align}
The first inequality follows from subadditivity of quantum entropy. The final inequality follows because the average can never exceed the maximum.
In the above, $B'_{[n]\setminus \{i\} }$ denotes the joint system $B'_1{B'}_2\cdots B'_{i-1}{B'}_{i+1}\cdots B'_n$, such that  system $B'_i$ is excluded. Furthermore,
\begin{equation}
\sigma_{MQR^\prime B} =\frac{1}{M}\frac{1}{n}\sum_{m=1}^{|M|}\sum_{i=1}^n\op{m}_M\otimes\op{i}_Q\otimes\mc{N}^{x_i(m)}_{B'_i\to B_i}(\sigma_{R S B'_i B'_{[n]\setminus{i}}}) ,
\end{equation}
where we have introduced an auxiliary classical register $Q$, and  $R^\prime \coloneqq RSB'_{[n]\setminus i}$. Also,
\begin{equation}\label{eq:cq-phi}
\tau_{X\wt{R} B}=\sum_{x}p_X(x)\op{x}_X\otimes\mc{N}^x(\phi_{\wt{R} B'}).
\end{equation}
 
Now we argue that it is sufficient to take $\phi_{\wt{R}B'}$ to be a pure state. Suppose that $\phi_{\hat{R}B'}$ is a mixed state and let $R^{\prime\prime}$ be a purifying system for it. Then by the data-processing inequality, it follows that
\begin{equation}
I(X\hat{R};B)_\tau \leq I(X\hat{R}R^{\prime\prime};B)_\tau,
\end{equation}
where $\tau_{X\hat{R}R''B}$ is a state of the form in \eqref{eq:cq-phi}.
The statement in the theorem about the dimension of the reference system  follows from the Schmidt decomposition and the fact that the reference system purifies the system $B'$ being input to the channel.
\end{proof}

\subsection{Weak converse bound for a quantum reading protocol}

Now we establish a general weak converse bound for the quantum reading capacity of an arbitrary memory cell.

\begin{theorem}\label{thm:q-r-capacity}
The quantum reading capacity of a quantum memory cell $\msc{S}_{\msc{X}}=\{\mc{N}^x\}_{\msc{X}}$ is bounded from above as
\begin{equation}
\mathcal{C}(\msc{S}_{\msc{X}})\leq \sup_{\rho_{XR{B'}}}\left[I(X;B|R)_\omega-I(X;{B'}|R)_\rho\right],
\end{equation}
where
\begin{align}
\omega_{XRB}&\coloneqq \sum_{x\in\msc{x}}p_X(x)\op{x}_X\otimes\mc{N}^x_{{B'}\to B}(\rho^x_{R{B'}}),\\
\rho_{XR{B'}}&=\sum_{x\in\msc{X}}p_X(x)\op{x}_X\otimes\rho^x_{R{B'}},
\end{align}
and $\dim(\mc{H}_R)$ can be unbounded. 
\end{theorem}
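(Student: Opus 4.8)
The plan is to run an \textbf{amortization argument on the quantum mutual information}, adapting the technique behind Lemma~\ref{thm:ent-ppt-single-letter} and \cite{KW17} but carried out with a uniformly distributed classical message register in place of an entanglement measure. First I would fix an arbitrary $(n,R,\varepsilon)$ quantum reading protocol (with adaptive strategy) for the memory cell $\msc{S}_{\msc{X}}$, place the message in a register $M$ distributed uniformly on $\msc{M}$, and track the sequence of global states it produces: let $\rho^{(i)}_{MR_iB_i'}$ denote the state just before the $i$-th call to the memory cell, with the fixed transmitter state $\rho_{R_1B_1'}$ uncorrelated with $M$, and let $\sigma^{(i)}_{MR_iB_i}$ denote the state just after, so that $\sigma^{(i)}$ is obtained from $\rho^{(i)}$ by applying $\mc{N}^{x_i(m)}$ to $B_i'$ conditioned on $m$, and $\rho^{(i+1)}_{MR_{i+1}B_{i+1}'}=\mc{A}^i_{R_iB_i\to R_{i+1}B_{i+1}'}(\sigma^{(i)}_{MR_iB_i})$. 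For each $i$ I would adjoin a classical register $X_i$ holding the value $x_i(m)$; since $X_i$ is a deterministic function of $M$, one has $I(MX_i;Y)=I(M;Y)$ for any system $Y$, and moreover $\rho^{(i)}_{X_iR_iB_i'}$ and $\sigma^{(i)}_{X_iR_iB_i}$ are classical--quantum states of exactly the form appearing in the theorem statement, with $\sigma^{(i)}_{X_iR_iB_i}$ obtained from $\rho^{(i)}_{X_iR_iB_i'}$ by the controlled (bidirectional) channel $\rho_{XRB'}\mapsto\omega_{XRB}$.

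The core is a telescoping inequality. Writing $\mu_i:=I(MX_i;R_iB_i)_{\sigma^{(i)}}$ and $\nu_i:=I(MX_i;R_iB_i')_{\rho^{(i)}}$, I would establish three facts: (a) $\nu_1=0$, since $\rho^{(1)}$ is a product of $M$ with $R_1B_1'$; (b) $\nu_{i+1}\le\mu_i$ for $1\le i\le n-1$, since $\rho^{(i+1)}_{MR_{i+1}B_{i+1}'}$ is obtained from $\sigma^{(i)}_{MR_iB_i}$ by the adaptive channel $\mc{A}^i$ acting only on $R_iB_i$, so data processing for the mutual information applies (here one uses $I(MX_{i+1};\cdot)=I(M;\cdot)$ to drop $X_{i+1}$); and (c) $\mu_n\ge I(M;\hat{M})_{\sigma'}\ge(1-\varepsilon)\log_2|M|-g(\varepsilon)$, using data processing through the final decoding measurement together with the AFW inequality (Lemma~\ref{thm:AFW}) applied to the cq states $\overline{\Phi}_{M\hat{M}}$ and $\sigma'_{M\hat{M}}$, which satisfy $\tfrac12\|\overline{\Phi}_{M\hat{M}}-\sigma'_{M\hat{M}}\|_1\le\varepsilon$. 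Combining (a) and (b) with the identity $\mu_n=\mu_n-\nu_1=\sum_{i=1}^n(\mu_i-\nu_i)+\sum_{i=1}^{n-1}(\nu_{i+1}-\mu_i)$ yields $\mu_n\le\sum_{i=1}^n(\mu_i-\nu_i)$, hence $(1-\varepsilon)\log_2|M|-g(\varepsilon)\le\sum_{i=1}^n(\mu_i-\nu_i)$.

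It then remains to bound each per-step increment $\mu_i-\nu_i$ by the claimed single-letter quantity. Since the controlled channel alters neither $M$, $X_i$, nor $R_i$, I would use the chain rule to cancel the $I(MX_i;R_i)$ terms, obtaining $\mu_i-\nu_i=I(MX_i;B_i|R_i)_{\sigma^{(i)}}-I(MX_i;B_i'|R_i)_{\rho^{(i)}}$; expanding once more, now in the order $X_i$ then $M$, gives $\mu_i-\nu_i=\big[I(X_i;B_i|R_i)_{\sigma^{(i)}}-I(X_i;B_i'|R_i)_{\rho^{(i)}}\big]+\big[I(M;B_i|R_iX_i)_{\sigma^{(i)}}-I(M;B_i'|R_iX_i)_{\rho^{(i)}}\big]$. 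The second bracket is nonpositive because, conditioned on $(R_i,X_i)$, the system $B_i$ is produced from $B_i'$ by a fixed channel independent of $M$ (namely $\mc{N}^{x_i}$), so $M-R_iX_iB_i'-B_i$ is a quantum Markov chain and data processing applies to the conditional mutual information. The first bracket is precisely $I(X;B|R)_\omega-I(X;B'|R)_\rho$ evaluated on the cq state $\rho^{(i)}_{X_iR_iB_i'}$, with $R=R_i$ of (possibly unbounded) dimension, hence at most $\sup_{\rho_{XRB'}}\big[I(X;B|R)_\omega-I(X;B'|R)_\rho\big]$. Summing over $i$ gives $(1-\varepsilon)\log_2|M|-g(\varepsilon)\le n\sup_{\rho_{XRB'}}\big[I(X;B|R)_\omega-I(X;B'|R)_\rho\big]$; dividing by $n$ and taking $n\to\infty$ and then $\varepsilon\to0$ yields the stated bound. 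The main obstacle I anticipate is the bookkeeping around the auxiliary registers $X_i$ and the verification that the per-step increment collapses exactly onto the amortized conditional-mutual-information channel quantity --- in particular, getting the Markov-chain data-processing step and the direction of each chain-rule cancellation right, and checking that $\rho^{(i)}_{X_iR_iB_i'}$ is admissible in the supremum; the remaining estimates are routine applications of chain rules, data processing, and the AFW inequality.
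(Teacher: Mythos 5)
Your proof is correct, and it follows essentially the same telescoping/amortization skeleton as the paper's proof of Theorem~\ref{thm:q-r-capacity}: put $M$ uniform, track the pre- and post-channel states $\rho^{(i)}$ and $\sigma^{(i)}$, use $\nu_1=0$ and data processing through the adaptive channels $\mc{A}^i$ to telescope, apply AFW/Fano at the front end, and reduce each per-step increment to a single-letter quantity via the chain rule. The telescoping identity you write out is exactly what hides behind the paper's "add and subtract" step.

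Where you genuinely depart from the paper is in the last reduction, and your version is cleaner on a point the paper glosses over. The paper introduces a uniform time register $Q$, forms $\overline{\omega}_{QMRB}$ and $\overline{\rho}_{QMRB'}$, and then bounds $I(M;B|RQ)_{\overline\omega}-I(M;B'|RQ)_{\overline\rho}$ by the supremum \emph{via the identification} $M\to X$, $QR\to R$. Strictly read, that identification is an abuse: $M$ ranges over $\msc{M}$, not over the channel alphabet $\msc{X}$, and several distinct messages can select the same channel $\mc{N}^{x_i(m)}$ at step $i$, so $\overline{\omega}_{QMRB}$ is not literally a state of the form $\omega_{XRB}$ in the theorem statement. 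One needs an extra lemma saying that the supremum is unchanged when $X$ is replaced by any classical register that surjects onto $\msc{X}$ via a labelling $\ell$ and the channel applied is $\mc{N}^{\ell(\cdot)}$. Your proof supplies exactly this missing piece explicitly: you adjoin $X_i=x_i(M)$, expand with the chain rule in the order $X_i$ then $M$, and show the $M$-part is non-positive because $M-R_iX_iB_i'-B_i$ is a quantum Markov chain given that the channel $\mc{N}^{x_i}$ is fixed once $X_i$ is fixed. The surviving bracket $I(X_i;B_i|R_i)_{\sigma^{(i)}}-I(X_i;B_i'|R_i)_{\rho^{(i)}}$ is then literally an instance of the theorem's quantity since $X_i$ ranges over $\msc{X}$. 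You also dispense with the register $Q$ by bounding each summand directly by the sup rather than averaging first; both routes are fine. The one cosmetic mismatch is that you report the front-end bound in the form $(1-\varepsilon)\log_2|M|-g(\varepsilon)$ while the paper folds it into an $f(n,\varepsilon)$ with $\lim_{\varepsilon\to0}\lim_{n\to\infty}f(n,\varepsilon)/n=0$; these are equivalent after dividing by $n$ and taking limits in the stated order.
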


\begin{remark}
It should be noted that the upper bound
$\sup_{\rho_{XR{B'}}}\left[I(X;B|R)_\omega-I(X;{B'}|R)_\rho\right]$
is non-negative. A particular choice of
the input state $\rho_{XR{B'}}$ is $\rho_{XR{B'}} =
\sum_{x\in\msc{X}}p_X(x)\op{x}_X\otimes\rho_{R{B'}}$. Then in this case,
\begin{equation}
I(X;B|R)_\omega-I(X;{B'}|R)_\rho = I(X;RB)_\omega-I(X;R{B'})_\rho
= I(X;RB)_\omega \geq 0,
\end{equation}
with $\omega_{XRB}=\sum_{x\in\msc{X}}p_X(x)\op{x}_X\otimes\mc{N}^x_{{B'}\to B}(\rho_{R{B'}})$. Thus, we can conclude that
\begin{equation}
\sup_{\rho_{XR{B'}}}\left[I(X;B|R)_\omega-I(X;{B'}|R)_\rho\right] \geq 0.
\end{equation}
\end{remark}

\begin{proof}[Proof of Theorem~\ref{thm:q-r-capacity}]
For any $(n,R,\varepsilon)$ quantum reading protocol as stated in Definition~\ref{def:QR}, we have
\begin{equation}
\frac{1}{2}\left\Vert  \overline{\Phi}_{M\hat{M}}-\sigma^\prime_{M\hat{M}}\right\Vert_1\leq\varepsilon,
\end{equation}
where $\overline{\Phi}_{M\hat{M}}$ is a maximally classically correlated state \eqref{eq:cor-state} and 
\begin{equation}
\sigma^\prime_{M\hat{M}}=\mc{D}_{R_nB_n\to\hat{M}}\(\sigma^n_{MR_nB_n}\)
\end{equation}
is the output state at the end of the protocol after Bob performs the final decoding measurement. 
The input state before the $i$th call of the channel is denoted as
\begin{multline}
\rho^{i}_{M R_i B'_i}
=\frac{1}{|M|}\sum_{m\in\mc{M}}\op{m}_M\otimes\mc{A}^{(i-1)}_{R_{i-1}B_{i-1}\to R_iB'_i}\circ\mc{N}^{x_{i-1}(m)}_{B'_{i-1}\to B_{i-1}}\circ\cdots\\
\cdots\circ\mc{N}^{x_2(m)}_{B'_2\to B_2}\circ\mc{A}^{(1)}_{R_{1}B_{1}\to R_2B'_2}\circ\mc{N}^{x_{1}(m)}_{B'_{1}\to B_{1}}(\rho_{R_1B'_1})\label{eq:state-i-pre-use} ,
\end{multline}
and the output state after the $i^{\tn{th}}$ call of the channel is denoted as
\begin{multline}
\omega^{i}_{M R_i B_i}
=\frac{1}{|M|}\sum_{m\in\mc{M}}\op{m}_M\otimes\mc{N}^{x_i(m)}_{B'_i\to B_i}\circ\mc{A}^{(i-1)}_{R_{i-1}B_{i-1}\to R_iB'_i}\circ\mc{N}^{x_{i-1}(m)}_{B'_{i-1}\to B_{i-1}}\circ\cdots\\
\cdots\circ\mc{N}^{x_2(m)}_{B'_2\to B_2}\circ\mc{A}^{(1)}_{R_{1}B_{1}\to R_2B'_2}\circ\mc{N}^{x_{1}(m)}_{B'_{1}\to B_{1}}(\rho_{R_1B'_1}).\label{eq:state-i-use}
\end{multline}
 
The initial part of our proof follows steps similar to those in the proof of Proposition~\ref{thm:weak-converse-non-adaptive}.
\begin{align}
\log_2|M| &= I(M;\hat{M})_{\overline{\Phi}}\\
& \leq I(M;\hat{M})_{\sigma^\prime}+f(n,\varepsilon)\\
& \leq I(M;R_n B_n)_{\sigma^n}+f(n,\varepsilon)\\
& = I(M;R_nB_n)_{\omega^n}-I(M;R_1B'_1)_{\rho^1}+f(n,\varepsilon)\\
&= I(M;R_nB_n)_{\omega^n}-I(M;R_nB'_n)_{\rho^n}+I(M;R_nB'_n)_{\rho^n} -I(M;R_{n-1}B'_{n-1})_{\rho^{n-1}}\nonumber\\ &\qquad+I(M;R_{n-1}B'_{n-1})_{\rho^{n-1}}-\cdots-I(M;R_2B'_2)_{\rho^2}\nonumber\\
& \qquad+I(M;R_2B'_2)_{\rho^2}-I(M;R_1B'_1)_{\rho^1}+f(n,\varepsilon)\\
&\leq I(M;R_nB_n)_{\omega^n}-I(M;R_nB'_n)_{\rho^n}+I(M;R_{n-1}B_{n-1})_{\omega^{n-1}} -I(M;R_{n-1}B'_{n-1})_{\rho^{n-1}}\nonumber\\ &\qquad+I(M;R_{n-2}B_{n-2})_{\omega^{n-2}}-\cdots-I(M;R_2B'_2)_{\rho^2}\nonumber \\
& \qquad +I(M;R_1B_1)_{\omega^1}-I(M;R_1B'_1)_{\rho^1}+f(n,\varepsilon) \label{eq:first-block-wk-conv-adaptive}
\end{align}
The second equality follows because the state $\rho^1$ is product between systems $M$ and $R_1B'_1$. The third equality follows by  adding and subtracting  equal information quantities.   
The third inequality follows from the data-processing inequality: mutual information is non-increasing under the local action of quantum channels. Continuing, it follows that
\begin{align}
\eqref{eq:first-block-wk-conv-adaptive} &= \sum_{i=1}^n\left[I(M;R_iB_i)_{\omega^i}-I(M;R_iB'_i)_{\rho^i}\right]+f(n,\varepsilon)\\
& = \sum_{i=1}^n\left[I(M;B_i|R_i)_{\omega^i}-I(M;B'_i|R_i)_{\rho^i}\right]+f(n,\varepsilon)\\
& = n \left[I(M;B|RQ)_{\overline{\omega}}-I(M;{B'}|RQ)_{\overline{\rho}}\right]+f(n,\varepsilon)\\
&\leq n\sup_{\rho_{XR{B'}}}\left[I(X;B|R)_\omega-I(X;{B'}|R)_\rho\right]+f(n,\varepsilon),
\end{align}
The second equality follows from the chain rule for conditional mutual information. The third equality follows by defining the following states:
\begin{align}
\overline{\omega}_{QMRB} & =
\sum_{i=1}^n \frac{1}{n} \op{i}_Q \otimes 
\omega^{i}_{M R_i B_i}, \\ 
\overline{\rho}_{QMR{B'}} & =
\sum_{i=1}^n \frac{1}{n} \op{i}_Q \otimes 
\rho^{i}_{M R_i B'_i}.
\end{align}

The final inequality follows by defining the following states:
\begin{align}
\omega_{XRB}&=\sum_{x}p_X(x)\op{x}_X\otimes\mc{N}^x_{{B'}\to B}(\rho^x_{R{B'}}),\\
\rho_{XR{B'}}&=\sum_{x}p_X(x)\op{x}_X\otimes\rho^x_{R{B'}},
\end{align}
and realizing that the states $\overline{\omega}_{QMRB}$ and $\overline{\rho}_{QMR{B'}}$ are particular examples of the states $\omega_{XRB}$ and $\rho_{XR{B'}}$, respectively, with the identifications $M \to X$ and $QR \to R$.
Putting everything together, we get
\begin{equation}
\frac{1}{n}\log_2|M| \leq 
\sup_{\rho_{XR{B'}}}\left[I(X;B|R)_\omega-I(X;{B'}|R)_\rho\right]+\frac{1}{n} f(n,\varepsilon)
\end{equation}
Taking the limit as $n \to \infty$ and then as $\varepsilon \to 0$ concludes the proof.
\end{proof}

\bigskip
Now we develop a general upper bound on the energy-constrained quantum reading capacity of a beamsplitter memory cell $\msc{B}_{\msc{X}}=\{\mc{B}^{x}\}_{x\in\msc{X}}$, where $x\in\msc{X}$ represents the transmissivity $\eta$ and phase $\phi$ of the beamsplitter $\mc{B}^x$~\cite[Eqns.~(5)--(6)]{KMN+05} (see Section~\ref{sec-Gaussian}). This bound has implications for the reading protocols considered in \cite{WGTL12}. 

Let $\hat{O}$ denote the familiar $a^\dag a$ number observable
and let $N_S \in [0,\infty)$. The energy-constrained reading capacity
$\mc{C}(\msc{B}_{\mc{X}},\hat{O}, N_S)$
of a beamsplitter memory cell $\msc{B}_{\msc{X}}$ is defined in the obvious way, such that the average input to each call of the memory is bounded from above by $N_S \geq 0$. This definition implies that the function to optimize in the  capacity upper bound has the following constraint: for any input ensemble $\{p_X(x),\rho^x_{R{B'}}\}$,
\begin{equation}
\Tr\left\{\hat{O}\int p_X(x)\rho^x_{{B'}}\right\}\leq N_{S}.
\end{equation}
Since the energy of the output state of $\mc{B}^x$ does not depend on the phase $\phi$, the dependence of $x$ on $\phi$ is dropped and $x=\eta$ is taken for the discussion. 
For a memory cell $\msc{B}_{\msc{X}}$, the energy of the output state is constrained as
\begin{align}
\Tr\left\{\sum_{x\in\msc{X}}p_X(x)\mc{B}^x(\rho^x_{B'})\hat{O}\right\} &=\sum_{x\in\msc{X}}p_X(x)\Tr\left\{\mc{B}^x(\rho^x_{B'})\hat{O}\right\}\\
&=\sum_{x\in\msc{X}}p_X(x)\eta \Tr\left\{\rho^x_{B'}\hat{O}\right\}\\
&\leq N_S,
\end{align}
where the second equality holds because the transmissivity of each $\mc{B}^x$ is $\eta \in [0,1]$.

Based on the above discussion, the following theorem can be stated.
\begin{corollary}
The energy-constrained reading capacity of a beamsplitter memory cell $\msc{B}_{\msc{X}}=\{\mc{B}^x\}_{x\in\msc{X}}$ is bounded from above as
\begin{equation}
\mc{C}(\msc{B}_{\msc{X}},\hat{O}, N_S)\leq 2 g(N_S),
\end{equation}
where $\theta^{N_S}$ is a thermal state~\eqref{eq:thermal-state} such that $\Tr\{ \hat{O} \theta^{N_S}\} = N_S$ and $g(y):=(y+1)\log_2(y+1)-y\log_2 y$.
\end{corollary}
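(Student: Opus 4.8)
The plan is to apply the general weak-converse bound for quantum reading capacity, Theorem~\ref{thm:q-r-capacity}, to the beamsplitter memory cell $\msc{B}_{\msc{X}} = \{\mc{B}^x\}_{x\in\msc{X}}$, and then bound the resulting conditional-mutual-information difference using standard continuity/extremality properties of the von Neumann entropy for energy-constrained bosonic systems. Concretely, Theorem~\ref{thm:q-r-capacity} gives
\begin{equation}
\mc{C}(\msc{B}_{\msc{X}},\hat{O},N_S) \leq \sup_{\rho_{XRB'}} \left[I(X;B|R)_\omega - I(X;B'|R)_\rho\right],
\end{equation}
where the supremum is now additionally constrained by the energy requirement $\Tr\{\hat{O}\int p_X(x)\rho^x_{B'}\}\leq N_S$ on the average channel input. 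The first step is to discard the (nonnegative) term $I(X;B'|R)_\rho$ and upper bound $I(X;B|R)_\omega \leq I(XR;B)_\omega$ by data processing, or more directly to write $I(X;B|R)_\omega \le I(X;B|R)_\omega + I(R;B)_\omega = I(XR;B)_\omega \le 2 S(B)_\omega$, using subadditivity/positivity of entropy and the bound $I(C;D)\le S(C)+S(D)$ together with $I(C;D) \le 2\min\{S(C),S(D)\}$ for the mutual information of a bipartite state; here $S(B)_\omega$ is the entropy of the channel output system $B$.

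Next I would control $S(B)_\omega$ via the energy constraint. The output state on $B$ is $\omega_B = \sum_x p_X(x)\,\mc{B}^x(\rho^x_{B'})$, and the earlier computation in the excerpt shows that its mean photon number satisfies $\Tr\{\hat{O}\omega_B\} = \sum_x p_X(x)\,\eta\,\Tr\{\hat{O}\rho^x_{B'}\} \leq N_S$, since $\eta \in [0,1]$ and the average input energy is at most $N_S$. The key extremal fact is that among all states of a single bosonic mode with mean photon number at most $N_S$, the thermal state $\theta^{N_S}$ maximizes the von Neumann entropy, with $S(\theta^{N_S}) = g(N_S)$ where $g(y) = (y+1)\log_2(y+1) - y\log_2 y$. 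Hence $S(B)_\omega \leq g(N_S)$, and combining with the factor of $2$ from the mutual-information bound yields $\mc{C}(\msc{B}_{\msc{X}},\hat{O},N_S) \leq 2g(N_S)$, as claimed. Since this bound is uniform over all admissible input ensembles $\{p_X(x),\rho^x_{RB'}\}$, taking the supremum preserves it.

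The main obstacle — really the only subtlety — is justifying the exchange of limits and suprema needed to push the finite-dimensional weak-converse argument of Theorem~\ref{thm:q-r-capacity} into the infinite-dimensional, energy-constrained setting: one must ensure the AFW-type continuity estimate $f(n,\varepsilon)$ behaves correctly under the energy constraint (so that $\frac{1}{n}f(n,\varepsilon)\to 0$), and that the relevant entropies are finite along the optimizing sequences, which they are precisely because the energy constraint forces $S(B)_\omega < \infty$. One also needs the maximum-entropy property of the thermal state under a mean-energy constraint, which is standard (it follows from nonnegativity of the relative entropy $D(\omega_B \| \theta^{N_S}) \geq 0$ combined with the fact that $-\log_2 \theta^{N_S}$ is affine in $\hat{O}$). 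With these ingredients in place the argument is a short chain of inequalities; I do not anticipate any genuinely hard step beyond bookkeeping the energy constraint through the single-letterization.
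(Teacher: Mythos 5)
Your proposal is correct and follows essentially the same chain of inequalities as the paper's proof: apply the energy-constrained extension of Theorem~\ref{thm:q-r-capacity}, drop the nonnegative $I(X;B'|R)_\rho$ term, bound $I(X;B|R)_\omega \leq 2S(B)_\omega$, use the $\eta \in [0,1]$ observation to conclude the output energy is at most $N_S$, and invoke the maximum-entropy property of the thermal state. Your additional remarks on the infinite-dimensional subtleties (AFW continuity and finiteness of entropies under the energy constraint) are a fair amplification of what the paper compresses into the phrase ``a straightforward extension of Theorem~\ref{thm:q-r-capacity}.''
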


\begin{proof}
From a straightforward extension of  Theorem~\ref{thm:q-r-capacity}, which takes into account the energy constraint, we find that 
\begin{align}
\mc{C}(\msc{B}_{\msc{X}},\hat{O}, N_S)
&\leq \sup_{\{p_X(x),\rho^x_{R{B'}}\}\ :\  \mathbb{E}_X\{\Tr\{\hat{O} \rho^X_{{B'}}\}\} \leq N_S }I(X;B|R)_\omega - I(X;{B'}|R)_\rho\\
& \leq \sup_{\{p_X(x),\rho^x_{R{B'}}\}\ :\  \mathbb{E}_X\{\Tr\{\hat{O} \rho^X_{{B'}}\}\} \leq N_S }I(X;B|R)_\rho\\
&\leq \sup_{\{p_X(x),\rho^x_{R{B'}}\}
\ :\  \mathbb{E}_X\{\Tr\{\hat{O} \rho^X_{{B'}}\}\} \leq N_S} 2S(B)_\rho\\
&\leq 2S(\theta^{N_S})\\
&= 2g(N_S).
\end{align}
The first inequality follows from the extension of Theorem~\ref{thm:q-r-capacity}. The second inequality follows from non-negativity of the conditional quantum mutual information. The third inequality follows from a standard entropy bound for the conditional quantum mutual information. 
The fourth inequality follows because the thermal state of mean energy $N_S$ has the maximum entropy under a fixed energy constraint (see, e.g., \cite{Carlen09}). The final equality follows because the observable $\hat{O}$ is the familiar $a^\dag a$ number observable, for which the entropy of its thermal state of mean photon number $N_S$ is given by $g(N_S)$.
\end{proof}

\begin{remark}
It follows  that $\mc{C}_{\textnormal{non-adaptive}}(\msc{B}_{\msc{X}},\hat{O}, N_S)\leq
2g(N_S)$ because
$\mc{C}_{\textnormal{non-adaptive}}(\msc{B}_{\msc{X}},\hat{O}, N_S)\leq  \mc{C}(\msc{B}_{\msc{X}},\hat{O}, N_S)$ by the definition of the energy-constrained quantum reading capacity of a memory cell~$\msc{B}_{\msc{X}}$. 
\end{remark}

\section{Examples of environment-parametrized memory cells}
\label{sec:example}

In this section, we calculate the quantum reading capacities of several environment-parametrized memory cells, including a thermal memory cell, and a jointly covariant memory cell formed from a channel $\mc{N}$ and a group $\msc{G}$ with respect to which $\mc{N}$ is covariant (Definition~\ref{def:j-cov-cell}). Examples of such a jointly covariant memory cell include qudit erasure and depolarizing memory cells formed respectively from erasure and depolarizing channels. 

\subsection{Jointly covariant memory cell: $\msc{N}^{\textnormal{cov}}_{\msc{G}}$}

Now we show that the quantum reading capacity of a memory cell $\msc{N}^{\textnormal{cov}}_{\msc{G}}$ (see Definition~\ref{def:N-G-cell} below) is equal to the entanglement-assisted classical capacity of the underlying channel $\mc{N}$. This result makes use of the fact that the entanglement-assisted classical capacity of a covariant channel $\mc{T}$ is equal to $I(R;B)_{\mc{T}(\Phi)}$ \cite{BSST99,BSST02}. Furthermore, we use this result to evaluate the quantum reading capacity of a qudit erasure memory cell (Definition~\ref{def:qudit-erasure}) and a qudit depolarizing memory cell (Definition~\ref{def:qudit-dep}). 

\begin{definition}[$\msc{N}^{\textnormal{cov}}_{\msc{G}}$]\label{def:N-G-cell}
Let $\mc{N}$ be a covariant channel (Definition~\ref{def:covariant}) with respect to a group $\msc{G}$. The memory cell $\msc{N}^{\textnormal{cov}}_{\msc{G}}$ is defined as
\begin{equation}
\msc{N}^{\textnormal{cov}}_{\msc{G}}=\left\{\mc{N}_{{B'}\to B}\circ\mc{U}^g_{B'}\right\}_{g\in {\msc{G}}},
\end{equation}
where $\mc{U}_{B'}^g\coloneqq U_{B'}(g)(\cdot)U^\dag_{B'}(g)$.  It follows from~\eqref{eq:cov-condition} that 
\begin{equation}\label{eq:con-cov-2}
\mc{N}_{{B'}\to B}\circ\mc{U}^g_{B'}= \mc{V}^g_B\circ\mc{N}_{{B'}\to B},
\end{equation}
where $\mc{V}_B^g:=V_B(g)(\cdot)V^\dag_B(g)$. It also follows that $\msc{N}^{\textnormal{cov}}_{\msc{G}}$ is a jointly covariant memory cell.
\end{definition}

\begin{theorem}\label{thm:covariant-to-EA-cap}
The quantum reading capacity $\mc{C}(\msc{N}^\textnormal{cov}_{\msc{G}})$ of the jointly covariant memory cell $\msc{N}^{\textnormal{cov}}_{\msc{G}}=\left\{\mc{N}_{{B'}\to B}\circ\mc{U}^g_{B'}\right\}_{g\in \msc{G}}$ (Definition~\ref{def:N-G-cell}), is equal to the entanglement-assisted classical capacity of~$\mc{N}$:
\begin{equation}
\mc{C}(\msc{N}^\textnormal{cov}_{\msc{G}}) = I(R;B)_{\mc{N}(\Phi)},
\end{equation}
where $\mc{N}(\Phi)\coloneqq\mc{N}_{{B'}\to B}(\Phi_{R{B'}})$ is the Choi state of the underlying channel $\mc{N}$. 
\end{theorem}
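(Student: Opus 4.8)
The plan is to establish the claim in two directions: achievability (lower bound) and optimality (upper bound). For the upper bound, I would invoke the general machinery already developed in this chapter. Since $\mc{N}$ is covariant with respect to $\msc{G}$, Definition~\ref{def:N-G-cell} gives that $\msc{N}^{\textnormal{cov}}_{\msc{G}}$ is a jointly covariant memory cell, and hence by the Proposition following Definition~\ref{def:j-cov-cell} it is jointly teleportation-simulable with associated resource states $\{\mc{N}^{\textnormal{cov}}_{\msc{G}}{}^{g}(\Phi_{R{B'}})\}_{g\in\msc{G}}$. I would then apply the corollary following Theorem~\ref{thm:env-cell-qrc} (the quantum reading capacity bound for jointly teleportation-simulable memory cells), which yields
\begin{equation}
\mc{C}(\msc{N}^{\textnormal{cov}}_{\msc{G}})\leq \max_{p_X}I(X;RB)_\omega,
\end{equation}
where $\omega_{XRB}=\sum_{g}p_X(g)\op{g}_X\otimes (\mc{N}_{B'\to B}\circ\mc{U}^g_{B'})(\Phi_{RB'})$. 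The key computational step is to simplify this mutual information. Using the covariance relation \eqref{eq:con-cov-2}, each resource state equals $(\mathbbm{1}_R\otimes\mc{V}^g_B)(\mc{N}(\Phi))$, i.e., all resource states are related by local unitaries on $B$; moreover the reduced state on $R$ is $\pi_R$ for every $g$ (since $\mc{U}^g$ is applied to a maximally entangled state and $\Phi_{RB'}$ has maximally mixed marginal). A direct entropy calculation then gives $I(X;RB)_\omega = S(RB)_\omega - S(RB|X)_\omega = S(RB)_\omega - S(RB)_{\mc{N}(\Phi)}$, and one should argue via concavity/subadditivity of entropy that this is maximized and equals $I(R;B)_{\mc{N}(\Phi)}$ when the group representation forms a one-design (so that $\sum_g \frac1{|G|}\mc{V}^g_B(\mc{N}(\Phi)_B)$ is maximally mixed on the relevant support, making the mixture's entropy as large as possible). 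I expect this entropy-optimization argument to be the most delicate part: one must verify that averaging over the group with the uniform distribution is optimal and that it produces exactly $I(R;B)_{\mc{N}(\Phi)}$ rather than something larger.

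\textbf{Achievability.} For the matching lower bound, I would exhibit an explicit non-adaptive reading strategy. The encoder chooses letters $g\in\msc{G}$ i.i.d.\ uniformly, the reader inputs half of a maximally entangled state $\Phi_{RB'}$ into each channel use, and performs a collective measurement at the end. Because the memory cell consists of the channel $\mc{N}$ preceded by the group unitaries, and because these are exactly the Weyl-covariant encodings used in the entanglement-assisted capacity theorem of \cite{BSST99,BSST02}, this protocol is precisely a realization of entanglement-assisted classical communication over $\mc{N}$: the reader is simultaneously transmitter and receiver, holds the reference system $R$, and decodes from $(R,B^n)$. The achievable rate of this protocol is $I(R;B)_{\mc{N}(\Phi)}$, the entanglement-assisted classical capacity of a covariant channel. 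One must check the i.i.d.\ structure carefully: since the reader can prepare a fresh $\Phi_{RB'}$ for each channel use and defer measurement, the output is $\bigotimes_i (\mc{N}\circ\mc{U}^{g_i})(\Phi_{R_iB'_i})$, a classical-quantum channel from $g$ to $(R,B)$ with capacity $\max_{p_X} I(X;RB)$, which by the computation above equals $I(R;B)_{\mc{N}(\Phi)}$.

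\textbf{Conclusion.} Combining the upper bound $\mc{C}(\msc{N}^{\textnormal{cov}}_{\msc{G}})\leq I(R;B)_{\mc{N}(\Phi)}$ with the achievability $\mc{C}(\msc{N}^{\textnormal{cov}}_{\msc{G}})\geq I(R;B)_{\mc{N}(\Phi)}$ gives equality. I would remark that this is an instance of the tightness observation in the remark following the jointly-teleportation-simulable corollary: the upper bound is achieved because the resource states can be taken to be the Choi states of the constituent channels. The main obstacle, as noted, is the entropy-maximization step in evaluating $\max_{p_X} I(X;RB)_\omega$; the covariance structure and the one-design property of the representation are exactly what make the uniform distribution optimal and collapse the expression to the single-letter entanglement-assisted capacity formula, and I would spell out that argument using the known result that $I(R;B)_{\mc{T}(\Phi)}$ is the entanglement-assisted classical capacity for covariant $\mc{T}$.
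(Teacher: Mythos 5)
Your proposal is essentially correct and follows the same route as the paper: both use the jointly-teleportation-simulable corollary to reduce to $\max_{p_X} I(X;RB)_\omega$ with Choi-state resources, and both conclude achievability by taking $p_X$ uniform. One imprecision worth correcting when you flesh this out: the role of the one-design is not to make $\sum_g\tfrac{1}{|G|}\mc{V}^g_B(\mc{N}(\Phi)_B)$ maximally mixed — indeed, by covariance $\mc{V}^g_B(\mc{N}(\Phi)_B)=\mc{V}^g_B(\mc{N}(\pi_{B'}))=\mc{N}(\mc{U}^g(\pi_{B'}))=\mc{N}(\pi_{B'})$ already, for every $g$ and hence for any $p_X$, so the $B$ marginal is $\mc{N}(\pi)$ regardless (not $\pi$ unless $\mc{N}$ is unital). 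The one-design property is rather what makes the \emph{joint} state factorize under uniform mixing, $\tfrac{1}{|G|}\sum_g(\mathbbm{1}_R\otimes\mc{V}^g_B)(\mc{N}(\Phi))=\mc{N}_{B'\to B}\!\left(\tfrac{1}{|G|}\sum_g\mc{U}^g_{B'}(\Phi_{RB'})\right)=\pi_R\otimes\mc{N}(\pi_{B'})$, thereby saturating the subadditivity bound $S(RB)_\omega\leq S(R)_\omega+S(B)_\omega$. With that fix your subadditivity argument goes through directly; the paper instead introduces an auxiliary uniform average over $g'$, pushes the group action to the input via covariance, applies concavity, and invokes the one-design there — a slightly different but equivalent technical path to the same conclusion.
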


\begin{proof}
 Proof here consists of two parts: the converse part and the achievability part. We first show the converse part:
\begin{equation}
\mc{C}\(\msc{N}^\textnormal{cov}_{\msc{G}}\) \leq I(R;B)_{\mc{N}(\Phi)}. 
\end{equation}
From Remark~\ref{rem:cov-cell-achievability}, we can conclude that the quantum reading capacity of $\msc{N}^\textnormal{cov}_{\msc{G}}$ is as follows:
\begin{equation}
\mc{C}\(\msc{N}^\textnormal{cov}_{\msc{G}}\)=\max_{p_G}I(G;RB)_{\omega},
\end{equation}
where
\begin{equation}
\omega_{GRB}\coloneqq \sum_{g\in {\msc{G}}}p_G(g)\op{g}_G\otimes\omega^{g}_{RB},
\end{equation}
such that $\{|g\>\}_{g\in {\msc{G}}}\in\ONB(\mc{H}_{G})$ and 
\begin{equation}
\forall g\in {\msc{G}}:\ \omega^g_{RB}=(\mc{N}_{{B'}\to B}\circ\mc{U}^g_{B'})(\Phi_{R{B'}}).
\end{equation}
Let us consider $p_G$ to be fixed. Then
\begin{align}
I(G;RB)_{\omega}&=S\!\(\sum_{g\in {\msc{G}}}p_G(g)\omega^g_{RB}\)-\sum_{g\in {\msc{G}}}p_G(g)S(\omega^{g}_{RB})\\
&=S\!\(\sum_{g\in {\msc{G}}}p_G(g)(\mc{V}^g_B\circ\mc{N}_{{B'}\to B})(\Phi_{R{B'}})\)-\sum_{g\in {\msc{G}}}p_G(g)H((\mc{V}^g_B\circ\mc{N}_{{B'}\to B})(\Phi_{R{B'}}))\label{eq:observe-cov-uni}\\
&=\sum_{g^\prime\in {\msc{G}}}\frac{1}{|G|}S\!\(\sum_{g\in {\msc{G}}}p_G(g)(\mc{V}^{g^\prime}_B\circ\mc{V}^g_B\circ\mc{N}_{{B'}\to B})(\Phi_{R{B'}})\)-S(\mc{N}_{{B'}\to B}(\Phi_{R{B'}}))\\
&\leq S\!\(\frac{1}{|G|}\sum_{g,g^\prime\in {\msc{G}}}p_G(g)(\mc{V}^{g^\prime}_B\circ\mc{V}^g_B\circ\mc{N}_{{B'}\to B})(\Phi_{R{B'}})\)-S(\mc{N}_{{B'}\to B}(\Phi_{R{B'}}))\\
&=S\(\mc{N}_{{B'}\to B}\(\frac{1}{|G|}\sum_{g^\prime\in {\msc{G}}}\mc{U}^{g^\prime}_{B'}\(\sum_{g\in {\msc{G}}}p_G(g)\mc{U}^g_{B'}\(\Phi_{R{B'}}\)\)\)\)-S(\mc{N}_{{B'}\to B}(\Phi_{R{B'}}))\\
&=S\(\mc{N}_{{B'}\to B}(\pi_R\otimes\pi_{B'})\)-S(\mc{N}_{{B'}\to B}(\Phi_{R{B'}}))\\
&=S\(\pi_R\)+S\(\mc{N}_{{B'}\to B}(\pi_B)\)-S(\mc{N}_{{B'}\to B}(\Phi_{R{B'}}))\\
&=I(R;B)_{\mc{N}(\Phi)}.
\end{align}
The second equality follows from~\eqref{eq:con-cov-2}. The third equality follows because entropy is invariant with respect to unitary or isometric channels. The first inequality follows from the concavity of entropy. The fourth equality follows from~\eqref{eq:con-cov-2}. The fifth equality follows from Definition~\ref{def:covariant}. The sixth equality follows because entropy is additive for product states. 
Since the above upper bound holds for any $p_G$, it follows that
\begin{equation}\label{eq:cov-con}
\mc{C}\(\msc{N}^\textnormal{cov}_{\msc{G}}\)=\max_{p_G}I(G;RB)_{\omega}\leq I(R;B)_{\mc{N}(\Phi)}.
\end{equation} 

To prove the achievability part, we take $p_G$ to be a uniform distribution, i.e., $p_G\sim \frac{1}{|G|}$. Putting $p_G\sim\frac{1}{|G|}$ in \eqref{eq:observe-cov-uni}, we obtain the following lower bound
\begin{equation}\label{eq:cov-ach}
\mc{C}\(\msc{N}^\textnormal{cov}_{\msc{G}}\) \geq I(G;RB)_\omega = I(R;B)_{\mc{N}(\Phi)}.
\end{equation}
Thus, from \eqref{eq:cov-con} and \eqref{eq:cov-ach}, we conclude the statement of the theorem:
$\mc{C}\(\msc{N}^\textnormal{cov}_{\msc{G}}\)=I(R;B)_{\mc{N}(\Phi)}.$
\end{proof}

\bigskip 
Now we state two corollaries, which are direct consequences of the above theorem. These corollaries establish the quantum reading capacities for jointly covariant memory cells formed from the erasure channel and depolarizing channel with respect to the Heisenberg--Weyl group $\mathbf{H}$, as discussed below (see Appendix~\ref{app:qudit} for some basic notations and definitions related to qudit systems).

\begin{definition}[Qudit erasure memory cell]\label{def:qudit-erasure}
The qudit erasure memory cell $\msc{Q}^q_{\msc{X}}=\left\{\mc{Q}^{q,x}_{{B'}\to B}\right\}_{x\in\msc{X}}$, where the size of $\msc{X}$ is $|X|=d^2$, consists of the following qudit channels:
\begin{equation}
\mc{Q}^{q,x}(\cdot)=\mc{Q}^q(\sigma^x(\cdot)\(\sigma^x\)^\dag) ,
\end{equation}
where $\mc{Q}^q$ is a qudit erasure channel \cite{GBP97}:
\begin{equation}
\mc{Q}^q(\rho_{B'})=(1-q)\rho+q\op{e}
\end{equation}
such that $q\in [0,1]$, $\dim(\mc{H}_{B'})=d$, $\op{e}$ is an erasure state orthogonal to the support of all possible input states $\rho$, and
$\forall x\in\msc{X}: \sigma^x\in\mathbf{H}$ are the Heisenberg--Weyl operators as given in \eqref{eq:HW-op}. Observe that $\msc{Q}^q_{\msc{X}}$ is jointly covariant with respect to the Heisenberg--Weyl group $\mathbf{H}$ because the qudit erasure channel $\mc{Q}^q$ is covariant with respect to $\mathbf{H}$.
\end{definition}

\begin{definition}[Qudit depolarizing memory cell]\label{def:qudit-dep}
The qudit depolarizing memory cell $\msc{D}^q_{\msc{X}}=\left\{\mc{D}^{q,x}_{{B'}\to B}\right\}_{x\in\msc{X}}$, where $\msc{X}$ is of size $|\mc{X}|=d^2$, consists of qudit channels
\begin{equation}
\mc{D}^{q,x}(\cdot)=\mc{D}^q\(\sigma^x(\cdot)\(\sigma^x\)^\dag\)
\end{equation}
where $\mc{D}^q$ is a qudit depolarizing channel:
\begin{equation}
\mc{D}^q(\rho)=(1-q)\rho+q\pi,
\end{equation}
where $q\in\[0,\frac{d^2}{d^2-1}\]$, $\dim(\mc{H}_{B'})=d$ and 
$\forall x\in\msc{X}: \sigma^x\in\mathbf{H}$
are the Heisenberg--Weyl operators as given in \eqref{eq:HW-op}.  
Observe that $\msc{D}^q_{\msc{X}}$ is jointly covariant with respect to the Heisenberg--Weyl group $\mathbf{H}$ because the qudit depolarizing channel $\mc{D}^q$ is covariant with respect to $\mathbf{H}$.
\end{definition}

As a consequence of Theorem~\ref{thm:covariant-to-EA-cap}, one immediately finds the quantum reading capacities of the above memory cells:

\begin{corollary}\label{cor:e-cell-r}
The quantum reading capacity $\mc{C}(\msc{Q}^q_{\msc{X}})$ of the qudit erasure memory cell $\msc{Q}^q_{\msc{X}}$ (Definition~\ref{def:qudit-erasure}) is equal to the entanglement-assisted classical capacity of the erasure channel $\mc{Q}^q$ \cite{BSST99}:
\begin{equation}
\mc{C}(\msc{Q}^q_{\msc{X}})=2(1-q)\log_2d.
\end{equation}
\end{corollary}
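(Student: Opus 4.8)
The plan is to invoke Theorem~\ref{thm:covariant-to-EA-cap} directly: since the qudit erasure memory cell $\msc{Q}^q_{\msc{X}}$ is precisely a memory cell of the form $\msc{N}^{\textnormal{cov}}_{\msc{G}}$ with $\mc{N} = \mc{Q}^q$ the qudit erasure channel and $\msc{G} = \mathbf{H}$ the Heisenberg--Weyl group (as noted in Definition~\ref{def:qudit-erasure}, the erasure channel is covariant with respect to $\mathbf{H}$ and the $d^2$ channels in the cell are obtained by precomposing $\mc{Q}^q$ with the Heisenberg--Weyl unitaries), Theorem~\ref{thm:covariant-to-EA-cap} gives $\mc{C}(\msc{Q}^q_{\msc{X}}) = I(R;B)_{\mc{Q}^q(\Phi)}$, where $\Phi_{RB'}$ is the maximally entangled state of Schmidt rank $d$. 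So the only remaining work is to evaluate this mutual information, which is the entanglement-assisted classical capacity of the erasure channel, and show it equals $2(1-q)\log_2 d$.

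First I would compute the Choi state $\mc{Q}^q(\Phi_{RB'}) = (1-q)\Phi_{RB} + q\, \pi_R \otimes \op{e}_B$, using that the erasure channel either transmits the input faithfully (probability $1-q$) or outputs the orthogonal erasure flag $\op{e}$ (probability $q$), and that tracing out $B'$ from $\Phi_{RB'}$ gives $\pi_R$. Because the erasure flag $\op{e}$ is orthogonal to the support of all transmitted states, this is a direct sum decomposition on the $B$ system, so the three entropies $S(R)$, $S(B)$, and $S(RB)$ of the Choi state all split cleanly. Explicitly: $S(R)_{\mc{Q}^q(\Phi)} = \log_2 d$ (the reduced state on $R$ is just $\pi_R$); $S(B)_{\mc{Q}^q(\Phi)} = h_2(q) + (1-q)\log_2 d$ by the orthogonal-flag decomposition, where $h_2$ is the binary entropy from \eqref{eq:g-2}; and $S(RB)_{\mc{Q}^q(\Phi)} = h_2(q) + (1-q)\cdot 0 + q\log_2 d = h_2(q) + q\log_2 d$, since on the transmitted branch the joint state is the pure state $\Phi_{RB}$ with zero entropy while on the erased branch it is $\pi_R \otimes \op{e}_B$ with entropy $\log_2 d$.

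Assembling these via $I(R;B)_{\mc{Q}^q(\Phi)} = S(R) + S(B) - S(RB)$ yields $\log_2 d + h_2(q) + (1-q)\log_2 d - h_2(q) - q\log_2 d = 2(1-q)\log_2 d$, which is the claimed value. I would also remark that this matches the known entanglement-assisted classical capacity of the erasure channel from \cite{BSST99}, consistent with the statement of the corollary, and that achievability of this rate as a reading rate follows from Remark~\ref{rem:cov-cell-achievability} together with Theorem~\ref{thm:covariant-to-EA-cap} (the Choi state is the optimal resource state). There is no real obstacle here: the entire content is the entropy bookkeeping for the direct-sum structure of the erasure Choi state, and the hard work — proving that joint covariance reduces reading capacity to entanglement-assisted classical capacity — has already been done in Theorem~\ref{thm:covariant-to-EA-cap}. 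The only point requiring a modicum of care is handling the orthogonality of the erasure flag correctly when splitting $S(B)$ and $S(RB)$, ensuring the binary-entropy cross terms cancel.
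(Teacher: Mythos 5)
Your proposal is correct and takes the same route as the paper: the corollary is stated there as a direct consequence of Theorem~\ref{thm:covariant-to-EA-cap}, with the value $2(1-q)\log_2 d$ cited from \cite{BSST99} as the entanglement-assisted classical capacity of the erasure channel. The only difference is that you fill in the explicit entropy bookkeeping for $I(R;B)_{\mc{Q}^q(\Phi)}$, which is a correct and welcome addition but not a distinct argument.
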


\begin{corollary}
The quantum reading capacity $\mc{C}(\msc{D}^q_{\msc{X}})$ of the qudit depolarizing memory cell $\msc{D}^q_{\msc{X}}$ (Definition~\ref{def:qudit-dep}) is equal to the entanglement-assisted classical capacity of the depolarizing channel $\mc{D}^q$ \cite{BSST99}:
\begin{equation}
\mc{C}(\msc{D}^q_{\msc{X}})=2\log_2d+\(1-q+\frac{q}{d^2}\)\log_2\!\(1-q+\frac{q}{d^2}\)+(d^2-1)\frac{q}{d^2}\log_2\!\(\frac{q}{d^2}\).
\end{equation}
\end{corollary}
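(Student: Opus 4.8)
The plan is to apply Theorem~\ref{thm:covariant-to-EA-cap} directly to the qudit depolarizing memory cell $\msc{D}^q_{\msc{X}}$. First I would observe that, as noted in Definition~\ref{def:qudit-dep}, the memory cell $\msc{D}^q_{\msc{X}}$ is exactly of the form $\msc{N}^{\textnormal{cov}}_{\msc{G}}$ with $\mc{N} = \mc{D}^q$ the qudit depolarizing channel and $\msc{G} = \mathbf{H}$ the Heisenberg--Weyl group, since $\mc{D}^q$ is covariant with respect to $\mathbf{H}$ and the channels $\mc{D}^{q,x}$ are precisely $\mc{D}^q$ precomposed with the Heisenberg--Weyl unitaries $\sigma^x$. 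Hence Theorem~\ref{thm:covariant-to-EA-cap} gives $\mc{C}(\msc{D}^q_{\msc{X}}) = I(R;B)_{\mc{D}^q(\Phi)}$, the entanglement-assisted classical capacity of $\mc{D}^q$, where $\mc{D}^q(\Phi) = \mc{D}^q_{B'\to B}(\Phi_{RB'})$ is the Choi state.

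Next I would compute $I(R;B)_{\mc{D}^q(\Phi)} = S(R)_{\mc{D}^q(\Phi)} + S(B)_{\mc{D}^q(\Phi)} - S(RB)_{\mc{D}^q(\Phi)}$ explicitly. The reduced states $\mc{D}^q(\Phi)_R$ and $\mc{D}^q(\Phi)_B$ are both the maximally mixed state $\pi$ on a $d$-dimensional space (the depolarizing channel is unital), so $S(R) = S(B) = \log_2 d$. For the joint entropy, I would use that $\mc{D}^q(\Phi_{RB'}) = (1-q)\Phi_{RB} + q\, \pi_R \otimes \pi_B$. Writing $\pi_R\otimes\pi_B = \frac{1}{d^2}\bm{1}_{RB}$ and decomposing the identity in the basis $\{\sigma^x_A \Phi_{RA}\sigma^x_A\}_x$ of maximally entangled states (as already done in the proof of the non-unitarity of the depolarizing channel earlier in the excerpt), the Choi state has eigenvalue $(1-q) + \frac{q}{d^2}$ on $\Phi_{RB}$ with multiplicity one, and eigenvalue $\frac{q}{d^2}$ on the remaining $d^2 - 1$ orthogonal maximally entangled states. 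Therefore
\begin{equation}
S(RB)_{\mc{D}^q(\Phi)} = -\(1-q+\frac{q}{d^2}\)\log_2\!\(1-q+\frac{q}{d^2}\) - (d^2-1)\frac{q}{d^2}\log_2\!\(\frac{q}{d^2}\).
\end{equation}

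Combining these, $I(R;B)_{\mc{D}^q(\Phi)} = 2\log_2 d + \(1-q+\frac{q}{d^2}\)\log_2\!\(1-q+\frac{q}{d^2}\) + (d^2-1)\frac{q}{d^2}\log_2\!\(\frac{q}{d^2}\)$, which is the claimed expression, and this also matches the known formula for the entanglement-assisted classical capacity of the depolarizing channel from \cite{BSST99}. I do not anticipate a genuine obstacle here: the only mild care needed is in the spectral decomposition of the Choi state, which relies on the orthonormality of the Heisenberg--Weyl-conjugated maximally entangled states (Appendix~\ref{app:qudit}), and in confirming that the hypotheses of Theorem~\ref{thm:covariant-to-EA-cap} — namely that $\mathbf{H}$ acts as a unitary one-design so that $\frac{1}{d^2}\sum_x \sigma^x(\cdot)(\sigma^x)^\dagger = \pi$ — are indeed satisfied, which is standard. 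Thus the corollary follows essentially immediately from Theorem~\ref{thm:covariant-to-EA-cap} together with a short entropy computation.
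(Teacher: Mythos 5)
Your proposal is correct and follows exactly the route the paper intends: the corollary is stated in the paper as an immediate consequence of Theorem~\ref{thm:covariant-to-EA-cap} applied to the jointly covariant cell $\msc{D}^q_{\msc{X}}$, with the explicit formula cited from~\cite{BSST99}. Your additional spectral computation of the Choi state's entropy is a correct verification of that cited formula but does not differ in strategy from the paper.
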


\subsection{A thermal memory cell}
Now we discuss an example of a thermal memory cell $\hat{\msc{E}}_{\msc{X},\eta} = \{ \mc{E}^{x,\eta}\}_x$, which is an environment-parametrized memory cell consisting of thermal channels $\mc{E}^{x,\eta}$ with known transmissivity parameter $\eta\in[0,1]$ and unknown excess noise $x$. Let $\hat{a},\hat{b},\hat{e},\hat{e}^\prime$ be the respective field-mode annihilation operators for Bob's input, Bob's output, the environment's input, and the environment's output of these channels. The interaction channel in this case is a fixed bipartite unitary $U_{{B'}E\to BE^\prime}$ corresponding to a beamsplitter interaction, defined from the following Heisenberg input-output relations:
\begin{align}
\hat{b}&=\sqrt{\eta}\hat{a}+\sqrt{1-\eta}\hat{e},\\
\hat{e}^\prime&=-\sqrt{1-\eta}\hat{a}+\sqrt{\eta}\hat{e}.
\end{align}
The environmental mode $\hat{e}$ of a thermal channel $\mc{E}^{x,\eta}$ is prepared in a thermal state $\theta^x:=\theta(N_B=x)$ of mean photon number $N_B\geq 0$: 
\begin{equation}\label{eq:thermal-state}
\theta(N_B):=\frac{1}{N_B+1}\sum_{k=0}^\infty\(\frac{N_B}{N_B+1}\)^k\op{k},
\end{equation}
where $\{|k\>\}_{k\in\mathbb{N}}$ is the orthonormal, photonic number-state basis. Parameter $x$ is the excess noise of the thermal channel $\mc{E}^{x,\eta}$. Note that for $x=0$, $\theta^x$ reduces to a vacuum state and the channel $\mc{E}^{x,\eta}$ is called the pure-loss channel (see Section~\ref{sec-Gaussian}).  

\begin{proposition}
The quantum reading capacity $\mc{C}(\hat{\msc{E}}_{\msc{X},\eta})$ of the thermal memory cell $\hat{\msc{E}}_{\msc{X},\eta} = \{ \mc{E}^{x,\eta}\}_x$ (as described above) is equal to
\begin{equation}
\mc{C}(\hat{\msc{E}}_{\msc{X},\eta}) = \max_{p_X}
\left[S(\overline{\theta}) - \int dx \ p_X(x) S(\theta^x)\right],
\end{equation}
where $p_X$ is a probability distribution for the parameter $x$ and $\overline{\theta} \coloneqq 
\int dx \ p_X(x) \theta^x$.
\end{proposition}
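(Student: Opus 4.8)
The plan is to invoke Theorem~\ref{thm:env-cell-qrc} for the upper bound and then exhibit a matching achievable rate. First I would observe that the thermal memory cell $\hat{\msc{E}}_{\msc{X},\eta}=\{\mc{E}^{x,\eta}\}_{x}$ is environment-parametrized in the sense of Definition~\ref{def:env-cell}: the ancillary states are the thermal states $\theta^x=\theta(N_B=x)$ from \eqref{eq:thermal-state}, and the interaction channel $\mc{F}_{{B'}E\to B}$ is the fixed beamsplitter unitary channel $\mc{U}_{{B'}E\to BE'}$ with transmissivity $\eta$ followed by a partial trace over $E'$. Since $\eta$ is known and fixed across the cell, this is a genuine environment-parametrization. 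Applying Theorem~\ref{thm:env-cell-qrc} then gives
\begin{equation}
\mc{C}(\hat{\msc{E}}_{\msc{X},\eta})\leq \max_{p_X} I(X;E)_{\theta},\qquad \theta_{XE}=\int dx\, p_X(x)\,\op{x}_X\otimes\theta^x_E,
\end{equation}
and because $\theta_{XE}$ is a classical--quantum state I would rewrite $I(X;E)_{\theta}=S(\overline{\theta})-\int dx\, p_X(x) S(\theta^x)$ using the standard Holevo-quantity identity, with $\overline{\theta}=\int dx\, p_X(x)\theta^x$. This yields exactly the claimed formula as an upper bound.

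For achievability I would argue that the jointly-covariant/teleportation-simulable structure of the cell lets the reading capacity meet this bound. The cleanest route is to note that the beamsplitter interaction commutes appropriately with phase-space displacement operators, so the thermal memory cell (or a suitable augmentation of it by displacements, analogous to the construction of $\msc{N}^{\textnormal{cov}}_{\msc{G}}$ in Definition~\ref{def:N-G-cell}) is jointly teleportation-simulable with the Choi states of the channels $\mc{E}^{x,\eta}$ as resource states. Then Remark~\ref{rem:cov-cell-achievability} applies: choosing the resource states to be the Choi states makes the upper bound of the corollary to Theorem~\ref{thm:env-cell-qrc} achievable. Alternatively, and more directly, one can run the obvious coding scheme in which the reader inputs a fixed optimal state and performs a collective measurement that distinguishes the i.i.d.\ ancilla strings $\theta^{x^n(m)}_{E^n}$; by the Holevo--Schumacher--Westmoreland theorem this achieves $\max_{p_X}\chi(\{p_X(x),\theta^x\})=\max_{p_X} I(X;E)_\theta$, matching the converse. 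Combining the two directions gives the equality.

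The main obstacle I anticipate is the achievability/teleportation-simulability step in the infinite-dimensional (continuous-variable) setting: one must be careful that the Holevo quantity is well defined and finite (which it need not be for arbitrary $p_X$, hence the optimization is genuinely a supremum that may require an energy-type regularization), and that the teleportation-simulation argument — which was phrased for finite-dimensional channels in Definition~\ref{def:tel-cell} and the surrounding remarks — carries over to Gaussian channels with the Choi state replaced by the appropriate limiting two-mode squeezed resource. A safer presentation would lean on the HSW achievability argument applied directly to the reduced (adaptive-to-non-adaptive) protocol from Proposition~\ref{thm:sim-red}, which only requires that the ancilla states $\theta^x$ be fixed and that a good classical-message decoding of the i.i.d.\ string $\theta^{x^n}_{E^n}$ exists; this sidesteps any subtlety about covariance of the continuous-variable interaction. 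I would also remark that for $x=0$ the cell specializes to the pure-loss setting, providing a consistency check against known results on reading with bosonic channels.
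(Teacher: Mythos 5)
Your converse direction matches the paper's: the thermal memory cell is environment-parametrized in the sense of Definition~\ref{def:env-cell}, with thermal ancilla states $\theta^x_E$ and interaction channel $\mc{F}_{B'E\to B}$ given by the beamsplitter unitary followed by tracing out $E'$; Theorem~\ref{thm:env-cell-qrc} then gives $\mc{C}(\hat{\msc{E}}_{\msc{X},\eta})\leq \max_{p_X} I(X;E)_\theta$, and the Holevo identity for classical--quantum states rewrites $I(X;E)_\theta$ as $S(\overline{\theta})-\int dx\, p_X(x)S(\theta^x)$. That part is sound.

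The gap is in achievability, and it is the route you call ``safer'' (Route B) that fails. Proposition~\ref{thm:sim-red} is a converse-direction reduction: it says that \emph{given} any adaptive reading strategy, there exists a POVM $\{\Gamma^{\hat m}_{E^n}\}$ on $\theta^{x^n}_{E^n}$ reproducing its outcome statistics. The class of POVMs so obtained is constrained --- only those implementable by sending probes through the fixed interaction $\mc{F}_{B'E\to B}$ interleaved with adaptive channels arise --- and the reader has no physical access to the systems $E^n$. So ``a good HSW decoding of the i.i.d.\ string $\theta^{x^n}_{E^n}$ exists'' does not imply that the reader can run it; you cannot invoke HSW on the ancilla ensemble directly to conclude achievability. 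The paper's actual achievability proof handles precisely this obstruction: it inputs a two-mode squeezed vacuum of mean photon number $N_S$ to each channel use, applies HSW to the ensemble of \emph{accessible} output states $\omega^{x,\eta}_{RB}(N_S)$ (which yields rate $I(X;RB)_{\omega^\eta(N_S)}$ for every $N_S$), and then proves via an explicit symplectic transformation on the covariance matrix, a fidelity estimate between thermal states, and lower semi-continuity of relative entropy that $\lim_{N_S\to\infty} I(X;RB)_{\omega^\eta(N_S)} = I(X;E)_\theta$. The content is entirely in that limit --- a bright entangled probe lets the reader effectively extract the environment's Holevo information from the beamsplitter output --- and your proposal leaves it unaddressed. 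Your Route A (teleportation simulation with limiting TMSV resource states) would, if carried out, reduce to the same limit argument, so you correctly flagged the subtlety but did not resolve it.
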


\begin{proof}
We begin by proving the achievability part, which corresponds to the inequality
\begin{equation}
\mc{C}(\hat{\msc{E}}_{\msc{X},\eta}) \geq I(X;E)_\theta,
\end{equation}
where $\theta_{XE}\coloneqq \int dx \ p_X(x)\op{x}_X\otimes\theta^x_E$. The main idea for the achievability part builds on the results of \cite[Eqns.~(38)--(48)]{TW16}.

The two-mode squeezed vacuum state is equivalent to a purification of the thermal state in \eqref{eq:thermal-state} and is defined as
\begin{equation}
\left|\phi^\textnormal{TMS}(N_S)\right>_{R{B'}}\coloneqq \frac{1}{\sqrt{N_S+1}}\sum_{k=0}^\infty\[\frac{N_S}{N_S+1}\]^{\frac{k}{2}}|k\>_R|k\>_{B'}.
\end{equation}
When sending the ${B'}$ system of this state through the channel $\mc{E}^{x,\eta}_{ {B'}\to B}$, the output state is as follows:
\begin{align}
\omega^{x,\eta}_{RB}(N_S)&\coloneqq (\id_R\otimes\mc{E}^{x,\eta}_{ {B'}\to B})\(\phi^\textnormal{TMS}_{R{B'}}(N_S)\)\label{eq:omega-x}\\
&=\Tr_{E^\prime}\left\{U_{{B'}E\to BE^\prime}\(\phi_{R{B'}}(N_S)\otimes\theta^x_E\)\(U_{{B'}E\to BE^\prime}\)^\dag\right\}\label{eq:theta-x-thermal},
\end{align}
and the average output state is as follows, when the channel $\mc{E}^{x,\eta}_{ {B'}\to B}$ being applied is chosen with probability
$p_X(x)$:
\begin{align}
\sum_{x\in\msc{X}}p_X(x)\omega^{x,\eta}_{RB}(N_S)&=\sum_{x\in\msc{X}}p_X(x)\Tr_{E^\prime}\left\{U_{{B'}E\to BE^\prime}\(\phi_{R{B'}}(N_S)\otimes\theta^x_E\)\(U_{{B'}E\to BE^\prime}\)^\dag\right\}\\
&=\Tr_{E^\prime}\left\{U_{{B'}E\to BE^\prime}\(\phi_{R{B'}}(N_S)\otimes\sum_{x\in\msc{X}}p_X(x)\theta^x_E\)\(U_{{B'}E\to BE^\prime}\)^\dag\right\}.
\end{align}
Consider the following classical--quantum state:
\begin{equation}
\omega^{\eta}_{XRB}(N_S)\coloneqq \sum_{x\in\msc{X}}p_X(x)\op{x}_X\otimes\omega^{x,\eta}_{RB},
\end{equation}
and 
\begin{equation}
I(X;RB)_{\omega^{\eta}(N_S)}=\sum_{x\in\msc{X}}p_X(x)D\(\omega^{x,\eta}_{RB}(N_S)\left\Vert\sum_{x\in\msc{X}}p_X(x)\omega^{x,\eta}_{RB}(N_S)\)\right. .
\end{equation}

The Wigner characteristic function covariance matrix \cite{adesso14} for $\omega^{x,\eta}_{RB}(N_S)$ in \eqref{eq:omega-x} is as follows:
\begin{equation}
V_{\omega^{x,\eta}(N_S)}=\begin{bmatrix}
a & c & 0 & 0 \\
c & b & 0 & 0 \\
0 & 0 & a & -c \\
0 & 0 & -c & b \\
\end{bmatrix},
\end{equation}
where
\begin{align}
a=\eta N_S+\(1-\eta\)x+\frac{1}{2},\quad\,
b=N_S+\frac{1}{2},\quad\,
c&=\sqrt{\eta N_S(N_S+1)}\ . 
\end{align}

Now consider the following symplectic transformation \cite{TW16}:
\begin{equation}
S^{\eta}(N_S)=\begin{bmatrix}
\gamma_+ & -\gamma_- & 0 & 0 \\
-\gamma_- & \gamma_+ & 0 & 0 \\
0 & 0 & \gamma_+ & \gamma_- \\
0 & 0 & \gamma_- & \gamma_+ \\
\end{bmatrix} ,
\end{equation}
where
\begin{align}
\gamma_+=\sqrt{\frac{1+N_S}{1+(1-\eta)N_S}},\quad\quad\quad
\gamma_-=\sqrt{\frac{\eta N_S}{1+(1-\eta)N_S}}\ .
\end{align}

Action of the symplectic matrix $S^{\eta}(N_S)$ on the covariance matrix $V_{\omega^{x,\eta}(N_S)}$ gives
\begin{align}
\hat{V}_{\omega^{x,\eta}(N_S)}:=S^{\eta}(N_S)V_{\omega^{x,\eta}(N_S)}\(S^{\eta}(N_S)\)^\textnormal{T}=\begin{bmatrix}
a_s & -c_s & 0 & 0 \\
-c_s & b_s & 0 & 0 \\
0 & 0 & a_s & c_s \\
0 & 0 & c_s & b_s \\
\end{bmatrix},
\end{align}
where
\begin{align}
a_s&=x+\frac{1}{2}+\mc{O}\(\frac{1}{N_S}\),\\
b_s&=\(1-\eta\)N_S+\eta x+\frac{1}{2}+\mc{O}\(\frac{1}{N_S}\),\\
c_s&=\sqrt{\eta}x+\mc{O}\(\frac{1}{N_S}\)\ . 
\end{align}
Thus, by applying this transformation to $\omega^{x,\eta}(N_S)$ and tracing out the second mode, we are left with a state that becomes indistinguishable from a thermal state of mean photon number $x$ in the limit as $N_S \to \infty$. Note that this occurs independent of the value of the transmissivity~$\eta$.

The symplectic transformation $S^{\eta}(N_S)$ can be realized by a two-mode squeezer, which corresponds to a unitary transformation acting on the tensor-product Hilbert space. Letting the unitary transformation be of the form $W_{RB\to EB}$, then $\hat{V}_{\omega^{x,\eta}(N_S)}$ represents the covariance matrix of the state $\omega^{x,\eta}_{EB}(N_S)$. 

We use the formula for fidelity between two thermal states \cite[Equation 34]{TW16} and the relation between trace norm and fidelity \cite[Theorem 9.3.1]{Wbook17} to conclude that
\begin{align}
\lim_{N_S\to\infty}\left\Vert \omega^{x,\eta}_E(N_S)-\theta^x_E\right\Vert_1&\leq \lim_{N_S\to\infty}\sqrt{1-F\(\omega^{x}_E(N_S),\theta^x_E\)}=0.\end{align}
From the convexity of trace norm, we obtain
\begin{align}
\left\Vert \sum_{x\in\msc{X}}p_X(x)\omega^x_E(N_S)-\sum_{x\in\msc{X}}p_X(x)\theta^x_E\right\Vert_1&\leq \sum_{x\in\msc{X}}p_X(x)\left\Vert \omega^{x}_E(N_S)-\theta^x_E\right\Vert_1,
\end{align}
which in turn implies that
\begin{align}
 \lim_{N_S\to\infty}\left\Vert \sum_{x\in\msc{X}}p_X(x)\omega^x_E(N_S)-\sum_{x\in\msc{X}}p_X(x)\theta^x_E\right\Vert_1=0.
\end{align}

Invoking the result of \cite[Equation 28]{TW16} and the lower semi-continuity of relative entropy, one gets
\begin{equation}
\lim_{N_S\to\infty} D\(\omega^{x,\eta}_{RB}(N_S)\left\Vert\sum_{x\in\msc{X}}p_X(x)\omega^{x,\eta}_{RB}(N_S)\)\right.=D\(\theta^x_{E}\left\Vert \sum_{x\in\msc{X}}p_X(x)\theta^x_E\)\right..
\end{equation}

Thus, from the above relations, we obtain the following result
\begin{equation}
\lim_{N_S\to \infty}I(X;RB)_{\omega^{\eta}(N_S)}=I(X;E)_{\theta},
\end{equation}
where 
\begin{equation}
\theta_{XE}=\sum_{x\in\msc{X}}p_X(x)\op{x}_X\otimes\theta^x_E ,
\end{equation}
for $\theta^x_E$ defined in \eqref{eq:theta-x-thermal}. This shows that $I(X;E)_{\theta}$ is an achievable rate for any $p_X$. 

The converse part of the proof, which corresponds to the inequality 
\begin{equation}
\mc{C}(\hat{\msc{E}}_{\mc{X},\eta})\leq \max_{p_X}I(X;E)_\theta,
\end{equation}
follows directly from Theorem~\ref{thm:env-cell-qrc}.
\end{proof}

\section{Zero-error quantum reading capacity}\label{sec:zero-error}

In an $(n,R,\varepsilon)$ quantum reading protocol (Definition~\ref{def:QR}) for a memory cell $\msc{S}_{\msc{X}}=\{\mc{M}^x_{{B'}\to B}\}_{x\in\msc{X}}$, one can demand the error probability to vanish, i.e., $\varepsilon=0$. In this section, we define zero-error quantum reading protocols and the zero-error quantum reading capacity for any memory cell. We provide an explicit example of a memory cell for which a quantum reading protocol using an adaptive strategy has a clear advantage over a quantum reading protocol that uses a non-adaptive strategy. 

\begin{definition}[Zero-error quantum reading protocol]
A zero-error quantum reading protocol
of a memory cell $\msc{S}_{\msc{X}}$ is a particular 
$(n,R,\varepsilon)$ quantum reading protocol for which $\varepsilon=0$.
\end{definition}

\begin{definition}[Zero-error quantum reading capacity]
The zero-error quantum reading capacity
$\mc{Z}(\msc{S}_{\msc{X}})$ of a memory cell $\msc{S}_{\msc{X}}$ is defined as the largest rate $R$ such that there exists a zero-error reading protocol. 
\end{definition}

A zero-error non-adaptive quantum reading protocol of a memory cell is a special case of a zero-error quantum reading protocol in which the reader uses a non-adaptive strategy to decode the message. 
 
\subsection{Advantage of an adaptive strategy over a non-adaptive strategy}

Now we employ the main example from
\cite{HHLW10} to  illustrate the advantage of an adaptive zero-error quantum reading protocol over a non-adaptive zero-error quantum reading protocol. 

Let us consider a memory cell $\msc{B}_{\msc{X}}=\{\mc{M}_{{B'}\to B}^x\}_{x\in\msc{X}}$, $\msc{X}=\{1,2\}$, consisting of the following quantum channels that map two qubits to a single qubit,  acting as
\begin{equation}
\mc{M}^x(\cdot)=\sum_{j=1}^5A^x_j(\cdot)\(A^x_j\)^\dag, \ x\in\msc{X} ,
\end{equation}
where
\begin{align}
A^1_1=|0\>\<00|,& \ \ A^1_2=|0\>\<01|,& A^1_3=|0\>\<10|,\ \ & \ \ A^1_4=\frac{1}{\sqrt{2}}|0\>\<11|,& A^1_5=\frac{1}{\sqrt{2}}|1\>\<11|,\nonumber \\
A^2_1=|+\>\<00|,& \ \ A^2_2=|+\>\<01|,& A^2_3=|1\>\<1+|,\ \ & \ \ A^2_4=\frac{1}{\sqrt{2}}|0\>\<1-|,& A^2_5=\frac{1}{\sqrt{2}}|1\>\<1-|,
\end{align}
and the standard bases for the channel inputs and outputs are $\{|00\>,|01\>,|10\>,|11\>\}$ and $\{|0\>,|1\>\}$, respectively. 
 
It follows from \cite{HHLW10,DFY09} that it is possible to discriminate perfectly these two channels using an adaptive strategy that makes two calls to the unknown channel
$\mc{M}^x$. This implies that the encoder can encode two classical messages (one bit) into two uses of the quantum channels from $\msc{B}_{\msc{X}}$ such that Bob can perfectly read the message, i.e., with zero error. Thus, it can be concluded that the zero-error quantum reading capacity
of $\msc{B}_{\msc{X}}$
is  bounded
from below by $\frac{1}{2}$ (one bit per two channel uses).

Closely following the arguments of \cite[Section 4]{HHLW10}, we can show that non-adaptive strategies can never realize perfect discrimination of the sequences $\mc{M}_{{B'}^n\to B^n}^{x^n}$ and $\mc{M}_{{B'}^n\to B^n}^{y^n}$, for any finite number $n$ of channel uses if $x^n\neq y^n$.  Equivalently,
\begin{equation}
\textnormal{for}\ x^n\neq y^n: \Vert \mc{M}_{{B'}^n\to B^n}^{x^n}-\mc{M}_{{B'}^n\to B^n}^{y^n}\Vert_{\diamond}< 2 \ \forall n\in\mathbb{N}\label{eq:ch-discrimination}
\end{equation} 
where $\Vert \cdot\Vert_{\diamond}$ is the diamond norm \eqref{eq-diamond_norm}. Thus, the zero-error non-adaptive quantum reading capacity of $\msc{B}_{\msc{X}}$ is equal to zero. 

To prove the above claim, we proceed with a proof by contradiction along the lines of that given in \cite[Section 4]{HHLW10}. We need to show that: for any finite $n\in\mathbb{N}$, if $x^n\neq y^n$, then there does not exist any state $\sigma_{R{B'}^n}$ such that the two sequences $\mc{M}_{{B'}^n\to B^n}^{x^n}$ and $\mc{M}_{{B'}^n\to B^n}^{y^n}$ can be perfectly discriminated. Note that perfect discrimination is possible if and only if 
\begin{align}\label{eq:ch-dis}
\Tr\left\{\(\id_R \otimes \mc{M}_{{B'}^n\to B^n}^{x^n}\)(\sigma_{R{B'}^n})\(\id_R \otimes\mc{M}_{{B'}^n\to B^n}^{y^n} \)(\sigma_{R{B'}^n})\right\}=0.
\end{align}
Now assume that there exists a $\sigma_{R{B'}^n}$ such that \eqref{eq:ch-dis} holds. Convexity then implies that \eqref{eq:ch-dis} holds  for some pure state $\psi_{R{B'}^n}$. Then, by carefully following the steps from \cite[Section 4]{HHLW10}, \eqref{eq:ch-dis} implies that 
for any set of complex coefficients $\{\alpha^{x,y}_{j,k}\in\mathbb{C}:1\leq j,k\leq 5,\ x,y\in\msc{X}\}$  
\begin{equation}
\langle\psi|_{R{B'}^n} \left[ \bm{1}_R \otimes \sum_{1\leq j,k\leq 5\,:\,i\in[n]}\alpha^{x_1,y_1}_{j_1,k_1}\cdots\ \alpha^{x_n,y_n}_{j_n,k_n}\({B'}^{y_1}_{j_1}\)^\dag {B'}^{x_1}_{k_1}\otimes\cdots\otimes\({B'}^{y_n}_{j_n}\)^\dag {B'}^{x_n}_{k_n}\right]|\psi\rangle_{R{B'}^n}= 0.\label{eq:for-a-contra}
\end{equation}
Let us choose the coefficients $\{\alpha^{x,y}_{j,k}\in\mathbb{C}:1\leq j,k\leq 5,\ x,y\in\msc{X}\}$ as follows: 
\begin{align}
\left\{ 
\begin{tabular}{c c}
for\ $x\neq y$: &$\alpha^{x,y}_{1,1}=\alpha^{x,y}_{2,2}=\sqrt{2}$, $\alpha^{x,y}_{3,5}=\alpha^{x,y}_{4,3}=1,$ $\alpha^{x,y}_{4,4}=-2\sqrt{2}$, otherwise $ \alpha^{x,y}_{j,k}=0$,\\
for\ $x=y$: &  $\alpha^{x,y}_{j,k}=\delta_{j,k}$
\end{tabular} 
\right.
\end{align}
where, if $j=k$ then $\delta_{j,k}=1$, else $\delta_{j,k}=0$.

For the above choice of the coefficients, it follows  that 
\begin{equation*}
\bm{1}_R \otimes \sum_{1\leq j,k\leq 5\,:\,i\in[n]}\alpha^{x_1,y_1}_{j_1,k_1}\cdots\ \alpha^{x_n,y_n}_{j_n,k_n}\(A^{y_1}_{j_1}\)^\dag A^{x_1}_{k_1}\otimes\cdots\otimes\(A^{y_n}_{j_n}\)^\dag A^{x_n}_{k_n}=I_R \otimes P^{x_1,y_1}\otimes\cdots\otimes P^{x_n,y_n}
\end{equation*}  
where 
\begin{equation*}
\textnormal{for}\ i\in[n]:\ P^{x_i,y_i}= \left\{ 
\begin{tabular}{c c}
$P>0$, & $x_i\neq y_i$\\
$I>0$, &  otherwise,
\end{tabular} 
\right.
\end{equation*}
and $P=\op{00}+\op{01}+\op{11}+\op{1-}$. Observe that the operator
$\bm{1}_R \otimes P^{x_1,y_1}\otimes\cdots\otimes P^{x_n,y_n}$ is positive definite.
This means that there cannot exist any state that satisfies \eqref{eq:for-a-contra}, and as a consequence \eqref{eq:ch-dis}, and this concludes the proof.

From the above discussion, we can conclude that the zero-error quantum reading capacity of the memory cell $\msc{B}_{\msc{X}}$ is bounded from below by $\frac{1}{2}$ whereas the zero-error non-adaptive quantum reading capacity is equal to zero.

\section{Conclusion}
\label{sec:conclusion-qr}
In this chapter, we have introduced the most general and natural definitions for quantum reading protocols and quantum reading capacities. We have defined environment-parametrized memory cells for quantum reading, which are sets of quantum channels obeying certain symmetries. We have determined upper bounds on the quantum reading capacity and the non-adaptive quantum reading capacity of an arbitrary memory cell. We have also derived strong converse and second-order  bounds on quantum reading capacities of environment-parametrized memory cells. We have calculated quantum reading capacities for a thermal memory cell, a qudit erasure memory cell, and a qudit depolarizing memory cell. Finally, we have shown the advantage of an adaptive strategy over a non-adaptive strategy in the context of zero-error quantum reading capacity of a memory cell.

We note that it is possible to use the methods developed here to obtain bounds on the quantum reading capacities of memory cells based on amplifying bosonic channels, in the same spirit as the results of a thermal memory cell (the argument follows from \cite{TW16}).

A natural question following from the developments in this chapter is whether there exists a memory cell for which the quantum reading capacity is larger than what we could achieve by using a non-adaptive strategy. As discussed above, we have found a positive answer to this question in the setting of zero error. However, the question remains open for the case of Shannon-theoretic capacity (i.e., with arbitrarily small error). We may suspect that this question will have a positive answer, and we may strongly suspect it will be the case in the setting of non-asymptotic capacity, our latter suspicion being due to the fact that feedback is known to help in non-asymptotic settings for communication (see, e.g., \cite{PPV11feedback}). We leave the investigation of this question for future work.

\clearpage

\chapter{Private Reading of Memory Devices}\label{ch:priv-read}
Devising a communication or information processing protocol that is secure against an eavesdropper is an area of primary interest and concern in information science and technology. In this chapter, we introduce the task of private reading of information stored in a memory device\blfootnote{Most of this chapter is based on \cite{DBW17}, a joint work with Stefan B\"auml and Mark M.~Wilde.}. A secret message can either be  encrypted in a computer program with circuit gates or in a physical storage device, such as a CD-ROM, DVD, etc. Here we limit the discussion to the case in which these computer programs or physical storage devices are used for read-only tasks; for simplicity, we refer to such media as memory devices. 

In a reading protocol (see Chapter~\ref{ch:read} for precise description), it is assumed that the reader has a description of a memory cell, which is a set of quantum channels. The memory cell is used to encode a classical message in a memory device. The memory device containing the encoded message is then delivered to the interested reader, whose task is to read out the message stored in it.
To decode the message, the reader can transmit a quantum state to the memory device and perform a quantum measurement on the output state. In general, since quantum channels are noisy, there is a loss of information to the environment, and there is a limitation on how well information can be read out from the memory device.

To motivate the task of private reading, consider that the computational and information processing capability of an adversary is limited only by the laws of quantum theory. A memory device is to be read using computer. There could be a circumstance in which an individual (reader) would have to access a computer in a public library under the surveillance of a librarian or other adversarial party, who supposedly is a passive eavesdropper, Eve. At a fundamental level, any reading mechanism involves transmitting of a probe system through a sequence of quantum channels, which are noisy in general. In such a situation, the reader would want information in a memory device not to be leaked to Eve, who has complete access to the environment, for security and privacy reasons. This naturally gives rise to the question of whether there exists a protocol for reading out a classical message that is secure from a passive eavesdropper.

In what follows, we introduce the details of private reading \cite{DBW17}: briefly, it is the task of reading out a classical message (key) stored in a memory device, encoded with a memory cell, by the reader such that the message is not leaked to Eve. We note here that private reading can be understood as a particular kind of secret-key-agreement protocol that employs a particular kind of bipartite interaction, and thus, there is a strong link between the developments in Section~\ref{sec:priv-key} and what follows. In Section~\ref{sec:p-r}, we present formal description of a private reading protocol, whose goal is to generate a secret key between an encoder and a reader. In Section~\ref{sec:n-a-priv-read-coherent}, we present purified (coherent) version of the private reading protocol. In both of the aforementioned sections, we derive both lower and upper bounds on the private reading capacities. In Section~\ref{sec:coh-read}, we discuss a protocol whose goal is to generate entanglement between two parties who have coherent access to a memory cell, and we derive a lower bound on the entanglement generation capacity in this setting.

\section{Private reading protocol}\label{sec:p-r}
In a private reading protocol, we consider an encoder and a reader (transceiver: receiver and decoder). Alice, an encoder, is one who encodes a secret classical message onto a read-only memory device that is delivered to Bob, a receiver, whose task is to read the message. Bob is also referred as the reader. The private reading task comprises the estimation of the secret message encoded in the form of a sequence of quantum wiretap channels chosen from a given set $\{\mc{M}^x_{B'\to BE}\}_{x\in\msc{X}}$ of quantum wiretap channels (called a \textit{wiretap memory cell}), where $\msc{X}$ is an alphabet of finite size $|X|$, such that there is negligible leakage of information to Eve, who has access to the system $E$. A special case of this is when each wiretap channel $\mc{M}^x_{B'\to BE}$ is an isometric channel. In the most natural and general setting, the reader can use an adaptive strategy when decoding, as considered in the reading protocol described in Chapter~\ref{ch:read}.  

Consider a set $\{\mc{M}^x_{B'\to BE}\}_{x\in\msc{X}}$ of wiretap quantum channels, where the size of $B'$, $B$, and $E$ are fixed and independent of $x$. The memory cell from the encoder Alice to the reader Bob is as follows: $\overline{\mathcal{M}}_{\msc{X}}=\{\mathcal{M}^x_{B'\to B}\}_{x}$, where
\begin{equation}
\forall x\in\msc{X}:\ \mathcal{M}^x_{B'\to B}(\cdot)\coloneqq \Tr_E\{\mc{M}^x_{B'\to BE}(\cdot)\},
\end{equation}
 which may also be known to Eve, before executing the reading protocol. It is assumed that only the systems $E$ are accessible to Eve for all channels $\mc{M}^{x}$ in a memory cell. Thus, Eve is a passive eavesdropper in the sense that  all she can do is to access the output of the channels 
 \begin{equation}
\forall x\in\msc{X}:\  \mc{M}^x_{B'\to E}(\cdot)=\Tr_{B}\left\{\mc{M}^x_{B'\to B E}(\cdot)\right\}.
 \end{equation}
 
Consider a finite classical message set $\msc{K}$ of size $|K|$, and let $K_A$ be an associated system denoting a classical register for the secret message. In general, Alice encodes a message $k\in\msc{K}$ using a codeword
$x^n(k)=x_1(k)x_2(k)\cdots x_n(k)$
of length $n$, where $x_i(k)\in\msc{X}$ for all $i\in[n]$. Each codeword identifies with a corresponding sequence of quantum channels chosen from the wiretap memory cell $\overline{\mathcal{M}}_{\msc{X}}$:
\begin{equation} 
\(\mathcal{M}^{x_1(k)}_{{B}_1'\to B_1 E_1}, \mathcal{M}^{x_2(k)}_{B_2'\to B_2 E_2},\ldots,\mathcal{M}^{x_n(k)}_{B_n'\to B_n E_n}\).
\end{equation} 
Each quantum channel in a codeword, each of which represents one part of the stored information, is only read once.

\begin{figure}[h]
		\centering
		\includegraphics[scale=0.7]{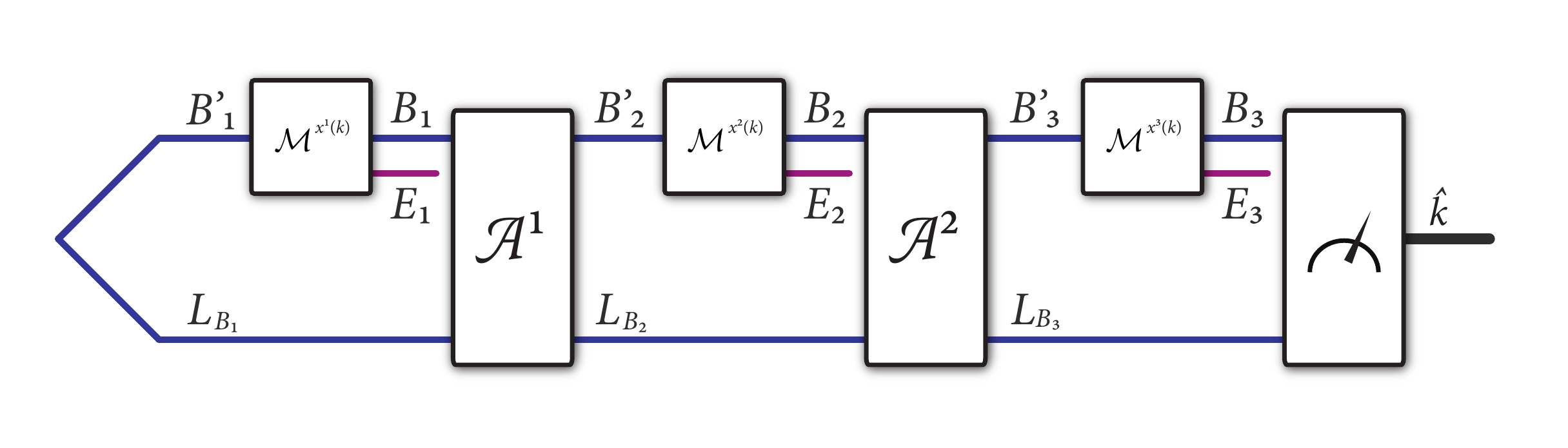}
		\caption{The figure depicts a private reading protocol that calls a memory cell three times to
decode the key $k$ as $\hat{k}$. See the discussion in Section~\ref{sec:p-r} for a detailed description of a private
reading protocol.}\label{fig:priv-read}
	\end{figure}

An adaptive decoding strategy makes $n$ calls to the memory cell, as depicted in Figure~\ref{fig:priv-read}. It is specified in terms of a transmitter state $\rho_{L_{B_1}B_1'}$, a set of adaptive, interleaved channels $\{\mc{A}^i_{L_{B_i}B_i\to L_{B_{i+1}}B'_{i+1}}\}_{i=1}^{n-1}$, and a final quantum measurement $\{\Lambda^{(\hat{k})}_{L_{B_n}B_n}\}_{\hat{k}}$ that outputs an estimate~$\hat{k}$ of the message~$k$. The strategy begins with Bob preparing the input state $\rho_{L_{B_1}B'_1}$ and sending the $B'_1$ system into the channel $\mc{M}^{x_1(k)}_{B'_1\to B_1 E_1}$. The channel outputs the system $B_1$ for Bob. He adjoins the system $B_1$ to the system $L_{B_1}$ and applies the channel $\mc{A}^1_{L_{B_1}B_1\to L_{B_2}B'_2}$. The channel $\mc{A}^i_{L_{B_i}B_i\to L_{B_{i+1}}B'_{i+1}}$ is called adaptive because it can take an action conditioned on the information in the system $B_i$, which itself might contain partial information about the message~$k$. Then, he sends the system $B'_2$ into the  channel $\mc{M}^{x_2(k)}_{B'_2\to B_2 E_2}$, which outputs systems $B_2$ and $E_2$. The process of successively using the channels interleaved by the adaptive channels  continues $n-2$ more times, which results in the final output systems $L_{B_n}$ and $B_n$ with Bob. Next, he performs a measurement $\{\Lambda^{(\hat{k})}_{L_{B_n}B_n}\}_{\hat{k}}$ on the output state $\rho_{L_{B_n} B_n}$, and the measurement outputs an estimate $\hat{k}$ of the original message $k$. It is natural to assume that the outputs of the adaptive channels and their complementary channels are inaccessible to Eve and are instead held securely by Bob. 

It is apparent that a non-adaptive strategy is a special case of an adaptive strategy. In a non-adaptive strategy, the reader does not perform any adaptive channels and instead uses $\rho_{L_B{B'}^n}$ as the transmitter state with each $B'_i$ system passing through the corresponding channel $\mc{M}^{x_i(k)}_{B'_i\to B_i E_i}$ and $L_B$ being a reference system. The final step in such a non-adaptive strategy is to perform a decoding measurement on the  joint system $L_BB^n$. 

As argued in the previous chapter, based on the physical setup of (quantum) reading, in which the reader assumes the role of both a transmitter and receiver, it is natural to consider the use of an adaptive strategy when defining the private reading capacity of a memory cell. 

\begin{definition}[Private reading protocol]\label{def:PR}
An $(n,P,\varepsilon,\delta)$ private reading protocol for a wiretap memory cell $\overline{\mc{M}}_{\msc{X}}$ is defined by an encoding map $\mc{K}_{\tn{enc}}\to \mc{X}^{\otimes n}$, an adaptive strategy with measurement $\{\Lambda_{L_{B_n} B_n}^{(\hat{k})}\}_{\hat{k}}$, such that, the average success probability is at least $1-\varepsilon$ where $\varepsilon\in(0,1)$:
\begin{equation}
1- \varepsilon \leq 1 - p_{\operatorname{err}} :=\frac{1}{|K|} \sum_{k}\Tr\left\{\Lambda^{(k)}_{L_{B_n}B_n}\rho^{(k)}_{L_{B_n}B_n}\right\},
\end{equation} 
where
\begin{equation}\label{eq3.3}
\rho^{(k)}_{L_{B_n}B_nE^n}=\(\mc{M}^{x_n(k)}_{B'_n\to B_nE_n}\circ\mc{A}^{{n-1}}_{L_{B_{n-1}}B_{n-1}\to L_{B_n}B'_n}\circ\cdots\circ\mc{A}^{1}_{L_{B_1}B_1\to L_{B_2}B'_2}\circ\mc{M}^{x_1(k)}_{B'_1\to B_1E_1}\)\(\rho_{L_{B_1}B'_1}\).
\end{equation}
Furthermore, the security condition is that
\begin{equation}
\frac{1}{|K|}\sum_{ k\in\msc{K}} \frac{1}{2}\left\|\rho^{(k)}_{E^n}- \tau_{E^n}\right\|_1\leq\delta,
\end{equation}
where $\rho^{(k)}_{E^n}$ denotes the state accessible to the passive eavesdropper when message $k$ is encoded. Also, $\tau_{E^n}$ is some fixed state. The rate $P\coloneqq\frac{1}{n}\log_2 |K|$ of a given $(n,|K|,\varepsilon,\delta)$ private reading protocol is equal to the number of secret bits read per channel use.
\end{definition}

Based on the discussions in \cite[Appendix~B]{WTB16}, there are connections between the notions of private communication given in Section~\ref{sec:priv-dist-protocol} and Definition~\ref{def:PR}, and we exploit these in what follows. 

To arrive at a definition of the private reading capacity, we demand that there exists a sequence of private reading protocols, indexed by $n$, for which the error probability $p_{\operatorname{err}}\to 0$ and security parameter $\delta\to 0$ as $n\to \infty$ at a fixed rate~$P$.

A rate $P$ is called achievable if for all $\varepsilon,\delta \in (0,1]$, $\delta^\prime >0$, and sufficiently large $n$, there exists an $(n,P-\delta^\prime,\varepsilon,\delta)$ private reading protocol. 
The private reading capacity $P^{\textnormal{read}}(\overline{\mc{M}}_{\msc{X}})$ of a wiretap memory cell $\overline{\mc{M}}_{\msc{X}}$ is defined as the supremum of all achievable rates $P$.

An $(n,P,\varepsilon,\delta)$ private reading protocol for a wiretap memory cell $\overline{\mc{M}}_{\msc{X}}$ is a non-adaptive private reading protocol when the reader abstains from employing any adaptive strategy for decoding. 
The non-adaptive private reading capacity $P^{\textnormal{read}}_{\textnormal{n-a}}(\overline{\mc{M}}_{\msc{X}})$ of a wiretap  memory cell $\overline{\mc{M}}_{\msc{X}}$ is defined as the supremum of all achievable rates $P$ for a private reading protocol that is limited to non-adaptive strategies.

\subsection{Non-adaptive private reading capacity}\label{sec:na-priv-read}

In what follows we restrict our attention to reading protocols that employ a non-adaptive strategy, and we now derive a regularized expression for the non-adaptive private reading capacity of a general wiretap memory cell.

\begin{theorem}\label{thm:n-a-priv-read}
The non-adaptive private reading capacity of a wiretap memory cell $\overline{\mc{M}}_{\msc{X}}$ is given by
\begin{equation}
P^{\textnormal{read}}_{\textnormal{n-a}}\(\overline{\mc{M}}_{\msc{X}}\)=\sup_{n}\max_{p_{X^n},\sigma_{L_B{B'}^n}}\frac{1}{n}\[I(X^n;L_BB^n)_\tau-I(X^n;E^n)_\tau\],
\end{equation}
where
\begin{equation}\label{eq:cq-na-read}
\tau_{X^nL_BB^nE^n}\coloneqq\sum_{x^n}p_{X^n}(x^n)\op{x^n}_{X^n}\otimes \mc{M}^{x^n}_{{B'}^n\to B^nE^n}(\sigma_{L_B{B'}^n}),
\end{equation}
and it suffices for $\sigma_{L_B{B'}^n}$ to be  a pure state such that $L_B \simeq {B'}^n$.
\end{theorem}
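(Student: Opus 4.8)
The plan is to prove this as a two-part statement: an achievability (coding) bound and a matching converse (optimality) bound, both of which hold at the level of $n$ channel uses, so that the regularization over $n$ appears naturally. The key observation is that the non-adaptive private reading protocol is essentially a private communication protocol over the ``induced'' wiretap channel obtained by bundling $n$ uses of the memory cell together with the choice of which channel to apply. Concretely, for a fixed blocklength $n$, one can view the map $x^n \mapsto \mc{M}^{x^n}_{{B'}^n \to B^n E^n}(\sigma_{L_B {B'}^n})$, with $\sigma_{L_B{B'}^n}$ a fixed input state chosen by the reader, as a classical-input, quantum-output wiretap channel from $X^n$ to $(L_B B^n)$ for Bob and to $E^n$ for Eve. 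The private capacity of such a cq wiretap channel is, by the quantum wiretap coding theorem of Devetak and Cai--Winter--Yeung (cited earlier in the excerpt as \cite{D05,DW05} and used in Section~\ref{sec:priv-key}), given by the single-letter Holevo-type difference $I(X^n; L_B B^n)_\tau - I(X^n; E^n)_\tau$ maximized over input distributions $p_{X^n}$, with $\tau$ of the form in \eqref{eq:cq-na-read}. Achievability then follows by invoking this theorem for each fixed $n$ and each fixed $\sigma_{L_B{B'}^n}$, and taking the supremum over $n$ and the maximum over $p_{X^n}$ and $\sigma_{L_B{B'}^n}$; since the reader is permitted a non-adaptive strategy consisting of preparing $\sigma_{L_B{B'}^n}$, passing the ${B'}^n$ systems through the coded channels, and performing a collective measurement on $L_B B^n$, this is a legitimate non-adaptive private reading protocol.

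For the converse, I would take an arbitrary $(n, P, \varepsilon, \delta)$ non-adaptive private reading protocol and bound $\log_2 |K|$ from above. First, since the strategy is non-adaptive, the reader's state before decoding has the form $\tau_{K_A L_B B^n E^n} = \sum_k \frac{1}{|K|} \op{k}_{K_A} \otimes \mc{M}^{x^n(k)}_{{B'}^n \to B^n E^n}(\sigma_{L_B {B'}^n})$ for some fixed input state $\sigma_{L_B{B'}^n}$, which is exactly of the form \eqref{eq:cq-na-read} with $p_{X^n}$ supported on the codewords. Then, using the reliability condition (average success probability $\geq 1-\varepsilon$) together with the Alicki--Fannes--Winter continuity bound (Lemma~\ref{thm:AFW}) applied to $I(K_A; L_B B^n)$, and the security condition ($\frac{1}{|K|}\sum_k \frac{1}{2}\|\rho^{(k)}_{E^n} - \tau_{E^n}\|_1 \leq \delta$) together with AFW applied to $I(K_A; E^n)$, one obtains $\log_2|K| \leq I(K_A; L_B B^n)_\tau - I(K_A; E^n)_\tau + n\,\epsilon_n$, where $\epsilon_n \to 0$ as $\varepsilon, \delta \to 0$ and $n\to\infty$. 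This is the standard privacy-amplification-plus-Fano argument; the relevant inequalities are precisely those assembled in \cite[Appendix~B]{WTB16}, which the excerpt explicitly flags as the source to exploit. Relaxing the maximization over codeword-supported $p_{X^n}$ to all $p_{X^n}$, dividing by $n$, and taking the appropriate limits gives $P \leq \sup_n \max_{p_{X^n},\sigma_{L_B{B'}^n}} \frac{1}{n}[I(X^n;L_BB^n)_\tau - I(X^n;E^n)_\tau]$, matching the achievability bound.

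The last thing to check is the claim that it suffices to take $\sigma_{L_B{B'}^n}$ pure with $L_B \simeq {B'}^n$. This follows from a purification-and-data-processing argument: if $\sigma_{L_B{B'}^n}$ is mixed, purify it by enlarging $L_B$ to $L_B L_B'$; by data processing (monotonicity of quantum mutual information under the partial trace over $L_B'$ applied to Bob's systems), the objective $I(X^n; L_B B^n)_\tau - I(X^n; E^n)_\tau$ cannot decrease when we pass to the purification, since Eve's reduced state $\tau_{X^n E^n}$ is unaffected while Bob's accessible information can only increase. Then the Schmidt decomposition of the pure state bounds the needed dimension of the reference, giving $L_B \simeq {B'}^n$. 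I expect the main obstacle to be the careful bookkeeping in the converse: one must ensure that the continuity corrections from both the reliability side (AFW with dimension $\log_2 \dim(\mc{H}_{L_B B^n})$, which after normalization by $n$ must be controlled) and the security side combine into a single $o(1)$ term, and one must correctly handle the fact that the private reading capacity is defined with the supremum over $n$, so the converse bound cannot be single-letterized but must instead be stated at each blocklength — which is why the theorem statement itself carries the $\sup_n$. The achievability direction is comparatively routine given the cited wiretap coding theorem; the delicate part is making the converse's error and security terms vanish in the correct order of limits ($\varepsilon, \delta \to 0$ at fixed rate, then $n \to \infty$).
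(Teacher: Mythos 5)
Your proposal is correct and follows essentially the same route as the paper: the converse combines a Fano-type bound on $S(K_A|K_B)$ (which you phrase as AFW applied to the reliability side — for the resulting classical–classical state these are the same thing) with the Holevo/data-processing inequality and the AFW bound on $I(K_A;E^n)$ from the security criterion, while achievability invokes the Devetak/Devetak–Winter cq wiretap coding theorem for the induced cq channel $x \mapsto \mc{M}^x(\sigma_{L_BB'})$ at each blocklength, and the purity-sufficiency argument via purification plus data processing is exactly the argument the paper uses (explicitly in Proposition~\ref{thm:weak-converse-non-adaptive} and implicitly here). The only small phrasing imprecision is in the reliability step, where the paper cleanly separates Fano's inequality on $S(K_A|K_B)$ from the Holevo bound $I(K_A;K_B)\le I(K_A;L_BB^n)$, but the substance of your argument is the same.
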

\begin{proof}
Let us begin by defining a cq-state corresponding to the task of private reading. Consider a wiretap memory cell $\overline{\mc{M}}_{\msc{X}}=\{\mc{M}^x_{B'\to BE}\}_{x\in\msc{X}}$. The initial  state $\rho_{K_AL_B{B'}^n}$ of a non-adaptive private reading protocol takes  the form
\begin{equation}
\rho_{K_AL_B{B'}^n}\coloneqq\frac{1}{|K|}\sum_{k}\op{k}_{K_A}\otimes\rho_{L_B{B'}^n}.
\end{equation}
Bob then passes the transmitter state $\rho_{L_B{B'}^n}$ through a channel codeword sequence $\mathcal{M}^{x^n(k)}_{{B'}^n\to B^n E^n}\coloneqq \bigotimes_{i=1}^n\mc{M}^{x_i(k)}_{{B'}_i\to B_i E_i}$. 
Then the resulting state is
\begin{equation}
\rho_{K_AL_BB^nE^n}\coloneqq \frac{1}{|K|}\sum_{k}\op{k}_{K_A}\otimes\mc{M}^{x^n(k)}_{{B'}^n\to B^n E^n}\left(\rho_{L_B{B'}^n}\right).
\end{equation}
Let $\rho_{K_AK_B}\coloneqq \mc{D}_{L_BB^n\to K_B}\(\rho_{K_AL_BB^n}\)$ be the output state at the end of the protocol after the decoding channel $\mc{D}_{L_BB^n\to K_B}$ is performed by Bob. The privacy criterion (Definition~\ref{def:PR}) requires that
\begin{equation}
\frac{1}{|K|} \sum_{ k\in\msc{K}}  \frac{1}{2}\Vert \rho^{x^n(k)}_{E^n}-\tau_{E^n}\V_1\leq\delta,
\end{equation}
where 
$\rho^{x^n(k)}_{E^n}\coloneqq \operatorname{Tr}_{L_BB^n} \{{\mc{M}^{x^n(k)}}_{{B'}^n\to B^n E^n}\left(\rho_{L_B{B'}^n}\right)\}$ and 
$\tau_{E^n}$
is some arbitrary constant state. Hence
\begin{align}
\delta & \geq \frac{1}{2}\sum_{k}\frac{1}{|K|}\Vert \rho^{x^n(k)}_{E^n}-\tau_{E^n}\Vert_1 \\ 
& =\frac{1}{2}\Vert \rho_{K_AE^n}-\pi_{K_A}\otimes\tau_{E^n}\Vert_1,
\end{align}
where $\pi_{K_A}$ denotes maximally mixed state, i.e., $\pi_{K_A}\coloneqq\frac{1}{|K|}\sum_{k}\op{k}_{K_A}$. We note that
\begin{align}
I(K_A;E^n)_\rho &= S(K_A)_\rho-S(K_A|E^n)_\rho\\
&= S(K_A|E^n)_{\pi\otimes\tau}-S(K_A|E^n)_\rho\\
&\leq \delta\log_2 |K|+g(\delta),
\end{align}
which follows from an application of Lemma~\ref{thm:AFW}.

We are now ready to derive a weak converse bound on the private reading rate:
\begin{align}
\log_2 |K| &= S(K_A)_\rho=I(K_A;K_B)_\rho+S(K_A|K_B)_{\rho}\\
&\leq I(K_A;K_B)_\rho+\varepsilon\log_2|K|+h_2(\varepsilon)\\
&\leq I(K_A;L_BB^n)_\rho+\varepsilon\log_2|K|+h_2(\varepsilon)\\ 
&\leq I(K_A;L_BB^n)_\rho-I(K_A;E^n)_\rho+\varepsilon\log_2|K|+h_2(\varepsilon)+\delta\log_2|K|+g(\delta)\\ 
&\leq \max_{p_{X^n},\sigma_{L_B{B'}^n}\in\msc{D}(\mc{H}_{L_B{B'}^n})}\[I(X^n;L_BB^n)_\tau-I(X^n;E^n)_\tau\]+\varepsilon\log_2|K|+h_2(\varepsilon)+\delta\log_2|K|+g(\delta),
\end{align}
where $\tau_{X^nL_BB^nE^n}$ is a state of the form  in \eqref{eq:cq-na-read}.
The first inequality follows from Fano's inequality \cite{F08}. The second inequality follows from the monotonicity of mutual information under the action of a local quantum channel by Bob (Holevo bound). The final inequality follows because the maximization is over all possible probability distributions and input states. Then,
\begin{equation}
\frac{\log_2|K|}{n} (1-\varepsilon-\delta)\leq \max_{p_{X^n},\sigma_{L_B{B'}^n}}\frac{1}{n}\[I(X^n;L_BB^n)_\tau-I(X^n;E^n)_\tau\]+\frac{h_2(\varepsilon)+g(\delta)}{n}.
\end{equation}
Now considering a sequence of non-adaptive $(n,P,\varepsilon_n,\delta_n)$ protocols with $\lim_{n \to \infty} \frac{\log_2 K_n}{n} = P$,
$\lim_{n \to \infty} \varepsilon_n = 0$, and $\lim_{n \to \infty} \delta_n = 0$, 
the converse bound on non-adaptive private reading capacity of memory cell $\overline{\mc{M}}_{\msc{X}}$ is given by
\begin{equation}\label{eq:priv-na-rad-up}
P\leq\sup_{n}\max_{p_{X^n},\sigma_{L_B{B'}^n}}\frac{1}{n}\[I(X^n;L_BB^n)_\tau-I(X^n;E^n)_\tau\],
\end{equation} 
which follows by taking the limit as $n\to \infty$.

It follows from the results of \cite{D05,DW05} that right-hand side of \eqref{eq:priv-na-rad-up} is also an achievable rate in the limit $n\to\infty$. Indeed, the encoder and reader can induce the cq-wiretap channel $x \to \mc{M}^{x}_{{B'}\to B E}(\sigma_{L_BB'})$, to which the results of \cite{D05,DW05} apply. A regularized coding strategy then gives the general achievability statement. Therefore, the non-adaptive private reading capacity is given as stated in the theorem.
\end{proof}

\section{Purifying private reading protocols}\label{sec:n-a-priv-read-coherent}

As observed in \cite{HHHO05,HHHO09} and reviewed in Section~\ref{sec:rev-priv-states}, any protocol of the above form (see Section~\ref{sec:na-priv-read}) can be purified in the following sense. In this section, we assume that each wiretap memory cell consists of a set of isometric channels, written as
$\{\mc{U}^{\mc{M}^x}_{B'\to BE}\}_{x\in\msc{X}}$. 
Thus, Eve has access to system $E$, which is the output of a particular isometric extension of the channel $\mc{M}^x_{B'\to B}$, i.e., $\widehat{\mc{M}}^x_{B'\to E}(\cdot) = 
\Tr_B\{\mc{U}^{\mc{M}^x}_{B'\to BE}(\cdot)\}$, for all $x\in\msc{X}$. Such memory cell is to be referred as an \textit{isometric wiretap memory cell}.

We begin by considering non-adaptive private reading protocols. A non-adaptive purified secret-key-agreement protocol that uses an isometric wiretap memory cell begins with Alice preparing a purification of the maximally classically correlated state: 
\begin{equation}
\frac{1}{\sqrt{|K|}}\sum_{k\in\msc{K}}\ket{k}_{K_A}\ket{k}_{\hat{K}}\ket{k}_{C},
\end{equation}
where $\msc{K}$ is a finite classical message set of size $|K|$, and $K_A$, $\hat{K}$, and $C$ are classical registers. Alice coherently encodes the value of the register $C$ using the memory cell, the codebook $\{x^n(k)\}_k$,
and the isometric mapping $\ket{k}_C \to \ket{x^n(k)}_{X^n}$. Alice makes two coherent copies of the codeword $x^n(k)$ and stores them safely in coherent  classical registers $X^n$ and $\hat{X}^n$. At the same time, she acts on Bob's input state $\rho_{L_B{B'}^n}$ with the following isometry:
\begin{equation}\label{eq:iso-v-m}
 \sum_{x^n}\op{x^n}_{X^n}\otimes{U}^{\mc{M}^{x^n}}_{{B'}^n\to B^nE^n}\otimes\ket{x^n}_{\hat{X}^n}.
\end{equation} 
For the task of reading, Bob inputs the state $\rho_{L_B{B'}^n}$ to the channel sequence $\mc{M}^{x^n(k)}$, with the goal of decoding $k$. In the purified setting, the resulting output state is $\psi_{K_A\hat{K}X^nL_B'L_BB^nE^n\hat{X}^n}$, which  includes all concerned coherent classical registers or quantum systems accessible by Alice, Bob and Eve:
\begin{equation}\label{eq:coh-mem-state}
\ket{\psi}_{K_A\hat{K}X^nL_B'L_BB^nE^n\hat{X}^n}\coloneqq \frac{1}{\sqrt{|K|}}\sum_{k}\ket{k}_{K_A}\ket{k}_{\hat{K}}\ket{x^n(k)}_{X^n}U^{\mc{M}^{x^n}}_{{B'}^n\to B^nE^n}\ket{\psi}_{L_B'L_B{B'}^n}\ket{x^n(k)}_{\hat{X}^n},
\end{equation}
where $\psi_{L_B'L_B{B'}^n}$ is a purification of $\rho_{L_B{B'}^n}$ and the systems $L_B'$, $L_B$, and $B^n$ are held by Bob, whereas Eve has access only to $E^n$. The final global state is $\psi_{K_A\hat{K}X^nL_B'K_BE^n\hat{X}^n}$  after Bob applies the decoding channel $\mc{D}_{L_BB^n\to K_B}$, where 
\begin{equation}
\ket{\psi}_{K_A\hat{K}X^nL_B'L_B''K_BE^n\hat{X}^n}\coloneqq U^{\mc{D}}_{L_BB^n\to L_B'' K_B}\ket{\psi}_{K_A\hat{K}X^nL_B'L_BB^nE^n\hat{X}^n},
\end{equation}
$U^{\mc{D}}$ is an isometric extension of the decoding channel $\mc{D}$, and $L_B''$ is part of the shield system of Bob. 

At the end of the purified protocol, Alice possesses the key system $K_A$ and the shield systems $\hat{K}X^n\hat{X}^n$, Bob possesses the key system $K_B$ and the shield systems $L_B'L_B''$, and Eve possesses the environment system $E^n$. The state $\psi_{K_A\hat{K}X^nL_B'L_B''K_B\hat{X}^nE^n}$ at the end of the protocol is a pure state.

For a fixed $n,\ |K|\in\mathbb{N},\ \varepsilon\in[0,1]$, the original protocol is an $(n,P,\sqrt{\varepsilon},\sqrt{\varepsilon})$ private reading protocol if the memory cell is called $n$ times as discussed above, where private reading rate $P\coloneqq \frac{1}{n}\log_2|K|$, and if 
\begin{equation}\label{eq:coh-key-aprox-priv}
F(\psi_{K_A\hat{K}X^nL_B'L_B''K_B\hat{X}^n},\gamma_{S_AK_AK_BS_B})\geq 1-\varepsilon,
\end{equation}
where $\gamma$ is a private state such that $S_A=\hat{K}X^n\hat{X}^n,\ K_A=K_A, \ K_B=K_B,\ S_B=L_B'L_B''$. See \cite[Appendix~B]{WTB16} for further details.

Similarly, it is possible to purify a general adaptive private reading protocol, but we omit the details.

\subsection{Converse bounds on private reading capacities}\label{sec:priv-read-sc}

In this section, we derive different upper bounds on the private reading capacity of an isometric wiretap memory cell. 
The first is a weak converse upper bound on the non-adaptive private reading capacity in terms of the squashed entanglement. The second is a strong converse upper bound on the (adaptive) private reading capacity in terms of the bidirectional max-relative entropy of entanglement. Finally, we evaluate the private reading capacity for an example:  a qudit erasure memory cell.  

We derive the first converse bound on non-adaptive private reading capacity by making the following observation, related to the development in \cite[Appendix~B]{WTB16}: any non-adaptive $(n,P,\varepsilon,\delta)$ private reading protocol of an isometric wiretap memory cell $\overline{\mc{M}}_{\msc{X}}$, for reading out a secret key, can be realized by  an $(n,P,\varepsilon'(2-\varepsilon'))$ non-adaptive purified secret-key-agreement reading protocol, where $\varepsilon'\coloneqq\varepsilon+2\delta$. As such, a converse bound for the latter protocol implies a converse bound for the former.

First, we derive an upper bound on the non-adaptive private reading capacity in terms of the squashed entanglement \cite{CW04}: 
\begin{proposition}
The non-adaptive private reading capacity $P^{\textnormal{read}}_{\textnormal{n-a}}(\overline{\mc{M}}_{\msc{X}})$ of an isometric wiretap memory cell $\overline{\mc{M}}_{\msc{X}}=\{\mc{U}^{\mc{M}^x}_{B'\to BE}\}_{x\in\msc{X}}$ is bounded from above as
\begin{equation}
P^{\textnormal{read}}_{\textnormal{n-a}}(\overline{\mc{M}}_{\msc{X}})\leq \sup_{p_X,\psi_{LB'}}E_{\sq}(XL_B;B)_\omega,
\end{equation} 
where $\omega_{XL_BB}=\Tr_{E}\{\omega_{XL_BBE}\}$, such that $\psi_{L_BB'}$ is a pure state and
\begin{equation}
\vert \omega\rangle_{XLBE}=\sum_{x\in\msc{X}}\sqrt{p_X(x)}|x\>_X\otimes{U}^{\mc{M}^{x}}_{B'\to BE}\ket{\psi}_{L_BB'} \label{eq:opt-form-sq-bnd}.
\end{equation}
\end{proposition}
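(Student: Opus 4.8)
The plan is to reduce the non-adaptive purified secret-key-agreement reading protocol to a setting where the squashed entanglement upper bound on private capacity (established for point-to-point private communication, as discussed in \cite{TGW14,WTB16}) can be applied, and then to use the observation from Section~\ref{sec:n-a-priv-read-coherent} that any non-adaptive $(n,P,\varepsilon,\delta)$ private reading protocol can be realized by an $(n,P,\varepsilon'(2-\varepsilon'))$ non-adaptive purified secret-key-agreement reading protocol with $\varepsilon' = \varepsilon + 2\delta$. So a converse for the purified protocol yields a converse for the original one.

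First, I would set up the coherent state $\psi_{K_A\hat{K}X^n L_B' L_B'' K_B \hat{X}^n E^n}$ at the end of the purified protocol, as in \eqref{eq:coh-mem-state} and the subsequent equations, and invoke the approximation condition \eqref{eq:coh-key-aprox-priv}: the reduced state on $K_A \hat{K} X^n L_B' L_B'' K_B \hat{X}^n$ is $\varepsilon$-close in fidelity to a bipartite private state $\gamma_{S_A K_A K_B S_B}$ with $S_A = \hat{K} X^n \hat{X}^n$ and $S_B = L_B' L_B''$. Then I would use the fact that for an $\varepsilon$-approximate bipartite private state with $|K|$ key values, the squashed entanglement is lower bounded as $E_{\sq}(S_A K_A; K_B S_B) \geq \log_2 |K| - f(\varepsilon, |K|)$ for an appropriate correction term $f$ vanishing as $\varepsilon \to 0$ (this is the squashed-entanglement analogue of the privacy-test bound, following from continuity of squashed entanglement \cite{AF04,Shi16} together with its normalization on private states). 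Next, by the monotonicity of squashed entanglement under the local operations performed by Bob (the decoding isometry $U^{\mc{D}}$ and the partitioning of systems), and crucially by the fact that squashed entanglement does not increase under amortization at the level of a single channel use (the key subadditivity property, as in \cite{TGW14}), I would telescope over the $n$ uses of the memory cell. Because the protocol is non-adaptive, the transmitter state $\psi_{L_B' L_B {B'}^n}$ is fixed and the codeword is classical, so the $n$-fold squashed entanglement term factors through a sum of single-letter terms of the form $E_{\sq}(X L_B; B)_\omega$ with $\omega$ of the form in \eqref{eq:opt-form-sq-bnd}; here the classical register $X$ records one codeword letter, $L_B$ is the relevant portion of Bob's reference/shield, and $B$ is the single channel output. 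Dividing by $n$ and taking $n \to \infty$ with $\varepsilon, \delta \to 0$ then yields $P^{\textnormal{read}}_{\textnormal{n-a}}(\overline{\mc{M}}_{\msc{X}}) \leq \sup_{p_X, \psi_{L_B B'}} E_{\sq}(X L_B; B)_\omega$.

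The main obstacle I anticipate is carefully justifying the single-letterization step: showing that the $n$-use coherent-state squashed entanglement $E_{\sq}(S_A K_A; K_B S_B)$ is bounded above by $n$ times a single-letter quantity of exactly the claimed form. This requires identifying the right squasher for Eve's systems across all $n$ rounds and exploiting both the chain-rule-like subadditivity of $E_{\sq}$ and the structure of the non-adaptive protocol (fixed input, classically controlled isometric channels). The bookkeeping of which systems belong to $X$, $L_B$, and $B$ in the single-letter term — and confirming that restricting $\psi_{L_B B'}$ to be pure with $L_B \simeq B'$ loses no generality, via the Schmidt decomposition and purification arguments already used elsewhere in the chapter — is the other piece that needs care. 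The continuity corrections and the reduction from the $(\varepsilon, \delta)$ private reading protocol to the $\varepsilon'(2-\varepsilon')$ purified protocol are routine given \cite[Appendix~B]{WTB16}, so I would state those with a pointer rather than reproving them.
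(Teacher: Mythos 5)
Your proposal matches the paper's proof essentially step for step: the reduction to the purified non-adaptive protocol via \cite[Appendix~B]{WTB16}, the initial bound $\log_2|K| \leq E_{\sq}(\hat{K}X^n\hat{X}^nK_A;K_BL_B''L_B')_\psi + f_1(\sqrt{\varepsilon},|K|)$ from the normalization-plus-continuity property of squashed entanglement (the paper cites \cite[Theorem~2]{Wil16}), isometric invariance to pass to $B^n L_B L_B'$, repeated application of the five-partite subadditivity inequality from \cite[Theorem~7]{TGW14} interleaved with uncomputing isometries to single-letterize, and finally the \cite[Appendix~A]{CY16} argument to eliminate auxiliary systems so the optimizer takes the form \eqref{eq:opt-form-sq-bnd}. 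The obstacle you flag — carefully tracking which systems land in the $X L_B$ versus $B$ slots during the telescoping — is exactly where the paper expends its effort, and the tools you cite are the ones it uses.
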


\begin{proof}
For the discussed purified non-adaptive secret-key-agreement reading protocol, when \eqref{eq:coh-key-aprox-priv} holds, the dimension of the secret key system is upper bounded as \cite[Theorem 2]{Wil16}:
\begin{equation}
\log_2 |K|\leq E_{\sq}(\hat{K}X^n\hat{X}^nK_A;K_BL_BL_B'')_{\psi}+f_1(\sqrt{\varepsilon},|K|),
\end{equation}
where 
\begin{equation}
f_1(\varepsilon,|K|)\coloneqq 2\varepsilon\log_2|K|+2g(\varepsilon).
\end{equation}
We can then proceed as follows:
\begin{align}
\log_2 |K| &\leq E_{\sq}(\hat{K}X^n\hat{X}^nK_A;K_BL_B''L_B')_{\psi}+f_1(\sqrt{\varepsilon},|K|)\\
&= E_{\sq}(\hat{K}X^n\hat{X}^nK_A;B^nL_BL_B')_{\psi}+f_1(\sqrt{\varepsilon},|K|).\label{eq:priv-sq-bound}
\end{align}
where the first equality is due to the invariance of $E_{\sq}$ under isometries. 

For any five-partite pure state $\phi_{B'B_1B_2E_1E_2}$, the following inequality holds \cite[Theorem~7]{TGW14}:
\begin{equation}\label{eq:tri-sq-ent}
E_{\sq}(B';B_1B_2)_\phi\leq E_{\sq}(B'B_2E_2;B_1)_{\phi}+E_{\sq}(B'B_1E_1;B_2)_{\phi}.
\end{equation}
This implies that
\begin{align}
& E_{\sq}(\hat{K}X^n\hat{X}^nK_A;B^nL_BL_B')_{\psi}\nonumber\\
&\leq E_{\sq}(\hat{K}X^n\hat{X}^{n}K_AL_BL_B'B^{n-1}E^{n-1};B_n)_{\psi}+E_{\sq}(\hat{K}X^n\hat{X}^{n}K_AB_nE_n;L_BL_B'B^{n-1})_{\psi}\\
&= E_{\sq}(\hat{K}X^n\hat{X}^nK_AL_BL_B'B^{n-1}E^{n-1};B_n)_{\psi}+E_{\sq}(\hat{K}X^n\hat{X}^{n-1}K_AB'_n;L_BL_B'B^{n-1})_{\psi}.\label{eq:sq-ine-n-pre}
\end{align}
where the equality holds by considering an isometry with the following uncomputing action:
\begin{align}
&\ket{k}_{K_A}\ket{k}_{\hat{K}}\ket{x^n(k)}_{X^n}U^{\mc{M}^{x^n}}_{{B'}^n\to B^nE^n}\ket{\psi}_{L_B'L_B{B'}^n}\ket{x^n(k)}_{\hat{X}^n} \nonumber \\
&\quad\quad\quad \to \ket{k}_{K_A}\ket{k}_{\hat{K}}\ket{x^n(k)}_{X^n}U^{\mc{M}^{x^{n-1}}}_{{B'}^{n-1}\to B^{n-1}E^{n-1}}\ket{\psi}_{L_B'L_B{B'}^n}\ket{x^{n-1}(k)}_{\hat{X}^{n-1}}.
\end{align}

 Applying the inequality in \eqref{eq:tri-sq-ent} and uncomputing isometries like the above repeatedly to \eqref{eq:sq-ine-n-pre}, we get 
\begin{equation}\label{eq:single-let-sq}
E_{\sq}(\hat{K}X^n\hat{X}^nK_A;B^nL_BL_B')_{\psi}\leq \sum_{i=1}^n E_{\sq}(\hat{K}X^n\hat{X}_iK_AL_BL_B'B'^{n\setminus \{i\}};B_i),
\end{equation}
where the notation ${B'}^{n\setminus\{i\}}$ indicates the composite system $B'_1B'_2\cdots B'_{i-1}B'_{i+1}\cdots B'_n$, i.e. all $n-1$\ $B'$-labeled systems except $B'_i$.
Each summand above is equal to the squashed entanglement of some state of the following form: a bipartite state is prepared on some auxiliary system $Z$ and a control system $X$, a bipartite state is prepared on systems $L_B$ and $B'$, a controlled isometry $\sum_x \op{x}_X \otimes {U}^{\mc{M}^{x}}_{B'\to BE}$ is performed from $X$ to $B'$, and then $E$ is traced out. By applying the development in \cite[Appendix~A]{CY16}, we conclude that the auxiliary system $Z$ is not necessary. Thus, the state of systems $X$, $L_B$, $B'$, and $E$ can be taken to have the form in \eqref{eq:opt-form-sq-bnd}. From \eqref{eq:priv-sq-bound} and the above reasoning, since $\lim_{\varepsilon\to 0}\lim_{n\to \infty} \frac{f_1(\sqrt{\varepsilon},|K|)}{n}=0$, we can conclude that 
\begin{equation}
\widetilde{P}^{\textnormal{read}}_{\textnormal{n-a}}(\overline{\mc{M}}_{\msc{X}})\leq \sup_{p_X,\psi_{L_BB'}}E_{\sq}(XL;B)_\omega,
\end{equation} 
where $\omega_{XL_BB}=\Tr_{E}\{\omega_{XL_BBE}\}$, such that $\psi_{L_BB'}$ is a pure state and
\begin{equation}
\vert \omega\rangle_{XL_BBE}=\sum_{x\in\msc{X}}\sqrt{p_X(x)}|x\>_X\otimes{U}^{\mc{M}^{x}}_{B'\to BE}\ket{\psi}_{L_BB'}.
\end{equation}
This concludes the proof.
\end{proof}
\bigskip

We now bound the strong converse private reading capacity of an isometric wiretap memory cell in terms of the bidirectional max-relative entropy (see Chapter~\ref{ch:bqi}).
\begin{theorem}\label{thm:priv-read-strong-converse}
The strong converse private reading capacity $\widetilde{P}^{\textnormal{read}}(\overline{\mc{M}}_{\msc{X}})$ of an isometric wiretap memory cell $\overline{\mc{M}}_{\msc{X}}=\{\mc{U}^{\mc{M}^x}_{B'\to BE}\}_{x\in\msc{X}}$ is  bounded from above by the bidirectional max-relative entropy of entanglement $E^{2\to 2}_{\max}(\mc{N}^{\overline{\mc{M}}_{\msc{X}}}_{X'B'\to XB})$ of the bidirectional channel $\mc{N}^{\overline{\mc{M}}_{\msc{X}}}_{X'B'\to XB}$, i.e.,
\begin{equation}\label{eq:priv-read-strong-converse}
\widetilde{P}^{\textnormal{read}}(\overline{\mc{M}}_{\msc{X}})\leq E^{2\to 2}_{\max}(\mc{N}^{\overline{\mc{M}}_{\msc{X}}}_{XB'\to XB}),
\end{equation}
where 
\begin{equation}\label{eq:memory-cell-bidir}
\mc{N}^{\overline{\mc{M}}_{\msc{X}}}_{XB'\to XB}(\cdot)\coloneqq \Tr_{E}\left\{U^{\overline{\mc{M}}_{\msc{X}}}_{XB'\to XBE}(\cdot)\(U^{\overline{\mc{M}}_{\msc{X}}}_{XB'\to XBE}\)^\dag\right\},
\end{equation}
such that
\begin{equation}
U^{\overline{\mc{M}}_{\msc{X}}}_{XB'\to XBE}\coloneqq \sum_{x\in\msc{X}}\op{x}_{X}\otimes U^{\mc{M}^x}_{B'\to BE}.
\end{equation}
\end{theorem}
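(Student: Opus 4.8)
\textbf{Proof plan for Theorem~\ref{thm:priv-read-strong-converse}.} The strategy is to show that any $(n, P, \varepsilon, \delta)$ private reading protocol for the isometric wiretap memory cell $\overline{\mc{M}}_{\msc{X}}$ can be recast as a special case of an LOCC-assisted secret-key-agreement protocol over the bidirectional channel $\mc{N}^{\overline{\mc{M}}_{\msc{X}}}_{XB'\to XB}$ defined in \eqref{eq:memory-cell-bidir}, and then invoke the strong converse bound already established in Theorem~\ref{thm:emax-ent-dist-strong-converse} (together with the ``purified protocol'' correspondence reviewed in Section~\ref{sec:n-a-priv-read-coherent} and \cite[Appendix~B]{WTB16}). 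First I would set up the correspondence carefully: in a private reading protocol, Alice encodes a key $k$ by committing to a codeword $x^n(k)$, and the $i$th channel use applies $\mc{U}^{\mc{M}^{x_i(k)}}_{B'\to BE}$ to Bob's probe system. The bidirectional channel $\mc{N}^{\overline{\mc{M}}_{\msc{X}}}_{XB'\to XB}$ is precisely the ``controlled'' version of the memory cell in which the classical register $X$ (held by Alice, playing the role of the controlling system) dictates which $\mc{M}^x$ is applied to $B'$ (held by Bob). Thus one round of private reading — Alice's encoding of one letter $x_i(k)$ plus Bob's probing — corresponds exactly to one use of $\mc{N}^{\overline{\mc{M}}_{\msc{X}}}_{XB'\to XB}$, with Eve receiving the environment $E$ of its isometric extension $U^{\overline{\mc{M}}_{\msc{X}}}_{XB'\to XBE}$.

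The key steps, in order, would be: (i) Purify the private reading protocol as in Section~\ref{sec:n-a-priv-read-coherent}, so that at the end Alice holds a key system $K_A$ with shield $S_A$, Bob holds $K_B$ with shield $S_B$, and the output state satisfies the $\varepsilon'$-approximate private state condition \eqref{eq:coh-key-aprox-priv} with $\varepsilon' = \varepsilon + 2\delta$ (the precise bookkeeping here follows \cite[Appendix~B]{WTB16}). Note that, although the main text of Section~\ref{sec:n-a-priv-read-coherent} discusses the non-adaptive case in detail, the adaptive purification was noted there to go through analogously, and the adaptive interleaving channels are by assumption LOCC between Alice's and Bob's laboratories (Alice controls the $X$ register, Bob controls the probe and shield). (ii) Observe that the adaptive interleaving operations, Alice's coherent encoding (the isometry in \eqref{eq:iso-v-m} restricted to each round), and Bob's final decoding are all LOCC channels acting on the $X$-labeled systems versus the $B$-labeled systems; hence the entire protocol is an LOCC-assisted secret-key-agreement protocol over $n$ uses of $\mc{N}^{\overline{\mc{M}}_{\msc{X}}}_{XB'\to XB}$ in the sense of Section~\ref{sec:secret-dist-protocol}. (iii) Apply Theorem~\ref{thm:emax-ent-dist-strong-converse} directly: it gives $\frac{1}{n}\log_2 |K| \leq E^{2\to 2}_{\max}(\mc{N}^{\overline{\mc{M}}_{\msc{X}}}_{XB'\to XB}) + \frac{1}{n}\log_2\!\(\frac{1}{1-\varepsilon'}\)$. (iv) Rearrange into the strong converse form $1 - \varepsilon' \leq 2^{-n[P - E^{2\to 2}_{\max}(\mc{N}^{\overline{\mc{M}}_{\msc{X}}}_{XB'\to XB})]}$, so that if $P > E^{2\to 2}_{\max}(\mc{N}^{\overline{\mc{M}}_{\msc{X}}}_{XB'\to XB})$ the fidelity parameter decays exponentially, yielding the claimed bound on the strong converse private reading capacity.

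The main obstacle, and the step deserving the most care, is (ii): verifying that the bidirectional-channel framework of Chapter~\ref{ch:bqi} genuinely subsumes the private reading setup. Two subtleties must be checked. First, in private reading Alice's ``input'' to each channel use is a classical register in a definite (though coherently copied) state, whereas the bidirectional protocol of Section~\ref{sec:secret-dist-protocol} permits arbitrary quantum inputs on Alice's side; this is fine because restricting to a subset of allowed inputs only decreases achievable rates, so the upper bound still applies. Second, one must confirm that the security requirement of private reading (trace-distance closeness of Eve's state to a fixed state, averaged over keys) translates, via the purification, into the fidelity-to-a-private-state criterion used in Theorem~\ref{thm:emax-ent-dist-strong-converse}; this is exactly the content of the equivalences in \cite[Appendix~B]{WTB16} and Section~\ref{sec:rev-priv-states}, but the parameter conversion ($\varepsilon, \delta \mapsto \varepsilon'(2-\varepsilon')$ or similar) should be stated explicitly. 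Once these identifications are in place, the theorem follows with essentially no further computation, since all the analytic work is already done in Theorem~\ref{thm:emax-ent-dist-strong-converse} and Corollary~\ref{cor:amort-max-rel-ent}.
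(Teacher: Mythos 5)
Your proposal is correct and follows essentially the same route as the paper's proof: purify the protocol via the $\varepsilon' = \varepsilon + 2\delta$ correspondence from \cite[Appendix~B]{WTB16}, recognize the purified private reading protocol as a special case of an LOCC-assisted bidirectional secret-key-agreement protocol over $\mc{N}^{\overline{\mc{M}}_{\msc{X}}}_{XB'\to XB}$ (with Bob's adaptive channels as his local operations and no classical communication needed), and then invoke Theorem~\ref{thm:emax-ent-dist-strong-converse}. Your step (ii) spells out the LOCC structure in slightly more detail than the paper does, but the substance and the chain of lemmas are identical.
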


\begin{proof}
First we recall, as stated previously,
that a $(n,P,\varepsilon,\delta)$ (adaptive) private reading protocol of a memory cell $\overline{\mc{M}}_{\msc{X}}$, for reading out a secret key, can be realized by an $(n,P,\varepsilon'(2-\varepsilon'))$  purified secret-key-agreement reading protocol, where $\varepsilon'\coloneqq \varepsilon+2 \delta$.
Given that a purified secret-key-agreement reading protocol can be understood as particular case of a bidirectional secret-key-agreement protocol (as discussed in Section~\ref{sec:priv-dist-protocol}), we can conclude that the strong converse private reading capacity is bounded from above by 
\begin{equation}
\widetilde{P}^{\textnormal{read}}_{\textnormal{n-a}}(\overline{\mc{M}}_{\msc{X}}) \leq E^{2\to 2}_{\max}(\mc{N}^{\overline{\mc{M}}_{\msc{X}}}_{XB'\to XB}),
\end{equation}
where the bidirectional channel is
\begin{equation}
\mc{N}^{\overline{\mc{M}}_{\msc{X}}}_{XB'\to XB}(\cdot)=\Tr_{E}\left\{U^{\overline{\mc{M}}_{\msc{X}}}_{XB'\to XBE}(\cdot)\(U^{\overline{\mc{M}}_{\msc{X}}}_{XB'\to XBE}\)^\dag\right\},
\end{equation}
such that
\begin{equation}
U^{\overline{\mc{M}}_{\msc{X}}}_{XB'\to XBE}\coloneqq \sum_{x\in\msc{X}}\op{x}_{X}\otimes U^{\mc{M}^x}_{B'\to BE}. 
\end{equation}
The reading protocol is a particular instance of an LOCC-assisted bidirectional secret-key-agreement protocol in which classical communication between Alice and Bob does not occur. The local operations of Bob in the bidirectional secret-key-agreement protocol are equivalent to adaptive operations by Bob in reading. Therefore, applying Theorem~\ref{thm:emax-ent-dist-strong-converse}, we find that \eqref{eq:priv-read-strong-converse} holds, where the strong converse in this context means that $\varepsilon+2 \delta \to 1$ in the limit as $n\to \infty$ if the reading rate exceeds $E^{2\to 2}_{\max}(\mc{N}^{\overline{\mc{M}}_{\msc{X}}}_{XB'\to XB})$.\footnote{Such a bound might be called a ``pretty strong converse,'' in the sense of \cite{MW13}. However, we could have alternatively defined a private reading protocol to have a single parameter characterizing reliability and security, as in \cite{WTB16}, and with such a definition, we would get a true strong converse.}
\end{proof}

\subsubsection{Qudit erasure wiretap memory cell}

The main goal of this section is to evaluate the private reading capacity of the qudit erasure wiretap memory cell (cf.~Definition~\ref{def:qudit-erasure}). 

\begin{definition}[Qudit erasure wiretap memory cell]\label{def:qudit-erasure-pr}
The qudit erasure wiretap memory cell $\overline{\mc{Q}}^q_{\msc{X}}=\left\{\mc{Q}^{q,x}_{B'\to BE}\right\}_{x\in\msc{X}}$, where $\msc{X}$ is of size $|X|=d^2$, consists of the following qudit channels:
\begin{equation}
\mc{Q}^{q,x}(\cdot)=\mc{Q}^q(\sigma^x(\cdot)\(\sigma^x\)^\dag) ,
\end{equation}
where $\mc{Q}^q$ is an isometric channel extending the qudit erasure channel \cite{GBP97}:
\begin{align}
\mc{Q}^q(\rho_{B'})& =U^{q} \rho_{B'} (U^q)^\dag,\\
U^q \vert \psi\rangle_{B'}
&= \sqrt{1-q}\vert \psi \rangle_B \vert e \>_E + \sqrt{q}|e\>_B \vert \psi\>_E,
\end{align}
such that $q\in [0,1]$, $\dim(\mc{H}_{B'})=d$, $\op{e}$ is an erasure state orthogonal to the support of all possible input states $\rho$, and
$\forall x\in\msc{X}: \sigma^x\in\mathbf{H}$ are the Heisenberg--Weyl operators as reviewed in \eqref{eq:HW-op}. Observe that $\mc{Q}^q_{\msc{X}}$ is jointly covariant with respect to the Heisenberg--Weyl group $\mathbf{H}$ because the qudit erasure channel $\msc{Q}^q$ is covariant with respect to $\mathbf{H}$.
\end{definition}

Now we establish the private reading capacity of the qudit erasure wiretap memory cell. 
 
\begin{proposition}
The private reading capacity and strong converse private reading capacity of the qudit erasure wiretap memory cell $\overline{\mc{Q}}^q_{\msc{X}}$ are given by 
\begin{equation}
P^{\operatorname{read}}(\overline{\mc{Q}}^q_{\msc{X}})=
\widetilde{P}^{\operatorname{read}}(\overline{\mc{Q}}^q_{\msc{X}})=
2(1-q)\log_2 d.
\end{equation}
\end{proposition}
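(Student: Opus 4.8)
The plan is to prove the two matching bounds separately. For the \emph{converse} (upper bound), I would apply Theorem~\ref{thm:priv-read-strong-converse}, which gives $\widetilde{P}^{\operatorname{read}}(\overline{\mc{Q}}^q_{\msc{X}})\leq E^{2\to 2}_{\max}(\mc{N}^{\overline{\mc{Q}}^q_{\msc{X}}}_{XB'\to XB})$, where the associated bidirectional channel is built from the controlled isometry $U^{\overline{\mc{Q}}^q_{\msc{X}}}_{XB'\to XBE}=\sum_{x}\op{x}_X\otimes U^{\mc{Q}^{q,x}}_{B'\to BE}$. The key observation is that $\overline{\mc{Q}}^q_{\msc{X}}$ is jointly covariant with respect to the Heisenberg--Weyl group $\mathbf{H}$ (stated in Definition~\ref{def:qudit-erasure-pr}), so by Proposition~\ref{prop:bicov} the bidirectional channel is teleportation-simulable with resource state equal to its Choi state. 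Then I would bound $E^{2\to 2}_{\max}$ from above using the structure of the erasure channel: the Choi state of the qudit erasure channel is a convex combination $(1-q)\Phi_{RB}+q\,\pi_R\otimes\op{e}_B$, and since $E_{\max}$ is quasi-convex and monotone under LOCC, the only entanglement-bearing term contributes at most $\log_2 d$ per use, with a factor $2$ coming from the fact that both $X$ and $B$ are read out (as in the entanglement-assisted capacity calculation of Theorem~\ref{thm:covariant-to-EA-cap} and Corollary~\ref{cor:e-cell-r}). Combining these yields $\widetilde{P}^{\operatorname{read}}(\overline{\mc{Q}}^q_{\msc{X}})\leq 2(1-q)\log_2 d$.

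For the \emph{achievability} (lower bound) $P^{\operatorname{read}}(\overline{\mc{Q}}^q_{\msc{X}})\geq 2(1-q)\log_2 d$, I would use the non-adaptive coding theorem, Theorem~\ref{thm:n-a-priv-read}, which shows that the rate $\frac{1}{n}[I(X^n;L_BB^n)_\tau - I(X^n;E^n)_\tau]$ is achievable. The natural choice is $n=1$, a uniform distribution $p_X$ over the $d^2$ Heisenberg--Weyl labels, and the maximally entangled state $\Phi_{L_BB'}$ as the probe. With this choice, the cq-state $\tau_{XL_BB E}$ decomposes according to whether the erasure flag fires. With probability $1-q$, Bob receives the state $\mc{Q}^{q,x}$ applied non-erased to half of $\Phi$, which is an ensemble of $d^2$ orthogonal maximally entangled states (the generalized Bell states $\sigma^x\Phi\sigma^{x\dagger}$), so $I(X;L_BB)$ picks up $\log_2 d^2 = 2\log_2 d$ on this branch; with probability $q$ Bob's system is erased and carries no information about $x$. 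At the same time, on the erasure branch Eve receives the (permuted) input state but \emph{no} flag telling her which $x$ was used beyond what the input state reveals, and on the non-erasure branch Eve's system is the constant erasure symbol; a short calculation using the covariance/twirling structure shows $I(X;E)_\tau = 2q\log_2 d$ when one accounts carefully for what the Heisenberg--Weyl twirl does to Eve's reduced state. Subtracting, $I(X;L_BB)_\tau - I(X;E)_\tau = 2(1-q)\log_2 d$, which is the claimed rate.

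The main obstacle I anticipate is the careful bookkeeping of Eve's information $I(X;E)_\tau$ in the achievability direction. One must be precise about the isometric extension of the erasure channel used in Definition~\ref{def:qudit-erasure-pr}: when the erasure occurs, Eve actually receives $\sigma^x\psi(\sigma^x)^\dagger$, so naively Eve would learn something about $x$, and one has to verify that averaging over the uniform Heisenberg--Weyl ensemble (equivalently, inputting half of $\Phi_{L_BB'}$) makes Eve's conditional states on the erasure branch into a fixed maximally mixed state independent of $x$ up to the branch label — so that Eve's mutual information reduces exactly to the $2q\log_2 d$ coming from learning whether erasure happened and, on that branch, which Bell-type outcome. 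This twirling argument is the crux; once it is in place, both bounds meet at $2(1-q)\log_2 d$ and the proposition follows, with the strong converse statement inherited from Theorem~\ref{thm:priv-read-strong-converse}.
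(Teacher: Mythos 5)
Your proposal has genuine gaps in both directions; neither computation quite matches the paper's, and the errors happen to cancel to the right answer in the achievability direction.

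\textbf{Achievability.} You conclude $I(X;E)_\tau = 2q\log_2 d$, but this is wrong: $I(X;E)_\tau = 0$. You correctly identify that the Heisenberg--Weyl twirl renders Eve's conditional state independent of $x$, but then contradict yourself by claiming Eve nonetheless gains $2q\log_2 d$. Using the isometric extension $U^q\vert\psi\>_{B'} = \sqrt{1-q}\vert\psi\>_B\vert e\>_E + \sqrt{q}\vert e\>_B\vert\psi\>_E$ with input $\sigma^x$ applied first and half of $\Phi_{L_BB'}$ sent in, Eve's reduced state is
\begin{equation}
\rho^x_E = (1-q)\op{e}_E + q\,\frac{1}{d}\sum_i \sigma^x\op{i}(\sigma^x)^\dagger = (1-q)\op{e}_E + q\,\pi_E,
\end{equation}
which is literally identical for every $x$. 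Since the erasure probability itself is independent of $x$, the branch flag carries no information about $x$ either. Hence $I(X;E)_\tau = 0$. Correspondingly, your implicit claim that $I(X;L_BB)_\tau = 2\log_2 d$ is also wrong: Bob learns nothing about $x$ on the erasure branch, so $I(X;L_BB)_\tau = 2(1-q)\log_2 d$, which you can verify directly via $S(L_BB) - S(L_BB|X) = [\log_2 d + h_2(q) + (1-q)\log_2 d] - [h_2(q) + q\log_2 d]$. The paper computes exactly this: $I(X;L_BB)_\rho = 2(1-q)\log_2 d$ and $I(X;E)_\rho = 0$. Your two errors cancel, so you land on the right number, but the intermediate values are wrong.

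\textbf{Converse.} Invoking Theorem~\ref{thm:priv-read-strong-converse} and bounding $E^{2\to2}_{\max}$ by \emph{quasi-convexity} is insufficient. Quasi-convexity only gives $E_{\max}(\sum_i p_i\rho_i) \leq \max_i E_{\max}(\rho_i)$, so applied to the orthogonal decomposition of the Choi state it yields $2\log_2 d$ with no factor of $(1-q)$. The paper instead invokes Corollary~\ref{cor:str-conv-TP-simul}, which for bicovariant channels gives a bound in terms of the ordinary relative entropy of entanglement $E(\cdot)$ of the Choi state (a strictly tighter quantity). Because $E$ satisfies \emph{full} convexity, one can split along the two orthogonal branches and get a $q$-weighted average; one then argues that the erasure branch contributes zero relative entropy of entanglement, leaving $(1-q)\cdot 2\log_2 d$. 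Your route through $E^{2\to2}_{\max}$ does not obviously close this gap.
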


\begin{proof}
To prove the proposition, let us consider that
$\mc{N}^{\overline{\mc{Q}}^q_{\msc{X}}}$ as defined in \eqref{eq:memory-cell-bidir} is bicovariant and $\mc{Q}^q_{B'\to B}$ is covariant. Thus, to get an upper bound on the strong converse private reading capacity, it is sufficient to consider the action of a coherent use of the memory cell on a maximally entangled state (see Corollary~\ref{cor:str-conv-TP-simul}). 
We furthermore apply the development in \cite[Appendix~A]{CY16} to restrict to the following state:
\begin{align}
\phi_{XL_BBE}&\coloneqq\frac{1}{\sqrt{|X|}}\sum_{x\in\msc{X}}\ket{x}_{X}\otimes U^{\mc{Q}^{q,x}}_{B'\to BE}\ket{\Phi}_{L_B{B'}}\nonumber\\
&=\sqrt{\frac{1-q}{d |X|}}\sum_{i=0}^d\sum_{x}\ket{x}_{X}\otimes\sigma^x\ket{i}_B\ket{i}_{L_B}\ket{e}_{E} +\sqrt{\frac{q}{d |X|}}\sum_{i=0}^d\sum_{x}\ket{x}_{X}\otimes\ket{e}_B\ket{i}_{L_B}\otimes \sigma^x\ket{i}_{E}.
\end{align}

Observe that
$\sum_{i=0}^{d-1}\sum_{x}\ket{x}_{X}\otimes\ket{e}_B\ket{i}_{L_B}\otimes \sigma^x\ket{i}_{E}$ and $\sum_{i=0}^{d-1}\sum_{x}\ket{x}_{X}\otimes\sigma^x\ket{i}_B\ket{i}_{L_B}\ket{e}_{E}$ are orthogonal. Also, since, $\ket{e}$ is orthogonal to the input Hilbert space, the only term contributing to the relative entropy of entanglement is $\sqrt{1-q}\frac{1}{d}\sum_{i=0}^d\sum_{x}\ket{x}_{X}\otimes\sigma^x\ket{i}_B\ket{i}_{L_B}$. Let
\begin{equation}
\ket{\psi}_{XL_BB}=\frac{1}{\sqrt{|X|}}\sum_{x=0}^{d^2-1}\ket{x}_{X}\otimes\sigma^x\ket{\Phi}_{BL_B}.
\end{equation}
$\{\sigma^x\ket{\Phi}_{BL_B}\}_{x\in\msc{X}}\in\ONB(\mc{H}_{B}\otimes\mc{H}_{L_B})$ (see Appendix~\ref{app:qudit}), so 
\begin{equation}
\ket{\psi}_{XL_BB}=\ket{\Phi}_{X:BL_B}=\frac{1}{d}\sum_{x=0}^{d^2-1}\ket{x}_{X}\otimes\ket{x}_{BL_B},
\end{equation} 
and $E(X;LB)_\Phi =2\log_2 d$. Applying Corollary~\ref{cor:str-conv-TP-simul} and convexity of relative entropy of entanglement, we can conclude that
\begin{equation}\label{eq:e-cell-up}
\widetilde{P}^{\text{read}}(\overline{\mc{Q}}^{q}_{\msc{X}})\leq 2(1-q)\log_2 d.
\end{equation}
From Theorem~\ref{thm:n-a-priv-read}, the following bound holds
\begin{align}
P^{\text{read}}(\overline{\mc{Q}}^q_{\msc{X}}) &\geq P^{\text{read}}_{\text{n-a}}(\overline{\mc{Q}}^q_{\msc{X}})\label{eq:e-cell-down}\\
&\geq I(X;L_BB)_{\rho}-I(X;E)_{\rho},
\end{align}
where 
\begin{equation}
\rho_{XL_BBE}=\frac{1}{d^2}\sum_{x=0}^{d^2-1}\op{x}_X\otimes \mc{U}^{\mc{Q}^{q,x}}_{B'\to BE}(\Phi_{X:L_BB'}).
\end{equation}
After a calculation, we find that $I(X;E)_{\rho}=0$ and $I(X;L_BB)_{\rho}=2(1-q)\log_2 d$. Therefore, from \eqref{eq:e-cell-up} and the above, the statement of the theorem is concluded.
\end{proof}

\bigskip
From the above and Corollary~\ref{cor:e-cell-r}, we can conclude that there is no difference between the private reading capacity of the qudit erasure memory cell and its reading capacity.

\section{Entanglement generation from a coherent memory cell}\label{sec:coh-read}

In this section, we consider an entanglement distillation task between two parties Alice and Bob holding systems $X$ and $B$, respectively. The set up is similar to purified secret key generation when using a memory cell (see Section~\ref{sec:n-a-priv-read-coherent}). The goal of the protocol is as follows: Alice and Bob, who are spatially separated, try to generate a maximally entangled state between them by making coherent use of an isometric wiretap memory cell $\overline{\mc{M}}_{\msc{X}}=\{\mc{U}^{\mc{M}^x}_{B'\to BE}\}_{x\in\msc{X}}$ known to both parties. That is, Alice and Bob have access to the following controlled isometry:
\begin{equation}
U^{\overline{\mc{M}}_{\msc{X}}}_{XB'\to XBE}\coloneqq \sum_{x\in\msc{X}}\op{x}_{X}\otimes U^{\mc{M}^x}_{B'\to BE},\label{eq:contrl-iso-coherent-mem-cell}
\end{equation}
such that $X$ and $E$ are inaccessible to Bob. Using techniques from \cite{DW05}, we can state an achievable rate of entanglement generation by coherently using the memory cell.  

\begin{theorem}
The following
rate is achievable for entanglement generation when using the controlled isometry in \eqref{eq:contrl-iso-coherent-mem-cell}:
\begin{equation}
I(X\rangle L_BB)_{\omega},
\end{equation}
where $I(X\rangle L_BB)_{\omega}$ is the coherent information of state $\omega_{XL_BB}$ \eqref{eq:coh-info} such that 
\begin{equation}
\ket{\omega}_{XL_BBE}=\sum_{x\in\msc{X}}\sqrt{p_{X}(x)}|x\rangle_{X}\otimes U^{\mc{M}^x}_{B'\rightarrow
BE}|\psi\rangle_{L_BB'}.
\end{equation}
\end{theorem}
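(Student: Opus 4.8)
The plan is to reduce this entanglement generation task to a standard result about entanglement generation over a quantum channel with shared randomness / a cq-structure, and then invoke the achievability side of the coherent information / quantum capacity theorem from \cite{DW05}. First I would observe that the controlled isometry $U^{\overline{\mc{M}}_{\msc{X}}}_{XB'\to XBE}$ in \eqref{eq:contrl-iso-coherent-mem-cell}, when Alice holds $X$ and Bob holds $B'$ (input) and $B$ (output), with $E$ going to the environment, is precisely an isometric extension of a quantum channel from Bob's input $B'$ to Bob's output $B$ that is \emph{controlled} by Alice's classical-quantum register $X$. In other words, once Alice prepares a superposition $\sum_x \sqrt{p_X(x)}\ket{x}_X$ and keeps a purifying/reference copy, the joint map is an isometry whose Stinespike dilation has environment $E$, and the induced channel to Bob (tracing out $E$ and keeping $X$ coherent with Alice) is $\mc{N}^{\overline{\mc{M}}_{\msc{X}}}_{XB'\to XB}$ from \eqref{eq:memory-cell-bidir}.

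The key steps, in order, would be: (1) Fix an input distribution $p_X$ and a pure state $\psi_{L_BB'}$, and define $\ket{\omega}_{XL_BBE}$ as in the statement. This is the single-letter state whose coherent information $I(X\rangle L_BB)_\omega = S(L_BB)_\omega - S(XL_BB)_\omega$ we wish to achieve. (2) Consider $n$ independent uses of the controlled isometry on $n$ copies of $\psi_{L_BB'}$, with Alice preparing $\ket{\phi}_{X^n} = \sum_{x^n}\sqrt{p_{X^n}(x^n)}\ket{x^n}$ for an appropriate codeword distribution; equivalently, view the protocol as coherent encoding into the memory cell exactly as in the purified private reading protocol of Section~\ref{sec:n-a-priv-read-coherent}, but now the figure of merit is fidelity to a maximally entangled state between Alice's register and Bob's systems rather than a private state. (3) Apply the Devetak--Winter / quantum capacity achievability argument \cite{DW05}: there exists a coding scheme (a choice of coherent codebook on $X^n$ together with a decoding isometry on Bob's side $L_B^nB^n$) such that Alice and Bob share an approximate maximally entangled state of rate $I(X\rangle L_BB)_\omega - \delta$ for any $\delta>0$ and sufficiently large $n$, with error vanishing as $n\to\infty$. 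The crucial point enabling this is that the ``channel'' seen by the coherent information is the one carrying $X$ and $B$ to Bob while $E$ is lost, which is exactly the complementary structure that $I(X\rangle L_BB)_\omega$ quantifies via $S(XL_BB) = S(E)$ on the pure state $\omega_{XL_BBE}$.

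I would then conclude by noting that since $p_X$ and $\psi_{L_BB'}$ were arbitrary, the rate $I(X\rangle L_BB)_\omega$ is achievable, which is the claim. The main obstacle — really the only nontrivial point — is making precise the reduction in step (2)–(3): one must verify that the coherent-encoding structure (Alice controls which channel $\mc{M}^x$ acts, and keeps a coherent copy of $x$) genuinely yields a channel to which the results of \cite{DW05} apply, i.e., that ``coherently using the memory cell'' is the same as transmitting Bob's reference system $L_B$ through the $X$-indexed channel and then having Alice's $X$ register play the role of part of the output shared with Bob. This is essentially the observation that a controlled isometry $\sum_x \op{x}_X \otimes U^{\mc{M}^x}_{B'\to BE}$ is an isometric dilation of the channel $\rho_{XB'} \mapsto \Tr_E\{U^{\overline{\mc{M}}_{\msc{X}}}\rho_{XB'}(U^{\overline{\mc{M}}_{\msc{X}}})^\dagger\}$, together with the fact (used in the same spirit as \cite[Appendix~A]{CY16} in the preceding section) that it suffices to take $\psi_{L_BB'}$ pure and $L_B \simeq B'$. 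Everything else is a direct invocation of the quantum capacity lower bound, so no lengthy calculation is needed.
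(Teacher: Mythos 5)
Your proposal is correct and follows essentially the same route as the paper's proof: both coherify a Devetak--Winter-style private reading codebook, invoking \cite{DW05} to obtain the coherent information rate $I(X\rangle L_BB)_\omega = I(X;L_BB)_\rho - I(X;E)_\rho$. The paper spells out the details you relegate to steps (2)--(3) — the coherent encoding unitary $|m\rangle|k\rangle|0\rangle\mapsto|m\rangle|k\rangle|x^n(m,k)\rangle$, Bob's coherent measurement $\sum_{m,k}\sqrt{\Lambda^{m,k}}\otimes|m\rangle|k\rangle$, the Fourier transform and measurement on Alice's key register followed by one-way classical communication, and the use of the privacy criterion together with Uhlmann's theorem to construct Bob's decoupling isometry — but the conceptual skeleton is identical.
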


\begin{proof}
Let $\{x^{n}(m,k)\}_{m,k}$ denote a codebook for private reading, as discussed in Section~\ref{sec:na-priv-read}, and let
$\psi_{L_BB'}$ denote a pure state that can be fed in to each coherent use of the memory cell. The
codebook is such that for each $m\in\msc{M}$ and $k\in\msc{K}$, the codeword $x^{n}(m,k)$ is
unique. The rate of private reading is given by
\begin{equation}
I(X;L_BB)_{\rho}-I(X;E)_{\rho},
\end{equation}
where%
\begin{equation}
\rho_{XB'BE}=\sum_{x}p_{X}(x)\op{x}_{X}\otimes\mathcal{U}
_{B'\rightarrow BE}^{\mc{M}^x}(\psi_{L_BB'}).
\end{equation}
Note that the following equality holds%
\begin{equation}
I(X;L_BB)_{\rho}-I(X;E)_{\rho}=I(X\rangle L_BB)_{\omega},
\end{equation}
where%
\begin{equation}
\ket{\omega}_{XL_BBE}=\sum_{x}\sqrt{p_{X}(x)}|x\rangle_{X}\otimes U_{B'\rightarrow
BE}^{\mc{M}^x}|\psi\rangle_{L_BB'}. 
\end{equation} 
The code is such that there is a measurement $\Lambda
_{L_B^{n}B^{n}}^{m,k}$ for all $m,k$, for which%
\begin{equation}
\operatorname{Tr}\{\Lambda_{L_B^{n}B^{n}}^{m,k}\mathcal{M}_{{B'}^{n}\rightarrow
B^{n}}^{x^{n}(m,k)}(\psi_{L_BB'}^{\otimes n})\}\geq1-\varepsilon,
\end{equation}
and%
\begin{equation}
\frac{1}{2}\left\Vert \frac{1}{|K|}\sum_{k}\widehat{\mathcal{M}}_{{B'}^{n}
\rightarrow E^{n}}^{x^{n}(m,k)}(\psi_{B'}^{\otimes n})-\sigma_{E^{n}%
}\right\Vert _{1}\leq\delta.\label{eq:security-condition}
\end{equation}

From this private reading code, we construct a coherent reading code as
follows. Alice begins by preparing the state%
\begin{equation}
\frac{1}{\sqrt{|M||K|}}\sum_{m,k}|m\rangle_{M_A}|k\rangle_{K_A}.
\end{equation}
Alice performs a unitary that implements the following mapping:%
\begin{equation}
|m\rangle_{M_A}|k\rangle_{K_A}|0\rangle_{X^{n}}\rightarrow|m\rangle_{M_A}%
|k\rangle_{K_A}|x^{n}(m,k)\rangle_{X^{n}},\label{eq:encoding-unitary}%
\end{equation}
so that the state above becomes%
\begin{equation}
\frac{1}{\sqrt{|M||K|}}\sum_{m,k}|m\rangle_{M_A}|k\rangle_{K_A}|x^{n}(m,k)\rangle
_{X^{n}}.
\end{equation}
Bob prepares the state $|\psi\rangle_{L_BB'}^{\otimes n}$, so that the overall
state is%
\begin{equation}
\frac{1}{\sqrt{|M|K|}}\sum_{m,k}|m\rangle_{M_A}|k\rangle_{K_A}|x^{n}(m,k)\rangle
_{X^{n}}|\psi\rangle_{L_BB'}^{\otimes n}.
\end{equation}
Now Alice and Bob are allowed to access $n$ instances of the controlled
isometry%
\begin{equation}
\sum_{x}|x\rangle\langle x|_{X}\otimes U_{B'\rightarrow BE}^{\mc{M}^x}%
,\label{eq:controlled-isometry-coherent-memory-cell}%
\end{equation}
and the state becomes
\begin{equation}
\frac{1}{\sqrt{|M||K|}}\sum_{m,k}|m\rangle_{M_A}|k\rangle_{K_A}|x^{n}(m,k)\rangle
_{X^{n}}U_{{B'}^{n}\rightarrow B^{n}E^{n}}^{\mc{M}^{x^{n}(m,k)}}|\psi\rangle_{L_BB'}^{\otimes
n}.
\end{equation}

Bob now performs the isometry%
\begin{equation}
\sum_{m,k}\sqrt{\Lambda_{L_B^{n}B^{n}}^{m,k}}\otimes|m\rangle_{M_{1}}%
|k\rangle_{K_{1}},
\end{equation}
and the resulting state is close to%
\begin{equation}
\frac{1}{\sqrt{|M||K|}}\sum_{m,k}|m\rangle_{M_A}|k\rangle_{K_A}|x^{n}(m,k)\rangle
_{X^{n}}U_{{B'}^{n}\rightarrow B^{n}E^{n}}^{x^{n}(m,k)}|\psi\rangle_{L_BB'}^{\otimes
n}|m\rangle_{M_{1}}|k\rangle_{K_{1}}.
\end{equation}
At this point, Alice locally uncomputes the unitary from
\eqref{eq:encoding-unitary} and discards the $X^{n}$ register, leaving the
following state:%
\begin{equation}
\frac{1}{\sqrt{|M||K|}}\sum_{m,k}|m\rangle_{M_A}|k\rangle_{K_A}U_{{B'}^{n}\rightarrow
B^{n}E^{n}}^{\mc{M_A}^{x^{n}(m,k)}}|\psi\rangle_{L_BB'}^{\otimes n}|m\rangle_{M_{1}%
}|k\rangle_{K_{1}}.
\end{equation}
Following the scheme of \cite{DW05} for entanglement distillation, she then performs a Fourier transform on
the register $K_A$ and measures it, obtaining an outcome $k^{\prime}%
\in\{0,\ldots,K-1\}$, leaving the following state:%
\begin{equation}
\frac{1}{\sqrt{|M||K|}}\sum_{m,k}e^{2\pi ik^{\prime}k/K}|m\rangle_{M_A}%
U_{{B'}^{n}\rightarrow B^{n}E^{n}}^{\mc{M_A}^{x^{n}(m,k)}}|\psi\rangle_{L_BB'}^{\otimes
n}|m\rangle_{M_{1}}|k\rangle_{K_{1}}.
\end{equation}
She communicates the outcome to Bob, who can then perform a local unitary on system $K_1$ to
bring the state to%
\begin{equation}
\frac{1}{\sqrt{|M||K|}}\sum_{m,k}|m\rangle_{M_A}U_{{B'}^{n}\rightarrow B^{n}E^{n}%
}^{\mc{M}^{x^{n}(m,k)}}|\psi\rangle_{L_BB'}^{\otimes n}|m\rangle_{M_{1}}|k\rangle_{K_{1}}.
\end{equation}
Now consider that, conditioned on a value $m$ in register $M$, the local state
of Eve's register $E^n$ is given by%
\begin{equation}
\frac{1}{|K|}\sum_{k}\widehat{\mathcal{M}}_{{B'}^{n}\rightarrow E^{n}}^{x^{n}%
(m,k)}(\psi_{B'}^{\otimes n}).
\end{equation}
Thus, by invoking the security condition in \eqref{eq:security-condition} and
Uhlmann's theorem \cite{Uhl76}, there exists a isometry $V_{L_B^{n}B^{n}K_{1}\rightarrow
\widetilde{B}}^{m}$ such that
\begin{equation}
V_{L_B^{n}B^{n}K_{1}\rightarrow\widetilde{B}}^{m}\left[  \frac{1}{\sqrt{|K|}}%
\sum_{k}U_{{B'}^{n}\rightarrow B^{n}E^{n}}^{\mc{M}^{x^{n}(m,k)}}|\psi\rangle_{L_BB'}^{\otimes
n}|k\rangle_{K_{1}}\right]  \approx|\varphi^{\sigma}\rangle_{E^{n}%
\widetilde{B}}.
\end{equation}
Thus, Bob applies the controlled isometry
\begin{equation}
\sum_{m}|m\rangle\langle m|_{M_{1}}\otimes V_{L_B^{n}B^{n}K_{1}\rightarrow
\widetilde{B}}^{m},
\end{equation}
and then the overall state is close to%
\begin{equation}
\frac{1}{\sqrt{|M|}}\sum_{m}|m\rangle_{M_A}|\varphi^{\sigma}\rangle_{E^{n}%
\widetilde{B}}|m\rangle_{M_{1}}.
\end{equation}
Bob now discards the register $\widetilde{B}$ and Alice and Bob are left with
a maximally entangled state that is locally equivalent to approximately $ n[I(X;L_BB)_{\rho
}-I(X;E)_{\rho}] = nI(X\rangle L_BB)_{\omega}$ ebits.
\end{proof}

\section{Conclusion}
In this chapter, we discussed a private communication task called private reading. This task allows for secret key agreement between an encoder and a reader in the presence of a passive eavesdropper. Observing that access to an isometric wiretap memory cell by an encoder and the reader is a particular kind of bipartite quantum interaction, we were able to leverage bounds derived on the LOCC-assisted bidirectional secret-key-agreement capacity (Section~\ref{sec:secret-dist-protocol}) to determine bounds on its private reading capacity. We also determined a regularized expression for the non-adaptive private reading capacity of an arbitrary wiretap memory cell. For particular classes of memory cells obeying certain symmetries, such that there is an adaptive-to-non-adaptive reduction in a reading protocol (see Chapter~\ref{ch:read}), the private reading capacity and the non-adaptive private reading capacity are equal. We derived a single-letter, weak converse upper bound on the non-adaptive private reading capacity of an isometric wiretap memory cell in terms of the squashed entanglement. We also proved a strong converse upper bound on the private reading capacity of an isometric wiretap memory cell in terms of  the bidirectional max-relative entropy of entanglement. We applied discussed  results to show that the private reading capacity and the reading capacity of the qudit erasure memory cell are equal. Finally, we determined an achievable rate at which entanglement can be generated between two parties who have coherent access to a memory cell. 

We note that there is a connection to private reading protocol and floodlight quantum key distribution (FL-QKD) protocol \cite{Sha09,ZZD+16}. In FL-QKD, Alice transmits system (signal) $A'$ in the state $\rho_{LA'}$ to Bob through a channel $\mc{A}_{A'\to B}$ and keeps idler system $L$ with her. Bob performs some unitary channel $\mc{U}^{k}_{B\to B}$, which is noiseless, based on a key $(k)$ that he wants to communicate to Alice. Next, Bob performs another transformation $\mc{B}_{B\to A'}$ on the local output after the action of $\mc{U}^k$.  Now the system $A'$ in the state $\mc{B}_{B\to A'}\circ\mc{U}^k_{B\to B}\circ\mc{A}_{A'\to B}(\rho_{LA'})$ is transmitted to Alice over the channel $\mc{A}_{A'\to B}$. It is assumed that Eve has complete access to complementary $\hat{\mc{A}}_{A'\to E}$ of $\mc{A}$. Bob performs joint measurement on $LB$ in the state $\mc{A}_{A'\to B}\circ\mc{B}_{B\to A'}\circ\mc{U}^k_{B\to B}\circ\mc{A}_{A'\to B}(\rho_{LA'})$ to decode the key. 

Now let us modify the above FL-QKD protocol in the following way. Let us assume $\mc{A}_{A'\to B}$ to be noiseless channel and constraining access of Eve only to losses due to noisy local operations $\mc{B}\circ\mc{U}^k$ for all $k$, where encoding of key is such that the system (loss) accessible to Eve is independent of $k$. Then the modified FL-QKD protocol effectively reduces to a private reading protocol. Hence, framework to derive bounds on the private capacity here may provide some insight for deriving converse bounds on the private capacity of FL-QKD. We leave this question open for future direction.

\clearpage

\chapter{Hiding Digital Information in Quantum Processes for Security}\label{ch:sr}
In this era of rapidly-advancing technologies, there is a pertinent need for protocols that allow for secure reading of memory devices under adversarial scrutiny\blfootnote{Part of this chapter is based on an unpublished work done in collaboration with Sumeet Khatri and Mark M.~Wilde.}. Security requirements may vary depending on the situation the reader is in; for example, a person reading a document in a library (or internet caf\'{e}) wants to ensure that the librarian is not eavesdropping (cf.~Chapter~\ref{ch:priv-read}). Similarly, a spy desires to read messages securely from his or her home organization when in the vicinity of a rival organization, without arousing suspicion of the rival organization. 

	By exploiting the laws of quantum mechanics, the capabilities of information processing and computing tasks can be pushed beyond the limitations imposed by classical information theory. This also provides the opportunity to devise new information processing protocols, e.g., quantum key distribution \cite{BB84,RennerThesis} and quantum teleportation \cite{BBC+93,BFK00}. The tasks of imaging, reading, sensing, or spectroscopy of an (unknown) object of interest essentially involves the identification of an interaction process between probe system, which is in known state, and the object of interest \cite{MPS+02,BRV00,Llo08,YA11,Pir11,DRC17,DBW17}. It is known that the most interaction processes that lead to physical transformation of the state can be understood as a quantum channel. This makes it natural to model task of reading any digital memory device as the read-out of classical bits of information encoded as a sequence of quantum channels chosen from a particular set, which is called memory cell. 
	
	In this chapter, we discuss information processing and communication protocols for the secure reading of digital information hidden in quantum processes. Here we consider two kinds of adversaries: a passive adversary who has access only to the environment and a semi-passive adversary who has access to the memory device but cannot alter it. Before introducing secure reading protocols in Section~\ref{sec:sr-pro}, we briefly discuss secure communication protocol with zero-error where message is hidden in noiseless gates, i.e., unitary operations.  Then we formally introduce secure reading protocols where message is encoded in noisy gates, i.e., quantum channels. In Section~\ref{sec:sr-ex}, we illustrate the possibility of secure reading by providing examples where memory device is encoded using a binary memory cell consisting of amplitude damping channels \cite{NC00,Fuj04} or depolarizing channels \cite{BSST99}. Finally, we briefly discuss the application of aforementioned protocols to threat level identification (TLI), which is inspired by IFF: identification, friend or foe \cite{Bow85}.
	
\section{Secure communication using gate circuit}\label{sec:sr-toy}
Let us assume a situation where two distant friends,  Hardy and Ramanujan, share a computer network. Assume that Hardy's computer is also accessible to Littlewood. Ramanujan wants to share a message on network intended only for Hardy. Ramanujan is fine with communicating message on network if chances of getting discovered by Littlewood is low. We refer to a collection of unitary operations as a gate-set. 

Consider that a bipartite computer network between Ramanujan and Hardy has five ports, labeled as $M,K,K',B',B$. $M$ is message register taking values $m\in\msc{M}$ and $K,K'$ are key registers taking values $k\in\msc{K}$, where $K=K'$, and $B',B$ corresponds to probing and reading ports, respectively. It is publicly known that computer performs a unitary operation $\mc{U}^g_{B'\to B}(\cdot)\coloneqq U^g_{B'\to B}(\cdot)(U^g_{B'\to B})^\dag$ on finite-dimensional input system $B'$ to yield finite-dimensional output system $B$. Furthermore, we consider that $\{U^g_{B'\to B}\}_{g\in \msc{G}}$ forms a finite gate-set of unitary one-design. Ramanujan and Hardy share key $k$ that is unknown to Littlewood. Ramanujan has access to $M,K$, Hardy has access to $K',B',B$, but Littlewood has access only to $B',B$. For simplicity, Ramanjuan and Hardy devise communication strategy such that $|M|$ and $|K|$ are both equal to the size $|G|$ of $\msc{G}$.   

Now Ramanujan inputs message $m$ and key $k$ in his message and key registers, respectively, and following computation is set on computer network:
\begin{equation}
\op{m}_{M}\otimes\op{k}_{K}\otimes\op{k}_{K'}\otimes\mc{U}^{x(m,k)}_{B'\to B}(\rho_{LB'}), 
\end{equation}
where 
\begin{equation}
x(m,k)\coloneqq m\oplus k = (m+k)\kern-0.45em\mod|G| 
\end{equation} 
and $\rho_{LB}$ is a probe state\footnote{We note that an entangled input is not necessary for perfect discrimination between two unitary operations \cite{DFY07}} inserted on demand by the reader, Hardy or Littlewood. $L$ is an idler system that is held by the reader. 

Ramanujan can send message $m$ with an apriori probability $p(m)$ (see \cite{BRV00}). However, for now we consider a case when message and key are chosen uniformly at random. Under such scenario, the resulting state of composite system $MB$ is a product state after the computation takes place, 
\begin{equation}
\op{m}_{M}\otimes\frac{1}{|G|}\sum_{m}\mc{U}^{x(m,k)}_{B'\to B}(\rho_{LB'}) =\op{m}_{M}\otimes\rho_{L}\otimes\frac{\mathbbm{1}_{B}}{|B|},
\end{equation}  
which holds for any input state $\rho_{LB'}$. It is clear that $M$, $B$, and $L$ are all uncorrelated in absence of key, and any measurement by Littlewood on output system $LB$ will give random value for $m$. This is equivalent to success probability of Littlewood in guessing $m$ with probability $1/|M|$. 

Now we inspect the ability of Hardy to decode the message. Since Hardy possesses key, situation boils down to the task considered in \cite{BRV00}. If a key $k\in\msc{K}$ in $x(m,k)$ is fixed, then each unitary operation $\mc{U}^{x(m,k)}$ in the given gate-set corresponds to a unique value of message $m\in\msc{M}$. Therefore, identification of unknown unitary operation $\mc{U}^{x(m,k)}$ with absolute certainty will allow Hardy to perfectly decode the message $m$ using key. If the states $\mc{U}^{x}_{B'\to B}(\rho_{LB'})$ are pure and orthogonal for all $x$ then just a single use of the unknown unitary is sufficient for the identification task. It follows from the fact that orthogonal states are perfectly distinguishable. In general, any unknown unitary operation randomly chosen from a finite set of unitary operations can be determined with certainty by sequentially applying only a finite amount of the unknown unitary operation \cite{Aci01,DPP01,DFY07}\footnote{It is in contrast to the fact that two non-orthogonal quantum states cannot be perfectly distinguishable whenever only a finite number of copies are available \cite{Che01,ACM+07}.}. 

\section{Secure identification protocols}\label{sec:sr-pro}
Any secure reading protocol consists of two parties of interest, an encoder and a reader, and an adversary. The encoder, Alice, encodes a secret classical message $m$ from a set of messages $\mathscr{M}$ onto a read-only memory device. Consider that the size of $\msc{M}$ is $|M|$. It is assumed that the memory device is delivered to the reader, Bob, whose task is to decode, i.e., read the message in a secure way despite scrutiny of adversary. 
	
	We now define a secure reading protocol, which consists of an encoding scheme, a reading scheme with reliability criterion, and a security criterion. 

\subsection*{Encoding scheme} 
The digital memory device is defined by a set $\hl{\mc{M}}_{\msc{X}}\coloneqq\{\mc{N}^x_{B'\to BE}\}_{x\in\mathscr{X}}$ of wiretap quantum channels \cite{DBW17}, where $\mathscr{X}$ is an alphabet and the size of $B',B,E'$ are fixed and independent of $x$. We call outputs of channels $\mc{N}^x_{B'\to BE}$ as ports, and output $B$ is accessible in reading port and output $E$ is accessible in bath port. Both encoder and reader agree upon the memory cell. The memory cell from Alice to Bob is then given by $\{\mc{N}^x_{B'\to B}\}_{x\in\msc{X}}$, where
	\begin{equation}
	\forall x\in\msc{X}:\ \mc{N}^x_{B'\to B}(\cdot)\coloneqq \Tr_{E}\{\mc{N}^x_{B'\to BE}(\cdot)\},
	\end{equation}
	which may also be known to adversary. The reader does not have access to bath port. 
	 
	 We also assume that Alice and Bob share a secret key taking values in a set $\msc{K}$, where the size of $\msc{K}$ is $|K|$. Then, for each message $m\in\mathscr{M}$ and key $k\in\msc{K}$, we define an encoding of strings $x^n(m,k)=x_1(m,k)\dotsb x_n(m,k)\in\mathscr{X}^n$ of blocklength $n\in\mathbb{N}$ into codewords:
	\begin{equation}
		x^n(m,k)\mapsto \left(\mc{N}^{x_1(m,k)}_{B_1'\to B_1E_1},\dotsc,\mc{N}^{x_n(m,k)}_{B_n'\to B_nE_n}\right).
	\end{equation}
Each quantum channel in a codeword, each of which represents one part of the stored information is only read once. 
	
\subsection*{Reliability criterion for non-adaptive scheme}
Reading of the memory device is defined by the channel
	\begin{equation}\label{eq:na-read-cw}
		\mc{N}^{x^n(m,k)}_{B^{'n}\to B^nE^n }\coloneqq \mc{N}^{x_1(m,k)}_{B_1'\to B_1E_1}\otimes\dotsb\otimes\mc{N}^{x_n(m,k)}_{B_n\to B_nE_n},
	\end{equation}
	an input state $\rho_{LB^{'n}}$ and a positive operator-valued measure (POVM) $\left\{\Lambda_{LB^{n}}^{(m,k)}\right\}_{m}$ for each $k$, where $L$ is an arbitrary reference system belonging to the reader. Bob transmits the state $\rho_{LB^{'n}}$ through the wiretap channel $\mc{N}^{x^n(m,k)}_{B^{'n}\to B^nE^n}$, measures the output using the POVM $\left\{\Lambda_{LB^{n}}^{(m,k)}\right\}_{m}$, and uses the measurement outcome to guess the message $m$.
	
	For reliable reading of the message, the following reliability criterion should hold for each $k \in \mathscr{K} $ and $m \in \mathscr{M}$,
	\begin{equation}\label{eq:reliability-criteria}
		\Tr\{\Lambda_{LB^n}^{(m,k)}\mathcal{N}^{x^n(m,k)}_{B^{'n}\to B^nE^n}(\rho_{R^nB^{'n}})\}\geq 1-\varepsilon,
	\end{equation}
	which states that Bob's reading error probability $p_e$ is less than $\varepsilon\in(0,1)$.
	
Let us consider following definitions.
	\begin{equation}
		\sigma_{LB^nE^n}^{x^n(m,k)}\coloneqq \mc{N}^{x^n(m,k)}_{B^{'n}\to B^nE^n}(\rho_{LB^{'n}}),\quad \quad
		\hl{\sigma}_{LB^nE^n}\coloneqq\frac{1}{|M||K|}\sum_{m\in\msc{M},k\in\msc{K}}\sigma_{LB^nE^n}^{x^n(m,k)},
	\end{equation}	
and $1$-norm $\norm{\cdot}_1$ of a Hermiticity preserving map $\mc{M}_{B'\to B}$ is given as\footnote{The success probability of discriminating two channels $\mc{A}_{B'\to B}$ and $\mc{B}_{B'\to B}$ when observer is not allowed to use entangled states is 
\begin{equation}
\frac{1}{2}\(1+\frac{1}{2}\norm{\mc{A}_{B'\to B}-\mc{B}_{B'\to B}}_1\),
\end{equation}
where $\norm{\mc{M}_{B'\to B}}_1$ for a Hemiticity preserving map $\mc{M}_{B'\to B}$ is defined as \eqref{eq:1norm}.}
\begin{equation}\label{eq:1norm}
\max_{\rho_{B'}\in\msc{D}(\mc{H}_{B'})}\norm{\mc{M}_{B'\to B}(\rho_{B'})}_1. 
\end{equation}

\subsection*{Security criterion for non-adaptive scheme}
\begin{figure}[h]
		\centering
		\includegraphics[scale=1.0]{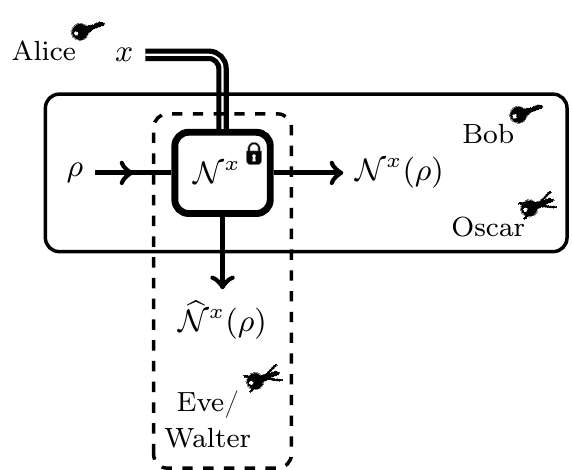}
		\caption{The scenario of secure reading along with the two types of adversaries being considered. Oscar is an adversary who can have direct access to the memory device. Walter and Eve, on the other hand, do not have direct access to the device. Furthermore, while both Alice and Bob possess a key, neither of the three adversaries do.}\label{fig-secure_reading}
	\end{figure}
	
 The criterion for secure reading will depend on the adversarial situation in which the reader, Bob, is. Here we consider three natural types of adversarial conditions, which correspond to three different secure reading protocols. These security criteria are motivated by those presented in \cite{SB11,BGP+15,WWZ16,Blo16,DBW17}. These adversarial conditions are illustrated in Fig.~\ref{fig-secure_reading}. 
	
\begin{enumerate}
\item Incognito reading: the adversary, Oscar, can have access to the memory device. The goal of Alice and Bob is to ensure that Oscar cannot figure out that the memory device contains any useful information intended for Bob. Oscar has access to the reading port but has no access to bath port. To this end, Alice and Bob share a prior secret key, which will give Bob the required advantage over Oscar. Formally, we require that
		\begin{equation}
			\norm{\frac{1}{|M||K|}\sum_{m\in\msc{M}, k\in\msc{K}}\mc{N}^{x^n(m,k)}_{B^{'n}\to B^n}-\left(\mc{N}^0_{B'\to B}\right)^{\otimes n}}_{\diamond}\leq\delta'_{I},
		\end{equation}
		where $\delta'_I\in(0,1)$ and $\Vert \cdot\Vert_{\diamond}$ is the diamond norm \eqref{eq-diamond_norm}. A memory device containing no information is assumed to be encoded with the single-element memory cell $\{\mc{N}^{x=0}_{B'\to BE}\}$. We call $\mc{N}^0_{B'\to BE}$ the innocent channel. This means that, if the reader does not possess the key, the memory device cannot be distinguished from one containing no information. To achieve this security criterion, we make stronger assumption
		\begin{align}\label{eq-incognito_security}
			\max_{\rho_{LB^{'n}}\in\msc{D}(\mc{H}_{LB^{'n}})}D\!\left(\hl{\sigma}_{LB^n}\middle\Vert \sigma^0_{LB^n}\right)\leq\delta_I,
		\end{align}
		where $\delta_I\in(0,1)$ and $\sigma^0_{LB^n}=\Tr_{E^n}\{\mc{N}^{\otimes n}_{B'\to BE}(\rho_{LB^{'n}})\}$. Note that it suffices to take optimization in \eqref{eq-incognito_security} over pure states $\rho_{LB^{'n}}$ such that $L\simeq B^{'n}$.
		
\item Covert reading: the adversary, Walter, has access to the bath port only and no access to any other ports of memory device, reading or transmitter. The goal of Alice and Bob is to ensure that Walter cannot detect that any useful information is being read by Bob. In this case, Bob has an advantage over Walter if the wiretap memory cell consists of degradable channels, i.e., $\mc{N}^x_{B'\to E}=\mc{N}^x_{B\to E}\circ\mc{N}^x_{B'\to B}$ for all $x\in\msc{X}$, where $\mc{N}^x_{B\to E}$ is a quantum channel. In general, however, we assume that Alice and Bob share a prior secret key in order for Bob to have an advantage over Walter. Formally, we require that
		\begin{equation}
			\norm{\frac{1}{|M||K|}\sum_{m\in\msc{M},k\in\msc{K}}\mc{N}^{x^n(m,k)}_{B'\to E}-\(\mc{N}^0_{B'\to E}\)^{\otimes n}}_{1}\leq\delta'_{C},
		\end{equation}
		where $\delta'_C\in(0,1)$ and $\norm{\cdot}_1$ of a Hermiticity preserving map is defined as \eqref{eq:1norm}. To achieve this security criterion, we make stronger assumption
		\begin{align}\label{eq:delta_c}
			\max_{\rho_{B^{'n}}}D\!\left(\hl{\sigma}_{E^n}\middle\Vert \(\sigma_{E^n}^0\)\right)\leq\delta_C,
		\end{align}
		where $\delta_C\in(0,1)$ and $\sigma^0_{E^n}=\Tr_{LB^n}\{\mc{N}^{\otimes n}_{B'\to BE}(\rho_{LB^{'n}})\}$.
		
\item Confidential reading: the adversary, Eve, has complete access to the bath port but no direct access to the reading port. The goal of Alice and Bob is to ensure that no information stored in the device is leaked to Eve. In general, we assume that Alice and Bob share a prior secret key in order for Bob to have an advantage over Eve. We demand that for all $m\in\msc{M}$,
	\begin{equation}
		\norm{\frac{1}{|K|}\sum_{k\in\msc{K}}\mc{N}_{B'\to E}^{x^n(m,k)}-\(\mc{N}_{B'\to E}^0\)^{\otimes n}}_{1}\leq\delta_{P},
	\end{equation}
	where $\delta_P\in(0,1)$. 
\end{enumerate}

It is to be noted that all the above security criteria for non-adaptive secure reading protocols can be generalized straightforwardly to secure reading protocols, in which adaptive strategies are employed as part of the reading protocols (see Chapter~\ref{ch:read}), in the same way as for private reading protocols (see Chapter~\ref{ch:priv-read}).

An $(n,P,\varepsilon,\delta)$ secure reading protocol is called incognito reading, covert reading, or confidential reading when $\delta=\delta_I$, $\delta=\delta_C$, or $\delta=\delta_P$, respectively, where $P\coloneqq \frac{1}{n}\log_2|M|$. The rate of an $(n,P,\varepsilon,\delta)$ secure reading protocol is equal to $\frac{1}{n}\log_2|M|$. To define the capacity of a secure reading protocol, we demand that there exists a sequence of secure reading protocols indexed by $n$ for which $\varepsilon\to 0$ and $\delta\to 0$ as $n\to\infty$ at a fixed rate $P$. A rate $P$ is called achievable if for all $\varepsilon,\delta\in(0,1]$, $\xi>0$, and sufficiently large $n$, there exists an $(n,P-\xi,\varepsilon,\delta)$ secure reading protocol. The secure reading capacity $P^{\text{secure}}(\hl{\mc{M}}_{\msc{X}})$ of a wiretap memory cell is defined as the supremum of all achievable rates $P$.
	
It can be concluded that there exists a reduction from secure reading protocols to non-adaptive secure reading protocols for jointly covariant memory cells, in the sense that the former can simulated by the latter (see Chapter~\ref{ch:read}).

\section{Illustration of secure reading}\label{sec:sr-ex}
Here we provide examples of non-adaptive secure reading protocols for memory devices encoded with a binary memory cell $\{\mc{N}^x_{B'\to BE}\}_{x\in\{0,1\}}$ consisting of depolarizing channels or generalized amplitude damping channels $\mc{N}^x_{B'\to B}$ for the reader and complementary of these channels $\widehat{\mc{N}}^x_{B'\to E}$ for Walter. Goal of this section is to determine number of non-innocent symbols that can be securely transmitted from Alice to Bob. 
	\begin{figure}[h]
		\centering
		\includegraphics[scale=1.0]{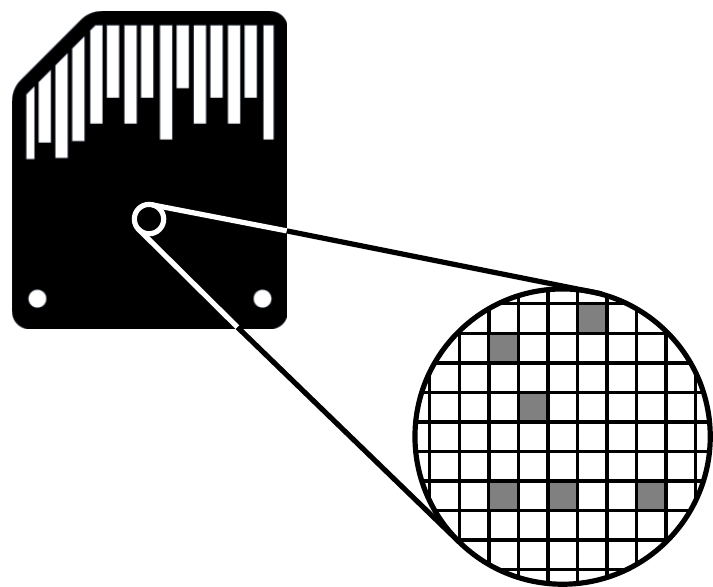}
		\caption{The encoding of a message into a digital memory device. The message is encoded into certain domains (indicated by the shaded squares) based on the key shared by the encoder and the reader.}\label{fig-reading}
	\end{figure}
	The encoding of the message onto the memory device, e.g., a CD-ROM or a flash memory drive, corresponds to either the innocent channel $\mc{N}^0_{B'\to B}$ or $\mc{N}^1_{B'\to B}$ at each of the sites (domains) of the underlying physical components comprising the device. We assume that Alice uses $N$ of these sites to encode the message. The length of the codewords is thus $N$, which is spread over the entire reading space; see Figure \ref{fig-reading}. For secure reading, Alice and Bob share a set of secret keys. The keys correspond to a particular choice of sites used to encode the message. With probability $q\ll 1$, Alice encodes each of these sites with $\mc{N}^1_{B'\to B}$, and with probability $1-q$ she encodes it with $\mc{N}^0_{B'\to B}$. The rest of $N$ sites are left blank, i.e., encoded with $\mc{N}^0_{B'\to B}$. This implies Alice sending on average $Nq$ non-innocent channels that corresponds to meaningful secure signals.

	Suppose that each $\mc{N}^x_{B'\to B}=\mc{D}^{\theta}_{B'\to B,\eta_x}$, i.e., the memory cell consists of a two-parameter family of channels. For the sites on which Alice encodes her message, the effective channel is $q\mc{D}^{\theta}_{B'\to B,\eta_1}+(1-q)\mc{D}^{\theta}_{B'\to B,\eta_0}$, and for the empty sites the channel is $\mc{D}^{\theta}_{B'\to B,\eta_0}$, whereas the effective channels to adversary are $q\widehat{\mc{D}}^{\theta}_{B'\to E,\eta_1}+(1-q)\widehat{\mc{D}}^{\theta}_{B'\to E,\eta_0}$ and $\widehat{\mc{D}}^{\theta}_{B'\to E,\eta_0}$, respectively. We assume that any reader who has access to the memory device (reading port) transmits a tensor product $(\Phi^+)^{\otimes N}$ of $N$ maximally entangled states $\Phi_{RB'}^+$. During useful reading, the reader's output state is $\omega_{R^NB^N}=(q\omega_{RB,\eta_1}^\theta+(1-q)\omega_{RB,\eta_0}^\theta)^{\otimes N}$, where $\omega_{RBE,\eta_x}^\theta\coloneqq U^{\mc{D}_{\eta_x}^\theta}_{RB'\to RBE}(\Phi^+_{RB'})(U^{\mc{D}_{\eta_x}^\theta}_{RB'\to RBE})^\dagger$. When the memory device is blank, the output state is $\omega_{R^NB^NE^N}^0=(\omega_{RBE,\eta_0}^\theta)^{\otimes N}$, where $R^NB^N$ is accessible only at the reading port for Bob or Oscar and $E^N$ is accessible only to Walter. 
	
	The security criterion for non-adaptive incognito reading reduces to $D_I\coloneqq D(\omega_{RB}\Vert\omega_{RB}^0)\leq\frac{\delta_I}{N}$, and for non-adaptive covert reading it reduces to $D_C\coloneqq D(\omega_{E}\Vert\omega_{E}^0)\leq\frac{\delta_C}{N}$. For given strategies, it turns out that $D_I=D_C$. 
	
	\begin{figure}[h]
		\centering
		\includegraphics[scale=1.0]{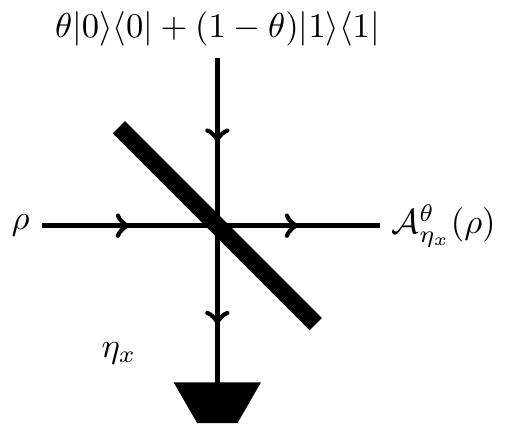}
		\caption{The generalized amplitude damping channel as an interaction of the input signal $\rho$ with a beamsplitter of transmissivity $\eta_x$ followed by discarding the state of the environment. The parameter $\theta$ quantifies the noise of the reading environment.}\label{fig-GADC}
	\end{figure}
	
	\textbf{Example 1}. Consider a binary memory cell $\{\mc{A}_{\eta_x}^{\theta}\}_{x\in\{0,1\}}$ consisting of two generalized damping channels, where $\mc{A}_{B'\to B,\eta_x}^{\theta}$ acts on qubits and is defined as $\mc{A}_{B'\to B,\eta_x}^\theta(\cdot)=\sum_{i=1}^4 E_i(\cdot)E_i^\dagger$, where $\theta,\eta\in[0,1]$ and
	\begin{equation*}
		 E_1=\sqrt{\theta}\begin{pmatrix}1&0\\0&\sqrt{\eta_x}\end{pmatrix},\quad E_2=\sqrt{\theta}\begin{pmatrix}0&\sqrt{1-\eta_x}\\0&0\end{pmatrix},
		 \end{equation*}
		 \begin{equation*}
		 E_3=\sqrt{1-\theta}\begin{pmatrix}\sqrt{\eta_x}&0\\0&1\end{pmatrix},\quad E_4=\sqrt{1-\theta}\begin{pmatrix}0&0\\\sqrt{1-\eta_x}&0\end{pmatrix}.
	\end{equation*}
	
	The generalized amplitude damping channel can be viewed as an interaction of the input signal with a qubit environment by means of a beamsplitter, followed by discarding the state of the environment. The parameter $\theta$ corresponds to the noise injected by the environment when the memory is being read and may be intrinsic to the memory reading device. 
	
	If we let $N=1000$, $\eta_0=0.45$, $\eta_1=0.4$, and $\theta=0.5$, then we can send on average 5 non-innocent channels corresponding to secure information can be encoded with security parameter $\delta_I, \delta_C=7\times 10^{-5}$; see Fig.~\ref{fig-GADC_DEP_inc_cov}.
	
	\begin{figure}
		\centering
		\subcaptionbox{ Memory cell $\{\mc{A}_{B'\to B,\eta_x}^{\theta}\}_x$.}{\centering\includegraphics[scale=1.65]{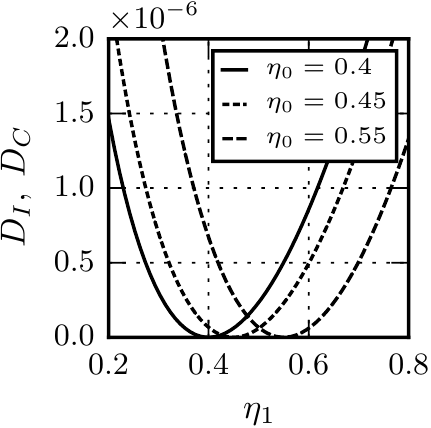}\label{subfig-GADC_incog}}
		\subcaptionbox{ Memory cell $\{\mc{D}_{B'\to B,\eta_x}^{2}\}_x$.}{\centering\includegraphics[scale=1.65]{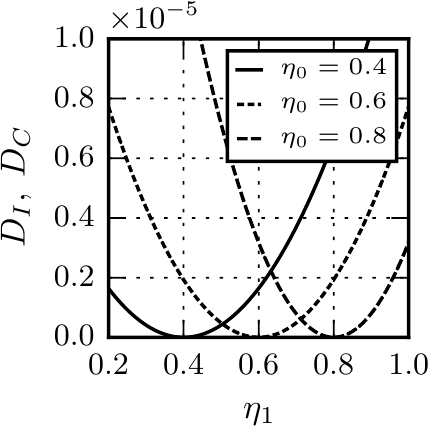}\label{subfig-DEP_incog}}
		\caption{Values of the relative entropy on the left-hand side of \eqref{eq-incognito_security},\eqref{eq:delta_c} corresponding to the security parameter $\delta_I, \delta_C$ for incognito and covert reading, respectively, when the input is restricted to the maximally entangled state. (a) Memory cell $\{\mc{A}_{B'\to B,\eta_x}^{\theta}\}_x$ with $q=0.005$, $\theta=0.5$. (b) Memory cell $\{\mc{D}_{B'\to B,\eta_x}^{2}\}_x$ with $q=0.005$ and $d=2$.}\label{fig-GADC_DEP_inc_cov}
	\end{figure}
	
	\textbf{Example 2}. Consider memory cell $\{\mc{D}^d_{B'\to B,\eta_x}\}_{x\in\{0,1\}}$ consisting of two qudit depolarizing channels, where $\mc{D}_{B'\to B,\eta_x}^d(\rho)=\eta_x\rho+(1-\eta_x)\frac{\mathbbm{1}}{d}$, and $\eta_x\in\left[0,\frac{d^2}{d^2-1}\right]$. Parameter $1-\eta_x$ depends on the deviation of the channel $\mc{D}_{B'\to B,\eta_x}$ from any unitary evolution \cite[Proposition 11]{DBW17}. In this case, the security criterion for non-adaptive incognito reading is
	\begin{equation}
		\lambda_1^{\omega}\log_2\left(\frac{\lambda_1^{\omega}}{\lambda_1^{\omega^0}}\right)+(d^2-1)\lambda_2^{\omega}\log_2\left(\frac{\lambda_2^{\omega}}{\lambda_2^{\omega^0}}\right)\leq\frac{\delta_I}{N},
	\end{equation}
	where $\{\lambda_i^\rho\}_i$ denotes the spectrum of the state $\rho$ and
	\begin{align*}
		\lambda_1^{\omega^0}&=\eta_0+\frac{1-\eta_0}{d^2},& \quad \lambda_1^{\omega}&=(q\eta_1+(1-q)\eta_0) +\frac{1}{d^2}(q(1-\eta_1)+(1-q)(1-\eta_0)),\\
 \lambda_2^{\omega^0}&=\frac{1-\eta_0}{d^2},&\quad \lambda_2^{\omega}&=\frac{1}{d^2}(q(1-\eta_1)+(1-q)(1-\eta_0)).
	\end{align*}
	
	If we let $d=2$, $N=1000$, $q=0.005$, $\eta_0=0.8$, and $\eta_1=0.7$ or $\eta_1=0.9$, then we can send on average 5 non-innocent channels corresponding to secure information with security parameter $\delta_I,\delta_C=8.5\times 10^{-4}$; see Fig.~\ref{fig-GADC_DEP_inc_cov}.
	
\section{Threat level identification}\label{sec:tfi}
Our protocol for non-adaptive incognito reading can be applied to threat level identification (TLI), in which the messages $m\in\msc{M}=\{1,2,\dotsc,|\msc{M}|\}$ correspond to the threat level posed by an adversary, Oscar. A friendly aircraft, to be used as a spy for stealthy surveillance, can be embedded with a memory device and share a secret key with the friendly base. Since Oscar does not have the key, it will not be able to identify the aircraft as being a spy. The aircraft can thus collect information about Oscar's base and report back to its headquarter with a message indicating the threat level. Non-adaptive incognito reading protocols are natural in this context since the memory device and the reader are at distant locations and scout situations, which are time sensitive, may not allow enough time to execute adaptive protocols.	

\section*{Open problems}
For future work, it would be interesting to explore the task of secure reading when the memory cell consists of channels acting on continuous variable systems \cite{book1991cover,H13book}. Since reading protocols are based on channel discrimination and there are connections between parmameter estimation in metrology, process tomography, and channel discrimination, another future direction is to study the possibility of some new secure parameter estimation protocols in the context of metrology and sensing \cite{K93book,GLM06,DRC17} (see also \cite{CMP07,GBGD17,HMM17} for the literature on secure parameter estimation).

\clearpage


\chapter*{References}
\addcontentsline{toc}{chapter}{References} 
\begin{singlespace}
	\begingroup 
	\bibliographystyle{ieeetr}
	\renewcommand{\chapter}[2]{}
	\bibliography{../phdbib}

\begin{thebibliography}{100}

\bibitem{Dbook81}
P.~A.~M. Dirac, {\em The Principles of Quantum Mechanics, Fourth Edition}.
\newblock International series of monographs on physics 27, Clarendon Press,
  1981.

\bibitem{SCbook95}
J.~J. Sakurai and E.~D. Commins, {\em Modern Quantum Mechanics, Revised
  Edition}, vol.~63.
\newblock American Association of Physics Teachers ({AAPT}), January 1995.

\bibitem{GS22a}
W.~Gerlach and O.~Stern, ``Der experimentelle {N}achweis der
  {R}ichtungsquantelung im {M}agnetfeld,'' {\em Zeitschrift f\"ur Physik},
  vol.~9, pp.~349--352, December 1922.

\bibitem{GS22b}
W.~Gerlach and O.~Stern, ``Der experimentelle {N}achweis des magnetischen
  {M}oments des {S}ilberatoms,'' {\em Zeitschrift f\"ur Physik}, vol.~8,
  pp.~110--111, December 1922.

\bibitem{Hei27}
W.~Heisenberg, ``\"uber den anschaulichen inhalt der quantentheoretischen
  {K}inematik und {M}echanik,'' {\em Zeitschrift f\"ur Physik}, vol.~43,
  pp.~172--198, March 1927.

\bibitem{BA04}
J.~S. Bell and A.~Aspect, {\em Speakable and Unspeakable in Quantum Mechanics}.
\newblock Cambridge University Press, 2004.

\bibitem{HSD15}
D.~Home, D.~Saha, and S.~Das, ``Multipartite {B}ell-type inequality by
  generalizing {W}igner's argument,'' {\em Physical Review A}, vol.~91,
  p.~012102, January 2015.
\newblock arXiv:1410.7936.

\bibitem{DASH13}
S.~Das, S.~Aravinda, R.~Srikanth, and D.~Home, ``Unification of {B}ell,
  {L}eggett-{G}arg and {K}ochen-{S}pecker inequalities: {H}ybrid
  spatio-temporal inequalities,'' {\em EPL (Europhysics Letters)}, vol.~104,
  no.~6, p.~60006, 2013.
\newblock arXiv:1308.0270.

\bibitem{CBTW17}
P.~J. Coles, M.~Berta, M.~Tomamichel, and S.~Wehner, ``Entropic uncertainty
  relations and their applications,'' {\em Reviews of Modern Physics}, vol.~89,
  p.~015002, February 2017.
\newblock arXiv:1511.04857v2.

\bibitem{Ben01}
C.~H. Bennett, ``Notes on landauer's principle, reversible computation, and
  maxwell's demon,'' {\em Studies in History and Philosophy of Science Part B:
  Studies in History and Philosophy of Modern Physics}, vol.~34, no.~3, pp.~501
  -- 510, 2003.
\newblock Quantum Information and Computation.

\bibitem{Bek81}
J.~D. Bekenstein, ``Universal upper bound on the entropy-to-energy ratio for
  bounded systems,'' {\em Physical Review D}, vol.~23, pp.~287--298, January
  1981.

\bibitem{Bou02}
R.~Bousso, ``The holographic principle,'' {\em Reviews of Modern Physics},
  vol.~74, pp.~825--874, August 2002.
\newblock arXiv:hep-th/0203101.

\bibitem{Par70}
J.~L. Park, ``The concept of transition in quantum mechanics,'' {\em
  Foundations of Physics}, vol.~1, pp.~23--33, March 1970.

\bibitem{WH82}
W.~K. Wootters and W.~H. Zurek, ``A single quantum cannot be cloned,'' {\em
  Nature}, vol.~299, pp.~802--803, October 1982.

\bibitem{Die82}
D.~Dieks, ``Communication by {EPR} devices,'' {\em Physics Letters A}, vol.~92,
  pp.~271--272, November 1982.

\bibitem{BB84}
C.~H. Bennett and G.~Brassard, ``Quantum cryptography: Public key distribution
  and coin tossing,'' in {\em International Conference on Computer System and
  Signal Processing, IEEE, 1984}, pp.~175--179, 1984.

\bibitem{RennerThesis}
R.~Renner, {\em Security of Quantum Key Distribution}.
\newblock PhD thesis, ETH Z\"urich, December 2005.
\newblock arXiv:quant-ph/0512258.

\bibitem{BBC+93}
C.~H. Bennett, G.~Brassard, C.~Cr\'epeau, R.~Jozsa, A.~Peres, and W.~K.
  Wootters, ``Teleporting an unknown quantum state via dual classical and
  {Einstein-Podolsky-Rosen} channels,'' {\em Physical Review Letters}, vol.~70,
  pp.~1895--1899, March 1993.

\bibitem{BFK00}
S.~L. Braunstein, C.~A. Fuchs, and H.~J. Kimble, ``Criteria for
  continuous-variable quantum teleportation,'' {\em Journal of Modern Optics},
  vol.~47, no.~2-3, pp.~267--278, 2000.

\bibitem{DRC17}
C.~L. Degen, F.~Reinhard, and P.~Cappellaro, ``Quantum sensing,'' {\em Reviews
  of Modern Physics}, vol.~89, July 2017.
\newblock arXiv:1611.02427.

\bibitem{NC00}
M.~A. Nielsen and I.~L. Chuang, {\em Quantum computation and quantum
  information}.
\newblock Cambridge University Press, 2000.

\bibitem{Riv11}
A.~Rivas and S.~F. Huelga, {\em Open Quantum Systems: An Introduction}.
\newblock SpringerBriefs in Physics, Springer Berlin Heidelberg, 2011.

\bibitem{Wbook17}
M.~M. Wilde, {\em Quantum Information Theory, Second Edition}.
\newblock Cambridge University Press, 2017.
\newblock arXiv:1106.1445v7.

\bibitem{Sha48}
C.~E. Shannon, ``A mathematical theory of communication,'' {\em The Bell System
  Technical Journal}, vol.~27, pp.~379--423, July 1948.

\bibitem{book1991cover}
T.~M. Cover and J.~A. Thomas, {\em Elements of Information Theory}.
\newblock Wiley-Interscience, second~ed., 2006.

\bibitem{Wat15}
J.~Watrous, {\em Theory of Quantum Information}.
\newblock 2015.
\newblock Available at \url{https://cs.uwaterloo.ca/~watrous/TQI/}.

\bibitem{H13book}
A.~S. Holevo, {\em Quantum systems, channels, information: A mathematical
  introduction}, vol.~16.
\newblock Walter de Gruyter, 2012.

\bibitem{Neu32}
J.~von Neumann, {\em Mathematische grundlagen der quantenmechanik}.
\newblock Verlag von Julius Springer Berlin, 1932.

\bibitem{S35}
E.~Schr\"{o}dinger, ``Discussion of probability relations between separated
  systems,'' {\em Mathematical Proceedings of the Cambridge Philosophical
  Society}, vol.~31, p.~555–563, October 1935.

\bibitem{EPR35}
A.~Einstein, B.~Podolsky, and N.~Rosen, ``Can quantum-mechanical description of
  physical reality be considered complete?,'' {\em Physical Review}, vol.~47,
  pp.~777--780, May 1935.

\bibitem{DBW17}
S.~Das, S.~B{\"a}uml, and M.~M. Wilde, ``Entanglement and secret-key-agreement
  capacities of bipartite quantum interactions and read-only memory devices,''
  December 2017.
\newblock arXiv:1712.00827.

\bibitem{DKSW18}
S.~Das, S.~Khatri, G.~Siopsis, and M.~M. Wilde, ``Fundamental limits on quantum
  dynamics based on entropy change,'' {\em Journal of Mathematical Physics},
  vol.~59, p.~012205, January 2018.
\newblock arXiv:1707.06584.

\bibitem{BDW16}
F.~Buscemi, S.~Das, and M.~M. Wilde, ``Approximate reversibility in the context
  of entropy gain, information gain, and complete positivity,'' {\em Physical
  Review A}, vol.~93, p.~062314, June 2016.
\newblock arXiv:1601.01207v3.

\bibitem{DW17}
S.~Das and M.~M. Wilde, ``Quantum reading capacity: General definition and
  bounds,'' March 2017.
\newblock arXiv:1703.03706.

\bibitem{Sre93}
M.~Srednicki, ``Entropy and area,'' {\em Physical Review Letters}, vol.~71,
  pp.~666--669, August 1993.

\bibitem{CW04}
M.~Christandl and A.~Winter, ````{S}quashed entanglement'': An additive
  entanglement measure,'' {\em Journal of Mathematical Physics}, vol.~45,
  pp.~829--840, March 2004.
\newblock arXiv:quant-ph/0308088.

\bibitem{Rog88}
L.~J. Rogers, ``An extension of a certain theorem in inequalities,'' {\em
  Messenger of Mathematics}, vol.~XVII, pp.~145--150, February 1888.

\bibitem{Hol89}
O.~H\"{o}lder, ``Ueber einen mittelwerthabsatz,'' {\em Nachrichten von der
  K\"{o}nigl. Gesellschaft der Wissenschaften und der
  Georg-Augusts-Universität zu G\"{o}ttingen}, vol.~1889, pp.~38--47, 1889.

\bibitem{Bha97}
R.~Bhatia, {\em Matrix Analysis}.
\newblock Springer New York, 1997.

\bibitem{W12notes}
M.~M. Wolf, ``Quantum channels \& operations: Guided tour,'' 2012.
\newblock Notes available at
  \url{https://www-m5.ma.tum.de/foswiki/pub/M5/Allgemeines/MichaelWolf/QChannelLecture.pdf}.

\bibitem{Sti55}
W.~F. Stinespring, ``Positive functions on {C*}-algebras,'' {\em Proceedings of
  the American Mathematical Society}, vol.~6, pp.~211--216, 1955.

\bibitem{HM04}
G.~T. Horowitz and J.~Maldacena, ``The black hole final state,'' {\em Journal
  of High Energy Physics}, vol.~2004, pp.~008--008, February 2004.
\newblock arXiv:hep-th/0310281.

\bibitem{HHH96}
M.~Horodecki, P.~Horodecki, and R.~Horodecki, ``Separability of mixed states:
  necessary and sufficient conditions,'' {\em Physics Letters A}, vol.~223,
  pp.~1--8, November 1996.
\newblock arXiv:quant-ph/9605038.

\bibitem{Per96}
A.~Peres, ``Separability criterion for density matrices,'' {\em Physical Review
  Letters}, vol.~77, pp.~1413--1415, August 1996.
\newblock arXiv:quant-ph/9604005.

\bibitem{AdMVW02}
K.~Audenaert, B.~De~Moor, K.~G.~H. Vollbrecht, and R.~F. Werner, ``Asymptotic
  relative entropy of entanglement for orthogonally invariant states,'' {\em
  Physical Review A}, vol.~66, p.~032310, September 2002.
\newblock arXiv:quant-ph/0204143.

\bibitem{Rai99}
E.~M. Rains, ``Bound on distillable entanglement,'' {\em Physical Review A},
  vol.~60, pp.~179--184, July 1999.
\newblock arXiv:quant-ph/9809082.

\bibitem{Rai01}
E.~M. Rains, ``A semidefinite program for distillable entanglement,'' {\em IEEE
  Transactions on Information Theory}, vol.~47, pp.~2921--2933, November 2001.
\newblock arXiv:quant-ph/0008047.

\bibitem{CDKL01}
J.~I. Cirac, W.~D\"{u}r, B.~Kraus, and M.~Lewenstein, ``Entangling operations
  and their implementation using a small amount of entanglement,'' {\em
  Physical Review Letters}, vol.~86, pp.~544--547, January 2001.
\newblock arXiv:quant-ph/0007057.

\bibitem{Hol02}
A.~S. Holevo, ``Remarks on the classical capacity of quantum channel,''
  December 2002.
\newblock quant-ph/0212025.

\bibitem{BDSW96}
C.~H. Bennett, D.~P. DiVincenzo, J.~A. Smolin, and W.~K. Wootters,
  ``Mixed-state entanglement and quantum error correction,'' {\em Physical
  Review A}, vol.~54, pp.~3824--3851, November 1996.
\newblock arXiv:quant-ph/9604024.

\bibitem{HHH99}
M.~Horodecki, P.~Horodecki, and R.~Horodecki, ``General teleportation channel,
  singlet fraction, and quasidistillation,'' {\em Physical Review A}, vol.~60,
  pp.~1888--1898, September 1999.
\newblock arXiv:quant-ph/9807091.

\bibitem{Wer01}
R.~F. Werner, ``All teleportation and dense coding schemes,'' {\em Journal of
  Physics A: Mathematical and General}, vol.~34, p.~7081, August 2001.
\newblock arXiv:quant-ph/0003070.

\bibitem{CDP09}
G.~Chiribella, G.~M. D’Ariano, and P.~Perinotti, ``Realization schemes for
  quantum instruments in finite dimensions,'' {\em Journal of Mathematical
  Physics}, vol.~50, p.~042101, April 2009.
\newblock arXiv:0810.3211.

\bibitem{KW17}
E.~Kaur and M.~M. Wilde, ``Amortized entanglement of a quantum channel and
  approximately teleportation-simulable channels,'' {\em Journal of Physics A:
  Mathematical and Theoretical}, vol.~51, p.~035303, December 2017.
\newblock arXiv:1707.07721.

\bibitem{SN96}
B.~Schumacher and M.~A. Nielsen, ``Quantum data processing and error
  correction,'' {\em Physical Review A}, vol.~54, pp.~2629--2635, October 1996.
\newblock arXiv:quant-ph/9604022.

\bibitem{Ume62}
H.~Umegaki, ``Conditional expectations in an operator algebra, {IV} (entropy
  and information),'' {\em Kodai Mathematical Seminar Reports}, vol.~14,
  pp.~59--85, June 1962.

\bibitem{MR15}
A.~Mueller-Hermes and D.~Reeb, ``Monotonicity of the quantum relative entropy
  under positive maps,'' {\em Annales Henri Poincar{\'{e}}}, vol.~18,
  pp.~1777--1788, January 2017.
\newblock arXiv:1512.06117.

\bibitem{Lin75}
G.~Lindblad, ``Completely positive maps and entropy inequalities,'' {\em
  Communications in Mathematical Physics}, vol.~40, pp.~147--151, June 1975.

\bibitem{LR73}
E.~H. Lieb and M.~B. Ruskai, ``Proof of the strong subadditivity of
  quantum-mechanical entropy,'' {\em Journal of Mathematical Physics}, vol.~14,
  pp.~1938--1941, December 1973.

\bibitem{LR73b}
E.~H. Lieb and M.~B. Ruskai, ``A fundamental property of quantum-mechanical
  entropy,'' {\em Physical Review Letters}, vol.~30, pp.~434--436, March 1973.

\bibitem{AF04}
R.~Alicki and M.~Fannes, ``Continuity of quantum conditional information,''
  {\em Journal of Physics A: Mathematical and General}, vol.~37, pp.~L55--L57,
  February 2004.
\newblock arXiv:quant-ph/0312081.

\bibitem{Win16}
A.~Winter, ``Tight uniform continuity bounds for quantum entropies: conditional
  entropy, relative entropy distance and energy constraints,'' {\em
  Communications in Mathematical Physics}, vol.~347, pp.~291--313, October
  2016.
\newblock arXiv:1507.07775.

\bibitem{PV10}
Y.~Polyanskiy and S.~Verd\'u, ``Arimoto channel coding converse and {R\'enyi}
  divergence,'' in {\em Proceedings of the 48th Annual Allerton Conference on
  Communication, Control, and Computation}, pp.~1327--1333, September 2010.

\bibitem{SW12}
N.~Sharma and N.~A. Warsi, ``On the strong converses for the quantum channel
  capacity theorems,'' May 2012.
\newblock arXiv:1205.1712.

\bibitem{WWY14}
M.~M. Wilde, A.~Winter, and D.~Yang, ``Strong converse for the classical
  capacity of entanglement-breaking and {Hadamard} channels via a sandwiched
  {R\'enyi} relative entropy,'' {\em Communications in Mathematical Physics},
  vol.~331, pp.~593--622, October 2014.
\newblock arXiv:1306.1586.

\bibitem{MDSFT13}
M.~{M\"uller}-Lennert, F.~Dupuis, O.~Szehr, S.~Fehr, and M.~Tomamichel, ``On
  quantum {R\'enyi} entropies: a new definition and some properties,'' {\em
  Journal of Mathematical Physics}, vol.~54, p.~122203, December 2013.
\newblock arXiv:1306.3142.

\bibitem{FL13}
R.~L. Frank and E.~H. Lieb, ``Monotonicity of a relative {R\'enyi} entropy,''
  {\em Journal of Mathematical Physics}, vol.~54, p.~122201, December 2013.
\newblock arXiv:1306.5358.

\bibitem{Bei13}
S.~Beigi, ``Sandwiched {R\'enyi} divergence satisfies data processing
  inequality,'' {\em Journal of Mathematical Physics}, vol.~54, p.~122202,
  December 2013.
\newblock arXiv:1306.5920.

\bibitem{D09}
N.~Datta, ``Min- and max-relative entropies and a new entanglement monotone,''
  {\em IEEE Transactions on Information Theory}, vol.~55, pp.~2816--2826, June
  2009.
\newblock arXiv:0803.2770.

\bibitem{Dat09}
N.~Datta, ``Max-relative entropy of entanglement, alias log robustness,'' {\em
  International Journal of Quantum Information}, vol.~7, pp.~475--491, January
  2009.
\newblock arXiv:0807.2536.

\bibitem{GW13}
M.~Gupta and M.~M. Wilde, ``Multiplicativity of completely bounded $p$-norms
  implies a strong converse for entanglement-assisted capacity,'' {\em
  Communications in Mathematical Physics}, vol.~334, pp.~867--887, March 2015.
\newblock arXiv:1310.7028.

\bibitem{BD10}
F.~Buscemi and N.~Datta, ``The quantum capacity of channels with arbitrarily
  correlated noise,'' {\em IEEE Transactions on Information Theory}, vol.~56,
  pp.~1447--1460, March 2010.
\newblock arXiv:0902.0158.

\bibitem{WR12}
L.~Wang and R.~Renner, ``One-shot classical-quantum capacity and hypothesis
  testing,'' {\em Physical Review Letters}, vol.~108, no.~20, p.~200501, 2012.
\newblock arXiv:1007.5456.

\bibitem{CM17}
M.~Christandl and A.~M\"{u}ller-Hermes, ``Relative entropy bounds on quantum,
  private and repeater capacities,'' {\em Communications in Mathematical
  Physics}, vol.~353, pp.~821--852, July 2017.
\newblock arXiv:1604.03448.

\bibitem{Uhl76}
A.~Uhlmann, ``The ``transition probability'' in the state space of a
  *-algebra,'' {\em Reports on Mathematical Physics}, vol.~9, no.~2,
  pp.~273--279, 1976.

\bibitem{FV99}
C.~A. Fuchs and J.~van~de Graaf, ``Cryptographic distinguishability measures
  for quantum-mechanical states,'' {\em IEEE Transactions on Information
  Theory}, vol.~45, pp.~1216--1227, May 1999.
\newblock arXiv:quant-ph/9712042.

\bibitem{OP93}
M.~Ohya and D.~Petz, {\em Quantum Entropy and Its Use}.
\newblock Springer, 1993.

\bibitem{HHHO05}
K.~Horodecki, M.~Horodecki, P.~Horodecki, and J.~Oppenheim, ``Secure key from
  bound entanglement,'' {\em Physical Review Letters}, vol.~94, p.~160502,
  April 2005.
\newblock arXiv:quant-ph/0309110.

\bibitem{HHHO09}
K.~Horodecki, M.~Horodecki, P.~Horodecki, and J.~Oppenheim, ``General paradigm
  for distilling classical key from quantum states,'' {\em IEEE Transactions on
  Information Theory}, vol.~55, pp.~1898--1929, April 2009.
\newblock arXiv:quant-ph/0506189.

\bibitem{WTB16}
M.~M. Wilde, M.~Tomamichel, and M.~Berta, ``Converse bounds for private
  communication over quantum channels,'' {\em IEEE Transactions on Information
  Theory}, vol.~63, pp.~1792--1817, March 2017.
\newblock arXiv:1602.08898.

\bibitem{HHHH09}
R.~Horodecki, P.~Horodecki, M.~Horodecki, and K.~Horodecki, ``Quantum
  entanglement,'' {\em Review of Modern Physics}, vol.~81, pp.~865--942, June
  2009.
\newblock arXiv:quant-ph/0702225.

\bibitem{HH99}
M.~Horodecki and P.~Horodecki, ``Reduction criterion of separability and limits
  for a class of distillation protocols,'' {\em Physical Review A}, vol.~59,
  pp.~4206--4216, June 1999.
\newblock arXiv:quant-ph/9708015.

\bibitem{BCJ+99}
S.~L. Braunstein, C.~M. Caves, R.~Jozsa, N.~Linden, S.~Popescu, and R.~Schack,
  ``Separability of very noisy mixed states and implications for {NMR} quantum
  computing,'' {\em Physical Review Letters}, vol.~83, pp.~1054--1057, August
  1999.
\newblock arXiv:quant-ph/9811018.

\bibitem{BHLS03}
C.~H. Bennett, A.~W. Harrow, D.~W. Leung, and J.~A. Smolin, ``On the capacities
  of bipartite {Hamiltonians} and unitary gates,'' {\em IEEE Transactions on
  Information Theory}, vol.~49, pp.~1895--1911, August 2003.
\newblock arXiv:quant-ph/0205057.

\bibitem{LHL03}
M.~S. Leifer, L.~Henderson, and N.~Linden, ``Optimal entanglement generation
  from quantum operations,'' {\em Physical Review A}, vol.~67, p.~012306,
  January 2003.
\newblock arXiv:quant-ph/0205055.

\bibitem{DDMW17}
K.~Ben~Dana, M.~Garc\'{\i}a~D\'{\i}az, M.~Mejatty, and A.~Winter, ``Resource
  theory of coherence: Beyond states,'' {\em Physical Review A}, vol.~95,
  p.~062327, June 2017.
\newblock arXiv:1704.03710.

\bibitem{RKB+17}
L.~Rigovacca, G.~Kato, S.~B\"{a}uml, M.~Kim, W.~J. Munro, and K.~Azuma,
  ``Versatile relative entropy bounds for quantum networks,'' {\em New Journal
  of Physics}, vol.~20, p.~013033, January 2018.
\newblock arXiv:1707.05543.

\bibitem{TWW17}
M.~Tomamichel, M.~M. Wilde, and A.~Winter, ``Strong converse rates for quantum
  communication,'' {\em {IEEE} Transactions on Information Theory}, vol.~63,
  pp.~715--727, January 2017.
\newblock arXiv:1406.2946.

\bibitem{WD16b}
X.~Wang and R.~Duan, ``A semidefinite programming upper bound of quantum
  capacity,'' in {\em 2016 {IEEE} International Symposium on Information Theory
  ({ISIT})}, {IEEE}, July 2016.
\newblock arXiv:1601.06888.

\bibitem{WFD17}
X.~Wang, K.~Fang, and R.~Duan, ``Semidefinite programming converse bounds for
  quantum communication,'' September 2017.
\newblock arXiv:1709.00200.

\bibitem{BW17}
M.~Berta and M.~M. Wilde, ``Amortization does not enhance the max-rains
  information of a quantum channel,'' {\em New Journal of Physics}, vol.~20,
  p.~053044, May 2018.
\newblock arXiv:1709.00200.

\bibitem{WD16a}
X.~Wang and R.~Duan, ``Improved semidefinite programming upper bound on
  distillable entanglement,'' {\em Physical Review A}, vol.~94, p.~050301,
  November 2016.
\newblock arXiv:1601.07940.

\bibitem{VP98}
V.~Vedral and M.~B. Plenio, ``Entanglement measures and purification
  procedures,'' {\em Physical Review A}, vol.~57, pp.~1619--1633, March 1998.
\newblock arXiv:quant-ph/9707035.

\bibitem{Tuc99}
R.~R. Tucci, ``Quantum entanglement and conditional information transmission,''
  September 1999.
\newblock arXiv: quant-ph/9909041.

\bibitem{Tuc02}
R.~R. Tucci, ``Entanglement of distillation and conditional mutual
  information,'' Februrary 2002.
\newblock arXiv:quant-ph/0202144.

\bibitem{TGW14}
M.~Takeoka, S.~Guha, and M.~M. Wilde, ``The squashed entanglement of a quantum
  channel,'' {\em IEEE Transactions on Information Theory}, vol.~60,
  pp.~4987--4998, August 2014.
\newblock arXiv:1310.0129.

\bibitem{Shi16}
M.~E. Shirokov, ``Squashed entanglement in infinite dimensions,'' {\em Journal
  of Mathematical Physics}, vol.~57, p.~032203, March 2016.
\newblock arXiv:1507.08964.

\bibitem{PV07}
M.~B. Plenio and S.~Virmani, ``An introduction to entanglement measures,'' {\em
  Quantum Information \& Compution}, vol.~7, pp.~1--51, January 2007.
\newblock arXiv:quant-ph/0504163.

\bibitem{Pre18}
J.~Preskill, ``Quantum computing in the {NISQ} era and beyond,'' {\em Quantum},
  vol.~2, p.~79, August 2018.
\newblock arXiv:1801.00862v3.

\bibitem{KDWW18}
E.~Kaur, S.~Das, M.~M. Wilde, and A.~Winter, ``Extendibility limits the
  performance of quantum processors,'' March 2018.
\newblock arXiv:1803.10710.

\bibitem{BBPS96}
C.~H. Bennett, H.~J. Bernstein, S.~Popescu, and B.~Schumacher, ``Concentrating
  partial entanglement by local operations,'' {\em Physical Review A}, vol.~53,
  pp.~2046--2052, April 1996.
\newblock arXiv:quant-ph/9511030.

\bibitem{BBP+97}
C.~H. Bennett, G.~Brassard, S.~Popescu, B.~Schumacher, J.~A. Smolin, and W.~K.
  Wootters, ``Purification of noisy entanglement and faithful teleportation via
  noisy channels,'' {\em Physical Review Letters}, vol.~76, pp.~722--725,
  January 1997.
\newblock arXiv:quant-ph/9511027.

\bibitem{D05}
I.~Devetak, ``The private classical capacity and quantum capacity of a quantum
  channel,'' {\em IEEE Transactions on Information Theory}, vol.~51,
  pp.~44--55, January 2005.
\newblock arXiv:quant-ph/0304127.

\bibitem{DW05}
I.~Devetak and A.~Winter, ``Distillation of secret key and entanglement from
  quantum states,'' {\em Proceedings of the Royal Society of London A:
  Mathematical, Physical and Engineering Sciences}, vol.~461, pp.~207--235,
  January 2005.
\newblock arXiv:quant-ph/0306078.

\bibitem{Sha61}
C.~E. Shannon, ``Two-way communication channels,'' in {\em Proceedings of the
  Fourth Berkeley Symposium on Mathematical Statistics and Probability, Volume
  1: Contributions to the Theory of Statistics}, (Berkeley, California),
  pp.~611--644, University of California Press, 1961.

\bibitem{BDEJ95}
A.~Barenco, D.~Deutsch, A.~Ekert, and R.~Jozsa, ``Conditional quantum dynamics
  and logic gates,'' {\em Physical Review Letters}, vol.~74, pp.~4083--4086,
  May 1995.
\newblock arXiv:quant-ph/9503017.

\bibitem{ZZF00}
P.~Zanardi, C.~Zalka, and L.~Faoro, ``Entangling power of quantum evolutions,''
  {\em Physical Review A}, vol.~62, p.~030301, August 2000.
\newblock arXiv:quant-ph/0005031.

\bibitem{EJPP00}
J.~Eisert, K.~Jacobs, P.~Papadopoulos, and M.~B. Plenio, ``Optimal local
  implementation of nonlocal quantum gates,'' {\em Physical Review A}, vol.~62,
  p.~052317, October 2000.
\newblock arXiv:quant-ph/0005101.

\bibitem{BRV00}
S.~Bose, L.~Rallan, and V.~Vedral, ``Communication capacity of quantum
  computation,'' {\em Physical Review Letters}, vol.~85, pp.~5448--5451,
  December 2000.
\newblock arXiv:quant-ph/0003072.

\bibitem{CLP01}
D.~Collins, N.~Linden, and S.~Popescu, ``Nonlocal content of quantum
  operations,'' {\em Physical Review A}, vol.~64, p.~032302, August 2001.
\newblock arXiv:quant-ph/0005102.

\bibitem{CLV04}
A.~M. Childs, D.~W. Leung, and G.~Vidal, ``Reversible simulation of bipartite
  product {Hamiltonians},'' {\em {IEEE} Transactions on Information Theory},
  vol.~50, pp.~1189--1197, June 2004.
\newblock arXiv:quant-ph/0303097.

\bibitem{JMZL17}
B.~Jonnadula, P.~Mandayam, K.~\ifmmode~\dot{Z}\else \.{Z}\fi{}yczkowski, and
  A.~Lakshminarayan, ``Impact of local dynamics on entangling power,'' {\em
  Physical Review A}, vol.~95, p.~040302, April 2017.
\newblock arXiv:1611.00479.

\bibitem{DSW17}
S.~Das, G.~Siopsis, and C.~Weedbrook, ``Continuous-variable quantum {G}aussian
  process regression and quantum singular value decomposition of non-sparse low
  rank matrices,'' {\em Physical Review A}, vol.~97, p.~022315, February 2018.
\newblock arXiv:1707.00360.

\bibitem{SS08b}
Y.~Sekino and L.~Susskind, ``Fast scramblers,'' {\em Journal of High Energy
  Physics}, vol.~2008, p.~065, October 2008.
\newblock arXiv:0808.2096.

\bibitem{HQRY16}
P.~Hosur, X.-L. Qi, D.~A. Roberts, and B.~Yoshida, ``Chaos in quantum
  channels,'' {\em Journal of High Energy Physics}, vol.~2016, February 2016.
\newblock arXiv:1511.04021.

\bibitem{DHW16}
D.~Ding, P.~Hayden, and M.~Walter, ``Conditional mutual information of
  bipartite unitaries and scrambling,'' {\em Journal of High Energy Physics},
  vol.~2016, December 2016.
\newblock arXiv:1608.04750.

\bibitem{HP07}
P.~Hayden and J.~Preskill, ``Black holes as mirrors: quantum information in
  random subsystems,'' {\em Journal of High Energy Physics}, vol.~2007,
  pp.~120--120, September 2007.
\newblock arXiv:0708.4025.

\bibitem{Haw76}
S.~W. Hawking, ``Breakdown of predictability in gravitational collapse,'' {\em
  Physical Review D}, vol.~14, pp.~2460--2473, November 1976.

\bibitem{CLL06}
A.~M. Childs, D.~W. Leung, and H.-K. Lo, ``Two-way quantum communication
  channels,'' {\em International Journal of Quantum Information}, vol.~04,
  pp.~63--83, February 2006.
\newblock arXiv:quant-ph/0506039.

\bibitem{Bra07}
S.~Bravyi, ``Upper bounds on entangling rates of bipartite {Hamiltonians},''
  {\em Physical Review A}, vol.~76, p.~052319, November 2007.
\newblock arXiv:0704.0964.

\bibitem{LSW09}
N.~Linden, J.~A. Smolin, and A.~Winter, ``Entangling and disentangling power of
  unitary transformations are not equal,'' {\em Physical Review Letters},
  vol.~103, p.~030501, July 2009.
\newblock arXiv:quant-ph/0511217.

\bibitem{WSM17}
E.~Wakakuwa, A.~Soeda, and M.~Murao, ``A coding theorem for bipartite unitaries
  in distributed quantum computation,'' {\em {IEEE} Transactions on Information
  Theory}, vol.~63, pp.~5372--5403, August 2017.
\newblock arXiv:1505.04352.

\bibitem{CY16}
L.~Chen and L.~Yu, ``Entangling and assisted entangling power of bipartite
  unitary operations,'' {\em Physical Review A}, vol.~94, p.~022307, August
  2016.
\newblock arXiv:1604.05788.

\bibitem{EVWW01}
T.~Eggeling, K.~Vollbrecht, R.~Werner, and M.~Wolf, ``Distillability via
  protocols respecting the positivity of partial transpose,'' {\em Physical
  Review Letters}, vol.~87, November 2001.
\newblock arXiv:quant-ph/0104095.

\bibitem{DP05}
G.~M. D'Ariano and P.~Perinotti, ``Programmable quantum channels and
  measurements,'' October 2005.
\newblock arXiv quant-ph/0510033.

\bibitem{JWD+08}
Z.~Ji, G.~Wang, R.~Duan, Y.~Feng, and M.~Ying, ``Parameter estimation of
  quantum channels,'' {\em {IEEE} Transactions on Information Theory}, vol.~54,
  pp.~5172--5185, November 2008.
\newblock arXiv:quant-ph/0610060.

\bibitem{DM14}
R.~Demkowicz-Dobrza\ifmmode~\acute{n}\else \'{n}\fi{}ski and L.~Maccone,
  ``Using entanglement against noise in quantum metrology,'' {\em Physical
  Review Letters}, vol.~113, p.~250801, December 2014.
\newblock arXiv:1407.2934.

\bibitem{Fri15}
T.~Fritz, ``Resource convertibility and ordered commutative monoids,'' {\em
  Mathematical Structures in Computer Science}, vol.~27, pp.~850--938, October
  2015.
\newblock arXiv:1504.03661.

\bibitem{BG15}
F.~G. Brand{\~{a}}o and G.~Gour, ``Reversible framework for quantum resource
  theories,'' {\em Physical Review Letters}, vol.~115, p.~070503, August 2015.
\newblock arXiv:1502.03149.

\bibitem{STM11}
A.~Soeda, P.~S. Turner, and M.~Murao, ``Entanglement cost of implementing
  controlled-unitary operations,'' {\em Physical Review Letters}, vol.~107,
  p.~180501, October 2011.
\newblock arXiv:1008.1128.

\bibitem{G99}
D.~Gottesman, ``The {Heisenberg} representation of quantum computers,'' in {\em
  Group Theoretical Methods in Physics: Proceedings} (P.~D.~J. S.~P.~Corney,
  Robert~Delbourgo, ed.), 22, Cambridge, USA: International Press, July 1999.
\newblock arXiv:quant-ph/9807006.

\bibitem{GC99}
D.~Gottesman and I.~L. Chuang, ``Demonstrating the viability of universal
  quantum computation using teleportation and single-qubit operations,'' {\em
  Nature}, vol.~402, pp.~390--393, November 1999.
\newblock arXiv:quant-ph/9908010.

\bibitem{NC97}
M.~A. Nielsen and I.~L. Chuang, ``Programmable quantum gate arrays,'' {\em
  Physical Review Letters}, vol.~79, pp.~321--324, July 1997.
\newblock arXiv:quant-ph/9703032.

\bibitem{DBB08}
W.~D\"ur, M.~J. Bremner, and H.~J. Briegel, ``Quantum simulation of interacting
  high-dimensional systems: The influence of noise,'' {\em Physical Review A},
  vol.~78, p.~052325, November 2008.
\newblock arXiv:0706.0154.

\bibitem{AMTW00}
A.~Ambainis, M.~Mosca, A.~Tapp, and R.~de~Wolf, ``Private quantum channels,''
  {\em IEEE 41st Annual Symposium on Foundations of Computer Science},
  pp.~547--553, November 2000.
\newblock arXiv:quant-ph/0003101.

\bibitem{DGM62}
S.~R. De~Groot and P.~Mazur, {\em Non-equilibrium thermodynamics}.
\newblock North-Holland Publishing Company, Amsterdam, 1962.

\bibitem{Bek73}
J.~D. Bekenstein, ``Black holes and entropy,'' {\em Physical Review D}, vol.~7,
  pp.~2333--2346, April 1973.

\bibitem{Car1824}
S.~Carnot, ``Reflections on the motive power of fire, and on machines fitted to
  develop that power,'' {\em Paris: Bachelier}, 1824.

\bibitem{Men60}
E.~Mendoza, {\em Reflections on the motive power of fire by Sadi Carnot and
  other papers on the Second Law of thermodynamics by E. Clapeyron and R.
  Clausius}.
\newblock Dover Publications New York, 1960.

\bibitem{Att12}
P.~Attard, {\em Non-equilibrium thermodynamics and statistical mechanics:
  Foundations and applications}.
\newblock Oxford University Press, 2012.

\bibitem{Hir57}
I.~I. Hirschman, ``A note on entropy,'' {\em American Journal of Mathematics},
  vol.~79, no.~1, pp.~152--156, 1957.

\bibitem{Bec75}
W.~Beckner, ``Inequalities in {F}ourier analysis,'' {\em Annals of
  Mathematics}, vol.~102, no.~1, pp.~159--182, 1975.

\bibitem{BM75}
I.~Bia{\l}ynicki-Birula and J.~Mycielski, ``Uncertainty relations for
  information entropy in wave mechanics,'' {\em Communications in Mathematical
  Physics}, vol.~44, pp.~129--132, June 1975.

\bibitem{MU88}
H.~Maassen and J.~B.~M. Uffink, ``Generalized entropic uncertainty relations,''
  {\em Physical Review Letters}, vol.~60, pp.~1103--1106, March 1988.

\bibitem{BCH73}
J.~M. Bardeen, B.~Carter, and S.~W. Hawking, ``The four laws of black hole
  mechanics,'' {\em Communications in Mathematical Physics}, vol.~31,
  pp.~161--170, June 1973.

\bibitem{BKLS86}
L.~Bombelli, R.~K. Koul, J.~Lee, and R.~D. Sorkin, ``Quantum source of entropy
  for black holes,'' {\em Physical Review D}, vol.~34, pp.~373--383, July 1986.

\bibitem{ECP10}
J.~Eisert, M.~Cramer, and M.~B. Plenio, ``Colloquium,'' {\em Reviews of Modern
  Physics}, vol.~82, pp.~277--306, February 2010.
\newblock arXiv:0808.3773v4.

\bibitem{Car09}
H.~Carmichael, {\em An open systems approach to quantum optics: lectures
  presented at the Universit\'e Libre de Bruxelles, October 28 to November 4,
  1991}, vol.~18.
\newblock Springer Science \& Business Media, 2009.

\bibitem{Wei12}
U.~Weiss, {\em Quantum dissipative systems}, vol.~13.
\newblock World scientific, fourth~ed., 2012.

\bibitem{AU78}
P.~M. Alberti and A.~Uhlmann, ``The order structure of states in {C}*-and
  {W}*-algebras,'' in {\em International Conference on Operator Algebras,
  Ideals, and their Applications in Theoretical Physics}, pp.~126--135,
  Leipzig: Teubner, BSB, September 1978.

\bibitem{AU82}
P.~M. Alberti and A.~Uhlmann, {\em Stochasticity and Partial Order: Doubly
  Stochastic Maps and Unitary Mixing}.
\newblock Mathematics and Its Applications, Springer Netherlands, 1982.

\bibitem{Str85}
R.~F. Streater, ``Convergence of the quantum {B}oltzmann map,'' {\em
  Communications in Mathematical Physics}, vol.~98, no.~2, pp.~177--185, 1985.

\bibitem{BP02book}
H.-P. Breuer and F.~Petruccione, {\em The Theory of Open Quantum Systems}.
\newblock Oxford University Press, 2002.

\bibitem{CK12}
D.~Chru\'sci\'nski and A.~Kossakowski, ``Markovianity criteria for quantum
  evolution,'' {\em Journal of Physics B: Atomic, Molecular and Optical
  Physics}, vol.~45, p.~154002, July 2012.
\newblock arXiv:1201.5987v1.

\bibitem{MSW16}
A.~M\"{u}ller-Hermes, D.~S. França, and M.~M. Wolf, ``Entropy production of
  doubly stochastic quantum channels,'' {\em Journal of Mathematical Physics},
  vol.~57, no.~2, p.~022203, 2016.
\newblock arXiv:1505.04678.

\bibitem{AW16}
A.~M. Alhambra and M.~P. Woods, ``Dynamical maps, quantum detailed balance, and
  the {P}etz recovery map,'' {\em Physical Review A}, vol.~96, p.~022118,
  August 2017.
\newblock arXiv:1609.07496v4.

\bibitem{ZHHH01}
K.~\ifmmode~\dot{Z}\else \.{Z}\fi{}yczkowski, P.~Horodecki, M.~Horodecki, and
  R.~Horodecki, ``Dynamics of quantum entanglement,'' {\em Physical Review A},
  vol.~65, p.~012101, December 2001.

\bibitem{Spo78}
H.~Spohn, ``Entropy production for quantum dynamical semigroups,'' {\em Journal
  of Mathematical Physics}, vol.~19, no.~5, pp.~1227--1230, 1978.

\bibitem{KS14}
R.~K\"{o}nig and G.~Smith, ``The entropy power inequality for quantum
  systems,'' {\em IEEE Transactions on Information Theory}, vol.~60,
  pp.~1536--1548, March 2014.
\newblock arXiv:1205.3409.

\bibitem{DPR17}
N.~Datta, Y.~Pautrat, and C.~Rouz\'{e}, ``Contractivity properties of a quantum
  diffusion semigroup,'' {\em Journal of Mathematical Physics}, vol.~58,
  p.~012205, January 2017.
\newblock arXiv:1607.04242.

\bibitem{BLP09}
H.-P. Breuer, E.-M. Laine, and J.~Piilo, ``Measure for the degree of
  non-{M}arkovian behavior of quantum processes in open systems,'' {\em
  Physical Review Letters}, vol.~103, p.~210401, November 2009.
\newblock arXiv:0908.0238.

\bibitem{BV13}
V.~Baccetti and M.~Visser, ``Infinite {S}hannon entropy,'' {\em Journal of
  Statistical Mechanics: Theory and Experiment}, vol.~2013, p.~P04010, April
  2013.
\newblock arXiv:1212.5630.

\bibitem{Fal70}
H.~Falk, ``Inequalities of {J}. {W}. {G}ibbs,'' {\em American Journal of
  Physics}, vol.~38, no.~7, pp.~858--869, 1970.

\bibitem{Lin73}
G.~Lindblad, ``Entropy, information and quantum measurements,'' {\em
  Communications in Mathematical Physics}, vol.~33, pp.~305--322, December
  1973.

\bibitem{Nie02}
M.~A. Nielsen, ``An introduction to majorization and its applications to
  quantum mechanics,'' {\em Lecture Notes, Department of Physics, University of
  Queensland, Australia}, 2002.

\bibitem{Ber09}
G.~P. Beretta, ``Nonlinear quantum evolution equations to model irreversible
  adiabatic relaxation with maximal entropy production and other nonunitary
  processes,'' {\em Reports on Mathematical Physics}, vol.~64, pp.~139 -- 168,
  August 2009.
\newblock arXiv:0907.1977.

\bibitem{Rud76}
W.~Rudin, {\em Principles of Mathematical Analysis}.
\newblock International series in pure and applied mathematics, McGraw-Hill,
  1976.

\bibitem{Woo01}
W.~K. Wootters, ``Entanglement of formation and concurrence,'' {\em Quantum
  Information \& Computation}, vol.~1, pp.~27--44, January 2001.

\bibitem{VPRK97}
V.~Vedral, M.~B. Plenio, M.~A. Rippin, and P.~L. Knight, ``Quantifying
  entanglement,'' {\em Physical Review Letters}, vol.~78, pp.~2275--2279, March
  1997.
\newblock arXiv:quant-ph/9702027.

\bibitem{ON02}
T.~J. Osborne and M.~A. Nielsen, ``Entanglement in a simple quantum phase
  transition,'' {\em Physical Review A}, vol.~66, p.~032110, September 2002.
\newblock arXiv:quant-ph/0202162.

\bibitem{Kos72}
A.~Kossakowski, ``On quantum statistical mechanics of non-{H}amiltonian
  systems,'' {\em Reports on Mathematical Physics}, vol.~3, pp.~247 -- 274,
  December 1972.

\bibitem{Ali07}
R.~Alicki and K.~Lendi, {\em Quantum dynamical semigroups and applications}.
\newblock Lectures Notes in Physics, Berlin: Springer, 2007.

\bibitem{WC08}
M.~M. Wolf and J.~I. Cirac, ``Dividing quantum channels,'' {\em Communications
  in Mathematical Physics}, vol.~279, pp.~147--168, April 2008.
\newblock arXiv:math-ph/0611057v3.

\bibitem{RHP14}
A.~Rivas, S.~F. Huelga, and M.~B. Plenio, ``Quantum non-{M}arkovianity:
  characterization, quantification and detection,'' {\em Reports on Progress in
  Physics}, vol.~77, p.~094001, August 2014.
\newblock arXiv:1405.0303v2.

\bibitem{RHP10}
A.~Rivas, S.~F. Huelga, and M.~B. Plenio, ``Entanglement and non-{M}arkovianity
  of quantum evolutions,'' {\em Physical Review Letters}, vol.~105, p.~050403,
  July 2010.
\newblock arXiv:0911.4270v2.

\bibitem{GKS76}
V.~Gorini, A.~Kossakowski, and E.~C.~G. Sudarshan, ``Completely positive
  dynamical semigroups of {N}-level systems,'' {\em Journal of Mathematical
  Physics}, vol.~17, pp.~821--825, May 1976.

\bibitem{Lin76}
G.~Lindblad, ``On the generators of quantum dynamical semigroups,'' {\em
  Communications in Mathematical Physics}, vol.~48, pp.~119--130, June 1976.

\bibitem{WEC08}
M.~M. Wolf, J.~Eisert, T.~S. Cubitt, and J.~I. Cirac, ``Assessing
  non-{M}arkovian quantum dynamics,'' {\em Physical Review Letters}, vol.~101,
  p.~150402, October 2008.
\newblock arXiv:0711.3172v2.

\bibitem{LPB10}
E.-M. Laine, J.~Piilo, and H.-P. Breuer, ``Measure for the non-{M}arkovianity
  of quantum processes,'' {\em Physical Review A}, vol.~81, p.~062115, June
  2010.
\newblock arXiv:1002.2583v2.

\bibitem{HCLA14}
M.~J.~W. Hall, J.~D. Cresser, L.~Li, and E.~Andersson, ``Canonical form of
  master equations and characterization of non-{M}arkovianity,'' {\em Physical
  Review A}, vol.~89, p.~042120, April 2014.
\newblock arXiv:1009.0845v4.

\bibitem{BN88}
F.~Benatti and H.~Narnhofer, ``Entropy behaviour under completely positive
  maps,'' {\em Letters in Mathematical Physics}, vol.~15, pp.~325--334, May
  1988.

\bibitem{OCA17}
C.~Ou, R.~V. Chamberlin, and S.~Abe, ``Lindbladian operators, von {N}eumann
  entropy and energy conservation in time-dependent quantum open systems,''
  {\em Physica A: Statistical Mechanics and its Applications}, vol.~466,
  pp.~450 -- 454, January 2017.
\newblock arXiv:1601.07874v3.

\bibitem{Dav74}
E.~B. Davies, ``Markovian master equations,'' {\em Communications in
  Mathematical Physics}, vol.~39, pp.~91--110, June 1974.

\bibitem{HP91}
F.~Hiai and D.~Petz, ``The proper formula for relative entropy and its
  asymptotics in quantum probability,'' {\em Communications in Mathematical
  Physics}, vol.~143, pp.~99--114, December 1991.

\bibitem{ON00}
T.~Ogawa and H.~Nagaoka, ``Strong converse and {S}tein's lemma in quantum
  hypothesis testing,'' {\em IEEE Transactions on Information Theory}, vol.~46,
  pp.~2428--2433, November 2000.
\newblock arXiv:quant-ph/9906090v1.

\bibitem{BCM14}
B.~Bylicka, D.~Chru{\'{s}}ci{\'{n}}ski, and S.~Maniscalco, ``Non-{M}arkovianity
  and reservoir memory of quantum channels: a quantum information theory
  perspective,'' {\em Scientific Reports}, vol.~4, p.~5720, July 2014.

\bibitem{HHW09}
T.~Heinosaari, A.~S. Holevo, and M.~M. Wolf, ``The semigroup structure of
  {G}aussian channels,'' {\em Quantum Information and Computation}, vol.~10,
  pp.~619--635, July 2010.
\newblock arXiv:0909.0408.

\bibitem{GHLM10}
V.~Giovannetti, A.~S. Holevo, S.~Lloyd, and L.~Maccone, ``Generalized minimal
  output entropy conjecture for one-mode {G}aussian channels: definitions and
  some exact results,'' {\em Journal of Physics A: Mathematical and
  Theoretical}, vol.~43, p.~415305, September 2010.
\newblock arXiv:1004.4787v2.

\bibitem{CM14}
D.~Chru\ifmmode \acute{s}\else \'{s}\fi{}ci\ifmmode~\acute{n}\else
  \'{n}\fi{}ski and S.~Maniscalco, ``Degree of non-{M}arkovianity of quantum
  evolution,'' {\em Physical Review Letters}, vol.~112, p.~120404, March 2014.
\newblock arXiv:1311.4213v1.

\bibitem{LWS10}
X.-M. Lu, X.~Wang, and C.~P. Sun, ``Quantum {F}isher information flow and
  non-{M}arkovian processes of open systems,'' {\em Physical Review A},
  vol.~82, p.~042103, October 2010.
\newblock arXiv:0912.0587v5.

\bibitem{LPP11}
S.~Lorenzo, F.~Plastina, and M.~Paternostro, ``Role of environmental
  correlations in the non-{M}arkovian dynamics of a spin system,'' {\em
  Physical Review A}, vol.~84, p.~032124, September 2011.
\newblock arXiv:1106.5447v1.

\bibitem{LFS12}
S.~Luo, S.~Fu, and H.~Song, ``Quantifying non-{M}arkovianity via
  correlations,'' {\em Physical Review A}, vol.~86, p.~044101, October 2012.

\bibitem{ZSMWN13}
W.~Zhong, Z.~Sun, J.~Ma, X.~Wang, and F.~Nori, ``Fisher information under
  decoherence in {B}loch representation,'' {\em Physical Review A}, vol.~87,
  p.~022337, February 2013.
\newblock arXiv:1212.0917v2.

\bibitem{LLW13}
J.~Liu, X.-M. Lu, and X.~Wang, ``Nonunital non-{M}arkovianity of quantum
  dynamics,'' {\em Physical Review A}, vol.~87, p.~042103, April 2013.
\newblock arXiv:1301.5763.

\bibitem{LPP13}
S.~Lorenzo, F.~Plastina, and M.~Paternostro, ``Geometrical characterization of
  non-{M}arkovianity,'' {\em Physical Review A}, vol.~88, p.~020102, August
  2013.
\newblock arXiv:1302.6673v1.

\bibitem{HSK15}
S.~Haseli, S.~Salimi, and A.~S. Khorashad, ``A measure of non-{M}arkovianity
  for unital quantum dynamical maps,'' {\em Quantum Information Processing},
  vol.~14, pp.~3581--3594, September 2015.
\newblock arXiv:1406.0180v4.

\bibitem{PGD+16}
C.~Pineda, T.~Gorin, D.~Davalos, D.~A. Wisniacki, and I.~Garc\'{\i}a-Mata,
  ``Measuring and using non-{M}arkovianity,'' {\em Physical Review A}, vol.~93,
  p.~022117, Feb 2016.

\bibitem{Fuj04}
A.~Fujiwara, ``Estimation of a generalized amplitude-damping channel,'' {\em
  Physical Review A}, vol.~70, p.~012317, July 2004.

\bibitem{NS06}
A.~Nayak and P.~Sen, ``Invertible quantum operations and perfect encryption of
  quantum states,'' {\em Quantum Information and Computation}, vol.~7,
  pp.~103--110, January 2007.
\newblock arXiv:quant-ph/0605041v4.

\bibitem{HOT81}
F.~Hiai, M.~Ohya, and M.~Tsukada, ``Sufficiency, {KMS} condition and relative
  entropy in von {N}eumann algebras.,'' {\em Pacific Journal of Mathematics},
  vol.~96, pp.~99--109, January 1981.

\bibitem{BR96}
O.~Bratteli and D.~W. Robinson, {\em Operator algebras and quantum statistical
  mechanics. Vol. 2: Equilibrium states. Models in quantum statistical
  mechanics}.
\newblock Berlin, Germany: Springer, 1996.

\bibitem{AE05}
K.~M.~R. Audenaert and J.~Eisert, ``Continuity bounds on the quantum relative
  entropy,'' {\em Journal of Mathematical Physics}, vol.~46, p.~102104, October
  2005.
\newblock arXiv:quant-ph/0503218v3.

\bibitem{AE11}
K.~M.~R. Audenaert and J.~Eisert, ``Continuity bounds on the quantum relative
  entropy -- {II},'' {\em Journal of Mathematical Physics}, vol.~52, p.~112201,
  November 2011.
\newblock arXiv:1105.2656v3.

\bibitem{WGH15}
J.~Wallman, C.~Granade, R.~Harper, and S.~T. Flammia, ``Estimating the
  coherence of noise,'' {\em New Journal of Physics}, vol.~17, p.~113020,
  November 2015.
\newblock arXiv:1503.07865.

\bibitem{PLSW04}
P.~Hayden, D.~Leung, P.~W. Shor, and A.~Winter, ``Randomizing quantum states:
  Constructions and applications,'' {\em Communications in Mathematical
  Physics}, vol.~250, pp.~371--391, September 2004.
\newblock arXiv:quant-ph/0307104v3.

\bibitem{Aub09}
G.~Aubrun, ``On almost randomizing channels with a short {K}raus
  decomposition,'' {\em Communications in Mathematical Physics}, vol.~288,
  pp.~1103--1116, June 2009.
\newblock arXiv:0805.2900.

\bibitem{Kit97}
A.~Y. Kitaev, ``Quantum computations: algorithms and error correction,'' {\em
  Russian Mathematical Surveys}, vol.~52, pp.~1191--1249, December 1997.

\bibitem{Hel69}
C.~W. Helstrom, ``Quantum detection and estimation theory,'' {\em Journal of
  Statistical Physics}, vol.~1, pp.~231--252, June 1969.

\bibitem{Hel76book}
C.~W. Helstrom, {\em Quantum Detection and Estimation Theory}.
\newblock Mathematics in Science and Engineering : a series of monographs and
  textbooks, Academic Press, 1976.

\bibitem{BSST99}
C.~H. Bennett, P.~W. Shor, J.~A. Smolin, and A.~V. Thapliyal,
  ``Entanglement-assisted classical capacity of noisy quantum channels,'' {\em
  Physical Review Letters}, vol.~83, pp.~3081--3084, October 1999.
\newblock arXiv:quant-ph/9904023.

\bibitem{MGE12}
E.~Magesan, J.~M. Gambetta, and J.~Emerson, ``Characterizing quantum gates via
  randomized benchmarking,'' {\em Physical Review A}, vol.~85, p.~042311, Apr
  2012.

\bibitem{Pir11}
S.~Pirandola, ``Quantum reading of a classical digital memory,'' {\em Physical
  Review Letters}, vol.~106, p.~090504, March 2011.
\newblock arXiv:1103.3478.

\bibitem{PLG+11}
S.~Pirandola, C.~Lupo, V.~Giovannetti, S.~Mancini, and S.~L. Braunstein,
  ``Quantum reading capacity,'' {\em New Journal of Physics}, vol.~13,
  p.~113012, November 2011.
\newblock arXiv:1107.3500.

\bibitem{GDN+11}
S.~Guha, Z.~Dutton, R.~Nair, J.~Shapiro, and B.~Yen, ``Information capacity of
  quantum reading,'' in {\em Laser Science}, p.~LTuF2, Optical Society of
  America, 2011.

\bibitem{GW12}
S.~Guha and M.~M. Wilde, ``Polar coding to achieve the {Holevo} capacity of a
  pure-loss optical channel,'' in {\em 2012 IEEE International Symposium on
  Information Theory Proceedings (ISIT)}, pp.~546--550, IEEE, 2012.
\newblock arXiv:1202.0533.

\bibitem{WGTL12}
M.~M. Wilde, S.~Guha, S.-H. Tan, and S.~Lloyd, ``Explicit capacity-achieving
  receivers for optical communication and quantum reading,'' in {\em 2012 IEEE
  International Symposium on Information Theory Proceedings (ISIT)},
  pp.~551--555, IEEE, 2012.
\newblock arXiv:1202.0518.

\bibitem{LP16}
C.~Lupo and S.~Pirandola, ``Super-additivity and entanglement assistance in
  quantum reading,'' {\em Quantum Information and Computation}, vol.~17,
  pp.~0611--0622, June 2017.
\newblock arXiv:1603.01433.

\bibitem{DPP01}
G.~M. D'Ariano, P.~L. Presti, and M.~G.~A. Paris, ``Using entanglement improves
  the precision of quantum measurements,'' {\em Physical Review Letters},
  vol.~87, p.~270404, December 2001.
\newblock arXiv:quant-ph/0109040.

\bibitem{Aci01}
A.~Acin, ``Statistical distinguishability between unitary operations.,'' {\em
  Physical Review Letters}, vol.~87, p.~177901, October 2001.
\newblock arXiv:quant-ph/0102064.

\bibitem{WY06}
G.~Wang and M.~Ying, ``Unambiguous discrimination among quantum operations,''
  {\em Physical Review A}, vol.~73, p.~042301, April 2006.
\newblock arXiv:quant-ph/0512142.

\bibitem{TEGGLMPS08}
S.-H. Tan, B.~I. Erkmen, V.~Giovannetti, S.~Guha, S.~Lloyd, L.~Maccone,
  S.~Pirandola, and J.~H. Shapiro, ``Quantum illumination with {Gaussian}
  states,'' {\em Physical Review Letters}, vol.~101, p.~253601, December 2008.
\newblock arXiv:0810.0534.

\bibitem{DFY09}
R.~Duan, Y.~Feng, and M.~Ying, ``Perfect distinguishability of quantum
  operations,'' {\em Physical Review Letters}, vol.~103, p.~210501, November
  2009.
\newblock arXiv:0908.0119.

\bibitem{CMW14}
T.~Cooney, M.~Mosonyi, and M.~M. Wilde, ``Strong converse exponents for a
  quantum channel discrimination problem and quantum-feedback-assisted
  communication,'' {\em Communications in Mathematical Physics}, vol.~344,
  pp.~797--829, June 2016.
\newblock arXiv:1408.3373.

\bibitem{HHLW10}
A.~W. Harrow, A.~Hassidim, D.~W. Leung, and J.~Watrous, ``Adaptive versus
  nonadaptive strategies for quantum channel discrimination,'' {\em Physical
  Review A}, vol.~81, p.~032339, March 2010.
\newblock arXiv:0909.0256.

\bibitem{DGLL16}
R.~Duan, C.~Guo, C.-K. Li, and Y.~Li, ``Parallel distinguishability of quantum
  operations,'' in {\em IEEE International Symposium on Information Theory
  (ISIT)}, pp.~2259--2263, 2016.
\newblock arXiv:1605.02294.

\bibitem{DR09}
N.~Datta and R.~Renner, ``Smooth {Renyi} entropies and the quantum information
  spectrum,'' {\em IEEE Transactions on Information Theory}, vol.~55,
  pp.~2807--2815, June 2009.
\newblock arXiv:0801.0282.

\bibitem{Tbook15}
M.~Tomamichel, {\em Quantum Information Processing with Finite Resources:
  Mathematical Foundations}, vol.~5.
\newblock Springer, 2015.
\newblock arXiv:1504.00233.

\bibitem{Mul12}
A.~Muller-Hermes, ``Transposition in quantum information theory,'' Master's
  thesis, Technical University of Munich, September 2012.

\bibitem{PLOB15}
S.~Pirandola, R.~Laurenza, C.~Ottaviani, and L.~Banchi, ``Fundamental limits of
  repeaterless quantum communications,'' {\em Nature Communications}, vol.~8,
  p.~15043, April 2017.

\bibitem{TW16}
M.~Takeoka and M.~M. Wilde, ``Optimal estimation and discrimination of excess
  noise in thermal and amplifier channels,'' November 2016.
\newblock arXiv:1611.09165.

\bibitem{TT15}
M.~Tomamichel and V.~Y.~F. Tan, ``Second-order asymptotics for the classical
  capacity of image-additive quantum channels,'' {\em Communications in
  Mathematical Physics}, vol.~338, pp.~103--137, May 2015.
\newblock arXiv:1308.6503.

\bibitem{DW15}
D.~Ding and M.~M. Wilde, ``Strong converse exponents for the feedback-assisted
  classical capacity of entanglement-breaking channels,'' vol.~54, pp.~1--19,
  January 2018.
\newblock arXiv:1506.02228.

\bibitem{KMN+05}
P.~Kok, W.~J. Munro, K.~Nemoto, T.~C. Ralph, J.~P. Dowling, and G.~J. Milburn,
  ``Linear optical quantum computing with photonic qubits,'' {\em Review of
  Modern Physics}, vol.~79, pp.~135--174, January 2007.
\newblock arXiv:quant-ph/0512071.

\bibitem{Carlen09}
E.~A. Carlen, ``Trace inequalities and quantum entropy: An introductory
  course,'' {\em Contemporary Mathematics}, vol.~529, pp.~73--140, 2010.
\newblock Available at \url{www.ueltschi.org/AZschool/notes/EricCarlen.pdf}.

\bibitem{BSST02}
C.~H. Bennett, P.~W. Shor, J.~A. Smolin, and A.~V. Thapliyal,
  ``Entanglement-assisted capacity of a quantum channel and the reverse
  {Shannon} theorem,'' {\em IEEE Transactions on Information Theory}, vol.~48,
  pp.~2637--2655, October 2002.
\newblock arXiv:quant-ph/0106052.

\bibitem{GBP97}
M.~Grassl, T.~Beth, and T.~Pellizzari, ``Codes for the quantum erasure
  channel,'' {\em Physical Review A}, vol.~56, p.~33, July 1997.
\newblock arXiv:quant-ph/9610042.

\bibitem{adesso14}
G.~Adesso, S.~Ragy, and A.~R. Lee, ``Continuous variable quantum information:
  {Gaussian} states and beyond,'' {\em Open Systems and Information Dynamics},
  vol.~21, p.~1440001, June 2014.
\newblock arXiv:1401.4679.

\bibitem{PPV11feedback}
Y.~Polyanskiy, H.~V. Poor, and S.~Verdu, ``Feedback in the non-asymptotic
  regime,'' {\em IEEE Transactions on Information Theory}, vol.~57,
  pp.~4903--4925, August 2011.

\bibitem{F08}
R.~M. Fano, ``Fano inequality,'' {\em Scholarpedia}, vol.~3, no.~10, p.~6648,
  2008.

\bibitem{Wil16}
M.~M. Wilde, ``Squashed entanglement and approximate private states,'' {\em
  Quantum Information Processing}, vol.~15, pp.~4563--4580, November 2016.
\newblock arXiv:1606.08028.

\bibitem{MW13}
C.~Morgan and A.~Winter, ``{``Pretty strong''} converse for the quantum
  capacity of degradable channels,'' {\em IEEE Transactions on Information
  Theory}, vol.~60, pp.~317--333, January 2014.
\newblock arXiv:1301.4927.

\bibitem{Sha09}
J.~H. Shapiro, ``Defeating passive eavesdropping with quantum illumination,''
  {\em Physical Review A}, vol.~80, p.~022320, August 2009.
\newblock arXiv:0904.2490.

\bibitem{ZZD+16}
Q.~Zhuang, Z.~Zhang, J.~Dove, F.~N.~C. Wong, and J.~H. Shapiro, ``Floodlight
  quantum key distribution: A practical route to gigabit-per-second secret-key
  rates,'' {\em Physical Review A}, vol.~94, July 2016.
\newblock arXiv:1510.08737.

\bibitem{MPS+02}
C.~Miquel, J.~P. Paz, M.~Saraceno, E.~Knill, R.~Laflamme, and C.~Negrevergne,
  ``Interpretation of tomography and spectroscopy as dual forms of quantum
  computation,'' {\em Nature}, vol.~418, pp.~59--62, July 2002.
\newblock arXiv:quant-ph/0109072.

\bibitem{Llo08}
S.~Lloyd, ``Enhanced sensitivity of photodetection via quantum illumination,''
  {\em Science}, vol.~321, no.~5895, pp.~1463--1465, 2008.
\newblock arXiv:0803.2022.

\bibitem{YA11}
J.~Yuen-Zhou and A.~Aspuru-Guzik, ``Quantum process tomography of excitonic
  dimers from two-dimensional electronic spectroscopy. i. general theory and
  application to homodimers,'' {\em The Journal of Chemical Physics}, vol.~134,
  p.~134505, April 2011.
\newblock arXiv:1101.2716.

\bibitem{Bow85}
L.~Bowden, ``The story of iff (identification friend or foe),'' {\em IEE
  Proceedings A - Physical Science, Measurement and Instrumentation, Management
  and Education - Reviews}, vol.~132, pp.~435--437, October 1985.

\bibitem{DFY07}
R.~Duan, Y.~Feng, and M.~Ying, ``Entanglement is not necessary for perfect
  discrimination between unitary operations,'' {\em Physical Review Letters},
  vol.~98, p.~100503, March 2007.
\newblock arXiv:quant-ph/0601150.

\bibitem{Che01}
A.~Chefles, ``Unambiguous discrimination between linearly dependent states with
  multiple copies,'' {\em Physical Review A}, vol.~64, p.~062305, November
  2001.
\newblock arXiv:quant-ph/0105016.

\bibitem{ACM+07}
K.~M.~R. Audenaert, J.~Calsamiglia, R.~Mu{\~{n}}oz-Tapia, E.~Bagan, L.~Masanes,
  A.~Acin, and F.~Verstraete, ``Discriminating states: The quantum chernoff
  bound,'' {\em Physical Review Letters}, vol.~98, p.~160501, April 2007.

\bibitem{SB11}
B.~A. Shaw and T.~A. Brun, ``Quantum steganography with noisy quantum
  channels,'' {\em Physical Review A}, vol.~83, p.~022310, February 2011.
\newblock arXiv:1006.1934.

\bibitem{BGP+15}
B.~A. Bash, A.~H. Gheorghe, M.~Patel, J.~L. Habif, D.~Goeckel, D.~Towsley, and
  S.~Guha, ``Quantum-secure covert communication on bosonic channels,'' {\em
  Nature Communications}, vol.~6, p.~8626, October 2015.

\bibitem{WWZ16}
L.~Wang, G.~W. Wornell, and L.~Zheng, ``Fundamental limits of communication
  with low probability of detection,'' {\em IEEE Transactions on Information
  Theory}, vol.~62, pp.~3493--3503, June 2016.
\newblock arXiv:1506.03236.

\bibitem{Blo16}
M.~R. Bloch, ``Covert communication over noisy channels: A resolvability
  perspective,'' {\em IEEE Transactions on Information Theory}, vol.~62,
  pp.~2334--2354, May 2016.
\newblock arXiv:1503.08778.

\bibitem{K93book}
S.~M. Kay, {\em Fundamentals of Statistical Signal Processing: Estimation
  Theory}.
\newblock Upper Saddle River, NJ, USA: Prentice-Hall, Inc., 1993.

\bibitem{GLM06}
V.~Giovannetti, S.~Lloyd, and L.~Maccone, ``Quantum metrology,'' {\em Physical
  Review Letters}, vol.~96, p.~010401, January 2006.
\newblock arXiv:quant-ph/0509179.

\bibitem{CMP07}
G.~Chiribella, L.~Maccone, and P.~Perinotti, ``Secret quantum communication of
  a reference frame,'' {\em Physical Review Letters}, vol.~98, p.~120501, March
  2007.
\newblock arXiv:quant-ph/0608042.

\bibitem{GBGD17}
C.~N. Gagatsos, B.~A. Bash, S.~Guha, and A.~Datta, ``Bounding the quantum
  limits of precision for phase estimation with loss and thermal noise,'' {\em
  Physical Review A}, vol.~96, p.~062306, Dec 2017.
\newblock arXiv:1701.05518.

\bibitem{HMM17}
Z.~Huang, C.~Macchiavello, and L.~Maccone, ``Cryptographic quantum metrology,''
  2017.
\newblock arXiv:1706.03894.

\bibitem{BDW18}
S.~B\"auml, S.~Das, and M.~M. Wilde, ``Fundamental limits on the capacities of
  bipartite quantum interactions,'' {\em Physical Review Letters}, vol.~121,
  p.~250504, December 2018.
\newblock arXiv:1812.08223.

\bibitem{FHSSW11}
O.~Fawzi, P.~Hayden, I.~Savov, P.~Sen, and M.~M. Wilde, ``Classical
  communication over a quantum interference channel,'' {\em IEEE Transactions
  on Information Theory}, vol.~58, pp.~3670--3691, June 2012.
\newblock arXiv:1102.2624.

\bibitem{audenaert2016entropy}
K.~Audenaert, N.~Datta, and M.~Ozols, ``Entropy power inequalities for
  qudits,'' {\em Journal of Mathematical Physics}, vol.~57, no.~5, p.~052202,
  2016.

\bibitem{konig2013limits}
R.~K{\"o}nig and G.~Smith, ``Limits on classical communication from quantum
  entropy power inequalities,'' {\em Nature Photonics}, vol.~7, no.~2, p.~142,
  2013.

\bibitem{palma1996quantum}
G.~M. Palma, K.-A. Suominen, and A.~K. Ekert, ``Quantum computers and
  dissipation,'' in {\em Proceedings of the Royal Society of London A:
  Mathematical, Physical and Engineering Sciences}, vol.~452, pp.~567--584, The
  Royal Society, 1996.

\end{thebibliography}
	\endgroup
\end{singlespace}

\titlecontents{chapter}
[0pt]
{\addvspace{10pt}}
{\normalfont\contentspush{Appendix}\ \thecontentslabel\quad}
{}
{\dotfill\contentspage}

\titleformat{\chapter}[block]
{\normalfont\large\bfseries\singlespace}{ \contentspush{Appendix }\thechapter}{14pt}{\large} 

\begin{appendix}
\chapter{Qudit Systems and Heisenberg--Weyl Group}\label{app:qudit}
Here we introduce some basic notations and definitions related to qudit systems. A system represented with a $d$-dimensional Hilbert space is called a qu$d$it system. Let $J_{B'}=\{|j\>_{B'}\}_{j\in \{0,\ldots,d-1\}}$ be a computational orthonormal basis of $\mc{H}_{B'}$ such that $\dim(\mc{H}_{B'})=d$. There exists a unitary operator called \textit{cyclic shift operator} $X(k)$ that acts on the orthonormal states as follows:
\begin{equation}\label{eq:pauli-x-d}
\forall |j\>_{B'}\in J_{B'}:\ \ X(k)|j\>=|k\oplus j\>,
\end{equation}
where $\oplus$ is a cyclic addition operator, i.e., $k\oplus j:= (k+j)\ \textnormal{mod}\ d$. There also exists another unitary operator called the \textit{phase operator} $Z(l)$ that acts on the qudit computational basis states as
\begin{equation}
\forall |j\>_{B'}\in J_{B'}:\ \ Z(l)|j\>=\exp\(\frac{\iota 2\pi lj}{d}\)|j\>. 
\end{equation}
The $d^2$ operators $\{X(k)Z(l)\}_{k,l\in\{0,\ldots,d-1\}}$ are known as the Heisenberg--Weyl operators. Let $\sigma(k,l):=X(k)Z(l)$. 
The maximally entangled state $\Phi_{R:B'}$ of qudit systems $RB'$ is given as
\begin{equation}
|\Phi\>_{RB'}:=\frac{1}{\sqrt{d}}\sum_{j=0}^{d-1}|j\>_R|j\>_{B'},
\end{equation}
and we define 
\begin{equation}
|\Phi^{k,l}\>_{RB'}:=(\bm{1}_R\otimes\sigma^{k,l}_{B'})|\Phi\>_{R:B'}.
\end{equation}
The $d^2$ states $\{|\Phi^{k,l}\>_{RB'}\}_{k,l\in\{0,\ldots,d-1\}}$ form a complete, orthonormal basis:
\begin{align}
\<\Phi^{k_1,l_1}|\Phi^{k_2,l_2}\>&=\delta_{k_1,k_2}\delta_{l_1,l_2},\\
\sum_{k,l=0}^{d-1}|\Phi^{k,l}\>\<\Phi^{k,l}|_{RB'}&=\bm{1}_{RB'}.
\end{align}

Let $\msc{W}$ be a discrete set of size $|W|=d^2$. There exists one-to-one mapping $\{(k,l)\}_{k,l\in\{0,d-1\}}\leftrightarrow \{w\}_{w\in\msc{W}}$. For example, we can use the following map: $w=k+d\cdot l$ for $\msc{W}=\{0,\ldots,d^2-1\}$. This allows us to define $\sigma^w:=\sigma(k,l)$ and $\Phi^w_{RB'}:=\Phi^{k,l}
_{RB'}$.  Let the set of $d^2$ Heisenberg--Weyl operators be denoted as
\begin{equation}\label{eq:HW-op}
\mathbf{H}:=\{ \sigma^w\}_{w\in\msc{W}}=\{X(k)Z(l)\}_{k,l\in\{0,\ldots,d-1\}},
\end{equation}
and we refer to $\mathbf{H}$  as the Heisenberg--Weyl group.

\clearpage

\chapter{Bidirectional Max-Rains Information: Examples}\label{app:rmax}
This appendix contains results discussed in \cite{BDW18}.

Here we restrict $d=2$ in Appendix~\ref{app:qudit} to consider qubit systems. 

\begin{figure}[h]
		\centering
		\includegraphics[width=0.4\linewidth, trim={1cm 5cm 1cm 5cm}]{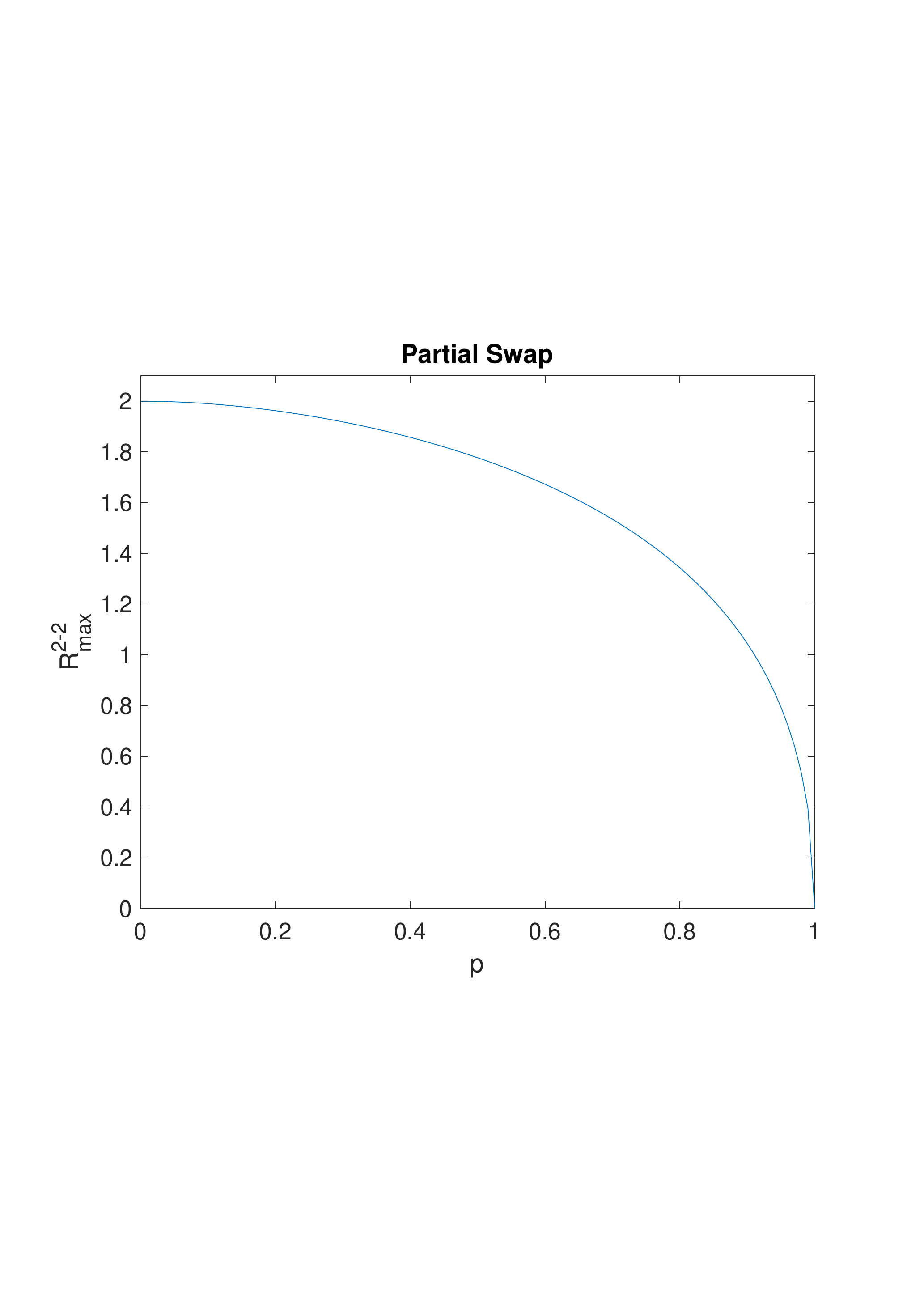}
		\includegraphics[width=0.4\linewidth, trim={1cm 5cm 1cm 5cm}]{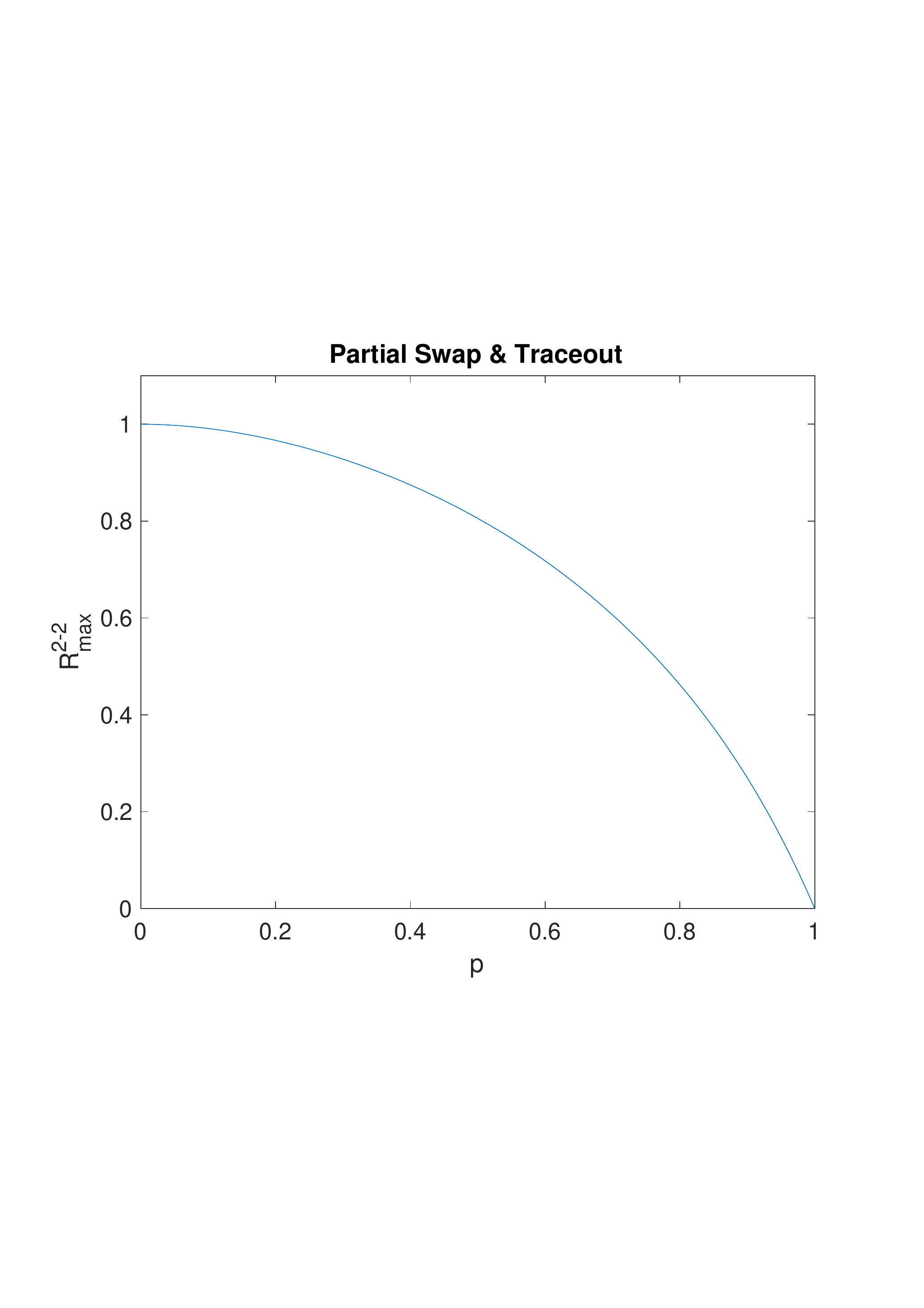}
		\includegraphics[width=0.4\linewidth, trim={1cm 5cm 1cm 5cm}]{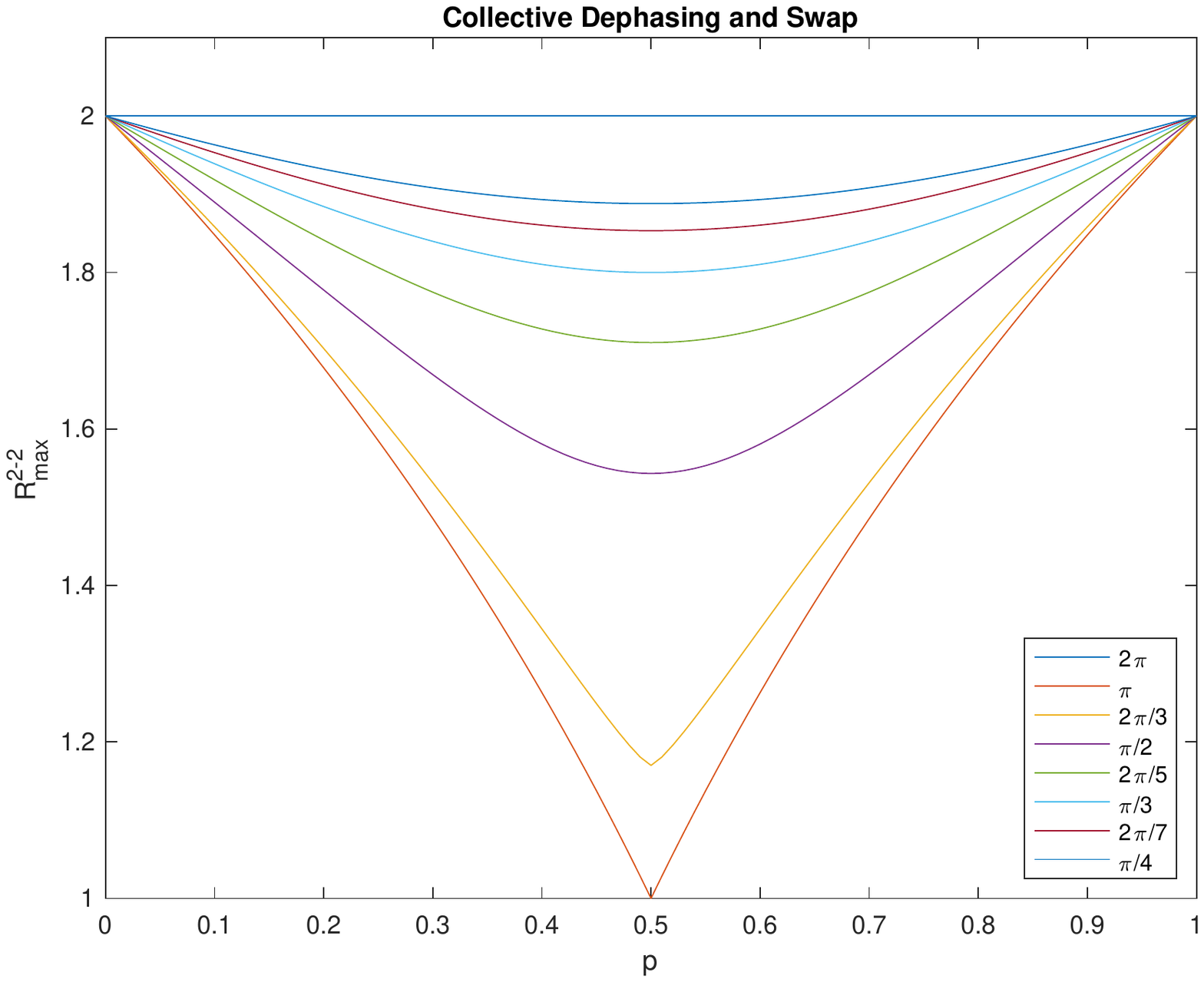}
		\caption{Our bounds plotted versus channel parameter $p$. From top to bottom they belong to (i) the qubit partial swap operation, (ii) the qubit partial swap operation followed by traceout of Alice's output and (iii) a qubit swap operation followed by collective dephasing with various phases $\phi$.}
		\label{fig2}
\end{figure}

As an example, we have numerically computed $R^{2\to 2}_{\max}$ for the qubit partial swap operation \cite{FHSSW11,audenaert2016entropy}, which is performed by application of the unitary $U_p=\sqrt{p}\mathbbm{I}+\iota\sqrt{1-p}S$, where $S=\sum_{ij}|ij\>\<ji|$ is the swap operator. Such an operation, which can be followed by a traceout of Alice's subsystem,  can be compared to a beamsplitter \cite{konig2013limits}. As a second example, we have computed  $R^{2\to 2}_{\max}$ for a qubit swap operator, followed by collective dephasing \cite{palma1996quantum}, which is a typical model for noise in a quantum computer. In the qubit case, collective dephasing acts as $|0\>\to|0\>$, $|1\>\to e^{\iota\phi}|1\>$ for some phase $\phi$. Hence $|00\>\to|00\>$,  $|01\>\to e^{\iota\phi}|01\>$, $|10\>\to e^{\iota\phi}|10\>$, and $|11\>\to e^{2\iota\phi}|11\>$. The collective dephasing occurs with probability $1-p$. 

Our results are plotted in Figure \ref{fig2}. For the partial swap, the top plot shows the expected decline from two ebits to zero, as the channel tends towards total depolarization. For the partial swap and traceout, the decline is from one ebit to zero. In the example of collective dephasing, as expected, the performance is the worst at $p=1/2$, where there is the most uncertainty about whether the collective dephasing has taken place. For phase $\phi=\pi$, we can have a reduction of a factor of $1/2$. Let us remark that this bound can actually be achieved. To do so, Alice and Bob both locally create two Bell states $\Phi^+_{L_AA'}$ and $\Phi^+_{B'L_B}$. After the swap operation and the collective dephasing they end up in a state $\frac{1}{2}\Phi^{+}_{AL_B}\otimes\Phi^{+}_{BL_A}+\frac{1}{2}\Phi^{-}_{AL_B}\otimes\Phi^{-}_{BL_A}$. To find out the phase, Alice and Bob can locally measure either $A$ and $ L_B$ or $L_A$ and $B$ in the $X$-basis, thus sacrificing one ebit. If their results agree, they have $\Phi^{+}$, and otherwise $\Phi^{-}$, which can be locally rotated to~$\Phi^{+}$.

\clearpage

\chapter{Rate of Entropy Change: Examples}\label{app-rate_ent_change}

	Here, we review \cite[Appendix B]{DKSW18} to discuss the subtleties involved in determining the rate of entropy change using the formula \eqref{eq-pi_ent_change_rate} (Theorem \ref{thm:rate-oqs}) by considering some examples of dynamical processes. 
	
	Let us first consider a system in a pure state $\psi_t$ undergoing a unitary time evolution. In this case, the entropy is zero for all time, and thus the rate of entropy change is also zero for all time. Note that even though the rank of the state remains the same for all time, the support changes. This implies that the kernel changes with time. However, $\dot{\psi}_t$ is well defined. This allows us to invoke Theorem \ref{thm:rate-oqs}, so the formula \eqref{eq-pi_ent_change_rate} is applicable. 
	
	Formula \eqref{eq-pi_ent_change_rate} is also applicable to states with higher rank whose kernel changes in time and have non-zero entropy. For example, consider the density operator $\rho_t\in\mc{D}(\mc{H})$ with the following time-dependence:
	\begin{equation}\label{eq:rho-t-i}
		\forall ~t\geq 0:\quad \rho_t=\sum_{i\in\mc{I}} \lambda_i(t) U_i(t)\Pi_i(0)U_i^\dagger(t),
	\end{equation}
	where $\mc{I}=\{i:1\leq i\leq d,~d<\dim(\mc{H})\}$, $\sum_{i\in\mc{I}}\lambda_i(t)=1$, $\lambda_i(t)\geq 0$ and the time-derivative $\dot{\lambda}_i(t)$ of $\lambda_i(t)$ is well defined for all $i\in\mathcal{I}$. The operators $U_i(t)$ are time-dependent unitary operators associated with the eigenvalues $\lambda_i(t)$ such that the time-derivative $\dot{U}_i(t)$ of $U_i(t)$ is well defined and $[U_i(0),\Pi_i(0)]=0$ for all $i\in \mc{I}$. The operators $\Pi_i(0)$ are projection operators associated with the eigenvalues $\lambda_i(0)$ such that the spectral decomposition of $\rho_t$ at $t=0$ is
	\begin{align}\label{eq:rho-0-i}
		\rho_0=\sum_{i\in\mc{I}} \lambda_i(0)\Pi_i(0),
	\end{align}
	where $1<\text{rank}(\rho_0)<\dim(\mc{H})$. The evolution of the system is such that $\text{rank}(\rho_t)=\text{rank}(\rho_0)$ for all $t\geq 0$. It is clear from \eqref{eq:rho-t-i} and \eqref{eq:rho-0-i} that the projection $\Pi_{t}$ onto the support of $\rho_t$ depends on time:
	\begin{equation}
		\Pi_{t}=\sum_{i\in \mc{I}}U_i(t)\Pi_i(0)U_i^\dagger(t),
	\end{equation}
	and the time-derivative $\dot{\Pi}_{t}$ of $\Pi_t$ is well defined. The entropy of the system is zero if and only if the state is pure.  
	
	Let us consider a qubit system $A$ undergoing a damping process such that its state $\rho_t$ at any time $t\geq 0$ is as follows:  
	\begin{equation}
		\rho_t=(1-e^{-t})\op{0}+e^{-t}\op{1},
	\end{equation}
	where $\{\ket{0},\ket{1}\}\in\ONB(\mc{H}_A)$. The entropy $S(\rho_t)$ of the system at time $t$ is
	\begin{equation}\label{eq-rec1}
		S(\rho_t)=-(1-e^{-t})\log(1-e^{-t})-e^{-t}\log(e^{-t}),
	\end{equation} 
	which is continuously differentiable for all $t>0$ and not differentiable at $t=0$. At $t=0$, $\Pi_0=\op{1}$ and $\text{rank}(\rho_0)=1$. At $t=0^+$, there is a jump in the rank from 1 to 2, and the rank and the support remains the same for all $t\in(0,\infty)$. In this case, the formula \eqref{eq-pi_ent_change_rate} agrees with the derivative of \eqref{eq-rec1}.
	
	Now, suppose that the system $A$ undergoes an oscillatory process such that for any time $t\geq 0$ the state $\rho_t$ of the system is given by
	\begin{equation}
		\rho_t=\cos^2(\pi t)\op{0}+\sin^2(\pi t)\op{1}.
	\end{equation}
	In this case, for all $t\geq 0$, the entropy $S(\rho_t)$ is
	\begin{equation}
		S(\rho_t)=-\cos^2(\pi t)\log\cos^2(\pi t)-\sin^2(\pi t)\log\sin^2(\pi t),
	\end{equation}
	and its derivative is
	\begin{equation}\label{eq-example2_derivative}
		\frac{\d}{\d t}S(\rho_t)=\pi\sin(2\pi t)\left[\log\cos^2(\pi t)-\log\sin^2(\pi t)\right],
	\end{equation}
	which exists for all $t\geq 0$. At $t=\frac{n}{2}$ for all $n\in\mathbb{Z}^+\cup\{0\}$, there is a jump in the rank from 1 to 2 and the support changes discontinuously at these instants. One can check that \eqref{eq-pi_ent_change_rate} and \eqref{eq-example2_derivative} are in agreement for all $t\geq 0$.

\clearpage

\end{appendix}

\addcontentsline{toc}{chapter}{Vita} 
\chapter*{Vita}
\label{c:vita}

Siddhartha Das was born in the year 1991. He passed SLC examination ($\text{X}^{\text{th}}$ board) from Koshi Vidya Mandir, Biratnagar. He passed AISSCE ($\text{XII}^{\text{th}}$ board) from Millia Convent School, Purnea. He enrolled in BS-MS Dual Degree Programme (August 2009--May 2014) at Indian Institute of Science Education and Research Pune. He received his BS and MS degree on $15^\textnormal{th}$ June, 2014. Later, he joined the Department of Physics and Astronomy at Louisiana State University, Baton Rouge for PhD starting Fall semester in 2014. He plans to graduate in December 2018, by the end of Fall semester. He has accepted postdoctoral position in the group of Nicolas Cerf at Universit\'e libre de Bruxelles, Brussels. He plans to start his postdoctoral position in early 2019.

\clearpage

\end{document}